\newif\iflncs
\newif\iftimes 
\newif\ifdraft
\def\oxford{\inst{1}}
\def\edinburgh{\inst{2}}
\def\oxford{$^{\dagger}$}
\def\edinburgh{$^{\ddagger}$}
\def\TITLE{Notions of Bidirectional Computation and~Entangled~State Monads}
\def\SHORTAUTHOR{F.~Abou-Saleh\oxford \and 
  J.~Cheney\edinburgh \and 
  J.~Gibbons\oxford \and 
  J.~McKinna\edinburgh \and 
  P.~Stevens\edinburgh
}
\def\AUTHOR{Faris~Abou-Saleh\oxford \and 
  James~Cheney\edinburgh \and 
  Jeremy~Gibbons\oxford \and 
  James~McKinna\edinburgh \and 
  Perdita~Stevens\edinburgh
}
\def\INST{Department of Computer Science, University of Oxford \\
\email{firstname.lastname@cs.ox.ac.uk}\and
School of Informatics, University of Edinburgh
\\ \email{firstname.lastname@ed.ac.uk}}
\iftimes \usepackage{times,mathptmx} \fi
   \newcommand\SkipToFmtEnd{}%
   \newcommand\EndFmtInput{}%
   \long\def\SkipToFmtEnd#1\EndFmtInput{}%
\newcommand\ReadOnlyOnce[1]{\@ifundefined{#1}{\@namedef{#1}{}}\SkipToFmtEnd}
\DeclareFontFamily{OT1}{cmtex}{}
\DeclareFontShape{OT1}{cmtex}{m}{n}
  {<5><6><7><8>cmtex8
   <9>cmtex9
   <10><10.95><12><14.4><17.28><20.74><24.88>cmtex10}{}
\DeclareFontShape{OT1}{cmtex}{m}{it}
  {<-> ssub * cmtt/m/it}{}
\DeclareFontShape{OT1}{cmtt}{bx}{n}
  {<5><6><7><8>cmtt8
   <9>cmbtt9
   <10><10.95><12><14.4><17.28><20.74><24.88>cmbtt10}{}
\DeclareFontShape{OT1}{cmtex}{bx}{n}
  {<-> ssub * cmtt/bx/n}{}
\newcommand{\Conid}[1]{\mathit{#1}}
\newcommand{\Varid}[1]{\mathit{#1}}
\newcommand{\anonymous}{\kern0.06em \vbox{\hrule\@width.5em}}
\newcommand{\plus}{\mathbin{+\!\!\!+}}
\newcommand{\bind}{\mathbin{>\!\!\!>\mkern-6.7mu=}}
\newcommand{\sequ}{\mathbin{>\!\!\!>}}
\newdimen\mathindent\mathindent\leftmargini}%
\def\resethooks{%
  \global\let\SaveRestoreHook\empty
  \global\let\ColumnHook\empty}
\newcommand*{\savecolumns}[1][default]%
  {\g@addto@macro\SaveRestoreHook{\savecolumns[#1]}}
\newcommand*{\restorecolumns}[1][default]%
  {\g@addto@macro\SaveRestoreHook{\restorecolumns[#1]}}
\newcommand*{\aligncolumn}[2]%
  {\g@addto@macro\ColumnHook{\column{#1}{#2}}}
\newcommand{\onelinecommentchars}{\quad-{}- }
\newcommand{\commentbeginchars}{\enskip\{-}
\newcommand{\commentendchars}{-\}\enskip}
\newcommand{\visiblecomments}{%
  \let\onelinecomment=\onelinecommentchars
  \let\commentbegin=\commentbeginchars
  \let\commentend=\commentendchars}
\newcommand{\invisiblecomments}{%
  \let\onelinecomment=\empty
  \let\commentbegin=\empty
  \let\commentend=\empty}
\newlength{\blanklineskip}
\newcommand{\hsindent}[1]{\quad}
\let\hspre\empty
\let\hspost\empty
\newcommand{\hsnewpar}[1]%
  {{\parskip=0pt\parindent=0pt\par\vskip #1\noindent}}
\newcommand{\hscodestyle}{}
\newcommand{\sethscode}[1]%
  {\expandafter\let\expandafter\hscode\csname #1\endcsname
   \expandafter\let\expandafter\endhscode\csname end#1\endcsname}
   \let\hspre\(\let\hspost\)%
   \let\hspre\(\let\hspost\)%
\newcommand{\plainhs}{\sethscode{plainhscode}}
\def\codeframewidth{\arrayrulewidth}
   \let\endoflinesave=\\
   \framedhslinecorrect\endoflinesave{.5ex}\hline
\newcommand{\framedhslinecorrect}[2]%
  {#1[#2]}
\def\column##1##2{}%
   \newcommand\>[1][]{}\newcommand\<[1][]{}\newcommand\\[1][]{}%
   \def\fromto##1##2##3{##3}%
\let\orighscode=\hscode
   \let\origendhscode=\endhscode
   \def\endhscode{\def\hscode{\endgroup\def\@currenvir{hscode}\\}\begingroup}
\def\hscode{\endgroup\def\@currenvir{hscode}}}%
   \global\let\hscode=\orighscode
   \global\let\endhscode=\origendhscode}%
\def\get#1{\Varid{get}_{\mskip-2mu#1}}
\def\set#1{\Varid{set}_{\mskip-2mu#1}}
\def\commentbegin{\quad$[\![\enskip$}
\def\commentend{$\enskip]\!]$}
\let\HaskellResetHook\empty
\newcommand*{\AtHaskellReset}[1]{%
  \g@addto@macro\HaskellResetHook{#1}}
\newcommand*{\HaskellReset}{\HaskellResetHook}
\newcommand\hsforall{\global\let\hsdot=\hsperiodonce}
\newcommand*\hsperiodonce[2]{#2\global\let\hsdot=\hscompose}
\newcommand*\hscompose[2]{#1}
\newcommand\hslambda{\global\let\hsarrow=\hsarrowperiodonce}
\newcommand*\hsarrowperiodonce[2]{#2\global\let\hsarrow=\hscompose}
\title{\TITLE}
\author{\AUTHOR}
\authorrunning{\SHORTAUTHOR}
\institute{\INST}
\begin{document}
\maketitle

\begin{abstract}
Bidirectional transformations (bx) support principled consistency maintenance between data sources. Each data source corresponds to one perspective on a composite system, manifested by operations to `get' and `set' a view of the whole from that particular perspective. Bx are important in a wide range of settings, including databases, interactive applications, and model-driven development.
We show that bx are naturally modelled in terms of mutable state; in particular, the `set' operations are stateful functions. This leads naturally to considering bx that exploit other computational effects too, such as I/O, nondeterminism, and failure, all largely ignored in the bx literature to date.
We present a semantic foundation for symmetric bidirectional transformations with effects. We build on the mature theory of monadic encapsulation of effects in functional programming, develop the equational theory and important combinators for effectful bx, and provide a prototype implementation in Haskell along with several illustrative examples.

\end{abstract}


\section{Introduction} \label{sec:introduction}

Bidirectional transformations (\bx) arise when synchronising data in 
different data sources: updates to one source entail
corresponding updates to the others, in order to maintain
consistency. When a data source represents the complete information, this is a
straightforward task; an update can be matched by
discarding and regenerating the other sources. It becomes more interesting
when one data representation lacks some information that is recorded
by another; then the corresponding update has to merge new
information on one side with old information on the other side.
Such bidirectional
transformations have been the focus of a flurry of recent
activity---in databases, in programming languages, and in software
engineering, among other fields---giving rise to a flourishing series
of BX Workshops (see \url{http://bx-community.wikidot.com/}) and BX
Seminars (in Japan, Germany, and Canada so far: see \cite{GraceReport}
for an early report on the state of the art).

The different branches of the \bx{} community have come up with a variety
of different formalisations of \bx{} with
conflicting definitions and incompatible extensions, such as
\emph{lenses}~\cite{lens-toplas}, \emph{relational
  bx}~\cite{stevens09:sosym}, \emph{symmetric lenses}~\cite{symlens},
 \emph{putback-based
lenses}~\cite{pacheco14pepm}, and
\emph{profunctors}~\cite{lenseslibrary}.
We have been seeking a unification of the varying
approaches. It turns out that quite a satisfying unifying formalism can be
obtained from the perspective of the state monad. More specifically,
we are thinking about two data sources, and stateful computations on acting on 
pairs representing these two sources. However, the two components of the
pair are not independent, as two distinct memory cells would be, but are
\emph{entangled}---a change to one component generally entails a
consequent change to the other.

\begin{figure}[tb] \centering\scriptsize
\def\upstrut{\vrule width0pt height3ex}
\begin{tabular}[t]{@{}c@{}}
  \ensuremath{\Conid{A}} \\
  \fbox{\begin{tabular}{@{}l@{~}l@{}}
  $\{$ & (Schumann, \\
&Germany, \\
&1810--1856), \\
       & (Schubert, \\
&Austria, \\
&1797--1828), \ldots $\}$
  \end{tabular}}
\end{tabular}
\quad
\begin{tabular}[t]{@{}c@{}}
\quad \\
\begin{picture}(140,41)
\put(10,10){\vector(1,0){30}} 
\put(40,30){\vector(-1,0){30}} 
\put(10,20){\makebox(30,10){\ensuremath{\get{L}}}}
\put(10,0){\makebox(30,10){\ensuremath{\set{L}}}}
\put(90,30){\makebox(10,10){\ensuremath{\Conid{M}}}}
\put(100,30){\vector(1,0){30}}
\put(100,20){\makebox(30,10){\ensuremath{\get{R}}}}
\put(130,10){\vector(-1,0){30}}
\put(100,0){\makebox(30,10){\ensuremath{\set{R}}}}
\put(40,0){\framebox(60,40){\ensuremath{\Conid{S}}}}
\end{picture}
\end{tabular}
\quad
\begin{tabular}[t]{@{}c@{}}
  \ensuremath{\Conid{B}} \\
  \fbox{\begin{tabular}{@{}c|c@{}}
  \textit{Name} & \textit{Nationality} \\ \hline
  \upstrut 
  Schubert & Austria \\
  Schumann & Germany \\
  $\cdots$ & $\cdots$
  \end{tabular}}
\end{tabular}
\caption{Effectful bx between sources \ensuremath{\Conid{A}} and \ensuremath{\Conid{B}} and with effects in
  monad \ensuremath{\Conid{M}}, with hidden state \ensuremath{\Conid{S}}, illustrated using the
  Composers example}
\label{fig:composers}
\end{figure}
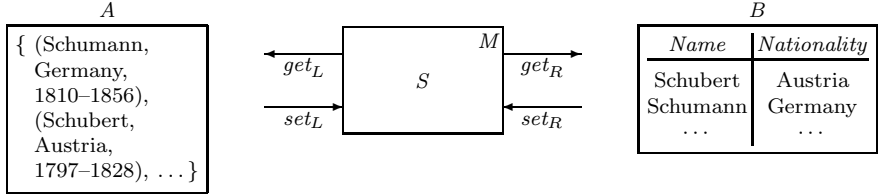

This stateful perspective 
suggests using monads for \bx{}, much as Moggi showed that monads
unify many computational effects~\cite{moggi}. But not only that; it
suggests a way to \emph{generalise} \bx{} to encompass 
other features that monads can handle. 
In fact, several approaches to lenses do in practice allow for monadic
operations~\cite{pacheco14pepm,lenseslibrary}.  But there are natural
concerns about such an extension: Are ``monadic lenses'' legitimate
bidirectional transformations?  Do they satisfy laws analogous to the
roundtripping (`GetPut' and `PutGet')
laws of traditional lenses?  Can we compose such
transformations?  We show that bidirectional computations can be
encapsulated in monads, and be combined with other standard monads to
accommodate effects, while still satisfying appropriate equational
laws and supporting composition.

To illustrate our approach informally, consider
Figure~\ref{fig:composers}, which is based on the
\emph{Composers} example \cite{composers}.  
This example relates two data sources \ensuremath{\Conid{A}} and \ensuremath{\Conid{B}}: on the
left, an \ensuremath{\Varid{a}\mathbin{::}\Conid{A}} consists of a set of triples \ensuremath{(\Varid{name},\Varid{nationality},\Varid{dates})},
and on the right, a \ensuremath{\Varid{b}\mathbin{::}\Conid{B}} consists of an ordered list of pairs
\ensuremath{(\Varid{name},\Varid{nationality})}.  The two sources \ensuremath{\Varid{a}} and \ensuremath{\Varid{b}} are consistent when
they contain the name--nationality pairs, ignoring dates and ordering.  The
centre of the figure illustrates the interface and typical operations
of our (effectful) \bx{}, including a monad \ensuremath{\Conid{M}}, operations \ensuremath{\get{L}\mathbin{::}\Conid{M}\;\Conid{A}} and \ensuremath{\get{R}\mathbin{::}\Conid{M}\;\Conid{B}} that return the current values of the left and
right sides, and operations \ensuremath{\set{L}\mathbin{::}\Conid{A}\hsarrow{\rightarrow }{\mathpunct{.}}\Conid{M}\;()} and \ensuremath{\set{R}\mathbin{::}\Conid{B}\hsarrow{\rightarrow }{\mathpunct{.}}\Conid{M}\;()} that accept a new value for the left-hand or right-hand side,
possibly performing side-effects in \ensuremath{\Conid{M}}.

In this example, neither \ensuremath{\Conid{A}} nor \ensuremath{\Conid{B}} is obtainable from the other; \ensuremath{\Conid{A}}
omits the ordering information from \ensuremath{\Conid{B}}, while \ensuremath{\Conid{B}} omits the date
information from \ensuremath{\Conid{A}}.  This means that there may be multiple ways to change one
side to match a change to the other.
For example, the monadic computation 
\ensuremath{\mathbf{do}\;\{\mskip1.5mu \Varid{b}\leftarrow \get{R};\set{R}\;(\Varid{b}\plus [\mskip1.5mu (\text{\tt \char34 Bach\char34},\text{\tt \char34 Germany\char34})\mskip1.5mu])\mskip1.5mu\}} 
looks at the current value of the
\ensuremath{\Conid{B}}-side, and modifies it to include a new pair at the end.  Of
course, this addition is ambiguous: do we mean J.\,S.\,Bach (1685--1750),
J.\,C.\,Bach (1735--1782), or another Bach? There is no way to give the
dates through the \ensuremath{\Conid{B}} interface; typically, pure \bx{} would
initialise the unspecified part to some default value such as
\ensuremath{\text{\tt \char34 ????-????\char34}}.  Conversely, had we inserted the triple \ensuremath{(\text{\tt \char34 Bach\char34},\text{\tt \char34 Germany\char34},\text{\tt \char34 1685-1750\char34})} on the left, then there may be several
consistent ways to change the right-hand side to match: for example,
inserting at the beginning, the end, or somewhere in the middle of the
list.  Again, conventional pure \bx{} must fix some strategy in
advance.

A conventional (pure) \bx{} corresponds (roughly) to taking \ensuremath{\Conid{M}\mathrel{=}\Conid{State}\;\Conid{S}}, where \ensuremath{\Conid{S}} is some type of states from which we can obtain both \ensuremath{\Conid{A}}
and \ensuremath{\Conid{B}}; for example, \ensuremath{\Conid{S}} could consist of \emph{lists} of triples
\ensuremath{(\Varid{name},\Varid{nationality},\Varid{dates})} from which we can easily extract both \ensuremath{\Conid{A}}
(by forgetting the order) and \ensuremath{\Conid{B}} (by forgetting the dates).  However,
our approach to effectful \bx{} allows many other choices for \ensuremath{\Conid{M}} that
allow us to use side-effects when restoring consistency.  Consider the
following two scenarios, 
taken from the model-driven development domain,
where the entities being synchronised are
`model states' \ensuremath{\Varid{a},\Varid{b}} such as UML models or RDBMS schemas,
drawn from suitable `model spaces' \ensuremath{\Conid{A},\Conid{B}}.  
We will revisit them among the concrete examples 
in \Section~\ref{sec:examples}.

\begin{scenario}[nondeterminism]\label{ex:nondeterminism}
  As mentioned above, most formal notions of \bx{} require that the
  transformation (or programmer) decide on a consistency-restoration
  strategy in advance.  By contrast, in the
  Janus Transformation Language (JTL)~\cite{CicchettiREP10},
  programmers need only specify a consistency relation, allowing 
  the \bx{} engine to resolve the underspecification
  nondeterministically.  Given a change to one source, JTL uses an
  external constraint solver to find a consistent choice for the other
  source; there might be multiple choices.

  Our effectful \bx{} can handle this by combining state with a
  nondeterministic choice monad, 
  taking \ensuremath{\Conid{M}\mathrel{=}\Conid{StateT}\;\Conid{S}\;[\mskip1.5mu \mskip1.5mu]}.  For
  example, an attempt to add Bach to the right-hand side could result
  in several possibilities for the left-hand side (perhaps using an
  external source such as Wikipedia to find the dates for candidate
  matches).  Conversely, adding Bach to the left-hand side could
  result in a nondeterministic choice of all possible positions 
  to add
  the matching record in the right-hand list.
No previous formalism permits such \emph{nondeterministic
    \bx{}} to be composed with conventional deterministic
  transformations, or characterises the laws that such transformations
  ought to satisfy.
\end{scenario}

\begin{scenario}[interaction]\label{ex:dynamic} 
  Alternatively, instead of automatically trying to find a single (or
  all possible) dates for Bach, why not ask the user for help?  In
  unidirectional model transformation settings, Varr\'o~\cite{Varro06}
  proposed ``model transformation by example'', where
  the transformation system `learns' gradually, by prompting its users over
  time, the desired way to restore consistency in various situations. 
  One can see this as an (interactive) instance of \emph{memoisation}. 
  
  Our effectful \bx{} can also handle this, using the \ensuremath{\Conid{IO}} monad in
  concert with some additional state to remember 
  past questions and
  their answers.  For example, when Bach is added to the right-hand
  side, the \bx{} can check to see whether it already knows how to
  restore consistency.  If so, it does so without further ado.  If
  not, it queries the user to determine how to fill in the missing
  values needed for the right-hand side, perhaps having first appealed
  to some external knowledge base 
  to generate helpful suggestions.  It
  then records the new version of the right-hand side and updates its
  state so that the same question is not asked again.  No previous
  formalism allows for I/O during consistency restoration.
\end{scenario}

The paper is structured as follows.
\Section~\ref{sec:background} reviews monads as a foundation for effectful
programming, fixing (idealised) Haskell notation used throughout the paper,
and recaps definitions of lenses.
Our contributions start in \Section~\ref{sec:monad},
with a presentation of our monadic approach to \bx{}.
\Section~\ref{sec:composition} considers a definition of composition for
effectful \bx{}.
In \Section~\ref{sec:examples} we  discuss initialisation, and
formalise the motivating examples above, along with other combinators
and examples of effectful \bx{}.  These examples are the first formal
treatments of effects such as nondeterminism or interaction for
symmetric bidirectional transformations, and they illustrate the
generality of our approach. 
Finally we discuss related work and conclude. 
This technical report version includes an appendix with all proofs;
executable code can be found 
on our project web page 
\url{http://groups.inf.ed.ac.uk/bx/}. 
\jgnote{check online supplement}
\section{Background}\label{sec:background}

Our approach to \bx{}
is semantics-driven, so we here provide some preliminaries on semantics of
effectful computation --
focusing on monads, Haskell's use of type classes for them,
and some key instances that we exploit heavily in what follows.
We also briefly recap the definitions of asymmetric and symmetric lenses.

\subsection{Effectful computation}

Moggi's seminal work on the semantics of effectful
computation~\cite{moggi}, and much continued investigation, shows
how computational effects can be described using monads.
Building on this, we assume that
computations are represented as Kleisli arrows for a strong monad \ensuremath{\Conid{T}}
defined on a cartesian closed category \ensuremath{\ensuremath{\mathbb{C}}} of `value'
types and `pure' functions.  
The reader uncomfortable with such generality can safely consider our
definitions in terms of the category of sets and total functions,
with \ensuremath{\Conid{T}} encapsulating the `ambient' programming language
effects: none in a total functional programming language like Agda, 
partiality in Haskell, global state in Pascal, network access in Java, etc.

\subsection{Notational conventions}

We write in Haskell notation, except for the following few
idealisations.  We assume a cartesian closed category \ensuremath{\ensuremath{\mathbb{C}}},
avoiding niceties about lifted types and undefined values in Haskell; 
we further restrict attention to terminating programs.
We use lowercase (Greek) letters for polymorphic type variables
in code, and uppercase (Roman) letters for monomorphic instantiations
of those variables in accompanying prose.
We elide constructors and destructors for a
\textbf{newtype}, the explicit witnesses to the isomorphism between
the defined type and its structure, and use instead a \textbf{type}
synonym that equates the defined type and its structure;
e.g., in \Section~\ref{sec:stateT} we omit 
the function \ensuremath{\Varid{runStateT}} from \ensuremath{\Conid{StateT}\;\Conid{S}\;\Conid{T}\;\Conid{A}} to \ensuremath{\Conid{S}\hsarrow{\rightarrow }{\mathpunct{.}}\Conid{T}\;(\Conid{A},\Conid{S})}.  
Except where expressly noted, we assume a kind of Barendregt
convention, that bound variables are chosen not to clash with free
variables; for example, in the definition below of a commutative
monad, we elide the explicit proviso ``for \ensuremath{\Varid{x},\Varid{y}} distinct variables
not free in \ensuremath{\Varid{m},\Varid{n}}'' (one might take the view that \ensuremath{\Varid{m},\Varid{n}} are themselves
variables, rather than possibly open terms that might capture \ensuremath{\Varid{x},\Varid{y}}).
We use a tightest-binding lowered dot for field access in records; e.g., in
\Definition~\ref{def:proper} we write \ensuremath{bx\mathord{.}\get{L}} rather than
\ensuremath{\get{L}\;bx}; we therefore write function composition using a centred dot, \ensuremath{\Varid{f}\hsdot{\cdot }{.}\Varid{g}}.
The code online expands these conventions into real Haskell.
We also
make extensive use of equational reasoning over monads in \ensuremath{\mathbf{do}} notation
\cite{Gibbons&Hinze11:Just}.
Different branches of the \bx{} community have
conflicting naming conventions for various operations, so we have
renamed some of them, favouring internal over external consistency. 


\subsection{Monads} \label{sec:monads}

\begin{definition*}[monad type class] 
Type constructors representing notions of effectful computation are
represented as instances of the Haskell type class \ensuremath{\Conid{Monad}}:
\begin{hscode}\SaveRestoreHook
\column{B}{@{}>{\hspre}l<{\hspost}@{}}%
\column{3}{@{}>{\hspre}l<{\hspost}@{}}%
\column{11}{@{}>{\hspre}l<{\hspost}@{}}%
\column{47}{@{}>{\hspre}l<{\hspost}@{}}%
\column{E}{@{}>{\hspre}l<{\hspost}@{}}%
\>[B]{}\mathbf{class}\;\Conid{Monad}\;\tau\;\mathbf{where}{}\<[E]%
\\
\>[B]{}\hsindent{3}{}\<[3]%
\>[3]{}\Varid{return}{}\<[11]%
\>[11]{}\mathbin{::}\alpha\hsarrow{\rightarrow }{\mathpunct{.}}\tau\;\alpha{}\<[E]%
\\
\>[B]{}\hsindent{3}{}\<[3]%
\>[3]{}(\bind ){}\<[11]%
\>[11]{}\mathbin{::}\tau\;\alpha\hsarrow{\rightarrow }{\mathpunct{.}}(\alpha\hsarrow{\rightarrow }{\mathpunct{.}}\tau\;\beta)\hsarrow{\rightarrow }{\mathpunct{.}}\tau\;\beta{}\<[47]%
\>[47]{}\mbox{\onelinecomment  pronounced `bind'}{}\<[E]%
\ColumnHook
\end{hscode}\resethooks
A \ensuremath{\Conid{Monad}} instance should satisfy the following laws:
\begin{hscode}\SaveRestoreHook
\column{B}{@{}>{\hspre}l<{\hspost}@{}}%
\column{3}{@{}>{\hspre}l<{\hspost}@{}}%
\column{20}{@{}>{\hspre}l<{\hspost}@{}}%
\column{E}{@{}>{\hspre}l<{\hspost}@{}}%
\>[3]{}\Varid{return}\;\Varid{x}\bind \Varid{f}{}\<[20]%
\>[20]{}\mathrel{=}\Varid{f}\;\Varid{x}{}\<[E]%
\\
\>[3]{}\Varid{m}\bind \Varid{return}{}\<[20]%
\>[20]{}\mathrel{=}\Varid{m}{}\<[E]%
\\
\>[3]{}(\Varid{m}\bind \Varid{f})\bind \Varid{g}{}\<[20]%
\>[20]{}\mathrel{=}\Varid{m}\bind \lambda \hslambda \Varid{x}\hsarrow{\rightarrow }{\mathpunct{.}}(\Varid{f}\;\Varid{x}\bind \Varid{g}){}\<[E]%
\ColumnHook
\end{hscode}\resethooks
\endswithdisplay
\end{definition*}
Common examples in Haskell (with which we assume familiarity) include: 
\begin{hscode}\SaveRestoreHook
\column{B}{@{}>{\hspre}l<{\hspost}@{}}%
\column{23}{@{}>{\hspre}l<{\hspost}@{}}%
\column{52}{@{}>{\hspre}l<{\hspost}@{}}%
\column{E}{@{}>{\hspre}l<{\hspost}@{}}%
\>[B]{}\mathbf{type}\;\Conid{Id}\;\alpha{}\<[23]%
\>[23]{}\mathrel{=}\alpha{}\<[52]%
\>[52]{}\mbox{\onelinecomment  no effects}{}\<[E]%
\\
\>[B]{}\mathbf{data}\;\Conid{Maybe}\;\alpha{}\<[23]%
\>[23]{}\mathrel{=}\Conid{Just}\;\alpha\mid \Conid{Nothing}{}\<[52]%
\>[52]{}\mbox{\onelinecomment  failure/exceptions}{}\<[E]%
\\
\>[B]{}\mathbf{data}\;[\mskip1.5mu \alpha\mskip1.5mu]{}\<[23]%
\>[23]{}\mathrel{=}[\mskip1.5mu \mskip1.5mu]\mid \alpha\mathbin{:}[\mskip1.5mu \alpha\mskip1.5mu]{}\<[52]%
\>[52]{}\mbox{\onelinecomment  choice}{}\<[E]%
\\
\>[B]{}\mathbf{type}\;\Conid{State}\;\sigma\;\alpha{}\<[23]%
\>[23]{}\mathrel{=}\sigma\hsarrow{\rightarrow }{\mathpunct{.}}(\alpha,\sigma){}\<[52]%
\>[52]{}\mbox{\onelinecomment  state}{}\<[E]%
\\
\>[B]{}\mathbf{type}\;\Conid{Reader}\;\sigma\;\alpha{}\<[23]%
\>[23]{}\mathrel{=}\sigma\hsarrow{\rightarrow }{\mathpunct{.}}\alpha{}\<[52]%
\>[52]{}\mbox{\onelinecomment  environment}{}\<[E]%
\\
\>[B]{}\mathbf{type}\;\Conid{Writer}\;\sigma\;\alpha{}\<[23]%
\>[23]{}\mathrel{=}(\alpha,\sigma){}\<[52]%
\>[52]{}\mbox{\onelinecomment  logging}{}\<[E]%
\ColumnHook
\end{hscode}\resethooks
as well as the (in)famous \ensuremath{\Conid{IO}} monad, which encapsulates interaction
with the outside world.  We need a \ensuremath{\Conid{Monoid}\;\sigma} instance for the \ensuremath{\Conid{Writer}\;\sigma} 
monad, in order to support empty and composite logs.

\begin{definition}\label{defn:do-notation}
In Haskell, monadic expressions may be written using \ensuremath{\mathbf{do}} 
notation, which is defined by translation into applications of bind:
\begin{hscode}\SaveRestoreHook
\column{B}{@{}>{\hspre}l<{\hspost}@{}}%
\column{3}{@{}>{\hspre}l<{\hspost}@{}}%
\column{26}{@{}>{\hspre}l<{\hspost}@{}}%
\column{E}{@{}>{\hspre}l<{\hspost}@{}}%
\>[3]{}\mathbf{do}\;\{\mskip1.5mu \mathbf{let}\;\Varid{decls};\Varid{ms}\mskip1.5mu\}{}\<[26]%
\>[26]{}\mathrel{=}\mathbf{let}\;\Varid{decls}\;\mathbf{in}\;\mathbf{do}\;\{\mskip1.5mu \Varid{ms}\mskip1.5mu\}{}\<[E]%
\\
\>[3]{}\mathbf{do}\;\{\mskip1.5mu \Varid{a}\leftarrow \Varid{m};\Varid{ms}\mskip1.5mu\}{}\<[26]%
\>[26]{}\mathrel{=}\Varid{m}\bind \lambda \hslambda \Varid{a}\hsarrow{\rightarrow }{\mathpunct{.}}\mathbf{do}\;\{\mskip1.5mu \Varid{ms}\mskip1.5mu\}{}\<[E]%
\\
\>[3]{}\mathbf{do}\;\{\mskip1.5mu \Varid{m}\mskip1.5mu\}{}\<[26]%
\>[26]{}\mathrel{=}\Varid{m}{}\<[E]%
\ColumnHook
\end{hscode}\resethooks
The \emph{body} \ensuremath{\Varid{ms}} of a \ensuremath{\mathbf{do}} expression 
consists of
zero or more `qualifiers', and
a final expression \ensuremath{\Varid{m}} of monadic
type; qualifiers are either
`declarations' \ensuremath{\mathbf{let}\;\Varid{decls}} (with \ensuremath{\Varid{decls}} a collection of bindings
\ensuremath{\Varid{a}\mathrel{=}\Varid{e}} of patterns \ensuremath{\Varid{a}} to expressions \ensuremath{\Varid{e}}) or 
`generators' \ensuremath{\Varid{a}\leftarrow \Varid{m}} (with pattern \ensuremath{\Varid{a}} and monadic expression
\ensuremath{\Varid{m}}). Variables bound in pattern \ensuremath{\Varid{a}} may appear free in the subsequent
body \ensuremath{\Varid{ms}}; in contrast to Haskell, we assume that the pattern cannot fail to match.
When the return value of \ensuremath{\Varid{m}} is not used -- e.g., when void
-- we write \ensuremath{\mathbf{do}\;\{\mskip1.5mu \Varid{m};\Varid{ms}\mskip1.5mu\}} as shorthand for \ensuremath{\mathbf{do}\;\{\mskip1.5mu \anonymous \leftarrow \Varid{m};\Varid{ms}\mskip1.5mu\}}
with its wildcard pattern.
\end{definition}

\begin{definition*}[commutative monad] 
  We say that \ensuremath{\Varid{m}\mathbin{::}\Conid{T}\;\Conid{A}} \emph{commutes in \ensuremath{\Conid{T}}} if the following
  holds for all \ensuremath{\Varid{n}\mathbin{::}\Conid{T}\;\Conid{B}}:
  \begin{hscode}\SaveRestoreHook
\column{B}{@{}>{\hspre}l<{\hspost}@{}}%
\column{5}{@{}>{\hspre}l<{\hspost}@{}}%
\column{40}{@{}>{\hspre}c<{\hspost}@{}}%
\column{40E}{@{}l@{}}%
\column{43}{@{}>{\hspre}l<{\hspost}@{}}%
\column{E}{@{}>{\hspre}l<{\hspost}@{}}%
\>[5]{}\mathbf{do}\;\{\mskip1.5mu \Varid{x}\leftarrow \Varid{m};\Varid{y}\leftarrow \Varid{n};\Varid{return}\;(\Varid{x},\Varid{y})\mskip1.5mu\}{}\<[40]%
\>[40]{}\mathrel{=}{}\<[40E]%
\>[43]{}\mathbf{do}\;\{\mskip1.5mu \Varid{y}\leftarrow \Varid{n};\Varid{x}\leftarrow \Varid{m};\Varid{return}\;(\Varid{x},\Varid{y})\mskip1.5mu\}{}\<[E]%
\ColumnHook
\end{hscode}\resethooks
  A monad \ensuremath{\Conid{T}} is \emph{commutative} if all \ensuremath{\Varid{m}\mathbin{::}\Conid{T}\;\Conid{A}} commute, for all \ensuremath{\Conid{A}}.
\end{definition*}
\begin{definition*}
  An element \ensuremath{\Varid{z}} of a monad is called a \emph{zero element} if it satisfies:
  \begin{hscode}\SaveRestoreHook
\column{B}{@{}>{\hspre}l<{\hspost}@{}}%
\column{5}{@{}>{\hspre}l<{\hspost}@{}}%
\column{E}{@{}>{\hspre}l<{\hspost}@{}}%
\>[5]{}\mathbf{do}\;\{\mskip1.5mu \Varid{x}\leftarrow \Varid{z};\Varid{f}\;\Varid{x}\mskip1.5mu\}\mathrel{=}\Varid{z}\mathrel{=}\mathbf{do}\;\{\mskip1.5mu \Varid{x}\leftarrow \Varid{m};\Varid{z}\mskip1.5mu\}{}\<[E]%
\ColumnHook
\end{hscode}\resethooks
\endswithdisplay
\end{definition*}
Among monads discussed so far, \ensuremath{\Conid{Id}}, \ensuremath{\Conid{Reader}} and \ensuremath{\Conid{Maybe}}
are commutative; if \ensuremath{\sigma} is a commutative
monoid, \ensuremath{\Conid{Writer}\;\sigma} is commutative; 
but many interesting monads, such as \ensuremath{\Conid{IO}} and \ensuremath{\Conid{State}}, are not.
The \ensuremath{\Conid{Maybe}} monad has zero element \ensuremath{\Conid{Nothing}},
and \ensuremath{\Conid{List}} has zero \ensuremath{\Conid{Nil}};
the zero element is unique if it exists.

\begin{definition*}[monad morphism]
Given monads \ensuremath{\Conid{T}} and \ensuremath{\Conid{T'}}, a \emph{monad morphism} is 
a polymorphic function \ensuremath{\varphi \mathbin{::}\forall \alpha\hsforall \hsdot{\cdot }{.}\Conid{T}\;\alpha\hsarrow{\rightarrow }{\mathpunct{.}}\Conid{T'}\;\alpha}
satisfying
\begin{hscode}\SaveRestoreHook
\column{B}{@{}>{\hspre}l<{\hspost}@{}}%
\column{29}{@{}>{\hspre}l<{\hspost}@{}}%
\column{E}{@{}>{\hspre}l<{\hspost}@{}}%
\>[B]{}\varphi \;(\ensuremath{\mathbf{do}_{\Conid{T}}}\;\{\mskip1.5mu \Varid{return}\;\Varid{a}\mskip1.5mu\}){}\<[29]%
\>[29]{}\mathrel{=}\ensuremath{\mathbf{do}_{\Conid{T'}}}\;\{\mskip1.5mu \Varid{return}\;\Varid{a}\mskip1.5mu\}{}\<[E]%
\\
\>[B]{}\varphi \;(\ensuremath{\mathbf{do}_{\Conid{T}}}\;\{\mskip1.5mu \Varid{a}\leftarrow \Varid{m};\Varid{k}\;\Varid{a}\mskip1.5mu\}){}\<[29]%
\>[29]{}\mathrel{=}\ensuremath{\mathbf{do}_{\Conid{T'}}}\;\{\mskip1.5mu \Varid{a}\leftarrow \varphi \;\Varid{m};\varphi \;(\Varid{k}\;\Varid{a})\mskip1.5mu\}{}\<[E]%
\ColumnHook
\end{hscode}\resethooks
(subscripting to make clear which monad is used where).
\end{definition*}

%
%


\subsection{Combining state and other effects}
\label{sec:stateT}

We recall the \emph{state monad transformer} (see e.g. Liang
\etal~\cite{liang95popl}).

\begin{definition}[state monad transformer] \label{def:state-monad-transformer}
State can be combined with effects arising from an arbitrary monad \ensuremath{\Conid{T}}
using the \ensuremath{\Conid{StateT}} monad transformer: 
\begin{hscode}\SaveRestoreHook
\column{B}{@{}>{\hspre}l<{\hspost}@{}}%
\column{3}{@{}>{\hspre}l<{\hspost}@{}}%
\column{13}{@{}>{\hspre}l<{\hspost}@{}}%
\column{36}{@{}>{\hspre}l<{\hspost}@{}}%
\column{E}{@{}>{\hspre}l<{\hspost}@{}}%
\>[B]{}\mathbf{type}\;\Conid{StateT}\;\sigma\;\tau\;\alpha\mathrel{=}\sigma\hsarrow{\rightarrow }{\mathpunct{.}}\tau\;(\alpha,\sigma){}\<[E]%
\\[\blanklineskip]%
\>[B]{}\mathbf{instance}\;\Conid{Monad}\;\tau\Rightarrow \Conid{Monad}\;(\Conid{StateT}\;\sigma\;\tau)\;\mathbf{where}{}\<[E]%
\\
\>[B]{}\hsindent{3}{}\<[3]%
\>[3]{}\Varid{return}\;\Varid{a}{}\<[13]%
\>[13]{}\mathrel{=}\lambda \hslambda \Varid{s}\hsarrow{\rightarrow }{\mathpunct{.}}\Varid{return}\;(\Varid{a},\Varid{s}){}\<[E]%
\\
\>[B]{}\hsindent{3}{}\<[3]%
\>[3]{}\Varid{m}\bind \Varid{k}{}\<[13]%
\>[13]{}\mathrel{=}\lambda \hslambda \Varid{s}\hsarrow{\rightarrow }{\mathpunct{.}}\mathbf{do}\;\{\mskip1.5mu {}\<[36]%
\>[36]{}(\Varid{a},\Varid{s'})\leftarrow \Varid{m}\;\Varid{s};\Varid{k}\;\Varid{a}\;\Varid{s'}\mskip1.5mu\}{}\<[E]%
\ColumnHook
\end{hscode}\resethooks
This provides \ensuremath{\Varid{get}} and \ensuremath{\Varid{set}} operations for the state type: 
\begin{hscode}\SaveRestoreHook
\column{B}{@{}>{\hspre}l<{\hspost}@{}}%
\column{E}{@{}>{\hspre}l<{\hspost}@{}}%
\>[B]{}\Varid{get}\mathbin{::}\Conid{Monad}\;\tau\Rightarrow \Conid{StateT}\;\sigma\;\tau\;\sigma{}\<[E]%
\\
\>[B]{}\Varid{get}\mathrel{=}\lambda \hslambda \Varid{s}\hsarrow{\rightarrow }{\mathpunct{.}}\Varid{return}\;(\Varid{s},\Varid{s}){}\<[E]%
\\[\blanklineskip]%
\>[B]{}\Varid{set}\mathbin{::}\Conid{Monad}\;\tau\Rightarrow \sigma\hsarrow{\rightarrow }{\mathpunct{.}}\Conid{StateT}\;\sigma\;\tau\;(){}\<[E]%
\\
\>[B]{}\Varid{set}\;\Varid{s'}\mathrel{=}\lambda \hslambda \Varid{s}\hsarrow{\rightarrow }{\mathpunct{.}}\Varid{return}\;((),\Varid{s'}){}\<[E]%
\ColumnHook
\end{hscode}\resethooks
which satisfy the following four laws \cite{Plotkin&Power2002:Notions}:
\begin{hscode}\SaveRestoreHook
\column{B}{@{}>{\hspre}l<{\hspost}@{}}%
\column{10}{@{}>{\hspre}c<{\hspost}@{}}%
\column{10E}{@{}l@{}}%
\column{14}{@{}>{\hspre}l<{\hspost}@{}}%
\column{59}{@{}>{\hspre}l<{\hspost}@{}}%
\column{E}{@{}>{\hspre}l<{\hspost}@{}}%
\>[B]{}\mathrm{(GG)}{}\<[10]%
\>[10]{}\quad{}\<[10E]%
\>[14]{}\mathbf{do}\;\{\mskip1.5mu \Varid{s}\leftarrow \Varid{get};\Varid{s'}\leftarrow \Varid{get};\Varid{return}\;(\Varid{s},\Varid{s'})\mskip1.5mu\}{}\<[59]%
\>[59]{}\mathrel{=}\mathbf{do}\;\{\mskip1.5mu \Varid{s}\leftarrow \Varid{get};\Varid{return}\;(\Varid{s},\Varid{s})\mskip1.5mu\}{}\<[E]%
\\
\>[B]{}\mathrm{(SG)}{}\<[10]%
\>[10]{}\quad{}\<[10E]%
\>[14]{}\mathbf{do}\;\{\mskip1.5mu \Varid{set}\;\Varid{s};\Varid{get}\mskip1.5mu\}{}\<[59]%
\>[59]{}\mathrel{=}\mathbf{do}\;\{\mskip1.5mu \Varid{set}\;\Varid{s};\Varid{return}\;\Varid{s}\mskip1.5mu\}{}\<[E]%
\\
\>[B]{}\mathrm{(GS)}{}\<[10]%
\>[10]{}\quad{}\<[10E]%
\>[14]{}\mathbf{do}\;\{\mskip1.5mu \Varid{s}\leftarrow \Varid{get};\Varid{set}\;\Varid{s}\mskip1.5mu\}{}\<[59]%
\>[59]{}\mathrel{=}\mathbf{do}\;\{\mskip1.5mu \Varid{return}\;()\mskip1.5mu\}{}\<[E]%
\\
\>[B]{}\mathrm{(SS)}{}\<[10]%
\>[10]{}\quad{}\<[10E]%
\>[14]{}\mathbf{do}\;\{\mskip1.5mu \Varid{set}\;\Varid{s};\Varid{set}\;\Varid{s'}\mskip1.5mu\}{}\<[59]%
\>[59]{}\mathrel{=}\mathbf{do}\;\{\mskip1.5mu \Varid{set}\;\Varid{s'}\mskip1.5mu\}{}\<[E]%
\ColumnHook
\end{hscode}\resethooks
Computations in \ensuremath{\Conid{T}} embed into 
\ensuremath{\Conid{StateT}\;\Conid{S}\;\Conid{T}} via the monad morphism \ensuremath{\Varid{lift}}:
\begin{hscode}\SaveRestoreHook
\column{B}{@{}>{\hspre}l<{\hspost}@{}}%
\column{12}{@{}>{\hspre}l<{\hspost}@{}}%
\column{E}{@{}>{\hspre}l<{\hspost}@{}}%
\>[B]{}\Varid{lift}{}\<[12]%
\>[12]{}\mathbin{::}\Conid{Monad}\;\tau\Rightarrow \tau\;\alpha\hsarrow{\rightarrow }{\mathpunct{.}}\Conid{StateT}\;\sigma\;\tau\;\alpha{}\<[E]%
\\
\>[B]{}\Varid{lift}\;\Varid{m}\mathrel{=}\lambda \hslambda \Varid{s}\hsarrow{\rightarrow }{\mathpunct{.}}\mathbf{do}\;\{\mskip1.5mu \Varid{a}\leftarrow \Varid{m};\Varid{return}\;(\Varid{a},\Varid{s})\mskip1.5mu\}{}\<[E]%
\ColumnHook
\end{hscode}\resethooks
\endswithdisplay
\end{definition}

\begin{lemma} \label{lem:discardable}
Unused \ensuremath{\Varid{get}}s are discardable:
\begin{hscode}\SaveRestoreHook
\column{B}{@{}>{\hspre}l<{\hspost}@{}}%
\column{46}{@{}>{\hspre}l<{\hspost}@{}}%
\column{E}{@{}>{\hspre}l<{\hspost}@{}}%
\>[B]{}\mathbf{do}\;\{\mskip1.5mu \anonymous \leftarrow \Varid{get};\Varid{m}\mskip1.5mu\}{}\<[46]%
\>[46]{}\mathrel{=}\mathbf{do}\;\{\mskip1.5mu \Varid{m}\mskip1.5mu\}{}\<[E]%
\ColumnHook
\end{hscode}\resethooks
\endswithdisplay
\end{lemma}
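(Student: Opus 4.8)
The plan is to derive the equation within the equational theory, using only the state laws (GS), (GG), (SG), (SS) of Definition~\ref{def:state-monad-transformer} together with the monad laws, rather than unfolding the concrete definition of \ensuremath{\Conid{StateT}}; this keeps the argument abstract and reusable. (A quicker but less principled alternative is to unfold the \ensuremath{\Conid{StateT}} bind, observe that \ensuremath{\Varid{get}\mathrel{=}\lambda \hslambda \Varid{s}\hsarrow{\rightarrow }{\mathpunct{.}}\Varid{return}\;(\Varid{s},\Varid{s})} feeds \ensuremath{(\Varid{s},\Varid{s})} into the continuation, and collapse the result using the left-unit law of \ensuremath{\Conid{T}} and $\eta$-conversion; I would mention this as a sanity check but prefer the equational route below.)

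First I would establish an auxiliary \emph{write-back} identity: for every \ensuremath{\Varid{m}},
\[
\mathbf{do}\;\{\,\Varid{s}\leftarrow \Varid{get};\Varid{set}\;\Varid{s};\Varid{m}\,\} \mathrel{=} \Varid{m}.
\]
This follows by re-associating the bind (via the third monad law) so that the leading block \ensuremath{\mathbf{do}\;\{\,\Varid{s}\leftarrow \Varid{get};\Varid{set}\;\Varid{s}\,\}} is isolated from \ensuremath{\Varid{m}} --- legitimate because \ensuremath{\Varid{s}} does not occur free in \ensuremath{\Varid{m}} --- then rewriting that block to \ensuremath{\Varid{return}\;()} by (GS), and finally discharging the residual \ensuremath{\Varid{return}\;()\sequ \Varid{m}} by the left-unit law. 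Second, I would promote (GG) to an arbitrary continuation \ensuremath{\Varid{k}}:
\[
\mathbf{do}\;\{\,\Varid{s}\leftarrow \Varid{get};\Varid{s'}\leftarrow \Varid{get};\Varid{k}\;\Varid{s}\;\Varid{s'}\,\} \mathrel{=} \mathbf{do}\;\{\,\Varid{s}\leftarrow \Varid{get};\Varid{k}\;\Varid{s}\;\Varid{s}\,\},
\]
obtained from (GG) by binding \ensuremath{\lambda \hslambda (\Varid{s},\Varid{s'})\hsarrow{\rightarrow }{\mathpunct{.}}\Varid{k}\;\Varid{s}\;\Varid{s'}} after each side and simplifying with associativity and left-unit. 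Intuitively this says that two consecutive reads may be fused into one.

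The assembly is then a three-move calculation. Starting from \ensuremath{\mathbf{do}\;\{\,\anonymous \leftarrow \Varid{get};\Varid{m}\,\}}, I rewrite the trailing \ensuremath{\Varid{m}} using the write-back identity to obtain \ensuremath{\mathbf{do}\;\{\,\anonymous \leftarrow \Varid{get};\Varid{t}\leftarrow \Varid{get};\Varid{set}\;\Varid{t};\Varid{m}\,\}}; I then fuse the two adjacent \ensuremath{\Varid{get}}s with the generalised (GG), taking the continuation \ensuremath{\mathbf{do}\;\{\,\Varid{set}\;\Varid{t};\Varid{m}\,\}}, which depends only on the second read, so that merging yields \ensuremath{\mathbf{do}\;\{\,\Varid{s}\leftarrow \Varid{get};\Varid{set}\;\Varid{s};\Varid{m}\,\}}; and finally I eliminate this by the write-back identity again to reach \ensuremath{\Varid{m}}, as required.

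The main obstacle is the middle move: it is not possible to delete the discarded \ensuremath{\Varid{get}} directly, since no single law removes a read. The creative step is to re-expand \ensuremath{\Varid{m}} into a redundant read-then-write-back so that (GG) has a \emph{second} \ensuremath{\Varid{get}} to merge with the discarded one, after which the write-back identity clears up the debt incurred. Everything else is routine bookkeeping with associativity, the unit laws, and congruence of bind under \ensuremath{\mathbf{do}} notation; I expect no difficulty there beyond keeping the variable scoping straight when isolating the \ensuremath{\Varid{get}};\ensuremath{\Varid{set}} blocks.
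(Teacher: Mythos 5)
Your proof is correct and takes essentially the same route as the paper's: insert a redundant \ensuremath{\Varid{get}}-then-\ensuremath{\Varid{set}} pair by applying (GS) backwards, fuse the two adjacent \ensuremath{\Varid{get}}s with (GG), and then eliminate the remaining \ensuremath{\Varid{get}}/\ensuremath{\Varid{set}} pair with (GS) again. The only difference is presentational: you make explicit the continuation-generalised forms of (GS) and (GG) (your write-back identity and fused-reads lemma), which the paper applies implicitly inside do-blocks.
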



\begin{lemma}[liftings commute with \ensuremath{\Varid{get}} and \ensuremath{\Varid{set}}] \label{lem:liftings-commute}
We have:
\begin{hscode}\SaveRestoreHook
\column{B}{@{}>{\hspre}l<{\hspost}@{}}%
\column{40}{@{}>{\hspre}l<{\hspost}@{}}%
\column{E}{@{}>{\hspre}l<{\hspost}@{}}%
\>[B]{}\mathbf{do}\;\{\mskip1.5mu \Varid{a}\leftarrow \Varid{get};\Varid{b}\leftarrow \Varid{lift}\;\Varid{m};\Varid{return}\;(\Varid{a},\Varid{b})\mskip1.5mu\}{}\<[E]%
\\
\>[B]{}\hsindent{40}{}\<[40]%
\>[40]{}\mathrel{=}\mathbf{do}\;\{\mskip1.5mu \Varid{b}\leftarrow \Varid{lift}\;\Varid{m};\Varid{a}\leftarrow \Varid{get};\Varid{return}\;(\Varid{a},\Varid{b})\mskip1.5mu\}{}\<[E]%
\\[\blanklineskip]%
\>[B]{}\mathbf{do}\;\{\mskip1.5mu \Varid{set}\;\Varid{a};\Varid{b}\leftarrow \Varid{lift}\;\Varid{m};\Varid{return}\;\Varid{b}\mskip1.5mu\}{}\<[40]%
\>[40]{}\mathrel{=}\mathbf{do}\;\{\mskip1.5mu \Varid{b}\leftarrow \Varid{lift}\;\Varid{m};\Varid{set}\;\Varid{a};\Varid{return}\;\Varid{b}\mskip1.5mu\}{}\<[E]%
\ColumnHook
\end{hscode}\resethooks
\endswithdisplay
\end{lemma}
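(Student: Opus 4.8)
The plan is purely equational: I would unfold the definitions of $\Varid{get}$, $\Varid{set}$, $\Varid{lift}$, $\Varid{return}$ and bind for the monad $\Conid{StateT}\;\sigma\;\tau$ from Definition~\ref{def:state-monad-transformer}, apply each side of the two equations to a generic incoming state $\Varid{s}$, and rewrite in the underlying monad $\tau$ using its three monad laws together with the \ensuremath{\mathbf{do}}-notation of Definition~\ref{defn:do-notation}. The aim is to reduce both sides of each equation to a single normal form, presented as a function of $\Varid{s}$.

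For the first equation, I would apply the left-hand side to $\Varid{s}$ and use the left-unit law of $\tau$ to discharge the pure $\Varid{get}$ (which reads $\Varid{s}$ and leaves the state untouched); I expect this to collapse to $\lambda\Varid{s}\rightarrow \mathbf{do}_{\tau}\;\{\Varid{b}\leftarrow \Varid{m};\Varid{return}\;((\Varid{s},\Varid{b}),\Varid{s})\}$. Running the right-hand side at $\Varid{s}$ should give the same term: there $\Varid{lift}\;\Varid{m}$ threads $\Varid{s}$ through unchanged while sequencing $\Varid{m}$ in $\tau$, after which $\Varid{get}$ reads back the very same $\Varid{s}$. The second equation is analogous, with $\Varid{set}\;\Varid{a}$ overwriting the state by $\Varid{a}$: both sides should reduce to $\lambda\Varid{s}\rightarrow \mathbf{do}_{\tau}\;\{\Varid{b}\leftarrow \Varid{m};\Varid{return}\;(\Varid{b},\Varid{a})\}$.

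The conceptual heart of the argument --- worth stating rather than merely computing --- is that $\Varid{lift}\;\Varid{m}$ acts only on the $\tau$-component and holds the state coordinate fixed, whereas $\Varid{get}$ and $\Varid{set}$ act only on the state coordinate and contribute no $\tau$-effect. Since the two operations touch disjoint coordinates of the pair $(\alpha,\sigma)$, they commute in $\Conid{StateT}\;\sigma\;\tau$ even though $\tau$ itself need not be commutative; this is what makes the bookkeeping go through.

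The one mildly delicate step is the interleaving of the underlying $\tau$-bind with the state-threading on the right-hand sides, where $\Varid{lift}\;\Varid{m}$ runs first: one must check that the state it returns is exactly the incoming $\Varid{s}$ (and becomes $\Varid{a}$ only after $\Varid{set}$), and reassociate the $\tau$-binds correctly via the third monad law. That reassociation is the only place an error could creep in; everything else is routine unfolding, and neither commutativity of $\tau$ nor Lemma~\ref{lem:discardable} is required.
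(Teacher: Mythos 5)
Your proposal is correct and follows essentially the same route as the paper's proof: both unfold the \ensuremath{\Conid{StateT}} definitions, apply each side to a generic state \ensuremath{\Varid{s}}, and use the unit and associativity laws of the underlying monad \ensuremath{\tau} to reduce both sides of each equation to exactly the normal forms you state (\ensuremath{\lambda \hslambda \Varid{s}\hsarrow{\rightarrow }{\mathpunct{.}}\mathbf{do}\;\{\mskip1.5mu \Varid{b}\leftarrow \Varid{m};\Varid{return}\;((\Varid{s},\Varid{b}),\Varid{s})\mskip1.5mu\}} and \ensuremath{\lambda \hslambda \Varid{s}\hsarrow{\rightarrow }{\mathpunct{.}}\mathbf{do}\;\{\mskip1.5mu \Varid{b}\leftarrow \Varid{m};\Varid{return}\;(\Varid{b},\Varid{a})\mskip1.5mu\}}), concluding by eta-equivalence. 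Your observations that the argument needs neither commutativity of \ensuremath{\tau} nor Lemma~\ref{lem:discardable}, and that the only delicate point is reassociating the \ensuremath{\tau}-binds, are also accurate.
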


\begin{definition*}
Some convenient shorthands:
\begin{hscode}\SaveRestoreHook
\column{B}{@{}>{\hspre}l<{\hspost}@{}}%
\column{13}{@{}>{\hspre}l<{\hspost}@{}}%
\column{17}{@{}>{\hspre}l<{\hspost}@{}}%
\column{E}{@{}>{\hspre}l<{\hspost}@{}}%
\>[B]{}\Varid{gets}\mathbin{::}\Conid{Monad}\;\tau\Rightarrow (\sigma\hsarrow{\rightarrow }{\mathpunct{.}}\alpha)\hsarrow{\rightarrow }{\mathpunct{.}}\Conid{StateT}\;\sigma\;\tau\;\alpha{}\<[E]%
\\
\>[B]{}\Varid{gets}\;\Varid{f}\mathrel{=}\mathbf{do}\;\{\mskip1.5mu \Varid{s}\leftarrow \Varid{get};\Varid{return}\;(\Varid{f}\;\Varid{s})\mskip1.5mu\}{}\<[E]%
\\[\blanklineskip]%
\>[B]{}\Varid{eval}{}\<[13]%
\>[13]{}\mathbin{::}\Conid{Monad}\;\tau\Rightarrow \Conid{StateT}\;\sigma\;\tau\;\alpha\hsarrow{\rightarrow }{\mathpunct{.}}\sigma\hsarrow{\rightarrow }{\mathpunct{.}}\tau\;\alpha{}\<[E]%
\\
\>[B]{}\Varid{eval}\;\Varid{m}\;\Varid{s}{}\<[17]%
\>[17]{}\mathrel{=}\mathbf{do}\;\{\mskip1.5mu (\Varid{a},\Varid{s'})\leftarrow \Varid{m}\;\Varid{s};\Varid{return}\;\Varid{a}\mskip1.5mu\}{}\<[E]%
\\[\blanklineskip]%
\>[B]{}\Varid{exec}{}\<[13]%
\>[13]{}\mathbin{::}\Conid{Monad}\;\tau\Rightarrow \Conid{StateT}\;\sigma\;\tau\;\alpha\hsarrow{\rightarrow }{\mathpunct{.}}\sigma\hsarrow{\rightarrow }{\mathpunct{.}}\tau\;\sigma{}\<[E]%
\\
\>[B]{}\Varid{exec}\;\Varid{m}\;\Varid{s}{}\<[17]%
\>[17]{}\mathrel{=}\mathbf{do}\;\{\mskip1.5mu (\Varid{a},\Varid{s'})\leftarrow \Varid{m}\;\Varid{s};\Varid{return}\;\Varid{s'}\mskip1.5mu\}{}\<[E]%
\ColumnHook
\end{hscode}\resethooks
\endswithdisplay
\end{definition*}

\begin{definition*}
  We say that a computation \ensuremath{\Varid{m}\mathbin{::}\Conid{StateT}\;\Conid{S}\;\Conid{T}\;\Conid{A}} is a \emph{\ensuremath{\Conid{T}}-pure query}
  if it cannot change the state, and is pure with respect to the base
  monad \ensuremath{\Conid{T}}; that is, \ensuremath{\Varid{m}\mathrel{=}\Varid{gets}\;\Varid{h}} for some \ensuremath{\Varid{h}\mathbin{::}\Conid{S}\hsarrow{\rightarrow }{\mathpunct{.}}\Conid{A}}. 
Note that a \ensuremath{\Conid{T}}-pure query need not be pure
with respect to \ensuremath{\Conid{StateT}\;\Conid{S}\;\Conid{T}}; in particular, it will typically read the state.
\end{definition*}

\begin{definition}[data refinement] \label{def:data-refinement}
Given monads \ensuremath{\Conid{M}} of `abstract computations' and \ensuremath{\Conid{M'}} of `concrete computations',
various `abstract operations' \ensuremath{\Varid{op}\mathbin{::}\Conid{A}\hsarrow{\rightarrow }{\mathpunct{.}}\Conid{M}\;\Conid{B}} with
corresponding `concrete operations' \ensuremath{\Varid{op}^\prime\mathbin{::}\Conid{A}\hsarrow{\rightarrow }{\mathpunct{.}}\Conid{M'}\;\Conid{B}}, 
an `abstraction function' \ensuremath{\Varid{abs}\mathbin{::}\Conid{M'}\;\alpha\hsarrow{\rightarrow }{\mathpunct{.}}\Conid{M}\;\alpha} and a
`reification function' \ensuremath{\Varid{conc}\mathbin{::}\Conid{M}\;\alpha\hsarrow{\rightarrow }{\mathpunct{.}}\Conid{M'}\;\alpha}, we say that 
\ensuremath{\Varid{conc}} is a \emph{data refinement} from \ensuremath{(\Conid{M},\Varid{op})} to \ensuremath{(\Conid{M'},\Varid{op}^\prime)} if:
\begin{itemize}
\item \ensuremath{\Varid{conc}} distributes over \ensuremath{(\bind )}
\item \ensuremath{\Varid{abs}\hsdot{\cdot }{.}\Varid{conc}\mathrel{=}\Varid{id}}, and
\item \ensuremath{\Varid{op}^\prime\mathrel{=}\Varid{conc}\hsdot{\cdot }{.}\Varid{op}} for each of the operations.
\end{itemize}
\endswithdisplay
\end{definition}

\begin{remark*}
Given such a data refinement, a composite abstract
computation can be faithfully simulated by a concrete one:
\begin{hscode}\SaveRestoreHook
\column{B}{@{}>{\hspre}c<{\hspost}@{}}%
\column{BE}{@{}l@{}}%
\column{3}{@{}>{\hspre}l<{\hspost}@{}}%
\column{5}{@{}>{\hspre}l<{\hspost}@{}}%
\column{E}{@{}>{\hspre}l<{\hspost}@{}}%
\>[3]{}\mathbf{do}\;\{\mskip1.5mu \Varid{a}\leftarrow \Varid{op}_1\;();\Varid{b}\leftarrow \Varid{op}_2\;(\Varid{a});\Varid{op}_3\;(\Varid{a},\Varid{b})\mskip1.5mu\}{}\<[E]%
\\
\>[B]{}\mathrel{=}{}\<[BE]%
\>[5]{}\mbox{\commentbegin  \ensuremath{\Varid{abs}\hsdot{\cdot }{.}\Varid{conc}\mathrel{=}\Varid{id}}  \commentend}{}\<[E]%
\\
\>[B]{}\hsindent{3}{}\<[3]%
\>[3]{}\Varid{abs}\;(\Varid{conc}\;(\mathbf{do}\;\{\mskip1.5mu \Varid{a}\leftarrow \Varid{op}_1\;();\Varid{b}\leftarrow \Varid{op}_2\;(\Varid{a});\Varid{op}_3\;(\Varid{a},\Varid{b})\mskip1.5mu\})){}\<[E]%
\\
\>[B]{}\mathrel{=}{}\<[BE]%
\>[5]{}\mbox{\commentbegin  \ensuremath{\Varid{conc}} distributes over \ensuremath{(\bind )}  \commentend}{}\<[E]%
\\
\>[B]{}\hsindent{3}{}\<[3]%
\>[3]{}\Varid{abs}\;(\mathbf{do}\;\{\mskip1.5mu \Varid{a}\leftarrow \Varid{conc}\;(\Varid{op}_1\;());\Varid{b}\leftarrow \Varid{conc}\;(\Varid{op}_2\;(\Varid{a}));\Varid{conc}\;(\Varid{op}_3\;(\Varid{a},\Varid{b}))\mskip1.5mu\}){}\<[E]%
\\
\>[B]{}\mathrel{=}{}\<[BE]%
\>[5]{}\mbox{\commentbegin  concrete operations  \commentend}{}\<[E]%
\\
\>[B]{}\hsindent{3}{}\<[3]%
\>[3]{}\Varid{abs}\;(\mathbf{do}\;\{\mskip1.5mu \Varid{a}\leftarrow \Varid{op}^\prime_{1}\;();\Varid{b}\leftarrow \Varid{op}^\prime_{2}\;(\Varid{a});\Varid{op}^\prime_{3}\;(\Varid{a},\Varid{b})\mskip1.5mu\}){}\<[E]%
\ColumnHook
\end{hscode}\resethooks
If \ensuremath{\Varid{conc}} also preserves \ensuremath{\Varid{return}} (so \ensuremath{\Varid{conc}} is a monad
morphism), then we would have a similar result for `empty' abstract
computations too; but we don't need that stronger property in this paper,
and it does not hold for our main example (Remark~\ref{rem:wlog}).
\end{remark*}

\begin{lemma} \label{lem:StateT-data-refinement}
Given an arbitrary monad \ensuremath{\Conid{T}}, not assumed to be an instance of \ensuremath{\Conid{StateT}},
with operations \ensuremath{\Varid{get_T}\mathbin{::}\Conid{T}\;\Conid{S}} and \ensuremath{\Varid{set_T}\mathbin{::}\Conid{S}\hsarrow{\rightarrow }{\mathpunct{.}}\Conid{T}\;()} for a type \ensuremath{\Conid{S}}, such that \ensuremath{\Varid{get_T}} and \ensuremath{\Varid{set_T}} satisfy the
laws \ensuremath{\mathrm{(GG)}}, \ensuremath{\mathrm{(GS)}}, and \ensuremath{\mathrm{(SG)}} 
of Definition~\ref{def:state-monad-transformer}, then
there is a data refinement from \ensuremath{\Conid{T}} to \ensuremath{\Conid{StateT}\;\Conid{S}\;\Conid{T}}.
\jgnote{might be better introducing \ensuremath{\Conid{MonadState}} typeclass}
\end{lemma}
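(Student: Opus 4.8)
The plan is to build the refinement explicitly and then discharge the three clauses of Definition~\ref{def:data-refinement} by equational reasoning in $\mathbf{do}$-notation. The intuition is that $\Conid{StateT}\;\Conid{S}\;\Conid{T}$ exposes as an explicit parameter the very state that $\get{T}$ and $\set{T}$ manipulate implicitly inside $\Conid{T}$, so the refinement must keep the explicit and the hidden copies synchronised. I would therefore define the reification $\Varid{conc}\mathbin{::}\Conid{T}\;\alpha\to\Conid{StateT}\;\Conid{S}\;\Conid{T}\;\alpha$ by $\Varid{conc}\;\Varid{m}\mathrel{=}\lambda \Varid{s}\to\mathbf{do}\;\{\mskip1.5mu \set{T}\;\Varid{s};\Varid{a}\leftarrow \Varid{m};\Varid{s'}\leftarrow \get{T};\Varid{return}\;(\Varid{a},\Varid{s'})\mskip1.5mu\}$ --- load the incoming explicit state, run $\Varid{m}$, then read the resulting hidden state back out --- and the abstraction $\Varid{abs}\mathbin{::}\Conid{StateT}\;\Conid{S}\;\Conid{T}\;\alpha\to\Conid{T}\;\alpha$ dually by $\Varid{abs}\;\Varid{f}\mathrel{=}\mathbf{do}\;\{\mskip1.5mu \Varid{s}\leftarrow \get{T};(\Varid{a},\Varid{s'})\leftarrow \Varid{f}\;\Varid{s};\set{T}\;\Varid{s'};\Varid{return}\;\Varid{a}\mskip1.5mu\}$. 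The concrete operations are then taken to be $\Varid{conc}\;\get{T}$ and $\Varid{conc}\cdot\set{T}$, so that the third clause $\Varid{op}'\mathrel{=}\Varid{conc}\cdot\Varid{op}$ holds by construction.

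The substantive content is the other two clauses, and both turn on the GetPut law (GS). For $\Varid{abs}\cdot\Varid{conc}\mathrel{=}\Varid{id}$ I would expand $\Varid{abs}\;(\Varid{conc}\;\Varid{m})$ and flatten the nested blocks using monad associativity to obtain $\mathbf{do}\;\{\mskip1.5mu \Varid{s}\leftarrow \get{T};\set{T}\;\Varid{s};\Varid{a}\leftarrow \Varid{m};\Varid{s'}\leftarrow \get{T};\set{T}\;\Varid{s'};\Varid{return}\;\Varid{a}\mskip1.5mu\}$; the two get-then-set pairs each collapse to $\Varid{return}\;()$ by (GS), and the right-unit law then leaves $\Varid{m}$. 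Notice this step needs \emph{only} (GS), and in particular does not rely on any analogue of Lemma~\ref{lem:discardable}, which is fortunate since discardability of $\get{T}$ is not among the hypotheses for an abstract $\Conid{T}$.

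For distribution over $(\bind)$ I would unfold the $\Conid{StateT}$ definitions of $\Varid{return}$ and $(\bind)$ and reduce both $\Varid{conc}\;(\Varid{m}\bind \Varid{k})$ and $\Varid{conc}\;\Varid{m}\bind (\Varid{conc}\cdot\Varid{k})$ to a common normal form. The right-hand side differs only by an interior fragment $\Varid{s}_1\leftarrow \get{T};\set{T}\;\Varid{s}_1$, arising exactly where the read-back at the end of $\Varid{conc}\;\Varid{m}$ abuts the re-load at the start of $\Varid{conc}\;(\Varid{k}\;\Varid{x})$; since $\Varid{s}_1$ is unused afterwards, (GS) erases it and the two sides become syntactically identical. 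Although the refinement itself thus uses only (GS), I would deploy the remaining hypotheses to put the concrete operations into recognisable form: (GG) contracts the duplicated $\get{T}$ in $\Varid{conc}\;\get{T}$ and (SG) resolves the read following the write, yielding $\Varid{conc}\;\get{T}\mathrel{=}\lambda \Varid{s}\to\mathbf{do}\;\{\mskip1.5mu \set{T}\;\Varid{s};\Varid{return}\;(\Varid{s},\Varid{s})\mskip1.5mu\}$ and a similar form for $\Varid{conc}\cdot\set{T}$, which is what ties the construction back to the usual state interface used downstream.

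The main obstacle I anticipate is bookkeeping rather than insight: every application of (GS) relies on a freshness side-condition --- that the state variable bound by the $\get{T}$ being discarded genuinely does not occur in the continuation --- so I would apply it only after checking the Barendregt convention in force, and I would justify each flattening of a $\mathbf{do}$-block by an explicit appeal to the monad associativity and unit laws rather than by informal rearrangement of binds.
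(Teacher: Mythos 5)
Your proof is correct, and it follows the same overall skeleton as the paper's: define an abstraction $\Varid{abs}$ and a reification $\Varid{conc}$, verify the three clauses of Definition~\ref{def:data-refinement} equationally, and obtain the third clause for free by \emph{defining} the concrete operations as $\Varid{conc}\cdot\Varid{op}$. The one substantive difference is your $\Varid{conc}$: the paper's reification is $\Varid{conc}\;\Varid{m}=\lambda \Varid{s}\rightarrow \mathbf{do}\;\{\mskip1.5mu \Varid{a}\leftarrow \Varid{m};\Varid{s'}\leftarrow \Varid{get_T};\Varid{return}\;(\Varid{a},\Varid{s'})\mskip1.5mu\}$, which \emph{ignores} the incoming explicit state and only reads the hidden state back out afterwards, whereas yours additionally performs $\Varid{set_T}\;\Varid{s}$ up front. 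This changes the induced concrete operations (your synchronised get returns the explicit copy after writing it into the hidden state; the paper's returns the hidden copy after writing it into the explicit one) and it changes which laws are needed where: your clauses~(1) and~(2) collapse by (GS) alone, while the paper's version of both clauses must discard an \emph{unused} $\Varid{get_T}$, which requires the discardability property (Lemma~\ref{lem:discardable}) and hence (GG) as well as (GS); the paper's clause~(1) additionally routes through $\Conid{StateT}$-level reasoning ((SS) for the outer state and Lemma~\ref{lem:liftings-commute}). So your variant buys a slightly leaner verification. One small correction to your aside: discardability of $\Varid{get_T}$ is not "unavailable" for an abstract $\Conid{T}$ --- the appendix form of Lemma~\ref{lem:discardable} derives it from (GG) and (GS), both of which are hypotheses of this lemma --- so the paper's reliance on it is legitimate; your proof merely avoids needing it. Your attention to the freshness side-condition when applying (GS) under a context is exactly the care the paper also takes (via the Barendregt convention), and is well placed.
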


\begin{proof}[sketch]
The abstraction function \ensuremath{\Varid{abs}} from \ensuremath{\Conid{StateT}\;\Conid{S}\;\Conid{T}} to \ensuremath{\Conid{T}} and the
reification function \ensuremath{\Varid{conc}} in the opposite direction are given by
\begin{hscode}\SaveRestoreHook
\column{B}{@{}>{\hspre}l<{\hspost}@{}}%
\column{9}{@{}>{\hspre}l<{\hspost}@{}}%
\column{E}{@{}>{\hspre}l<{\hspost}@{}}%
\>[B]{}\Varid{abs}\;\Varid{m}{}\<[9]%
\>[9]{}\mathrel{=}\mathbf{do}\;\{\mskip1.5mu \Varid{s}\leftarrow \Varid{get_T};(\Varid{a},\Varid{s'})\leftarrow \Varid{m}\;\Varid{s};\Varid{set_T}\;\Varid{s'};\Varid{return}\;\Varid{a}\mskip1.5mu\}{}\<[E]%
\\[\blanklineskip]%
\>[B]{}\Varid{conc}\;\Varid{m}{}\<[9]%
\>[9]{}\mathrel{=}\lambda \hslambda \Varid{s}\hsarrow{\rightarrow }{\mathpunct{.}}\mathbf{do}\;\{\mskip1.5mu \Varid{a}\leftarrow \Varid{m};\Varid{s'}\leftarrow \Varid{get_T};\Varid{return}\;(\Varid{a},\Varid{s'})\mskip1.5mu\}{}\<[E]%
\\
\>[9]{}\mathrel{=}\mathbf{do}\;\{\mskip1.5mu \Varid{a}\leftarrow \Varid{lift}\;\Varid{m};\Varid{s'}\leftarrow \Varid{lift}\;\Varid{get_T};\Varid{set}\;\Varid{s'};\Varid{return}\;\Varid{a}\mskip1.5mu\}{}\<[E]%
\ColumnHook
\end{hscode}\resethooks
\endswithdisplay
\end{proof}

\begin{remark*}
Informally, if \ensuremath{\Conid{T}} provides suitable get and set operations, we can
without loss of generality assume it to be an instance of
\ensuremath{\Conid{StateT}}. The essence of the data refinement is for concrete
computations to maintain a shadow copy of the implicit state; \ensuremath{\Varid{conc}\;\Varid{m}}
synchronises the outer copy of the state with the inner copy after
executing \ensuremath{\Varid{m}}, and \ensuremath{\Varid{abs}\;\Varid{m}} runs the \ensuremath{\Conid{StateT}} computation \ensuremath{\Varid{m}} on an
initial state extracted from \ensuremath{\Conid{T}}, and stores the final state back
there.
\end{remark*}

\subsection{Lenses}

The notion of an (asymmetric) `lens' between a source and a view was
introduced by Foster \etal~\cite{lens-toplas}. We adapt their
notation, as follows.

\begin{definition} \label{def:lens}
A lens \ensuremath{\Varid{l}\mathbin{::}\Conid{Lens}\;\Conid{S}\;\Conid{V}} from source type \ensuremath{\Conid{S}} to view type \ensuremath{\Conid{V}} 
consists of a pair of functions
which get a \ensuremath{\Varid{view}} of the source, and \ensuremath{\Varid{update}} an old source with
a modified view:
\begin{hscode}\SaveRestoreHook
\column{B}{@{}>{\hspre}l<{\hspost}@{}}%
\column{20}{@{}>{\hspre}l<{\hspost}@{}}%
\column{35}{@{}>{\hspre}l<{\hspost}@{}}%
\column{56}{@{}>{\hspre}l<{\hspost}@{}}%
\column{E}{@{}>{\hspre}l<{\hspost}@{}}%
\>[B]{}\mathbf{data}\;\Conid{Lens}\;\alpha\;\beta\mathrel{=}{}\<[20]%
\>[20]{}\Conid{Lens}\;\{\mskip1.5mu \Varid{view}{}\<[35]%
\>[35]{}\mathbin{::}\alpha\hsarrow{\rightarrow }{\mathpunct{.}}\beta,\Varid{update}{}\<[56]%
\>[56]{}\mathbin{::}\alpha\hsarrow{\rightarrow }{\mathpunct{.}}\beta\hsarrow{\rightarrow }{\mathpunct{.}}\alpha\mskip1.5mu\}{}\<[E]%
\ColumnHook
\end{hscode}\resethooks
We say that a lens \ensuremath{\Varid{l}\mathbin{::}\Conid{Lens}\;\Conid{S}\;\Conid{V}} is \emph{well behaved} if it satisfies
the two round-tripping laws
\begin{hscode}\SaveRestoreHook
\column{B}{@{}>{\hspre}l<{\hspost}@{}}%
\column{10}{@{}>{\hspre}c<{\hspost}@{}}%
\column{10E}{@{}l@{}}%
\column{14}{@{}>{\hspre}l<{\hspost}@{}}%
\column{39}{@{}>{\hspre}l<{\hspost}@{}}%
\column{E}{@{}>{\hspre}l<{\hspost}@{}}%
\>[B]{}\mathrm{(UV)}{}\<[10]%
\>[10]{}\quad{}\<[10E]%
\>[14]{}\Varid{l}\mathord{.}\Varid{view}\;(\Varid{l}\mathord{.}\Varid{update}\;\Varid{s}\;\Varid{v}){}\<[39]%
\>[39]{}\mathrel{=}\Varid{v}{}\<[E]%
\\
\>[B]{}\mathrm{(VU)}{}\<[10]%
\>[10]{}\quad{}\<[10E]%
\>[14]{}\Varid{l}\mathord{.}\Varid{update}\;\Varid{s}\;(\Varid{l}\mathord{.}\Varid{view}\;\Varid{s}){}\<[39]%
\>[39]{}\mathrel{=}\Varid{s}{}\<[E]%
\ColumnHook
\end{hscode}\resethooks
and \emph{very well-behaved}  or
\emph{overwritable} if in addition
\begin{hscode}\SaveRestoreHook
\column{B}{@{}>{\hspre}l<{\hspost}@{}}%
\column{10}{@{}>{\hspre}c<{\hspost}@{}}%
\column{10E}{@{}l@{}}%
\column{14}{@{}>{\hspre}l<{\hspost}@{}}%
\column{E}{@{}>{\hspre}l<{\hspost}@{}}%
\>[B]{}\mathrm{(UU)}{}\<[10]%
\>[10]{}\quad{}\<[10E]%
\>[14]{}\Varid{l}\mathord{.}\Varid{update}\;(\Varid{l}\mathord{.}\Varid{update}\;\Varid{s}\;\Varid{v})\;\Varid{v'}\mathrel{=}\Varid{l}\mathord{.}\Varid{update}\;\Varid{s}\;\Varid{v'}{}\<[E]%
\ColumnHook
\end{hscode}\resethooks
\endswithdisplay
\end{definition}

\begin{remark*}
Very well-behavedness captures the idea that, after two successive
updates, the second update completely \emph{overwrites} the first.
It turns out to be a rather strong condition, and many natural lenses
do not satisfy it. Those that do generally have the special property
that source \ensuremath{\Conid{S}} factorises cleanly into \ensuremath{\Conid{V}\times\Conid{C}} for some type \ensuremath{\Conid{C}} of
`complements' independent of \ensuremath{\Conid{V}}, and so the view is a projection.
For example:
\begin{hscode}\SaveRestoreHook
\column{B}{@{}>{\hspre}l<{\hspost}@{}}%
\column{E}{@{}>{\hspre}l<{\hspost}@{}}%
\>[B]{}\Varid{fstLens}\mathbin{::}\Conid{Lens}\;(\Varid{a},\Varid{b})\;\Varid{a}{}\<[E]%
\\
\>[B]{}\Varid{fstLens}\mathrel{=}\Conid{Lens}\;\Varid{fst}\;\Varid{u}\;\mathbf{where}\;\Varid{u}\;(\Varid{a},\Varid{b})\;\Varid{a'}\mathrel{=}(\Varid{a'},\Varid{b}){}\<[E]%
\\[\blanklineskip]%
\>[B]{}\Varid{sndLens}\mathbin{::}\Conid{Lens}\;(\Varid{a},\Varid{b})\;\Varid{b}{}\<[E]%
\\
\>[B]{}\Varid{sndLens}\mathrel{=}\Conid{Lens}\;\Varid{snd}\;\Varid{u}\;\mathbf{where}\;\Varid{u}\;(\Varid{a},\Varid{b})\;\Varid{b'}\mathrel{=}(\Varid{a},\Varid{b'}){}\<[E]%
\ColumnHook
\end{hscode}\resethooks
But in general, the \ensuremath{\Conid{V}} may be
computed from and therefore depend on all of the \ensuremath{\Conid{S}} value,
and there is no clean factorisation of \ensuremath{\Conid{S}} into \ensuremath{\Conid{V}\times\Conid{C}}.
\end{remark*}

Asymmetric lenses are constrained, in the sense that they relate
two types \ensuremath{\Conid{S}} and \ensuremath{\Conid{V}} in which the view \ensuremath{\Conid{V}} is completely determined
by the source \ensuremath{\Conid{S}}. \HPWlong\ \cite{symlens} relaxed this constraint,
introducing \emph{symmetric lenses} between two types \ensuremath{\Conid{A}} and \ensuremath{\Conid{B}},
neither of which need determine the other:

\begin{definition} \label{def:symlens}
A \emph{symmetric lens} from \ensuremath{\Conid{A}} to \ensuremath{\Conid{B}} with complement
type \ensuremath{\Conid{C}} consists of two functions converting to and from \ensuremath{\Conid{A}} and \ensuremath{\Conid{B}},
each also operating on \ensuremath{\Conid{C}}. 
\begin{hscode}\SaveRestoreHook
\column{B}{@{}>{\hspre}l<{\hspost}@{}}%
\column{32}{@{}>{\hspre}l<{\hspost}@{}}%
\column{41}{@{}>{\hspre}l<{\hspost}@{}}%
\column{73}{@{}>{\hspre}l<{\hspost}@{}}%
\column{E}{@{}>{\hspre}l<{\hspost}@{}}%
\>[B]{}\mathbf{data}\;\Conid{SLens}\;\gamma\;\alpha\;\beta\mathrel{=}\Conid{SLens}\;\{\mskip1.5mu {}\<[32]%
\>[32]{}\Varid{put}_{R}{}\<[41]%
\>[41]{}\mathbin{::}(\alpha,\gamma)\hsarrow{\rightarrow }{\mathpunct{.}}(\beta,\gamma),\Varid{put}_{L}{}\<[73]%
\>[73]{}\mathbin{::}(\beta,\gamma)\hsarrow{\rightarrow }{\mathpunct{.}}(\alpha,\gamma)\mskip1.5mu\}{}\<[E]%
\ColumnHook
\end{hscode}\resethooks
We say that symmetric lens \ensuremath{\Varid{l}} is \emph{well-behaved} if it satisfies the 
following two laws:
\begin{hscode}\SaveRestoreHook
\column{B}{@{}>{\hspre}l<{\hspost}@{}}%
\column{13}{@{}>{\hspre}c<{\hspost}@{}}%
\column{13E}{@{}l@{}}%
\column{17}{@{}>{\hspre}l<{\hspost}@{}}%
\column{37}{@{}>{\hspre}l<{\hspost}@{}}%
\column{47}{@{}>{\hspre}c<{\hspost}@{}}%
\column{47E}{@{}l@{}}%
\column{51}{@{}>{\hspre}c<{\hspost}@{}}%
\column{51E}{@{}l@{}}%
\column{55}{@{}>{\hspre}c<{\hspost}@{}}%
\column{55E}{@{}l@{}}%
\column{59}{@{}>{\hspre}l<{\hspost}@{}}%
\column{82}{@{}>{\hspre}l<{\hspost}@{}}%
\column{E}{@{}>{\hspre}l<{\hspost}@{}}%
\>[B]{}\mathrm{(PutRL)}{}\<[13]%
\>[13]{}\quad{}\<[13E]%
\>[17]{}\Varid{l}\mathord{.}\Varid{put}_{R}\;(\Varid{a},\Varid{c}){}\<[37]%
\>[37]{}\mathrel{=}(\Varid{b},\Varid{c'}){}\<[47]%
\>[47]{}\quad{}\<[47E]%
\>[51]{}\Rightarrow {}\<[51E]%
\>[55]{}\quad{}\<[55E]%
\>[59]{}\Varid{l}\mathord{.}\Varid{put}_{L}\;(\Varid{b},\Varid{c'}){}\<[82]%
\>[82]{}\mathrel{=}(\Varid{a},\Varid{c'}){}\<[E]%
\\
\>[B]{}\mathrm{(PutLR)}{}\<[13]%
\>[13]{}\quad{}\<[13E]%
\>[17]{}\Varid{l}\mathord{.}\Varid{put}_{L}\;(\Varid{b},\Varid{c}){}\<[37]%
\>[37]{}\mathrel{=}(\Varid{a},\Varid{c'}){}\<[47]%
\>[47]{}\quad{}\<[47E]%
\>[51]{}\Rightarrow {}\<[51E]%
\>[55]{}\quad{}\<[55E]%
\>[59]{}\Varid{l}\mathord{.}\Varid{put}_{R}\;(\Varid{a},\Varid{c'}){}\<[82]%
\>[82]{}\mathrel{=}(\Varid{b},\Varid{c'}){}\<[E]%
\ColumnHook
\end{hscode}\resethooks
(There is also a stronger notion of very well-behavedness,
but we do not need it for this paper.)
\end{definition}

\begin{remark*}
The idea is that \ensuremath{\Conid{A}} and \ensuremath{\Conid{B}} represent two overlapping but distinct
views of some common underlying data, and the so-called complement \ensuremath{\Conid{C}}
represents their amalgamation (\emph{not} necessarily containing all
the information from both: rather, one view plus the complement together
contain enough information to reconstruct the other view). Each
function takes a new view and the old complement, and returns a new
opposite view and a new complement.  The two well-behavedness
properties each say that after one update
operation, the complement \ensuremath{\Varid{c'}} is fully consistent with the current
views, and so a subsequent opposite update with the same view has no
further effect on the complement.
\end{remark*}
\section{Monadic bidirectional transformations}\label{sec:monad}

We have seen that the state monad provides a pair \ensuremath{\Varid{get},\Varid{set}} of
operations on the state. A symmetric \bx{} should provide two such
pairs, one for each data source; these four operations should be
effectful, not least because they should operate on some shared
consistent state.  We therefore introduce the following general notion
of \emph{monadic \bx} (which we sometimes call `mbx', for short).

\begin{definition*}
We say that a data structure \ensuremath{\Varid{t}\mathbin{::}\Conid{BX}\;\Conid{T}\;\Conid{A}\;\Conid{B}} is a
\emph{bx between \ensuremath{\Conid{A}} and \ensuremath{\Conid{B}} in monad \ensuremath{\Conid{T}}} when it provides appropriately
typed functions:
\begin{hscode}\SaveRestoreHook
\column{B}{@{}>{\hspre}l<{\hspost}@{}}%
\column{32}{@{}>{\hspre}l<{\hspost}@{}}%
\column{38}{@{}>{\hspre}l<{\hspost}@{}}%
\column{50}{@{}>{\hspre}l<{\hspost}@{}}%
\column{59}{@{}>{\hspre}l<{\hspost}@{}}%
\column{E}{@{}>{\hspre}l<{\hspost}@{}}%
\>[B]{}\mathbf{data}\;\Conid{BX}\;\tau\;\alpha\;\beta\mathrel{=}\Conid{BX}\;\{\mskip1.5mu {}\<[32]%
\>[32]{}\get{L}{}\<[38]%
\>[38]{}\mathbin{::}\tau\;\alpha,{}\<[50]%
\>[50]{}\quad\set{L}{}\<[59]%
\>[59]{}\mathbin{::}\alpha\hsarrow{\rightarrow }{\mathpunct{.}}\tau\;(),{}\<[E]%
\\
\>[32]{}\get{R}{}\<[38]%
\>[38]{}\mathbin{::}\tau\;\beta,{}\<[50]%
\>[50]{}\quad\set{R}{}\<[59]%
\>[59]{}\mathbin{::}\beta\hsarrow{\rightarrow }{\mathpunct{.}}\tau\;()\mskip1.5mu\}{}\<[E]%
\ColumnHook
\end{hscode}\resethooks
\endswithdisplay
\end{definition*}

\subsection{Entangled state}

The \ensuremath{\Varid{get}} and \ensuremath{\Varid{set}} operations of the state monad 
satisfy the four laws \ensuremath{\mathrm{(GG)}}, \ensuremath{\mathrm{(SG)}}, \ensuremath{\mathrm{(GS)}}, \ensuremath{\mathrm{(SS)}}
of \Definition~\ref{def:state-monad-transformer}.
More generally, one can give an
equational theory of state with multiple memory locations;
in particular, with just two locations `left' (\ensuremath{\Conid{L}}) and `right' (\ensuremath{\Conid{R}}),
the equational theory has four operations
\ensuremath{\get{L}}, \ensuremath{\set{L}}, \ensuremath{\get{R}}, \ensuremath{\set{R}} that match the \ensuremath{\Conid{BX}} interface. 
This theory has four 
laws for \ensuremath{\Conid{L}} analogous to those of 
\Definition~\ref{def:state-monad-transformer}, 
another four such laws for \ensuremath{\Conid{R}}, and a
final four laws 
stating that the \ensuremath{\Conid{L}}-operations commute with the \ensuremath{\Conid{R}}-operations. 
But this equational theory of two memory locations is too strong for
interesting \bx, because of the commutativity requirement: the whole point
of the exercise is that invoking \ensuremath{\set{L}} should indeed affect the behaviour
of a subsequent \ensuremath{\get{R}}, and symmetrically.
We therefore impose only a subset of those twelve laws on the \ensuremath{\Conid{BX}} interface.

\begin{definition*}
A \emph{well-behaved} \ensuremath{\Conid{BX}} is one satisfying 
the following seven laws:
\savecolumns
\begin{hscode}\SaveRestoreHook
\column{B}{@{}>{\hspre}l<{\hspost}@{}}%
\column{12}{@{}>{\hspre}c<{\hspost}@{}}%
\column{12E}{@{}l@{}}%
\column{16}{@{}>{\hspre}l<{\hspost}@{}}%
\column{63}{@{}>{\hspre}l<{\hspost}@{}}%
\column{E}{@{}>{\hspre}l<{\hspost}@{}}%
\>[B]{}\mathrm{(G_LG_L)}{}\<[12]%
\>[12]{}\quad{}\<[12E]%
\>[16]{}\mathbf{do}\;\{\mskip1.5mu \Varid{a}\leftarrow \get{L};\Varid{a'}\leftarrow \get{L};\Varid{return}\;(\Varid{a},\Varid{a'})\mskip1.5mu\}{}\<[E]%
\\
\>[16]{}\hsindent{47}{}\<[63]%
\>[63]{}\mathrel{=}\mathbf{do}\;\{\mskip1.5mu \Varid{a}\leftarrow \get{L};\Varid{return}\;(\Varid{a},\Varid{a})\mskip1.5mu\}{}\<[E]%
\\
\>[B]{}\mathrm{(S_LG_L)}{}\<[12]%
\>[12]{}\quad{}\<[12E]%
\>[16]{}\mathbf{do}\;\{\mskip1.5mu \set{L}\;\Varid{a};\get{L}\mskip1.5mu\}{}\<[63]%
\>[63]{}\mathrel{=}\mathbf{do}\;\{\mskip1.5mu \set{L}\;\Varid{a};\Varid{return}\;\Varid{a}\mskip1.5mu\}{}\<[E]%
\\
\>[B]{}\mathrm{(G_LS_L)}{}\<[12]%
\>[12]{}\quad{}\<[12E]%
\>[16]{}\mathbf{do}\;\{\mskip1.5mu \Varid{a}\leftarrow \get{L};\set{L}\;\Varid{a}\mskip1.5mu\}{}\<[63]%
\>[63]{}\mathrel{=}\mathbf{do}\;\{\mskip1.5mu \Varid{return}\;()\mskip1.5mu\}{}\<[E]%
\\[\blanklineskip]%
\>[B]{}\mathrm{(G_RG_R)}{}\<[12]%
\>[12]{}\quad{}\<[12E]%
\>[16]{}\mathbf{do}\;\{\mskip1.5mu \Varid{a}\leftarrow \get{R};\Varid{a'}\leftarrow \get{R};\Varid{return}\;(\Varid{a},\Varid{a'})\mskip1.5mu\}{}\<[E]%
\\
\>[16]{}\hsindent{47}{}\<[63]%
\>[63]{}\mathrel{=}\mathbf{do}\;\{\mskip1.5mu \Varid{a}\leftarrow \get{R};\Varid{return}\;(\Varid{a},\Varid{a})\mskip1.5mu\}{}\<[E]%
\\
\>[B]{}\mathrm{(S_RG_R)}{}\<[12]%
\>[12]{}\quad{}\<[12E]%
\>[16]{}\mathbf{do}\;\{\mskip1.5mu \set{R}\;\Varid{a};\get{R}\mskip1.5mu\}{}\<[63]%
\>[63]{}\mathrel{=}\mathbf{do}\;\{\mskip1.5mu \set{R}\;\Varid{a};\Varid{return}\;\Varid{a}\mskip1.5mu\}{}\<[E]%
\\
\>[B]{}\mathrm{(G_RS_R)}{}\<[12]%
\>[12]{}\quad{}\<[12E]%
\>[16]{}\mathbf{do}\;\{\mskip1.5mu \Varid{a}\leftarrow \get{R};\set{R}\;\Varid{a}\mskip1.5mu\}{}\<[63]%
\>[63]{}\mathrel{=}\mathbf{do}\;\{\mskip1.5mu \Varid{return}\;()\mskip1.5mu\}{}\<[E]%
\\[\blanklineskip]%
\>[B]{}\mathrm{(G_LG_R)}{}\<[12]%
\>[12]{}\quad{}\<[12E]%
\>[16]{}\mathbf{do}\;\{\mskip1.5mu \Varid{a}\leftarrow \get{L};\Varid{b}\leftarrow \get{R};\Varid{return}\;(\Varid{a},\Varid{b})\mskip1.5mu\}{}\<[E]%
\\
\>[16]{}\hsindent{47}{}\<[63]%
\>[63]{}\mathrel{=}\mathbf{do}\;\{\mskip1.5mu \Varid{b}\leftarrow \get{R};\Varid{a}\leftarrow \get{L};\Varid{return}\;(\Varid{a},\Varid{b})\mskip1.5mu\}{}\<[E]%
\ColumnHook
\end{hscode}\resethooks
We further say that a  \ensuremath{\Conid{BX}} is \emph{overwritable} if it satisfies
\restorecolumns
\begin{hscode}\SaveRestoreHook
\column{B}{@{}>{\hspre}l<{\hspost}@{}}%
\column{12}{@{}>{\hspre}c<{\hspost}@{}}%
\column{12E}{@{}l@{}}%
\column{16}{@{}>{\hspre}l<{\hspost}@{}}%
\column{63}{@{}>{\hspre}l<{\hspost}@{}}%
\column{E}{@{}>{\hspre}l<{\hspost}@{}}%
\>[B]{}\mathrm{(S_LS_L)}{}\<[12]%
\>[12]{}\quad{}\<[12E]%
\>[16]{}\mathbf{do}\;\{\mskip1.5mu \set{L}\;\Varid{a};\set{L}\;\Varid{a'}\mskip1.5mu\}{}\<[63]%
\>[63]{}\mathrel{=}\mathbf{do}\;\{\mskip1.5mu \set{L}\;\Varid{a'}\mskip1.5mu\}{}\<[E]%
\\
\>[B]{}\mathrm{(S_RS_R)}{}\<[12]%
\>[12]{}\quad{}\<[12E]%
\>[16]{}\mathbf{do}\;\{\mskip1.5mu \set{R}\;\Varid{a};\set{R}\;\Varid{a'}\mskip1.5mu\}{}\<[63]%
\>[63]{}\mathrel{=}\mathbf{do}\;\{\mskip1.5mu \set{R}\;\Varid{a'}\mskip1.5mu\}{}\<[E]%
\ColumnHook
\end{hscode}\resethooks
\endswithdisplay
\end{definition*}
We might think of the \ensuremath{\Conid{A}} and \ensuremath{\Conid{B}} views as being \emph{entangled}; in
particular, we call the monad arising as the initial model of the
theory with
the four operations
\ensuremath{\get{L},\set{L},\get{R},\set{R}}
and the seven laws \ensuremath{\mathrm{(G_LG_L)}}\dots \ensuremath{\mathrm{(G_LG_R)}} 
the \emph{entangled state monad}.

\begin{remark*}
Overwritability is a strong condition, corresponding
to very well-behavedness of lenses \cite{lens-toplas}, 
history-ignorance of relational bx \cite{stevens09:sosym}
etc.; many interesting bx fail to satisfy it.
Indeed, 
in an effectful setting, 
a law such as \ensuremath{\mathrm{(S_LS_L)}} demands
that \ensuremath{\set{L}\;\Varid{a'}} be able to undo (or overwrite) any effects arising from
\ensuremath{\set{L}\;\Varid{a}}; such behaviour is plausible in the pure state-based setting,
but not in general. Consequently, we do not demand
overwritability in what follows. 
\end{remark*}

\begin{definition} \label{def:equivalence}
  A \emph{\bx{} morphism} from \ensuremath{bx_{1}\mathbin{::}\Conid{BX}\;\Conid{T}_{1}\;\Conid{A}\;\Conid{B}} to \ensuremath{bx_{2}\mathbin{::}\Conid{BX}\;\Conid{T}_{2}\;\Conid{A}\;\Conid{B}} is a monad morphism \ensuremath{\varphi \mathbin{:}\forall \alpha \hsforall \hsdot{\cdot }{.}\Conid{T}_{1}\;\alpha \hsarrow{\rightarrow }{\mathpunct{.}}\Conid{T}_{2}\;\alpha } that preserves the
  \bx{} operations, in the sense that \ensuremath{\varphi \;(bx_{1}\mathord{.}\get{L})\mathrel{=}bx_{2}\mathord{.}\get{L}} and so
  on.
  A \emph{\bx{} isomorphism} is an invertible \bx{} morphism, i.e. a
  pair of monad morphisms \ensuremath{\iota \mathbin{::}\forall \alpha\hsforall \hsdot{\cdot }{.}\Conid{T}_{1}\;\alpha\hsarrow{\rightarrow }{\mathpunct{.}}\Conid{T}_{2}\;\alpha} and \ensuremath{\iota ^{-1}\mathbin{:}\forall \alpha\hsforall \hsdot{\cdot }{.}\Conid{T}_{2}\;\alpha\hsarrow{\rightarrow }{\mathpunct{.}}\Conid{T}_{1}\;\alpha} which are mutually inverse, and which also
  preserve the operations.
  We say that
  \ensuremath{bx_{1}} and \ensuremath{bx_{2}} are \emph{equivalent} (and write \ensuremath{bx_1\equiv bx_2}) if
  there is a \bx{} isomorphism between them.
\end{definition}

\subsection{Stateful BX}

The get and set operations of a \ensuremath{\Conid{BX}}, and the relationship via
entanglement with the equational theory of the state monad, 
strongly suggest that there is something inherently stateful about
\bx{}; that will be a crucial observation in what follows. In
particular, the \ensuremath{\get{L}} and \ensuremath{\get{R}} operations of a \ensuremath{\Conid{BX}\;\Conid{T}\;\Conid{A}\;\Conid{B}} reveal
that it is in some sense storing an \ensuremath{\Conid{A}\times\Conid{B}} pair; conversely, the
\ensuremath{\set{L}} and \ensuremath{\set{R}} operations allow that pair to be updated.
We therefore focus on monads of the form \ensuremath{\Conid{StateT}\;\Conid{S}\;\Conid{T}}, where \ensuremath{\Conid{S}} is the `state' of the bx itself, 
capable of recording an \ensuremath{\Conid{A}} and a \ensuremath{\Conid{B}}, and \ensuremath{\Conid{T}} is a monad
encapsulating any other ambient effects that can be performed by the
`get' and `set' operations.

\begin{definition*}
We introduce the following instance of the \ensuremath{\Conid{BX}} signature
(note the inverted argument order):
\begin{hscode}\SaveRestoreHook
\column{B}{@{}>{\hspre}l<{\hspost}@{}}%
\column{28}{@{}>{\hspre}l<{\hspost}@{}}%
\column{E}{@{}>{\hspre}l<{\hspost}@{}}%
\>[B]{}\mathbf{type}\;\Conid{StateTBX}\;\tau\;\sigma\;\alpha\;\beta{}\<[28]%
\>[28]{}\mathrel{=}\Conid{BX}\;(\Conid{StateT}\;\sigma\;\tau)\;\alpha\;\beta{}\<[E]%
\ColumnHook
\end{hscode}\resethooks
\endswithdisplay
\end{definition*}

\begin{remark} \label{rem:consistent}
In fact, we can say more about the pair inside a \ensuremath{bx\mathbin{::}\Conid{BX}\;\Conid{T}\;\Conid{A}\;\Conid{B}}: it will generally be the case that only certain such pairs are
observable. Specifically, we can define the subset \ensuremath{\Conid{R}\subseteq\Conid{A}\times\Conid{B}} of \emph{consistent pairs} according to \ensuremath{bx}, namely those pairs \ensuremath{(\Varid{a},\Varid{b})}
that may be returned by
\begin{hscode}\SaveRestoreHook
\column{B}{@{}>{\hspre}l<{\hspost}@{}}%
\column{E}{@{}>{\hspre}l<{\hspost}@{}}%
\>[B]{}\mathbf{do}\;\{\mskip1.5mu \Varid{a}\leftarrow \get{L};\Varid{b}\leftarrow \get{R};\Varid{return}\;(\Varid{a},\Varid{b})\mskip1.5mu\}{}\<[E]%
\ColumnHook
\end{hscode}\resethooks
We can see this subset \ensuremath{\Conid{R}} as the \emph{consistency
  relation} between \ensuremath{\Conid{A}} and \ensuremath{\Conid{B}} maintained by \ensuremath{bx}.  We sometimes
write \ensuremath{\Conid{A}\mathbin{\!\Join\!}\Conid{B}} for this relation, when the \bx{} in question is clear from
context.
\end{remark}

\begin{remark} \label{rem:wlog} 
Note that restricting attention to instances of \ensuremath{\Conid{StateT}} 
is not as great a loss of generality as might at first appear.  
Consider a well-behaved \bx{} of type
\ensuremath{\Conid{BX}\;\Conid{T}\;\Conid{A}\;\Conid{B}}, over some monad \ensuremath{\Conid{T}} not assumed to be an instance of \ensuremath{\Conid{StateT}}.
We say that a consistent pair \ensuremath{(\Varid{a},\Varid{b})\mathbin{::}\Conid{A}\mathbin{\!\Join\!}\Conid{B}} is \emph{stable} if, when setting the components in either
order, the later one does not disturb the earlier:
\begin{hscode}\SaveRestoreHook
\column{B}{@{}>{\hspre}l<{\hspost}@{}}%
\column{32}{@{}>{\hspre}l<{\hspost}@{}}%
\column{E}{@{}>{\hspre}l<{\hspost}@{}}%
\>[B]{}\mathbf{do}\;\{\mskip1.5mu \set{L}\;\Varid{a};\set{R}\;\Varid{b};\get{L}\mskip1.5mu\}{}\<[32]%
\>[32]{}\mathrel{=}\mathbf{do}\;\{\mskip1.5mu \set{L}\;\Varid{a};\set{R}\;\Varid{b};\Varid{return}\;\Varid{a}\mskip1.5mu\}{}\<[E]%
\\
\>[B]{}\mathbf{do}\;\{\mskip1.5mu \set{R}\;\Varid{b};\set{L}\;\Varid{a};\get{R}\mskip1.5mu\}{}\<[32]%
\>[32]{}\mathrel{=}\mathbf{do}\;\{\mskip1.5mu \set{R}\;\Varid{b};\set{L}\;\Varid{a};\Varid{return}\;\Varid{b}\mskip1.5mu\}{}\<[E]%
\ColumnHook
\end{hscode}\resethooks
We say that the \bx{} itself is stable if all its consistent pairs are
stable. Stability does not follow from the laws, 
but many \bx{} do satisfy this stronger condition. And given a
stable \bx{}, we can construct get and set operations for \ensuremath{\Conid{A}\mathbin{\!\Join\!}\Conid{B}}
pairs, satisfying the three laws \ensuremath{\mathrm{(GG)}}, \ensuremath{\mathrm{(GS)}}, \ensuremath{\mathrm{(SG)}}
of Definition~\ref{def:state-monad-transformer}.
\jgnote{Might be better phrased in terms of three \ensuremath{\Conid{MonadState}}
  instances?}  By Lemma~\ref{lem:StateT-data-refinement}, this gives a
data refinement from \ensuremath{\Conid{T}} to \ensuremath{\Conid{StateT}\;\Conid{S}\;\Conid{T}}, and so we lose nothing by
using \ensuremath{\Conid{StateT}\;\Conid{S}\;\Conid{T}} instead of~\ensuremath{\Conid{T}}.
Despite this, we do not impose stability as a requirement in the following,
because some interesting \bx{} are not stable.
\end{remark}

We have not found convincing examples of \ensuremath{\Conid{StateTBX}}
in which the two get functions have effects from \ensuremath{\Conid{T}}, rather than being
\ensuremath{\Conid{T}}-pure queries.
When the get functions are \ensuremath{\Conid{T}}-pure queries,
we obtain the get/get commutation laws 
   \ensuremath{\mathrm{(G_LG_L)}}, \ensuremath{\mathrm{(G_RG_R)}}, \ensuremath{\mathrm{(G_LG_R)}} for free~\cite{Gibbons&Hinze11:Just}, 
   motivating the following: 

\begin{definition}\label{def:proper}
We say that a well-behaved
\ensuremath{bx\mathbin{::}\Conid{StateTBX}\;\Conid{T}\;\Conid{S}\;\Conid{A}\;\Conid{B}} in the monad \ensuremath{\Conid{StateT}\;\Conid{S}\;\Conid{T}} 
is \emph{transparent} if \ensuremath{\get{L}}, \ensuremath{\get{R}} are \ensuremath{\Conid{T}}-pure queries, 
\ie\ there exist \ensuremath{\Varid{read}_{L}\mathbin{::}\Conid{S}\hsarrow{\rightarrow }{\mathpunct{.}}\Conid{A}} and \ensuremath{\Varid{read}_{R}\mathbin{::}\Conid{S}\hsarrow{\rightarrow }{\mathpunct{.}}\Conid{B}} such that
\ensuremath{bx\mathord{.}\get{L}\mathrel{=}\Varid{gets}\;\Varid{read}_{L}} and \ensuremath{bx\mathord{.}\get{R}\mathrel{=}\Varid{gets}\;\Varid{read}_{R}}.
\end{definition}
\begin{remark}\label{rem:proper-unique}
Under the mild condition 
(Moggi's \emph{monomorphism condition}~\cite{moggi}) 
on \ensuremath{\Conid{T}} that \ensuremath{\Varid{return}} be injective, 
\ensuremath{\Varid{read}_{L}} and \ensuremath{\Varid{read}_{R}} are \emph{uniquely} determined for a transparent \bx; so
informally, we refer to \ensuremath{bx\mathord{.}\Varid{read}_{L}} and \ensuremath{bx\mathord{.}\Varid{read}_{R}}, regarding them as
part of the signature of \ensuremath{bx}.
The monomorphism condition holds for the various monads we consider here
(provided we have non-empty types \ensuremath{\sigma} for \ensuremath{\Conid{State}}, \ensuremath{\Conid{Reader}}, \ensuremath{\Conid{Writer}}).
\end{remark}

Now, transparent \ensuremath{\Conid{StateTBX}}
compose (\Section~\ref{sec:composition}), while general \bx{} with effectful gets do not. 
So, in what follows, we confine our attention to transparent \bx{}. 

\subsection{Subsuming lenses}

Asymmetric lenses, as in \Definition~\ref{def:lens}, are subsumed by
\ensuremath{\Conid{StateTBX}}. To simulate \ensuremath{\Varid{l}\mathbin{::}\Conid{Lens}\;\Conid{A}\;\Conid{B}}, one uses a \ensuremath{\Conid{StateTBX}} on
state \ensuremath{\Conid{A}} and underlying monad \ensuremath{\Conid{Id}}:
\begin{hscode}\SaveRestoreHook
\column{B}{@{}>{\hspre}l<{\hspost}@{}}%
\column{3}{@{}>{\hspre}l<{\hspost}@{}}%
\column{12}{@{}>{\hspre}l<{\hspost}@{}}%
\column{E}{@{}>{\hspre}l<{\hspost}@{}}%
\>[B]{}\Conid{BX}\;\Varid{get}\;\Varid{set}\;\get{R}\;\set{R}\;\mathbf{where}{}\<[E]%
\\
\>[B]{}\hsindent{3}{}\<[3]%
\>[3]{}\get{R}{}\<[12]%
\>[12]{}\mathrel{=}\mathbf{do}\;\{\mskip1.5mu \Varid{a}\leftarrow \Varid{get};\Varid{return}\;(\Varid{l}\mathord{.}\Varid{view}\;\Varid{a})\mskip1.5mu\}{}\<[E]%
\\
\>[B]{}\hsindent{3}{}\<[3]%
\>[3]{}\set{R}\;\Varid{b'}{}\<[12]%
\>[12]{}\mathrel{=}\mathbf{do}\;\{\mskip1.5mu \Varid{a}\leftarrow \Varid{get};\Varid{set}\;(\Varid{l}\mathord{.}\Varid{update}\;\Varid{a}\;\Varid{b'})\mskip1.5mu\}{}\<[E]%
\ColumnHook
\end{hscode}\resethooks
Symmetric lenses, as in \Definition~\ref{def:symlens}, are subsumed by our
effectful \bx{} too. In a nutshell, to simulate \ensuremath{\Varid{sl}\mathbin{::}\Conid{SLens}\;\Conid{C}\;\Conid{A}\;\Conid{B}}
one uses \ensuremath{\Conid{StateTBX}\;\Conid{Id}\;\Conid{S}} where \ensuremath{\Conid{S}\subseteq\Conid{A}\times\Conid{B}\times\Conid{C}} is the set of
`consistent triples' \ensuremath{(\Varid{a},\Varid{b},\Varid{c})}, in the sense that \ensuremath{\Varid{sl}\mathord{.}\Varid{put}_{R}\;(\Varid{a},\Varid{c})\mathrel{=}(\Varid{b},\Varid{c})} and \ensuremath{\Varid{sl}\mathord{.}\Varid{put}_{L}\;(\Varid{b},\Varid{c})\mathrel{=}(\Varid{a},\Varid{c})}:
\begin{hscode}\SaveRestoreHook
\column{B}{@{}>{\hspre}l<{\hspost}@{}}%
\column{3}{@{}>{\hspre}l<{\hspost}@{}}%
\column{12}{@{}>{\hspre}l<{\hspost}@{}}%
\column{E}{@{}>{\hspre}l<{\hspost}@{}}%
\>[B]{}\Conid{BX}\;\get{L}\;\set{L}\;\get{L}\;\set{R}\;\mathbf{where}{}\<[E]%
\\
\>[B]{}\hsindent{3}{}\<[3]%
\>[3]{}\get{L}{}\<[12]%
\>[12]{}\mathrel{=}\mathbf{do}\;\{\mskip1.5mu (\Varid{a},\Varid{b},\Varid{c})\leftarrow \Varid{get};\Varid{return}\;\Varid{a}\mskip1.5mu\}{}\<[E]%
\\
\>[B]{}\hsindent{3}{}\<[3]%
\>[3]{}\get{R}{}\<[12]%
\>[12]{}\mathrel{=}\mathbf{do}\;\{\mskip1.5mu (\Varid{a},\Varid{b},\Varid{c})\leftarrow \Varid{get};\Varid{return}\;\Varid{b}\mskip1.5mu\}{}\<[E]%
\\
\>[B]{}\hsindent{3}{}\<[3]%
\>[3]{}\set{L}\;\Varid{a'}{}\<[12]%
\>[12]{}\mathrel{=}\mathbf{do}\;\{\mskip1.5mu (\Varid{a},\Varid{b},\Varid{c})\leftarrow \Varid{get};\mathbf{let}\;(\Varid{b'},\Varid{c'})\mathrel{=}\Varid{sl}\mathord{.}\Varid{put}_{R}\;(\Varid{a},\Varid{c});\Varid{set}\;(\Varid{a'},\Varid{b'},\Varid{c'})\mskip1.5mu\}{}\<[E]%
\\
\>[B]{}\hsindent{3}{}\<[3]%
\>[3]{}\set{R}\;\Varid{b'}{}\<[12]%
\>[12]{}\mathrel{=}\mathbf{do}\;\{\mskip1.5mu (\Varid{a},\Varid{b},\Varid{c})\leftarrow \Varid{get};\mathbf{let}\;(\Varid{a'},\Varid{c'})\mathrel{=}\Varid{sl}\mathord{.}\Varid{put}_{L}\;(\Varid{b},\Varid{c});\Varid{set}\;(\Varid{a'},\Varid{b'},\Varid{c'})\mskip1.5mu\}{}\<[E]%
\ColumnHook
\end{hscode}\resethooks

Asymmetric lenses generalise straightforwardly to accommodate 
effects in an underlying monad too. One can define
\begin{hscode}\SaveRestoreHook
\column{B}{@{}>{\hspre}l<{\hspost}@{}}%
\column{32}{@{}>{\hspre}l<{\hspost}@{}}%
\column{41}{@{}>{\hspre}l<{\hspost}@{}}%
\column{E}{@{}>{\hspre}l<{\hspost}@{}}%
\>[B]{}\mathbf{data}\;\Conid{MLens}\;\tau\;\alpha\;\beta\mathrel{=}\Conid{MLens}\;\{\mskip1.5mu {}\<[32]%
\>[32]{}\Varid{mview}{}\<[41]%
\>[41]{}\mathbin{::}\alpha\hsarrow{\rightarrow }{\mathpunct{.}}\beta,{}\<[E]%
\\
\>[32]{}\Varid{mupdate}{}\<[41]%
\>[41]{}\mathbin{::}\alpha\hsarrow{\rightarrow }{\mathpunct{.}}\beta\hsarrow{\rightarrow }{\mathpunct{.}}\tau\;\alpha\mskip1.5mu\}{}\<[E]%
\ColumnHook
\end{hscode}\resethooks
with corresponding notions of well-behaved and very-well-behaved
monadic lens.  (Divi\'anszky~\cite{reddit} and Pacheco et
al.~\cite{pacheco14pepm}, among others, have proposed similar
notions.)  However, it turns out not to be straightforward to
establish a corresponding notion of `monadic symmetric lens'
incorporating other effects.  In this paper, we take a different
approach to combining symmetry and effects; we defer further
discussion of the different approaches to \ensuremath{\Conid{MLens}}es and the
complications involved in extending symmetric lenses with effects to a
future paper.

\section{Composition}\label{sec:composition}

An obviously crucial question is whether well-behaved monadic \bx{}
compose. They do, but the issue is more delicate than might at first
be expected. Of course, we cannot expect arbitrary \ensuremath{\Conid{BX}} to compose,
because arbitrary monads do not.  Here, we present one successful
approach for \ensuremath{\Conid{StateTBX}}, based on lifting the component operations on
different state types (but the same underlying monad of effects) into a
common compound state.

\begin{definition}[\ensuremath{\Conid{StateT}} embeddings from lenses]\label{def:theta}
  Given a lens from \ensuremath{\Conid{A}} to \ensuremath{\Conid{B}}, we can embed a \ensuremath{\Conid{StateT}}
  computation on the narrower type \ensuremath{\Conid{B}} into another computation
  on the wider type \ensuremath{\Conid{A}}, wrt the same underlying
  monad~\ensuremath{\Conid{T}}:
\begin{hscode}\SaveRestoreHook
\column{B}{@{}>{\hspre}l<{\hspost}@{}}%
\column{20}{@{}>{\hspre}l<{\hspost}@{}}%
\column{E}{@{}>{\hspre}l<{\hspost}@{}}%
\>[B]{}\vartheta \mathbin{::}\Conid{Monad}\;\tau\Rightarrow \Conid{Lens}\;\alpha\;\beta\hsarrow{\rightarrow }{\mathpunct{.}}\Conid{StateT}\;\beta\;\tau\;\gamma\hsarrow{\rightarrow }{\mathpunct{.}}\Conid{StateT}\;\alpha\;\tau\;\gamma{}\<[E]%
\\
\>[B]{}\vartheta \;\Varid{l}\;\Varid{m}\mathrel{=}\mathbf{do}\;{}\<[20]%
\>[20]{}\Varid{a}\leftarrow \Varid{get};\mathbf{let}\;\Varid{b}\mathrel{=}\Varid{l}\mathord{.}\Varid{view}\;\Varid{a};{}\<[E]%
\\
\>[20]{}(\Varid{c},\Varid{b'})\leftarrow \Varid{lift}\;(\Varid{m}\;\Varid{b});{}\<[E]%
\\
\>[20]{}\mathbf{let}\;\Varid{a'}\mathrel{=}\Varid{l}\mathord{.}\Varid{update}\;\Varid{a}\;\Varid{b'};{}\<[E]%
\\
\>[20]{}\Varid{set}\;\Varid{a'};\Varid{return}\;\Varid{c}{}\<[E]%
\ColumnHook
\end{hscode}\resethooks
\endswithdisplay
\end{definition}
Essentially, \ensuremath{\vartheta \;\Varid{l}\;\Varid{m}} uses \ensuremath{\Varid{l}} to get a view \ensuremath{\Varid{b}} of the source
\ensuremath{\Varid{a}}, runs \ensuremath{\Varid{m}} to get a return value \ensuremath{\Varid{c}} and updated view \ensuremath{\Varid{b'}}, uses \ensuremath{\Varid{l}}
to update the view yielding an updated source \ensuremath{\Varid{a'}}, and returns \ensuremath{\Varid{c}}.

\begin{lemma}\label{lem:vwb-monad-morphism}
  If \ensuremath{\Varid{l}\mathbin{::}\Conid{Lens}\;\Conid{A}\;\Conid{B}} is very well-behaved,
  then \ensuremath{\vartheta \;\Varid{l}} is a monad
  morphism.  
\end{lemma}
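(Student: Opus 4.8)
The plan is to verify directly the two defining conditions of a monad morphism: that \ensuremath{\vartheta\;\Varid{l}} preserves \ensuremath{\Varid{return}} and distributes over \ensuremath{(\bind )}. Throughout I would work in \ensuremath{\mathbf{do}}-notation, exploiting three facts established earlier: the underlying \ensuremath{\Varid{get}}/\ensuremath{\Varid{set}} operations of the relevant \ensuremath{\Conid{StateT}} monads satisfy all four state laws (GG), (SG), (GS), (SS); \ensuremath{\Varid{lift}} is itself a monad morphism (so it distributes over \ensuremath{(\bind )} and preserves \ensuremath{\Varid{return}}); and \ensuremath{\Varid{lift}} commutes with \ensuremath{\Varid{get}} and \ensuremath{\Varid{set}} (Lemma~\ref{lem:liftings-commute}).

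For the \ensuremath{\Varid{return}} case I would unfold \ensuremath{\vartheta\;\Varid{l}\;(\Varid{return}\;\Varid{c})}. Since \ensuremath{\Varid{return}\;\Varid{c}} in \ensuremath{\Conid{StateT}\;\Conid{B}\;\Conid{T}} leaves the state untouched, the generator \ensuremath{(\Varid{c},\Varid{b'})\leftarrow \Varid{lift}\;(\ldots)} binds \ensuremath{\Varid{b'}} to the view \ensuremath{\Varid{b}\mathrel{=}\Varid{l}\mathord{.}\Varid{view}\;\Varid{a}}, so the updated source is \ensuremath{\Varid{l}\mathord{.}\Varid{update}\;\Varid{a}\;(\Varid{l}\mathord{.}\Varid{view}\;\Varid{a})}, which collapses to \ensuremath{\Varid{a}} by the lens law (VU). The block then reduces to \ensuremath{\mathbf{do}\;\{\mskip1.5mu \Varid{a}\leftarrow \Varid{get};\Varid{set}\;\Varid{a};\Varid{return}\;\Varid{c}\mskip1.5mu\}}, and (GS) discharges the \ensuremath{\Varid{get}}/\ensuremath{\Varid{set}} pair, leaving \ensuremath{\Varid{return}\;\Varid{c}}. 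This case needs only well-behavedness.

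The substance is the \ensuremath{(\bind )} case: showing \ensuremath{\vartheta\;\Varid{l}\;(\Varid{m}\bind \Varid{k})\mathrel{=}\vartheta\;\Varid{l}\;\Varid{m}\bind (\vartheta\;\Varid{l}\hsdot{\cdot }{.}\Varid{k})}. On the left I would first push \ensuremath{\Varid{lift}} through the \ensuremath{(\bind )} inside \ensuremath{\Varid{m}\bind \Varid{k}} (using that \ensuremath{\Varid{lift}} is a monad morphism), obtaining a single block that reads the source once, runs \ensuremath{\Varid{lift}\;(\Varid{m}\;\Varid{b})} and then \ensuremath{\Varid{lift}\;(\Varid{k}\;\Varid{d}\;\Varid{b'})}, and performs one final \ensuremath{\Varid{update}}/\ensuremath{\Varid{set}}. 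On the right I would unfold the two \ensuremath{\vartheta} invocations; the second begins with a \ensuremath{\Varid{get}} immediately following the \ensuremath{\Varid{set}} of the first, so (SG) lets that inner \ensuremath{\Varid{get}} observe \ensuremath{\Varid{l}\mathord{.}\Varid{update}\;\Varid{a}\;\Varid{b'}}, and then (UV) simplifies the inner view to \ensuremath{\Varid{b'}}. Hence both sides run \ensuremath{\Varid{lift}\;(\Varid{k}\;\Varid{d}\;\Varid{b'})} on the same argument.

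The two sides then differ only in their \ensuremath{\Varid{set}} bookkeeping: the right performs an intermediate \ensuremath{\Varid{set}\;(\Varid{l}\mathord{.}\Varid{update}\;\Varid{a}\;\Varid{b'})} and a final \ensuremath{\Varid{set}\;(\Varid{l}\mathord{.}\Varid{update}\;(\Varid{l}\mathord{.}\Varid{update}\;\Varid{a}\;\Varid{b'})\;\Varid{b}_{1})}, whereas the left performs the single \ensuremath{\Varid{set}\;(\Varid{l}\mathord{.}\Varid{update}\;\Varid{a}\;\Varid{b}_{1})}. Here I would use Lemma~\ref{lem:liftings-commute} to slide the intervening \ensuremath{\Varid{lift}\;(\Varid{k}\;\Varid{d}\;\Varid{b'})} before the intermediate \ensuremath{\Varid{set}}, making the two \ensuremath{\Varid{set}}s adjacent, and then (SS) to discard the first; finally the very-well-behavedness law (UU) rewrites the nested update into \ensuremath{\Varid{l}\mathord{.}\Varid{update}\;\Varid{a}\;\Varid{b}_{1}}, matching the left. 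The main obstacle is exactly this bookkeeping step: it is where very-well-behavedness is indispensable (without (UU) the doubly-updated source need not agree with the singly-updated one), and it demands care to justify commuting \ensuremath{\Varid{lift}} past \ensuremath{\Varid{set}} and collapsing the two \ensuremath{\Varid{set}}s so that (UU) applies at precisely the right place.
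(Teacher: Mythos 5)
Your proposal is correct and takes essentially the same approach as the paper's proof: the \ensuremath{\Varid{return}} case via (VU) and (GS), and the bind case via \ensuremath{\Varid{lift}} being a monad morphism, Lemma~\ref{lem:liftings-commute}, (SS), (SG)/(UV), with (UU) doing the indispensable collapsing of the nested update. The only (cosmetic) difference is direction: you simplify \ensuremath{\vartheta \;\Varid{l}\;\Varid{m}\bind (\lambda \hslambda \Varid{a}\hsarrow{\rightarrow }{\mathpunct{.}}\vartheta \;\Varid{l}\;(\Varid{k}\;\Varid{a}))} down to \ensuremath{\vartheta \;\Varid{l}\;(\Varid{m}\bind \Varid{k})}, whereas the paper runs the same chain of equalities in the opposite direction.
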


\begin{definition}\label{def:left-right-monad-morphism}
By Lemma~\ref{lem:vwb-monad-morphism}, 
and since \ensuremath{\Varid{fstLens}} and \ensuremath{\Varid{sndLens}} are very well-behaved,
we have
the following monad morphisms lifting stateful computations to a product state space:
\begin{hscode}\SaveRestoreHook
\column{B}{@{}>{\hspre}l<{\hspost}@{}}%
\column{8}{@{}>{\hspre}c<{\hspost}@{}}%
\column{8E}{@{}l@{}}%
\column{12}{@{}>{\hspre}l<{\hspost}@{}}%
\column{E}{@{}>{\hspre}l<{\hspost}@{}}%
\>[B]{}\Varid{left}{}\<[8]%
\>[8]{}\mathbin{::}{}\<[8E]%
\>[12]{}\Conid{Monad}\;\tau\Rightarrow \Conid{StateT}\;\sigma_1\;\tau\;\alpha\hsarrow{\rightarrow }{\mathpunct{.}}\Conid{StateT}\;(\sigma_1,\sigma_2)\;\tau\;\alpha{}\<[E]%
\\
\>[B]{}\Varid{left}{}\<[8]%
\>[8]{}\mathrel{=}{}\<[8E]%
\>[12]{}\vartheta \;\Varid{fstLens}{}\<[E]%
\\[\blanklineskip]%
\>[B]{}\Varid{right}{}\<[8]%
\>[8]{}\mathbin{::}{}\<[8E]%
\>[12]{}\Conid{Monad}\;\tau\Rightarrow \Conid{StateT}\;\sigma_2\;\tau\;\alpha\hsarrow{\rightarrow }{\mathpunct{.}}\Conid{StateT}\;(\sigma_1,\sigma_2)\;\tau\;\alpha{}\<[E]%
\\
\>[B]{}\Varid{right}{}\<[8]%
\>[8]{}\mathrel{=}{}\<[8E]%
\>[12]{}\vartheta \;\Varid{sndLens}{}\<[E]%
\ColumnHook
\end{hscode}\resethooks
\endswithdisplay
\end{definition}

\begin{definition*}
For
\ensuremath{bx_{1}\mathbin{::}\Conid{StateTBX}\;\Conid{T}\;\Conid{S}_{1}\;\Conid{A}\;\Conid{B}}, 
\ensuremath{bx_{2}\mathbin{::}\Conid{StateTBX}\;\Conid{T}\;\Conid{S}_{2}\;\Conid{B}\;\Conid{C}}, 
define the join \ensuremath{\Conid{S}_{1} \mathrel{{}_{\!bx_{1}}{\!\!\Join}_{bx_{2}\!}} \Conid{S}_{2}} informally as the
subset of \ensuremath{\Conid{S}_{1}\times\Conid{S}_{2}} consisting of the
pairs \ensuremath{(\Varid{s}_{1},\Varid{s}_{2})} in which observing the middle
component of type~\ensuremath{\Conid{B}} in state \ensuremath{\Varid{s}_{1}} yields the same 
result as in state 
\ensuremath{\Varid{s}_{2}}. We might express this set-theoretically as follows:
\begin{hscode}\SaveRestoreHook
\column{B}{@{}>{\hspre}l<{\hspost}@{}}%
\column{25}{@{}>{\hspre}c<{\hspost}@{}}%
\column{25E}{@{}l@{}}%
\column{28}{@{}>{\hspre}l<{\hspost}@{}}%
\column{37}{@{}>{\hspre}c<{\hspost}@{}}%
\column{37E}{@{}l@{}}%
\column{40}{@{}>{\hspre}l<{\hspost}@{}}%
\column{70}{@{}>{\hspre}l<{\hspost}@{}}%
\column{E}{@{}>{\hspre}l<{\hspost}@{}}%
\>[B]{}\Conid{S}_{1} \mathrel{{}_{\!bx_1}{\!\!\Join}_{bx_2\!}} \Conid{S}_{2}\mathrel{=}{}\<[25]%
\>[25]{}\{\mskip1.5mu {}\<[25E]%
\>[28]{}(\Varid{s}_{1},\Varid{s}_{2}){}\<[37]%
\>[37]{}\mid{}\<[37E]%
\>[40]{}\Varid{eval}\;(bx_{1}\mathord{.}\get{R})\;\Varid{s}_{1}\mathrel{=}{}\<[70]%
\>[70]{}\Varid{eval}\;(bx_{2}\mathord{.}\get{L})\;\Varid{s}_{2}\mskip1.5mu\}{}\<[E]%
\ColumnHook
\end{hscode}\resethooks
More generally, one could state a categorical definition in terms of pullbacks.
In Haskell, we can only work with the coarser type of raw pairs \ensuremath{(\Conid{S}_{1},\Conid{S}_{2})}.
Note that the equation in the set comprehension compares two computations of type \ensuremath{\Conid{T}\;\Conid{B}};
but if the \bx{} are transparent, and \ensuremath{\Varid{return}} injective as per \Remark~\ref{rem:proper-unique}, 
then the definition simplifies to: 
\begin{hscode}\SaveRestoreHook
\column{B}{@{}>{\hspre}l<{\hspost}@{}}%
\column{25}{@{}>{\hspre}c<{\hspost}@{}}%
\column{25E}{@{}l@{}}%
\column{28}{@{}>{\hspre}l<{\hspost}@{}}%
\column{37}{@{}>{\hspre}c<{\hspost}@{}}%
\column{37E}{@{}l@{}}%
\column{40}{@{}>{\hspre}l<{\hspost}@{}}%
\column{56}{@{}>{\hspre}c<{\hspost}@{}}%
\column{56E}{@{}l@{}}%
\column{60}{@{}>{\hspre}l<{\hspost}@{}}%
\column{E}{@{}>{\hspre}l<{\hspost}@{}}%
\>[B]{}\Conid{S}_{1} \mathrel{{}_{\!bx_1}{\!\!\Join}_{bx_2\!}} \Conid{S}_{2}\mathrel{=}{}\<[25]%
\>[25]{}\{\mskip1.5mu {}\<[25E]%
\>[28]{}(\Varid{s}_{1},\Varid{s}_{2}){}\<[37]%
\>[37]{}\mid{}\<[37E]%
\>[40]{}bx_{1}\mathord{.}\Varid{read}_{R}\;\Varid{s}_{1}{}\<[56]%
\>[56]{}\mathrel{=}{}\<[56E]%
\>[60]{}bx_{2}\mathord{.}\Varid{read}_{L}\;\Varid{s}_{2}\mskip1.5mu\}{}\<[E]%
\ColumnHook
\end{hscode}\resethooks
The notation \ensuremath{\Conid{S}_{1} \mathrel{{}_{\!bx_{1}}{\!\!\Join}_{bx_{2}\!}} \Conid{S}_{2}} explicitly mentions \ensuremath{bx_{1}} and \ensuremath{bx_{2}}, 
but we usually just write \ensuremath{\Conid{S}_{1}\mathbin{\!\Join\!}\Conid{S}_{2}}.
No confusion should arise from using the same symbol to denote the consistent pairs of a single \bx{}, as we did in Remark~\ref{rem:consistent}.
\end{definition*}

\begin{definition}
Using \ensuremath{\Varid{left}} and \ensuremath{\Varid{right}}, we can define composition by:
\begin{hscode}\SaveRestoreHook
\column{B}{@{}>{\hspre}l<{\hspost}@{}}%
\column{8}{@{}>{\hspre}l<{\hspost}@{}}%
\column{12}{@{}>{\hspre}l<{\hspost}@{}}%
\column{16}{@{}>{\hspre}l<{\hspost}@{}}%
\column{24}{@{}>{\hspre}l<{\hspost}@{}}%
\column{E}{@{}>{\hspre}l<{\hspost}@{}}%
\>[B]{}( \mathbin{\fatsemi} )\mathbin{::}{}\<[12]%
\>[12]{}\Conid{Monad}\;\tau\Rightarrow {}\<[E]%
\\
\>[12]{}\Conid{StateTBX}\;\sigma_1\;\tau\;\alpha\;\beta\hsarrow{\rightarrow }{\mathpunct{.}}\Conid{StateTBX}\;\sigma_2\;\tau\;\beta\;\gamma\hsarrow{\rightarrow }{\mathpunct{.}}\Conid{StateTBX}\;(\sigma_1\mathbin{\!\Join\!}\sigma_2)\;\tau\;\alpha\;\gamma{}\<[E]%
\\
\>[B]{}bx_{1} \mathbin{\fatsemi} bx_{2}\mathrel{=}\Conid{BX}\;\get{L}\;\set{L}\;\get{R}\;\set{R}\;\mathbf{where}{}\<[E]%
\\
\>[B]{}\hsindent{8}{}\<[8]%
\>[8]{}\get{L}{}\<[16]%
\>[16]{}\mathrel{=}\mathbf{do}\;\{\mskip1.5mu {}\<[24]%
\>[24]{}\Varid{left}\;(bx_{1}\mathord{.}\get{L})\mskip1.5mu\}{}\<[E]%
\\
\>[B]{}\hsindent{8}{}\<[8]%
\>[8]{}\get{R}{}\<[16]%
\>[16]{}\mathrel{=}\mathbf{do}\;\{\mskip1.5mu {}\<[24]%
\>[24]{}\Varid{right}\;(bx_{2}\mathord{.}\get{R})\mskip1.5mu\}{}\<[E]%
\\
\>[B]{}\hsindent{8}{}\<[8]%
\>[8]{}\set{L}\;\Varid{a}{}\<[16]%
\>[16]{}\mathrel{=}\mathbf{do}\;\{\mskip1.5mu {}\<[24]%
\>[24]{}\Varid{left}\;(bx_{1}\mathord{.}\set{L}\;\Varid{a});\Varid{b}\leftarrow \Varid{left}\;(bx_{1}\mathord{.}\get{R});\Varid{right}\;(bx_{2}\mathord{.}\set{L}\;\Varid{b})\mskip1.5mu\}{}\<[E]%
\\
\>[B]{}\hsindent{8}{}\<[8]%
\>[8]{}\set{R}\;\Varid{c}{}\<[16]%
\>[16]{}\mathrel{=}\mathbf{do}\;\{\mskip1.5mu {}\<[24]%
\>[24]{}\Varid{right}\;(bx_{2}\mathord{.}\set{R}\;\Varid{c});\Varid{b}\leftarrow \Varid{right}\;(bx_{2}\mathord{.}\get{L});\Varid{left}\;(bx_{1}\mathord{.}\set{R}\;\Varid{b})\mskip1.5mu\}{}\<[E]%
\ColumnHook
\end{hscode}\resethooks
Essentially,  to set the left-hand side of
the composed \bx, we first set the left-hand side of the left component
\ensuremath{bx_{1}}, then get \ensuremath{bx_{1}}'s \ensuremath{\Conid{R}}-value \ensuremath{\Varid{b}}, and set the left-hand side of \ensuremath{bx_{2}}
to this value; and similarly on the right.
Note that the composition maintains the invariant that the compound state
is in the subset \ensuremath{\sigma_1\mathbin{\!\Join\!}\sigma_2} of \ensuremath{\sigma_1\times\sigma_2}.
\end{definition}

\begin{theorem}[transparent composition]\label{thm:composition-wb}
    Given transparent (and hence well-behaved)
    \ensuremath{bx_{1}\mathbin{::}\Conid{StateTBX}\;\Conid{S}_{1}\;\Conid{T}\;\Conid{A}\;\Conid{B}} and \ensuremath{bx_{2}\mathbin{::}\Conid{StateTBX}\;\Conid{S}_{2}\;\Conid{T}\;\Conid{B}\;\Conid{C}}, 
    their composition \ensuremath{bx_{1} \mathbin{\fatsemi} bx_{2}\mathbin{::}\Conid{StateTBX}\;(\Conid{S}_{1}\mathbin{\!\Join\!}\Conid{S}_{2})\;\Conid{T}\;\Conid{A}\;\Conid{C}} 
    is also transparent.
\end{theorem}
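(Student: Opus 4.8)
The plan is to exhibit explicit read-back functions for the composite and to reduce everything to a single equational fact about \ensuremath{\vartheta }. Since transparency is a property of a \emph{well-behaved} \bx{}, I first take for granted that \ensuremath{bx_{1} \mathbin{\fatsemi} bx_{2}} is well-behaved (its seven laws are checked separately, from the state laws together with the well-behavedness of \ensuremath{bx_{1}} and \ensuremath{bx_{2}}); it then remains to produce functions \ensuremath{\Varid{read}_{L}\mathbin{::}(\Conid{S}_{1}\mathbin{\!\Join\!}\Conid{S}_{2})\hsarrow{\rightarrow }{\mathpunct{.}}\Conid{A}} and \ensuremath{\Varid{read}_{R}\mathbin{::}(\Conid{S}_{1}\mathbin{\!\Join\!}\Conid{S}_{2})\hsarrow{\rightarrow }{\mathpunct{.}}\Conid{C}} with \ensuremath{(bx_{1}\mathbin{\fatsemi}bx_{2})\mathord{.}\get{L} = \Varid{gets}\;\Varid{read}_{L}} and \ensuremath{(bx_{1}\mathbin{\fatsemi}bx_{2})\mathord{.}\get{R} = \Varid{gets}\;\Varid{read}_{R}}. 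The natural guesses are \ensuremath{\Varid{read}_{L} = (bx_{1}\mathord{.}\Varid{read}_{L})\cdot \Varid{fst}} and \ensuremath{\Varid{read}_{R} = (bx_{2}\mathord{.}\Varid{read}_{R})\cdot \Varid{snd}}, reading the relevant component through the appropriate projection.

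The crux is a lemma relating \ensuremath{\vartheta } to \ensuremath{\Conid{T}}-pure queries: for any well-behaved lens \ensuremath{\Varid{l}} and any \ensuremath{\Varid{f}}, we have \ensuremath{\vartheta \;\Varid{l}\;(\Varid{gets}\;\Varid{f}) = \Varid{gets}\;(\Varid{f}\cdot \Varid{l}\mathord{.}\Varid{view})}. To prove it I would unfold \ensuremath{\vartheta } from Definition~\ref{def:theta} and observe, by unfolding \ensuremath{\Varid{gets}}, that \ensuremath{(\Varid{gets}\;\Varid{f})\;\Varid{b} = \Varid{return}\;(\Varid{f}\;\Varid{b},\Varid{b})} in \ensuremath{\Conid{T}}; since \ensuremath{\Varid{lift}} is a monad morphism and hence preserves \ensuremath{\Varid{return}}, the bound pair \ensuremath{(\Varid{c},\Varid{b'})} is simply \ensuremath{(\Varid{f}\;\Varid{b},\Varid{b})} with no \ensuremath{\Conid{T}}-effect. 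The update then becomes \ensuremath{\Varid{l}\mathord{.}\Varid{update}\;\Varid{a}\;(\Varid{l}\mathord{.}\Varid{view}\;\Varid{a})}, which collapses to \ensuremath{\Varid{a}} by the round-tripping law \ensuremath{\mathrm{(VU)}}, leaving \ensuremath{\mathbf{do}\;\{\mskip1.5mu \Varid{a}\leftarrow \Varid{get};\Varid{set}\;\Varid{a};\Varid{return}\;(\Varid{f}\;(\Varid{l}\mathord{.}\Varid{view}\;\Varid{a}))\mskip1.5mu\}}.

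The main obstacle is this last step: discarding the residual \ensuremath{\Varid{set}\;\Varid{a}} even though \ensuremath{\Varid{a}} is still read by the final \ensuremath{\Varid{return}}, so a naive appeal to \ensuremath{\mathrm{(GS)}} does not apply. I would handle it with a small maneuver: use \ensuremath{\mathrm{(GG)}} to duplicate the initial \ensuremath{\Varid{get}} into \ensuremath{\Varid{a}\leftarrow \Varid{get};\Varid{a'}\leftarrow \Varid{get}} (so that the \ensuremath{\Varid{set}} consumes the fresh copy \ensuremath{\Varid{a'}} while the \ensuremath{\Varid{return}} retains the original \ensuremath{\Varid{a}}), obtaining \ensuremath{\mathbf{do}\;\{\mskip1.5mu \Varid{a}\leftarrow \Varid{get};\Varid{a'}\leftarrow \Varid{get};\Varid{set}\;\Varid{a'};\Varid{return}\;(\Varid{f}\;(\Varid{l}\mathord{.}\Varid{view}\;\Varid{a}))\mskip1.5mu\}}; the inner \ensuremath{\mathbf{do}\;\{\mskip1.5mu \Varid{a'}\leftarrow \Varid{get};\Varid{set}\;\Varid{a'}\mskip1.5mu\}} is \ensuremath{\Varid{return}\;()} by \ensuremath{\mathrm{(GS)}}, and the trailing \ensuremath{\Varid{return}} is independent of \ensuremath{\Varid{a'}}, so the whole expression reduces to \ensuremath{\Varid{gets}\;(\Varid{f}\cdot \Varid{l}\mathord{.}\Varid{view})}. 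This is the one genuinely fiddly piece; everything else is unfolding and monad-law bookkeeping.

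Finally I would instantiate the lemma. Since \ensuremath{\Varid{left} = \vartheta \;\Varid{fstLens}} and \ensuremath{\Varid{right} = \vartheta \;\Varid{sndLens}} (Definition~\ref{def:left-right-monad-morphism}) with \ensuremath{\Varid{fstLens}\mathord{.}\Varid{view} = \Varid{fst}} and \ensuremath{\Varid{sndLens}\mathord{.}\Varid{view} = \Varid{snd}}, and both lenses are very well-behaved and so satisfy \ensuremath{\mathrm{(VU)}}, transparency of \ensuremath{bx_{1}} and \ensuremath{bx_{2}} gives \ensuremath{bx_{1}\mathord{.}\get{L} = \Varid{gets}\;(bx_{1}\mathord{.}\Varid{read}_{L})} and \ensuremath{bx_{2}\mathord{.}\get{R} = \Varid{gets}\;(bx_{2}\mathord{.}\Varid{read}_{R})}. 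Plugging these in yields \ensuremath{(bx_{1}\mathbin{\fatsemi}bx_{2})\mathord{.}\get{L} = \Varid{left}\;(bx_{1}\mathord{.}\get{L}) = \Varid{gets}\;((bx_{1}\mathord{.}\Varid{read}_{L})\cdot \Varid{fst})} and, dually, \ensuremath{(bx_{1}\mathbin{\fatsemi}bx_{2})\mathord{.}\get{R} = \Varid{right}\;(bx_{2}\mathord{.}\get{R}) = \Varid{gets}\;((bx_{2}\mathord{.}\Varid{read}_{R})\cdot \Varid{snd})}, exactly the claimed read-back functions (restricted to the join \ensuremath{\Conid{S}_{1}\mathbin{\!\Join\!}\Conid{S}_{2}}). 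The state laws invoked hold in \ensuremath{\Conid{StateT}\;(\Conid{S}_{1}\mathbin{\!\Join\!}\Conid{S}_{2})\;\Conid{T}} exactly as in the full product, and the get operations never leave the join since they do not alter the state. Both composite gets are therefore \ensuremath{\Conid{T}}-pure queries, so \ensuremath{bx_{1}\mathbin{\fatsemi}bx_{2}} is transparent.
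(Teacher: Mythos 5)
Your second and third paragraphs are sound: the lemma \ensuremath{\vartheta \;\Varid{l}\;(\Varid{gets}\;\Varid{f})\mathrel{=}\Varid{gets}\;(\Varid{f}\cdot \Varid{l}\mathord{.}\Varid{view})} does hold for any lens satisfying \ensuremath{\mathrm{(VU)}}, your proof of it is correct (the \ensuremath{\mathrm{(GG)}}-duplication manoeuvre legitimately discharges the residual \ensuremath{\Varid{set}}, and is exactly the kind of step the paper elides under an appeal to \ensuremath{\mathrm{(GS)}}), and instantiating it at \ensuremath{\Varid{fstLens}} and \ensuremath{\Varid{sndLens}} gives \ensuremath{(bx_{1} \mathbin{\fatsemi} bx_{2})\mathord{.}\get{L}\mathrel{=}\Varid{gets}\;(bx_{1}\mathord{.}\Varid{read}_{L}\cdot \Varid{fst})} and dually, which is a slightly more general packaging of what the paper computes concretely via Lemma~\ref{lem:simplify-left-right}.

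The genuine gap is your first paragraph. By Definition~\ref{def:proper}, transparency is a property of a \emph{well-behaved} \bx{}, so the theorem obliges you to prove that \ensuremath{bx_{1} \mathbin{\fatsemi} bx_{2}} is well-behaved; this is not ``checked separately'' anywhere in the paper --- it is established precisely inside the proof of this theorem, and it constitutes the bulk of that proof. Nor does it follow, as you assert, merely ``from the state laws together with the well-behavedness of \ensuremath{bx_{1}} and \ensuremath{bx_{2}}''; three further ingredients are needed, all of which your proposal omits. (i)~Well-definedness: you must show that \ensuremath{\set{L}} and \ensuremath{\set{R}} keep the state inside \ensuremath{\Conid{S}_{1}\mathbin{\!\Join\!}\Conid{S}_{2}} (your remark that the \emph{gets} do not alter the state does not address this); the paper proves \ensuremath{\mathbf{do}\;\{\mskip1.5mu \set{L}\;\Varid{a'};\Varid{left}\;(bx_{1}\mathord{.}\get{R})\mskip1.5mu\}\mathrel{=}\mathbf{do}\;\{\mskip1.5mu \set{L}\;\Varid{a'};\Varid{right}\;(bx_{2}\mathord{.}\get{L})\mskip1.5mu\}} by commuting \ensuremath{\Varid{left}\;(bx_{1}\mathord{.}\get{R})} past \ensuremath{\Varid{right}\;(bx_{2}\mathord{.}\set{L}\;\Varid{b})}, which is licensed only because \ensuremath{bx_{1}\mathord{.}\get{R}} is a \ensuremath{\Conid{T}}-pure query (Lemma~\ref{lem:left-right-commute}), i.e.\ it needs \emph{transparency}, not just well-behavedness, of the components. (ii)~The law \ensuremath{\mathrm{(G_LS_L)}} for the composite needs the consistency invariant itself: on states in the join, \ensuremath{\Varid{left}\;(bx_{1}\mathord{.}\get{R})\mathrel{=}\Varid{right}\;(bx_{2}\mathord{.}\get{L})}; this is a fact about the restricted state space, not a consequence of the component laws alone. (iii)~The law \ensuremath{\mathrm{(S_LG_L)}} rests on the same commutation lemma. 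So your proposal proves only that the composite's gets are pure reads and assumes the rest; the hard half of the theorem --- and the very reason the paper restricts attention to transparent \bx{} (cf.\ Remark~\ref{rem:effectful-gets}) --- is missing.
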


\begin{remark*}
Unpacking and simplifying the definitions, we have:
\begin{hscode}\SaveRestoreHook
\column{B}{@{}>{\hspre}l<{\hspost}@{}}%
\column{4}{@{}>{\hspre}l<{\hspost}@{}}%
\column{13}{@{}>{\hspre}l<{\hspost}@{}}%
\column{21}{@{}>{\hspre}l<{\hspost}@{}}%
\column{E}{@{}>{\hspre}l<{\hspost}@{}}%
\>[B]{}bx_{1} \mathbin{\fatsemi} bx_{2}\mathrel{=}\Conid{BX}\;\get{L}\;\set{L}\;\get{R}\;\set{R}\;\mathbf{where}{}\<[E]%
\\
\>[B]{}\hsindent{4}{}\<[4]%
\>[4]{}\get{L}{}\<[13]%
\>[13]{}\mathrel{=}\mathbf{do}\;\{\mskip1.5mu {}\<[21]%
\>[21]{}(\Varid{s}_{1},\anonymous )\leftarrow \Varid{get};\Varid{return}\;(bx_{1}\mathord{.}\Varid{read}_{L}\;\Varid{s}_{1})\mskip1.5mu\}{}\<[E]%
\\
\>[B]{}\hsindent{4}{}\<[4]%
\>[4]{}\get{R}{}\<[13]%
\>[13]{}\mathrel{=}\mathbf{do}\;\{\mskip1.5mu {}\<[21]%
\>[21]{}(\anonymous ,\Varid{s}_{2})\leftarrow \Varid{get};\Varid{return}\;(bx_{2}\mathord{.}\Varid{read}_{R}\;\Varid{s}_{2})\mskip1.5mu\}{}\<[E]%
\\
\>[B]{}\hsindent{4}{}\<[4]%
\>[4]{}\set{L}\;\Varid{a'}{}\<[13]%
\>[13]{}\mathrel{=}\mathbf{do}\;\{\mskip1.5mu {}\<[21]%
\>[21]{}(\Varid{s}_{1},\Varid{s}_{2})\leftarrow \Varid{get};{}\<[E]%
\\
\>[21]{}((),\Varid{s}_{1}')\leftarrow \Varid{lift}\;(bx_{1}\mathord{.}\set{L}\;\Varid{a'}\;\Varid{s}_{1});{}\<[E]%
\\
\>[21]{}\mathbf{let}\;\Varid{b}\mathrel{=}bx_{1}\mathord{.}\Varid{read}_{R}\;\Varid{s}_{1}';{}\<[E]%
\\
\>[21]{}((),\Varid{s}_{2}')\leftarrow \Varid{lift}\;(bx_{2}\mathord{.}\set{L}\;\Varid{b}\;\Varid{s}_{2});{}\<[E]%
\\
\>[21]{}\Varid{set}\;(\Varid{s}_{1}',\Varid{s}_{2}')\mskip1.5mu\}{}\<[E]%
\\
\>[B]{}\hsindent{4}{}\<[4]%
\>[4]{}\set{R}\;\Varid{c'}{}\<[13]%
\>[13]{}\mathrel{=}\mathbf{do}\;\{\mskip1.5mu {}\<[21]%
\>[21]{}(\Varid{s}_{1},\Varid{s}_{2})\leftarrow \Varid{get};{}\<[E]%
\\
\>[21]{}((),\Varid{s}_{2}')\leftarrow \Varid{lift}\;(bx_{2}\mathord{.}\set{R}\;\Varid{c'}\;\Varid{s}_{2});{}\<[E]%
\\
\>[21]{}\mathbf{let}\;\Varid{b}\mathrel{=}bx_{2}\mathord{.}\Varid{read}_{L}\;\Varid{s}_{2}';{}\<[E]%
\\
\>[21]{}((),\Varid{s}_{1}')\leftarrow \Varid{lift}\;(bx_{1}\mathord{.}\set{R}\;\Varid{b}\;\Varid{s}_{1});{}\<[E]%
\\
\>[21]{}\Varid{set}\;(\Varid{s}_{1}',\Varid{s}_{2}')\mskip1.5mu\}{}\<[E]%
\ColumnHook
\end{hscode}\resethooks
\endswithdisplay
\end{remark*}

\begin{remark}\label{rem:effectful-gets}
  Allowing effectful \ensuremath{\Varid{get}}s turns out to impose appreciable extra technical
  difficulty. In particular, while it still appears possible to prove that
  composition preserves well-behavedness, the identity laws 
  of composition do not appear
  to hold in general. At the same time, we currently lack compelling
  examples that motivate effectful \ensuremath{\Varid{get}}s; the only example we have
  considered that requires this capability is \Example~\ref{ex:switch} in
  \Section~\ref{sec:examples}. 
  This is why we mostly limit attention to transparent bx.
\end{remark}

Composition is usually expected to be associative and to satisfy
identity laws.  We can define a family of identity \bx{} as follows:
\begin{definition}[identity] \label{def:identity}
For any underlying monad instance, 
  we can form the \emph{identity} bx as follows:
\begin{hscode}\SaveRestoreHook
\column{B}{@{}>{\hspre}l<{\hspost}@{}}%
\column{E}{@{}>{\hspre}l<{\hspost}@{}}%
\>[B]{}\Varid{identity}\mathbin{::}\Conid{Monad}\;\tau\Rightarrow \Conid{StateTBX}\;\tau\;\alpha\;\alpha\;\alpha{}\<[E]%
\\
\>[B]{}\Varid{identity}\mathrel{=}\Conid{BX}\;\Varid{get}\;\Varid{set}\;\Varid{get}\;\Varid{set}{}\<[E]%
\ColumnHook
\end{hscode}\resethooks
Clearly, this \bx{} 
is well-behaved, indeed transparent, and overwritable.
\end{definition}
However, if we ask whether \ensuremath{bx\mathrel{=}\Varid{identity} \mathbin{\fatsemi} bx}, we are immediately
faced with a problem: the two \bx{} do not even have the same state types.
Similarly when we ask about associativity of composition.
Apparently, therefore, 
as for symmetric lenses~\cite{symlens}, 
we must satisfy ourselves with 
equality only up to some notion of
equivalence, such as the one introduced in Definition~\ref{def:equivalence}.

\begin{theorem}\label{thm:category}
Composition of transparent \bx{} satisfies the identity and associativity laws,
modulo \ensuremath{\equiv }.
\begin{hscode}\SaveRestoreHook
\column{B}{@{}>{\hspre}l<{\hspost}@{}}%
\column{3}{@{}>{\hspre}l<{\hspost}@{}}%
\column{15}{@{}>{\hspre}c<{\hspost}@{}}%
\column{15E}{@{}l@{}}%
\column{20}{@{}>{\hspre}l<{\hspost}@{}}%
\column{46}{@{}>{\hspre}l<{\hspost}@{}}%
\column{E}{@{}>{\hspre}l<{\hspost}@{}}%
\>[3]{}\mathrm{(Identity)}{}\<[15]%
\>[15]{}\quad{}\<[15E]%
\>[20]{}\Varid{identity} \mathbin{\fatsemi} bx\equiv bx\equiv bx \mathbin{\fatsemi} \Varid{identity}{}\<[E]%
\\
\>[3]{}\mathrm{(Assoc)}{}\<[15]%
\>[15]{}\quad{}\<[15E]%
\>[20]{}bx_{1} \mathbin{\fatsemi} (bx_{2} \mathbin{\fatsemi} bx_{3}){}\<[46]%
\>[46]{}\equiv (bx_{1} \mathbin{\fatsemi} bx_{2}) \mathbin{\fatsemi} bx_{3}{}\<[E]%
\ColumnHook
\end{hscode}\resethooks
\endswithdisplay
\end{theorem}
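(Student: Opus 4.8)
The plan is to prove both laws by exhibiting, in each case, a bijection between the underlying state spaces and lifting it to a \bx{} isomorphism in the sense of Definition~\ref{def:equivalence}. The key tool is that any isomorphism $f : S \cong S'$ of state spaces is the view of a very well-behaved (indeed invertible) lens, with $\mathit{view} = f$ and $\mathit{update}\ s\ v = f^{-1}\ v$; so by Lemma~\ref{lem:vwb-monad-morphism} the embedding $\vartheta$ of Definition~\ref{def:theta} turns it into a monad morphism $\mathrm{StateT}\ S'\ T \to \mathrm{StateT}\ S\ T$, and applying $\vartheta$ to this lens and to its inverse yields a mutually inverse pair of monad morphisms (conjugation by $f$ and by $f^{-1}$, which compose to the identity). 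It then remains, in each case, to check that this pair preserves the four \bx{} operations.

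For the identity laws, consider $\mathit{identity} \fatsemi bx$ with $bx :: \mathrm{StateTBX}\ T\ S\ A\ B$. Its state space is $A \Join S$, and since $\mathit{identity}.\mathit{read}_R = \mathrm{id}$, the consistency condition forces $a = bx.\mathit{read}_L\ s$; hence the projection $(a,s) \mapsto s$ is a bijection $A \Join S \cong S$, with inverse $s \mapsto (bx.\mathit{read}_L\ s,\, s)$. First I would lift this bijection as above, then unfold the composite operations in the simplified transparent form given after Theorem~\ref{thm:composition-wb}: the composite $\get{L}$ returns $\mathit{identity}.\mathit{read}_L\ a = a = bx.\mathit{read}_L\ s$, matching $bx.\get{L}$; $\get{R}$ returns $bx.\mathit{read}_R\ s$, matching $bx.\get{R}$; and $\set{L}\ a'$ reduces, using $\mathit{identity}.\set{L} = \mathit{set}$ and $\mathit{identity}.\mathit{read}_R = \mathrm{id}$, to running $bx.\set{L}\ a'$ on the $S$-component (the invariant $a' = bx.\mathit{read}_L\ s'$ being restored by law $\mathrm{(S_LG_L)}$), which is exactly $bx.\set{L}\ a'$ transported along the bijection; $\set{R}$ is analogous. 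The right identity $bx \fatsemi \mathit{identity}$ is entirely symmetric.

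For associativity, both $S_1 \Join (S_2 \Join S_3)$ and $(S_1 \Join S_2) \Join S_3$ unfold — using that $\get{R}$ of a composite reads its right component's $\mathit{read}_R$, and dually for $\get{L}$ — to the same ``triple join'' $\{(s_1,s_2,s_3) \mid bx_1.\mathit{read}_R\ s_1 = bx_2.\mathit{read}_L\ s_2,\ bx_2.\mathit{read}_R\ s_2 = bx_3.\mathit{read}_L\ s_3\}$, so the evident reassociation $(s_1,(s_2,s_3)) \leftrightarrow ((s_1,s_2),s_3)$ is a bijection between them. After lifting it as before, the two $\get{}$ operations match immediately, since each simply reads $bx_1.\mathit{read}_L$ of the first component or $bx_3.\mathit{read}_R$ of the third, irrespective of bracketing.

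The main obstacle is showing the composite $\set{L}$ (and symmetrically $\set{R}$) agree on the two sides. Unfolding, both ultimately perform $bx_1.\set{L}\ a'$, propagate the new value $b = bx_1.\mathit{read}_R\ s_1'$ via $bx_2.\set{L}\ b$, and then propagate $c = bx_2.\mathit{read}_R\ s_2'$ via $bx_3.\set{L}\ c$ — but the two bracketings interleave the component $\mathit{set}$s, the intermediate reads, and the $\mathit{lift}$ed $T$-effects differently. Here \emph{transparency} is essential: because the $\get{}$ operations are $T$-pure queries, the intermediate reads contribute no effects in $T$ and may be floated past the lifted computations using Lemma~\ref{lem:liftings-commute} and discarded via Lemma~\ref{lem:discardable}; associativity of bind together with the $\mathrm{StateT}$ laws $\mathrm{(GG)}$, $\mathrm{(GS)}$, $\mathrm{(SG)}$ then collapse both sides to the \emph{same} sequence of three lifted sets threaded through the triple-join state. (This is precisely the step that Remark~\ref{rem:effectful-gets} warns fails for effectful gets.) Since the lifted morphisms are mutually inverse — immediate from the bijections being mutually inverse — this establishes the required \bx{} isomorphisms and hence both equivalences.
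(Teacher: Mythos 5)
Your proposal is correct and follows essentially the same route as the paper: it lifts state-space bijections to monad isomorphisms via very-well-behaved lenses and $\vartheta$ (the paper's Lemma~\ref{lem:invertible-iso}), uses the projection bijection $A \Join S \cong S$ with inverse $s \mapsto (bx.\mathit{read}_L\,s,\,s)$ for the identity laws, and the reassociation bijection on the common triple join for associativity, in each case verifying preservation of the four operations by unfolding the composite and exploiting transparency of the gets (the paper packages this as its properties $(\ast)$ and $(\dag)$ in Lemmas~\ref{lem:identity} and~\ref{lem:associativity}). The only differences are cosmetic, such as which of the equivalent state laws is cited at particular steps.
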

\section{Examples}\label{sec:examples}

We now show how to use and combine \bx{}, and discuss how to extend
our approach to support initialisation. We adapt some standard
constructions on symmetric lenses, involving pairs, sums and lists. Finally
we investigate some effectful \bx{} primitives and combinators, culminating
with the two examples from \Section~\ref{sec:introduction}.

\subsection{Initialisation}
\label{sec:initialization}

Readers familiar with bx will have noticed that so far we have not
mentioned mechanisms for initialisation, e.g. `create' for asymmetric
lenses~\cite{lens-toplas}, `missing' in symmetric
lenses~\cite{symlens}, or $\Omega$ in relational bx
terminology~\cite{stevens09:sosym}. Moreover, as we shall see in 
\Section~\ref{sec:combinators}, initialisation is
also needed for certain combinators.

\begin{definition*}
An \emph{initialisable \ensuremath{\Conid{StateTBX}}} is a \ensuremath{\Conid{StateTBX}} with 
two additional operations for initialisation:
\begin{hscode}\SaveRestoreHook
\column{B}{@{}>{\hspre}l<{\hspost}@{}}%
\column{3}{@{}>{\hspre}l<{\hspost}@{}}%
\column{11}{@{}>{\hspre}l<{\hspost}@{}}%
\column{34}{@{}>{\hspre}l<{\hspost}@{}}%
\column{44}{@{}>{\hspre}l<{\hspost}@{}}%
\column{71}{@{}>{\hspre}l<{\hspost}@{}}%
\column{82}{@{}>{\hspre}l<{\hspost}@{}}%
\column{98}{@{}>{\hspre}c<{\hspost}@{}}%
\column{98E}{@{}l@{}}%
\column{E}{@{}>{\hspre}l<{\hspost}@{}}%
\>[B]{}\mathbf{data}\;\Conid{InitStateTBX}\;\tau\;\sigma\;\alpha\;\beta\mathrel{=}\Conid{InitStateTBX}\;\{\mskip1.5mu {}\<[E]%
\\
\>[B]{}\hsindent{3}{}\<[3]%
\>[3]{}\get{L}{}\<[11]%
\>[11]{}\mathbin{::}\Conid{StateT}\;\sigma\;\tau\;\alpha,{}\<[34]%
\>[34]{}\quad\set{L}{}\<[44]%
\>[44]{}\mathbin{::}\alpha\hsarrow{\rightarrow }{\mathpunct{.}}\Conid{StateT}\;\sigma\;\tau\;(),{}\<[71]%
\>[71]{}\quad\Varid{init}_{L}{}\<[82]%
\>[82]{}\mathbin{::}\alpha\hsarrow{\rightarrow }{\mathpunct{.}}\tau\;\sigma,{}\<[E]%
\\
\>[B]{}\hsindent{3}{}\<[3]%
\>[3]{}\get{R}{}\<[11]%
\>[11]{}\mathbin{::}\Conid{StateT}\;\sigma\;\tau\;\beta,{}\<[34]%
\>[34]{}\quad\set{R}{}\<[44]%
\>[44]{}\mathbin{::}\beta\hsarrow{\rightarrow }{\mathpunct{.}}\Conid{StateT}\;\sigma\;\tau\;(),{}\<[71]%
\>[71]{}\quad\Varid{init}_{R}{}\<[82]%
\>[82]{}\mathbin{::}\beta\hsarrow{\rightarrow }{\mathpunct{.}}\tau\;\sigma{}\<[98]%
\>[98]{}\mskip1.5mu\}{}\<[98E]%
\ColumnHook
\end{hscode}\resethooks
The \ensuremath{\Varid{init}_{L}} and \ensuremath{\Varid{init}_{R}} operations build an initial state from 
one view or the other, possibly incurring effects in
the underlying monad. Well-behavedness of the bx requires in addition:
\begin{hscode}\SaveRestoreHook
\column{B}{@{}>{\hspre}l<{\hspost}@{}}%
\column{12}{@{}>{\hspre}c<{\hspost}@{}}%
\column{12E}{@{}l@{}}%
\column{16}{@{}>{\hspre}l<{\hspost}@{}}%
\column{18}{@{}>{\hspre}l<{\hspost}@{}}%
\column{E}{@{}>{\hspre}l<{\hspost}@{}}%
\>[B]{}\mathrm{(I_LG_L)}{}\<[12]%
\>[12]{}\quad{}\<[12E]%
\>[16]{}\mathbf{do}\;\{\mskip1.5mu \Varid{s}\leftarrow bx\mathord{.}\Varid{init}_{L}\;\Varid{a};bx\mathord{.}\get{L}\;\Varid{s}\mskip1.5mu\}{}\<[E]%
\\
\>[16]{}\hsindent{2}{}\<[18]%
\>[18]{}\mathrel{=}\mathbf{do}\;\{\mskip1.5mu \Varid{s}\leftarrow bx\mathord{.}\Varid{init}_{L}\;\Varid{a};\Varid{return}\;(\Varid{a},\Varid{s})\mskip1.5mu\}{}\<[E]%
\\
\>[B]{}\mathrm{(I_RG_R)}{}\<[12]%
\>[12]{}\quad{}\<[12E]%
\>[16]{}\mathbf{do}\;\{\mskip1.5mu \Varid{s}\leftarrow bx\mathord{.}\Varid{init}_{R}\;\Varid{b};bx\mathord{.}\get{R}\;\Varid{s}\mskip1.5mu\}{}\<[E]%
\\
\>[16]{}\hsindent{2}{}\<[18]%
\>[18]{}\mathrel{=}\mathbf{do}\;\{\mskip1.5mu \Varid{s}\leftarrow bx\mathord{.}\Varid{init}_{R}\;\Varid{b};\Varid{return}\;(\Varid{b},\Varid{s})\mskip1.5mu\}{}\<[E]%
\ColumnHook
\end{hscode}\resethooks
stating informally that initialising then getting yields the initialised value. 
There are no laws concerning initialising then setting.
\end{definition*}
We can extend composition to handle initialisation as follows:
\begin{hscode}\SaveRestoreHook
\column{B}{@{}>{\hspre}l<{\hspost}@{}}%
\column{26}{@{}>{\hspre}c<{\hspost}@{}}%
\column{26E}{@{}l@{}}%
\column{29}{@{}>{\hspre}l<{\hspost}@{}}%
\column{35}{@{}>{\hspre}l<{\hspost}@{}}%
\column{E}{@{}>{\hspre}l<{\hspost}@{}}%
\>[B]{}(bx_{1} \mathbin{\fatsemi} bx_{2})\mathord{.}\Varid{init}_{L}\;\Varid{a}{}\<[26]%
\>[26]{}\mathrel{=}{}\<[26E]%
\>[29]{}\mathbf{do}\;\{\mskip1.5mu {}\<[35]%
\>[35]{}\Varid{s}_{1}\leftarrow bx_{1}\mathord{.}\Varid{init}_{L}\;\Varid{a};\Varid{b}\leftarrow bx_{1}\mathord{.}\get{R}\;\Varid{s}_{1};{}\<[E]%
\\
\>[35]{}\Varid{s}_{2}\leftarrow bx_{2}\mathord{.}\Varid{init}_{L}\;\Varid{b};\Varid{return}\;(\Varid{s}_{1},\Varid{s}_{2})\mskip1.5mu\}{}\<[E]%
\ColumnHook
\end{hscode}\resethooks
and symmetrically for \ensuremath{\Varid{init}_{R}}.
We refine the notions of bx isomorphism and equivalence to
\ensuremath{\Conid{InitStateTBX}} as follows.  
A monad isomorphism \ensuremath{\iota \mathbin{::}\Conid{StateT}\;\Conid{S}_{1}\;\Conid{T}\hsarrow{\rightarrow }{\mathpunct{.}}\Conid{StateT}\;\Conid{S}_{2}\;\Conid{T}}
amounts to a bijection \ensuremath{\Varid{h}\mathbin{::}\Conid{S}_{1}\hsarrow{\rightarrow }{\mathpunct{.}}\Conid{S}_{2}} on the state spaces.
An isomorphism of \ensuremath{\Conid{InitStateTBX}}s consists of such an \ensuremath{\iota } and \ensuremath{\Varid{h}} 
satisfying the following equations (and their duals):
\begin{hscode}\SaveRestoreHook
\column{B}{@{}>{\hspre}l<{\hspost}@{}}%
\column{3}{@{}>{\hspre}l<{\hspost}@{}}%
\column{41}{@{}>{\hspre}l<{\hspost}@{}}%
\column{E}{@{}>{\hspre}l<{\hspost}@{}}%
\>[3]{}\iota \;(bx_{1}\mathord{.}\get{L}){}\<[41]%
\>[41]{}\mathrel{=}bx_{2}\mathord{.}\get{L}{}\<[E]%
\\
\>[3]{}\iota \;(bx_{1}\mathord{.}\set{L}\;\Varid{a}){}\<[41]%
\>[41]{}\mathrel{=}bx_{2}\mathord{.}\set{L}\;\Varid{a}{}\<[E]%
\\
\>[3]{}\mathbf{do}\;\{\mskip1.5mu \Varid{s}\leftarrow bx_{1}\mathord{.}\Varid{init}_{L}\;\Varid{a};\Varid{return}\;(\Varid{h}\;\Varid{s})\mskip1.5mu\}{}\<[41]%
\>[41]{}\mathrel{=}bx_{2}\mathord{.}\Varid{init}_{L}\;\Varid{a}{}\<[E]%
\ColumnHook
\end{hscode}\resethooks
Note that the first two equations (and their duals) imply that \ensuremath{\iota }
is a conventional isomorphism between the underlying \bx{} structures
of \ensuremath{bx_{1}} and \ensuremath{bx_{2}} if we ignore the initialisation operations.  The
third equation simply says that \ensuremath{\Varid{h}} maps the state obtained by
initialising \ensuremath{bx_{1}} with \ensuremath{\Varid{a}} to the state obtained by initialising
\ensuremath{bx_{2}} with \ensuremath{\Varid{a}}.  Equivalence of \ensuremath{\Conid{InitStateTBX}}s amounts to the
existence of such an isomorphism.

\begin{remark*}
Of course, there may be situations where these operations are not what
is desired. We might prefer to provide both view values and ask
the \bx{} system to find a suitable hidden state consistent with both at once.
This can be accommodated, by providing a third initialisation function:
\begin{hscode}\SaveRestoreHook
\column{B}{@{}>{\hspre}l<{\hspost}@{}}%
\column{E}{@{}>{\hspre}l<{\hspost}@{}}%
\>[B]{}\Varid{initBoth}\mathbin{::}\alpha\hsarrow{\rightarrow }{\mathpunct{.}}\beta\hsarrow{\rightarrow }{\mathpunct{.}}\tau\;(\Conid{Maybe}\;\sigma){}\<[E]%
\ColumnHook
\end{hscode}\resethooks
However, \ensuremath{\Varid{initBoth}} and \ensuremath{\Varid{init}_{L},\Varid{init}_{R}} are not interdefinable: 
\ensuremath{\Varid{initBoth}} requires both initial values, so is no help in defining a
function that has access only to one; and conversely, given both
initial values, there are in general two different ways to initialise
from one of them (and two more to initialise from one and then set
with the other).  Furthermore, it is not clear how to define 
\ensuremath{\Varid{initBoth}} for the composition of two \bx{} equipped with \ensuremath{\Varid{initBoth}}.
\end{remark*}

\subsection{Basic constructions and combinators}
\label{sec:combinators}

It is obviously desirable -- and essential in the design of any future
bx programming language -- to be able to build up \bx{} from
components using combinators that preserve interesting properties, and therefore avoid
having to prove well-behavedness
from scratch
for each bx.  Symmetric lenses \cite{symlens} admit several
standard constructions, involving constants, duality, pairing, sum
types, and lists.  We  show that these constructions can be
generalised to  \ensuremath{\Conid{StateTBX}}, and establish that they
preserve well-behavedness.  
For most combinators, the initialisation operations are
straightforward; in the interests of brevity, they and obvious duals are 
omitted in what follows.

\begin{definition}[duality] \label{def:dual}
  Trivially, we can dualise any \bx{}:
\begin{hscode}\SaveRestoreHook
\column{B}{@{}>{\hspre}l<{\hspost}@{}}%
\column{E}{@{}>{\hspre}l<{\hspost}@{}}%
\>[B]{}\Varid{dual}\mathbin{::}\Conid{StateTBX}\;\tau\;\sigma\;\alpha\;\beta\hsarrow{\rightarrow }{\mathpunct{.}}\Conid{StateTBX}\;\tau\;\sigma\;\beta\;\alpha{}\<[E]%
\\
\>[B]{}\Varid{dual}\;bx\mathrel{=}\Conid{BX}\;bx\mathord{.}\get{R}\;bx\mathord{.}\set{R}\;bx\mathord{.}\get{L}\;bx\mathord{.}\set{L}{}\<[E]%
\ColumnHook
\end{hscode}\resethooks
This simply exchanges the left and right operations; it preserves
  transparency and overwritability of the underlying \bx{}.
\end{definition}

\begin{definition*}[constant and pair combinators]
  \ensuremath{\Conid{StateTBX}} also admits constant, pairing and projection
  operations: 
\begin{hscode}\SaveRestoreHook
\column{B}{@{}>{\hspre}l<{\hspost}@{}}%
\column{10}{@{}>{\hspre}c<{\hspost}@{}}%
\column{10E}{@{}l@{}}%
\column{16}{@{}>{\hspre}l<{\hspost}@{}}%
\column{E}{@{}>{\hspre}l<{\hspost}@{}}%
\>[B]{}\Varid{constBX}{}\<[10]%
\>[10]{}\mathbin{::}{}\<[10E]%
\>[16]{}\Conid{Monad}\;\tau\Rightarrow \alpha\hsarrow{\rightarrow }{\mathpunct{.}}\Conid{StateTBX}\;\tau\;\alpha\;()\;\alpha{}\<[E]%
\\
\>[B]{}\Varid{fstBX}{}\<[10]%
\>[10]{}\mathbin{::}{}\<[10E]%
\>[16]{}\Conid{Monad}\;\tau\Rightarrow \Conid{StateTBX}\;\tau\;(\alpha,\beta)\;(\alpha,\beta)\;\alpha{}\<[E]%
\\
\>[B]{}\Varid{sndBX}{}\<[10]%
\>[10]{}\mathbin{::}{}\<[10E]%
\>[16]{}\Conid{Monad}\;\tau\Rightarrow \Conid{StateTBX}\;\tau\;(\alpha,\beta)\;(\alpha,\beta)\;\beta{}\<[E]%
\ColumnHook
\end{hscode}\resethooks
These straightforwardly generalise to \bx{} the
corresponding operations for symmetric lenses. 
If they are to be initialisable, \ensuremath{\Varid{fstBX}} and \ensuremath{\Varid{sndBX}} also have to take a parameter for the initial
value of the opposite side:
\begin{hscode}\SaveRestoreHook
\column{B}{@{}>{\hspre}l<{\hspost}@{}}%
\column{11}{@{}>{\hspre}c<{\hspost}@{}}%
\column{11E}{@{}l@{}}%
\column{17}{@{}>{\hspre}l<{\hspost}@{}}%
\column{E}{@{}>{\hspre}l<{\hspost}@{}}%
\>[B]{}\Varid{fstIBX}{}\<[11]%
\>[11]{}\mathbin{::}{}\<[11E]%
\>[17]{}\Conid{Monad}\;\tau\Rightarrow \beta\hsarrow{\rightarrow }{\mathpunct{.}}\Conid{InitStateTBX}\;\tau\;(\alpha,\beta)\;(\alpha,\beta)\;\alpha{}\<[E]%
\\
\>[B]{}\Varid{sndIBX}{}\<[11]%
\>[11]{}\mathbin{::}{}\<[11E]%
\>[17]{}\Conid{Monad}\;\tau\Rightarrow \alpha\hsarrow{\rightarrow }{\mathpunct{.}}\Conid{InitStateTBX}\;\tau\;(\alpha,\beta)\;(\alpha,\beta)\;\beta{}\<[E]%
\ColumnHook
\end{hscode}\resethooks
Pairing is defined as follows:
\begin{hscode}\SaveRestoreHook
\column{B}{@{}>{\hspre}l<{\hspost}@{}}%
\column{7}{@{}>{\hspre}l<{\hspost}@{}}%
\column{10}{@{}>{\hspre}c<{\hspost}@{}}%
\column{10E}{@{}l@{}}%
\column{13}{@{}>{\hspre}l<{\hspost}@{}}%
\column{16}{@{}>{\hspre}l<{\hspost}@{}}%
\column{19}{@{}>{\hspre}l<{\hspost}@{}}%
\column{22}{@{}>{\hspre}l<{\hspost}@{}}%
\column{28}{@{}>{\hspre}l<{\hspost}@{}}%
\column{31}{@{}>{\hspre}l<{\hspost}@{}}%
\column{32}{@{}>{\hspre}l<{\hspost}@{}}%
\column{38}{@{}>{\hspre}l<{\hspost}@{}}%
\column{53}{@{}>{\hspre}l<{\hspost}@{}}%
\column{68}{@{}>{\hspre}l<{\hspost}@{}}%
\column{E}{@{}>{\hspre}l<{\hspost}@{}}%
\>[B]{}\Varid{pairBX}{}\<[10]%
\>[10]{}\mathbin{::}{}\<[10E]%
\>[16]{}\Conid{Monad}\;\tau\Rightarrow \Conid{StateTBX}\;\tau\;\sigma_1\;\alpha_1\;\beta_1\hsarrow{\rightarrow }{\mathpunct{.}}\Conid{StateTBX}\;\tau\;{}\<[68]%
\>[68]{}\sigma_2\;\alpha_2\;\beta_2\hsarrow{\rightarrow }{\mathpunct{.}}{}\<[E]%
\\
\>[16]{}\Conid{StateTBX}\;\tau\;(\sigma_1,\sigma_2)\;(\alpha_1,\alpha_2)\;(\beta_1,\beta_2){}\<[E]%
\\
\>[B]{}\Varid{pairBX}\;bx_{1}\;bx_{2}\mathrel{=}\Conid{BX}\;\Varid{gl}\;\Varid{sl}\;\Varid{gr}\;\Varid{sr}\;\mathbf{where}{}\<[E]%
\\
\>[B]{}\hsindent{7}{}\<[7]%
\>[7]{}\Varid{gl}\mathrel{=}{}\<[13]%
\>[13]{}\mathbf{do}\;\{\mskip1.5mu {}\<[19]%
\>[19]{}\Varid{a}_{1}\leftarrow \Varid{left}\;(bx_{1}\mathord{.}\get{L});\Varid{a}_{2}\leftarrow \Varid{right}\;(bx_{2}\mathord{.}\get{L});\Varid{return}\;(\Varid{a}_{1},\Varid{a}_{2})\mskip1.5mu\}{}\<[E]%
\\
\>[B]{}\hsindent{7}{}\<[7]%
\>[7]{}\Varid{sl}\;(\Varid{a}_{1},\Varid{a}_{2}){}\<[19]%
\>[19]{}\mathrel{=}{}\<[22]%
\>[22]{}\mathbf{do}\;\{\mskip1.5mu {}\<[28]%
\>[28]{}\Varid{left}\;(bx_{1}\mathord{.}\set{L}\;\Varid{a}_{1});\Varid{right}\;(bx_{2}\mathord{.}\set{L}\;\Varid{a}_{2})\mskip1.5mu\}{}\<[E]%
\\
\>[B]{}\hsindent{7}{}\<[7]%
\>[7]{}\Varid{gr}{}\<[19]%
\>[19]{}\mathrel{=}{}\<[22]%
\>[22]{}... \mbox{\onelinecomment dual}{}\<[32]%
\>[32]{}{}\<[38]%
\>[38]{}{}\<[E]%
\\
\>[B]{}\hsindent{7}{}\<[7]%
\>[7]{}\Varid{sr}{}\<[19]%
\>[19]{}\mathrel{=}{}\<[22]%
\>[22]{}... \mbox{\onelinecomment dual}{}\<[31]%
\>[31]{}{}\<[53]%
\>[53]{}{}\<[E]%
\ColumnHook
\end{hscode}\resethooks
\endswithdisplay
\end{definition*}

Other operations based on isomorphisms, such as associativity of
pairs, can be lifted to \ensuremath{\Conid{StateTBX}}s without
problems. Well-behavedness is immediate for \ensuremath{\Varid{constBX}}, \ensuremath{\Varid{fstBX}},
\ensuremath{\Varid{sndBX}} and for any other \bx{} that can be obtained from an asymmetric or
symmetric lens.  For the \ensuremath{\Varid{pairBX}} combinator we need to verify
preservation of transparency:
\begin{proposition}\label{prop:pair-wb}
  If \ensuremath{bx_1} and \ensuremath{bx_2} are
  transparent (and hence well-behaved), then so is \ensuremath{\Varid{pairBX}\;bx_1\;bx_2}.
\end{proposition}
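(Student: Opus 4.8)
The plan is to prove the two halves of transparency (Definition~\ref{def:proper}) in turn: that the combined \ensuremath{\Varid{gl},\Varid{gr}} are \ensuremath{\Conid{T}}-pure queries, and that the seven well-behavedness laws hold. Establishing the first half pays off twice, since as noted just before Definition~\ref{def:proper}, once \ensuremath{\Varid{gl}} and \ensuremath{\Varid{gr}} are pure queries the three get/get laws \ensuremath{\mathrm{(G_LG_L)}}, \ensuremath{\mathrm{(G_RG_R)}}, \ensuremath{\mathrm{(G_LG_R)}} follow for free.

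First I would show that lifting a pure query through \ensuremath{\Varid{left}} or \ensuremath{\Varid{right}} yields another pure query. Unfolding \ensuremath{\Varid{left}\mathrel{=}\vartheta\;\Varid{fstLens}} and using that \ensuremath{\Varid{lift}} preserves \ensuremath{\Varid{return}}, together with the state laws, gives \ensuremath{\Varid{left}\;(\Varid{gets}\;\Varid{f})\mathrel{=}\Varid{gets}\;(\Varid{f}\hsdot{\cdot }{.}\Varid{fst})} and dually \ensuremath{\Varid{right}\;(\Varid{gets}\;\Varid{g})\mathrel{=}\Varid{gets}\;(\Varid{g}\hsdot{\cdot }{.}\Varid{snd})}. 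Since \ensuremath{bx_1} and \ensuremath{bx_2} are transparent, \ensuremath{bx_i\mathord{.}\get{L}} and \ensuremath{bx_i\mathord{.}\get{R}} are already of the form \ensuremath{\Varid{gets}\;\Varid{h}}; substituting, and then merging the two resulting pure \ensuremath{\Varid{gets}} (which commute, being state reads), exhibits a read function \ensuremath{\Varid{read}_{L}\;(s_1,s_2)\mathrel{=}(bx_1\mathord{.}\Varid{read}_{L}\;s_1,bx_2\mathord{.}\Varid{read}_{L}\;s_2)} for \ensuremath{\Varid{gl}}, and symmetrically for \ensuremath{\Varid{gr}}. These are just the products of the component read functions.

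For the remaining laws, the right-side laws \ensuremath{\mathrm{(S_RG_R)}} and \ensuremath{\mathrm{(G_RS_R)}} follow from the left-side ones by the left--right symmetry of the construction (Definition~\ref{def:dual}), since \ensuremath{\Varid{gr},\Varid{sr}} are the duals of \ensuremath{\Varid{gl},\Varid{sl}}; so it suffices to treat \ensuremath{\mathrm{(S_LG_L)}} and \ensuremath{\mathrm{(G_LS_L)}}. The two tools here are: (i) \ensuremath{\Varid{left}} and \ensuremath{\Varid{right}} are monad morphisms (Lemma~\ref{lem:vwb-monad-morphism}, Definition~\ref{def:left-right-monad-morphism}), so consecutive \ensuremath{\Varid{left}}-operations can be fused by associativity and a law of \ensuremath{bx_1} applied under \ensuremath{\Varid{left}}, and likewise for \ensuremath{\Varid{right}} and \ensuremath{bx_2}; and (ii) a commutation lemma to the effect that a \ensuremath{\Varid{left}}-operation and a \ensuremath{\Varid{right}}-operation commute whenever at least one of the two is a pure query. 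Given these, \ensuremath{\mathrm{(S_LG_L)}} is proved by commuting the pure query \ensuremath{\Varid{left}\;(bx_1\mathord{.}\get{L})} leftwards past \ensuremath{\Varid{right}\;(bx_2\mathord{.}\set{L}\;a_2)}, fusing it with the preceding \ensuremath{\Varid{left}\;(bx_1\mathord{.}\set{L}\;a_1)} and invoking \ensuremath{\mathrm{(S_LG_L)}} for \ensuremath{bx_1}, then doing the symmetric thing on the right with \ensuremath{\mathrm{(S_LG_L)}} for \ensuremath{bx_2}; the law \ensuremath{\mathrm{(G_LS_L)}} is analogous, fusing \ensuremath{\Varid{left}\;(bx_1\mathord{.}\get{L})} with \ensuremath{\Varid{left}\;(bx_1\mathord{.}\set{L})} and using \ensuremath{\mathrm{(G_LS_L)}} for \ensuremath{bx_1}.

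The commutation lemma (ii) is the crux, and the one place where transparency, and not merely well-behavedness, is really used. Unfolding \ensuremath{\Varid{left}\;m_1} and \ensuremath{\Varid{right}\;m_2} and simplifying with the state laws reduces their commutation to the commutation of \ensuremath{\Varid{lift}\;(m_1\;s_1)} and \ensuremath{\Varid{lift}\;(m_2\;s_2)} in \ensuremath{\Conid{StateT}\;(\Conid{S}_1,\Conid{S}_2)\;\Conid{T}}; as these write disjoint state components, this further reduces to commutation of \ensuremath{m_1\;s_1} and \ensuremath{m_2\;s_2} in the base monad \ensuremath{\Conid{T}}. A pure query contributes only a \ensuremath{\Varid{return}} in \ensuremath{\Conid{T}}, which commutes with any \ensuremath{\Conid{T}}-effect; this is exactly the property that would fail for effectful gets, consistent with the warning in Remark~\ref{rem:effectful-gets}. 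Everything else is routine equational reasoning in \ensuremath{\mathbf{do}}-notation, and the set/set laws need not be considered since only well-behavedness, not overwritability, is claimed.
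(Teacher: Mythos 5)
Your proposal is correct and follows essentially the same route as the paper's proof: the paper likewise reduces everything to its commutation lemma (\ensuremath{\Varid{left}\;(\Varid{gets}\;\Varid{f})} commutes with any \ensuremath{\Varid{right}\;\Varid{m}}, Lemma~\ref{lem:left-right-commute}), the fact that \ensuremath{\Varid{left}} and \ensuremath{\Varid{right}} are monad morphisms, and the component laws \ensuremath{\mathrm{(S_LG_L)}}, \ensuremath{\mathrm{(G_LS_L)}} applied under the liftings, with the right-hand laws dismissed by symmetry. If anything you are slightly more thorough, since you spell out the \ensuremath{\Conid{T}}-purity of the combined gets (\ensuremath{\Varid{left}\;(\Varid{gets}\;\Varid{f})\mathrel{=}\Varid{gets}\;(\Varid{f}\hsdot{\cdot }{.}\Varid{fst})} and its dual), which the paper leaves implicit in this proof although it verifies the analogous fact explicitly for \ensuremath{\Varid{sumBX}}.
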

\begin{remark*}
The pair combinator does not necessarily
  preserve overwritability.  
  For this to be the
  case, we need to be able to commute the \ensuremath{\Varid{set}} operations of the
  component bx, including any effects in \ensuremath{\Conid{T}}.  Moreover,
  the pairing combinator is not in general uniquely determined for
  non-commutative \ensuremath{\Conid{T}}, because the effects of \ensuremath{bx_1} and \ensuremath{bx_2} can be
  applied in different orders.
\end{remark*}

\begin{definition*}[sum combinators]
  Similarly, we can define combinators analogous to the `retentive
  sum' symmetric lenses and injection operations~\cite{symlens}.  The
  injection operations relate an \ensuremath{\alpha} and either the same \ensuremath{\alpha} or some
  unrelated \ensuremath{\beta}; the old \ensuremath{\alpha} value of the left side is retained when
  the right side is a \ensuremath{\beta}.  
\begin{hscode}\SaveRestoreHook
\column{B}{@{}>{\hspre}l<{\hspost}@{}}%
\column{8}{@{}>{\hspre}c<{\hspost}@{}}%
\column{8E}{@{}l@{}}%
\column{12}{@{}>{\hspre}l<{\hspost}@{}}%
\column{E}{@{}>{\hspre}l<{\hspost}@{}}%
\>[B]{}\Varid{inlBX}{}\<[8]%
\>[8]{}\mathbin{::}{}\<[8E]%
\>[12]{}\Conid{Monad}\;\tau\Rightarrow \alpha\hsarrow{\rightarrow }{\mathpunct{.}}\Conid{StateTBX}\;\tau\;(\alpha,\Conid{Maybe}\;\beta)\;\alpha\;(\Conid{Either}\;\alpha\;\beta){}\<[E]%
\\
\>[B]{}\Varid{inrBX}{}\<[8]%
\>[8]{}\mathbin{::}{}\<[8E]%
\>[12]{}\Conid{Monad}\;\tau\Rightarrow \beta\hsarrow{\rightarrow }{\mathpunct{.}}\Conid{StateTBX}\;\tau\;(\beta,\Conid{Maybe}\;\alpha)\;\beta\;(\Conid{Either}\;\alpha\;\beta){}\<[E]%
\ColumnHook
\end{hscode}\resethooks
  The \ensuremath{\Varid{sumBX}} combinator combines two
  underlying bx and allows switching between them; the state of both
  (including that of the bx that is not currently in focus) is retained.
\begin{hscode}\SaveRestoreHook
\column{B}{@{}>{\hspre}l<{\hspost}@{}}%
\column{8}{@{}>{\hspre}c<{\hspost}@{}}%
\column{8E}{@{}l@{}}%
\column{10}{@{}>{\hspre}l<{\hspost}@{}}%
\column{12}{@{}>{\hspre}l<{\hspost}@{}}%
\column{14}{@{}>{\hspre}l<{\hspost}@{}}%
\column{21}{@{}>{\hspre}l<{\hspost}@{}}%
\column{25}{@{}>{\hspre}l<{\hspost}@{}}%
\column{27}{@{}>{\hspre}l<{\hspost}@{}}%
\column{33}{@{}>{\hspre}l<{\hspost}@{}}%
\column{39}{@{}>{\hspre}l<{\hspost}@{}}%
\column{64}{@{}>{\hspre}l<{\hspost}@{}}%
\column{E}{@{}>{\hspre}l<{\hspost}@{}}%
\>[B]{}\Varid{sumBX}{}\<[8]%
\>[8]{}\mathbin{::}{}\<[8E]%
\>[12]{}\Conid{Monad}\;\tau\Rightarrow \Conid{StateTBX}\;\tau\;\sigma_1\;\alpha_1\;\beta_1\hsarrow{\rightarrow }{\mathpunct{.}}\Conid{StateTBX}\;\tau\;{}\<[64]%
\>[64]{}\sigma_2\;\alpha_2\;\beta_2\hsarrow{\rightarrow }{\mathpunct{.}}{}\<[E]%
\\
\>[12]{}\Conid{StateTBX}\;\tau\;(\Conid{Bool},\sigma_1,\sigma_2)\;(\Conid{Either}\;\alpha_1\;\alpha_2)\;(\Conid{Either}\;\beta_1\;\beta_2){}\<[E]%
\\
\>[B]{}\Varid{sumBX}\;bx_{1}\;bx_{2}\mathrel{=}\Conid{BX}\;\Varid{gl}\;\Varid{sl}\;\Varid{gr}\;\Varid{sr}\;\mathbf{where}{}\<[E]%
\\
\>[B]{}\hsindent{10}{}\<[10]%
\>[10]{}\Varid{gl}\mathrel{=}\mathbf{do}\;\{\mskip1.5mu {}\<[21]%
\>[21]{}(\Varid{b},\Varid{s}_{1},\Varid{s}_{2})\leftarrow \Varid{get};{}\<[E]%
\\
\>[21]{}\mathbf{if}\;\Varid{b}\;{}\<[27]%
\>[27]{}\mathbf{then}\;{}\<[33]%
\>[33]{}\mathbf{do}\;\{\mskip1.5mu {}\<[39]%
\>[39]{}(\Varid{a}_{1},\anonymous )\leftarrow \Varid{lift}\;(bx_{1}\mathord{.}\get{L}\;\Varid{s}_{1});\Varid{return}\;(\Conid{Left}\;\Varid{a}_{1})\mskip1.5mu\}{}\<[E]%
\\
\>[27]{}\mathbf{else}\;{}\<[33]%
\>[33]{}\mathbf{do}\;\{\mskip1.5mu {}\<[39]%
\>[39]{}(\Varid{a}_{2},\anonymous )\leftarrow \Varid{lift}\;(bx_{2}\mathord{.}\get{L}\;\Varid{s}_{2});\Varid{return}\;(\Conid{Right}\;\Varid{a}_{2})\mskip1.5mu\}\mskip1.5mu\}{}\<[E]%
\\
\>[B]{}\hsindent{10}{}\<[10]%
\>[10]{}\Varid{sl}\;(\Conid{Left}\;\Varid{a}_{1}){}\<[25]%
\>[25]{}\mathrel{=}\mathbf{do}\;\{\mskip1.5mu {}\<[33]%
\>[33]{}(\Varid{b},\Varid{s}_{1},\Varid{s}_{2})\leftarrow \Varid{get};{}\<[E]%
\\
\>[33]{}((),\Varid{s}_{1}')\leftarrow \Varid{lift}\;((bx_{1}\mathord{.}\set{L}\;\Varid{a}_{1})\;\Varid{s}_{1});{}\<[E]%
\\
\>[33]{}\Varid{set}\;(\Conid{True},\Varid{s}_{1}',\Varid{s}_{2})\mskip1.5mu\}{}\<[E]%
\\
\>[B]{}\hsindent{10}{}\<[10]%
\>[10]{}\Varid{sl}\;(\Conid{Right}\;\Varid{a}_{2}){}\<[25]%
\>[25]{}\mathrel{=}\mathbf{do}\;\{\mskip1.5mu {}\<[33]%
\>[33]{}(\Varid{b},\Varid{s}_{1},\Varid{s}_{2})\leftarrow \Varid{get};{}\<[E]%
\\
\>[33]{}((),\Varid{s}_{2}')\leftarrow \Varid{lift}\;((bx_{2}\mathord{.}\set{L}\;\Varid{a}_{2})\;\Varid{s}_{2});{}\<[E]%
\\
\>[33]{}\Varid{set}\;(\Conid{False},\Varid{s}_{1},\Varid{s}_{2}')\mskip1.5mu\}{}\<[E]%
\\
\>[B]{}\hsindent{10}{}\<[10]%
\>[10]{}\Varid{gr}{}\<[14]%
\>[14]{}\mathrel{=}... \mbox{\onelinecomment dual}{}\<[E]%
\\
\>[B]{}\hsindent{10}{}\<[10]%
\>[10]{}\Varid{sr}{}\<[14]%
\>[14]{}\mathrel{=}... \mbox{\onelinecomment dual}{}\<[E]%
\ColumnHook
\end{hscode}\resethooks
\endswithdisplay
\end{definition*}
\begin{proposition}\label{prop:sum-wb}
If \ensuremath{bx_{1}} and \ensuremath{bx_{2}} are transparent, then so is \ensuremath{\Varid{sumBX}\;bx_{1}\;bx_{2}}.
\end{proposition}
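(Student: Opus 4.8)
The plan is to establish the two ingredients of transparency (Definition~\ref{def:proper}) in turn: first that the composite get operations of $\Varid{sumBX}\;bx_1\;bx_2$ are $\Conid{T}$-pure queries, and then that the composite is well-behaved. Since $bx_1$ and $bx_2$ are transparent, I may (Remark~\ref{rem:proper-unique}) refer to their read functions $bx_1\mathord{.}\Varid{read}_L$, $bx_1\mathord{.}\Varid{read}_R$, $bx_2\mathord{.}\Varid{read}_L$, $bx_2\mathord{.}\Varid{read}_R$, so that $bx_i\mathord{.}\get{L}=\Varid{gets}\;(bx_i\mathord{.}\Varid{read}_L)$ and similarly for $\get{R}$.

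For the pure-query part, I would define read functions on the state $(\Conid{Bool},\sigma_1,\sigma_2)$ by case analysis on the tag: $\Varid{read}_L\;(\Varid{b},\Varid{s}_1,\Varid{s}_2)$ is $\Conid{Left}\;(bx_1\mathord{.}\Varid{read}_L\;\Varid{s}_1)$ when $\Varid{b}$ holds and $\Conid{Right}\;(bx_2\mathord{.}\Varid{read}_L\;\Varid{s}_2)$ otherwise, and dually for $\Varid{read}_R$. It then remains to show $\Varid{gl}=\Varid{gets}\;\Varid{read}_L$ and $\Varid{gr}=\Varid{gets}\;\Varid{read}_R$. Unfolding $\Varid{gl}$, the subterm $\Varid{lift}\;(bx_1\mathord{.}\get{L}\;\Varid{s}_1)$ is, by transparency of $bx_1$, just $\Varid{lift}\;(\Varid{return}\;(bx_1\mathord{.}\Varid{read}_L\;\Varid{s}_1,\Varid{s}_1))$; since $\Varid{lift}$ is a monad morphism and so preserves $\Varid{return}$, this is a pure outer computation, whence the generator $(\Varid{a}_1,\anonymous)\leftarrow\Varid{lift}\;(\ldots)$ binds $\Varid{a}_1=bx_1\mathord{.}\Varid{read}_L\;\Varid{s}_1$ and leaves the outer state untouched. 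The two branches of the conditional therefore collapse to $\Varid{return}\;(\Varid{read}_L\;(\Varid{b},\Varid{s}_1,\Varid{s}_2))$, which is exactly $\Varid{gets}\;\Varid{read}_L$; the argument for $\Varid{gr}$ is identical.

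Having shown the gets are $\Conid{T}$-pure queries, the three get/get laws $\mathrm{(G_LG_L)}$, $\mathrm{(G_RG_R)}$, $\mathrm{(G_LG_R)}$ hold for free (as noted before Definition~\ref{def:proper}, following \cite{Gibbons&Hinze11:Just}), so only the four set-involving laws remain. I would prove $\mathrm{(S_LG_L)}$ and $\mathrm{(G_LS_L)}$ and obtain the $\Conid{R}$-versions symmetrically, since $\Varid{gr},\Varid{sr}$ are defined dually using the $\Conid{R}$-operations of the components. Each proof proceeds by case analysis on the $\Conid{Either}$-tag of the set argument. For $\mathrm{(S_LG_L)}$ with argument $\Conid{Left}\;\Varid{a}_1$: $\Varid{sl}$ runs $bx_1\mathord{.}\set{L}\;\Varid{a}_1$ on $\Varid{s}_1$ in $\Conid{T}$ to obtain a new inner state $\Varid{s}_1'$ and installs $(\Conid{True},\Varid{s}_1',\Varid{s}_2)$; the subsequent $\Varid{gl}$ reads tag $\Conid{True}$ and returns $\Conid{Left}\;(bx_1\mathord{.}\Varid{read}_L\;\Varid{s}_1')$. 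The crucial step is to rewrite $bx_1\mathord{.}\Varid{read}_L\;\Varid{s}_1'$ to $\Varid{a}_1$ underneath the $\Conid{T}$-bind that produced $\Varid{s}_1'$, which is precisely law $\mathrm{(S_LG_L)}$ for the transparent $bx_1$. For $\mathrm{(G_LS_L)}$: since $\Varid{gl}$ is a pure query it does not disturb the state and returns $\Conid{Left}\;(bx_1\mathord{.}\Varid{read}_L\;\Varid{s}_1)$ (in the $\Conid{True}$ branch); feeding this to $\Varid{sl}$ runs $bx_1\mathord{.}\set{L}\;(bx_1\mathord{.}\Varid{read}_L\;\Varid{s}_1)$ on $\Varid{s}_1$, and law $\mathrm{(G_LS_L)}$ for $bx_1$ shows this equals $\Varid{return}\;((),\Varid{s}_1)$ in $\Conid{T}$, so the installed state $(\Conid{True},\Varid{s}_1,\Varid{s}_2)$ coincides with the original and the whole computation is $\Varid{return}\;()$.

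The main obstacle is the bookkeeping in the base monad $\Conid{T}$: because the component set operations may carry genuine $\Conid{T}$-effects, the rewrites above must be performed underneath monadic binds rather than on extracted states, appealing to the component laws in their fully state-threading form; and one must check throughout that the $\Conid{Bool}$ tag and the state of the non-focused component are carried along unchanged, so that the branch taken by the trailing get agrees with the injection set by the preceding set (note that $\Varid{sl}\;(\Conid{Left}\;\cdot)$ forces the tag to $\Conid{True}$, which is exactly what makes $\Varid{gl}$ subsequently consult $bx_1$). Once this discipline is fixed, each of the four set-laws reduces cleanly to a single application of the corresponding component law, completing the proof of transparency.
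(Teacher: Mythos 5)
Your proposal is correct and follows essentially the same route as the paper's proof: first establishing that the composite gets are $\Conid{T}$-pure queries via the component read functions (so the get/get laws come for free), then proving the set-involving laws by case analysis on the $\Conid{Either}$ tag, reducing each to the corresponding law of the focused component while threading the $\Conid{T}$-effects and the unfocused state underneath the binds. The paper's appendix merely carries out the same steps in fully explicit equational detail (e.g.\ commuting $\Varid{lift}$ with $\Varid{get}$/$\Varid{set}$ and invoking $\mathrm{(SG)}$/$\mathrm{(GG)}$ where you appeal to purity of the query), with no difference in substance.
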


Finally, we turn to building a bx 
that operates on lists from one that operates 
on elements. The symmetric lens list combinators~\cite{symlens}
implicitly regard the length of the list as data that is shared between the
two views. The forgetful list combinator forgets all data beyond the
current length. The retentive version maintains list elements beyond the
current length, so that they can be restored if the list is lengthened
again. We demonstrate the (more interesting) retentive version, making the
shared list length explicit. 
Several other variants are possible.

\begin{definition*}[retentive list combinator] 
  This combinator relies on the initialisation functions to deal with the case where the
  new values are inserted into the list, because in this case we need
  the capability to create new values on the other side (and new
  states linking them).
\begin{hscode}\SaveRestoreHook
\column{B}{@{}>{\hspre}l<{\hspost}@{}}%
\column{9}{@{}>{\hspre}l<{\hspost}@{}}%
\column{13}{@{}>{\hspre}l<{\hspost}@{}}%
\column{19}{@{}>{\hspre}c<{\hspost}@{}}%
\column{19E}{@{}l@{}}%
\column{22}{@{}>{\hspre}l<{\hspost}@{}}%
\column{28}{@{}>{\hspre}l<{\hspost}@{}}%
\column{36}{@{}>{\hspre}l<{\hspost}@{}}%
\column{40}{@{}>{\hspre}l<{\hspost}@{}}%
\column{E}{@{}>{\hspre}l<{\hspost}@{}}%
\>[B]{}\Varid{listIBX}\mathbin{::}{}\<[13]%
\>[13]{}\Conid{Monad}\;\tau\Rightarrow {}\<[E]%
\\
\>[13]{}\Conid{InitStateTBX}\;\tau\;\sigma\;\alpha\;\beta\hsarrow{\rightarrow }{\mathpunct{.}}\Conid{InitStateTBX}\;\tau\;(\Conid{Int},[\mskip1.5mu \sigma\mskip1.5mu])\;[\mskip1.5mu \alpha\mskip1.5mu]\;[\mskip1.5mu \beta\mskip1.5mu]{}\<[E]%
\\
\>[B]{}\Varid{listIBX}\;bx\mathrel{=}\Conid{InitStateTBX}\;\Varid{gl}\;\Varid{sl}\;\Varid{il}\;\Varid{gr}\;\Varid{sr}\;\Varid{ir}\;\mathbf{where}{}\<[E]%
\\
\>[B]{}\hsindent{9}{}\<[9]%
\>[9]{}\Varid{gl}{}\<[19]%
\>[19]{}\mathrel{=}{}\<[19E]%
\>[22]{}\mathbf{do}\;\{\mskip1.5mu {}\<[28]%
\>[28]{}(\Varid{n},\Varid{cs})\leftarrow \Varid{get};\Varid{mapM}\;(\Varid{lift}\hsdot{\cdot }{.}\Varid{eval}\;bx\mathord{.}\get{L})\;(\Varid{take}\;\Varid{n}\;\Varid{cs})\mskip1.5mu\}{}\<[E]%
\\
\>[B]{}\hsindent{9}{}\<[9]%
\>[9]{}\Varid{sl}\;\Varid{as}{}\<[19]%
\>[19]{}\mathrel{=}{}\<[19E]%
\>[22]{}\mathbf{do}\;\{\mskip1.5mu {}\<[28]%
\>[28]{}(\anonymous ,\Varid{cs})\leftarrow \Varid{get};{}\<[E]%
\\
\>[28]{}\Varid{cs'}\leftarrow \Varid{lift}\;(\Varid{sets}\;(\Varid{exec}\hsdot{\cdot }{.}bx\mathord{.}\set{L})\;bx\mathord{.}\Varid{init}_{L}\;\Varid{as}\;\Varid{cs});{}\<[E]%
\\
\>[28]{}\Varid{set}\;(\Varid{length}\;\Varid{as},\Varid{cs'})\mskip1.5mu\}{}\<[E]%
\\
\>[B]{}\hsindent{9}{}\<[9]%
\>[9]{}\Varid{il}\;\Varid{as}{}\<[19]%
\>[19]{}\mathrel{=}{}\<[19E]%
\>[22]{}\mathbf{do}\;\{\mskip1.5mu {}\<[28]%
\>[28]{}\Varid{cs}\leftarrow \Varid{mapM}\;(bx\mathord{.}\Varid{init}_{L})\;\Varid{as};\Varid{return}\;(\Varid{length}\;\Varid{as},\Varid{cs})\mskip1.5mu\}{}\<[E]%
\\
\>[B]{}\hsindent{9}{}\<[9]%
\>[9]{}\Varid{gr}{}\<[19]%
\>[19]{}\mathrel{=}{}\<[19E]%
\>[22]{}... \mbox{\onelinecomment dual}{}\<[40]%
\>[40]{}{}\<[E]%
\\
\>[B]{}\hsindent{9}{}\<[9]%
\>[9]{}\Varid{sr}\;\Varid{bs}{}\<[19]%
\>[19]{}\mathrel{=}{}\<[19E]%
\>[22]{}... \mbox{\onelinecomment dual}{}\<[40]%
\>[40]{}{}\<[E]%
\\
\>[B]{}\hsindent{9}{}\<[9]%
\>[9]{}\Varid{ir}\;\Varid{bs}{}\<[19]%
\>[19]{}\mathrel{=}{}\<[19E]%
\>[22]{}... \mbox{\onelinecomment dual}{}\<[36]%
\>[36]{}{}\<[E]%
\ColumnHook
\end{hscode}\resethooks
Here, the standard Haskell function \ensuremath{\Varid{mapM}} sequences a list of
computations, and \ensuremath{\Varid{sets}} sequentially updates a list of states
from a list of views, retaining any leftover states if the view list
is shorter:
\begin{hscode}\SaveRestoreHook
\column{B}{@{}>{\hspre}l<{\hspost}@{}}%
\column{10}{@{}>{\hspre}l<{\hspost}@{}}%
\column{23}{@{}>{\hspre}l<{\hspost}@{}}%
\column{31}{@{}>{\hspre}c<{\hspost}@{}}%
\column{31E}{@{}l@{}}%
\column{34}{@{}>{\hspre}l<{\hspost}@{}}%
\column{40}{@{}>{\hspre}l<{\hspost}@{}}%
\column{E}{@{}>{\hspre}l<{\hspost}@{}}%
\>[B]{}\Varid{sets}\mathbin{::}{}\<[10]%
\>[10]{}\Conid{Monad}\;\tau\Rightarrow (\alpha\hsarrow{\rightarrow }{\mathpunct{.}}\gamma\hsarrow{\rightarrow }{\mathpunct{.}}\tau\;\gamma)\hsarrow{\rightarrow }{\mathpunct{.}}(\alpha\hsarrow{\rightarrow }{\mathpunct{.}}\tau\;\gamma)\hsarrow{\rightarrow }{\mathpunct{.}}[\mskip1.5mu \alpha\mskip1.5mu]\hsarrow{\rightarrow }{\mathpunct{.}}[\mskip1.5mu \gamma\mskip1.5mu]\hsarrow{\rightarrow }{\mathpunct{.}}\tau\;[\mskip1.5mu \gamma\mskip1.5mu]{}\<[E]%
\\
\>[B]{}\Varid{sets}\;\Varid{set}\;\Varid{init}\;[\mskip1.5mu \mskip1.5mu]\;{}\<[23]%
\>[23]{}\Varid{cs}{}\<[31]%
\>[31]{}\mathrel{=}{}\<[31E]%
\>[34]{}\Varid{return}\;\Varid{cs}{}\<[E]%
\\
\>[B]{}\Varid{sets}\;\Varid{set}\;\Varid{init}\;(\Varid{x}\mathbin{:}\Varid{xs})\;{}\<[23]%
\>[23]{}(\Varid{c}\mathbin{:}\Varid{cs}){}\<[31]%
\>[31]{}\mathrel{=}{}\<[31E]%
\>[34]{}\mathbf{do}\;\{\mskip1.5mu {}\<[40]%
\>[40]{}\Varid{c'}\leftarrow \Varid{set}\;\Varid{x}\;\Varid{c};\Varid{cs'}\leftarrow \Varid{sets}\;\Varid{set}\;\Varid{init}\;\Varid{xs}\;\Varid{cs};{}\<[E]%
\\
\>[40]{}\Varid{return}\;(\Varid{c'}\mathbin{:}\Varid{cs'})\mskip1.5mu\}{}\<[E]%
\\
\>[B]{}\Varid{sets}\;\Varid{set}\;\Varid{init}\;\Varid{xs}\;{}\<[23]%
\>[23]{}[\mskip1.5mu \mskip1.5mu]{}\<[31]%
\>[31]{}\mathrel{=}{}\<[31E]%
\>[34]{}\Varid{mapM}\;\Varid{init}\;\Varid{xs}{}\<[E]%
\ColumnHook
\end{hscode}\resethooks
\endswithdisplay
\end{definition*}
\begin{proposition}\label{prop:list-wb}
If \ensuremath{bx} is transparent, then so is \ensuremath{\Varid{listIBX}\;bx}.
\end{proposition}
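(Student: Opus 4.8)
The plan is to unfold the definition of \emph{transparent}---well-behaved, with \ensuremath{\Conid{T}}-pure-query gets---and reduce the goal to a few local properties of the component \ensuremath{bx}. First I would check that the two get operations of \ensuremath{\Varid{listIBX}\;bx} are \ensuremath{\Conid{T}}-pure queries. Since \ensuremath{bx} is transparent we have \ensuremath{\Varid{eval}\;(bx\mathord{.}\get{L})\mathrel{=}\Varid{return}\hsdot{\cdot }{.}bx\mathord{.}\Varid{read}_{L}}, and as \ensuremath{\Varid{lift}} is a monad morphism preserving \ensuremath{\Varid{return}}, the \ensuremath{\Varid{mapM}} in \ensuremath{\Varid{gl}} collapses to a pure map, so that \ensuremath{\Varid{gl}\mathrel{=}\Varid{gets}\;\Varid{read}_{L}^{\Varid{list}}} with \ensuremath{\Varid{read}_{L}^{\Varid{list}}\;(\Varid{n},\Varid{cs})\mathrel{=}\Varid{map}\;bx\mathord{.}\Varid{read}_{L}\;(\Varid{take}\;\Varid{n}\;\Varid{cs})}, and dually for \ensuremath{\Varid{gr}}. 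This exhibits the required read functions; as noted before Definition~\ref{def:proper}, it also discharges the three get/get laws \ensuremath{\mathrm{(G_LG_L)}}, \ensuremath{\mathrm{(G_RG_R)}}, \ensuremath{\mathrm{(G_LG_R)}} for free. It then remains to verify the four set/get laws together with the two initialisation laws \ensuremath{\mathrm{(I_LG_L)}}, \ensuremath{\mathrm{(I_RG_R)}}.

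Next I would distil from \ensuremath{bx}'s own laws and transparency three \ensuremath{\Conid{T}}-level facts, to be applied pointwise along the lists: (i) from \ensuremath{\mathrm{(S_LG_L)}}, that \ensuremath{\mathbf{do}\;\{\mskip1.5mu \Varid{c'}\leftarrow \Varid{exec}\;(bx\mathord{.}\set{L}\;\Varid{a})\;\Varid{c};\Varid{return}\;(bx\mathord{.}\Varid{read}_{L}\;\Varid{c'})\mskip1.5mu\}\mathrel{=}\mathbf{do}\;\{\mskip1.5mu \Varid{c'}\leftarrow \Varid{exec}\;(bx\mathord{.}\set{L}\;\Varid{a})\;\Varid{c};\Varid{return}\;\Varid{a}\mskip1.5mu\}}; (ii) from \ensuremath{\mathrm{(I_LG_L)}}, the analogous read-back property for \ensuremath{bx\mathord{.}\Varid{init}_{L}}; and (iii) from \ensuremath{\mathrm{(G_LS_L)}}, that resetting the left of \ensuremath{bx} to its current read value is the identity with no \ensuremath{\Conid{T}}-effects, i.e. \ensuremath{\Varid{exec}\;(bx\mathord{.}\set{L}\;(bx\mathord{.}\Varid{read}_{L}\;\Varid{c}))\;\Varid{c}\mathrel{=}\Varid{return}\;\Varid{c}}. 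Each follows by evaluating the relevant \ensuremath{\Conid{StateT}} equation of \ensuremath{bx} at an arbitrary state and using transparency to rewrite \ensuremath{\get{L}} as \ensuremath{\Varid{gets}\;\Varid{read}_{L}}.

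The set/get and init/get laws then reduce to one lemma about \ensuremath{\Varid{sets}} and \ensuremath{\Varid{mapM}}, proved by induction on the view list with a case split on whether the state list is exhausted (matching the clauses of \ensuremath{\Varid{sets}}): reading back the first \ensuremath{\Varid{length}\;\Varid{as}} states produced by \ensuremath{\Varid{sets}\;(\Varid{exec}\hsdot{\cdot }{.}bx\mathord{.}\set{L})\;bx\mathord{.}\Varid{init}_{L}\;\Varid{as}\;\Varid{cs}} recovers \ensuremath{\Varid{as}}, using fact~(i) on the zipped prefix and fact~(ii) on the \ensuremath{\Varid{mapM}\;bx\mathord{.}\Varid{init}_{L}} tail. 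This yields \ensuremath{\mathrm{(S_LG_L)}} directly, and \ensuremath{\mathrm{(I_LG_L)}} as the special case where the incoming state list is empty. For \ensuremath{\mathrm{(G_LS_L)}} I would first record the state invariant that the stored length never exceeds the number of states, \ensuremath{\Varid{n}\leq \Varid{length}\;\Varid{cs}}, which is maintained by every setter and initialiser (\ensuremath{\Varid{sl}} stores \ensuremath{(\Varid{length}\;\Varid{as},\Varid{cs'})} with \ensuremath{\Varid{cs'}} of length \ensuremath{\max(\Varid{length}\;\Varid{as},\Varid{length}\;\Varid{cs})}, and \ensuremath{\Varid{il}} stores a length-matched pair). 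On such states \ensuremath{\Varid{gl}} returns \ensuremath{\Varid{as}\mathrel{=}\Varid{map}\;bx\mathord{.}\Varid{read}_{L}\;(\Varid{take}\;\Varid{n}\;\Varid{cs})} with \ensuremath{\Varid{length}\;\Varid{as}\mathrel{=}\Varid{n}}, so in the following \ensuremath{\Varid{sl}\;\Varid{as}} every update of the active prefix is \ensuremath{\Varid{exec}\;(bx\mathord{.}\set{L}\;(bx\mathord{.}\Varid{read}_{L}\;\Varid{c}))\;\Varid{c}}, which by fact~(iii) returns \ensuremath{\Varid{c}} with no effect, while the leftover states are retained; hence \ensuremath{\Varid{cs'}\mathrel{=}\Varid{cs}} and the final \ensuremath{\Varid{set}\;(\Varid{n},\Varid{cs})} restores the state, giving \ensuremath{\Varid{return}\;()}. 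The right-hand laws \ensuremath{\mathrm{(S_RG_R)}}, \ensuremath{\mathrm{(G_RS_R)}}, \ensuremath{\mathrm{(I_RG_R)}} follow by the symmetric argument on the dual clauses.

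The main obstacle is the bookkeeping around \ensuremath{\Varid{sets}}: it operates in two regimes according to whether the view list or the state list runs out first, and, because the combinator is retentive, the stored state list may be strictly longer than the active length. Proving \ensuremath{\mathrm{(G_LS_L)}} therefore hinges on establishing and exploiting the invariant \ensuremath{\Varid{n}\leq \Varid{length}\;\Varid{cs}} on reachable states---without it the law fails, since re-reading would shorten the stored length. A secondary subtlety is that all of this reasoning happens under the \ensuremath{\Conid{T}}-effects incurred by \ensuremath{\Varid{sets}} and \ensuremath{bx\mathord{.}\Varid{init}_{L}}, so the pointwise cancellations must be stated as equations between effectful computations, as in facts~(i)--(iii), rather than between pure values.
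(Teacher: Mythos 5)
Your proof is correct, and its overall architecture matches the paper's: collapse the \ensuremath{\Varid{mapM}} in the get operations using transparency of \ensuremath{bx} and the fact that \ensuremath{\Varid{lift}} preserves \ensuremath{\Varid{return}}, so that \ensuremath{\Varid{gl}\mathrel{=}\Varid{gets}\;(\lambda(\Varid{n},\Varid{cs})\to\Varid{map}\;(bx\mathord{.}\Varid{read}_{L})\;(\Varid{take}\;\Varid{n}\;\Varid{cs}))} (giving the three get/get laws for free); then distil pointwise, effect-level consequences of \ensuremath{bx}'s laws and push them through \ensuremath{\Varid{sets}} by induction on the view list. Two differences are worth recording. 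First, you obtain $\mathrm{(I_LG_L)}$ as the instance of your \ensuremath{\Varid{sets}}/\ensuremath{\Varid{mapM}} read-back lemma with an empty state list (since \ensuremath{\Varid{sets}\;\Varid{s}\;\Varid{i}\;\Varid{as}\;[\mskip1.5mu\mskip1.5mu]\mathrel{=}\Varid{mapM}\;\Varid{i}\;\Varid{as}}), whereas the paper proves it by a separate direct calculation; this is a small but pleasant unification. Second, and more substantively, your treatment of $\mathrm{(G_LS_L)}$ is more careful than the paper's. The paper's calculation invokes its equal-length simplification \ensuremath{\Varid{sets}\;\Varid{s}\;\Varid{i}\;\Varid{as}\;\Varid{cs}\mathrel{=}\Varid{mapM}\;(\Varid{uncurry}\;\Varid{s})\;(\Varid{zip}\;\Varid{as}\;\Varid{cs})} and the identity \ensuremath{\Varid{zip}\;(\Varid{map}\;\Varid{f}\;\Varid{xs})\;\Varid{xs}\mathrel{=}\Varid{map}\;(\lambda \Varid{x}\to(\Varid{f}\;\Varid{x},\Varid{x}))\;\Varid{xs}}, both of which literally require \ensuremath{\Varid{n}\mathrel{=}\Varid{length}\;\Varid{cs}}; but because the combinator is retentive, that equality is not preserved (a shortening set leaves \ensuremath{\Varid{n}\mathbin{<}\Varid{length}\;\Varid{cs}}), so only your invariant \ensuremath{\Varid{n}\leq\Varid{length}\;\Varid{cs}} is actually maintained, and indeed the law fails outright on states violating it. Your explicit statement of this invariant, together with the case split matching the clauses of \ensuremath{\Varid{sets}} (pointwise cancellation via \ensuremath{\Varid{exec}\;(bx\mathord{.}\set{L}\;(bx\mathord{.}\Varid{read}_{L}\;\Varid{c}))\;\Varid{c}\mathrel{=}\Varid{return}\;\Varid{c}} on the active prefix, leftovers retained), repairs this gap; it is the right way to finish the argument.
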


\subsection{Effectful \bx{}}

We now consider examples of \bx{} that make nontrivial
use of monadic effects.  
The careful consideration we paid earlier to the requirements for
composability give rise to some interesting and non-obvious
constraints on the definitions, which we highlight as we go.

For accessibility, we use specific monads in the examples in
order to state and prove properties; for generality, the accompanying code 
abstracts from specific monads using
Haskell type class constraints instead.
In the interests of brevity, we omit dual cases and 
initialisation functions, but these are defined in the 
Appendix. 

\begin{example}[environment]\label{ex:switch}
  The \ensuremath{\Conid{Reader}} or environment monad is useful for modelling
  global parameters. Some classes of bidirectional transformations are
  naturally parametrised; for example, Voigtl\"ander \etal's approach
  \cite{voigtlaender13jfp} uses a \ensuremath{\Varid{bias}} parameter to determine how to merge
  changes back into lists.

Suppose we have a family of
\bx{} indexed by some parameter \ensuremath{\gamma}, over a monad \ensuremath{\Conid{Reader}\;\gamma}.  Then we can
define
\begin{hscode}\SaveRestoreHook
\column{B}{@{}>{\hspre}l<{\hspost}@{}}%
\column{3}{@{}>{\hspre}l<{\hspost}@{}}%
\column{12}{@{}>{\hspre}l<{\hspost}@{}}%
\column{15}{@{}>{\hspre}c<{\hspost}@{}}%
\column{15E}{@{}l@{}}%
\column{18}{@{}>{\hspre}l<{\hspost}@{}}%
\column{E}{@{}>{\hspre}l<{\hspost}@{}}%
\>[B]{}\Varid{switch}\mathbin{::}{}\<[12]%
\>[12]{}(\gamma\hsarrow{\rightarrow }{\mathpunct{.}}\Conid{StateTBX}\;(\Conid{Reader}\;\gamma)\;\sigma\;\alpha\;\beta)\hsarrow{\rightarrow }{\mathpunct{.}}\Conid{StateTBX}\;(\Conid{Reader}\;\gamma)\;\sigma\;\alpha\;\beta{}\<[E]%
\\
\>[B]{}\Varid{switch}\;\Varid{f}\mathrel{=}\Conid{BX}\;\Varid{gl}\;\Varid{sl}\;\Varid{gr}\;\Varid{sr}\;\mathbf{where}{}\<[E]%
\\
\>[B]{}\hsindent{3}{}\<[3]%
\>[3]{}\Varid{gl}{}\<[15]%
\>[15]{}\mathrel{=}{}\<[15E]%
\>[18]{}\mathbf{do}\;\{\mskip1.5mu \Varid{c}\leftarrow \Varid{lift}\;\Varid{ask};(\Varid{f}\;\Varid{c})\mathord{.}\get{L}\mskip1.5mu\}{}\<[E]%
\\
\>[B]{}\hsindent{3}{}\<[3]%
\>[3]{}\Varid{sl}\;\Varid{a}{}\<[15]%
\>[15]{}\mathrel{=}{}\<[15E]%
\>[18]{}\mathbf{do}\;\{\mskip1.5mu \Varid{c}\leftarrow \Varid{lift}\;\Varid{ask};(\Varid{f}\;\Varid{c})\mathord{.}\set{L}\;\Varid{a}\mskip1.5mu\}{}\<[E]%
\\
\>[B]{}\hsindent{3}{}\<[3]%
\>[3]{}\Varid{gr}{}\<[15]%
\>[15]{}\mathrel{=}{}\<[15E]%
\>[18]{}... \mbox{\onelinecomment dual}{}\<[E]%
\\
\>[B]{}\hsindent{3}{}\<[3]%
\>[3]{}\Varid{sr}{}\<[15]%
\>[15]{}\mathrel{=}{}\<[15E]%
\>[18]{}... \mbox{\onelinecomment dual}{}\<[E]%
\ColumnHook
\end{hscode}\resethooks
where the standard \ensuremath{\Varid{ask}\mathbin{::}\Conid{Reader}\;\gamma} operation reads the \ensuremath{\gamma} value.
\end{example}

\begin{proposition}\label{prop:reader-wb}
If \ensuremath{\Varid{f}\;\Varid{c}\mathbin{::}\Conid{StateTBX}\;(\Conid{Reader}\;\Conid{C})\;\Conid{S}\;\Conid{A}\;\Conid{B}} is transparent for any \ensuremath{\Varid{c}\mathbin{::}\Conid{C}}, then \ensuremath{\Varid{switch}\;\Varid{f}} is a well-behaved, but not necessarily transparent, \ensuremath{\Conid{StateTBX}\;(\Conid{Reader}\;\Conid{C})\;\Conid{S}\;\Conid{A}\;\Conid{B}}.
\end{proposition}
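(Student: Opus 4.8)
The plan is to reduce each of the seven well-behavedness laws for $\Varid{switch}\;\Varid{f}$ to the corresponding law for a single $\Varid{f}\;\Varid{c}$, which holds because each $\Varid{f}\;\Varid{c}$ is transparent and hence well-behaved. The whole argument rests on one auxiliary observation about the monad $\Conid{StateT}\;\Conid{S}\;(\Conid{Reader}\;\Conid{C})$: the lifted query $\Varid{lift}\;\Varid{ask}$ is \emph{central}, \emph{idempotent}, and \emph{discardable}. Centrality means $\Varid{lift}\;\Varid{ask}$ commutes with every $\Varid{n}\mathbin{::}\Conid{StateT}\;\Conid{S}\;(\Conid{Reader}\;\Conid{C})\;\beta$; idempotence means two successive $\Varid{lift}\;\Varid{ask}$s return equal values; and discardability means an unused $\Varid{lift}\;\Varid{ask}$ may be dropped. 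All three follow from the immutability of the reader environment: I would prove centrality by a direct semantic calculation on the representation $\Conid{S}\to\Conid{C}\to(\beta,\Conid{S})$, observing that the value delivered by $\Varid{ask}$ depends only on the unchanging environment and is therefore insensitive to any interleaved state manipulation. Alternatively one can assemble it from the commutativity of $\Conid{Reader}$ together with Lemma~\ref{lem:liftings-commute}.

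With this lemma in hand, each law is discharged uniformly. Every operation of $\Varid{switch}\;\Varid{f}$ has the shape ``read $\Varid{c}$ with $\Varid{lift}\;\Varid{ask}$, then invoke the matching operation of $\Varid{f}\;\Varid{c}$''. In a law such as $\mathrm{(S_LG_L)}$ or $\mathrm{(G_LG_R)}$, two or three such operations appear in sequence, each prefixed by its own $\Varid{lift}\;\Varid{ask}$. Using centrality I pull all these $\Varid{lift}\;\Varid{ask}$ invocations to the front of the $\mathbf{do}$-block, and using idempotence I coalesce them into a single binding $\Varid{c}\leftarrow \Varid{lift}\;\Varid{ask}$; the remaining body is then literally the corresponding law for $\Varid{f}\;\Varid{c}$ at the fixed $\Varid{c}$. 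I rewrite using that law (valid since $\Varid{f}\;\Varid{c}$ is well-behaved), and then, where needed, re-expand the shared $\Varid{c}$ back into separate $\Varid{lift}\;\Varid{ask}$s (the reverse of the coalescing step) to recover the right-hand side. For $\mathrm{(G_LS_L)}$ the body collapses to $\Varid{return}\;()$, and the leading $\Varid{lift}\;\Varid{ask}$ is then removed by discardability. The right-hand laws $\mathrm{(G_RG_R)}$, $\mathrm{(S_RG_R)}$, $\mathrm{(G_RS_R)}$ are symmetric, using $\get{R},\set{R}$ in place of $\get{L},\set{L}$.

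Finally I would observe that $\Varid{switch}\;\Varid{f}$ need not be transparent: its left get is $\mathbf{do}\;\{\Varid{c}\leftarrow \Varid{lift}\;\Varid{ask};(\Varid{f}\;\Varid{c})\mathord{.}\get{L}\}$, which genuinely consults the environment whenever $(\Varid{f}\;\Varid{c})\mathord{.}\Varid{read}_{L}$ depends on $\Varid{c}$, and so is not a $\Conid{Reader}$-pure query in the sense of Definition~\ref{def:proper}. A one-line counterexample --- a family $\Varid{f}$ whose left read value varies with $\Varid{c}$ --- makes this precise, confirming the ``not necessarily transparent'' clause.

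I expect the main obstacle to be the centrality lemma. Lemma~\ref{lem:liftings-commute} only asserts that liftings commute with $\Varid{get}$ and $\Varid{set}$, whereas here I need $\Varid{lift}\;\Varid{ask}$ to commute with \emph{arbitrary} computations, in particular with the possibly-effectful $\set{L}$ and $\set{R}$ of $\Varid{f}\;\Varid{c}$ (transparency constrains only the gets). Establishing this in full generality requires either the direct semantic computation or an induction on the structure of computations with the commutation lemma as base case; the semantic route is cleaner and is the one I would carry out. Everything else is routine $\mathbf{do}$-notation rewriting.
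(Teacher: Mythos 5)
Your proof is correct and follows essentially the same route as the paper's: the paper likewise reduces each law for $\Varid{switch}\;\Varid{f}$ to the corresponding law for $\Varid{f}\;\Varid{c}$ by commuting, coalescing (copyability) and discarding $\Varid{lift}\;\Varid{ask}$, and refutes transparency with a non-constant family $\Varid{f}$ (e.g.\ selecting between $\Varid{fstBX}$ and $\Varid{sndBX}$ on a Boolean environment). The only difference is explicitness: the paper asserts the centrality, copyability and discardability of $\Varid{lift}\;\Varid{ask}$ as unproved proof-step annotations (resting on commutativity of $\Conid{Reader}$), whereas you correctly note that Lemma~\ref{lem:liftings-commute} does not by itself license commuting past the effectful sets and propose to prove the centrality lemma semantically, which is a legitimate and slightly more careful rendering of the same argument.
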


\begin{remark*}
Note that \ensuremath{\Varid{switch}\;\Varid{f}} is well-behaved but not necessarily transparent.  
This is because the
\ensuremath{\Varid{get}} operations read not only from the designated state of the
\ensuremath{\Conid{StateTBX}} but also from the \ensuremath{\Conid{Reader}} environment, and so they are not
\ensuremath{(\Conid{Reader}\;\Conid{C})}-pure. This turns out not to be a big problem in this
case, because \ensuremath{\Conid{Reader}\;\Conid{C}} is a commutative monad. But suppose that the
underlying monad were not \ensuremath{\Conid{Reader}\;\Conid{C}} but
a non-commutative monad such as \ensuremath{\Conid{State}\;\Conid{C}}, maintaining some
flag that may be changed by the \ensuremath{\Varid{set}} operations; in this scenario, it is not
difficult to construct a counterexample to the identity laws for
composition.
Such counterexamples are why 
we have largely restricted attention in this paper to transparent \bx.
(Besides, one what argue that it is never necessary for the get
operations to depend on the context; any such dependencies could be
handled entirely by the set operations.)
\end{remark*}

\begin{example*}[exceptions]\label{ex:exceptions}
  We turn next to the possibility of failure. Conventionally, the functions
  defining a \bx{} are required to be total, but often it is not
  possible to constrain the source and view types enough to make this
  literally true; for example, consider a \bx{} relating two \ensuremath{\Conid{Rational}}
  views whose consistency relation is $\{(x, \fraction 1 x) \mid x \neq 0\}$.
A principled approach to failure
is to use the \ensuremath{\Conid{Maybe}} (exception) monad, so that an attempt
to divide by zero yields \ensuremath{\Conid{Nothing}}.
\begin{hscode}\SaveRestoreHook
\column{B}{@{}>{\hspre}l<{\hspost}@{}}%
\column{3}{@{}>{\hspre}l<{\hspost}@{}}%
\column{15}{@{}>{\hspre}c<{\hspost}@{}}%
\column{15E}{@{}l@{}}%
\column{18}{@{}>{\hspre}l<{\hspost}@{}}%
\column{E}{@{}>{\hspre}l<{\hspost}@{}}%
\>[B]{}\Varid{invBX}\mathbin{::}\Conid{StateTBX}\;\Conid{Maybe}\;\Conid{Rational}\;\Conid{Rational}\;\Conid{Rational}{}\<[E]%
\\
\>[B]{}\Varid{invBX}\mathrel{=}\Conid{BX}\;\Varid{get}\;\set{L}\;(\Varid{gets}\;(\lambda \hslambda \Varid{a}\hsarrow{\rightarrow }{\mathpunct{.}}\fraction{\mathrm{1}}{\Varid{a}}))\;\set{R}\;\mathbf{where}{}\<[E]%
\\
\>[B]{}\hsindent{3}{}\<[3]%
\>[3]{}\set{L}\;\Varid{a}{}\<[15]%
\>[15]{}\mathrel{=}{}\<[15E]%
\>[18]{}\mathbf{do}\;\{\mskip1.5mu \Varid{lift}\;(\Varid{guard}\;(\Varid{a}\notequals\mathrm{0}));\Varid{set}\;\Varid{a}\mskip1.5mu\}{}\<[E]%
\\
\>[B]{}\hsindent{3}{}\<[3]%
\>[3]{}\set{R}\;\Varid{b}{}\<[15]%
\>[15]{}\mathrel{=}{}\<[15E]%
\>[18]{}\mathbf{do}\;\{\mskip1.5mu \Varid{lift}\;(\Varid{guard}\;(\Varid{b}\notequals\mathrm{0}));\Varid{set}\;(\fraction{\mathrm{1}}{\Varid{b}})\mskip1.5mu\}{}\<[E]%
\ColumnHook
\end{hscode}\resethooks
where \ensuremath{\Varid{guard}\;\Varid{b}\mathrel{=}\mathbf{do}\;\{\mskip1.5mu \mathbf{if}\;\Varid{b}\;\mathbf{then}\;\Conid{Just}\;()\;\mathbf{else}\;\Conid{Nothing}\mskip1.5mu\}} is a standard
operation in the \ensuremath{\Conid{Maybe}} monad.  As another example, suppose we know that \ensuremath{\Conid{A}} is in
the \ensuremath{\Conid{Read}} and \ensuremath{\Conid{Show}} type classes, so each \ensuremath{\Conid{A}} value can be printed
to and possibly read from a string.  We can define:
\begin{hscode}\SaveRestoreHook
\column{B}{@{}>{\hspre}l<{\hspost}@{}}%
\column{3}{@{}>{\hspre}l<{\hspost}@{}}%
\column{12}{@{}>{\hspre}l<{\hspost}@{}}%
\column{16}{@{}>{\hspre}l<{\hspost}@{}}%
\column{20}{@{}>{\hspre}l<{\hspost}@{}}%
\column{24}{@{}>{\hspre}l<{\hspost}@{}}%
\column{38}{@{}>{\hspre}l<{\hspost}@{}}%
\column{E}{@{}>{\hspre}l<{\hspost}@{}}%
\>[B]{}\Varid{readSomeBX}\mathbin{::}{}\<[16]%
\>[16]{}(\Conid{Read}\;\alpha,\Conid{Show}\;\alpha)\Rightarrow \Conid{StateTBX}\;\Conid{Maybe}\;(\alpha,\Conid{String})\;\alpha\;\Conid{String}{}\<[E]%
\\
\>[B]{}\Varid{readSomeBX}\mathrel{=}\Conid{BX}\;(\Varid{gets}\;\Varid{fst})\;\set{L}\;(\Varid{gets}\;\Varid{snd})\;\set{R}\;\mathbf{where}{}\<[E]%
\\
\>[B]{}\hsindent{3}{}\<[3]%
\>[3]{}\set{L}\;\Varid{a'}{}\<[12]%
\>[12]{}\mathrel{=}\Varid{set}\;(\Varid{a'},\Varid{show}\;\Varid{a'}){}\<[E]%
\\
\>[B]{}\hsindent{3}{}\<[3]%
\>[3]{}\set{R}\;\Varid{b'}{}\<[12]%
\>[12]{}\mathrel{=}\mathbf{do}\;\{\mskip1.5mu {}\<[20]%
\>[20]{}(\anonymous ,\Varid{b})\leftarrow \Varid{get};{}\<[E]%
\\
\>[20]{}\mathbf{if}\;\Varid{b}\equals\Varid{b'}\;\mathbf{then}\;\Varid{return}\;()\;\mathbf{else}\;\mathbf{case}\;\Varid{reads}\;\Varid{b'}\;\mathbf{of}{}\<[E]%
\\
\>[20]{}\hsindent{4}{}\<[24]%
\>[24]{}((\Varid{a'},\text{\tt \char34 \char34})\mathbin{:\char95 }){}\<[38]%
\>[38]{}\hsarrow{\rightarrow }{\mathpunct{.}}\Varid{set}\;(\Varid{a'},\Varid{b'}){}\<[E]%
\\
\>[20]{}\hsindent{4}{}\<[24]%
\>[24]{}\anonymous {}\<[38]%
\>[38]{}\hsarrow{\rightarrow }{\mathpunct{.}}\Varid{lift}\;\Conid{Nothing}\mskip1.5mu\}{}\<[E]%
\ColumnHook
\end{hscode}\resethooks
(The function \ensuremath{\Varid{reads}} returns a list of possible parses with remaining text.)
Note that the get operations are \ensuremath{\Conid{Maybe}}-pure: if there is a \ensuremath{\Conid{Read}} error, 
it is raised instead by the set operations.

The same approach can be generalised to any monad \ensuremath{\Conid{T}} having a polymorphic error value
\ensuremath{\Varid{err}\mathbin{::}\forall \alpha\hsforall \hsdot{\cdot }{.}\Conid{T}\;\alpha} and any pair of partial
inverse functions \ensuremath{\Varid{f}\mathbin{::}\Conid{A}\hsarrow{\rightarrow }{\mathpunct{.}}\Conid{Maybe}\;\Conid{B}} and \ensuremath{\Varid{g}\mathbin{::}\Conid{B}\hsarrow{\rightarrow }{\mathpunct{.}}\Conid{Maybe}\;\Conid{A}}
(\ie, \ensuremath{\Varid{f}\;\Varid{a}\mathrel{=}\Conid{Just}\;\Varid{b}} if and only if \ensuremath{\Varid{g}\;\Varid{b}\mathrel{=}\Conid{Just}\;\Varid{a}}, for all \ensuremath{\Varid{a},\Varid{b}}): 
\begin{hscode}\SaveRestoreHook
\column{B}{@{}>{\hspre}l<{\hspost}@{}}%
\column{3}{@{}>{\hspre}l<{\hspost}@{}}%
\column{12}{@{}>{\hspre}c<{\hspost}@{}}%
\column{12E}{@{}l@{}}%
\column{15}{@{}>{\hspre}l<{\hspost}@{}}%
\column{29}{@{}>{\hspre}l<{\hspost}@{}}%
\column{38}{@{}>{\hspre}l<{\hspost}@{}}%
\column{E}{@{}>{\hspre}l<{\hspost}@{}}%
\>[B]{}\Varid{partialBX}\mathbin{::}{}\<[15]%
\>[15]{}\Conid{Monad}\;\tau\Rightarrow (\forall \alpha\hsforall \hsdot{\cdot }{.}\tau\;\alpha)\hsarrow{\rightarrow }{\mathpunct{.}}(\alpha\hsarrow{\rightarrow }{\mathpunct{.}}\Conid{Maybe}\;\beta)\hsarrow{\rightarrow }{\mathpunct{.}}(\beta\hsarrow{\rightarrow }{\mathpunct{.}}\Conid{Maybe}\;\alpha)\hsarrow{\rightarrow }{\mathpunct{.}}{}\<[E]%
\\
\>[15]{}\Conid{StateTBX}\;\tau\;(\alpha,\beta)\;\alpha\;\beta{}\<[E]%
\\
\>[B]{}\Varid{partialBX}\;\Varid{err}\;\Varid{f}\;\Varid{g}\mathrel{=}\Conid{BX}\;(\Varid{gets}\;\Varid{fst})\;\set{L}\;(\Varid{gets}\;\Varid{snd})\;\set{R}\;\mathbf{where}{}\<[E]%
\\
\>[B]{}\hsindent{3}{}\<[3]%
\>[3]{}\set{L}\;\Varid{a'}{}\<[12]%
\>[12]{}\mathrel{=}{}\<[12E]%
\>[15]{}\mathbf{case}\;\Varid{f}\;\Varid{a'}\;\mathbf{of}\;{}\<[29]%
\>[29]{}\Conid{Just}\;\Varid{b'}{}\<[38]%
\>[38]{}\hsarrow{\rightarrow }{\mathpunct{.}}\Varid{set}\;(\Varid{a'},\Varid{b'}){}\<[E]%
\\
\>[29]{}\Conid{Nothing}{}\<[38]%
\>[38]{}\hsarrow{\rightarrow }{\mathpunct{.}}\Varid{lift}\;\Varid{err}{}\<[E]%
\\
\>[B]{}\hsindent{3}{}\<[3]%
\>[3]{}\set{R}\;\Varid{b'}{}\<[12]%
\>[12]{}\mathrel{=}{}\<[12E]%
\>[15]{}\mathbf{case}\;\Varid{g}\;\Varid{b'}\;\mathbf{of}\;{}\<[29]%
\>[29]{}\Conid{Just}\;\Varid{a'}{}\<[38]%
\>[38]{}\hsarrow{\rightarrow }{\mathpunct{.}}\Varid{set}\;(\Varid{a'},\Varid{b'}){}\<[E]%
\\
\>[29]{}\Conid{Nothing}{}\<[38]%
\>[38]{}\hsarrow{\rightarrow }{\mathpunct{.}}\Varid{lift}\;\Varid{err}{}\<[E]%
\ColumnHook
\end{hscode}\resethooks
Then we 
could define \ensuremath{\Varid{invBX}} and a stricter variation of \ensuremath{\Varid{readSomeBX}} 
\hyphenation{white-space}
(one that will \ensuremath{\Varid{read}} only a string that it \ensuremath{\Varid{show}}s---rejecting alternative renderings, whitespace, and so on)
as instances of \ensuremath{\Varid{partialBX}}.
\end{example*}

\begin{proposition}\label{prop:partial-wb}
  Let \ensuremath{\Varid{f}\mathbin{::}\Conid{A}\hsarrow{\rightarrow }{\mathpunct{.}}\Conid{Maybe}\;\Conid{B}} and \ensuremath{\Varid{g}\mathbin{::}\Conid{B}\hsarrow{\rightarrow }{\mathpunct{.}}\Conid{Maybe}\;\Conid{A}} be partial inverses
  and let \ensuremath{\Varid{err}} be a zero element for  \ensuremath{\Conid{T}}.  Then
  \ensuremath{\Varid{partialBX}\;\Varid{err}\;\Varid{f}\;\Varid{g}\mathbin{::}\Conid{StateTBX}\;\Conid{T}\;\Conid{S}\;\Conid{A}\;\Conid{B}} is well-behaved, where \ensuremath{\Conid{S}\mathrel{=}\{\mskip1.5mu (\Varid{a},\Varid{b})\mid \Varid{f}\;\Varid{a}\mathrel{=}\Conid{Just}\;\Varid{b}\mskip1.5mu\}}.
\end{proposition}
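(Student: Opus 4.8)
The plan is to verify the seven well-behavedness laws for \ensuremath{\Varid{partialBX}\;\Varid{err}\;\Varid{f}\;\Varid{g}}, exploiting both the restricted state space \ensuremath{\Conid{S}=\{(\Varid{a},\Varid{b})\mid \Varid{f}\;\Varid{a}=\Conid{Just}\;\Varid{b}\}} and the partial-inverse hypothesis (\ensuremath{\Varid{f}\;\Varid{a}=\Conid{Just}\;\Varid{b}} precisely when \ensuremath{\Varid{g}\;\Varid{b}=\Conid{Just}\;\Varid{a}}). First I would observe that this \ensuremath{\Conid{BX}} is in fact transparent: \ensuremath{\get{L}=\Varid{gets}\;\Varid{fst}} and \ensuremath{\get{R}=\Varid{gets}\;\Varid{snd}} are \ensuremath{\Conid{T}}-pure queries. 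Hence, as noted just before Definition~\ref{def:proper}, the three get/get laws \ensuremath{\mathrm{(G_LG_L)}}, \ensuremath{\mathrm{(G_RG_R)}} and \ensuremath{\mathrm{(G_LG_R)}} hold for free. I would also record a well-definedness check, namely that each \ensuremath{\Varid{set}} in the \ensuremath{\Conid{Just}} branches lands back inside \ensuremath{\Conid{S}}: the \ensuremath{\set{L}} branch sets \ensuremath{(\Varid{a'},\Varid{b'})} with \ensuremath{\Varid{f}\;\Varid{a'}=\Conid{Just}\;\Varid{b'}}, so \ensuremath{(\Varid{a'},\Varid{b'})\in\Conid{S}} directly, and the \ensuremath{\set{R}} branch sets \ensuremath{(\Varid{a'},\Varid{b'})} with \ensuremath{\Varid{g}\;\Varid{b'}=\Conid{Just}\;\Varid{a'}}, whence \ensuremath{\Varid{f}\;\Varid{a'}=\Conid{Just}\;\Varid{b'}} by partial inversion.

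The remaining four laws all involve \ensuremath{\set{L}} or \ensuremath{\set{R}}, each of which branches on \ensuremath{\Varid{f}} (resp.\ \ensuremath{\Varid{g}}). The cleanest way to discharge them is to unfold to the underlying \ensuremath{\Conid{StateT}} function representation \ensuremath{\Conid{S}\to\Conid{T}\;(\cdot,\Conid{S})}, so that the case analysis can be resolved against the explicit current state. A key preliminary step is to show that \ensuremath{\Varid{lift}\;\Varid{err}} is a zero element of \ensuremath{\Conid{StateT}\;\Conid{S}\;\Conid{T}}: since \ensuremath{\Varid{err}} is a zero of \ensuremath{\Conid{T}}, the first zero equation gives \ensuremath{\Varid{lift}\;\Varid{err}=\lambda \Varid{s}\to\Varid{err}}, and both zero equations for \ensuremath{\Conid{StateT}\;\Conid{S}\;\Conid{T}} then follow pointwise from those for \ensuremath{\Conid{T}}. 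With this in hand, the \ensuremath{\Conid{Nothing}} branches of \ensuremath{\mathrm{(S_LG_L)}} and \ensuremath{\mathrm{(S_RG_R)}} are handled uniformly: when \ensuremath{\Varid{f}\;\Varid{a}=\Conid{Nothing}} both sides reduce to \ensuremath{\Varid{lift}\;\Varid{err}} by the zero property.

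For the \ensuremath{\Conid{Just}} branch of \ensuremath{\mathrm{(S_LG_L)}}: when \ensuremath{\Varid{f}\;\Varid{a}=\Conid{Just}\;\Varid{b}} we have \ensuremath{\set{L}\;\Varid{a}=\Varid{set}\;(\Varid{a},\Varid{b})}, so both sides run \ensuremath{\Varid{set}\;(\Varid{a},\Varid{b})} and then return \ensuremath{\Varid{a}}, using \ensuremath{\mathrm{(SG)}} together with \ensuremath{\Varid{fst}\;(\Varid{a},\Varid{b})=\Varid{a}}. For \ensuremath{\mathrm{(G_LS_L)}}, unfolding yields \ensuremath{\lambda \Varid{s}\to\set{L}\;(\Varid{fst}\;\Varid{s})\;\Varid{s}}; for \ensuremath{\Varid{s}=(\Varid{a}_0,\Varid{b}_0)\in\Conid{S}} the state invariant \ensuremath{\Varid{f}\;\Varid{a}_0=\Conid{Just}\;\Varid{b}_0} forces the \ensuremath{\Conid{Just}} branch with reconstructed pair equal to \ensuremath{\Varid{s}} itself, so the whole computation collapses to \ensuremath{\lambda \Varid{s}\to\Varid{return}\;((),\Varid{s})=\Varid{return}\;()}, which is \ensuremath{\mathrm{(GS)}}. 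The laws \ensuremath{\mathrm{(S_RG_R)}} and \ensuremath{\mathrm{(G_RS_R)}} are proved symmetrically, now invoking \ensuremath{\Varid{g}\;\Varid{b}_0=\Conid{Just}\;\Varid{a}_0} for \ensuremath{(\Varid{a}_0,\Varid{b}_0)\in\Conid{S}}, which is exactly the reverse direction of the partial-inverse hypothesis.

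The main obstacle is the interaction between the internal case analysis of \ensuremath{\set{L}}/\ensuremath{\set{R}} and the state invariant: equational reasoning purely in \ensuremath{\mathbf{do}}-notation cannot resolve which branch is taken without knowing how the argument relates to the current state. Dropping to the \ensuremath{\Conid{StateT}} function level, where the state \ensuremath{\Varid{s}\in\Conid{S}} is explicit and the invariant \ensuremath{\Varid{f}\;\Varid{a}=\Conid{Just}\;\Varid{b}} (equivalently \ensuremath{\Varid{g}\;\Varid{b}=\Conid{Just}\;\Varid{a}}) is available, is precisely what makes the branch determinate and lets the computations collapse as required. Everything else is routine application of the \ensuremath{\Conid{StateT}} laws \ensuremath{\mathrm{(GG)}}, \ensuremath{\mathrm{(SG)}}, \ensuremath{\mathrm{(GS)}} and the zero property.
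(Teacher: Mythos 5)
Your proof is correct and follows essentially the same route as the paper's: transparency disposes of the three get/get laws, well-definedness on \ensuremath{\Conid{S}} is checked the same way, the state invariant \ensuremath{\Varid{f}\;\Varid{a}\mathrel{=}\Conid{Just}\;\Varid{b}} (equivalently \ensuremath{\Varid{g}\;\Varid{b}\mathrel{=}\Conid{Just}\;\Varid{a}}) resolves the case analysis in the get-then-set laws via \ensuremath{\mathrm{(GS)}}, and the set-then-get laws split on \ensuremath{\Conid{Just}}/\ensuremath{\Conid{Nothing}}, using \ensuremath{\mathrm{(SG)}} in the success branch and the zero property of \ensuremath{\Varid{err}} in the failure branch. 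Your explicit preliminary lemma that \ensuremath{\Varid{lift}\;\Varid{err}} is a zero element of \ensuremath{\Conid{StateT}\;\Conid{S}\;\Conid{T}}, and your preference for unfolding to the function representation rather than reasoning in \ensuremath{\mathbf{do}}-notation with the invariant cited inline, are presentational refinements of the paper's argument rather than a different one.
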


\begin{example*}[nondeterminism---Scenario~\ref{ex:nondeterminism} revisited]
For simplicity's sake, we model nondeterminism via the list monad:
a `nondeterministic function' from \ensuremath{\Conid{A}} to \ensuremath{\Conid{B}} is represented as a pure
function of type \ensuremath{\Conid{A}\hsarrow{\rightarrow }{\mathpunct{.}}[\mskip1.5mu \Conid{B}\mskip1.5mu]}. The following bx is parametrised on a
predicate \ensuremath{\Varid{ok}} that checks consistency of two states, a
fix-up function \ensuremath{\Varid{bs}} that returns the \ensuremath{\Conid{B}} values consistent with a given
\ensuremath{\Conid{A}}, and symmetrically a fix-up function \ensuremath{\Varid{as}}.
\begin{hscode}\SaveRestoreHook
\column{B}{@{}>{\hspre}l<{\hspost}@{}}%
\column{3}{@{}>{\hspre}l<{\hspost}@{}}%
\column{12}{@{}>{\hspre}l<{\hspost}@{}}%
\column{14}{@{}>{\hspre}l<{\hspost}@{}}%
\column{20}{@{}>{\hspre}l<{\hspost}@{}}%
\column{22}{@{}>{\hspre}l<{\hspost}@{}}%
\column{E}{@{}>{\hspre}l<{\hspost}@{}}%
\>[B]{}\Varid{nondetBX}\mathbin{::}{}\<[14]%
\>[14]{}(\alpha\hsarrow{\rightarrow }{\mathpunct{.}}\beta\hsarrow{\rightarrow }{\mathpunct{.}}\Conid{Bool})\hsarrow{\rightarrow }{\mathpunct{.}}(\alpha\hsarrow{\rightarrow }{\mathpunct{.}}[\mskip1.5mu \beta\mskip1.5mu])\hsarrow{\rightarrow }{\mathpunct{.}}(\beta\hsarrow{\rightarrow }{\mathpunct{.}}[\mskip1.5mu \alpha\mskip1.5mu])\hsarrow{\rightarrow }{\mathpunct{.}}{}\<[E]%
\\
\>[14]{}\Conid{StateTBX}\;[\mskip1.5mu \mskip1.5mu]\;(\alpha,\beta)\;\alpha\;\beta{}\<[E]%
\\
\>[B]{}\Varid{nondetBX}\;\Varid{ok}\;\Varid{bs}\;\Varid{as}\mathrel{=}\Conid{BX}\;(\Varid{gets}\;\Varid{fst})\;\set{L}\;(\Varid{gets}\;\Varid{snd})\;\set{R}\;\mathbf{where}{}\<[E]%
\\
\>[B]{}\hsindent{3}{}\<[3]%
\>[3]{}\set{L}\;\Varid{a'}{}\<[12]%
\>[12]{}\mathrel{=}\mathbf{do}\;\{\mskip1.5mu {}\<[20]%
\>[20]{}(\Varid{a},\Varid{b})\leftarrow \Varid{get};{}\<[E]%
\\
\>[20]{}\mathbf{if}\;\Varid{ok}\;\Varid{a'}\;\Varid{b}\;\mathbf{then}\;\Varid{set}\;(\Varid{a'},\Varid{b})\;\mathbf{else}{}\<[E]%
\\
\>[20]{}\hsindent{2}{}\<[22]%
\>[22]{}\mathbf{do}\;\{\mskip1.5mu \Varid{b'}\leftarrow \Varid{lift}\;(\Varid{bs}\;\Varid{a'});\Varid{set}\;(\Varid{a'},\Varid{b'})\mskip1.5mu\}\mskip1.5mu\}{}\<[E]%
\\
\>[B]{}\hsindent{3}{}\<[3]%
\>[3]{}\set{R}\;\Varid{b'}{}\<[12]%
\>[12]{}\mathrel{=}\mathbf{do}\;\{\mskip1.5mu {}\<[20]%
\>[20]{}(\Varid{a},\Varid{b})\leftarrow \Varid{get};{}\<[E]%
\\
\>[20]{}\mathbf{if}\;\Varid{ok}\;\Varid{a}\;\Varid{b'}\;\mathbf{then}\;\Varid{set}\;(\Varid{a},\Varid{b'})\;\mathbf{else}{}\<[E]%
\\
\>[20]{}\hsindent{2}{}\<[22]%
\>[22]{}\mathbf{do}\;\{\mskip1.5mu \Varid{a'}\leftarrow \Varid{lift}\;(\Varid{as}\;\Varid{b'});\Varid{set}\;(\Varid{a'},\Varid{b'})\mskip1.5mu\}\mskip1.5mu\}{}\<[E]%
\ColumnHook
\end{hscode}\resethooks
\endswithdisplay
\end{example*}

\begin{proposition}\label{prop:nondet-wb}
Given \ensuremath{\Varid{ok}}, \ensuremath{\Conid{S}\mathrel{=}\{\mskip1.5mu (\Varid{a},\Varid{b})\mid \Varid{ok}\;\Varid{a}\;\Varid{b}\mskip1.5mu\}}, and \ensuremath{\Varid{as}} and \ensuremath{\Varid{bs}} satisfying
\begin{hscode}\SaveRestoreHook
\column{B}{@{}>{\hspre}l<{\hspost}@{}}%
\column{16}{@{}>{\hspre}c<{\hspost}@{}}%
\column{16E}{@{}l@{}}%
\column{19}{@{}>{\hspre}c<{\hspost}@{}}%
\column{19E}{@{}l@{}}%
\column{23}{@{}>{\hspre}c<{\hspost}@{}}%
\column{23E}{@{}l@{}}%
\column{26}{@{}>{\hspre}l<{\hspost}@{}}%
\column{E}{@{}>{\hspre}l<{\hspost}@{}}%
\>[B]{}\Varid{a}\in\Varid{as}\;\Varid{b}{}\<[16]%
\>[16]{}\:{}\<[16E]%
\>[19]{}\Rightarrow {}\<[19E]%
\>[23]{}\:{}\<[23E]%
\>[26]{}\Varid{ok}\;\Varid{a}\;\Varid{b}{}\<[E]%
\\
\>[B]{}\Varid{b}\in\Varid{bs}\;\Varid{a}{}\<[16]%
\>[16]{}\:{}\<[16E]%
\>[19]{}\Rightarrow {}\<[19E]%
\>[23]{}\:{}\<[23E]%
\>[26]{}\Varid{ok}\;\Varid{a}\;\Varid{b}{}\<[E]%
\ColumnHook
\end{hscode}\resethooks
then \ensuremath{\Varid{nondetBX}\;\Varid{ok}\;\Varid{bs}\;\Varid{as}\mathbin{::}\Conid{StateTBX}\;[\mskip1.5mu \mskip1.5mu]\;\Conid{S}\;\Conid{A}\;\Conid{B}} is well-behaved
(indeed, it is clearly transparent).
It is not necessary for the two conditions to be equivalences.
\end{proposition}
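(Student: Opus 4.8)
The plan is to use transparency to collapse the seven well-behavedness laws down to a small core and then discharge that core by a short case analysis whose only real content is a consistency invariant on the hidden state. First I would observe that \ensuremath{\Varid{nondetBX}\;\Varid{ok}\;\Varid{bs}\;\Varid{as}} is transparent: its get operations are literally \ensuremath{\Varid{gets}\;\Varid{fst}} and \ensuremath{\Varid{gets}\;\Varid{snd}}, so they are \ensuremath{[\mskip1.5mu \mskip1.5mu]}-pure queries, with \ensuremath{\Varid{read}_{L}\mathrel{=}\Varid{fst}} and \ensuremath{\Varid{read}_{R}\mathrel{=}\Varid{snd}}. As noted just before Definition~\ref{def:proper}, the three get/get commutation laws \ensuremath{\mathrm{(G_LG_L)}}, \ensuremath{\mathrm{(G_RG_R)}}, \ensuremath{\mathrm{(G_LG_R)}} then hold for free~\cite{Gibbons&Hinze11:Just}, which also settles the ``clearly transparent'' remark. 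This leaves only the four set-involving laws \ensuremath{\mathrm{(S_LG_L)}}, \ensuremath{\mathrm{(G_LS_L)}}, \ensuremath{\mathrm{(S_RG_R)}}, \ensuremath{\mathrm{(G_RS_R)}}. Since the definition is manifestly symmetric -- interchanging \ensuremath{\Conid{L}}/\ensuremath{\Conid{R}}, \ensuremath{\Varid{fst}}/\ensuremath{\Varid{snd}}, and \ensuremath{\Varid{bs}}/\ensuremath{\Varid{as}} turns the left clauses into the right ones and swaps the two hypotheses -- it suffices to verify the two \ensuremath{\Conid{L}}-laws, with the \ensuremath{\Conid{R}}-laws following by this duality.

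The load-bearing step is a preservation lemma: starting from any state in \ensuremath{\Conid{S}\mathrel{=}\{\mskip1.5mu (\Varid{a},\Varid{b})\mid \Varid{ok}\;\Varid{a}\;\Varid{b}\mskip1.5mu\}}, every branch of \ensuremath{\set{L}} again lands in \ensuremath{\Conid{S}}. On input state \ensuremath{(\Varid{a},\Varid{b})}, the \ensuremath{\mathbf{then}} branch returns \ensuremath{(\Varid{a'},\Varid{b})} exactly under the guard \ensuremath{\Varid{ok}\;\Varid{a'}\;\Varid{b}}, while the \ensuremath{\mathbf{else}} branch returns \ensuremath{(\Varid{a'},\Varid{b'})} for some \ensuremath{\Varid{b'}\in\Varid{bs}\;\Varid{a'}}, which is consistent precisely by the hypothesis \ensuremath{\Varid{b}\in\Varid{bs}\;\Varid{a}\Rightarrow\Varid{ok}\;\Varid{a}\;\Varid{b}}; this is what legitimises working in \ensuremath{\Conid{StateT}\;\Conid{S}\;[\mskip1.5mu \mskip1.5mu]} at all. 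Given this, \ensuremath{\mathrm{(S_LG_L)}} is immediate: in both branches the resulting state has first component \ensuremath{\Varid{a'}}, so the subsequent \ensuremath{\get{L}\mathrel{=}\Varid{gets}\;\Varid{fst}} returns \ensuremath{\Varid{a'}} in every nondeterministic alternative, whence \ensuremath{\mathbf{do}\;\{\mskip1.5mu \set{L}\;\Varid{a'};\get{L}\mskip1.5mu\}\mathrel{=}\mathbf{do}\;\{\mskip1.5mu \set{L}\;\Varid{a'};\Varid{return}\;\Varid{a'}\mskip1.5mu\}}. For \ensuremath{\mathrm{(G_LS_L)}} I would read the current \ensuremath{\Varid{a}} with \ensuremath{\get{L}} and then run \ensuremath{\set{L}\;\Varid{a}}; after merging the two state reads via the get/get law \ensuremath{\mathrm{(GG)}} and Lemma~\ref{lem:discardable}, the ambient pair \ensuremath{(\Varid{a},\Varid{b})} lies in \ensuremath{\Conid{S}}, so the guard \ensuremath{\Varid{ok}\;\Varid{a}\;\Varid{b}} holds, the \ensuremath{\mathbf{then}} branch fires, and \ensuremath{\set{L}\;\Varid{a}} performs \ensuremath{\Varid{set}\;(\Varid{a},\Varid{b})} -- rewriting the state to its current value, which collapses to \ensuremath{\Varid{return}\;()} by \ensuremath{\mathrm{(GS)}}.

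The main obstacle, and the place where the hypotheses earn their keep, is precisely this reliance on the consistency invariant. The argument for \ensuremath{\mathrm{(G_LS_L)}} fails outright if \ensuremath{\Conid{S}} is allowed to contain inconsistent pairs, since then \ensuremath{\set{L}} could take its nondeterministic \ensuremath{\mathbf{else}} branch and silently replace the state, breaking the identity-like law; and invariant preservation itself fails without \ensuremath{\Varid{b}\in\Varid{bs}\;\Varid{a}\Rightarrow\Varid{ok}\;\Varid{a}\;\Varid{b}}. I therefore expect the real care to go not into the (short) equational manipulation once the branch structure is fixed, but into pinning down the invariant and checking that restricting the carrier to \ensuremath{\Conid{S}} is both legitimate and exactly what the two implications supply. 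Notably, only these two implications are used and not their converses, which matches the proposition's closing remark that the conditions need not be equivalences.
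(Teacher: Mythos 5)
Your proposal is correct and takes essentially the same route as the paper's own proof: you establish that \ensuremath{\set{L}} preserves the consistency invariant \ensuremath{\Varid{ok}\;\Varid{a}\;\Varid{b}} using exactly the two implications, then discharge \ensuremath{\mathrm{(G_LS_L)}} via \ensuremath{\mathrm{(GG)}}, the invariant, and \ensuremath{\mathrm{(GS)}}, and \ensuremath{\mathrm{(S_LG_L)}} by noting that both branches leave \ensuremath{\Varid{a'}} as the first state component (the paper phrases this via \ensuremath{\mathrm{(SG)}}), with the get/get laws free by transparency and the \ensuremath{\Conid{R}}-laws by symmetry. The only cosmetic quibble is your appeal to Lemma~\ref{lem:discardable} in the \ensuremath{\mathrm{(G_LS_L)}} step: the second read's result is actually used, so \ensuremath{\mathrm{(GG)}} alone is what merges the two gets, just as in the paper.
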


\begin{remark*}
Note that, in addition to choice, the list monad also allows for
failure: the fix-up functions can return
the empty list. From a semantic point of view, nondeterminism is
usually modelled using the monad of finite nonempty sets. 
If we had used the nonempty set monad instead of lists, then failure
would not be possible.  
\end{remark*}

\begin{example*}[signalling] 
  We can define a \bx{} that sends a signal every time either side
  changes:
  \begin{hscode}\SaveRestoreHook
\column{B}{@{}>{\hspre}l<{\hspost}@{}}%
\column{3}{@{}>{\hspre}l<{\hspost}@{}}%
\column{11}{@{}>{\hspre}l<{\hspost}@{}}%
\column{14}{@{}>{\hspre}l<{\hspost}@{}}%
\column{19}{@{}>{\hspre}l<{\hspost}@{}}%
\column{E}{@{}>{\hspre}l<{\hspost}@{}}%
\>[B]{}\Varid{signalBX}\mathbin{::}{}\<[14]%
\>[14]{}(\Conid{Eq}\;\alpha,\Conid{Eq}\;\beta,\Conid{Monad}\;\tau)\Rightarrow (\alpha\hsarrow{\rightarrow }{\mathpunct{.}}\tau\;())\hsarrow{\rightarrow }{\mathpunct{.}}(\beta\hsarrow{\rightarrow }{\mathpunct{.}}\tau\;())\hsarrow{\rightarrow }{\mathpunct{.}}{}\<[E]%
\\
\>[14]{}\Conid{StateTBX}\;\tau\;\sigma\;\alpha\;\beta\hsarrow{\rightarrow }{\mathpunct{.}}\Conid{StateTBX}\;\tau\;\sigma\;\alpha\;\beta{}\<[E]%
\\
\>[B]{}\Varid{signalBX}\;\Varid{sigA}\;\Varid{sigB}\;bx\mathrel{=}\Conid{BX}\;(bx\mathord{.}\get{L})\;\Varid{sl}\;(bx\mathord{.}\get{R})\;\Varid{sr}\;\mathbf{where}{}\<[E]%
\\
\>[B]{}\hsindent{3}{}\<[3]%
\>[3]{}\Varid{sl}\;\Varid{a'}{}\<[11]%
\>[11]{}\mathrel{=}\mathbf{do}\;\{\mskip1.5mu {}\<[19]%
\>[19]{}\Varid{a}\leftarrow bx\mathord{.}\get{L};bx\mathord{.}\set{L}\;\Varid{a'};{}\<[E]%
\\
\>[19]{}\Varid{lift}\;(\mathbf{if}\;\Varid{a}\notequals\Varid{a'}\;\mathbf{then}\;\Varid{sigA}\;\Varid{a'}\;\mathbf{else}\;\Varid{return}\;())\mskip1.5mu\}{}\<[E]%
\\
\>[B]{}\hsindent{3}{}\<[3]%
\>[3]{}\Varid{sr}\;\Varid{b'}{}\<[11]%
\>[11]{}\mathrel{=}\mathbf{do}\;\{\mskip1.5mu {}\<[19]%
\>[19]{}\Varid{b}\leftarrow bx\mathord{.}\get{R};bx\mathord{.}\set{R}\;\Varid{b'};{}\<[E]%
\\
\>[19]{}\Varid{lift}\;(\mathbf{if}\;\Varid{b}\notequals\Varid{b'}\;\mathbf{then}\;\Varid{sigB}\;\Varid{b'}\;\mathbf{else}\;\Varid{return}\;())\mskip1.5mu\}{}\<[E]%
\ColumnHook
\end{hscode}\resethooks
  Note that \ensuremath{\Varid{sl}} checks to see whether the new value \ensuremath{\Varid{a'}}
  equals the old value \ensuremath{\Varid{a}}, and does nothing if so; 
  only if they are different does it perform \ensuremath{\Varid{sigA}\;\Varid{a'}}.
  If the \bx{} is to be well-behaved, then
  no action can be performed 
in the case that
  \ensuremath{\Varid{a}\equals\Varid{a'}}.

For example, instantiating the underlying monad to \ensuremath{\Conid{IO}} we have:
\begin{hscode}\SaveRestoreHook
\column{B}{@{}>{\hspre}l<{\hspost}@{}}%
\column{13}{@{}>{\hspre}l<{\hspost}@{}}%
\column{21}{@{}>{\hspre}l<{\hspost}@{}}%
\column{E}{@{}>{\hspre}l<{\hspost}@{}}%
\>[B]{}\Varid{alertBX}\mathbin{::}{}\<[13]%
\>[13]{}(\Conid{Eq}\;\alpha,\Conid{Eq}\;\beta)\Rightarrow \Conid{StateTBX}\;\Conid{IO}\;\sigma\;\alpha\;\beta\hsarrow{\rightarrow }{\mathpunct{.}}\Conid{StateTBX}\;\Conid{IO}\;\sigma\;\alpha\;\beta{}\<[E]%
\\
\>[B]{}\Varid{alertBX}\mathrel{=}\Varid{signalBX}\;{}\<[21]%
\>[21]{}(\lambda \hslambda \anonymous \hsarrow{\rightarrow }{\mathpunct{.}}\Varid{putStrLn}\;\text{\tt \char34 Left\char34})\;(\lambda \hslambda \anonymous \hsarrow{\rightarrow }{\mathpunct{.}}\Varid{putStrLn}\;\text{\tt \char34 Right\char34}){}\<[E]%
\ColumnHook
\end{hscode}\resethooks
which prints a message whenever one side changes.
This is well-behaved; the \ensuremath{\Varid{set}} operations are
side-effecting, but the side-effects only occur when the state is
changed. It is not overwritable, because multiple changes may lead to
different signals from a single change.   

  As another example, 
we can define a
  logging \bx{} as follows:
\begin{hscode}\SaveRestoreHook
\column{B}{@{}>{\hspre}l<{\hspost}@{}}%
\column{10}{@{}>{\hspre}c<{\hspost}@{}}%
\column{10E}{@{}l@{}}%
\column{14}{@{}>{\hspre}l<{\hspost}@{}}%
\column{33}{@{}>{\hspre}l<{\hspost}@{}}%
\column{E}{@{}>{\hspre}l<{\hspost}@{}}%
\>[B]{}\Varid{logBX}{}\<[10]%
\>[10]{}\mathbin{::}{}\<[10E]%
\>[14]{}(\Conid{Eq}\;\alpha,\Conid{Eq}\;\beta)\Rightarrow {}\<[33]%
\>[33]{}\Conid{StateTBX}\;(\Conid{Writer}\;[\mskip1.5mu \Conid{Either}\;\alpha\;\beta\mskip1.5mu])\;\sigma\;\alpha\;\beta\hsarrow{\rightarrow }{\mathpunct{.}}{}\<[E]%
\\
\>[33]{}\Conid{StateTBX}\;(\Conid{Writer}\;[\mskip1.5mu \Conid{Either}\;\alpha\;\beta\mskip1.5mu])\;\sigma\;\alpha\;\beta{}\<[E]%
\\
\>[B]{}\Varid{logBX}\mathrel{=}\Varid{signalBX}\;(\lambda \hslambda \Varid{a}\hsarrow{\rightarrow }{\mathpunct{.}}\Varid{tell}\;[\mskip1.5mu \Conid{Left}\;\Varid{a}\mskip1.5mu])\;(\lambda \hslambda \Varid{b}\hsarrow{\rightarrow }{\mathpunct{.}}\Varid{tell}\;[\mskip1.5mu \Conid{Right}\;\Varid{b}\mskip1.5mu]){}\<[E]%
\ColumnHook
\end{hscode}\resethooks
where \ensuremath{\Varid{tell}\mathbin{::}\sigma\hsarrow{\rightarrow }{\mathpunct{.}}\Conid{Writer}\;\sigma\;()} is a standard operation in the
\ensuremath{\Conid{Writer}} monad that writes a value to the output.
This \bx{} logs a list of all of the views as they are
changed.
Wrapping a component of a chain of composed \bx{} with \ensuremath{\Varid{log}} can 
provide insight into how changes at the ends of the chain propagate
through that component.  If memory use is a concern, then we could limit the 
length of the list to record only the most recent
updates -- lists of bounded length also form a monoid.
\end{example*}
\begin{proposition}\label{prop:signal-wb}
  If \ensuremath{\Conid{A}} and \ensuremath{\Conid{B}} are types equipped with a well-behaved notion of equality
  (in the sense that \ensuremath{(\Varid{a}\equals\Varid{b})\mathrel{=}\Conid{True}} if and only if \ensuremath{\Varid{a}\mathrel{=}\Varid{b}}), 
  and \ensuremath{bx\mathbin{::}\Conid{StateTBX}\;\Conid{T}\;\Conid{S}\;\Conid{A}\;\Conid{B}} is well-behaved, 
  then \ensuremath{\Varid{signalBX}\;\Varid{sigA}\;\Varid{sigB}\;bx\mathbin{::}\Conid{StateTBX}\;\Conid{T}\;\Conid{S}\;\Conid{A}\;\Conid{B}} is well-behaved. 
  Moreover, \ensuremath{\Varid{signalBX}} preserves transparency. 
\end{proposition}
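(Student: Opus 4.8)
The plan is to verify the seven well-behavedness laws for \ensuremath{\Varid{signalBX}\;\Varid{sigA}\;\Varid{sigB}\;bx} by reducing each to the corresponding law for \ensuremath{bx}, exploiting the fact that the two \ensuremath{\Varid{get}} operations of the result are \emph{literally} those of \ensuremath{bx}. The three get/get laws $\mathrm{(G_LG_L)}$, $\mathrm{(G_RG_R)}$ and $\mathrm{(G_LG_R)}$ are then immediate, since they mention only \ensuremath{\get{L}} and \ensuremath{\get{R}}, which are unchanged. The final clause is equally immediate: if \ensuremath{bx} is transparent, with \ensuremath{bx\mathord{.}\get{L}\mathrel{=}\Varid{gets}\;\Varid{read}_{L}} and \ensuremath{bx\mathord{.}\get{R}\mathrel{=}\Varid{gets}\;\Varid{read}_{R}}, then \ensuremath{\Varid{signalBX}\;\Varid{sigA}\;\Varid{sigB}\;bx} has exactly the same \ensuremath{\Varid{get}} components, so it is transparent with the very same \ensuremath{\Varid{read}_{L}}, \ensuremath{\Varid{read}_{R}}; transparency is thus preserved with no real calculation.

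Next I would dispatch the two get/set laws $\mathrm{(G_LS_L)}$ and $\mathrm{(G_RS_R)}$, where the well-behaved-equality hypothesis does its work. Expanding \ensuremath{\mathbf{do}\;\{\mskip1.5mu \Varid{a}\leftarrow \get{L};\set{L}\;\Varid{a}\mskip1.5mu\}} and inlining \ensuremath{\Varid{sl}}, the body reads \ensuremath{bx\mathord{.}\get{L}} twice in succession and then performs the guarded signal, whose guard compares the value just read against the value being set. By $\mathrm{(G_LG_L)}$ for \ensuremath{bx} the two reads return the same value, so the guard compares a value with itself; the well-behaved-equality assumption forces this comparison to report equality, so the conditional collapses to \ensuremath{\Varid{return}\;()} and the lifted signal vanishes (using \ensuremath{\Varid{lift}\;(\Varid{return}\;())\mathrel{=}\Varid{return}\;()}). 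What remains is \ensuremath{\mathbf{do}\;\{\mskip1.5mu \Varid{a}\leftarrow bx\mathord{.}\get{L};bx\mathord{.}\set{L}\;\Varid{a}\mskip1.5mu\}}, which is \ensuremath{\Varid{return}\;()} by $\mathrm{(G_LS_L)}$ for \ensuremath{bx}. The law $\mathrm{(G_RS_R)}$ is symmetric. These two cases are clean and require no reordering of effects.

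The two set/get laws $\mathrm{(S_LG_L)}$ and $\mathrm{(S_RG_R)}$ are where the real content lies, and I expect the commutation of the lifted signal with the trailing \ensuremath{\Varid{get}} to be the main obstacle. Here the guard may genuinely hold — the new value can differ from the old — so the signal actually fires. However, both sides of $\mathrm{(S_LG_L)}$ share the common prefix \ensuremath{\mathbf{do}\;\{\mskip1.5mu \anonymous \leftarrow bx\mathord{.}\get{L};bx\mathord{.}\set{L}\;\Varid{a};\Varid{lift}\;(\ldots)\mskip1.5mu\}} and differ only in ending with \ensuremath{bx\mathord{.}\get{L}} versus \ensuremath{\Varid{return}\;\Varid{a}}; so it suffices to show that, after \ensuremath{bx\mathord{.}\set{L}\;\Varid{a}} followed by the state-preserving lifted signal, a subsequent \ensuremath{bx\mathord{.}\get{L}} returns \ensuremath{\Varid{a}}. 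The strategy is to commute the lifted signal past the trailing \ensuremath{\get{L}} so that \ensuremath{bx\mathord{.}\set{L}\;\Varid{a}} and \ensuremath{bx\mathord{.}\get{L}} become adjacent, apply $\mathrm{(S_LG_L)}$ for \ensuremath{bx} to conclude the value read is \ensuremath{\Varid{a}}, and then commute the signal back.

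This commutation is exactly the content of Lemma~\ref{lem:liftings-commute}, which applies cleanly when \ensuremath{\get{L}} is a \ensuremath{\Conid{T}}-pure query, i.e.\ for transparent \ensuremath{bx}. The delicate point for the general well-behaved case is that \ensuremath{\get{L}} must be neither changed by, nor able to disturb, the state-preserving lifted signal; this is the step I would scrutinise most carefully, since an effectful \ensuremath{\get{L}} over a non-commutative \ensuremath{\Conid{T}} is precisely the situation in which such reorderings can fail (cf.\ Remark~\ref{rem:effectful-gets}). Finally, I note that overwritability is \emph{not} part of the claim and need not be established — consistent with the observation that \ensuremath{\Varid{signalBX}} is not overwritable, since repeated small changes emit several signals that a single change does not.
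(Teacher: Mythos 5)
Your proposal is correct and follows essentially the same route as the paper's proof: the get/get laws and transparency preservation are immediate because the get operations are unchanged, $\mathrm{(G_LS_L)}$ collapses the duplicated read via $\mathrm{(G_LG_L)}$ and the equality hypothesis so the guard is false and the signal vanishes, and $\mathrm{(S_LG_L)}$ commutes the lifted signal past the trailing get using Lemma~\ref{lem:liftings-commute} before invoking the corresponding law for \ensuremath{bx}. The delicacy you flag in the set/get case is genuine but is equally present in the paper's own proof, which justifies that commutation by treating \ensuremath{bx}'s gets as \ensuremath{\Conid{T}}-pure (the appendix proof opens by asserting exactly this); strictly speaking, that step needs transparency of \ensuremath{bx} or some commutation hypothesis on its get effects, just as you observe.
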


\begin{example*}[interaction---Scenario~\ref{ex:dynamic} revisited]
For this example, we need to record both the current state (an \ensuremath{\Conid{A}}
and a \ensuremath{\Conid{B}}) and the learned 
collection of consistency restorations. 
The latter is represented as two lists; the first list contains a tuple
\ensuremath{((\Varid{a'},\Varid{b}),\Varid{b'})} for each invocation of \ensuremath{\set{L}\;\Varid{a'}} on a state \ensuremath{(\anonymous ,\Varid{b})}
resulting in an updated state \ensuremath{(\Varid{a'},\Varid{b'})}; the second is symmetric,
for \ensuremath{\set{R}\;\Varid{b'}} invocations.
The types \ensuremath{\Conid{A}} and \ensuremath{\Conid{B}} must each
support equality, so that we can check for previously asked questions.
We abstract from the base monad; we parametrise the \bx{} on two
monadic functions, each somehow determining a consistent match
for one state.
\begin{hscode}\SaveRestoreHook
\column{B}{@{}>{\hspre}l<{\hspost}@{}}%
\column{10}{@{}>{\hspre}l<{\hspost}@{}}%
\column{15}{@{}>{\hspre}l<{\hspost}@{}}%
\column{26}{@{}>{\hspre}l<{\hspost}@{}}%
\column{28}{@{}>{\hspre}l<{\hspost}@{}}%
\column{30}{@{}>{\hspre}l<{\hspost}@{}}%
\column{39}{@{}>{\hspre}l<{\hspost}@{}}%
\column{44}{@{}>{\hspre}l<{\hspost}@{}}%
\column{48}{@{}>{\hspre}l<{\hspost}@{}}%
\column{122}{@{}>{\hspre}l<{\hspost}@{}}%
\column{159}{@{}>{\hspre}l<{\hspost}@{}}%
\column{E}{@{}>{\hspre}l<{\hspost}@{}}%
\>[B]{}\Varid{dynamicBX}\mathbin{::}{}\<[15]%
\>[15]{}(\Conid{Eq}\;\alpha,\Conid{Eq}\;\beta,\Conid{Monad}\;\tau)\Rightarrow {}\<[E]%
\\
\>[15]{}(\alpha\hsarrow{\rightarrow }{\mathpunct{.}}\beta\hsarrow{\rightarrow }{\mathpunct{.}}\tau\;\beta)\hsarrow{\rightarrow }{\mathpunct{.}}(\alpha\hsarrow{\rightarrow }{\mathpunct{.}}\beta\hsarrow{\rightarrow }{\mathpunct{.}}\tau\;\alpha)\hsarrow{\rightarrow }{\mathpunct{.}}{}\<[E]%
\\
\>[15]{}\Conid{StateTBX}\;\tau\;((\alpha,\beta),[\mskip1.5mu ((\alpha,\beta),\beta)\mskip1.5mu],[\mskip1.5mu ((\alpha,\beta),\alpha)\mskip1.5mu])\;\alpha\;\beta{}\<[E]%
\\
\>[B]{}\Varid{dynamicBX}\;\Varid{f}\;\Varid{g}\mathrel{=}\Conid{BX}\;(\Varid{gets}\;(\Varid{fst}\hsdot{\cdot }{.}\Varid{fst3}))\;\set{L}\;(\Varid{gets}\;(\Varid{snd}\hsdot{\cdot }{.}\Varid{fst3}))\;\set{R}\;\mathbf{where}{}\<[E]%
\\
\>[B]{}\hsindent{10}{}\<[10]%
\>[10]{}\set{L}\;\Varid{a'}\mathrel{=}\mathbf{do}\;\{\mskip1.5mu {}\<[26]%
\>[26]{}((\Varid{a},\Varid{b}),\Varid{fs},\Varid{bs})\leftarrow \Varid{get};{}\<[E]%
\\
\>[26]{}\mathbf{if}\;\Varid{a}\equals\Varid{a'}\;\mathbf{then}\;\Varid{return}\;()\;\mathbf{else}{}\<[E]%
\\
\>[26]{}\hsindent{2}{}\<[28]%
\>[28]{}\mathbf{case}\;\Varid{lookup}\;(\Varid{a'},\Varid{b})\;\Varid{fs}\;\mathbf{of}{}\<[E]%
\\
\>[28]{}\hsindent{2}{}\<[30]%
\>[30]{}\Conid{Just}\;\Varid{b'}{}\<[39]%
\>[39]{}\hsarrow{\rightarrow }{\mathpunct{.}}\Varid{set}\;((\Varid{a'},\Varid{b'}),\Varid{fs},\Varid{bs}){}\<[E]%
\\
\>[28]{}\hsindent{2}{}\<[30]%
\>[30]{}\Conid{Nothing}{}\<[39]%
\>[39]{}\hsarrow{\rightarrow }{\mathpunct{.}}\mathbf{do}\;\{\mskip1.5mu {}\<[48]%
\>[48]{}\Varid{b'}\leftarrow \Varid{lift}\;(\Varid{f}\;\Varid{a'}\;\Varid{b});{}\<[E]%
\\
\>[48]{}\Varid{set}\;((\Varid{a'},\Varid{b'}),((\Varid{a'},\Varid{b}),\Varid{b'})\mathbin{:}\Varid{fs},\Varid{bs})\mskip1.5mu\}\mskip1.5mu\}{}\<[E]%
\\
\>[B]{}\hsindent{10}{}\<[10]%
\>[10]{}\set{R}\;\Varid{b'}\mathrel{=}... \mbox{\onelinecomment dual}{}\<[44]%
\>[44]{}{}\<[122]%
\>[122]{}{}\<[159]%
\>[159]{}{}\<[E]%
\ColumnHook
\end{hscode}\resethooks
where \ensuremath{\Varid{fst3}\;(\Varid{a},\Varid{b},\Varid{c})\mathrel{=}\Varid{a}}.
For example, the \bx{} below finds matching states by asking the user,
writing to and reading from the terminal.
\begin{hscode}\SaveRestoreHook
\column{B}{@{}>{\hspre}l<{\hspost}@{}}%
\column{17}{@{}>{\hspre}l<{\hspost}@{}}%
\column{21}{@{}>{\hspre}l<{\hspost}@{}}%
\column{E}{@{}>{\hspre}l<{\hspost}@{}}%
\>[B]{}\Varid{dynamicIOBX}\mathbin{::}{}\<[17]%
\>[17]{}(\Conid{Eq}\;\alpha,\Conid{Eq}\;\beta,\Conid{Show}\;\alpha,\Conid{Show}\;\beta,\Conid{Read}\;\alpha,\Conid{Read}\;\beta)\Rightarrow {}\<[E]%
\\
\>[17]{}\Conid{StateTBX}\;\Conid{IO}\;((\alpha,\beta),[\mskip1.5mu ((\alpha,\beta),\beta)\mskip1.5mu],[\mskip1.5mu ((\alpha,\beta),\alpha)\mskip1.5mu])\;\alpha\;\beta{}\<[E]%
\\
\>[B]{}\Varid{dynamicIOBX}\mathrel{=}\Varid{dynamicBX}\;\Varid{matchIO}\;(\Varid{flip}\;\Varid{matchIO}){}\<[E]%
\\[\blanklineskip]%
\>[B]{}\Varid{matchIO}\mathbin{::}(\Conid{Show}\;\alpha,\Conid{Show}\;\beta,\Conid{Read}\;\beta)\Rightarrow \alpha\hsarrow{\rightarrow }{\mathpunct{.}}\beta\hsarrow{\rightarrow }{\mathpunct{.}}\Conid{IO}\;\beta{}\<[E]%
\\
\>[B]{}\Varid{matchIO}\;\Varid{a}\;\Varid{b}\mathrel{=}\mathbf{do}\;\{\mskip1.5mu {}\<[21]%
\>[21]{}\Varid{putStrLn}\;(\text{\tt \char34 Setting~\char34}\plus \Varid{show}\;\Varid{a});{}\<[E]%
\\
\>[21]{}\Varid{putStr}\;(\text{\tt \char34 Replacement~for~\char34}\plus \Varid{show}\;\Varid{b}\plus \text{\tt \char34 ?\char34});{}\<[E]%
\\
\>[21]{}\Varid{s}\leftarrow \Varid{getLine};\Varid{return}\;(\Varid{read}\;\Varid{s})\mskip1.5mu\}{}\<[E]%
\ColumnHook
\end{hscode}\resethooks
An alternative way to find matching states, for a finite state space, would be
to search an enumeration \ensuremath{[\mskip1.5mu \Varid{minBound}..\Varid{maxBound}\mskip1.5mu]} of the possible values,
checking against a fixed oracle \ensuremath{\Varid{p}}:
\begin{hscode}\SaveRestoreHook
\column{B}{@{}>{\hspre}l<{\hspost}@{}}%
\column{3}{@{}>{\hspre}l<{\hspost}@{}}%
\column{E}{@{}>{\hspre}l<{\hspost}@{}}%
\>[B]{}\Varid{dynamicSearchBX}\mathbin{::}{}\<[E]%
\\
\>[B]{}\hsindent{3}{}\<[3]%
\>[3]{}(\Conid{Eq}\;\alpha,\Conid{Eq}\;\beta,\Conid{Enum}\;\alpha,\Conid{Bounded}\;\alpha,\Conid{Enum}\;\beta,\Conid{Bounded}\;\beta)\Rightarrow {}\<[E]%
\\
\>[B]{}\hsindent{3}{}\<[3]%
\>[3]{}(\alpha\hsarrow{\rightarrow }{\mathpunct{.}}\beta\hsarrow{\rightarrow }{\mathpunct{.}}\Conid{Bool})\hsarrow{\rightarrow }{\mathpunct{.}}{}\<[E]%
\\
\>[B]{}\hsindent{3}{}\<[3]%
\>[3]{}\Conid{StateTBX}\;\Conid{Maybe}\;((\alpha,\beta),[\mskip1.5mu ((\alpha,\beta),\beta)\mskip1.5mu],[\mskip1.5mu ((\alpha,\beta),\alpha)\mskip1.5mu])\;\alpha\;\beta{}\<[E]%
\\
\>[B]{}\Varid{dynamicSearchBX}\;\Varid{p}\mathrel{=}\Varid{dynamicBX}\;(\Varid{search}\;\Varid{p})\;(\Varid{flip}\;(\Varid{search}\;(\Varid{flip}\;\Varid{p}))){}\<[E]%
\\[\blanklineskip]%
\>[B]{}\Varid{search}\mathbin{::}(\Conid{Enum}\;\beta,\Conid{Bounded}\;\beta)\Rightarrow (\alpha\hsarrow{\rightarrow }{\mathpunct{.}}\beta\hsarrow{\rightarrow }{\mathpunct{.}}\Conid{Bool})\hsarrow{\rightarrow }{\mathpunct{.}}\alpha\hsarrow{\rightarrow }{\mathpunct{.}}\beta\hsarrow{\rightarrow }{\mathpunct{.}}\Conid{Maybe}\;\beta{}\<[E]%
\\
\>[B]{}\Varid{search}\;\Varid{p}\;\Varid{a}\;\anonymous \mathrel{=}\Varid{find}\;(\Varid{p}\;\Varid{a})\;[\mskip1.5mu \Varid{minBound}..\Varid{maxBound}\mskip1.5mu]{}\<[E]%
\ColumnHook
\end{hscode}\resethooks
\endswithdisplay
\end{example*}

\begin{proposition}\label{prop:dynamic-wb}
For any \ensuremath{\Varid{f}}, \ensuremath{\Varid{g}}, the 
\bx{} \ensuremath{\Varid{dynamicBX}\;\Varid{f}\;\Varid{g}} is well-behaved
(it is clearly transparent).
\end{proposition}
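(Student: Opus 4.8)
The plan is to dispatch transparency first and then use it to trivialise three of the seven well-behavedness laws, leaving only a short case analysis for the rest. Transparency is immediate from the definition: since $bx.\get{L}=\Varid{gets}\,(\Varid{fst}\cdot\Varid{fst3})$ and $bx.\get{R}=\Varid{gets}\,(\Varid{snd}\cdot\Varid{fst3})$, we may simply take $\Varid{read}_{L}=\Varid{fst}\cdot\Varid{fst3}$ and $\Varid{read}_{R}=\Varid{snd}\cdot\Varid{fst3}$, each reading off the relevant component of the current-views pair from the state triple $((a,b),fs,bs)$. Because the get operations are thereby $\Conid{T}$-pure queries, the three get/get commutation laws $\mathrm{(G_LG_L)}$, $\mathrm{(G_RG_R)}$ and $\mathrm{(G_LG_R)}$ hold automatically, by the observation recorded just before Definition~\ref{def:proper}. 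By the evident left/right symmetry of the definition, it then suffices to establish $\mathrm{(S_LG_L)}$ and $\mathrm{(G_LS_L)}$; the $\Conid{R}$-versions follow verbatim from the dual clauses.

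For $\mathrm{(S_LG_L)}$, writing $a$ for the value supplied to $\set{L}$, I would show that in each of the three branches of $\set{L}\,a$ the resulting state has $\alpha$-component equal to $a$. If the guard succeeds (the stored view already equals $a$) the state is unchanged but its $\alpha$-component is already $a$; if the lookup returns $\Conid{Just}\,b'$ the state is overwritten to $((a,b'),fs,bs)$; and if it returns $\Conid{Nothing}$, then after $\Varid{lift}\,(f\,a\,b)$ produces some $b'$ the state is again overwritten to $((a,b'),\ldots)$. Hence $\Varid{read}_{L}$ of the final state is $a$ in every case, and since $\get{L}$ merely reads the state without introducing any further $\Conid{T}$-effect, it may be replaced by $\Varid{return}\,a$, which is exactly the law. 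The one point needing care is the $\Conid{Nothing}$ branch: the genuine $\Conid{T}$-effect $f\,a\,b$ is part of the common $\set{L}\,a$ prefix on both sides of the equation, so by the monad laws it can be sequenced first and the remaining reasoning carried out purely on the resulting state underneath it.

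For $\mathrm{(G_LS_L)}$ I would note that $a\leftarrow\get{L}$ binds $a$ to the current $\alpha$-component while leaving the state untouched, so the subsequent $\set{L}\,a$ re-reads exactly the same triple $((a,b),fs,bs)$. Its guard then tests the stored view against the supplied value $a$, i.e.\ tests $a$ against itself, which succeeds under the standing assumption of a well-behaved notion of equality (as made explicit in Proposition~\ref{prop:signal-wb}). The guarded branch therefore fires, returning $()$ without disturbing the state, and the whole computation collapses to $\Varid{return}\,()$. This reflexivity of equality is precisely where the $\Conid{Eq}\,\alpha$ and $\Conid{Eq}\,\beta$ constraints are used; it is what makes $\mathrm{(G_LS_L)}$ (and dually $\mathrm{(G_RS_R)}$) hold. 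Assembling the four verifications with the three free get/get laws completes the argument.

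I expect no deep obstacle: the bx is transparent by construction, and the fix-up functions $f,g$ influence only the \emph{set} side, so the whole proof is essentially bookkeeping over the three-way case split inside each set operation. The subtlest step, and the one I would write out most carefully, is the $\Conid{Nothing}$ branch of $\mathrm{(S_LG_L)}$: isolating the $\Conid{T}$-effect $f\,a\,b$ so that it is manifestly shared between the two sides of the equation, leaving only the purely-stateful claim that the final $\alpha$-component equals $a$.
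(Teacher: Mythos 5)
Your proposal is correct and takes essentially the same route as the paper's proof: transparency makes the three get/get laws free, $\mathrm{(S_LG_L)}$ is verified by the same three-way case split on the guard and the lookup (with the state law $\mathrm{(SG)}$ discharging each branch, including the $\Conid{Nothing}$ branch where the effect $f\,a\,b$ is a shared prefix of both sides), and $\mathrm{(G_LS_L)}$ follows by merging the two gets via $\mathrm{(GG)}$, invoking reflexivity of the well-behaved equality, and discarding the unused get, with the right-hand laws by symmetry. No gaps.
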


\section{Related work}\label{sec:related}
 
\paragraph{Bidirectional programming} This has a large literature; 
work on view update flourished in the early 1980s, and the term
`lens' was coined in 2005 \cite{Foster:2005}. 
The GRACE report~\cite{GraceReport} surveys work since.
We mention here only the closest related work.

Pacheco \etal~\cite{pacheco14pepm} present `putback-style' asymmetric
lenses; \ie\ their laws and combinators focus only on the `put'
functions, of type \ensuremath{\Conid{Maybe}\;\Varid{s}\hsarrow{\rightarrow }{\mathpunct{.}}\Varid{v}\hsarrow{\rightarrow }{\mathpunct{.}}\Varid{m}\;\Varid{s}}, for some monad \ensuremath{\Varid{m}}.
This allows for effects, and they include a combinator \ensuremath{\Varid{effect}} that
applies a monad morphism to a lens. Their laws assume that the
monad \ensuremath{\Varid{m}} admits a membership operation \ensuremath{(\in)\mathbin{::}\Varid{a}\hsarrow{\rightarrow }{\mathpunct{.}}\Varid{m}\;\Varid{a}\hsarrow{\rightarrow }{\mathpunct{.}}\Conid{Bool}}. For
monads such as \ensuremath{\Conid{List}} or
\ensuremath{\Conid{Maybe}} that support such an operation, 
their laws are similar to ours, but their approach does not appear to
work for other important monads such as \ensuremath{\Conid{IO}} or \ensuremath{\Conid{State}}.
In Divi\'ansky's monadic lens proposal~\cite{reddit}, the \ensuremath{\Varid{get}}
function is monadic, so in principle it too can have
side-effects; as we have seen, this possibility significantly
complicates composition.
 
Johnson and Rosebrugh~\cite{johnson14bx} analyse symmetric lenses in a
general setting of categories with finite products, 
showing that they correspond to
pairs of (asymmetric) lenses with a common source. Our composition for
\ensuremath{\Conid{StateTBX}}s uses a similar idea; however, their construction does not
apply directly to monadic lenses, because the Kleisli category of a
monad does not necessarily have finite products. They also identify a
different notion of equivalence of symmetric lenses.

Elsewhere, we have considered a coalgebraic approach to \bx{} \cite{coalg-bx:cmcs2014}.  
Relating such an approach to the one presented here, 
and investigating their associated equivalences, 
is an interesting future direction of research.

Macedo \etal~\cite{macedo14bx} observe that most bx research 
deals with 
just two models, but many tools and specifications, such as
QVT-R~\cite{qvtr},
allow relating 
multiple models. Our notion of bx generalises
straightforwardly to such multidirectional transformations, provided we
only update one source value at a time.

\paragraph{Monads and algebraic effects} The vast literature on
combining and reasoning about
monads~\cite{Jones&Duponcheel93:Composing,luth02icfp,liang95popl,Mossakowski*2010:Generic}
stems from Moggi's work~\cite{moggi}; we have shown
that bidirectionality can be viewed as another kind of computational
effect, so results about monads can be applied to bidirectional
computation.

A promising area to investigate is the \emph{algebraic} treatment of
effects~\cite{Plotkin&Power2002:Notions}, particularly recent work on
combining effects using operations such as sum and
tensor~\cite{hyland06tcs} and \emph{handlers} of algebraic
effects~\cite{plotkin13lmcs,kammar13icfp,brady13icfp}. It appears
straightforward to view entangled state as generated by operations and equations
analogous to the bx laws. 
What is less clear is whether operations such as composition
can be defined in terms of effect handlers: so far, the theory underlying
handlers~\cite{plotkin13lmcs} does not support `tensor-like' combinations
of computations.
We therefore leave this investigation for future work.

The relationship between lenses and state monad morphisms is
intriguing, and hints of it appear in previous work on
\emph{compositional references} by Kagawa~\cite{kagawa97icfp}.  
The fact that lenses determine state monad morphisms 
(Definition~\ref{def:theta})
appears to be
folklore; Shkaravska~\cite{Shkaravska2005:Side}
stated this result in a talk, 
and it is implicit in the design of the Haskell \texttt{Data.Lens} library \cite{lenseslibrary},
but we are not aware of any previous published proof.

\section{Conclusions and further work\label{sec:concs}}

We have presented a semantic framework for effectful bidirectional
transformations (bx).  Our framework encompasses
symmetric lenses, which (as is well-known) in turn encompass other
approaches to bx
such as asymmetric lenses \cite{lens-toplas}
and relational bx \cite{stevens09:sosym}; we have also given examples of
other monadic effects. This is an advance on the
state of the art of bidirectional transformations: 
ours is the first formalism to reconcile the stateful behavior of bx
with other effects such as nondeterminism, I/O or exceptions
with due attention paid to the corresponding laws. We have
defined composition for effectful bx and shown that composition is
associative and satisfies identity laws, up to a suitable notion of
equivalence based on monad isomorphisms. We have also demonstrated
some combinators suitable for grounding the design of future bx languages
based on our approach.

In future we plan to investigate equivalence, and the relationship
with the work of Johnson and Rosebrugh~\cite{johnson14bx},
further. The equivalence we present here is finer than theirs, and also finer
than the equivalence for symmetric lenses presented by Hofmann
\etal~\cite{symlens}. Early investigations, guided by an alternative
coalgebraic presentation \cite{coalg-bx:cmcs2014} of our
framework, suggest that the situation
for bx may be similar to that for processes given as labelled
transition systems: it is possible to give many different equivalences
which are `right' according to different criteria. We think the one we
have given here is the finest reasonable, equating just enough bx to
make composition work.  Another interesting area for exploration is
formalisation of our (on-paper) proofs.

Our framework provides a foundation for future languages, libraries,
or tools for effectful bx, and there are several natural next steps in
this direction.  In this paper we explored only the case
where the get and set operations read or write complete states,
but our framework allows for generalisation beyond the category \ensuremath{\ensuremath{\mathsf{Set}}}
and hence, perhaps, into delta-based bx~\cite{diskin11jot}, edit
lenses \cite{hofmann12popl} and ordered updates
\cite{Hegner2004:Order}, in which the operations record state changes rather than complete states. 
Another natural next step is to explore
different \emph{witness structures} encapsulating the dependencies
between views, in order to formulate candidate
principles of Least Change (informally, that ``a bx should not
change more than it has to in order to restore consistency'') that are
more practical and flexible than those that can be stated in terms of
views alone.

\section*{Acknowledgements}

Preliminary work on this topic was presented orally at the BIRS workshop
13w5115 
in December 2013; a four-page abstract
\cite{entangled-bx2014} of some of the ideas in this paper appeared
at the Athens BX Workshop in March 2014;
and a short presentation on an alternative coalgebraic approach 
\cite{coalg-bx:cmcs2014} was made at CMCS 2014.
We thank the organisers of and participants at those meetings 
and the anonymous reviewers
for their helpful
comments.
The work was supported by the UK EPSRC-funded project \textit{A
  Theory of Least Change for Bidirectional Transformations}
\cite{tlcbx} (EP/K020218/1, EP/K020919/1).

\bibliographystyle{plain}
\bibliography{strings,paper}

\newpage \appendix
\section*{Appendices}

We present 
proofs of the lemmas and theorems omitted from the body of the paper,
and expand the code that was abbreviated in the paper.

\section{Proofs from Section~\ref{sec:background}}

\restatableLemma{lem:discardable}
\begin{lem:discardable}
If the laws \ensuremath{\mathrm{(GG)}} and \ensuremath{\mathrm{(GS)}} are satisfied, then unused \ensuremath{\Varid{get}}s are discardable:
\begin{hscode}\SaveRestoreHook
\column{B}{@{}>{\hspre}l<{\hspost}@{}}%
\column{46}{@{}>{\hspre}l<{\hspost}@{}}%
\column{E}{@{}>{\hspre}l<{\hspost}@{}}%
\>[B]{}\mathbf{do}\;\{\mskip1.5mu \anonymous \leftarrow \Varid{get};\Varid{m}\mskip1.5mu\}{}\<[46]%
\>[46]{}\mathrel{=}\mathbf{do}\;\{\mskip1.5mu \Varid{m}\mskip1.5mu\}{}\<[E]%
\ColumnHook
\end{hscode}\resethooks
\endswithdisplay
\end{lem:discardable}

\begin{proof}
\mbox{}
\begin{hscode}\SaveRestoreHook
\column{B}{@{}>{\hspre}c<{\hspost}@{}}%
\column{BE}{@{}l@{}}%
\column{3}{@{}>{\hspre}l<{\hspost}@{}}%
\column{5}{@{}>{\hspre}l<{\hspost}@{}}%
\column{E}{@{}>{\hspre}l<{\hspost}@{}}%
\>[3]{}\mathbf{do}\;\{\mskip1.5mu \Varid{s}\leftarrow \Varid{get};\Varid{m}\mskip1.5mu\}{}\<[E]%
\\
\>[B]{}\mathrel{=}{}\<[BE]%
\>[5]{}\mbox{\commentbegin  \ensuremath{\mathrm{(GS)}}  \commentend}{}\<[E]%
\\
\>[B]{}\hsindent{3}{}\<[3]%
\>[3]{}\mathbf{do}\;\{\mskip1.5mu \Varid{s}\leftarrow \Varid{get};\Varid{s'}\leftarrow \Varid{get};\Varid{set}\;\Varid{s'};\Varid{m}\mskip1.5mu\}{}\<[E]%
\\
\>[B]{}\mathrel{=}{}\<[BE]%
\>[5]{}\mbox{\commentbegin  \ensuremath{\mathrm{(GG)}}  \commentend}{}\<[E]%
\\
\>[B]{}\hsindent{3}{}\<[3]%
\>[3]{}\mathbf{do}\;\{\mskip1.5mu \Varid{s}\leftarrow \Varid{get};\mathbf{let}\;\Varid{s'}\mathrel{=}\Varid{s};\Varid{set}\;\Varid{s'};\Varid{m}\mskip1.5mu\}{}\<[E]%
\\
\>[B]{}\mathrel{=}{}\<[BE]%
\>[5]{}\mbox{\commentbegin  \ensuremath{\mathbf{let}}  \commentend}{}\<[E]%
\\
\>[B]{}\hsindent{3}{}\<[3]%
\>[3]{}\mathbf{do}\;\{\mskip1.5mu \Varid{s}\leftarrow \Varid{get};\Varid{set}\;\Varid{s};\Varid{m}\mskip1.5mu\}{}\<[E]%
\\
\>[B]{}\mathrel{=}{}\<[BE]%
\>[5]{}\mbox{\commentbegin  \ensuremath{\mathrm{(GS)}}  \commentend}{}\<[E]%
\\
\>[B]{}\hsindent{3}{}\<[3]%
\>[3]{}\mathbf{do}\;\{\mskip1.5mu \Varid{m}\mskip1.5mu\}{}\<[E]%
\ColumnHook
\end{hscode}\resethooks
\endswithdisplay
\end{proof}

\restatableLemma{lem:liftings-commute}
\begin{lem:liftings-commute}
Suppose \ensuremath{\Varid{a},\Varid{b}} are distinct variables not appearing in expression \ensuremath{\Varid{m}}. 
Then:
\begin{hscode}\SaveRestoreHook
\column{B}{@{}>{\hspre}l<{\hspost}@{}}%
\column{47}{@{}>{\hspre}l<{\hspost}@{}}%
\column{E}{@{}>{\hspre}l<{\hspost}@{}}%
\>[B]{}\mathbf{do}\;\{\mskip1.5mu \Varid{a}\leftarrow \Varid{get};\Varid{b}\leftarrow \Varid{lift}\;\Varid{m};\Varid{return}\;(\Varid{a},\Varid{b})\mskip1.5mu\}{}\<[47]%
\>[47]{}\mathrel{=}\mathbf{do}\;\{\mskip1.5mu \Varid{b}\leftarrow \Varid{lift}\;\Varid{m};\Varid{a}\leftarrow \Varid{get};\Varid{return}\;(\Varid{a},\Varid{b})\mskip1.5mu\}{}\<[E]%
\\
\>[B]{}\mathbf{do}\;\{\mskip1.5mu \Varid{set}\;\Varid{a};\Varid{b}\leftarrow \Varid{lift}\;\Varid{m};\Varid{return}\;\Varid{b}\mskip1.5mu\}{}\<[47]%
\>[47]{}\mathrel{=}\mathbf{do}\;\{\mskip1.5mu \Varid{b}\leftarrow \Varid{lift}\;\Varid{m};\Varid{set}\;\Varid{a};\Varid{return}\;\Varid{b}\mskip1.5mu\}{}\<[E]%
\ColumnHook
\end{hscode}\resethooks
\endswithdisplay
\end{lem:liftings-commute}

\begin{proof}
For the first part:
  \begin{hscode}\SaveRestoreHook
\column{B}{@{}>{\hspre}c<{\hspost}@{}}%
\column{BE}{@{}l@{}}%
\column{4}{@{}>{\hspre}l<{\hspost}@{}}%
\column{6}{@{}>{\hspre}l<{\hspost}@{}}%
\column{10}{@{}>{\hspre}l<{\hspost}@{}}%
\column{E}{@{}>{\hspre}l<{\hspost}@{}}%
\>[4]{}\mathbf{do}\;\{\mskip1.5mu \Varid{a}\leftarrow \Varid{get};\Varid{b}\leftarrow \Varid{lift}\;\Varid{m};\Varid{return}\;(\Varid{a},\Varid{b})\mskip1.5mu\}\;\Varid{s}{}\<[E]%
\\
\>[B]{}\mathrel{=}{}\<[BE]%
\>[6]{}\mbox{\commentbegin  Definitions of bind, \ensuremath{\Varid{get}}, \ensuremath{\Varid{return}}  \commentend}{}\<[E]%
\\
\>[B]{}\hsindent{4}{}\<[4]%
\>[4]{}\mathbf{do}\;\{\mskip1.5mu {}\<[10]%
\>[10]{}(\Varid{a},\Varid{s'})\leftarrow \Varid{return}\;(\Varid{s},\Varid{s});(\Varid{b},\Varid{s''})\leftarrow \mathbf{do}\;\{\mskip1.5mu \Varid{b'}\leftarrow \Varid{m};\Varid{return}\;(\Varid{b'},\Varid{s'})\mskip1.5mu\};\Varid{return}\;((\Varid{a},\Varid{b}),\Varid{s''})\mskip1.5mu\}{}\<[E]%
\\
\>[B]{}\mathrel{=}{}\<[BE]%
\>[6]{}\mbox{\commentbegin  monad unit  \commentend}{}\<[E]%
\\
\>[B]{}\hsindent{4}{}\<[4]%
\>[4]{}\mathbf{do}\;\{\mskip1.5mu {}\<[10]%
\>[10]{}(\Varid{b},\Varid{s''})\leftarrow \mathbf{do}\;\{\mskip1.5mu \Varid{b'}\leftarrow \Varid{m};\Varid{return}\;(\Varid{b'},\Varid{s})\mskip1.5mu\};\Varid{return}\;((\Varid{s},\Varid{b}),\Varid{s''})\mskip1.5mu\}{}\<[E]%
\\
\>[B]{}\mathrel{=}{}\<[BE]%
\>[6]{}\mbox{\commentbegin  monad associativity  \commentend}{}\<[E]%
\\
\>[B]{}\hsindent{4}{}\<[4]%
\>[4]{}\mathbf{do}\;\{\mskip1.5mu {}\<[10]%
\>[10]{}\Varid{b'}\leftarrow \Varid{m};(\Varid{b},\Varid{s''})\leftarrow \Varid{return}\;(\Varid{b'},\Varid{s});\Varid{return}\;((\Varid{s},\Varid{b}),\Varid{s''})\mskip1.5mu\}{}\<[E]%
\\
\>[B]{}\mathrel{=}{}\<[BE]%
\>[6]{}\mbox{\commentbegin  monad unit  \commentend}{}\<[E]%
\\
\>[B]{}\hsindent{4}{}\<[4]%
\>[4]{}\mathbf{do}\;\{\mskip1.5mu \Varid{b'}\leftarrow \Varid{m};\Varid{return}\;((\Varid{s},\Varid{b'}),\Varid{s})\mskip1.5mu\}{}\<[E]%
\\
\>[B]{}\mathrel{=}{}\<[BE]%
\>[6]{}\mbox{\commentbegin  reversing above steps  \commentend}{}\<[E]%
\\
\>[B]{}\hsindent{4}{}\<[4]%
\>[4]{}\mathbf{do}\;\{\mskip1.5mu {}\<[10]%
\>[10]{}(\Varid{b},\Varid{s'})\leftarrow \mathbf{do}\;\{\mskip1.5mu \Varid{b'}\leftarrow \Varid{m};\Varid{return}\;(\Varid{b'},\Varid{s})\mskip1.5mu\};(\Varid{a},\Varid{s''})\leftarrow \Varid{return}\;(\Varid{s'},\Varid{s'});\Varid{return}\;((\Varid{a},\Varid{b}),\Varid{s''})\mskip1.5mu\}{}\<[E]%
\\
\>[B]{}\mathrel{=}{}\<[BE]%
\>[6]{}\mbox{\commentbegin  definition  \commentend}{}\<[E]%
\\
\>[B]{}\hsindent{4}{}\<[4]%
\>[4]{}\mathbf{do}\;\{\mskip1.5mu \Varid{b}\leftarrow \Varid{lift}\;\Varid{m};\Varid{a}\leftarrow \Varid{get};\Varid{return}\;(\Varid{a},\Varid{b})\mskip1.5mu\}\;\Varid{s}{}\<[E]%
\ColumnHook
\end{hscode}\resethooks
so the desired result holds by eta equivalence.
For the second part:
  \begin{hscode}\SaveRestoreHook
\column{B}{@{}>{\hspre}c<{\hspost}@{}}%
\column{BE}{@{}l@{}}%
\column{4}{@{}>{\hspre}l<{\hspost}@{}}%
\column{6}{@{}>{\hspre}l<{\hspost}@{}}%
\column{10}{@{}>{\hspre}l<{\hspost}@{}}%
\column{E}{@{}>{\hspre}l<{\hspost}@{}}%
\>[4]{}\mathbf{do}\;\{\mskip1.5mu \Varid{set}\;\Varid{a};\Varid{b}\leftarrow \Varid{lift}\;\Varid{m};\Varid{return}\;\Varid{b}\mskip1.5mu\}\;\Varid{s}{}\<[E]%
\\
\>[B]{}\mathrel{=}{}\<[BE]%
\>[6]{}\mbox{\commentbegin  definitions of bind, \ensuremath{\Varid{lift}}, \ensuremath{\Varid{set}}, \ensuremath{\Varid{return}}  \commentend}{}\<[E]%
\\
\>[B]{}\hsindent{4}{}\<[4]%
\>[4]{}\mathbf{do}\;\{\mskip1.5mu {}\<[10]%
\>[10]{}(\anonymous ,\Varid{s'})\leftarrow \Varid{return}\;((),\Varid{a});(\Varid{b},\Varid{s''})\leftarrow \mathbf{do}\;\{\mskip1.5mu \Varid{b'}\leftarrow \Varid{m};\Varid{return}\;(\Varid{b'},\Varid{s'})\mskip1.5mu\};\Varid{return}\;(\Varid{b},\Varid{s''})\mskip1.5mu\}{}\<[E]%
\\
\>[B]{}\mathrel{=}{}\<[BE]%
\>[6]{}\mbox{\commentbegin  monad unit  \commentend}{}\<[E]%
\\
\>[B]{}\hsindent{4}{}\<[4]%
\>[4]{}\mathbf{do}\;\{\mskip1.5mu (\Varid{b},\Varid{s''})\leftarrow \mathbf{do}\;\{\mskip1.5mu \Varid{b'}\leftarrow \Varid{m};\Varid{return}\;(\Varid{b'},\Varid{a})\mskip1.5mu\};\Varid{return}\;(\Varid{b},\Varid{s''})\mskip1.5mu\}{}\<[E]%
\\
\>[B]{}\mathrel{=}{}\<[BE]%
\>[6]{}\mbox{\commentbegin  monad associativity  \commentend}{}\<[E]%
\\
\>[B]{}\hsindent{4}{}\<[4]%
\>[4]{}\mathbf{do}\;\{\mskip1.5mu {}\<[10]%
\>[10]{}\Varid{b'}\leftarrow \Varid{m};(\Varid{b},\Varid{s''})\leftarrow \Varid{return}\;(\Varid{b'},\Varid{a});\Varid{return}\;(\Varid{b},\Varid{s''})\mskip1.5mu\}{}\<[E]%
\\
\>[B]{}\mathrel{=}{}\<[BE]%
\>[6]{}\mbox{\commentbegin  monad unit  \commentend}{}\<[E]%
\\
\>[B]{}\hsindent{4}{}\<[4]%
\>[4]{}\mathbf{do}\;\{\mskip1.5mu {}\<[10]%
\>[10]{}\Varid{b'}\leftarrow \Varid{m};\Varid{return}\;(\Varid{b'},\Varid{a})\mskip1.5mu\}{}\<[E]%
\\
\>[B]{}\mathrel{=}{}\<[BE]%
\>[6]{}\mbox{\commentbegin  reversing above steps  \commentend}{}\<[E]%
\\
\>[B]{}\hsindent{4}{}\<[4]%
\>[4]{}\mathbf{do}\;\{\mskip1.5mu {}\<[10]%
\>[10]{}(\Varid{b},\Varid{s'})\leftarrow \mathbf{do}\;\{\mskip1.5mu \Varid{b'}\leftarrow \Varid{m};\Varid{return}\;(\Varid{b'},\Varid{s})\mskip1.5mu\};(\anonymous ,\Varid{s''})\leftarrow \Varid{return}\;((),\Varid{a});\Varid{return}\;(\Varid{b},\Varid{s''})\mskip1.5mu\}{}\<[E]%
\\
\>[B]{}\mathrel{=}{}\<[BE]%
\>[6]{}\mbox{\commentbegin  definition  \commentend}{}\<[E]%
\\
\>[B]{}\hsindent{4}{}\<[4]%
\>[4]{}\mathbf{do}\;\{\mskip1.5mu \Varid{b}\leftarrow \Varid{lift}\;\Varid{m};\Varid{set}\;\Varid{a};\Varid{return}\;\Varid{b}\mskip1.5mu\}\;\Varid{s}{}\<[E]%
\ColumnHook
\end{hscode}\resethooks
so again the result holds by eta-equivalence.
\end{proof}

\restatableLemma{lem:StateT-data-refinement}
\begin{lem:StateT-data-refinement}
Given an arbitrary monad \ensuremath{\Conid{T}}, not assumed to be an instance of \ensuremath{\Conid{StateT}},
with operations \ensuremath{\Varid{get_T}\mathbin{::}\Conid{T}\;\Conid{S}} and \ensuremath{\Varid{set_T}\mathbin{::}\Conid{S}\hsarrow{\rightarrow }{\mathpunct{.}}\Conid{T}\;()} for a type \ensuremath{\Conid{S}}, such that \ensuremath{\Varid{get_T}} and \ensuremath{\Varid{set_T}} satisfy the
laws \ensuremath{\mathrm{(GG)}}, \ensuremath{\mathrm{(GS)}}, and \ensuremath{\mathrm{(SG)}} 
of Definition~\ref{def:state-monad-transformer}, then
there is a data refinement from \ensuremath{\Conid{T}} to \ensuremath{\Conid{StateT}\;\Conid{S}\;\Conid{T}}.
\end{lem:StateT-data-refinement}

\begin{proof}
Define the abstraction function \ensuremath{\Varid{abs}} from \ensuremath{\Conid{StateT}\;\Conid{S}\;\Conid{T}} to \ensuremath{\Conid{T}} and the
reification function \ensuremath{\Varid{conc}} in the opposite direction by
\begin{hscode}\SaveRestoreHook
\column{B}{@{}>{\hspre}l<{\hspost}@{}}%
\column{9}{@{}>{\hspre}l<{\hspost}@{}}%
\column{E}{@{}>{\hspre}l<{\hspost}@{}}%
\>[B]{}\Varid{abs}\;\Varid{m}{}\<[9]%
\>[9]{}\mathrel{=}\mathbf{do}\;\{\mskip1.5mu \Varid{s}\leftarrow \Varid{get_T};(\Varid{a},\Varid{s'})\leftarrow \Varid{m}\;\Varid{s};\Varid{set_T}\;\Varid{s'};\Varid{return}\;\Varid{a}\mskip1.5mu\}{}\<[E]%
\\[\blanklineskip]%
\>[B]{}\Varid{conc}\;\Varid{m}{}\<[9]%
\>[9]{}\mathrel{=}\lambda \hslambda \Varid{s}\hsarrow{\rightarrow }{\mathpunct{.}}\mathbf{do}\;\{\mskip1.5mu \Varid{a}\leftarrow \Varid{m};\Varid{s'}\leftarrow \Varid{get_T};\Varid{return}\;(\Varid{a},\Varid{s'})\mskip1.5mu\}{}\<[E]%
\\
\>[9]{}\mathrel{=}\mathbf{do}\;\{\mskip1.5mu \Varid{a}\leftarrow \Varid{lift}\;\Varid{m};\Varid{s'}\leftarrow \Varid{lift}\;\Varid{get_T};\Varid{set}\;\Varid{s'};\Varid{return}\;\Varid{a}\mskip1.5mu\}{}\<[E]%
\ColumnHook
\end{hscode}\resethooks
On account of the \ensuremath{\Varid{get_T}} and \ensuremath{\Varid{set_T}} operations that it provides, monad
\ensuremath{\Conid{T}} is implicitly recording a state \ensuremath{\Conid{S}}; the idea of the data
refinement is to track this state explicitly in monad \ensuremath{\Conid{StateT}\;\Conid{S}\;\Conid{T}}. We
say that a computation \ensuremath{\Varid{m}} in \ensuremath{\Conid{StateT}\;\Conid{S}\;\Conid{T}} is \emph{synchronised} if
on completion the inner implicit and the outer explicit \ensuremath{\Conid{S}} values agree:
\begin{hscode}\SaveRestoreHook
\column{B}{@{}>{\hspre}l<{\hspost}@{}}%
\column{E}{@{}>{\hspre}l<{\hspost}@{}}%
\>[B]{}\mathbf{do}\;\{\mskip1.5mu \Varid{a}\leftarrow \Varid{m};\Varid{s'}\leftarrow \Varid{get};\Varid{return}\;(\Varid{a},\Varid{s'})\mskip1.5mu\}\mathrel{=}\mathbf{do}\;\{\mskip1.5mu \Varid{a}\leftarrow \Varid{m};\Varid{s''}\leftarrow \Varid{lift}\;\Varid{get_T};\Varid{return}\;(\Varid{a},\Varid{s''})\mskip1.5mu\}{}\<[E]%
\ColumnHook
\end{hscode}\resethooks
or equivalently, as computations in \ensuremath{\Conid{T}},
\begin{hscode}\SaveRestoreHook
\column{B}{@{}>{\hspre}l<{\hspost}@{}}%
\column{E}{@{}>{\hspre}l<{\hspost}@{}}%
\>[B]{}\mathbf{do}\;\{\mskip1.5mu (\Varid{a},\Varid{s'})\leftarrow \Varid{m}\;\Varid{s};\Varid{return}\;(\Varid{a},\Varid{s'})\mskip1.5mu\}\mathrel{=}\mathbf{do}\;\{\mskip1.5mu (\Varid{a},\Varid{s'})\leftarrow \Varid{m}\;\Varid{s};\Varid{s''}\leftarrow \Varid{get_T};\Varid{return}\;(\Varid{a},\Varid{s''})\mskip1.5mu\}{}\<[E]%
\ColumnHook
\end{hscode}\resethooks
It is straightforward to check that \ensuremath{\Varid{return}\;\Varid{a}} is synchronised, that
bind preserves synchronisation, and that \ensuremath{\Varid{conc}} yields only
synchronised computations.

We have to verify the three conditions of \Definition~\ref{def:data-refinement}.
For the first, we have to show that \ensuremath{\Varid{conc}} distributes over \ensuremath{(\bind )}; we have
\begin{hscode}\SaveRestoreHook
\column{B}{@{}>{\hspre}c<{\hspost}@{}}%
\column{BE}{@{}l@{}}%
\column{3}{@{}>{\hspre}l<{\hspost}@{}}%
\column{4}{@{}>{\hspre}l<{\hspost}@{}}%
\column{5}{@{}>{\hspre}l<{\hspost}@{}}%
\column{E}{@{}>{\hspre}l<{\hspost}@{}}%
\>[3]{}\Varid{conc}\;(\Varid{m}\bind \Varid{k}){}\<[E]%
\\
\>[B]{}\mathrel{=}{}\<[BE]%
\>[5]{}\mbox{\commentbegin  \ensuremath{\Varid{conc}}  \commentend}{}\<[E]%
\\
\>[B]{}\hsindent{3}{}\<[3]%
\>[3]{}\mathbf{do}\;\{\mskip1.5mu \Varid{b}\leftarrow \Varid{lift}\;(\Varid{m}\bind \Varid{k});\Varid{s''}\leftarrow \Varid{lift}\;\Varid{get_T};\Varid{set}\;\Varid{s''};\Varid{return}\;\Varid{b}\mskip1.5mu\}{}\<[E]%
\\
\>[B]{}\mathrel{=}{}\<[BE]%
\>[5]{}\mbox{\commentbegin  \ensuremath{\Varid{lift}} and bind  \commentend}{}\<[E]%
\\
\>[B]{}\hsindent{3}{}\<[3]%
\>[3]{}\mathbf{do}\;\{\mskip1.5mu \Varid{a}\leftarrow \Varid{lift}\;\Varid{m};\Varid{b}\leftarrow \Varid{lift}\;(\Varid{k}\;\Varid{a});\Varid{s''}\leftarrow \Varid{lift}\;\Varid{get_T};\Varid{set}\;\Varid{s''};\Varid{return}\;\Varid{b}\mskip1.5mu\}{}\<[E]%
\\
\>[B]{}\mathrel{=}{}\<[BE]%
\>[4]{}\mbox{\commentbegin  \ensuremath{\Varid{get_T}} is discardable  \commentend}{}\<[E]%
\\
\>[B]{}\hsindent{3}{}\<[3]%
\>[3]{}\mathbf{do}\;\{\mskip1.5mu \Varid{a}\leftarrow \Varid{lift}\;\Varid{m};\Varid{s'}\leftarrow \Varid{lift}\;\Varid{get_T};\Varid{b}\leftarrow \Varid{lift}\;(\Varid{k}\;\Varid{a});\Varid{s''}\leftarrow \Varid{lift}\;\Varid{get_T};\Varid{set}\;\Varid{s''};\Varid{return}\;\Varid{b}\mskip1.5mu\}{}\<[E]%
\\
\>[B]{}\mathrel{=}{}\<[BE]%
\>[4]{}\mbox{\commentbegin  \ensuremath{\mathrm{(SS)}} for \ensuremath{\Conid{StateT}}  \commentend}{}\<[E]%
\\
\>[B]{}\hsindent{3}{}\<[3]%
\>[3]{}\mathbf{do}\;\{\mskip1.5mu \Varid{a}\leftarrow \Varid{lift}\;\Varid{m};\Varid{s'}\leftarrow \Varid{lift}\;\Varid{get_T};\Varid{b}\leftarrow \Varid{lift}\;(\Varid{k}\;\Varid{a});\Varid{s''}\leftarrow \Varid{lift}\;\Varid{get_T};\Varid{set}\;\Varid{s'};\Varid{set}\;\Varid{s''};\Varid{return}\;\Varid{b}\mskip1.5mu\}{}\<[E]%
\\
\>[B]{}\mathrel{=}{}\<[BE]%
\>[4]{}\mbox{\commentbegin  Lemma~\ref{lem:liftings-commute}  \commentend}{}\<[E]%
\\
\>[B]{}\hsindent{3}{}\<[3]%
\>[3]{}\mathbf{do}\;\{\mskip1.5mu \Varid{a}\leftarrow \Varid{lift}\;\Varid{m};\Varid{s'}\leftarrow \Varid{lift}\;\Varid{get_T};\Varid{set}\;\Varid{s'};\Varid{b}\leftarrow \Varid{lift}\;(\Varid{k}\;\Varid{a});\Varid{s''}\leftarrow \Varid{lift}\;\Varid{get_T};\Varid{set}\;\Varid{s''};\Varid{return}\;\Varid{b}\mskip1.5mu\}{}\<[E]%
\\
\>[B]{}\mathrel{=}{}\<[BE]%
\>[4]{}\mbox{\commentbegin  bind  \commentend}{}\<[E]%
\\
\>[B]{}\hsindent{3}{}\<[3]%
\>[3]{}\mathbf{do}\;\{\mskip1.5mu \Varid{a}\leftarrow \Varid{lift}\;\Varid{m};\Varid{s'}\leftarrow \Varid{lift}\;\Varid{get_T};\Varid{set}\;\Varid{s'};\Varid{return}\;\Varid{a}\mskip1.5mu\}\bind \lambda \hslambda \Varid{a}\hsarrow{\rightarrow }{\mathpunct{.}}{}\<[E]%
\\
\>[B]{}\hsindent{3}{}\<[3]%
\>[3]{}\mathbf{do}\;\{\mskip1.5mu \Varid{b}\leftarrow \Varid{lift}\;(\Varid{k}\;\Varid{a});\Varid{s''}\leftarrow \Varid{lift}\;\Varid{get_T};\Varid{set}\;\Varid{s''};\Varid{return}\;\Varid{b}\mskip1.5mu\}{}\<[E]%
\\
\>[B]{}\mathrel{=}{}\<[BE]%
\>[4]{}\mbox{\commentbegin  \ensuremath{\Varid{conc}}  \commentend}{}\<[E]%
\\
\>[B]{}\hsindent{3}{}\<[3]%
\>[3]{}\Varid{conc}\;\Varid{m}\bind \lambda \hslambda \Varid{a}\hsarrow{\rightarrow }{\mathpunct{.}}\Varid{conc}\;(\Varid{k}\;\Varid{a}){}\<[E]%
\\
\>[B]{}\mathrel{=}{}\<[BE]%
\>[4]{}\mbox{\commentbegin  eta contraction  \commentend}{}\<[E]%
\\
\>[B]{}\hsindent{3}{}\<[3]%
\>[3]{}\Varid{conc}\;\Varid{m}\bind (\Varid{conc}\hsdot{\cdot }{.}\Varid{k}){}\<[E]%
\ColumnHook
\end{hscode}\resethooks
(Note that \ensuremath{\Varid{conc}} does not preserve \ensuremath{\Varid{return}}, and so \ensuremath{\Varid{conc}} is not a
monad morphism: \ensuremath{\Varid{conc}\;(\Varid{return}\;\Varid{a})} not only returns \ensuremath{\Varid{a}}, it also
synchronises the two copies of the state.)
For the second, we have to show that \ensuremath{\Varid{abs}\hsdot{\cdot }{.}\Varid{conc}} is the identity:
\begin{hscode}\SaveRestoreHook
\column{B}{@{}>{\hspre}c<{\hspost}@{}}%
\column{BE}{@{}l@{}}%
\column{3}{@{}>{\hspre}l<{\hspost}@{}}%
\column{5}{@{}>{\hspre}l<{\hspost}@{}}%
\column{E}{@{}>{\hspre}l<{\hspost}@{}}%
\>[3]{}\Varid{abs}\;(\Varid{conc}\;\Varid{m}){}\<[E]%
\\
\>[B]{}\mathrel{=}{}\<[BE]%
\>[5]{}\mbox{\commentbegin  \ensuremath{\Varid{abs}}  \commentend}{}\<[E]%
\\
\>[B]{}\hsindent{3}{}\<[3]%
\>[3]{}\mathbf{do}\;\{\mskip1.5mu \Varid{s}\leftarrow \Varid{get_T};(\Varid{a},\Varid{s'})\leftarrow \Varid{conc}\;\Varid{m}\;\Varid{s};\Varid{set_T}\;\Varid{s'};\Varid{return}\;\Varid{a}\mskip1.5mu\}{}\<[E]%
\\
\>[B]{}\mathrel{=}{}\<[BE]%
\>[5]{}\mbox{\commentbegin  \ensuremath{\Varid{conc}}  \commentend}{}\<[E]%
\\
\>[B]{}\hsindent{3}{}\<[3]%
\>[3]{}\mathbf{do}\;\{\mskip1.5mu \Varid{s}\leftarrow \Varid{get_T};\Varid{a}\leftarrow \Varid{m};\Varid{s'}\leftarrow \Varid{get_T};\Varid{set_T}\;\Varid{s'};\Varid{return}\;\Varid{a}\mskip1.5mu\}{}\<[E]%
\\
\>[B]{}\mathrel{=}{}\<[BE]%
\>[5]{}\mbox{\commentbegin  \ensuremath{\mathrm{(GS)}} for \ensuremath{\Conid{T}}  \commentend}{}\<[E]%
\\
\>[B]{}\hsindent{3}{}\<[3]%
\>[3]{}\mathbf{do}\;\{\mskip1.5mu \Varid{s}\leftarrow \Varid{get_T};\Varid{a}\leftarrow \Varid{m};\Varid{return}\;\Varid{a}\mskip1.5mu\}{}\<[E]%
\\
\>[B]{}\mathrel{=}{}\<[BE]%
\>[5]{}\mbox{\commentbegin  monad unit  \commentend}{}\<[E]%
\\
\>[B]{}\hsindent{3}{}\<[3]%
\>[3]{}\mathbf{do}\;\{\mskip1.5mu \Varid{s}\leftarrow \Varid{get_T};\Varid{m}\mskip1.5mu\}{}\<[E]%
\\
\>[B]{}\mathrel{=}{}\<[BE]%
\>[5]{}\mbox{\commentbegin  unused \ensuremath{\Varid{get_T}} is discardable  \commentend}{}\<[E]%
\\
\>[B]{}\hsindent{3}{}\<[3]%
\>[3]{}\mathbf{do}\;\{\mskip1.5mu \Varid{m}\mskip1.5mu\}{}\<[E]%
\ColumnHook
\end{hscode}\resethooks
For the third, we have to show that post-composition with \ensuremath{\Varid{abs}}
transforms the \ensuremath{\Conid{T}} operations into the corresponding \ensuremath{\Conid{StateT}\;\Conid{S}\;\Conid{T}}
operations. We do this by construction, defining: 
\begin{hscode}\SaveRestoreHook
\column{B}{@{}>{\hspre}l<{\hspost}@{}}%
\column{10}{@{}>{\hspre}l<{\hspost}@{}}%
\column{E}{@{}>{\hspre}l<{\hspost}@{}}%
\>[B]{}\Varid{sget}{}\<[10]%
\>[10]{}\mathrel{=}\Varid{conc}\;\Varid{get_T}{}\<[E]%
\\
\>[B]{}\Varid{sset}\;\Varid{s'}{}\<[10]%
\>[10]{}\mathrel{=}\Varid{conc}\;(\Varid{set_T}\;\Varid{s'}){}\<[E]%
\ColumnHook
\end{hscode}\resethooks
Expanding and simplifying, we see that
synchronised set writes to both copies of the state, and
synchronised get reads from the inner copy, but overwrites the outer copy to ensure that it agrees:
\begin{hscode}\SaveRestoreHook
\column{B}{@{}>{\hspre}l<{\hspost}@{}}%
\column{10}{@{}>{\hspre}l<{\hspost}@{}}%
\column{E}{@{}>{\hspre}l<{\hspost}@{}}%
\>[B]{}\Varid{sget}{}\<[10]%
\>[10]{}\mathrel{=}\mathbf{do}\;\{\mskip1.5mu \Varid{s'}\leftarrow \Varid{lift}\;\Varid{get_T};\Varid{set}\;\Varid{s'};\Varid{return}\;\Varid{s'}\mskip1.5mu\}{}\<[E]%
\\
\>[B]{}\Varid{sset}\;\Varid{s'}{}\<[10]%
\>[10]{}\mathrel{=}\mathbf{do}\;\{\mskip1.5mu \Varid{lift}\;(\Varid{set_T}\;\Varid{s'});\Varid{set}\;\Varid{s'}\mskip1.5mu\}{}\<[E]%
\ColumnHook
\end{hscode}\resethooks
\endswithdisplay
\end{proof}

\section{Proofs from Section~\ref{sec:composition}}
\label{app:composition}

\restatableLemma{lem:vwb-monad-morphism}
\begin{lem:vwb-monad-morphism} 
  If \ensuremath{\Varid{l}\mathbin{::}\Conid{Lens}\;\Conid{S}\;\Conid{V}} is very well-behaved, then \ensuremath{\vartheta \;\Varid{l}} is a monad morphism.  
\end{lem:vwb-monad-morphism}
\begin{proof}
We first show that \ensuremath{\vartheta \;\Varid{l}} preserves \ensuremath{\Varid{return}}s:
\begin{hscode}\SaveRestoreHook
\column{B}{@{}>{\hspre}c<{\hspost}@{}}%
\column{BE}{@{}l@{}}%
\column{3}{@{}>{\hspre}l<{\hspost}@{}}%
\column{5}{@{}>{\hspre}l<{\hspost}@{}}%
\column{9}{@{}>{\hspre}l<{\hspost}@{}}%
\column{E}{@{}>{\hspre}l<{\hspost}@{}}%
\>[3]{}\vartheta \;\Varid{l}\;(\Varid{return}\;\Varid{x}){}\<[E]%
\\
\>[B]{}\mathrel{=}{}\<[BE]%
\>[5]{}\mbox{\commentbegin  \ensuremath{\vartheta }  \commentend}{}\<[E]%
\\
\>[B]{}\hsindent{3}{}\<[3]%
\>[3]{}\mathbf{do}\;\{\mskip1.5mu {}\<[9]%
\>[9]{}\Varid{s}\leftarrow \Varid{get};\mathbf{let}\;\Varid{v}\mathrel{=}\Varid{l}\mathord{.}\Varid{view}\;\Varid{s};(\Varid{a},\Varid{v'})\leftarrow \Varid{lift}\;(\Varid{return}\;\Varid{x}\;\Varid{v});{}\<[E]%
\\
\>[9]{}\mathbf{let}\;\Varid{s'}\mathrel{=}\Varid{l}\mathord{.}\Varid{update}\;\Varid{s}\;\Varid{v'};\Varid{set}\;\Varid{s'};\Varid{return}\;\Varid{a}\mskip1.5mu\}{}\<[E]%
\\
\>[B]{}\mathrel{=}{}\<[BE]%
\>[5]{}\mbox{\commentbegin  \ensuremath{\Varid{return}} for \ensuremath{\Conid{StateT}}  \commentend}{}\<[E]%
\\
\>[B]{}\hsindent{3}{}\<[3]%
\>[3]{}\mathbf{do}\;\{\mskip1.5mu {}\<[9]%
\>[9]{}\Varid{s}\leftarrow \Varid{get};\mathbf{let}\;\Varid{v}\mathrel{=}\Varid{l}\mathord{.}\Varid{view}\;\Varid{s};\mathbf{let}\;(\Varid{a},\Varid{v'})\mathrel{=}(\Varid{x},\Varid{v});{}\<[E]%
\\
\>[9]{}\mathbf{let}\;\Varid{s'}\mathrel{=}\Varid{l}\mathord{.}\Varid{update}\;\Varid{s}\;\Varid{v'};\Varid{set}\;\Varid{s'};\Varid{return}\;\Varid{a}\mskip1.5mu\}{}\<[E]%
\\
\>[B]{}\mathrel{=}{}\<[BE]%
\>[5]{}\mbox{\commentbegin  \ensuremath{\mathrm{(UV)}}  \commentend}{}\<[E]%
\\
\>[B]{}\hsindent{3}{}\<[3]%
\>[3]{}\mathbf{do}\;\{\mskip1.5mu {}\<[9]%
\>[9]{}\Varid{s}\leftarrow \Varid{get};\mathbf{let}\;\Varid{v}\mathrel{=}\Varid{l}\mathord{.}\Varid{view}\;\Varid{s};\mathbf{let}\;(\Varid{a},\Varid{v'})\mathrel{=}(\Varid{x},\Varid{v});\mathbf{let}\;\Varid{s'}\mathrel{=}\Varid{s};\Varid{set}\;\Varid{s'};\Varid{return}\;\Varid{a}\mskip1.5mu\}{}\<[E]%
\\
\>[B]{}\mathrel{=}{}\<[BE]%
\>[5]{}\mbox{\commentbegin  \ensuremath{\mathrm{(GS)}} for \ensuremath{\Conid{StateT}}  \commentend}{}\<[E]%
\\
\>[B]{}\hsindent{3}{}\<[3]%
\>[3]{}\mathbf{do}\;\{\mskip1.5mu {}\<[9]%
\>[9]{}\Varid{s}\leftarrow \Varid{get};\mathbf{let}\;\Varid{v}\mathrel{=}\Varid{l}\mathord{.}\Varid{view}\;\Varid{s};\Varid{return}\;\Varid{x}\mskip1.5mu\}{}\<[E]%
\\
\>[B]{}\mathrel{=}{}\<[BE]%
\>[5]{}\mbox{\commentbegin  \ensuremath{\Varid{get}} is discardable  \commentend}{}\<[E]%
\\
\>[B]{}\hsindent{3}{}\<[3]%
\>[3]{}\Varid{return}\;\Varid{x}{}\<[E]%
\ColumnHook
\end{hscode}\resethooks
Now we show that \ensuremath{\vartheta \;\Varid{l}} respects sequential composition:
\begin{hscode}\SaveRestoreHook
\column{B}{@{}>{\hspre}c<{\hspost}@{}}%
\column{BE}{@{}l@{}}%
\column{3}{@{}>{\hspre}l<{\hspost}@{}}%
\column{5}{@{}>{\hspre}l<{\hspost}@{}}%
\column{9}{@{}>{\hspre}l<{\hspost}@{}}%
\column{E}{@{}>{\hspre}l<{\hspost}@{}}%
\>[3]{}\vartheta \;\Varid{l}\;(\mathbf{do}\;\{\mskip1.5mu \Varid{a}\leftarrow \Varid{m};\Varid{k}\;\Varid{a}\mskip1.5mu\}){}\<[E]%
\\
\>[B]{}\mathrel{=}{}\<[BE]%
\>[5]{}\mbox{\commentbegin  \ensuremath{\vartheta }  \commentend}{}\<[E]%
\\
\>[B]{}\hsindent{3}{}\<[3]%
\>[3]{}\mathbf{do}\;\{\mskip1.5mu {}\<[9]%
\>[9]{}\Varid{s}\leftarrow \Varid{get};\mathbf{let}\;\Varid{v}\mathrel{=}\Varid{l}\mathord{.}\Varid{view}\;\Varid{s};(\Varid{b},\Varid{v'''})\leftarrow \Varid{lift}\;(\mathbf{do}\;\{\mskip1.5mu \Varid{a}\leftarrow \Varid{m};\Varid{k}\;\Varid{a}\mskip1.5mu\}\;\Varid{v});{}\<[E]%
\\
\>[9]{}\mathbf{let}\;\Varid{s'''}\mathrel{=}\Varid{l}\mathord{.}\Varid{update}\;\Varid{s}\;\Varid{v'''};\Varid{set}\;\Varid{s'''};\Varid{return}\;\Varid{b}\mskip1.5mu\}{}\<[E]%
\\
\>[B]{}\mathrel{=}{}\<[BE]%
\>[5]{}\mbox{\commentbegin  \ensuremath{\Varid{lift}}; bind for \ensuremath{\Conid{StateT}}  \commentend}{}\<[E]%
\\
\>[B]{}\hsindent{3}{}\<[3]%
\>[3]{}\mathbf{do}\;\{\mskip1.5mu {}\<[9]%
\>[9]{}\Varid{s}\leftarrow \Varid{get};\mathbf{let}\;\Varid{v}\mathrel{=}\Varid{l}\mathord{.}\Varid{view}\;\Varid{s};(\Varid{a},\Varid{v'})\leftarrow \Varid{lift}\;(\Varid{m}\;\Varid{v});(\Varid{b},\Varid{v'''})\leftarrow \Varid{lift}\;(\Varid{k}\;\Varid{a}\;\Varid{v'});{}\<[E]%
\\
\>[9]{}\mathbf{let}\;\Varid{s'''}\mathrel{=}\Varid{l}\mathord{.}\Varid{update}\;\Varid{s}\;\Varid{v'''};\Varid{set}\;\Varid{s'''};\Varid{return}\;\Varid{b}\mskip1.5mu\}{}\<[E]%
\\
\>[B]{}\mathrel{=}{}\<[BE]%
\>[5]{}\mbox{\commentbegin  \ensuremath{\mathrm{(UU)}}  \commentend}{}\<[E]%
\\
\>[B]{}\hsindent{3}{}\<[3]%
\>[3]{}\mathbf{do}\;\{\mskip1.5mu {}\<[9]%
\>[9]{}\Varid{s}\leftarrow \Varid{get};\mathbf{let}\;\Varid{v}\mathrel{=}\Varid{l}\mathord{.}\Varid{view}\;\Varid{s};(\Varid{a},\Varid{v'})\leftarrow \Varid{lift}\;(\Varid{m}\;\Varid{v});(\Varid{b},\Varid{v'''})\leftarrow \Varid{lift}\;(\Varid{k}\;\Varid{a}\;\Varid{v'});{}\<[E]%
\\
\>[9]{}\mathbf{let}\;\Varid{s'}\mathrel{=}\Varid{l}\mathord{.}\Varid{update}\;\Varid{s}\;\Varid{v'};\mathbf{let}\;\Varid{s'''}\mathrel{=}\Varid{l}\mathord{.}\Varid{update}\;\Varid{s'}\;\Varid{v'''};\Varid{set}\;\Varid{s'''};\Varid{return}\;\Varid{b}\mskip1.5mu\}{}\<[E]%
\\
\>[B]{}\mathrel{=}{}\<[BE]%
\>[5]{}\mbox{\commentbegin  move \ensuremath{\mathbf{let}}s  \commentend}{}\<[E]%
\\
\>[B]{}\hsindent{3}{}\<[3]%
\>[3]{}\mathbf{do}\;\{\mskip1.5mu {}\<[9]%
\>[9]{}\Varid{s}\leftarrow \Varid{get};\mathbf{let}\;\Varid{v}\mathrel{=}\Varid{l}\mathord{.}\Varid{view}\;\Varid{s};(\Varid{a},\Varid{v'})\leftarrow \Varid{lift}\;(\Varid{m}\;\Varid{v});\mathbf{let}\;\Varid{s'}\mathrel{=}\Varid{l}\mathord{.}\Varid{update}\;\Varid{s}\;\Varid{v'};{}\<[E]%
\\
\>[9]{}(\Varid{b},\Varid{v'''})\leftarrow \Varid{lift}\;(\Varid{k}\;\Varid{a}\;\Varid{v'});\mathbf{let}\;\Varid{s'''}\mathrel{=}\Varid{l}\mathord{.}\Varid{update}\;\Varid{s'}\;\Varid{v'''};\Varid{set}\;\Varid{s'''};\Varid{return}\;\Varid{b}\mskip1.5mu\}{}\<[E]%
\\
\>[B]{}\mathrel{=}{}\<[BE]%
\>[5]{}\mbox{\commentbegin  \ensuremath{\mathrm{(SS)}} for \ensuremath{\Conid{StateT}}  \commentend}{}\<[E]%
\\
\>[B]{}\hsindent{3}{}\<[3]%
\>[3]{}\mathbf{do}\;\{\mskip1.5mu {}\<[9]%
\>[9]{}\Varid{s}\leftarrow \Varid{get};\mathbf{let}\;\Varid{v}\mathrel{=}\Varid{l}\mathord{.}\Varid{view}\;\Varid{s};(\Varid{a},\Varid{v'})\leftarrow \Varid{lift}\;(\Varid{m}\;\Varid{v});\mathbf{let}\;\Varid{s'}\mathrel{=}\Varid{l}\mathord{.}\Varid{update}\;\Varid{s}\;\Varid{v'};{}\<[E]%
\\
\>[9]{}(\Varid{b},\Varid{v'''})\leftarrow \Varid{lift}\;(\Varid{k}\;\Varid{a}\;\Varid{v'});\mathbf{let}\;\Varid{s'''}\mathrel{=}\Varid{l}\mathord{.}\Varid{update}\;\Varid{s'}\;\Varid{v'''};\Varid{set}\;\Varid{s'};\Varid{set}\;\Varid{s'''};\Varid{return}\;\Varid{b}\mskip1.5mu\}{}\<[E]%
\\
\>[B]{}\mathrel{=}{}\<[BE]%
\>[5]{}\mbox{\commentbegin  Lemma~\ref{lem:liftings-commute}  \commentend}{}\<[E]%
\\
\>[B]{}\hsindent{3}{}\<[3]%
\>[3]{}\mathbf{do}\;\{\mskip1.5mu {}\<[9]%
\>[9]{}\Varid{s}\leftarrow \Varid{get};\mathbf{let}\;\Varid{v}\mathrel{=}\Varid{l}\mathord{.}\Varid{view}\;\Varid{s};(\Varid{a},\Varid{v'})\leftarrow \Varid{lift}\;(\Varid{m}\;\Varid{v});\mathbf{let}\;\Varid{s'}\mathrel{=}\Varid{l}\mathord{.}\Varid{update}\;\Varid{s}\;\Varid{v'};\Varid{set}\;\Varid{s'};{}\<[E]%
\\
\>[9]{}(\Varid{b},\Varid{v'''})\leftarrow \Varid{lift}\;(\Varid{k}\;\Varid{a}\;\Varid{v'});\mathbf{let}\;\Varid{s'''}\mathrel{=}\Varid{l}\mathord{.}\Varid{update}\;\Varid{s'}\;\Varid{v'''};\Varid{set}\;\Varid{s'''};\Varid{return}\;\Varid{b}\mskip1.5mu\}{}\<[E]%
\\
\>[B]{}\mathrel{=}{}\<[BE]%
\>[5]{}\mbox{\commentbegin  \ensuremath{\mathrm{(UV)}}; \ensuremath{\mathrm{(SG)}} for \ensuremath{\Conid{StateT}}  \commentend}{}\<[E]%
\\
\>[B]{}\hsindent{3}{}\<[3]%
\>[3]{}\mathbf{do}\;\{\mskip1.5mu {}\<[9]%
\>[9]{}\Varid{s}\leftarrow \Varid{get};\mathbf{let}\;\Varid{v}\mathrel{=}\Varid{l}\mathord{.}\Varid{view}\;\Varid{s};(\Varid{a},\Varid{v'})\leftarrow \Varid{lift}\;(\Varid{m}\;\Varid{v});\mathbf{let}\;\Varid{s'}\mathrel{=}\Varid{l}\mathord{.}\Varid{update}\;\Varid{s}\;\Varid{v'};\Varid{set}\;\Varid{s'};{}\<[E]%
\\
\>[9]{}\Varid{s''}\leftarrow \Varid{get};\mathbf{let}\;\Varid{v''}\mathrel{=}\Varid{l}\mathord{.}\Varid{view}\;\Varid{s'};(\Varid{b},\Varid{v'''})\leftarrow \Varid{lift}\;(\Varid{k}\;\Varid{a}\;\Varid{v''});\mathbf{let}\;\Varid{s'''}\mathrel{=}\Varid{l}\mathord{.}\Varid{update}\;\Varid{s''}\;\Varid{v'''};\Varid{set}\;\Varid{s'''};{}\<[E]%
\\
\>[9]{}\Varid{return}\;\Varid{b}\mskip1.5mu\}{}\<[E]%
\\
\>[B]{}\mathrel{=}{}\<[BE]%
\>[5]{}\mbox{\commentbegin  \ensuremath{\vartheta }  \commentend}{}\<[E]%
\\
\>[B]{}\hsindent{3}{}\<[3]%
\>[3]{}\mathbf{do}\;\{\mskip1.5mu {}\<[9]%
\>[9]{}\Varid{a}\leftarrow \vartheta \;\Varid{l}\;\Varid{m};\vartheta \;\Varid{k}\;(\Varid{k}\;\Varid{a})\mskip1.5mu\}{}\<[E]%
\ColumnHook
\end{hscode}\resethooks
\endswithdisplay
\end{proof}

\begin{lemma} \label{lem:simplify-left-right}
Simplifying definitions, we have
\begin{hscode}\SaveRestoreHook
\column{B}{@{}>{\hspre}l<{\hspost}@{}}%
\column{18}{@{}>{\hspre}l<{\hspost}@{}}%
\column{26}{@{}>{\hspre}l<{\hspost}@{}}%
\column{E}{@{}>{\hspre}l<{\hspost}@{}}%
\>[B]{}\vartheta \;\Varid{fstLens}\;\Varid{m}{}\<[18]%
\>[18]{}\mathrel{=}\mathbf{do}\;\{\mskip1.5mu {}\<[26]%
\>[26]{}(\Varid{s}_{1},\Varid{s}_{2})\leftarrow \Varid{get};(\Varid{c},\Varid{s}_{1}')\leftarrow \Varid{lift}\;(\Varid{m}\;\Varid{s}_{1});\Varid{set}\;(\Varid{s}_{1}',\Varid{s}_{2});\Varid{return}\;\Varid{c}\mskip1.5mu\}{}\<[E]%
\\[\blanklineskip]%
\>[B]{}\vartheta \;\Varid{sndLens}\;\Varid{m}{}\<[18]%
\>[18]{}\mathrel{=}\mathbf{do}\;\{\mskip1.5mu {}\<[26]%
\>[26]{}(\Varid{s}_{1},\Varid{s}_{2})\leftarrow \Varid{get};(\Varid{a},\Varid{s}_{2}')\leftarrow \Varid{lift}\;(\Varid{m}\;\Varid{s}_{2});\Varid{set}\;(\Varid{s}_{1},\Varid{s}_{2}');\Varid{return}\;\Varid{a}\mskip1.5mu\}{}\<[E]%
\ColumnHook
\end{hscode}\resethooks
This will be convenient in what follows.
\end{lemma}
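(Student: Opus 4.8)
The plan is to prove both equations by direct unfolding, with no appeal to the lens laws: unlike Lemma~\ref{lem:vwb-monad-morphism}, this is a purely definitional simplification. First I would instantiate the definition of \ensuremath{\vartheta \;\Varid{l}\;\Varid{m}} (Definition~\ref{def:theta}) at \ensuremath{\Varid{l}\mathrel{=}\Varid{fstLens}}. Recall from the remark following Definition~\ref{def:lens} that \ensuremath{\Varid{fstLens}\mathord{.}\Varid{view}\mathrel{=}\Varid{fst}} and \ensuremath{\Varid{fstLens}\mathord{.}\Varid{update}\;(\Varid{x},\Varid{y})\;\Varid{x'}\mathrel{=}(\Varid{x'},\Varid{y})}. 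Because the do-notation convention of Definition~\ref{defn:do-notation} permits irrefutable patterns on generators, I would give the generator \ensuremath{\Varid{a}\leftarrow \Varid{get}} the pair pattern \ensuremath{(\Varid{s}_{1},\Varid{s}_{2})}; the two \ensuremath{\mathbf{let}}-bound values in the body of \ensuremath{\vartheta } then reduce to \ensuremath{\Varid{b}\mathrel{=}\Varid{s}_{1}} and \ensuremath{\Varid{a'}\mathrel{=}(\Varid{b'},\Varid{s}_{2})}.

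The remaining step is to inline these \ensuremath{\mathbf{let}}-bindings by substitution and rename the bound variable \ensuremath{\Varid{b'}} to \ensuremath{\Varid{s}_{1}'}, which produces exactly the claimed right-hand side
\ensuremath{\mathbf{do}\;\{\mskip1.5mu (\Varid{s}_{1},\Varid{s}_{2})\leftarrow \Varid{get};(\Varid{c},\Varid{s}_{1}')\leftarrow \Varid{lift}\;(\Varid{m}\;\Varid{s}_{1});\Varid{set}\;(\Varid{s}_{1}',\Varid{s}_{2});\Varid{return}\;\Varid{c}\mskip1.5mu\}}.
Eliminating a \ensuremath{\mathbf{let}} in this way is justified by the translation of \ensuremath{\mathbf{let}}-qualifiers together with the monad unit law, exactly as in the \ensuremath{\mathbf{let}} step of the proof of Lemma~\ref{lem:discardable}.

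The second equation is entirely symmetric: instantiating at \ensuremath{\Varid{l}\mathrel{=}\Varid{sndLens}}, with \ensuremath{\Varid{sndLens}\mathord{.}\Varid{view}\mathrel{=}\Varid{snd}} giving \ensuremath{\Varid{b}\mathrel{=}\Varid{s}_{2}} and \ensuremath{\Varid{sndLens}\mathord{.}\Varid{update}\;(\Varid{x},\Varid{y})\;\Varid{y'}\mathrel{=}(\Varid{x},\Varid{y'})} giving \ensuremath{\Varid{a'}\mathrel{=}(\Varid{s}_{1},\Varid{b'})}, then renaming \ensuremath{\Varid{b'}} to \ensuremath{\Varid{s}_{2}'}. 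There is no genuine obstacle here; the only point requiring a moment's care is justifying the direct pair pattern on \ensuremath{\Varid{get}} and the substitution of the \ensuremath{\mathbf{let}}-bindings, both of which are licensed by the conventions of Definition~\ref{defn:do-notation}. The lemma serves purely as a convenient reformulation of \ensuremath{\Varid{left}} and \ensuremath{\Varid{right}} (Definition~\ref{def:left-right-monad-morphism}) for use in the subsequent composition proofs.
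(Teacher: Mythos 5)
Your proposal is correct and matches the paper's treatment: the paper offers no explicit proof of this lemma, presenting it as an immediate consequence of "simplifying definitions," which is precisely the definitional unfolding you carry out (instantiating \ensuremath{\vartheta } at \ensuremath{\Varid{fstLens}}/\ensuremath{\Varid{sndLens}}, reducing \ensuremath{\Varid{view}} and \ensuremath{\Varid{update}}, inlining the \ensuremath{\mathbf{let}}s, and renaming the bound variable). Your explicit justifications — the irrefutable pair pattern on \ensuremath{\Varid{get}} and the \ensuremath{\mathbf{let}}-inlining as in Lemma~\ref{lem:discardable} — are exactly the details the paper leaves tacit, so there is nothing to add.
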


\begin{lemma} \label{lem:left-right-commute}
For arbitrary \ensuremath{\Varid{f}\mathbin{::}\sigma_1\hsarrow{\rightarrow }{\mathpunct{.}}\alpha} and \ensuremath{\Varid{m}\mathbin{::}\Conid{StateT}\;\sigma_2\;\tau\;\beta},
the liftings \ensuremath{\Varid{left}\;(\Varid{gets}\;\Varid{f})} and \ensuremath{\Varid{right}\;\Varid{m}} commute;
and symmetrically for \ensuremath{\Varid{right}\;(\Varid{gets}\;\Varid{f})} and \ensuremath{\Varid{left}\;\Varid{m}}.
(In fact, this holds for any \ensuremath{\Conid{T}}-pure computation, not just \ensuremath{\Varid{gets}\;\Varid{f}};
but we do not need the more general result.)
\end{lemma}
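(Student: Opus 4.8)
The plan is to reduce the pure-query side to a plain read of the compound state and then verify commutation by unfolding both computations into the base monad \ensuremath{\Conid{T}}. First I would simplify \ensuremath{\Varid{left}\;(\Varid{gets}\;\Varid{f})}. Starting from the simplified form of \ensuremath{\vartheta \;\Varid{fstLens}} in Lemma~\ref{lem:simplify-left-right}, and using that \ensuremath{\Varid{lift}} preserves \ensuremath{\Varid{return}} together with the state law \ensuremath{\mathrm{(GS)}} (equivalently, discardability of an unused \ensuremath{\Varid{get}}), one obtains
\[ \Varid{left}\;(\Varid{gets}\;\Varid{f}) \;=\; \Varid{gets}\;(\Varid{f}\circ \Varid{fst}), \]
that is, \ensuremath{\Varid{left}\;(\Varid{gets}\;\Varid{f})} is a \ensuremath{\Conid{T}}-pure query on the compound state that reads the first component, leaves the state unchanged, and incurs no effects in \ensuremath{\Conid{T}}. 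Concretely, feeding \ensuremath{(\Varid{s}_{1},\Varid{s}_{2})} through \ensuremath{\Varid{gets}\;(\Varid{f}\circ \Varid{fst})} returns \ensuremath{\Varid{f}\;\Varid{s}_{1}} and restores \ensuremath{(\Varid{s}_{1},\Varid{s}_{2})}, the intervening \ensuremath{\Varid{set}} of the unchanged state collapsing by \ensuremath{\mathrm{(GS)}}.

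Given this, the cleanest conceptual route is to invoke the standard fact that \ensuremath{\Conid{T}}-pure queries commute with arbitrary computations~\cite{Gibbons&Hinze11:Just}, which yields the commutation of \ensuremath{\Varid{left}\;(\Varid{gets}\;\Varid{f})} with \ensuremath{\Varid{right}\;\Varid{m}} at once. To keep the argument self-contained, however, I would instead unfold both sides as functions of an initial state \ensuremath{(\Varid{s}_{1},\Varid{s}_{2})} and reduce them in \ensuremath{\Conid{T}}. Using the simplified form of \ensuremath{\vartheta \;\Varid{sndLens}}, one has \ensuremath{(\Varid{right}\;\Varid{m})\;(\Varid{s}_{1},\Varid{s}_{2})\mathrel{=}\mathbf{do}\;\{\mskip1.5mu (\Varid{a},\Varid{s}_{2}')\leftarrow \Varid{m}\;\Varid{s}_{2};\Varid{return}\;(\Varid{a},(\Varid{s}_{1},\Varid{s}_{2}'))\mskip1.5mu\}}; substituting this and applying the monad unit and associativity laws, both \ensuremath{\mathbf{do}\;\{\mskip1.5mu \Varid{x}\leftarrow \Varid{left}\;(\Varid{gets}\;\Varid{f});\Varid{y}\leftarrow \Varid{right}\;\Varid{m};\Varid{return}\;(\Varid{x},\Varid{y})\mskip1.5mu\}} and the swapped composition reduce at \ensuremath{(\Varid{s}_{1},\Varid{s}_{2})} to the same \ensuremath{\Conid{T}}-computation
\[ \mathbf{do}\;\{\mskip1.5mu (\Varid{a},\Varid{s}_{2}')\leftarrow \Varid{m}\;\Varid{s}_{2};\Varid{return}\;((\Varid{f}\;\Varid{s}_{1},\Varid{a}),(\Varid{s}_{1},\Varid{s}_{2}'))\mskip1.5mu\}, \]
so the two \ensuremath{\Conid{StateT}} computations agree by \ensuremath{\eta}-equivalence. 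The symmetric claim, for \ensuremath{\Varid{right}\;(\Varid{gets}\;\Varid{f})} and \ensuremath{\Varid{left}\;\Varid{m}}, follows by the identical calculation with the roles of \ensuremath{\Varid{fst}}/\ensuremath{\Varid{snd}} and \ensuremath{\sigma_1}/\ensuremath{\sigma_2} interchanged.

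The point requiring care — rather than a genuine obstacle — is seeing why no commutativity assumption on \ensuremath{\Conid{T}} is needed, since a genuinely effectful query on the left could \emph{not} in general be commuted past \ensuremath{\Varid{right}\;\Varid{m}} when \ensuremath{\Conid{T}} is non-commutative. The reduction in the first step is exactly what removes this difficulty: once \ensuremath{\Varid{left}\;(\Varid{gets}\;\Varid{f})} is recognised as a pure read of \ensuremath{\Varid{s}_{1}}, all effects in \ensuremath{\Conid{T}} originate from \ensuremath{\Varid{m}} alone, so there is nothing to reorder in \ensuremath{\Conid{T}}; the only remaining content is that \ensuremath{\Varid{right}\;\Varid{m}} touches solely the second component and hence leaves \ensuremath{\Varid{s}_{1}} (and thus \ensuremath{\Varid{f}\;\Varid{s}_{1}}) intact, which is immediate from the definitions of \ensuremath{\Varid{left}} and \ensuremath{\Varid{right}}.
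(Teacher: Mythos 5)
Your self-contained argument (the unfolding route) is correct, but it follows a genuinely different path from the paper's proof. The paper stays entirely inside the compound monad \ensuremath{\Conid{StateT}\;(\sigma_1,\sigma_2)\;\tau}: it unfolds both liftings via Lemma~\ref{lem:simplify-left-right}, slides the pure binding \ensuremath{\mathbf{let}\;\Varid{a}\mathrel{=}\Varid{f}\;\Varid{s}_{1}} past the lifted computation and the \ensuremath{\Varid{set}}, then uses \ensuremath{\mathrm{(SG)}} to re-materialise that binding as a fresh \ensuremath{\Varid{get}} after the \ensuremath{\Varid{set}}, at which point the term folds back into the swapped composition by the same lemma. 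You instead first prove the intermediate identity \ensuremath{\Varid{left}\;(\Varid{gets}\;\Varid{f})\mathrel{=}\Varid{gets}\;(\Varid{f}\hsdot{\cdot }{.}\Varid{fst})} --- a reusable observation the paper never states explicitly --- and then unfold both orderings into the base monad \ensuremath{\Conid{T}} at an arbitrary initial state \ensuremath{(\Varid{s}_{1},\Varid{s}_{2})}, checking pointwise equality and concluding by \ensuremath{\eta}. Your pointwise style matches the paper's proof of Lemma~\ref{lem:liftings-commute} rather than its proof of this lemma; it is more concrete and makes visible exactly why no commutativity assumption on \ensuremath{\Conid{T}} is needed (all \ensuremath{\Conid{T}}-effects come from \ensuremath{\Varid{m}} alone), at the cost of abandoning the \ensuremath{\Conid{StateT}}-level algebra. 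The paper's version buys uniformity with the equational style used throughout its appendix and avoids explicit state threading.

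One caution: your ``cleanest conceptual route'' --- that \ensuremath{\Conid{T}}-pure queries commute with \emph{arbitrary} computations --- is false as stated, and you must not lean on it. A pure query does not commute with a computation that writes the state it reads: \ensuremath{\mathbf{do}\;\{\mskip1.5mu \Varid{a}\leftarrow \Varid{gets}\;\Varid{id};\Varid{set}\;\Varid{s'};\Varid{return}\;\Varid{a}\mskip1.5mu\}} differs from \ensuremath{\mathbf{do}\;\{\mskip1.5mu \Varid{set}\;\Varid{s'};\Varid{a}\leftarrow \Varid{gets}\;\Varid{id};\Varid{return}\;\Varid{a}\mskip1.5mu\}}. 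What the paper's citation of Gibbons and Hinze yields for free is only that pure queries commute with \emph{each other}. The real content of this lemma is the disjointness you identify at the end --- \ensuremath{\Varid{right}\;\Varid{m}} leaves the first component of the state untouched --- and that is established precisely by your explicit calculation (and, in the paper, by the \ensuremath{\mathrm{(SG)}} step); it does not follow from purity alone. Since your unfolding argument never actually invokes the false general claim, the proof stands.
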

\begin{proof} We have:
\begin{hscode}\SaveRestoreHook
\column{B}{@{}>{\hspre}c<{\hspost}@{}}%
\column{BE}{@{}l@{}}%
\column{3}{@{}>{\hspre}l<{\hspost}@{}}%
\column{5}{@{}>{\hspre}l<{\hspost}@{}}%
\column{9}{@{}>{\hspre}l<{\hspost}@{}}%
\column{E}{@{}>{\hspre}l<{\hspost}@{}}%
\>[3]{}\mathbf{do}\;\{\mskip1.5mu {}\<[9]%
\>[9]{}\Varid{a}\leftarrow \Varid{left}\;(\Varid{gets}\;\Varid{f});\Varid{b}\leftarrow \Varid{right}\;\Varid{m};\Varid{return}\;(\Varid{a},\Varid{b})\mskip1.5mu\}{}\<[E]%
\\
\>[B]{}\mathrel{=}{}\<[BE]%
\>[5]{}\mbox{\commentbegin  Lemma~\ref{lem:simplify-left-right}, \ensuremath{\Varid{gets}}  \commentend}{}\<[E]%
\\
\>[B]{}\hsindent{3}{}\<[3]%
\>[3]{}\mathbf{do}\;\{\mskip1.5mu {}\<[9]%
\>[9]{}(\Varid{s}_{1},\Varid{s}_{2})\leftarrow \Varid{get};\mathbf{let}\;\Varid{a}\mathrel{=}\Varid{f}\;\Varid{s}_{1};(\Varid{b},\Varid{s}_{2}')\leftarrow \Varid{lift}\;(\Varid{m}\;\Varid{s}_{2});\Varid{set}\;(\Varid{s}_{1},\Varid{s}_{2}');\Varid{return}\;(\Varid{a},\Varid{b})\mskip1.5mu\}{}\<[E]%
\\
\>[B]{}\mathrel{=}{}\<[BE]%
\>[5]{}\mbox{\commentbegin  move \ensuremath{\mathbf{let}}  \commentend}{}\<[E]%
\\
\>[B]{}\hsindent{3}{}\<[3]%
\>[3]{}\mathbf{do}\;\{\mskip1.5mu {}\<[9]%
\>[9]{}(\Varid{s}_{1},\Varid{s}_{2})\leftarrow \Varid{get};(\Varid{b},\Varid{s}_{2}')\leftarrow \Varid{lift}\;(\Varid{m}\;\Varid{s}_{2});\Varid{set}\;(\Varid{s}_{1},\Varid{s}_{2}');\mathbf{let}\;\Varid{a}\mathrel{=}\Varid{f}\;\Varid{s}_{1};\Varid{return}\;(\Varid{a},\Varid{b})\mskip1.5mu\}{}\<[E]%
\\
\>[B]{}\mathrel{=}{}\<[BE]%
\>[5]{}\mbox{\commentbegin  \ensuremath{\mathrm{(SG)}} for \ensuremath{\Conid{StateT}}  \commentend}{}\<[E]%
\\
\>[B]{}\hsindent{3}{}\<[3]%
\>[3]{}\mathbf{do}\;\{\mskip1.5mu {}\<[9]%
\>[9]{}(\Varid{s}_{1},\Varid{s}_{2})\leftarrow \Varid{get};(\Varid{b},\Varid{s}_{2}')\leftarrow \Varid{lift}\;(\Varid{m}\;\Varid{s}_{2});\Varid{set}\;(\Varid{s}_{1},\Varid{s}_{2}');(\Varid{s}_{1}'',\Varid{s}_{2}'')\leftarrow \Varid{get};\mathbf{let}\;\Varid{a}\mathrel{=}\Varid{f}\;\Varid{s}_{1}'';\Varid{return}\;(\Varid{a},\Varid{b})\mskip1.5mu\}{}\<[E]%
\\
\>[B]{}\mathrel{=}{}\<[BE]%
\>[5]{}\mbox{\commentbegin  Lemma~\ref{lem:simplify-left-right}  \commentend}{}\<[E]%
\\
\>[B]{}\hsindent{3}{}\<[3]%
\>[3]{}\mathbf{do}\;\{\mskip1.5mu {}\<[9]%
\>[9]{}\Varid{b}\leftarrow \Varid{right}\;\Varid{m};\Varid{a}\leftarrow \Varid{left}\;(\Varid{gets}\;\Varid{f});\Varid{return}\;(\Varid{a},\Varid{b})\mskip1.5mu\}{}\<[E]%
\ColumnHook
\end{hscode}\resethooks
The symmetric property of course has a symmetric proof too.
\end{proof}

\restatableTheorem{thm:composition-wb}
\begin{thm:composition-wb}[transparent composition]
    Given transparent \bx{}
\begin{hscode}\SaveRestoreHook
\column{B}{@{}>{\hspre}l<{\hspost}@{}}%
\column{E}{@{}>{\hspre}l<{\hspost}@{}}%
\>[B]{}bx_{1}\mathbin{::}\Conid{StateTBX}\;\Conid{S}_{1}\;\Conid{T}\;\Conid{A}\;\Conid{B}{}\<[E]%
\\
\>[B]{}bx_{2}\mathbin{::}\Conid{StateTBX}\;\Conid{S}_{2}\;\Conid{T}\;\Conid{B}\;\Conid{C}{}\<[E]%
\ColumnHook
\end{hscode}\resethooks
    their composition \ensuremath{bx_{1} \mathbin{\fatsemi} bx_{2}\mathbin{::}\Conid{StateTBX}\;(\Conid{S}_{1}\mathbin{\!\Join\!}\Conid{S}_{2})\;\Conid{T}\;\Conid{A}\;\Conid{C}} 
    is also transparent.
\end{thm:composition-wb}
\begin{proof}
We first have to check that the composition does indeed operate only
on the state space \ensuremath{\Conid{S}_{1}\mathbin{\!\Join\!}\Conid{S}_{2}}, by verifying that \ensuremath{\set{L}} and \ensuremath{\set{R}}
maintain this invariant. For \ensuremath{\set{L}}, we have:
\begin{hscode}\SaveRestoreHook
\column{B}{@{}>{\hspre}c<{\hspost}@{}}%
\column{BE}{@{}l@{}}%
\column{3}{@{}>{\hspre}l<{\hspost}@{}}%
\column{5}{@{}>{\hspre}l<{\hspost}@{}}%
\column{9}{@{}>{\hspre}l<{\hspost}@{}}%
\column{E}{@{}>{\hspre}l<{\hspost}@{}}%
\>[3]{}\mathbf{do}\;\{\mskip1.5mu {}\<[9]%
\>[9]{}\set{L}\;\Varid{a'};\Varid{left}\;(bx_{1}\mathord{.}\get{R})\mskip1.5mu\}{}\<[E]%
\\
\>[B]{}\mathrel{=}{}\<[BE]%
\>[5]{}\mbox{\commentbegin  \ensuremath{\set{L}}; \ensuremath{\Varid{return}}  \commentend}{}\<[E]%
\\
\>[B]{}\hsindent{3}{}\<[3]%
\>[3]{}\mathbf{do}\;\{\mskip1.5mu {}\<[9]%
\>[9]{}\Varid{left}\;(bx_{1}\mathord{.}\set{L}\;\Varid{a'});\Varid{b'}\leftarrow \Varid{left}\;(bx_{1}\mathord{.}\get{R});\Varid{right}\;(bx_{2}\mathord{.}\set{L}\;\Varid{b'});\Varid{b''}\leftarrow \Varid{left}\;(bx_{1}\mathord{.}\get{R});\Varid{return}\;\Varid{b''}\mskip1.5mu\}{}\<[E]%
\\
\>[B]{}\mathrel{=}{}\<[BE]%
\>[5]{}\mbox{\commentbegin  \ensuremath{bx_{1}\mathord{.}\get{R}\mathrel{=}\Varid{gets}\;(bx_{1}\mathord{.}\Varid{read}_{R})}, and Lemma~\ref{lem:left-right-commute}  \commentend}{}\<[E]%
\\
\>[B]{}\hsindent{3}{}\<[3]%
\>[3]{}\mathbf{do}\;\{\mskip1.5mu {}\<[9]%
\>[9]{}\Varid{left}\;(bx_{1}\mathord{.}\set{L}\;\Varid{a'});\Varid{b'}\leftarrow \Varid{left}\;(bx_{1}\mathord{.}\get{R});\Varid{b''}\leftarrow \Varid{left}\;(bx_{1}\mathord{.}\get{R});\Varid{right}\;(bx_{2}\mathord{.}\set{L}\;\Varid{b'});\Varid{return}\;\Varid{b''}\mskip1.5mu\}{}\<[E]%
\\
\>[B]{}\mathrel{=}{}\<[BE]%
\>[5]{}\mbox{\commentbegin  \ensuremath{\Varid{left}} is a monad morphism, and \ensuremath{\mathrm{(G_RG_R)}} for \ensuremath{bx_{1}}  \commentend}{}\<[E]%
\\
\>[B]{}\hsindent{3}{}\<[3]%
\>[3]{}\mathbf{do}\;\{\mskip1.5mu {}\<[9]%
\>[9]{}\Varid{left}\;(bx_{1}\mathord{.}\set{L}\;\Varid{a'});\Varid{b'}\leftarrow \Varid{left}\;(bx_{1}\mathord{.}\get{R});\mathbf{let}\;\Varid{b''}\mathrel{=}\Varid{b'};\Varid{right}\;(bx_{2}\mathord{.}\set{L}\;\Varid{b'});\Varid{return}\;\Varid{b''}\mskip1.5mu\}{}\<[E]%
\\
\>[B]{}\mathrel{=}{}\<[BE]%
\>[5]{}\mbox{\commentbegin  move \ensuremath{\mathbf{let}}  \commentend}{}\<[E]%
\\
\>[B]{}\hsindent{3}{}\<[3]%
\>[3]{}\mathbf{do}\;\{\mskip1.5mu {}\<[9]%
\>[9]{}\Varid{left}\;(bx_{1}\mathord{.}\set{L}\;\Varid{a'});\Varid{b'}\leftarrow \Varid{left}\;(bx_{1}\mathord{.}\get{R});\Varid{right}\;(bx_{2}\mathord{.}\set{L}\;\Varid{b'});\mathbf{let}\;\Varid{b''}\mathrel{=}\Varid{b'};\Varid{return}\;\Varid{b''}\mskip1.5mu\}{}\<[E]%
\\
\>[B]{}\mathrel{=}{}\<[BE]%
\>[5]{}\mbox{\commentbegin  \ensuremath{\Varid{right}} is a monad morphism, and \ensuremath{\mathrm{(S_LG_L)}} for \ensuremath{bx_{2}}  \commentend}{}\<[E]%
\\
\>[B]{}\hsindent{3}{}\<[3]%
\>[3]{}\mathbf{do}\;\{\mskip1.5mu {}\<[9]%
\>[9]{}\Varid{left}\;(bx_{1}\mathord{.}\set{L}\;\Varid{a'});\Varid{b'}\leftarrow \Varid{left}\;(bx_{1}\mathord{.}\get{R});\Varid{right}\;(bx_{2}\mathord{.}\set{L}\;\Varid{b'});\Varid{b''}\leftarrow \Varid{right}\;(bx_{2}\mathord{.}\get{L});\Varid{return}\;\Varid{b''}\mskip1.5mu\}{}\<[E]%
\\
\>[B]{}\mathrel{=}{}\<[BE]%
\>[5]{}\mbox{\commentbegin  \ensuremath{\set{L}}  \commentend}{}\<[E]%
\\
\>[B]{}\hsindent{3}{}\<[3]%
\>[3]{}\mathbf{do}\;\{\mskip1.5mu {}\<[9]%
\>[9]{}\set{L}\;\Varid{a'};\Varid{right}\;(bx_{2}\mathord{.}\get{L})\mskip1.5mu\}{}\<[E]%
\ColumnHook
\end{hscode}\resethooks
Of course, \ensuremath{\set{R}} is symmetric. Note that \ensuremath{\get{L}} (and symmetrically,
\ensuremath{\get{R}}) are \ensuremath{\Conid{T}}-pure queries, so do not affect the state:
\begin{hscode}\SaveRestoreHook
\column{B}{@{}>{\hspre}c<{\hspost}@{}}%
\column{BE}{@{}l@{}}%
\column{3}{@{}>{\hspre}l<{\hspost}@{}}%
\column{5}{@{}>{\hspre}l<{\hspost}@{}}%
\column{9}{@{}>{\hspre}l<{\hspost}@{}}%
\column{E}{@{}>{\hspre}l<{\hspost}@{}}%
\>[3]{}\get{L}{}\<[E]%
\\
\>[B]{}\mathrel{=}{}\<[BE]%
\>[5]{}\mbox{\commentbegin  \ensuremath{\get{L}}  \commentend}{}\<[E]%
\\
\>[B]{}\hsindent{3}{}\<[3]%
\>[3]{}\Varid{left}\;(bx_{1}\mathord{.}\get{L}){}\<[E]%
\\
\>[B]{}\mathrel{=}{}\<[BE]%
\>[5]{}\mbox{\commentbegin  Lemma~\ref{lem:simplify-left-right}  \commentend}{}\<[E]%
\\
\>[B]{}\hsindent{3}{}\<[3]%
\>[3]{}\mathbf{do}\;\{\mskip1.5mu {}\<[9]%
\>[9]{}(\Varid{s}_{1},\Varid{s}_{2})\leftarrow \Varid{get};(\Varid{c},\Varid{s}_{1}')\leftarrow \Varid{lift}\;(bx_{1}\mathord{.}\get{L}\;\Varid{s}_{1});\Varid{set}\;(\Varid{s}_{1}',\Varid{s}_{2});\Varid{return}\;\Varid{c}\mskip1.5mu\}{}\<[E]%
\\
\>[B]{}\mathrel{=}{}\<[BE]%
\>[5]{}\mbox{\commentbegin  \ensuremath{bx_{1}} is transparent  \commentend}{}\<[E]%
\\
\>[B]{}\hsindent{3}{}\<[3]%
\>[3]{}\mathbf{do}\;\{\mskip1.5mu {}\<[9]%
\>[9]{}(\Varid{s}_{1},\Varid{s}_{2})\leftarrow \Varid{get};\mathbf{let}\;(\Varid{c},\Varid{s}_{1}')\mathrel{=}(bx_{1}\mathord{.}\Varid{read}_{L}\;\Varid{s}_{1},\Varid{s}_{1});\Varid{set}\;(\Varid{s}_{1}',\Varid{s}_{2});\Varid{return}\;\Varid{c}\mskip1.5mu\}{}\<[E]%
\\
\>[B]{}\mathrel{=}{}\<[BE]%
\>[5]{}\mbox{\commentbegin  \ensuremath{\mathrm{(GS)}} for \ensuremath{\Conid{StateT}}  \commentend}{}\<[E]%
\\
\>[B]{}\hsindent{3}{}\<[3]%
\>[3]{}\mathbf{do}\;\{\mskip1.5mu {}\<[9]%
\>[9]{}(\Varid{s}_{1},\Varid{s}_{2})\leftarrow \Varid{get};\mathbf{let}\;\Varid{c}\mathrel{=}bx_{1}\mathord{.}\Varid{read}_{L}\;\Varid{s}_{1};\Varid{return}\;\Varid{c}\mskip1.5mu\}{}\<[E]%
\\
\>[B]{}\mathrel{=}{}\<[BE]%
\>[5]{}\mbox{\commentbegin  \ensuremath{\Varid{gets}}  \commentend}{}\<[E]%
\\
\>[B]{}\hsindent{3}{}\<[3]%
\>[3]{}\Varid{gets}\;(\Varid{read}_{L}\hsdot{\cdot }{.}\Varid{fst}){}\<[E]%
\ColumnHook
\end{hscode}\resethooks
So the composition is transparent, and hence we get the laws
\ensuremath{\mathrm{(G_LG_L)}}, \ensuremath{\mathrm{(G_RG_R)}}, and \ensuremath{\mathrm{(G_LG_R)}} for free: we need only check
the laws involving sets. For \ensuremath{\mathrm{(S_LG_L)}}, we have:
\begin{hscode}\SaveRestoreHook
\column{B}{@{}>{\hspre}c<{\hspost}@{}}%
\column{BE}{@{}l@{}}%
\column{3}{@{}>{\hspre}l<{\hspost}@{}}%
\column{5}{@{}>{\hspre}l<{\hspost}@{}}%
\column{9}{@{}>{\hspre}l<{\hspost}@{}}%
\column{E}{@{}>{\hspre}l<{\hspost}@{}}%
\>[3]{}\mathbf{do}\;\{\mskip1.5mu {}\<[9]%
\>[9]{}\set{L}\;\Varid{a'};\get{L}\mskip1.5mu\}{}\<[E]%
\\
\>[B]{}\mathrel{=}{}\<[BE]%
\>[5]{}\mbox{\commentbegin  \ensuremath{\set{L}}, \ensuremath{\get{L}}  \commentend}{}\<[E]%
\\
\>[B]{}\hsindent{3}{}\<[3]%
\>[3]{}\mathbf{do}\;\{\mskip1.5mu {}\<[9]%
\>[9]{}\Varid{left}\;(bx_{1}\mathord{.}\set{L}\;\Varid{a'});\Varid{b'}\leftarrow \Varid{left}\;(bx_{1}\mathord{.}\get{R});\Varid{right}\;(bx_{2}\mathord{.}\set{L}\;\Varid{b'});\Varid{left}\;(bx_{1}\mathord{.}\get{L})\mskip1.5mu\}{}\<[E]%
\\
\>[B]{}\mathrel{=}{}\<[BE]%
\>[5]{}\mbox{\commentbegin  Lemma~\ref{lem:left-right-commute}  \commentend}{}\<[E]%
\\
\>[B]{}\hsindent{3}{}\<[3]%
\>[3]{}\mathbf{do}\;\{\mskip1.5mu {}\<[9]%
\>[9]{}\Varid{left}\;(bx_{1}\mathord{.}\set{L}\;\Varid{a'});\Varid{b'}\leftarrow \Varid{left}\;(bx_{1}\mathord{.}\get{R});\Varid{a''}\leftarrow \Varid{left}\;(bx_{1}\mathord{.}\get{L});\Varid{right}\;(bx_{2}\mathord{.}\set{L}\;\Varid{b'});\Varid{return}\;\Varid{a''}\mskip1.5mu\}{}\<[E]%
\\
\>[B]{}\mathrel{=}{}\<[BE]%
\>[5]{}\mbox{\commentbegin  \ensuremath{\Varid{left}} is a monad morphism; \ensuremath{\mathrm{(G_LG_R)}} for \ensuremath{bx_{1}}  \commentend}{}\<[E]%
\\
\>[B]{}\hsindent{3}{}\<[3]%
\>[3]{}\mathbf{do}\;\{\mskip1.5mu {}\<[9]%
\>[9]{}\Varid{left}\;(bx_{1}\mathord{.}\set{L}\;\Varid{a'});\Varid{a''}\leftarrow \Varid{left}\;(bx_{1}\mathord{.}\get{L});\Varid{b'}\leftarrow \Varid{left}\;(bx_{1}\mathord{.}\get{R});\Varid{right}\;(bx_{2}\mathord{.}\set{L}\;\Varid{b'});\Varid{return}\;\Varid{a''}\mskip1.5mu\}{}\<[E]%
\\
\>[B]{}\mathrel{=}{}\<[BE]%
\>[5]{}\mbox{\commentbegin  \ensuremath{\Varid{left}} is a monad morphism; \ensuremath{\mathrm{(S_LG_L)}} for \ensuremath{bx_{1}}  \commentend}{}\<[E]%
\\
\>[B]{}\hsindent{3}{}\<[3]%
\>[3]{}\mathbf{do}\;\{\mskip1.5mu {}\<[9]%
\>[9]{}\Varid{left}\;(bx_{1}\mathord{.}\set{L}\;\Varid{a'});\mathbf{let}\;\Varid{a''}\mathrel{=}\Varid{a'};\Varid{b'}\leftarrow \Varid{left}\;(bx_{1}\mathord{.}\get{R});\Varid{right}\;(bx_{2}\mathord{.}\set{L}\;\Varid{b'});\Varid{return}\;\Varid{a''}\mskip1.5mu\}{}\<[E]%
\\
\>[B]{}\mathrel{=}{}\<[BE]%
\>[5]{}\mbox{\commentbegin  move \ensuremath{\mathbf{let}}; \ensuremath{\set{L}}  \commentend}{}\<[E]%
\\
\>[B]{}\hsindent{3}{}\<[3]%
\>[3]{}\mathbf{do}\;\{\mskip1.5mu {}\<[9]%
\>[9]{}\set{L}\;\Varid{a'};\Varid{return}\;\Varid{a'}\mskip1.5mu\}{}\<[E]%
\ColumnHook
\end{hscode}\resethooks
And for \ensuremath{\mathrm{(G_LS_L)}}, using the fact that the initial state is in \ensuremath{\Conid{S}_{1}\mathbin{\!\Join\!}\Conid{S}_{2}}:
\begin{hscode}\SaveRestoreHook
\column{B}{@{}>{\hspre}c<{\hspost}@{}}%
\column{BE}{@{}l@{}}%
\column{3}{@{}>{\hspre}l<{\hspost}@{}}%
\column{5}{@{}>{\hspre}l<{\hspost}@{}}%
\column{9}{@{}>{\hspre}l<{\hspost}@{}}%
\column{E}{@{}>{\hspre}l<{\hspost}@{}}%
\>[3]{}\mathbf{do}\;\{\mskip1.5mu {}\<[9]%
\>[9]{}\Varid{a}\leftarrow \get{L};\set{L}\;\Varid{a}\mskip1.5mu\}{}\<[E]%
\\
\>[B]{}\mathrel{=}{}\<[BE]%
\>[5]{}\mbox{\commentbegin  \ensuremath{\get{L}}, \ensuremath{\set{L}}  \commentend}{}\<[E]%
\\
\>[B]{}\hsindent{3}{}\<[3]%
\>[3]{}\mathbf{do}\;\{\mskip1.5mu {}\<[9]%
\>[9]{}\Varid{a}\leftarrow \Varid{left}\;(bx_{1}\mathord{.}\get{L});\Varid{left}\;(bx_{1}\mathord{.}\set{L}\;\Varid{a});\Varid{b'}\leftarrow \Varid{left}\;(bx_{1}\mathord{.}\get{R});\Varid{right}\;(bx_{2}\mathord{.}\set{L}\;\Varid{b'})\mskip1.5mu\}{}\<[E]%
\\
\>[B]{}\mathrel{=}{}\<[BE]%
\>[5]{}\mbox{\commentbegin  \ensuremath{\Varid{left}} is a monad morphism; \ensuremath{\mathrm{(G_LS_L)}} for \ensuremath{bx_{1}}  \commentend}{}\<[E]%
\\
\>[B]{}\hsindent{3}{}\<[3]%
\>[3]{}\mathbf{do}\;\{\mskip1.5mu {}\<[9]%
\>[9]{}\Varid{b'}\leftarrow \Varid{left}\;(bx_{1}\mathord{.}\get{R});\Varid{right}\;(bx_{2}\mathord{.}\set{L}\;\Varid{b'})\mskip1.5mu\}{}\<[E]%
\\
\>[B]{}\mathrel{=}{}\<[BE]%
\>[5]{}\mbox{\commentbegin  initial state is consistent, so \ensuremath{\Varid{left}\;(bx_{1}\mathord{.}\get{R})\mathrel{=}\Varid{right}\;(bx_{2}\mathord{.}\get{L})}  \commentend}{}\<[E]%
\\
\>[B]{}\hsindent{3}{}\<[3]%
\>[3]{}\mathbf{do}\;\{\mskip1.5mu {}\<[9]%
\>[9]{}\Varid{b'}\leftarrow \Varid{right}\;(bx_{2}\mathord{.}\get{L});\Varid{right}\;(bx_{2}\mathord{.}\set{L}\;\Varid{b'})\mskip1.5mu\}{}\<[E]%
\\
\>[B]{}\mathrel{=}{}\<[BE]%
\>[5]{}\mbox{\commentbegin  \ensuremath{\Varid{right}} is a monad morphism; \ensuremath{\mathrm{(G_LS_L)}} for \ensuremath{bx_{2}}  \commentend}{}\<[E]%
\\
\>[B]{}\hsindent{3}{}\<[3]%
\>[3]{}\Varid{return}\;(){}\<[E]%
\ColumnHook
\end{hscode}\resethooks
And of course, \ensuremath{\mathrm{(S_RG_R)}} and \ensuremath{\mathrm{(G_RS_R)}} are symmetric.
\end{proof}

\begin{proposition}
The \ensuremath{\Varid{identity}} bx (Definition~\ref{def:identity}) is transparent and overwritable.
\end{proposition}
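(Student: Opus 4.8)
The plan is to observe that the \Varid{identity} bx instantiates all four operations by the standard \Conid{StateT} state operations, namely \ensuremath{\get{L}\mathrel{=}\get{R}\mathrel{=}\Varid{get}} and \ensuremath{\set{L}\mathrel{=}\set{R}\mathrel{=}\Varid{set}} on the monad \ensuremath{\Conid{StateT}\;\alpha\;\tau}, so that every law it must satisfy reduces to one of the state-monad laws of Definition~\ref{def:state-monad-transformer}.

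First I would establish transparency. Take \ensuremath{\Varid{read}_{L}\mathrel{=}\Varid{read}_{R}\mathrel{=}\Varid{id}}. Then \ensuremath{\Varid{gets}\;\Varid{id}\mathrel{=}\Varid{get}\bind \Varid{return}\mathrel{=}\Varid{get}} by the right-unit monad law, so \ensuremath{\Varid{identity}\mathord{.}\get{L}\mathrel{=}\Varid{gets}\;\Varid{id}} and likewise for \ensuremath{\get{R}}; thus both gets are \ensuremath{\Conid{T}}-pure queries, and \Varid{identity} is transparent with \ensuremath{\Varid{read}_{L}\mathrel{=}\Varid{read}_{R}\mathrel{=}\Varid{id}}. (By Remark~\ref{rem:proper-unique} these are moreover the unique such witnesses whenever \ensuremath{\Varid{return}} is injective.)

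Next, transparency already delivers well-behavedness almost for free. Since both get operations are \ensuremath{\Conid{T}}-pure queries, the three get/get commutation laws \ensuremath{\mathrm{(G_LG_L)}}, \ensuremath{\mathrm{(G_RG_R)}}, \ensuremath{\mathrm{(G_LG_R)}} hold automatically, as noted just before Definition~\ref{def:proper}; concretely each collapses, via \ensuremath{\mathrm{(GG)}} applied to \ensuremath{\Varid{get}}, to \ensuremath{\mathbf{do}\;\{\mskip1.5mu \Varid{a}\leftarrow \Varid{get};\Varid{return}\;(\Varid{a},\Varid{a})\mskip1.5mu\}}. The remaining four well-behavedness laws are literal instances of the single-location state laws: \ensuremath{\mathrm{(S_LG_L)}} and \ensuremath{\mathrm{(S_RG_R)}} are \ensuremath{\mathrm{(SG)}}, while \ensuremath{\mathrm{(G_LS_L)}} and \ensuremath{\mathrm{(G_RS_R)}} are \ensuremath{\mathrm{(GS)}}. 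Finally, overwritability amounts to \ensuremath{\mathrm{(S_LS_L)}} and \ensuremath{\mathrm{(S_RS_R)}}, each of which is an instance of \ensuremath{\mathrm{(SS)}}.

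The only step not matched verbatim by a single state law is the cross-commutation law \ensuremath{\mathrm{(G_LG_R)}}, so that is where I would expect any (slight) obstacle to lie; but since \ensuremath{\get{L}} and \ensuremath{\get{R}} are both just \ensuremath{\Varid{get}}, it follows immediately from \ensuremath{\mathrm{(GG)}}, or equivalently from the general principle that \ensuremath{\Conid{T}}-pure queries commute. I therefore anticipate no genuine difficulty: the whole result is essentially the observation that \Varid{identity} inherits all its laws directly from the underlying state monad.
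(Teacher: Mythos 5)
Your proof is correct and follows essentially the same route as the paper's: the paper likewise observes that all well-behavedness and overwritability laws are immediate instances of \ensuremath{\mathrm{(GG)}}, \ensuremath{\mathrm{(SG)}}, \ensuremath{\mathrm{(GS)}}, \ensuremath{\mathrm{(SS)}} for \ensuremath{\Conid{StateT}\;\Conid{S}\;\Conid{T}}, and that transparency is obvious since \ensuremath{\Varid{get}\mathrel{=}\Varid{gets}\;\Varid{id}}. Your version merely spells out the details (including the handling of \ensuremath{\mathrm{(G_LG_R)}} via \ensuremath{\mathrm{(GG)}}) that the paper leaves implicit.
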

\begin{proof}
  The well-behavedness and overwritability laws are all
  immediate from the laws \ensuremath{\mathrm{(GG)}}, \ensuremath{\mathrm{(GS)}}, \ensuremath{\mathrm{(SG)}}, and
  \ensuremath{\mathrm{(SS)}} of the monad \ensuremath{\Conid{StateT}\;\Conid{S}\;\Conid{T}}.
  The transparency of \ensuremath{\Varid{identity}} is obvious from its definition (\ensuremath{\Varid{get}\mathrel{=}\Varid{gets}\;\Varid{id}}).  
\end{proof}

\begin{definition*}
In the case of \ensuremath{\Conid{StateTBX}}s, with \ensuremath{\Conid{T}_{1}\mathrel{=}\Conid{StateT}\;\Conid{S}_{1}\;\Conid{T}} and \ensuremath{\Conid{T}_{2}\mathrel{=}\Conid{StateT}\;\Conid{S}_{2}\;\Conid{T}} for some state types \ensuremath{\Conid{S}_{1},\Conid{S}_{2}}, we can construct a monad isomorphism
from \ensuremath{\Conid{T}_{1}} to \ensuremath{\Conid{T}_{2}} by lifting an isomorphism on the state spaces, using
the following construction:
\begin{hscode}\SaveRestoreHook
\column{B}{@{}>{\hspre}l<{\hspost}@{}}%
\column{18}{@{}>{\hspre}l<{\hspost}@{}}%
\column{22}{@{}>{\hspre}l<{\hspost}@{}}%
\column{E}{@{}>{\hspre}l<{\hspost}@{}}%
\>[B]{}\mathbf{data}\;\Conid{Iso}\;\alpha\;\beta\mathrel{=}\Conid{Iso}\;\{\mskip1.5mu \Varid{to}\mathbin{::}\alpha\hsarrow{\rightarrow }{\mathpunct{.}}\beta,\Varid{from}\mathbin{::}\beta\hsarrow{\rightarrow }{\mathpunct{.}}\alpha\mskip1.5mu\}{}\<[E]%
\\
\>[B]{}\Varid{inv}\;\Varid{h}\mathrel{=}\Conid{Iso}\;(\Varid{h}\mathord{.}\Varid{from})\;(\Varid{h}\mathord{.}\Varid{to}){}\<[E]%
\\[\blanklineskip]%
\>[B]{}\iota \mathbin{::}\Conid{Monad}\;\tau\Rightarrow {}\<[22]%
\>[22]{}\Conid{Iso}\;\sigma_1\;\sigma_2\hsarrow{\rightarrow }{\mathpunct{.}}\Conid{StateT}\;\sigma_1\;\tau\;\alpha\hsarrow{\rightarrow }{\mathpunct{.}}\Conid{StateT}\;\sigma_2\;\tau\;\alpha{}\<[E]%
\\
\>[B]{}\iota \;\Varid{h}\;\Varid{m}\mathrel{=}\mathbf{do}\;\{\mskip1.5mu {}\<[18]%
\>[18]{}\Varid{s}_{2}\leftarrow \Varid{get};(\Varid{a},\Varid{s}_{1})\leftarrow \Varid{lift}\;(\Varid{m}\;(\Varid{h}\mathord{.}\Varid{from}\;\Varid{s}_{2}));{}\<[E]%
\\
\>[18]{}\Varid{set}\;(\Varid{h}\mathord{.}\Varid{to}\;\Varid{s}_{1});\Varid{return}\;\Varid{a}\mskip1.5mu\}{}\<[E]%
\\
\>[B]{}\iota ^{-1}\;\Varid{h}\mathrel{=}\iota \;(\Varid{inv}\;\Varid{h}){}\<[E]%
\ColumnHook
\end{hscode}\resethooks
\endswithdisplay
\end{definition*}

So to show equivalence of \ensuremath{bx_{1}\mathbin{::}\Conid{StateTBX}\;\Conid{T}\;\Conid{S}_{1}\;\Conid{A}\;\Conid{B}} and
\ensuremath{bx_{2}\mathbin{::}\Conid{StateTBX}\;\Conid{T}\;\Conid{S}_{2}\;\Conid{A}\;\Conid{B}}, we just need to find an invertible function
\ensuremath{\Varid{h}\mathbin{::}\Conid{S}_{1}\hsarrow{\rightarrow }{\mathpunct{.}}\Conid{S}_{2}} such that the induced monad isomorphism \ensuremath{\iota \;\Varid{h}\mathbin{::}\Conid{StateT}\;\Conid{S}_{1}\;\Conid{T}\hsarrow{\rightarrow }{\mathpunct{.}}\Conid{StateT}\;\Conid{S}_{2}\;\Conid{T}} satisfies \ensuremath{\iota \;(bx_{1}\mathord{.}\get{L})\mathrel{=}bx_{2}\mathord{.}\get{L}} and \ensuremath{\iota \;(bx_{1}\mathord{.}\set{L}\;\Varid{a})\mathrel{=}bx_{2}\mathord{.}\set{L}\;\Varid{a}} and dually:
\begin{lemma}\label{lem:invertible-iso}
If \ensuremath{\Varid{h}\mathbin{::}\Conid{S}_{1}\hsarrow{\rightarrow }{\mathpunct{.}}\Conid{S}_{2}} is invertible, then \ensuremath{\iota \;\Varid{h}} is a monad isomorphism from 
\ensuremath{\Conid{StateT}\;\Conid{S}_{1}\;\Conid{T}} to \ensuremath{\Conid{StateT}\;\Conid{S}_{2}\;\Conid{T}}. 
\end{lemma}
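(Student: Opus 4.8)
The plan is to exhibit $\iota\,h$ as an instance of the lens-lifting construction $\vartheta$ of Definition~\ref{def:theta}, so that the monad-morphism half of the claim comes for free from Lemma~\ref{lem:vwb-monad-morphism}, and then to verify directly that $\iota\,(\mathit{inv}\,h)$ is a two-sided inverse. First I would read the invertible $h$ as an $\mathit{Iso}$, with $\mathit{to}$ and $\mathit{from}$ the two directions of the bijection, and observe that $\iota\,h = \vartheta\,l_h$ for the lens $l_h :: \mathit{Lens}\,S_2\,S_1$ given by $l_h.\mathit{view} = h.\mathit{from}$ and $l_h.\mathit{update}\,s_2\,s_1 = h.\mathit{to}\,s_1$ (so the old source $s_2$ is discarded on update); unfolding $\vartheta$ and comparing with the definition of $\iota$ shows these agree term-for-term. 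I would then check that $l_h$ is very well-behaved in the sense of Definition~\ref{def:lens}: (UV) and (VU) are exactly the cancellation equations $h.\mathit{from}\cdot h.\mathit{to} = \mathit{id}$ and $h.\mathit{to}\cdot h.\mathit{from} = \mathit{id}$ witnessing that $h$ is a bijection, while (UU) holds trivially because $l_h.\mathit{update}$ ignores its source argument. By Lemma~\ref{lem:vwb-monad-morphism}, $\iota\,h = \vartheta\,l_h$ is therefore a monad morphism; the same argument applied to the equally invertible $\mathit{inv}\,h$ shows $\iota\,(\mathit{inv}\,h)$ is a monad morphism too.

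Second, I would show the two are mutually inverse, working pointwise by writing the $\mathit{StateT}$ computations as functions of the outer state: $(\iota\,h\,m)\,s_2 = \mathbf{do}\,\{(a,s_1)\leftarrow m\,(h.\mathit{from}\,s_2); \mathit{return}\,(a, h.\mathit{to}\,s_1)\}$ in $T$, and symmetrically for $\iota\,(\mathit{inv}\,h)$ with the roles of $\mathit{to}$ and $\mathit{from}$ exchanged. Composing $\iota\,(\mathit{inv}\,h)$ after $\iota\,h$ and simplifying with the monad unit and associativity laws, the two occurrences of $h.\mathit{from}\cdot h.\mathit{to}$ (once on the view fed in, once on the state threaded back out) collapse to identities, leaving $m\,s_1$; an $\eta$-step then gives $\iota\,(\mathit{inv}\,h)\cdot\iota\,h = \mathit{id}$, and the opposite composite follows by the symmetric calculation using $\mathit{inv}\,(\mathit{inv}\,h) = h$. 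Two mutually inverse monad morphisms are precisely what Definition~\ref{def:equivalence} asks of a monad isomorphism, which completes the proof.

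The routine-but-fiddly part is the second step: the state passes through both the outer $\mathit{get}$/$\mathit{set}$ and the lifted inner computation, so the bookkeeping must track carefully which of $h.\mathit{to}$ and $h.\mathit{from}$ cancels where. I do not expect a genuine obstacle, since the only facts used beyond the monad laws are the two cancellation equations for $h$; the main risk is an off-by-one error in the state plumbing, which the pointwise presentation is designed to forestall.
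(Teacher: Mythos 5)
Your proposal is correct and follows essentially the same route as the paper: the paper also obtains the monad-morphism half by observing that an isomorphism determines a very well-behaved lens and invoking Lemma~\ref{lem:vwb-monad-morphism}, and then notes that the mutual-inverse property of $\iota\,h$ and $\iota^{-1}\,h$ follows directly from that of $h$ and $\mathit{inv}\,h$. Your write-up simply makes explicit the details the paper leaves as ``straightforward to verify'' (the identification $\iota\,h = \vartheta\,l_h$, the check of (UV), (VU), (UU), and the pointwise cancellation calculation), and those details check out.
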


\begin{proof}
  The fact that \ensuremath{\iota \;\Varid{h}} is a monad morphism follows from the fact
  that any isomorphism determines a very well-behaved \ensuremath{\Conid{T}}-lens whose updates commute in \ensuremath{\Conid{T}},  \fanote{is this better?} by
  Lemma~\ref{lem:vwb-monad-morphism}.  It is straightforward to verify
  that if \ensuremath{\Varid{h}} and \ensuremath{\Varid{inv}\;\Varid{h}} are inverses then so are \ensuremath{\iota \;\Varid{h}} and
  \ensuremath{\iota ^{-1}\;\Varid{h}}.
\end{proof}

\begin{lemma}\label{lem:identity}
  For any well-behaved \ensuremath{bx}, we have
\begin{hscode}\SaveRestoreHook
\column{B}{@{}>{\hspre}l<{\hspost}@{}}%
\column{31}{@{}>{\hspre}c<{\hspost}@{}}%
\column{31E}{@{}l@{}}%
\column{36}{@{}>{\hspre}l<{\hspost}@{}}%
\column{E}{@{}>{\hspre}l<{\hspost}@{}}%
\>[B]{}bx\equiv \Varid{identity} \mathbin{\fatsemi} bx{}\<[31]%
\>[31]{}\quad \mbox{and} \quad{}\<[31E]%
\>[36]{}bx\equiv bx \mathbin{\fatsemi} \Varid{identity}{}\<[E]%
\ColumnHook
\end{hscode}\resethooks
\endswithdisplay
\end{lemma}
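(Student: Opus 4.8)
The plan is to apply Lemma~\ref{lem:invertible-iso}: to establish $bx \equiv \Varid{identity} \mathbin{\fatsemi} bx$ it suffices to exhibit an invertible function $\Varid{h}$ between the two state spaces that induces a monad isomorphism $\iota\;\Varid{h}$ preserving all four $\Conid{BX}$ operations. Writing $bx \mathbin{::} \Conid{StateTBX}\;\Conid{T}\;\Conid{S}\;\Conid{A}\;\Conid{B}$ (transparent, as throughout this section), the composite $\Varid{identity} \mathbin{\fatsemi} bx$ has state space $\Conid{A} \mathbin{\!\Join\!} \Conid{S} = \{(a,s) \mid a = bx\mathord{.}\Varid{read}_{L}\;s\}$, since the identity bx has $\Varid{read}_{L} = \Varid{read}_{R} = \Varid{id}$. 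The obvious candidate is $\Varid{h}\;s = (bx\mathord{.}\Varid{read}_{L}\;s,\,s)$, with inverse given by second projection.

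First I would check that $\Varid{h}$ is a well-defined bijection \emph{onto} the join: $\Varid{h}\;s$ lies in $\Conid{A} \mathbin{\!\Join\!} \Conid{S}$ trivially, $\Varid{snd}$ after $\Varid{h}$ is the identity on $\Conid{S}$, and $\Varid{h}$ after $\Varid{snd}$ is the identity on the join precisely because consistency forces $a = bx\mathord{.}\Varid{read}_{L}\;s$. It is exactly the restriction to the join that makes $\Varid{h}$ invertible---on the full product $\Conid{A} \times \Conid{S}$ it would not be. Lemma~\ref{lem:invertible-iso} then yields that $\iota\;\Varid{h}$ is a monad isomorphism.

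The bulk of the work is verifying that $\iota\;\Varid{h}$ preserves the operations, for which I would unpack $\Varid{identity} \mathbin{\fatsemi} bx$ using the simplified form in the Remark following Theorem~\ref{thm:composition-wb} and expand $\iota\;\Varid{h}\;\Varid{m}$ by its definition. For the gets this is routine: since $bx$ is transparent, $bx\mathord{.}\get{L}$ and $bx\mathord{.}\get{R}$ are $\Conid{T}$-pure queries, so $\iota\;\Varid{h}\;(bx\mathord{.}\get{L})$ reads the inner state $s$, returns $bx\mathord{.}\Varid{read}_{L}\;s$, and restores $\Varid{h}\;s$ unchanged; on the join $bx\mathord{.}\Varid{read}_{L}\;s$ coincides with the $\Conid{A}$-component read by $(\Varid{identity} \mathbin{\fatsemi} bx)\mathord{.}\get{L} = \Varid{gets}\;\Varid{fst}$, and dually for $\get{R}$. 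For $\set{R}$ no law is needed: both sides run $bx\mathord{.}\set{R}\;c'$ on the inner state to reach $s'$ and then store $(bx\mathord{.}\Varid{read}_{L}\;s',\,s')$---the identity's $\set{R}$ merely records the value $bx\mathord{.}\Varid{read}_{L}\;s'$ handed to it.

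The main obstacle is the $\set{L}$ case, and it is where well-behavedness is actually used. Expanding $(\Varid{identity} \mathbin{\fatsemi} bx)\mathord{.}\set{L}\;a'$ stores $(a',\,s')$, where $s'$ is the inner state after $bx\mathord{.}\set{L}\;a'$, whereas $\iota\;\Varid{h}\;(bx\mathord{.}\set{L}\;a')$ stores $\Varid{h}\;s' = (bx\mathord{.}\Varid{read}_{L}\;s',\,s')$; these agree exactly when $bx\mathord{.}\Varid{read}_{L}\;s' = a'$, which is the content of law $\mathrm{(S_LG_L)}$ for $bx$ (using transparency to read $\Varid{read}_{L}$ off the resulting state). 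I would carry out this identification carefully, tracking that the computation stays inside the join throughout. The statement $bx \equiv bx \mathbin{\fatsemi} \Varid{identity}$ is entirely dual, using $\Varid{h}\;s = (s,\,bx\mathord{.}\Varid{read}_{R}\;s)$ and law $\mathrm{(S_RG_R)}$, so I would note that it follows by the same argument rather than repeating it.
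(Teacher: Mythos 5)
Your proposal is correct and follows essentially the same route as the paper's own proof: the same isomorphism \ensuremath{\Varid{h}\;\Varid{s}\mathrel{=}(bx\mathord{.}\Varid{read}_{L}\;\Varid{s},\Varid{s})} with inverse \ensuremath{\Varid{snd}}, lifted to a monad isomorphism via Lemma~\ref{lem:invertible-iso}, with transparency disposing of the get cases and the set--get roundtripping law carrying the \ensuremath{\set{L}} case. You even identify the key law correctly as \ensuremath{\mathrm{(S_LG_L)}}, which is what the paper's calculation actually applies (its inline comment citing \ensuremath{\mathrm{(G_LS_L)}} at that step appears to be a typo).
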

\begin{proof}
For the LHS of
\ensuremath{\mathrm{(Identity)}}, consider 
\begin{hscode}\SaveRestoreHook
\column{B}{@{}>{\hspre}l<{\hspost}@{}}%
\column{11}{@{}>{\hspre}c<{\hspost}@{}}%
\column{11E}{@{}l@{}}%
\column{15}{@{}>{\hspre}l<{\hspost}@{}}%
\column{E}{@{}>{\hspre}l<{\hspost}@{}}%
\>[B]{}\Varid{identity}{}\<[11]%
\>[11]{}\mathbin{::}{}\<[11E]%
\>[15]{}\Conid{StateTBX}\;\Conid{T}\;\Conid{A}\;\Conid{A}\;\Conid{A}{}\<[E]%
\\
\>[B]{}bx{}\<[11]%
\>[11]{}\mathbin{::}{}\<[11E]%
\>[15]{}\Conid{StateTBX}\;\Conid{T}\;\Conid{S}\;\Conid{A}\;\Conid{B}{}\<[E]%
\ColumnHook
\end{hscode}\resethooks
so \ensuremath{\Varid{identity} \mathbin{\fatsemi} bx\mathbin{::}\Conid{StateTBX}\;\Conid{T}\;(\Conid{A}\mathbin{\!\Join\!}\Conid{S})\;\Conid{A}\;\Conid{B}}, 
where 
\begin{hscode}\SaveRestoreHook
\column{B}{@{}>{\hspre}l<{\hspost}@{}}%
\column{11}{@{}>{\hspre}l<{\hspost}@{}}%
\column{24}{@{}>{\hspre}l<{\hspost}@{}}%
\column{57}{@{}>{\hspre}c<{\hspost}@{}}%
\column{57E}{@{}l@{}}%
\column{60}{@{}>{\hspre}l<{\hspost}@{}}%
\column{E}{@{}>{\hspre}l<{\hspost}@{}}%
\>[B]{}\Conid{A}\mathbin{\!\Join\!}\Conid{S}{}\<[11]%
\>[11]{}\mathrel{=}\{\mskip1.5mu (\Varid{a},\Varid{s})\mid {}\<[24]%
\>[24]{}\Varid{eval}\;(\Varid{identity}\mathord{.}\get{R})\;\Varid{a}{}\<[57]%
\>[57]{}\mathrel{=}{}\<[57E]%
\>[60]{}\Varid{eval}\;(bx\mathord{.}\get{L})\;\Varid{s}\mskip1.5mu\}{}\<[E]%
\\
\>[11]{}\mathrel{=}\{\mskip1.5mu (bx\mathord{.}\Varid{read}_{L}\;\Varid{s},\Varid{s})\mid \Varid{s}\in\Conid{S}\mskip1.5mu\}{}\<[E]%
\ColumnHook
\end{hscode}\resethooks
To define an isomorphism between \ensuremath{\Conid{StateT}\;\Conid{S}\;\Conid{T}} and \ensuremath{\Conid{StateT}\;(\Conid{A}\mathbin{\!\Join\!}\Conid{S})\;\Conid{T}}, we
need to define an isomorphism \ensuremath{\Varid{f}\mathbin{:}\Conid{S}\hsarrow{\rightarrow }{\mathpunct{.}}(\Conid{A}\mathbin{\!\Join\!}\Conid{S})}. This is 
straightforward: the two directions are 
\begin{hscode}\SaveRestoreHook
\column{B}{@{}>{\hspre}l<{\hspost}@{}}%
\column{25}{@{}>{\hspre}c<{\hspost}@{}}%
\column{25E}{@{}l@{}}%
\column{29}{@{}>{\hspre}c<{\hspost}@{}}%
\column{29E}{@{}l@{}}%
\column{33}{@{}>{\hspre}l<{\hspost}@{}}%
\column{E}{@{}>{\hspre}l<{\hspost}@{}}%
\>[B]{}\Varid{f}\;\Varid{s}\mathrel{=}(bx\mathord{.}\Varid{read}_{L}\;\Varid{s},\Varid{s}){}\<[25]%
\>[25]{}\quad{}\<[25E]%
\>[29]{}\quad{}\<[29E]%
\>[33]{}\Varid{f}^{-1}\mathrel{=}\Varid{snd}{}\<[E]%
\ColumnHook
\end{hscode}\resethooks

We just need to verify compatibility of the isomorphism
\ensuremath{\Conid{StateT}\;\Conid{S}\;\Conid{T}\cong \Conid{StateT}\;(\Conid{A}\mathbin{\!\Join\!}\Conid{S})\;\Conid{T}} 
that results
with the operations of \ensuremath{\Varid{identity} \mathbin{\fatsemi} bx} and \ensuremath{bx}, that is:
\begin{hscode}\SaveRestoreHook
\column{B}{@{}>{\hspre}l<{\hspost}@{}}%
\column{3}{@{}>{\hspre}l<{\hspost}@{}}%
\column{36}{@{}>{\hspre}c<{\hspost}@{}}%
\column{36E}{@{}l@{}}%
\column{38}{@{}>{\hspre}c<{\hspost}@{}}%
\column{38E}{@{}l@{}}%
\column{39}{@{}>{\hspre}l<{\hspost}@{}}%
\column{41}{@{}>{\hspre}l<{\hspost}@{}}%
\column{E}{@{}>{\hspre}l<{\hspost}@{}}%
\>[3]{}(\iota \;\Varid{h})\;bx\mathord{.}\get{L}{}\<[36]%
\>[36]{}\mathrel{=}{}\<[36E]%
\>[39]{}(\Varid{id} \mathbin{\fatsemi} bx)\mathord{.}\get{L}{}\<[E]%
\\
\>[3]{}(\iota \;\Varid{h})\;bx\mathord{.}\get{R}{}\<[36]%
\>[36]{}\mathrel{=}{}\<[36E]%
\>[39]{}(\Varid{id} \mathbin{\fatsemi} bx)\mathord{.}\get{R}{}\<[E]%
\\
\>[3]{}(\iota \;\Varid{h})\;bx\mathord{.}\set{L}\;\Varid{a}{}\<[38]%
\>[38]{}\mathrel{=}{}\<[38E]%
\>[41]{}(\Varid{id} \mathbin{\fatsemi} bx)\mathord{.}\set{L}\;\Varid{a}{}\<[E]%
\\
\>[3]{}(\iota \;\Varid{h})\;bx\mathord{.}\set{R}\;\Varid{b}{}\<[38]%
\>[38]{}\mathrel{=}{}\<[38E]%
\>[41]{}(\Varid{id} \mathbin{\fatsemi} bx)\mathord{.}\set{R}\;\Varid{b}{}\<[E]%
\ColumnHook
\end{hscode}\resethooks

\jrcnote{TODO: Fill in a few cases}
We illustrate the \ensuremath{\get{L},\set{L}} cases, as they are more interesting. 
\begin{hscode}\SaveRestoreHook
\column{B}{@{}>{\hspre}l<{\hspost}@{}}%
\column{3}{@{}>{\hspre}l<{\hspost}@{}}%
\column{5}{@{}>{\hspre}l<{\hspost}@{}}%
\column{7}{@{}>{\hspre}c<{\hspost}@{}}%
\column{7E}{@{}l@{}}%
\column{10}{@{}>{\hspre}l<{\hspost}@{}}%
\column{E}{@{}>{\hspre}l<{\hspost}@{}}%
\>[3]{}(\iota \;\Varid{h})\;bx\mathord{.}\get{L}{}\<[E]%
\\
\>[3]{}\hsindent{2}{}\<[5]%
\>[5]{}\mathrel{=}\mbox{\commentbegin  definition of \ensuremath{\iota \;\Varid{h}} (using \ensuremath{\Varid{f}} and \ensuremath{\Varid{f}^{-1}})  \commentend}{}\<[E]%
\\
\>[3]{}\mathbf{do}\;{}\<[7]%
\>[7]{}\{\mskip1.5mu {}\<[7E]%
\>[10]{}(\Varid{a},\Varid{s})\leftarrow \Varid{get};{}\<[E]%
\\
\>[10]{}(\Varid{a'},\Varid{s'})\leftarrow \Varid{lift}\;(bx\mathord{.}\get{L}\;(\Varid{snd}\;(\Varid{a},\Varid{s})));{}\<[E]%
\\
\>[10]{}\Varid{set}\;(bx\mathord{.}\Varid{read}_{L}\;\Varid{s'},\Varid{s'});{}\<[E]%
\\
\>[10]{}\Varid{return}\;\Varid{a'}\mskip1.5mu\}{}\<[E]%
\\
\>[3]{}\hsindent{2}{}\<[5]%
\>[5]{}\mathrel{=}\mbox{\commentbegin  simplifying \ensuremath{\Varid{snd}}; \ensuremath{bx} is transparent   \commentend}{}\<[E]%
\\
\>[3]{}\mathbf{do}\;{}\<[7]%
\>[7]{}\{\mskip1.5mu {}\<[7E]%
\>[10]{}(\Varid{a},\Varid{s})\leftarrow \Varid{get};{}\<[E]%
\\
\>[10]{}(\Varid{a'},\Varid{s'})\leftarrow \Varid{lift}\;(\Varid{return}\;(bx\mathord{.}\Varid{read}_{L}\;\Varid{s},\Varid{s}));{}\<[E]%
\\
\>[10]{}\Varid{set}\;(bx\mathord{.}\Varid{read}_{L}\;\Varid{s'},\Varid{s'});{}\<[E]%
\\
\>[10]{}\Varid{return}\;\Varid{a'}\mskip1.5mu\}{}\<[E]%
\\
\>[3]{}\hsindent{2}{}\<[5]%
\>[5]{}\mathrel{=}\mbox{\commentbegin  \ensuremath{\Varid{lift}} monad morphism  \commentend}{}\<[E]%
\\
\>[3]{}\mathbf{do}\;{}\<[7]%
\>[7]{}\{\mskip1.5mu {}\<[7E]%
\>[10]{}(\Varid{a},\Varid{s})\leftarrow \Varid{get};{}\<[E]%
\\
\>[10]{}(\Varid{a'},\Varid{s'})\leftarrow \Varid{return}\;(bx\mathord{.}\Varid{read}_{L}\;\Varid{s},\Varid{s});{}\<[E]%
\\
\>[10]{}\Varid{set}\;(bx\mathord{.}\Varid{read}_{L}\;\Varid{s'},\Varid{s'});{}\<[E]%
\\
\>[10]{}\Varid{return}\;\Varid{a'}\mskip1.5mu\}{}\<[E]%
\\
\>[3]{}\hsindent{2}{}\<[5]%
\>[5]{}\mathrel{=}\mbox{\commentbegin  inlining \ensuremath{\Varid{a'}} and \ensuremath{\Varid{s'}}  \commentend}{}\<[E]%
\\
\>[3]{}\mathbf{do}\;{}\<[7]%
\>[7]{}\{\mskip1.5mu {}\<[7E]%
\>[10]{}(\Varid{a},\Varid{s})\leftarrow \Varid{get};{}\<[E]%
\\
\>[10]{}\Varid{set}\;(bx\mathord{.}\Varid{read}_{L}\;\Varid{s},\Varid{s});{}\<[E]%
\\
\>[10]{}\Varid{return}\;bx\mathord{.}\Varid{read}_{L}\;\Varid{s}\mskip1.5mu\}{}\<[E]%
\\
\>[3]{}\hsindent{2}{}\<[5]%
\>[5]{}\mathrel{=}\mbox{\commentbegin  \ensuremath{(\Varid{a},\Varid{s})\mathbin{::}(\Conid{A}\mathbin{\!\Join\!}\Conid{S})} implies \ensuremath{bx\mathord{.}\Varid{read}_{L}\;\Varid{s}\mathrel{=}\Varid{a}}  \commentend}{}\<[E]%
\\
\>[3]{}\mathbf{do}\;{}\<[7]%
\>[7]{}\{\mskip1.5mu {}\<[7E]%
\>[10]{}(\Varid{a},\Varid{s})\leftarrow \Varid{get};{}\<[E]%
\\
\>[10]{}\Varid{set}\;(\Varid{a},\Varid{s});{}\<[E]%
\\
\>[10]{}\Varid{return}\;\Varid{a}\mskip1.5mu\}{}\<[E]%
\\
\>[3]{}\hsindent{2}{}\<[5]%
\>[5]{}\mathrel{=}\mbox{\commentbegin  \ensuremath{\mathrm{(GG)}} and then \ensuremath{\mathrm{(GS)}} for \ensuremath{\Conid{StateT}\;(\Conid{A}\mathbin{\!\Join\!}\Conid{S})\;\Conid{T}}  \commentend}{}\<[E]%
\\
\>[3]{}\mathbf{do}\;{}\<[7]%
\>[7]{}\{\mskip1.5mu {}\<[7E]%
\>[10]{}(\Varid{a},\Varid{s})\leftarrow \Varid{get};{}\<[E]%
\\
\>[10]{}\Varid{return}\;\Varid{a}\mskip1.5mu\}{}\<[E]%
\\
\>[3]{}\hsindent{2}{}\<[5]%
\>[5]{}\mathrel{=}\mbox{\commentbegin  introduce trivial binding (where \ensuremath{\Varid{get}\mathbin{::}\Conid{StateT}\;\Conid{A}\;\Conid{A}})  \commentend}{}\<[E]%
\\
\>[3]{}\mathbf{do}\;{}\<[7]%
\>[7]{}\{\mskip1.5mu {}\<[7E]%
\>[10]{}(\Varid{a},\Varid{s})\leftarrow \Varid{get};(\Varid{a'},\anonymous )\leftarrow \Varid{lift}\;(\Varid{get}\;\Varid{a});{}\<[E]%
\\
\>[10]{}\Varid{return}\;\Varid{a'}\mskip1.5mu\}{}\<[E]%
\\
\>[3]{}\hsindent{2}{}\<[5]%
\>[5]{}\mathrel{=}\mbox{\commentbegin  definition of \ensuremath{\Varid{id}\mathord{.}\get{L}}  \commentend}{}\<[E]%
\\
\>[3]{}\mathbf{do}\;{}\<[7]%
\>[7]{}\{\mskip1.5mu {}\<[7E]%
\>[10]{}(\Varid{a},\Varid{s})\leftarrow \Varid{get};(\Varid{a'},\anonymous )\leftarrow \Varid{lift}\;(\Varid{id}\mathord{.}\get{L}\;\Varid{a});{}\<[E]%
\\
\>[10]{}\Varid{return}\;\Varid{a}\mskip1.5mu\}{}\<[E]%
\\
\>[3]{}\hsindent{2}{}\<[5]%
\>[5]{}\mathrel{=}\mbox{\commentbegin  Form of \ensuremath{\get{L}} preceding Remark \ref{rem:effectful-gets}  \commentend}{}\<[E]%
\\
\>[3]{}(\Varid{id} \mathbin{\fatsemi} bx)\mathord{.}\get{L}{}\<[E]%
\ColumnHook
\end{hscode}\resethooks
\fanote{Second-last step appealed to the forms of \ensuremath{\get{L}} at the end of the `alternative approach to composition' material... so it may be good to keep those forms at least}
Here is the proof for \ensuremath{\set{L}}:
\begin{hscode}\SaveRestoreHook
\column{B}{@{}>{\hspre}l<{\hspost}@{}}%
\column{3}{@{}>{\hspre}l<{\hspost}@{}}%
\column{4}{@{}>{\hspre}l<{\hspost}@{}}%
\column{5}{@{}>{\hspre}l<{\hspost}@{}}%
\column{7}{@{}>{\hspre}c<{\hspost}@{}}%
\column{7E}{@{}l@{}}%
\column{10}{@{}>{\hspre}l<{\hspost}@{}}%
\column{20}{@{}>{\hspre}l<{\hspost}@{}}%
\column{E}{@{}>{\hspre}l<{\hspost}@{}}%
\>[3]{}(\Varid{id} \mathbin{\fatsemi} bx)\mathord{.}\set{L}\;\Varid{a'}{}\<[E]%
\\
\>[3]{}\hsindent{2}{}\<[5]%
\>[5]{}\mathrel{=}\mbox{\commentbegin  Form of \ensuremath{\set{L}} preceding Remark \ref{rem:effectful-gets}  \commentend}{}\<[E]%
\\
\>[3]{}\mathbf{do}\;\{\mskip1.5mu {}\<[10]%
\>[10]{}(\Varid{a},\Varid{s})\leftarrow \Varid{get};((),\Varid{a'})\leftarrow \Varid{lift}\;(\Varid{id}\mathord{.}\set{L}\;\Varid{a'}\;\Varid{a});{}\<[E]%
\\
\>[10]{}\hsindent{10}{}\<[20]%
\>[20]{}(\Varid{b},\anonymous )\leftarrow \Varid{lift}\;(\Varid{id}\mathord{.}\get{R}\;\Varid{a'});{}\<[E]%
\\
\>[10]{}\hsindent{10}{}\<[20]%
\>[20]{}((),\Varid{s'})\leftarrow \Varid{lift}\;(bx\mathord{.}\set{L}\;\Varid{b}\;\Varid{s});\Varid{set}\;(\Varid{a'},\Varid{s'});\Varid{return}\;()\mskip1.5mu\}{}\<[E]%
\\
\>[3]{}\hsindent{2}{}\<[5]%
\>[5]{}\mathrel{=}\mbox{\commentbegin  definition of \ensuremath{\set{L},\get{R}} for \ensuremath{\Varid{id}}   \commentend}{}\<[E]%
\\
\>[3]{}\mathbf{do}\;\{\mskip1.5mu {}\<[10]%
\>[10]{}(\Varid{a},\Varid{s})\leftarrow \Varid{get};((),\Varid{a'})\leftarrow \Varid{lift}\;(\Varid{set}\;\Varid{a'}\;\Varid{a});{}\<[E]%
\\
\>[10]{}\hsindent{10}{}\<[20]%
\>[20]{}(\Varid{b},\anonymous )\leftarrow \Varid{lift}\;(\Varid{get}\;\Varid{a'});{}\<[E]%
\\
\>[10]{}\hsindent{10}{}\<[20]%
\>[20]{}((),\Varid{s'})\leftarrow \Varid{lift}\;(bx\mathord{.}\set{L}\;\Varid{b}\;\Varid{s});\Varid{set}\;(\Varid{a'},\Varid{s'});\Varid{return}\;()\mskip1.5mu\}{}\<[E]%
\\
\>[3]{}\hsindent{2}{}\<[5]%
\>[5]{}\mathrel{=}\mbox{\commentbegin  definitions of \ensuremath{\Varid{set}} and \ensuremath{\Varid{get}}  \commentend}{}\<[E]%
\\
\>[3]{}\mathbf{do}\;\{\mskip1.5mu {}\<[10]%
\>[10]{}(\Varid{a},\Varid{s})\leftarrow \Varid{get};((),\Varid{a'})\leftarrow \Varid{lift}\;(\Varid{return}\;((),\Varid{a'}));{}\<[E]%
\\
\>[10]{}\hsindent{10}{}\<[20]%
\>[20]{}(\Varid{b},\anonymous )\leftarrow \Varid{lift}\;(\Varid{return}\;(\Varid{a'},\Varid{a'}));{}\<[E]%
\\
\>[10]{}\hsindent{10}{}\<[20]%
\>[20]{}((),\Varid{s'})\leftarrow \Varid{lift}\;(bx\mathord{.}\set{L}\;\Varid{b}\;\Varid{s});\Varid{set}\;(\Varid{a'},\Varid{s'})\mskip1.5mu\}{}\<[E]%
\\
\>[3]{}\hsindent{2}{}\<[5]%
\>[5]{}\mathrel{=}\mbox{\commentbegin  \ensuremath{\Varid{lift}\;(\Varid{return}\;\Varid{x})\mathrel{=}\Varid{return}\;\Varid{x}}; inline resulting \ensuremath{\mathbf{let}}s  \commentend}{}\<[E]%
\\
\>[3]{}\mathbf{do}\;\{\mskip1.5mu {}\<[10]%
\>[10]{}(\Varid{a},\Varid{s})\leftarrow \Varid{get};{}\<[E]%
\\
\>[10]{}((),\Varid{s'})\leftarrow \Varid{lift}\;(bx\mathord{.}\set{L}\;\Varid{a'}\;\Varid{s});{}\<[E]%
\\
\>[10]{}\Varid{set}\;(\Varid{a'},\Varid{s'})\mskip1.5mu\}{}\<[E]%
\\
\>[3]{}\hsindent{1}{}\<[4]%
\>[4]{}\mathrel{=}\mbox{\commentbegin  introducing binding (\ensuremath{\Varid{return}\;\Varid{a'}\mathbin{::}\Conid{StateT}\;\Conid{S}\;\Conid{T}\;\Conid{A}})  \commentend}{}\<[E]%
\\
\>[3]{}\mathbf{do}\;\{\mskip1.5mu {}\<[10]%
\>[10]{}(\Varid{a},\Varid{s})\leftarrow \Varid{get};{}\<[E]%
\\
\>[10]{}((),\Varid{s'})\leftarrow \Varid{lift}\;(bx\mathord{.}\set{L}\;\Varid{a'}\;\Varid{s});{}\<[E]%
\\
\>[10]{}(\Varid{a''},\Varid{s''})\leftarrow \Varid{lift}\;(\Varid{return}\;\Varid{a'}\;\Varid{s'});{}\<[E]%
\\
\>[10]{}\Varid{set}\;(\Varid{a''},\Varid{s''})\mskip1.5mu\}{}\<[E]%
\\
\>[3]{}\hsindent{1}{}\<[4]%
\>[4]{}\mathrel{=}\mbox{\commentbegin  \ensuremath{\Varid{lift}} monad morphism \fanote{slight gloss}  \commentend}{}\<[E]%
\\
\>[3]{}\mathbf{do}\;\{\mskip1.5mu {}\<[10]%
\>[10]{}(\Varid{a},\Varid{s})\leftarrow \Varid{get};{}\<[E]%
\\
\>[10]{}(\Varid{a''},\Varid{s''})\leftarrow \Varid{lift}\;(\mathbf{do}\;\{\mskip1.5mu bx\mathord{.}\set{L}\;\Varid{a'};\Varid{return}\;\Varid{a'}\mskip1.5mu\}\;\Varid{s});{}\<[E]%
\\
\>[10]{}\Varid{set}\;(\Varid{a''},\Varid{s''})\mskip1.5mu\}{}\<[E]%
\\
\>[3]{}\hsindent{1}{}\<[4]%
\>[4]{}\mathrel{=}\mbox{\commentbegin  \ensuremath{\mathrm{(G_LS_L)}} for \ensuremath{bx}  \commentend}{}\<[E]%
\\
\>[3]{}\mathbf{do}\;\{\mskip1.5mu {}\<[10]%
\>[10]{}(\Varid{a},\Varid{s})\leftarrow \Varid{get};{}\<[E]%
\\
\>[10]{}(\Varid{a''},\Varid{s''})\leftarrow \Varid{lift}\;(\mathbf{do}\;\{\mskip1.5mu bx\mathord{.}\set{L}\;\Varid{a'};bx\mathord{.}\get{L}\mskip1.5mu\}\;\Varid{s});{}\<[E]%
\\
\>[10]{}\Varid{set}\;(\Varid{a''},\Varid{s''})\mskip1.5mu\}{}\<[E]%
\\
\>[3]{}\hsindent{1}{}\<[4]%
\>[4]{}\mathrel{=}\mbox{\commentbegin  \ensuremath{\Varid{lift}} monad morphism  \commentend}{}\<[E]%
\\
\>[3]{}\mathbf{do}\;\{\mskip1.5mu {}\<[10]%
\>[10]{}(\Varid{a},\Varid{s})\leftarrow \Varid{get};{}\<[E]%
\\
\>[10]{}((),\Varid{s'})\leftarrow \Varid{lift}\;(bx\mathord{.}\set{L}\;\Varid{a'}\;\Varid{s});{}\<[E]%
\\
\>[10]{}(\Varid{a''},\Varid{s''})\leftarrow \Varid{lift}\;(bx\mathord{.}\get{L}\;\Varid{s'});{}\<[E]%
\\
\>[10]{}\Varid{set}\;(\Varid{a''},\Varid{s''})\mskip1.5mu\}{}\<[E]%
\\
\>[3]{}\hsindent{1}{}\<[4]%
\>[4]{}\mathrel{=}\mbox{\commentbegin  \ensuremath{bx} is transparent  \commentend}{}\<[E]%
\\
\>[3]{}\mathbf{do}\;\{\mskip1.5mu {}\<[10]%
\>[10]{}(\Varid{a},\Varid{s})\leftarrow \Varid{get};{}\<[E]%
\\
\>[10]{}((),\Varid{s'})\leftarrow \Varid{lift}\;(bx\mathord{.}\set{L}\;\Varid{a'}\;\Varid{s});{}\<[E]%
\\
\>[10]{}(\Varid{a''},\Varid{s''})\leftarrow \Varid{lift}\;(\Varid{return}\;(bx\mathord{.}\Varid{read}_{L}\;\Varid{s'},\Varid{s'}));{}\<[E]%
\\
\>[10]{}\Varid{set}\;(\Varid{a''},\Varid{s''})\mskip1.5mu\}{}\<[E]%
\\
\>[3]{}\hsindent{1}{}\<[4]%
\>[4]{}\mathrel{=}\mbox{\commentbegin  \ensuremath{\Varid{lift}\hsdot{\cdot }{.}\Varid{return}\mathrel{=}\Varid{return}}; inlining \ensuremath{\Varid{a''}} and \ensuremath{\Varid{s''}}  \commentend}{}\<[E]%
\\
\>[3]{}\mathbf{do}\;\{\mskip1.5mu {}\<[10]%
\>[10]{}(\Varid{a},\Varid{s})\leftarrow \Varid{get};{}\<[E]%
\\
\>[10]{}((),\Varid{s'})\leftarrow \Varid{lift}\;(bx\mathord{.}\set{L}\;\Varid{a'}\;\Varid{s});{}\<[E]%
\\
\>[10]{}\Varid{set}\;(bx\mathord{.}\Varid{read}_{L}\;\Varid{s'},\Varid{s'})\mskip1.5mu\}{}\<[E]%
\\
\>[3]{}\hsindent{2}{}\<[5]%
\>[5]{}\mathrel{=}\mbox{\commentbegin  introducing trivial \ensuremath{\Varid{snd}} and \ensuremath{\Varid{return}}  \commentend}{}\<[E]%
\\
\>[3]{}\mathbf{do}\;{}\<[7]%
\>[7]{}\{\mskip1.5mu {}\<[7E]%
\>[10]{}(\Varid{a},\Varid{s})\leftarrow \Varid{get};{}\<[E]%
\\
\>[10]{}((),\Varid{s'})\leftarrow \Varid{lift}\;(bx\mathord{.}\set{L}\;\Varid{a'}\;(\Varid{snd}\;(\Varid{a},\Varid{s})));{}\<[E]%
\\
\>[10]{}\Varid{set}\;(bx\mathord{.}\Varid{read}_{L}\;\Varid{s'},\Varid{s'});{}\<[E]%
\\
\>[10]{}\Varid{return}\;()\mskip1.5mu\}{}\<[E]%
\\
\>[3]{}\hsindent{2}{}\<[5]%
\>[5]{}\mathrel{=}\mbox{\commentbegin  definition of \ensuremath{\iota \;\Varid{h}}  \commentend}{}\<[E]%
\\
\>[3]{}(\iota \;\Varid{h})\;bx\mathord{.}\set{L}{}\<[E]%
\ColumnHook
\end{hscode}\resethooks
Thus \ensuremath{\Varid{identity} \mathbin{\fatsemi} bx\equiv bx}.  The reasoning for the second
equation  is symmetric.
\end{proof}

\begin{lemma}\label{lem:associativity}
    For any well-behaved \ensuremath{bx}, we have
\begin{hscode}\SaveRestoreHook
\column{B}{@{}>{\hspre}l<{\hspost}@{}}%
\column{E}{@{}>{\hspre}l<{\hspost}@{}}%
\>[B]{}(bx_{1} \mathbin{\fatsemi} bx_{2}) \mathbin{\fatsemi} bx_{3}\equiv bx_{1} \mathbin{\fatsemi} (bx_{2} \mathbin{\fatsemi} bx_{3}){}\<[E]%
\ColumnHook
\end{hscode}\resethooks
\endswithdisplay
\end{lemma}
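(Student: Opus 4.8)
The plan is to exhibit a \bx{} isomorphism between the two composites whose underlying bijection on hidden state is the reassociation of triples. By Lemma~\ref{lem:invertible-iso}, any invertible \ensuremath{\Varid{h}} on the state spaces induces a monad isomorphism \ensuremath{\iota\;\Varid{h}}, so it suffices to produce such an \ensuremath{\Varid{h}} between \ensuremath{(\Conid{S}_{1}\mathbin{\!\Join\!}\Conid{S}_{2})\mathbin{\!\Join\!}\Conid{S}_{3}} and \ensuremath{\Conid{S}_{1}\mathbin{\!\Join\!}(\Conid{S}_{2}\mathbin{\!\Join\!}\Conid{S}_{3})} and to check that \ensuremath{\iota\;\Varid{h}} preserves the four \bx{} operations, transforming those of \ensuremath{(bx_{1} \mathbin{\fatsemi} bx_{2}) \mathbin{\fatsemi} bx_{3}} into those of \ensuremath{bx_{1} \mathbin{\fatsemi} (bx_{2} \mathbin{\fatsemi} bx_{3})}.

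First I would spell out the two consistent-state spaces explicitly. Using transparency together with the unpacked definitions displayed after Theorem~\ref{thm:composition-wb}, the R-read of \ensuremath{bx_{1} \mathbin{\fatsemi} bx_{2}} on \ensuremath{(\Varid{s}_{1},\Varid{s}_{2})} is \ensuremath{bx_{2}\mathord{.}\Varid{read}_{R}\;\Varid{s}_{2}} and its L-read is \ensuremath{bx_{1}\mathord{.}\Varid{read}_{L}\;\Varid{s}_{1}}. Substituting these into the two nested join conditions shows that \ensuremath{((\Varid{s}_{1},\Varid{s}_{2}),\Varid{s}_{3})} lies in \ensuremath{(\Conid{S}_{1}\mathbin{\!\Join\!}\Conid{S}_{2})\mathbin{\!\Join\!}\Conid{S}_{3}} exactly when \ensuremath{bx_{1}\mathord{.}\Varid{read}_{R}\;\Varid{s}_{1}\mathrel{=}bx_{2}\mathord{.}\Varid{read}_{L}\;\Varid{s}_{2}} and \ensuremath{bx_{2}\mathord{.}\Varid{read}_{R}\;\Varid{s}_{2}\mathrel{=}bx_{3}\mathord{.}\Varid{read}_{L}\;\Varid{s}_{3}}; the same pair of equations characterises \ensuremath{(\Varid{s}_{1},(\Varid{s}_{2},\Varid{s}_{3}))\in\Conid{S}_{1}\mathbin{\!\Join\!}(\Conid{S}_{2}\mathbin{\!\Join\!}\Conid{S}_{3})}. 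Hence the reassociation \ensuremath{\Varid{h}\;((\Varid{s}_{1},\Varid{s}_{2}),\Varid{s}_{3})\mathrel{=}(\Varid{s}_{1},(\Varid{s}_{2},\Varid{s}_{3}))}, with \ensuremath{\Varid{inv}\;\Varid{h}} the opposite bracketing, restricts to a bijection between the two join spaces.

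Next I would unpack both composites down to their primitive component operations. Expanding the left-associated \ensuremath{\set{L}\;\Varid{a'}} through the outer composition and then the inner one yields, in order: \ensuremath{bx_{1}\mathord{.}\set{L}\;\Varid{a'}}, a read \ensuremath{\Varid{b}\mathrel{=}bx_{1}\mathord{.}\Varid{read}_{R}}, \ensuremath{bx_{2}\mathord{.}\set{L}\;\Varid{b}}, a read \ensuremath{\Varid{c}\mathrel{=}bx_{2}\mathord{.}\Varid{read}_{R}}, and \ensuremath{bx_{3}\mathord{.}\set{L}\;\Varid{c}}, finally storing \ensuremath{((\Varid{s}_{1}',\Varid{s}_{2}'),\Varid{s}_{3}')}. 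The right-associated \ensuremath{\set{L}\;\Varid{a'}} expands to exactly the same sequence of component effects in the same order, storing \ensuremath{(\Varid{s}_{1}',(\Varid{s}_{2}',\Varid{s}_{3}'))}; symmetrically, both \ensuremath{\set{R}} operations run \ensuremath{bx_{3}}, \ensuremath{bx_{2}}, \ensuremath{bx_{1}} in that order. The get operations, being \ensuremath{\Conid{T}}-pure by transparency, merely read \ensuremath{bx_{1}\mathord{.}\Varid{read}_{L}} from the \ensuremath{\Conid{S}_{1}} slot and \ensuremath{bx_{3}\mathord{.}\Varid{read}_{R}} from the \ensuremath{\Conid{S}_{3}} slot on either side. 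Thus the two composites differ only in how the triple of states is bracketed, which is precisely what \ensuremath{\Varid{h}} mediates.

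Finally, to conclude \ensuremath{\iota\;\Varid{h}\;(((bx_{1} \mathbin{\fatsemi} bx_{2}) \mathbin{\fatsemi} bx_{3})\mathord{.}\Varid{op})\mathrel{=}(bx_{1} \mathbin{\fatsemi} (bx_{2} \mathbin{\fatsemi} bx_{3}))\mathord{.}\Varid{op}} for each operation, I would rerun the calculation of the \ensuremath{\set{L}} case of Lemma~\ref{lem:identity} with \ensuremath{\Varid{h}} the reassociation in place of the embedding used there: expand \ensuremath{\iota\;\Varid{h}} by its definition, push the \ensuremath{\Varid{lift}} of the left-associated operation through on the reassociated state, and use \ensuremath{\mathrm{(GS)}} and \ensuremath{\mathrm{(SG)}} for \ensuremath{\Conid{StateT}} together with transparency to collapse the synchronising \ensuremath{\Varid{get}}/\ensuremath{\Varid{set}} bookkeeping introduced by \ensuremath{\iota}, recognising the result as the right-associated operation. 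The \textbf{main obstacle} is exactly this bookkeeping step: because \ensuremath{\iota\;\Varid{h}} wraps its own \ensuremath{\Varid{get}}/\ensuremath{\Varid{set}} around the lifted computation, one must show that these reads and writes cancel, which relies on the fact that both composites restrict to the join space, so the reassociated state is always genuinely consistent and \ensuremath{\Varid{read}} recovers the stored view. Once this is verified for \ensuremath{\set{L}} (and dually \ensuremath{\set{R}}), the \ensuremath{\get{L}} and \ensuremath{\get{R}} cases are immediate from transparency, and the desired equivalence follows.
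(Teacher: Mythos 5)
Your proposal follows essentially the same route as the paper's proof: the reassociation bijection \ensuremath{\Varid{h}} on state spaces induces a monad isomorphism \ensuremath{\iota \;\Varid{h}} (Lemma~\ref{lem:invertible-iso}), both composites' \ensuremath{\set{L}}/\ensuremath{\set{R}} operations unpack to the identical sequence of component operations differing only in tuple bracketing (the paper's properties \ensuremath{\mathrm{(\ast)}} and \ensuremath{\mathrm{(\dag)}}), and the \ensuremath{\Varid{get}}/\ensuremath{\Varid{set}} bookkeeping of \ensuremath{\iota } is collapsed via the \ensuremath{\Conid{StateT}} laws and transparency. Your explicit verification that \ensuremath{\Varid{h}} restricts to a bijection between the two join spaces is a point the paper leaves implicit, but it does not change the structure of the argument.
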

\begin{proof}
Consider composing
\begin{hscode}\SaveRestoreHook
\column{B}{@{}>{\hspre}l<{\hspost}@{}}%
\column{6}{@{}>{\hspre}l<{\hspost}@{}}%
\column{E}{@{}>{\hspre}l<{\hspost}@{}}%
\>[B]{}bx_{1}{}\<[6]%
\>[6]{}\mathbin{::}\Conid{StateTBX}\;\Conid{T}\;\Conid{S}_{1}\;\Conid{A}\;\Conid{B}{}\<[E]%
\\
\>[B]{}bx_{2}{}\<[6]%
\>[6]{}\mathbin{::}\Conid{StateTBX}\;\Conid{T}\;\Conid{S}_{2}\;\Conid{B}\;\Conid{C}{}\<[E]%
\\
\>[B]{}bx_{3}{}\<[6]%
\>[6]{}\mathbin{::}\Conid{StateTBX}\;\Conid{T}\;\Conid{S}_{3}\;\Conid{C}\;\Conid{D}{}\<[E]%
\ColumnHook
\end{hscode}\resethooks
in the following two ways:
\begin{hscode}\SaveRestoreHook
\column{B}{@{}>{\hspre}l<{\hspost}@{}}%
\column{3}{@{}>{\hspre}l<{\hspost}@{}}%
\column{29}{@{}>{\hspre}l<{\hspost}@{}}%
\column{E}{@{}>{\hspre}l<{\hspost}@{}}%
\>[3]{}bx_{1} \mathbin{\fatsemi} (bx_{2} \mathbin{\fatsemi} bx_{3}){}\<[29]%
\>[29]{}\mathbin{::}\Conid{StateTBX}\;\Conid{T}\;(\Conid{S}_{1}\mathbin{\!\Join\!}(\Conid{S}_{2}\mathbin{\!\Join\!}\Conid{S}_{3}))\;\Conid{A}\;\Conid{C}{}\<[E]%
\\
\>[3]{}(bx_{1} \mathbin{\fatsemi} bx_{2}) \mathbin{\fatsemi} bx_{3}{}\<[29]%
\>[29]{}\mathbin{::}\Conid{StateTBX}\;\Conid{T}\;((\Conid{S}_{1}\mathbin{\!\Join\!}\Conid{S}_{2})\mathbin{\!\Join\!}\Conid{S}_{3})\;\Conid{A}\;\Conid{C}{}\<[E]%
\ColumnHook
\end{hscode}\resethooks
Proving \ensuremath{\mathrm{(Assoc)}} 
amounts to showing that the obvious
isomorphism \ensuremath{\Varid{h}\mathbin{::}(\Varid{a},(\Varid{b},\Varid{c}))\hsarrow{\rightarrow }{\mathpunct{.}}((\Varid{a},\Varid{b}),\Varid{c})}
induces a monad isomorphism 
\ensuremath{\iota \;\Varid{h}\mathbin{::}\Conid{StateT}\;(\Conid{S}_{1}\mathbin{\!\Join\!}(\Conid{S}_{2}\mathbin{\!\Join\!}\Conid{S}_{3}))\;\Conid{T}\;\alpha\hsarrow{\rightarrow }{\mathpunct{.}}\Conid{StateT}\;((\Conid{S}_{1}\mathbin{\!\Join\!}\Conid{S}_{2})\mathbin{\!\Join\!}\Conid{S}_{3})\;\Conid{T}\;\alpha} 
and checking that 
\begin{hscode}\SaveRestoreHook
\column{B}{@{}>{\hspre}l<{\hspost}@{}}%
\column{3}{@{}>{\hspre}l<{\hspost}@{}}%
\column{62}{@{}>{\hspre}c<{\hspost}@{}}%
\column{62E}{@{}l@{}}%
\column{65}{@{}>{\hspre}l<{\hspost}@{}}%
\column{E}{@{}>{\hspre}l<{\hspost}@{}}%
\>[3]{}(\iota \;\Varid{h})\;(bx_{1} \mathbin{\fatsemi} (bx_{2} \mathbin{\fatsemi} bx_{3}))\mathord{.}\get{L}{}\<[62]%
\>[62]{}\mathrel{=}{}\<[62E]%
\>[65]{}((bx_{1} \mathbin{\fatsemi} bx_{2}) \mathbin{\fatsemi} bx_{3})\mathord{.}\get{L}{}\<[E]%
\\
\>[3]{}(\iota \;\Varid{h})\;(bx_{1} \mathbin{\fatsemi} (bx_{2} \mathbin{\fatsemi} bx_{3}))\mathord{.}\get{R}{}\<[62]%
\>[62]{}\mathrel{=}{}\<[62E]%
\>[65]{}((bx_{1} \mathbin{\fatsemi} bx_{2}) \mathbin{\fatsemi} bx_{3})\mathord{.}\get{R}{}\<[E]%
\\
\>[3]{}(\iota \;\Varid{h})\;(bx_{1} \mathbin{\fatsemi} (bx_{2} \mathbin{\fatsemi} bx_{3}))\mathord{.}\set{L}\;\Varid{a}{}\<[62]%
\>[62]{}\mathrel{=}{}\<[62E]%
\>[65]{}((bx_{1} \mathbin{\fatsemi} bx_{2}) \mathbin{\fatsemi} bx_{3})\mathord{.}\set{L}\;\Varid{a}{}\<[E]%
\\
\>[3]{}(\iota \;\Varid{h})\;(bx_{1} \mathbin{\fatsemi} (bx_{2} \mathbin{\fatsemi} bx_{3}))\mathord{.}\set{R}\;\Varid{b}{}\<[62]%
\>[62]{}\mathrel{=}{}\<[62E]%
\>[65]{}((bx_{1} \mathbin{\fatsemi} bx_{2}) \mathbin{\fatsemi} bx_{3})\mathord{.}\set{R}\;\Varid{b}{}\<[E]%
\ColumnHook
\end{hscode}\resethooks
We outline the proofs of the \ensuremath{\get{L}} and \ensuremath{\set{L}} cases. For \ensuremath{\get{L}}, we make use of the following property: 
\begin{hscode}\SaveRestoreHook
\column{B}{@{}>{\hspre}l<{\hspost}@{}}%
\column{12}{@{}>{\hspre}c<{\hspost}@{}}%
\column{12E}{@{}l@{}}%
\column{16}{@{}>{\hspre}l<{\hspost}@{}}%
\column{E}{@{}>{\hspre}l<{\hspost}@{}}%
\>[B]{}\mathrm{(\ast)}{}\<[12]%
\>[12]{}\quad{}\<[12E]%
\>[16]{}(bx_{1} \mathbin{\fatsemi} (bx_{2} \mathbin{\fatsemi} bx_{3}))\mathord{.}\get{L}\;(\Varid{s}_{1},(\Varid{s}_{2},\Varid{s}_{3}))\mathrel{=}\Varid{return}\;(bx_{1}\mathord{.}\Varid{read}_{L}\;\Varid{s}_{1},(\Varid{s}_{1},(\Varid{s}_{2},\Varid{s}_{3}))){}\<[E]%
\ColumnHook
\end{hscode}\resethooks
This allows us to consider the \ensuremath{\get{L}} condition above:
\begin{hscode}\SaveRestoreHook
\column{B}{@{}>{\hspre}l<{\hspost}@{}}%
\column{3}{@{}>{\hspre}l<{\hspost}@{}}%
\column{5}{@{}>{\hspre}l<{\hspost}@{}}%
\column{7}{@{}>{\hspre}c<{\hspost}@{}}%
\column{7E}{@{}l@{}}%
\column{10}{@{}>{\hspre}l<{\hspost}@{}}%
\column{E}{@{}>{\hspre}l<{\hspost}@{}}%
\>[3]{}(\iota \;\Varid{h})\;(bx_{1} \mathbin{\fatsemi} (bx_{2} \mathbin{\fatsemi} bx_{3}))\mathord{.}\get{L}{}\<[E]%
\\
\>[3]{}\hsindent{2}{}\<[5]%
\>[5]{}\mathrel{=}\mbox{\commentbegin  definition of \ensuremath{\iota \;\Varid{h}}  \commentend}{}\<[E]%
\\
\>[3]{}\mathbf{do}\;{}\<[7]%
\>[7]{}\{\mskip1.5mu {}\<[7E]%
\>[10]{}((\Varid{s}_{1},\Varid{s}_{2}),\Varid{s}_{3})\leftarrow \Varid{get};{}\<[E]%
\\
\>[10]{}(\Varid{a'},(\Varid{s}_{1}',(\Varid{s}_{2}',\Varid{s}_{3}')))\leftarrow \Varid{lift}\;((bx_{1} \mathbin{\fatsemi} (bx_{2} \mathbin{\fatsemi} bx_{3}))\mathord{.}\get{L}\;(\Varid{s}_{1},(\Varid{s}_{2},\Varid{s}_{3})));{}\<[E]%
\\
\>[10]{}\Varid{set}\;((\Varid{s}_{1}',\Varid{s}_{2}'),\Varid{s}_{3}');{}\<[E]%
\\
\>[10]{}\Varid{return}\;\Varid{a'}\mskip1.5mu\}{}\<[E]%
\\
\>[3]{}\hsindent{2}{}\<[5]%
\>[5]{}\mathrel{=}\mbox{\commentbegin  property \ensuremath{\mathrm{(\ast)}}  \commentend}{}\<[E]%
\\
\>[3]{}\mathbf{do}\;{}\<[7]%
\>[7]{}\{\mskip1.5mu {}\<[7E]%
\>[10]{}((\Varid{s}_{1},\Varid{s}_{2}),\Varid{s}_{3})\leftarrow \Varid{get};{}\<[E]%
\\
\>[10]{}(\Varid{a'},(\Varid{s}_{1}',(\Varid{s}_{2}',\Varid{s}_{3}')))\leftarrow \Varid{lift}\;(\Varid{return}\;(bx_{1}\mathord{.}\Varid{read}_{L}\;\Varid{s}_{1},(\Varid{s}_{1},(\Varid{s}_{2},\Varid{s}_{3}))));{}\<[E]%
\\
\>[10]{}\Varid{set}\;((\Varid{s}_{1}',\Varid{s}_{2}'),\Varid{s}_{3}');{}\<[E]%
\\
\>[10]{}\Varid{return}\;\Varid{a'}\mskip1.5mu\}{}\<[E]%
\\
\>[3]{}\hsindent{2}{}\<[5]%
\>[5]{}\mathrel{=}\mbox{\commentbegin  \ensuremath{\Varid{lift}\hsdot{\cdot }{.}\Varid{return}\mathrel{=}\Varid{return}}; inlining \ensuremath{\Varid{a'},\Varid{s}_{1}',\Varid{s}_{2}',\Varid{s}_{3}'}  \commentend}{}\<[E]%
\\
\>[3]{}\mathbf{do}\;{}\<[7]%
\>[7]{}\{\mskip1.5mu {}\<[7E]%
\>[10]{}((\Varid{s}_{1},\Varid{s}_{2}),\Varid{s}_{3})\leftarrow \Varid{get};{}\<[E]%
\\
\>[10]{}\Varid{set}\;((\Varid{s}_{1},\Varid{s}_{2}),\Varid{s}_{3});{}\<[E]%
\\
\>[10]{}\Varid{return}\;(bx_{1}\mathord{.}\Varid{read}_{L}\;\Varid{s}_{1},((\Varid{s}_{1},\Varid{s}_{2}),\Varid{s}_{3}))\mskip1.5mu\}{}\<[E]%
\\
\>[3]{}\hsindent{2}{}\<[5]%
\>[5]{}\mathrel{=}\mbox{\commentbegin  \ensuremath{\mathrm{(GG)}} and \ensuremath{\mathrm{(GS)}}  \commentend}{}\<[E]%
\\
\>[3]{}\mathbf{do}\;{}\<[7]%
\>[7]{}\{\mskip1.5mu {}\<[7E]%
\>[10]{}((\Varid{s}_{1},\Varid{s}_{2}),\Varid{s}_{3})\leftarrow \Varid{get};{}\<[E]%
\\
\>[10]{}\Varid{return}\;(bx_{1}\mathord{.}\Varid{read}_{L}\;\Varid{s}_{1},((\Varid{s}_{1},\Varid{s}_{2}),\Varid{s}_{3}))\mskip1.5mu\}{}\<[E]%
\ColumnHook
\end{hscode}\resethooks

Analogously to property \ensuremath{\mathrm{(\ast)}}, we may similarly show that
\begin{hscode}\SaveRestoreHook
\column{B}{@{}>{\hspre}l<{\hspost}@{}}%
\column{4}{@{}>{\hspre}l<{\hspost}@{}}%
\column{57}{@{}>{\hspre}l<{\hspost}@{}}%
\column{E}{@{}>{\hspre}l<{\hspost}@{}}%
\>[4]{}((bx_{1} \mathbin{\fatsemi} bx_{2}) \mathbin{\fatsemi} bx_{3})\mathord{.}\get{L}\;((\Varid{s}_{1},\Varid{s}_{2}),\Varid{s}_{3})\mathrel{=}{}\<[57]%
\>[57]{}\Varid{return}\;(bx_{1}\mathord{.}\Varid{read}_{L}\;\Varid{s}_{1},((\Varid{s}_{1},\Varid{s}_{2}),\Varid{s}_{3})){}\<[E]%
\ColumnHook
\end{hscode}\resethooks
and the previous proof can be adapted to show that
\ensuremath{((bx_{1} \mathbin{\fatsemi} bx_{2}) \mathbin{\fatsemi} bx_{3})\mathord{.}\get{L}}
also simplifies into
\begin{hscode}\SaveRestoreHook
\column{B}{@{}>{\hspre}l<{\hspost}@{}}%
\column{5}{@{}>{\hspre}c<{\hspost}@{}}%
\column{5E}{@{}l@{}}%
\column{8}{@{}>{\hspre}l<{\hspost}@{}}%
\column{E}{@{}>{\hspre}l<{\hspost}@{}}%
\>[B]{}\mathbf{do}\;{}\<[5]%
\>[5]{}\{\mskip1.5mu {}\<[5E]%
\>[8]{}((\Varid{s}_{1},\Varid{s}_{2}),\Varid{s}_{3})\leftarrow \Varid{get};\Varid{return}\;(bx_{1}\mathord{.}\Varid{read}_{L}\;\Varid{s}_{1},((\Varid{s}_{1},\Varid{s}_{2}),\Varid{s}_{3}))\mskip1.5mu\}{}\<[E]%
\ColumnHook
\end{hscode}\resethooks
which concludes the proof of the \ensuremath{\get{L}} condition.

As for \ensuremath{\set{L}}, we use the following property:
\begin{hscode}\SaveRestoreHook
\column{B}{@{}>{\hspre}l<{\hspost}@{}}%
\column{12}{@{}>{\hspre}c<{\hspost}@{}}%
\column{12E}{@{}l@{}}%
\column{16}{@{}>{\hspre}l<{\hspost}@{}}%
\column{18}{@{}>{\hspre}c<{\hspost}@{}}%
\column{18E}{@{}l@{}}%
\column{21}{@{}>{\hspre}l<{\hspost}@{}}%
\column{27}{@{}>{\hspre}l<{\hspost}@{}}%
\column{E}{@{}>{\hspre}l<{\hspost}@{}}%
\>[B]{}\mathrm{(\dag)}{}\<[12]%
\>[12]{}\quad{}\<[12E]%
\>[16]{}(bx_{1} \mathbin{\fatsemi} (bx_{2} \mathbin{\fatsemi} bx_{3}))\mathord{.}\set{L}\;\Varid{a'}\;(\Varid{s}_{1},(\Varid{s}_{2},\Varid{s}_{3})){}\<[E]%
\\
\>[16]{}\hsindent{2}{}\<[18]%
\>[18]{}\mathrel{=}{}\<[18E]%
\>[21]{}\mathbf{do}\;\{\mskip1.5mu {}\<[27]%
\>[27]{}((),\Varid{s}_{1}')\leftarrow bx_{1}\mathord{.}\set{L}\;\Varid{a'}\;\Varid{s}_{1};{}\<[E]%
\\
\>[27]{}(\Varid{a''},\Varid{s}_{2}')\leftarrow bx_{1}\mathord{.}\get{R}\;\Varid{s}_{1}';{}\<[E]%
\\
\>[27]{}((),\Varid{s}_{2}')\leftarrow bx_{2}\mathord{.}\set{L}\;\Varid{a''}\;\Varid{s}_{2};{}\<[E]%
\\
\>[27]{}(\Varid{b'},\Varid{s}_{2}')\leftarrow bx_{2}\mathord{.}\get{R}\;\Varid{s}_{2}';{}\<[E]%
\\
\>[27]{}((),\Varid{s}_{3}')\leftarrow bx_{3}\mathord{.}\set{L}\;\Varid{b'}\;\Varid{s}_{3};{}\<[E]%
\\
\>[27]{}\Varid{return}\;((),(\Varid{s}_{1}',(\Varid{s}_{2}',\Varid{s}_{3}')))\mskip1.5mu\}{}\<[E]%
\ColumnHook
\end{hscode}\resethooks
and one may prove an analogous form for 
\begin{hscode}\SaveRestoreHook
\column{B}{@{}>{\hspre}l<{\hspost}@{}}%
\column{E}{@{}>{\hspre}l<{\hspost}@{}}%
\>[B]{}((bx_{1} \mathbin{\fatsemi} bx_{2}) \mathbin{\fatsemi} bx_{3})\mathord{.}\set{L}\;\Varid{a'}\;((\Varid{s}_{1},\Varid{s}_{2}),\Varid{s}_{3}){}\<[E]%
\ColumnHook
\end{hscode}\resethooks
where the final line instead reads \ensuremath{\Varid{return}\;((),((\Varid{s}_{1}',\Varid{s}_{2}'),\Varid{s}_{3}'))}.
This allows us to show:
\begin{hscode}\SaveRestoreHook
\column{B}{@{}>{\hspre}l<{\hspost}@{}}%
\column{3}{@{}>{\hspre}l<{\hspost}@{}}%
\column{5}{@{}>{\hspre}l<{\hspost}@{}}%
\column{7}{@{}>{\hspre}c<{\hspost}@{}}%
\column{7E}{@{}l@{}}%
\column{10}{@{}>{\hspre}l<{\hspost}@{}}%
\column{E}{@{}>{\hspre}l<{\hspost}@{}}%
\>[3]{}(\iota \;\Varid{h})\;(bx_{1} \mathbin{\fatsemi} (bx_{2} \mathbin{\fatsemi} bx_{3}))\mathord{.}\set{L}\;\Varid{a'}{}\<[E]%
\\
\>[3]{}\hsindent{2}{}\<[5]%
\>[5]{}\mathrel{=}\mbox{\commentbegin  definition of \ensuremath{\iota \;\Varid{h}}  \commentend}{}\<[E]%
\\
\>[3]{}\mathbf{do}\;{}\<[7]%
\>[7]{}\{\mskip1.5mu {}\<[7E]%
\>[10]{}((\Varid{s}_{1},\Varid{s}_{2}),\Varid{s}_{3})\leftarrow \Varid{get};{}\<[E]%
\\
\>[10]{}((),(\Varid{s}_{1}',(\Varid{s}_{2}',\Varid{s}_{3}')))\leftarrow \Varid{lift}\;((bx_{1} \mathbin{\fatsemi} (bx_{2} \mathbin{\fatsemi} bx_{3}))\mathord{.}\set{L}\;\Varid{a'}\;(\Varid{s}_{1},(\Varid{s}_{2},\Varid{s}_{3})));{}\<[E]%
\\
\>[10]{}\Varid{set}\;((\Varid{s}_{1}',\Varid{s}_{2}'),\Varid{s}_{3}');{}\<[E]%
\\
\>[10]{}\Varid{return}\;()\mskip1.5mu\}{}\<[E]%
\\
\>[3]{}\hsindent{2}{}\<[5]%
\>[5]{}\mathrel{=}\mbox{\commentbegin  Property \ensuremath{\mathrm{(\dag)}}  \commentend}{}\<[E]%
\\
\>[3]{}\mathbf{do}\;{}\<[7]%
\>[7]{}\{\mskip1.5mu {}\<[7E]%
\>[10]{}((\Varid{s}_{1},\Varid{s}_{2}),\Varid{s}_{3})\leftarrow \Varid{get};{}\<[E]%
\\
\>[10]{}((),((\Varid{s}_{1}',\Varid{s}_{2}'),\Varid{s}_{3}'))\leftarrow \Varid{lift}\;(((bx_{1} \mathbin{\fatsemi} bx_{2}) \mathbin{\fatsemi} bx_{3})\mathord{.}\set{L}\;\Varid{a'}\;((\Varid{s}_{1},\Varid{s}_{2}),\Varid{s}_{3}));{}\<[E]%
\\
\>[10]{}\Varid{set}\;((\Varid{s}_{1}',\Varid{s}_{2}'),\Varid{s}_{3}');{}\<[E]%
\\
\>[10]{}\Varid{return}\;()\mskip1.5mu\}{}\<[E]%
\\
\>[3]{}\hsindent{2}{}\<[5]%
\>[5]{}\mathrel{=}\mbox{\commentbegin  analogous form of Property \ensuremath{\mathrm{(\dag)}}  \commentend}{}\<[E]%
\\
\>[3]{}((bx_{1} \mathbin{\fatsemi} bx_{2}) \mathbin{\fatsemi} bx_{3})\mathord{.}\set{L}\;\Varid{a'}{}\<[E]%
\ColumnHook
\end{hscode}\resethooks
\endswithdisplay
\end{proof}

\restatableTheorem{thm:category}
\begin{thm:category}
Composition of transparent \bx{} satisfies the identity and associativity laws,
modulo \ensuremath{\equiv }.
\begin{hscode}\SaveRestoreHook
\column{B}{@{}>{\hspre}l<{\hspost}@{}}%
\column{3}{@{}>{\hspre}l<{\hspost}@{}}%
\column{19}{@{}>{\hspre}l<{\hspost}@{}}%
\column{45}{@{}>{\hspre}l<{\hspost}@{}}%
\column{E}{@{}>{\hspre}l<{\hspost}@{}}%
\>[3]{}\mathrm{(Identity)}\:{}\<[19]%
\>[19]{}\Varid{identity} \mathbin{\fatsemi} bx\equiv bx\equiv bx \mathbin{\fatsemi} \Varid{identity}{}\<[E]%
\\
\>[3]{}\mathrm{(Assoc)}\:{}\<[19]%
\>[19]{}bx_{1} \mathbin{\fatsemi} (bx_{2} \mathbin{\fatsemi} bx_{3}){}\<[45]%
\>[45]{}\equiv (bx_{1} \mathbin{\fatsemi} bx_{2}) \mathbin{\fatsemi} bx_{3}{}\<[E]%
\ColumnHook
\end{hscode}\resethooks
\endswithdisplay
\end{thm:category}
\begin{proof}
  By Lemmas~\ref{lem:identity} and~\ref{lem:associativity}.
\end{proof}
\section{Proofs from Section~\ref{sec:examples}}

\begin{proposition}
The \ensuremath{\Varid{dual}} operator (Definition~\ref{def:dual}) preserves transparency and overwritability.
\end{proposition}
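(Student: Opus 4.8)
The plan is to exploit the evident symmetry of the $\Varid{dual}$ construction. By Definition~\ref{def:dual}, $\Varid{dual}$ merely interchanges the left and right operations, so that $(\Varid{dual}\;bx)\mathord{.}\get{L} = bx\mathord{.}\get{R}$, $(\Varid{dual}\;bx)\mathord{.}\set{L} = bx\mathord{.}\set{R}$, $(\Varid{dual}\;bx)\mathord{.}\get{R} = bx\mathord{.}\get{L}$, and $(\Varid{dual}\;bx)\mathord{.}\set{R} = bx\mathord{.}\set{L}$. The key observation is that the collection of laws defining each property under consideration is invariant under systematically swapping the subscripts $\Conid{L}$ and $\Conid{R}$: the three $\Conid{L}$-laws $\mathrm{(G_LG_L)}$, $\mathrm{(S_LG_L)}$, $\mathrm{(G_LS_L)}$ are mirror images of the three $\Conid{R}$-laws $\mathrm{(G_RG_R)}$, $\mathrm{(S_RG_R)}$, $\mathrm{(G_RS_R)}$, and the two overwritability laws $\mathrm{(S_LS_L)}$ and $\mathrm{(S_RS_R)}$ are mirror images of each other. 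Hence, after unfolding the definition of $\Varid{dual}$, each such law for $\Varid{dual}\;bx$ becomes precisely the opposite law for $bx$, which holds by hypothesis. For instance, $\mathrm{(S_LS_L)}$ for $\Varid{dual}\;bx$ unfolds to $\mathbf{do}\;\{bx\mathord{.}\set{R}\;a; bx\mathord{.}\set{R}\;a'\} = \mathbf{do}\;\{bx\mathord{.}\set{R}\;a'\}$, which is exactly $\mathrm{(S_RS_R)}$ for $bx$.

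Transparency I would verify directly. If $bx$ is transparent, then there are $\Varid{read}_L$ and $\Varid{read}_R$ with $bx\mathord{.}\get{L} = \Varid{gets}\;\Varid{read}_L$ and $bx\mathord{.}\get{R} = \Varid{gets}\;\Varid{read}_R$. Unfolding the definition of $\Varid{dual}$ gives $(\Varid{dual}\;bx)\mathord{.}\get{L} = bx\mathord{.}\get{R} = \Varid{gets}\;\Varid{read}_R$ and $(\Varid{dual}\;bx)\mathord{.}\get{R} = bx\mathord{.}\get{L} = \Varid{gets}\;\Varid{read}_L$, so $\Varid{dual}\;bx$ is transparent, with $(\Varid{dual}\;bx)\mathord{.}\Varid{read}_L = \Varid{read}_R$ and $(\Varid{dual}\;bx)\mathord{.}\Varid{read}_R = \Varid{read}_L$.

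The one law that does not swap onto a named counterpart is the get/get commutation law $\mathrm{(G_LG_R)}$, which fixes an order of a $\get{L}$ before a $\get{R}$; this is the only step needing genuine, if slight, attention, and so is the main obstacle. Unfolding it for $\Varid{dual}\;bx$ yields the commutation of $bx\mathord{.}\get{R}$ with $bx\mathord{.}\get{L}$, returning the pair in the order $(b,a)$ rather than $(a,b)$. I would derive this from $\mathrm{(G_LG_R)}$ for $bx$ by binding both sides of that law with the continuation that returns the swapped pair and simplifying via the monad unit and associativity laws, using that the underlying get/get commutation is a symmetric relation. With this in hand, all seven well-behavedness laws and both overwritability laws for $\Varid{dual}\;bx$ reduce to laws already assumed for $bx$, completing the proof.
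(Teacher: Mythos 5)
Your proof is correct and takes the same approach as the paper, which disposes of the proposition in one line by observing that the properties in question are invariant under transposing left and right. Your additional care over \ensuremath{\mathrm{(G_LG_R)}}---noting that dualising yields the commutation with the pair components swapped, and recovering it by binding both sides of the original law with the swap continuation and appealing to the monad laws---is a genuine detail that the paper's ``immediate'' glosses over, and your handling of it is sound.
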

\begin{proof}
Immediate, since those properties are invariant under transposing left and right.
\end{proof}

\begin{lemma}
\ensuremath{\Varid{left}} and \ensuremath{\Varid{right}} are monad morphisms.
\end{lemma}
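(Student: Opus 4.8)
The plan is to reduce the statement entirely to Lemma~\ref{lem:vwb-monad-morphism}. Recall from Definition~\ref{def:left-right-monad-morphism} that \ensuremath{\Varid{left}\mathrel{=}\vartheta\;\Varid{fstLens}} and \ensuremath{\Varid{right}\mathrel{=}\vartheta\;\Varid{sndLens}}. Lemma~\ref{lem:vwb-monad-morphism} already tells us that \ensuremath{\vartheta\;\Varid{l}} is a monad morphism whenever \ensuremath{\Varid{l}} is very well-behaved. So the only thing left to establish is that \ensuremath{\Varid{fstLens}} and \ensuremath{\Varid{sndLens}} are very well-behaved lenses, i.e.\ that each satisfies the round-tripping laws \ensuremath{\mathrm{(UV)}}, \ensuremath{\mathrm{(VU)}} and the overwrite law \ensuremath{\mathrm{(UU)}} of Definition~\ref{def:lens}.

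First I would check these three laws for \ensuremath{\Varid{fstLens}} directly from its definition \ensuremath{\Varid{fstLens}\mathrel{=}\Conid{Lens}\;\Varid{fst}\;\Varid{u}} with \ensuremath{\Varid{u}\;(\Varid{a},\Varid{b})\;\Varid{a'}\mathrel{=}(\Varid{a'},\Varid{b})}. Each is a one-line computation: \ensuremath{\mathrm{(UV)}} because \ensuremath{\Varid{fst}\;(\Varid{a'},\Varid{b})\mathrel{=}\Varid{a'}}; \ensuremath{\mathrm{(VU)}} because \ensuremath{\Varid{u}\;(\Varid{a},\Varid{b})\;(\Varid{fst}\;(\Varid{a},\Varid{b}))\mathrel{=}\Varid{u}\;(\Varid{a},\Varid{b})\;\Varid{a}\mathrel{=}(\Varid{a},\Varid{b})}; and \ensuremath{\mathrm{(UU)}} because \ensuremath{\Varid{u}\;(\Varid{u}\;(\Varid{a},\Varid{b})\;\Varid{a'})\;\Varid{a''}\mathrel{=}\Varid{u}\;(\Varid{a'},\Varid{b})\;\Varid{a''}\mathrel{=}(\Varid{a''},\Varid{b})\mathrel{=}\Varid{u}\;(\Varid{a},\Varid{b})\;\Varid{a''}}. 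The verification for \ensuremath{\Varid{sndLens}} is entirely symmetric, acting on the second component instead of the first.

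Having confirmed very well-behavedness, I would simply invoke Lemma~\ref{lem:vwb-monad-morphism} twice, once for each lens, to conclude that \ensuremath{\vartheta\;\Varid{fstLens}} and \ensuremath{\vartheta\;\Varid{sndLens}}, and hence \ensuremath{\Varid{left}} and \ensuremath{\Varid{right}}, are monad morphisms. There is no real obstacle here: the whole content sits in the prior lemma, and the remaining lens-law checks are trivial projections-and-reassembly calculations. Indeed this is why Definition~\ref{def:left-right-monad-morphism} could already assert the claim in passing; this lemma merely records the justification explicitly.
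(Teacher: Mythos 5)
Your proposal is correct and follows exactly the paper's argument: the paper's proof is likewise ``immediate from Lemma~\ref{lem:vwb-monad-morphism}, because \ensuremath{\Varid{left}} and \ensuremath{\Varid{right}} are of the form \ensuremath{\vartheta\;\Varid{l}} where \ensuremath{\Varid{l}} is a very well-behaved lens.'' The only difference is that you spell out the trivial \ensuremath{\mathrm{(UV)}}, \ensuremath{\mathrm{(VU)}}, \ensuremath{\mathrm{(UU)}} checks for \ensuremath{\Varid{fstLens}} and \ensuremath{\Varid{sndLens}}, which the paper takes as known.
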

\begin{proof}
  Immediate from Lemma~\ref{lem:vwb-monad-morphism}, because \ensuremath{\Varid{left}} and
  \ensuremath{\Varid{right}} are of the form \ensuremath{\vartheta \;\Varid{l}} where \ensuremath{\Varid{l}} is a very well-behaved
  lens.  
\end{proof}

\restatableProposition{prop:pair-wb}
\begin{prop:pair-wb}
  If \ensuremath{bx_{1}} and \ensuremath{bx_{2}} are transparent, then \ensuremath{\Varid{pairBX}\;bx_{1}\;bx_{2}} is transparent. 
\end{prop:pair-wb}
\begin{proof}
Let \ensuremath{bx\mathrel{=}\Varid{pairBX}\;bx_{1}\;bx_{2}}.  Then to show 
\ensuremath{\mathrm{(G_LS_L)}}:
\begin{hscode}\SaveRestoreHook
\column{B}{@{}>{\hspre}c<{\hspost}@{}}%
\column{BE}{@{}l@{}}%
\column{4}{@{}>{\hspre}l<{\hspost}@{}}%
\column{6}{@{}>{\hspre}l<{\hspost}@{}}%
\column{10}{@{}>{\hspre}l<{\hspost}@{}}%
\column{E}{@{}>{\hspre}l<{\hspost}@{}}%
\>[4]{}\mathbf{do}\;\{\mskip1.5mu \Varid{a}\leftarrow bx\mathord{.}\get{L};bx\mathord{.}\set{L}\;\Varid{a}\mskip1.5mu\}{}\<[E]%
\\
\>[B]{}\mathrel{=}{}\<[BE]%
\>[6]{}\mbox{\commentbegin  eta-expansion  \commentend}{}\<[E]%
\\
\>[B]{}\hsindent{4}{}\<[4]%
\>[4]{}\mathbf{do}\;\{\mskip1.5mu (\Varid{a}_{1},\Varid{a}_{2})\leftarrow bx\mathord{.}\get{L};bx\mathord{.}\set{L}\;(\Varid{a}_{1},\Varid{a}_{2})\mskip1.5mu\}{}\<[E]%
\\
\>[B]{}\mathrel{=}{}\<[BE]%
\>[6]{}\mbox{\commentbegin  Definition  \commentend}{}\<[E]%
\\
\>[B]{}\hsindent{4}{}\<[4]%
\>[4]{}\mathbf{do}\;\{\mskip1.5mu {}\<[10]%
\>[10]{}\Varid{a}_{1}\leftarrow \Varid{left}\;(bx_{1}\mathord{.}\get{L});\Varid{a}_{2}\leftarrow \Varid{right}\;(bx_{2}\mathord{.}\get{L});{}\<[E]%
\\
\>[10]{}(\Varid{a}_{1}',\Varid{a}_{2}')\leftarrow \Varid{return}\;(\Varid{a}_{1},\Varid{a}_{2});{}\<[E]%
\\
\>[10]{}\Varid{left}\;(bx_{1}\mathord{.}\set{L}\;\Varid{a}_{1}');\Varid{right}\;(bx_{2}\mathord{.}\set{L}\;\Varid{a}_{2}')\mskip1.5mu\}{}\<[E]%
\\
\>[B]{}\mathrel{=}{}\<[BE]%
\>[6]{}\mbox{\commentbegin  Monad unit  \commentend}{}\<[E]%
\\
\>[B]{}\hsindent{4}{}\<[4]%
\>[4]{}\mathbf{do}\;\{\mskip1.5mu {}\<[10]%
\>[10]{}\Varid{a}_{1}\leftarrow \Varid{left}\;(bx_{1}\mathord{.}\get{L});\Varid{a}_{2}\leftarrow \Varid{right}\;(bx_{2}\mathord{.}\get{L});{}\<[E]%
\\
\>[10]{}\Varid{left}\;(bx_{1}\mathord{.}\set{L}\;\Varid{a}_{1});\Varid{right}\;(bx_{2}\mathord{.}\set{L}\;\Varid{a}_{2})\mskip1.5mu\}{}\<[E]%
\\
\>[B]{}\mathrel{=}{}\<[BE]%
\>[6]{}\mbox{\commentbegin  Lemma~\ref{lem:left-right-commute}, since \ensuremath{bx_{2}\mathord{.}\get{L}} is \ensuremath{\Conid{T}}-pure  \commentend}{}\<[E]%
\\
\>[B]{}\hsindent{4}{}\<[4]%
\>[4]{}\mathbf{do}\;\{\mskip1.5mu {}\<[10]%
\>[10]{}\Varid{a}_{1}\leftarrow \Varid{left}\;(bx_{1}\mathord{.}\get{L});\Varid{left}\;(bx_{1}\mathord{.}\set{L}\;\Varid{a}_{1});{}\<[E]%
\\
\>[10]{}\Varid{a}_{2}\leftarrow \Varid{right}\;(bx_{2}\mathord{.}\get{L});\Varid{right}\;(bx_{2}\mathord{.}\set{L}\;\Varid{a}_{2})\mskip1.5mu\}{}\<[E]%
\\
\>[B]{}\mathrel{=}{}\<[BE]%
\>[6]{}\mbox{\commentbegin  \ensuremath{\Varid{left}}, \ensuremath{\Varid{right}} monad morphisms  \commentend}{}\<[E]%
\\
\>[B]{}\hsindent{4}{}\<[4]%
\>[4]{}\mathbf{do}\;\{\mskip1.5mu \Varid{left}\;(\mathbf{do}\;\{\mskip1.5mu \Varid{a}_{1}\leftarrow bx_{1}\mathord{.}\get{L};bx_{1}\mathord{.}\set{L}\;\Varid{a}_{1}\mskip1.5mu\});{}\<[E]%
\\
\>[4]{}\hsindent{6}{}\<[10]%
\>[10]{}\Varid{right}\;(\mathbf{do}\;\{\mskip1.5mu \Varid{a}_{2}\leftarrow bx_{2}\mathord{.}\get{L};bx_{2}\mathord{.}\set{L}\;\Varid{a}_{2}\mskip1.5mu\})\mskip1.5mu\}{}\<[E]%
\\
\>[B]{}\mathrel{=}{}\<[BE]%
\>[6]{}\mbox{\commentbegin  \ensuremath{\mathrm{(G_LS_L)}} twice  \commentend}{}\<[E]%
\\
\>[B]{}\hsindent{4}{}\<[4]%
\>[4]{}\mathbf{do}\;\{\mskip1.5mu \Varid{left}\;(\Varid{return}\;());\Varid{right}\;(\Varid{return}\;())\mskip1.5mu\}{}\<[E]%
\\
\>[B]{}\mathrel{=}{}\<[BE]%
\>[6]{}\mbox{\commentbegin  monad morphism, unit  \commentend}{}\<[E]%
\\
\>[B]{}\hsindent{4}{}\<[4]%
\>[4]{}\Varid{return}\;(){}\<[E]%
\ColumnHook
\end{hscode}\resethooks

Likewise, to show \ensuremath{\mathrm{(S_LG_L)}}:
\begin{hscode}\SaveRestoreHook
\column{B}{@{}>{\hspre}c<{\hspost}@{}}%
\column{BE}{@{}l@{}}%
\column{4}{@{}>{\hspre}l<{\hspost}@{}}%
\column{6}{@{}>{\hspre}l<{\hspost}@{}}%
\column{10}{@{}>{\hspre}l<{\hspost}@{}}%
\column{68}{@{}>{\hspre}l<{\hspost}@{}}%
\column{E}{@{}>{\hspre}l<{\hspost}@{}}%
\>[4]{}\mathbf{do}\;\{\mskip1.5mu bx\mathord{.}\set{L}\;\Varid{a};bx\mathord{.}\get{L}\mskip1.5mu\}{}\<[E]%
\\
\>[B]{}\mathrel{=}{}\<[BE]%
\>[6]{}\mbox{\commentbegin  eta-expansion  \commentend}{}\<[E]%
\\
\>[B]{}\hsindent{4}{}\<[4]%
\>[4]{}\mathbf{do}\;\{\mskip1.5mu bx\mathord{.}\set{L}\;(\Varid{a}_{1},\Varid{a}_{2});bx\mathord{.}\get{L}\mskip1.5mu\}{}\<[E]%
\\
\>[B]{}\mathrel{=}{}\<[BE]%
\>[6]{}\mbox{\commentbegin  definition  \commentend}{}\<[E]%
\\
\>[B]{}\hsindent{4}{}\<[4]%
\>[4]{}\mathbf{do}\;\{\mskip1.5mu {}\<[10]%
\>[10]{}\Varid{left}\;(bx_{1}\mathord{.}\set{L}\;\Varid{a}_{1});\Varid{right}\;(bx_{2}\mathord{.}\set{L}\;\Varid{a}_{2});{}\<[E]%
\\
\>[10]{}\Varid{a}_{1}'\leftarrow \Varid{left}\;(bx_{1}\mathord{.}\get{L});\Varid{a}_{2}'\leftarrow \Varid{right}\;(bx_{2}\mathord{.}\get{L});{}\<[E]%
\\
\>[10]{}\Varid{return}\;(\Varid{a}_{1}',\Varid{a}_{2}')\mskip1.5mu\}{}\<[E]%
\\
\>[B]{}\mathrel{=}{}\<[BE]%
\>[6]{}\mbox{\commentbegin  Lemma~\ref{lem:left-right-commute}, since \ensuremath{bx_{1}\mathord{.}\get{L}} \ensuremath{\Conid{T}}-pure  \commentend}{}\<[E]%
\\
\>[B]{}\hsindent{4}{}\<[4]%
\>[4]{}\mathbf{do}\;\{\mskip1.5mu {}\<[10]%
\>[10]{}\Varid{left}\;(bx_{1}\mathord{.}\set{L}\;\Varid{a}_{1});\Varid{a}_{1}'\leftarrow \Varid{left}\;(bx_{1}\mathord{.}\get{L});{}\<[E]%
\\
\>[10]{}\Varid{right}\;(bx_{2}\mathord{.}\set{L}\;\Varid{a}_{2});\Varid{a}_{2}'\leftarrow \Varid{right}\;(bx_{2}\mathord{.}\get{L});{}\<[E]%
\\
\>[10]{}\Varid{return}\;(\Varid{a}_{1}',\Varid{a}_{2}')\mskip1.5mu\}{}\<[E]%
\\
\>[B]{}\mathrel{=}{}\<[BE]%
\>[6]{}\mbox{\commentbegin  \ensuremath{\mathrm{(S_LG_L)}} twice  \commentend}{}\<[E]%
\\
\>[B]{}\hsindent{4}{}\<[4]%
\>[4]{}\mathbf{do}\;\{\mskip1.5mu {}\<[10]%
\>[10]{}\Varid{left}\;(bx_{1}\mathord{.}\set{L}\;\Varid{a}_{1});\Varid{a}_{1}\leftarrow \Varid{return}\;\Varid{a}_{1};{}\<[E]%
\\
\>[10]{}\Varid{right}\;(bx_{2}\mathord{.}\set{L}\;\Varid{a}_{2});\Varid{a}_{2}\leftarrow \Varid{return}\;\Varid{a}_{2};\Varid{return}\;(\Varid{a}_{1},\Varid{a}_{2})\mskip1.5mu\}{}\<[E]%
\\
\>[B]{}\mathrel{=}{}\<[BE]%
\>[6]{}\mbox{\commentbegin  Monad unit  \commentend}{}\<[E]%
\\
\>[B]{}\hsindent{4}{}\<[4]%
\>[4]{}\mathbf{do}\;\{\mskip1.5mu {}\<[10]%
\>[10]{}\Varid{left}\;(bx_{1}\mathord{.}\set{L}\;\Varid{a}_{1});\Varid{right}\;(bx_{2}\mathord{.}\set{L}\;\Varid{a}_{2});{}\<[68]%
\>[68]{}\Varid{return}\;(\Varid{a}_{1},\Varid{a}_{2})\mskip1.5mu\}{}\<[E]%
\\
\>[B]{}\mathrel{=}{}\<[BE]%
\>[6]{}\mbox{\commentbegin  Definition  \commentend}{}\<[E]%
\\
\>[B]{}\hsindent{4}{}\<[4]%
\>[4]{}\mathbf{do}\;\{\mskip1.5mu bx\mathord{.}\set{L}\;(\Varid{a}_{1},\Varid{a}_{2});\Varid{return}\;(\Varid{a}_{1},\Varid{a}_{2})\mskip1.5mu\}{}\<[E]%
\\
\>[B]{}\mathrel{=}{}\<[BE]%
\>[6]{}\mbox{\commentbegin  eta-contraction for pairs  \commentend}{}\<[E]%
\\
\>[B]{}\hsindent{4}{}\<[4]%
\>[4]{}\mathbf{do}\;\{\mskip1.5mu bx\mathord{.}\set{L}\;\Varid{a};\Varid{return}\;\Varid{a}\mskip1.5mu\}{}\<[E]%
\ColumnHook
\end{hscode}\resethooks
\endswithdisplay
\end{proof}

\restatableProposition{prop:sum-wb}
\begin{prop:sum-wb}
  If \ensuremath{bx_{1}} and \ensuremath{bx_{2}} are transparent, then so is \ensuremath{\Varid{sumBX}\;bx_{1}\;bx_{2}}.
\end{prop:sum-wb}
\begin{proof}
Let \ensuremath{bx\mathrel{=}\Varid{sumBX}\;bx_{1}\;bx_{2}}. 
We first show that \ensuremath{bx} has \ensuremath{\Conid{T}}-pure queries.
Suppose that \ensuremath{bx_{1}} is transparent, with \ensuremath{\Varid{read}} functions \ensuremath{\Varid{rl}_{1}} and \ensuremath{\Varid{rr}_{1}}; and similarly for \ensuremath{bx_{2}}. Then
\begin{hscode}\SaveRestoreHook
\column{B}{@{}>{\hspre}c<{\hspost}@{}}%
\column{BE}{@{}l@{}}%
\column{3}{@{}>{\hspre}l<{\hspost}@{}}%
\column{5}{@{}>{\hspre}l<{\hspost}@{}}%
\column{9}{@{}>{\hspre}l<{\hspost}@{}}%
\column{15}{@{}>{\hspre}l<{\hspost}@{}}%
\column{18}{@{}>{\hspre}l<{\hspost}@{}}%
\column{21}{@{}>{\hspre}l<{\hspost}@{}}%
\column{27}{@{}>{\hspre}l<{\hspost}@{}}%
\column{37}{@{}>{\hspre}l<{\hspost}@{}}%
\column{43}{@{}>{\hspre}l<{\hspost}@{}}%
\column{E}{@{}>{\hspre}l<{\hspost}@{}}%
\>[3]{}bx\mathord{.}\get{L}{}\<[E]%
\\
\>[B]{}\mathrel{=}{}\<[BE]%
\>[5]{}\mbox{\commentbegin  definition of \ensuremath{\Varid{sumBX}}  \commentend}{}\<[E]%
\\
\>[B]{}\hsindent{3}{}\<[3]%
\>[3]{}\mathbf{do}\;\{\mskip1.5mu {}\<[9]%
\>[9]{}(\Varid{b},\Varid{s}_{1},\Varid{s}_{2})\leftarrow \Varid{get};{}\<[E]%
\\
\>[9]{}\mathbf{if}\;\Varid{b}\;{}\<[15]%
\>[15]{}\mathbf{then}\;{}\<[21]%
\>[21]{}\mathbf{do}\;\{\mskip1.5mu {}\<[27]%
\>[27]{}(\Varid{a}_{1},\anonymous )\leftarrow \Varid{lift}\;(bx_{1}\mathord{.}\get{L}\;\Varid{s}_{1});{}\<[E]%
\\
\>[27]{}\Varid{return}\;(\Conid{Left}\;\Varid{a}_{1})\mskip1.5mu\}{}\<[E]%
\\
\>[15]{}\mathbf{else}\;{}\<[21]%
\>[21]{}\mathbf{do}\;\{\mskip1.5mu {}\<[27]%
\>[27]{}(\Varid{a}_{2},\anonymous )\leftarrow \Varid{lift}\;(bx_{2}\mathord{.}\get{L}\;\Varid{s}_{2});{}\<[E]%
\\
\>[27]{}\Varid{return}\;(\Conid{Right}\;\Varid{a}_{2})\mskip1.5mu\}\mskip1.5mu\}{}\<[E]%
\\
\>[B]{}\mathrel{=}{}\<[BE]%
\>[5]{}\mbox{\commentbegin  \ensuremath{bx_{1}} and \ensuremath{bx_{2}} are transparent  \commentend}{}\<[E]%
\\
\>[B]{}\hsindent{3}{}\<[3]%
\>[3]{}\mathbf{do}\;\{\mskip1.5mu {}\<[9]%
\>[9]{}(\Varid{b},\Varid{s}_{1},\Varid{s}_{2})\leftarrow \Varid{get};{}\<[E]%
\\
\>[9]{}\mathbf{if}\;\Varid{b}\;{}\<[15]%
\>[15]{}\mathbf{then}\;{}\<[21]%
\>[21]{}\mathbf{do}\;\{\mskip1.5mu {}\<[27]%
\>[27]{}(\Varid{a}_{1},\anonymous )\leftarrow \Varid{lift}\;(\Varid{gets}\;\Varid{rl}_{1}\;\Varid{s}_{1});{}\<[E]%
\\
\>[27]{}\Varid{return}\;(\Conid{Left}\;\Varid{a}_{1})\mskip1.5mu\}{}\<[E]%
\\
\>[15]{}\mathbf{else}\;{}\<[21]%
\>[21]{}\mathbf{do}\;\{\mskip1.5mu {}\<[27]%
\>[27]{}(\Varid{a}_{2},\anonymous )\leftarrow \Varid{lift}\;(\Varid{gets}\;\Varid{rl}_{2}\;\Varid{s}_{2});{}\<[E]%
\\
\>[27]{}\Varid{return}\;(\Conid{Right}\;\Varid{a}_{2})\mskip1.5mu\}\mskip1.5mu\}{}\<[E]%
\\
\>[B]{}\mathrel{=}{}\<[BE]%
\>[5]{}\mbox{\commentbegin  definition of \ensuremath{\Varid{gets}}  \commentend}{}\<[E]%
\\
\>[B]{}\hsindent{3}{}\<[3]%
\>[3]{}\mathbf{do}\;\{\mskip1.5mu {}\<[9]%
\>[9]{}(\Varid{b},\Varid{s}_{1},\Varid{s}_{2})\leftarrow \Varid{get};{}\<[E]%
\\
\>[9]{}\mathbf{if}\;\Varid{b}\;{}\<[15]%
\>[15]{}\mathbf{then}\;{}\<[21]%
\>[21]{}\mathbf{do}\;\{\mskip1.5mu {}\<[27]%
\>[27]{}(\Varid{a}_{1},\anonymous )\leftarrow \Varid{lift}\;(\Varid{return}\;(\Varid{rl}_{1}\;\Varid{s}_{1},\Varid{s}_{1}));{}\<[E]%
\\
\>[27]{}\Varid{return}\;(\Conid{Left}\;\Varid{a}_{1})\mskip1.5mu\}{}\<[E]%
\\
\>[15]{}\mathbf{else}\;{}\<[21]%
\>[21]{}\mathbf{do}\;\{\mskip1.5mu {}\<[27]%
\>[27]{}(\Varid{a}_{2},\anonymous )\leftarrow \Varid{lift}\;(\Varid{return}\;(\Varid{rl}_{2}\;\Varid{s}_{2},\Varid{s}_{2}));{}\<[E]%
\\
\>[27]{}\Varid{return}\;(\Conid{Right}\;\Varid{a}_{2})\mskip1.5mu\}\mskip1.5mu\}{}\<[E]%
\\
\>[B]{}\mathrel{=}{}\<[BE]%
\>[5]{}\mbox{\commentbegin  \ensuremath{\Varid{lift}} is a monad morphism  \commentend}{}\<[E]%
\\
\>[B]{}\hsindent{3}{}\<[3]%
\>[3]{}\mathbf{do}\;\{\mskip1.5mu {}\<[9]%
\>[9]{}(\Varid{b},\Varid{s}_{1},\Varid{s}_{2})\leftarrow \Varid{get};{}\<[E]%
\\
\>[9]{}\mathbf{if}\;\Varid{b}\;{}\<[15]%
\>[15]{}\mathbf{then}\;{}\<[21]%
\>[21]{}\mathbf{do}\;\{\mskip1.5mu {}\<[27]%
\>[27]{}(\Varid{a}_{1},\anonymous )\leftarrow \Varid{return}\;(\Varid{rl}_{1}\;\Varid{s}_{1},\Varid{s}_{1});{}\<[E]%
\\
\>[27]{}\Varid{return}\;(\Conid{Left}\;\Varid{a}_{1})\mskip1.5mu\}{}\<[E]%
\\
\>[15]{}\mathbf{else}\;{}\<[21]%
\>[21]{}\mathbf{do}\;\{\mskip1.5mu {}\<[27]%
\>[27]{}(\Varid{a}_{2},\anonymous )\leftarrow \Varid{return}\;(\Varid{rl}_{2}\;\Varid{s}_{2},\Varid{s}_{2});{}\<[E]%
\\
\>[27]{}\Varid{return}\;(\Conid{Right}\;\Varid{a}_{2})\mskip1.5mu\}\mskip1.5mu\}{}\<[E]%
\\
\>[B]{}\mathrel{=}{}\<[BE]%
\>[5]{}\mbox{\commentbegin  monads  \commentend}{}\<[E]%
\\
\>[B]{}\hsindent{3}{}\<[3]%
\>[3]{}\mathbf{do}\;\{\mskip1.5mu {}\<[9]%
\>[9]{}(\Varid{b},\Varid{s}_{1},\Varid{s}_{2})\leftarrow \Varid{get};{}\<[E]%
\\
\>[9]{}\mathbf{if}\;\Varid{b}\;{}\<[15]%
\>[15]{}\mathbf{then}\;{}\<[21]%
\>[21]{}\mathbf{do}\;\{\mskip1.5mu {}\<[27]%
\>[27]{}\mathbf{let}\;\Varid{a}_{1}\mathrel{=}\Varid{rl}_{1}\;\Varid{s}_{1};\Varid{return}\;(\Conid{Left}\;\Varid{a}_{1})\mskip1.5mu\}{}\<[E]%
\\
\>[15]{}\mathbf{else}\;{}\<[21]%
\>[21]{}\mathbf{do}\;\{\mskip1.5mu {}\<[27]%
\>[27]{}\mathbf{let}\;\Varid{a}_{2}\mathrel{=}\Varid{rl}_{2}\;\Varid{s}_{2};\Varid{return}\;(\Conid{Right}\;\Varid{a}_{2})\mskip1.5mu\}\mskip1.5mu\}{}\<[E]%
\\
\>[B]{}\mathrel{=}{}\<[BE]%
\>[5]{}\mbox{\commentbegin  \ensuremath{\mathbf{do}} notation  \commentend}{}\<[E]%
\\
\>[B]{}\hsindent{3}{}\<[3]%
\>[3]{}\mathbf{do}\;\{\mskip1.5mu {}\<[9]%
\>[9]{}(\Varid{b},\Varid{s}_{1},\Varid{s}_{2})\leftarrow \Varid{get};{}\<[E]%
\\
\>[9]{}\mathbf{if}\;\Varid{b}\;{}\<[15]%
\>[15]{}\mathbf{then}\;{}\<[21]%
\>[21]{}\mathbf{do}\;\{\mskip1.5mu {}\<[27]%
\>[27]{}\Varid{return}\;(\Conid{Left}\;(\Varid{rl}_{1}\;\Varid{s}_{1}))\mskip1.5mu\}{}\<[E]%
\\
\>[15]{}\mathbf{else}\;{}\<[21]%
\>[21]{}\mathbf{do}\;\{\mskip1.5mu {}\<[27]%
\>[27]{}\Varid{return}\;(\Conid{Right}\;(\Varid{rl}_{2}\;\Varid{s}_{2}))\mskip1.5mu\}\mskip1.5mu\}{}\<[E]%
\\
\>[B]{}\mathrel{=}{}\<[BE]%
\>[5]{}\mbox{\commentbegin  definition of \ensuremath{\Varid{gets}}  \commentend}{}\<[E]%
\\
\>[B]{}\hsindent{3}{}\<[3]%
\>[3]{}\mathbf{do}\;\{\mskip1.5mu {}\<[9]%
\>[9]{}\Varid{gets}\;(\lambda \hslambda {}\<[18]%
\>[18]{}(\Varid{b},\Varid{s}_{1},\Varid{s}_{2})\hsarrow{\rightarrow }{\mathpunct{.}}\mathbf{if}\;\Varid{b}\;{}\<[37]%
\>[37]{}\mathbf{then}\;{}\<[43]%
\>[43]{}\Conid{Left}\;(\Varid{rl}_{1}\;\Varid{s}_{1}){}\<[E]%
\\
\>[37]{}\mathbf{else}\;{}\<[43]%
\>[43]{}\Conid{Right}\;(\Varid{rl}_{2}\;\Varid{s}_{2}))\mskip1.5mu\}{}\<[E]%
\ColumnHook
\end{hscode}\resethooks
Similarly for \ensuremath{bx\mathord{.}\get{R}}. 

Now suppose also that \ensuremath{bx_{1}} and \ensuremath{bx_{2}} are well-behaved; we show that \ensuremath{bx} is well-behaved too. 
Because \ensuremath{bx} has \ensuremath{\Conid{T}}-pure queries, it satisfies \ensuremath{\mathrm{(G_LG_L)}}, \ensuremath{\mathrm{(G_RG_R)}}, and \ensuremath{\mathrm{(G_LG_R)}}.
For \ensuremath{\mathrm{(G_LS_L)}} we have:
\begin{hscode}\SaveRestoreHook
\column{B}{@{}>{\hspre}c<{\hspost}@{}}%
\column{BE}{@{}l@{}}%
\column{3}{@{}>{\hspre}l<{\hspost}@{}}%
\column{5}{@{}>{\hspre}l<{\hspost}@{}}%
\column{9}{@{}>{\hspre}l<{\hspost}@{}}%
\column{15}{@{}>{\hspre}l<{\hspost}@{}}%
\column{21}{@{}>{\hspre}l<{\hspost}@{}}%
\column{27}{@{}>{\hspre}l<{\hspost}@{}}%
\column{E}{@{}>{\hspre}l<{\hspost}@{}}%
\>[3]{}\mathbf{do}\;\{\mskip1.5mu \Varid{a}\leftarrow bx\mathord{.}\get{L};bx\mathord{.}\set{L}\;\Varid{a}\mskip1.5mu\}{}\<[E]%
\\
\>[B]{}\mathrel{=}{}\<[BE]%
\>[5]{}\mbox{\commentbegin  definition of \ensuremath{bx\mathord{.}\get{L}}  \commentend}{}\<[E]%
\\
\>[B]{}\hsindent{3}{}\<[3]%
\>[3]{}\mathbf{do}\;\{\mskip1.5mu {}\<[9]%
\>[9]{}(\Varid{b},\Varid{s}_{1},\Varid{s}_{2})\leftarrow \Varid{get};{}\<[E]%
\\
\>[9]{}\mathbf{if}\;\Varid{b}\;{}\<[15]%
\>[15]{}\mathbf{then}\;{}\<[21]%
\>[21]{}\mathbf{do}\;\{\mskip1.5mu {}\<[27]%
\>[27]{}(\Varid{a}_{1},\anonymous )\leftarrow \Varid{lift}\;(bx_{1}\mathord{.}\get{L}\;\Varid{s}_{1});{}\<[E]%
\\
\>[27]{}\mathbf{let}\;\Varid{a}\mathrel{=}\Conid{Left}\;\Varid{a}_{1};bx\mathord{.}\set{L}\;\Varid{a}\mskip1.5mu\}{}\<[E]%
\\
\>[15]{}\mathbf{else}\;{}\<[21]%
\>[21]{}\mathbf{do}\;\{\mskip1.5mu {}\<[27]%
\>[27]{}(\Varid{a}_{2},\anonymous )\leftarrow \Varid{lift}\;(bx_{2}\mathord{.}\get{L}\;\Varid{s}_{2});{}\<[E]%
\\
\>[27]{}\mathbf{let}\;\Varid{a}\mathrel{=}\Conid{Right}\;\Varid{a}_{2};bx\mathord{.}\set{L}\;\Varid{a}\mskip1.5mu\}{}\<[E]%
\\
\>[B]{}\mathrel{=}{}\<[BE]%
\>[5]{}\mbox{\commentbegin  assume \ensuremath{\Varid{b}} is \ensuremath{\Conid{True}} (the \ensuremath{\Conid{False}} case is symmetric)  \commentend}{}\<[E]%
\\
\>[B]{}\hsindent{3}{}\<[3]%
\>[3]{}\mathbf{do}\;\{\mskip1.5mu {}\<[9]%
\>[9]{}(\Varid{b},\Varid{s}_{1},\Varid{s}_{2})\leftarrow \Varid{get};{}\<[E]%
\\
\>[9]{}(\Varid{a}_{1},\anonymous )\leftarrow \Varid{lift}\;(bx_{1}\mathord{.}\get{L}\;\Varid{s}_{1});{}\<[E]%
\\
\>[9]{}\mathbf{let}\;\Varid{a}\mathrel{=}\Conid{Left}\;\Varid{a}_{1};bx\mathord{.}\set{L}\;\Varid{a}\mskip1.5mu\}{}\<[E]%
\\
\>[B]{}\mathrel{=}{}\<[BE]%
\>[5]{}\mbox{\commentbegin  definition of \ensuremath{bx\mathord{.}\set{L}}  \commentend}{}\<[E]%
\\
\>[B]{}\hsindent{3}{}\<[3]%
\>[3]{}\mathbf{do}\;\{\mskip1.5mu {}\<[9]%
\>[9]{}(\Varid{b},\Varid{s}_{1},\Varid{s}_{2})\leftarrow \Varid{get};{}\<[E]%
\\
\>[9]{}(\Varid{a}_{1},\anonymous )\leftarrow \Varid{lift}\;(bx_{1}\mathord{.}\get{L}\;\Varid{s}_{1});{}\<[E]%
\\
\>[9]{}(\Varid{b},\Varid{s}_{1},\Varid{s}_{2})\leftarrow \Varid{get};{}\<[E]%
\\
\>[9]{}((),\Varid{s}_{1}')\leftarrow \Varid{lift}\;((bx_{1}\mathord{.}\set{L}\;\Varid{a}_{1})\;\Varid{s}_{1});{}\<[E]%
\\
\>[9]{}\Varid{set}\;(\Conid{True},\Varid{s}_{1}',\Varid{s}_{2})\mskip1.5mu\}{}\<[E]%
\\
\>[B]{}\mathrel{=}{}\<[BE]%
\>[5]{}\mbox{\commentbegin  \ensuremath{\Varid{get}} commutes with lifting  \commentend}{}\<[E]%
\\
\>[B]{}\hsindent{3}{}\<[3]%
\>[3]{}\mathbf{do}\;\{\mskip1.5mu {}\<[9]%
\>[9]{}(\Varid{b},\Varid{s}_{1},\Varid{s}_{2})\leftarrow \Varid{get};{}\<[E]%
\\
\>[9]{}(\Varid{b},\Varid{s}_{1},\Varid{s}_{2})\leftarrow \Varid{get};{}\<[E]%
\\
\>[9]{}(\Varid{a}_{1},\anonymous )\leftarrow \Varid{lift}\;(bx_{1}\mathord{.}\get{L}\;\Varid{s}_{1});{}\<[E]%
\\
\>[9]{}((),\Varid{s}_{1}')\leftarrow \Varid{lift}\;((bx_{1}\mathord{.}\set{L}\;\Varid{a}_{1})\;\Varid{s}_{1});{}\<[E]%
\\
\>[9]{}\Varid{set}\;(\Conid{True},\Varid{s}_{1}',\Varid{s}_{2})\mskip1.5mu\}{}\<[E]%
\\
\>[B]{}\mathrel{=}{}\<[BE]%
\>[5]{}\mbox{\commentbegin  \ensuremath{\mathrm{(GG)}}  \commentend}{}\<[E]%
\\
\>[B]{}\hsindent{3}{}\<[3]%
\>[3]{}\mathbf{do}\;\{\mskip1.5mu {}\<[9]%
\>[9]{}(\Varid{b},\Varid{s}_{1},\Varid{s}_{2})\leftarrow \Varid{get};{}\<[E]%
\\
\>[9]{}(\Varid{a}_{1},\anonymous )\leftarrow \Varid{lift}\;(bx_{1}\mathord{.}\get{L}\;\Varid{s}_{1});{}\<[E]%
\\
\>[9]{}((),\Varid{s}_{1}')\leftarrow \Varid{lift}\;((bx_{1}\mathord{.}\set{L}\;\Varid{a}_{1})\;\Varid{s}_{1});{}\<[E]%
\\
\>[9]{}\Varid{set}\;(\Conid{True},\Varid{s}_{1}',\Varid{s}_{2})\mskip1.5mu\}{}\<[E]%
\\
\>[B]{}\mathrel{=}{}\<[BE]%
\>[5]{}\mbox{\commentbegin  \ensuremath{\Varid{lift}} is a monad morphism  \commentend}{}\<[E]%
\\
\>[B]{}\hsindent{3}{}\<[3]%
\>[3]{}\mathbf{do}\;\{\mskip1.5mu {}\<[9]%
\>[9]{}(\Varid{b},\Varid{s}_{1},\Varid{s}_{2})\leftarrow \Varid{get};{}\<[E]%
\\
\>[9]{}((),\Varid{s}_{1}')\leftarrow \Varid{lift}\;(\mathbf{do}\;\{\mskip1.5mu \Varid{a}_{1}\leftarrow bx_{1}\mathord{.}\get{L};bx_{1}\mathord{.}\set{L}\;\Varid{a}_{1}\mskip1.5mu\}\;\Varid{s}_{1});{}\<[E]%
\\
\>[9]{}\Varid{set}\;(\Conid{True},\Varid{s}_{1}',\Varid{s}_{2})\mskip1.5mu\}{}\<[E]%
\\
\>[B]{}\mathrel{=}{}\<[BE]%
\>[5]{}\mbox{\commentbegin  \ensuremath{\mathrm{(G_LS_L)}} for \ensuremath{bx_{1}}  \commentend}{}\<[E]%
\\
\>[B]{}\hsindent{3}{}\<[3]%
\>[3]{}\mathbf{do}\;\{\mskip1.5mu {}\<[9]%
\>[9]{}(\Varid{b},\Varid{s}_{1},\Varid{s}_{2})\leftarrow \Varid{get};{}\<[E]%
\\
\>[9]{}((),\Varid{s}_{1}')\leftarrow \Varid{lift}\;(\Varid{return}\;()\;\Varid{s}_{1});{}\<[E]%
\\
\>[9]{}\Varid{set}\;(\Conid{True},\Varid{s}_{1}',\Varid{s}_{2})\mskip1.5mu\}{}\<[E]%
\\
\>[B]{}\mathrel{=}{}\<[BE]%
\>[5]{}\mbox{\commentbegin  definition of \ensuremath{\Varid{lift}}  \commentend}{}\<[E]%
\\
\>[B]{}\hsindent{3}{}\<[3]%
\>[3]{}\mathbf{do}\;\{\mskip1.5mu {}\<[9]%
\>[9]{}(\Varid{b},\Varid{s}_{1},\Varid{s}_{2})\leftarrow \Varid{get};{}\<[E]%
\\
\>[9]{}\mathbf{let}\;\Varid{s}_{1}'\mathrel{=}\Varid{s}_{1};{}\<[E]%
\\
\>[9]{}\Varid{set}\;(\Conid{True},\Varid{s}_{1}',\Varid{s}_{2})\mskip1.5mu\}{}\<[E]%
\\
\>[B]{}\mathrel{=}{}\<[BE]%
\>[5]{}\mbox{\commentbegin  substituting \ensuremath{\Varid{b}\mathrel{=}\Conid{True}} and \ensuremath{\Varid{s}_{1}'\mathrel{=}\Varid{s}_{1}}  \commentend}{}\<[E]%
\\
\>[B]{}\hsindent{3}{}\<[3]%
\>[3]{}\mathbf{do}\;\{\mskip1.5mu {}\<[9]%
\>[9]{}(\Varid{b},\Varid{s}_{1},\Varid{s}_{2})\leftarrow \Varid{get};{}\<[E]%
\\
\>[9]{}\Varid{set}\;(\Varid{b},\Varid{s}_{1},\Varid{s}_{2})\mskip1.5mu\}{}\<[E]%
\\
\>[B]{}\mathrel{=}{}\<[BE]%
\>[5]{}\mbox{\commentbegin  \ensuremath{\mathrm{(GS)}}  \commentend}{}\<[E]%
\\
\>[B]{}\hsindent{3}{}\<[3]%
\>[3]{}\mathbf{do}\;\{\mskip1.5mu {}\<[9]%
\>[9]{}\Varid{return}\;()\mskip1.5mu\}{}\<[E]%
\ColumnHook
\end{hscode}\resethooks
For \ensuremath{\mathrm{(S_LG_L)}}, and setting a \ensuremath{\Conid{Left}} value, we have:
\begin{hscode}\SaveRestoreHook
\column{B}{@{}>{\hspre}c<{\hspost}@{}}%
\column{BE}{@{}l@{}}%
\column{3}{@{}>{\hspre}l<{\hspost}@{}}%
\column{5}{@{}>{\hspre}l<{\hspost}@{}}%
\column{9}{@{}>{\hspre}l<{\hspost}@{}}%
\column{15}{@{}>{\hspre}l<{\hspost}@{}}%
\column{21}{@{}>{\hspre}l<{\hspost}@{}}%
\column{27}{@{}>{\hspre}l<{\hspost}@{}}%
\column{E}{@{}>{\hspre}l<{\hspost}@{}}%
\>[3]{}\mathbf{do}\;\{\mskip1.5mu bx\mathord{.}\set{L}\;(\Conid{Left}\;\Varid{a}_{1});bx\mathord{.}\get{L}\mskip1.5mu\}{}\<[E]%
\\
\>[B]{}\mathrel{=}{}\<[BE]%
\>[5]{}\mbox{\commentbegin  definition of \ensuremath{bx\mathord{.}\set{L}}  \commentend}{}\<[E]%
\\
\>[B]{}\hsindent{3}{}\<[3]%
\>[3]{}\mathbf{do}\;\{\mskip1.5mu {}\<[9]%
\>[9]{}(\Varid{b},\Varid{s}_{1},\Varid{s}_{2})\leftarrow \Varid{get};{}\<[E]%
\\
\>[9]{}((),\Varid{s}_{1}')\leftarrow \Varid{lift}\;((bx_{1}\mathord{.}\set{L}\;\Varid{a}_{1})\;\Varid{s}_{1});{}\<[E]%
\\
\>[9]{}\Varid{set}\;(\Conid{True},\Varid{s}_{1}',\Varid{s}_{2});{}\<[E]%
\\
\>[9]{}bx\mathord{.}\get{L}\mskip1.5mu\}{}\<[E]%
\\
\>[B]{}\mathrel{=}{}\<[BE]%
\>[5]{}\mbox{\commentbegin  definition of \ensuremath{bx\mathord{.}\get{L}}  \commentend}{}\<[E]%
\\
\>[B]{}\hsindent{3}{}\<[3]%
\>[3]{}\mathbf{do}\;\{\mskip1.5mu {}\<[9]%
\>[9]{}(\Varid{b},\Varid{s}_{1},\Varid{s}_{2})\leftarrow \Varid{get};{}\<[E]%
\\
\>[9]{}((),\Varid{s}_{1}')\leftarrow \Varid{lift}\;((bx_{1}\mathord{.}\set{L}\;\Varid{a}_{1})\;\Varid{s}_{1});{}\<[E]%
\\
\>[9]{}\Varid{set}\;(\Conid{True},\Varid{s}_{1}',\Varid{s}_{2});{}\<[E]%
\\
\>[9]{}(\Varid{b},\Varid{s}_{1}'',\Varid{s}_{2}')\leftarrow \Varid{get};{}\<[E]%
\\
\>[9]{}\mathbf{if}\;\Varid{b}\;{}\<[15]%
\>[15]{}\mathbf{then}\;{}\<[21]%
\>[21]{}\mathbf{do}\;\{\mskip1.5mu {}\<[27]%
\>[27]{}(\Varid{a}_{1}',\anonymous )\leftarrow \Varid{lift}\;(bx_{1}\mathord{.}\get{L}\;\Varid{s}_{1}'');{}\<[E]%
\\
\>[27]{}\Varid{return}\;(\Conid{Left}\;\Varid{a}_{1}')\mskip1.5mu\}{}\<[E]%
\\
\>[15]{}\mathbf{else}\;{}\<[21]%
\>[21]{}\mathbf{do}\;\{\mskip1.5mu {}\<[27]%
\>[27]{}(\Varid{a}_{2},\anonymous )\leftarrow \Varid{lift}\;(bx_{2}\mathord{.}\get{L}\;\Varid{s}_{2}');{}\<[E]%
\\
\>[27]{}\Varid{return}\;(\Conid{Right}\;\Varid{a}_{2})\mskip1.5mu\}\mskip1.5mu\}{}\<[E]%
\\
\>[B]{}\mathrel{=}{}\<[BE]%
\>[5]{}\mbox{\commentbegin  \ensuremath{\mathrm{(SG)}}  \commentend}{}\<[E]%
\\
\>[B]{}\hsindent{3}{}\<[3]%
\>[3]{}\mathbf{do}\;\{\mskip1.5mu {}\<[9]%
\>[9]{}(\Varid{b},\Varid{s}_{1},\Varid{s}_{2})\leftarrow \Varid{get};{}\<[E]%
\\
\>[9]{}((),\Varid{s}_{1}')\leftarrow \Varid{lift}\;((bx_{1}\mathord{.}\set{L}\;\Varid{a}_{1})\;\Varid{s}_{1});{}\<[E]%
\\
\>[9]{}\Varid{set}\;(\Conid{True},\Varid{s}_{1}',\Varid{s}_{2});{}\<[E]%
\\
\>[9]{}\mathbf{let}\;(\Varid{b},\Varid{s}_{1}'',\Varid{s}_{2}')\mathrel{=}(\Conid{True},\Varid{s}_{1}',\Varid{s}_{2});{}\<[E]%
\\
\>[9]{}\mathbf{if}\;\Varid{b}\;{}\<[15]%
\>[15]{}\mathbf{then}\;{}\<[21]%
\>[21]{}\mathbf{do}\;\{\mskip1.5mu {}\<[27]%
\>[27]{}(\Varid{a}_{1}',\anonymous )\leftarrow \Varid{lift}\;(bx_{1}\mathord{.}\get{L}\;\Varid{s}_{1}'');{}\<[E]%
\\
\>[27]{}\Varid{return}\;(\Conid{Left}\;\Varid{a}_{1}')\mskip1.5mu\}{}\<[E]%
\\
\>[15]{}\mathbf{else}\;{}\<[21]%
\>[21]{}\mathbf{do}\;\{\mskip1.5mu {}\<[27]%
\>[27]{}(\Varid{a}_{2},\anonymous )\leftarrow \Varid{lift}\;(bx_{2}\mathord{.}\get{L}\;\Varid{s}_{2}');{}\<[E]%
\\
\>[27]{}\Varid{return}\;(\Conid{Right}\;\Varid{a}_{2})\mskip1.5mu\}\mskip1.5mu\}{}\<[E]%
\\
\>[B]{}\mathrel{=}{}\<[BE]%
\>[5]{}\mbox{\commentbegin  substituting; conditional  \commentend}{}\<[E]%
\\
\>[B]{}\hsindent{3}{}\<[3]%
\>[3]{}\mathbf{do}\;\{\mskip1.5mu {}\<[9]%
\>[9]{}(\Varid{b},\Varid{s}_{1},\Varid{s}_{2})\leftarrow \Varid{get};{}\<[E]%
\\
\>[9]{}((),\Varid{s}_{1}')\leftarrow \Varid{lift}\;((bx_{1}\mathord{.}\set{L}\;\Varid{a}_{1})\;\Varid{s}_{1});{}\<[E]%
\\
\>[9]{}\Varid{set}\;(\Conid{True},\Varid{s}_{1}',\Varid{s}_{2});{}\<[E]%
\\
\>[9]{}(\Varid{a}_{1}',\anonymous )\leftarrow \Varid{lift}\;(bx_{1}\mathord{.}\get{L}\;\Varid{s}_{1}');{}\<[E]%
\\
\>[9]{}\Varid{return}\;(\Conid{Left}\;\Varid{a}_{1}')\mskip1.5mu\}{}\<[E]%
\\
\>[B]{}\mathrel{=}{}\<[BE]%
\>[5]{}\mbox{\commentbegin  \ensuremath{\Varid{set}} commutes with lifting; naming the wildcard;   \commentend}{}\<[E]%
\\
\>[B]{}\hsindent{3}{}\<[3]%
\>[3]{}\mathbf{do}\;\{\mskip1.5mu {}\<[9]%
\>[9]{}(\Varid{b},\Varid{s}_{1},\Varid{s}_{2})\leftarrow \Varid{get};{}\<[E]%
\\
\>[9]{}((),\Varid{s}_{1}')\leftarrow \Varid{lift}\;((bx_{1}\mathord{.}\set{L}\;\Varid{a}_{1})\;\Varid{s}_{1});{}\<[E]%
\\
\>[9]{}(\Varid{a}_{1}',\Varid{s}_{1}'')\leftarrow \Varid{lift}\;(bx_{1}\mathord{.}\get{L}\;\Varid{s}_{1}');{}\<[E]%
\\
\>[9]{}\Varid{set}\;(\Conid{True},\Varid{s}_{1}',\Varid{s}_{2});{}\<[E]%
\\
\>[9]{}\Varid{return}\;(\Conid{Left}\;\Varid{a}_{1}')\mskip1.5mu\}{}\<[E]%
\\
\>[B]{}\mathrel{=}{}\<[BE]%
\>[5]{}\mbox{\commentbegin  \ensuremath{bx_{1}\mathord{.}\get{L}} is \ensuremath{\Conid{T}}-pure, so \ensuremath{\Varid{s}_{1}''\mathrel{=}\Varid{s}_{1}'}  \commentend}{}\<[E]%
\\
\>[B]{}\hsindent{3}{}\<[3]%
\>[3]{}\mathbf{do}\;\{\mskip1.5mu {}\<[9]%
\>[9]{}(\Varid{b},\Varid{s}_{1},\Varid{s}_{2})\leftarrow \Varid{get};{}\<[E]%
\\
\>[9]{}((),\Varid{s}_{1}')\leftarrow \Varid{lift}\;((bx_{1}\mathord{.}\set{L}\;\Varid{a}_{1})\;\Varid{s}_{1});{}\<[E]%
\\
\>[9]{}(\Varid{a}_{1}',\Varid{s}_{1}'')\leftarrow \Varid{lift}\;(bx_{1}\mathord{.}\get{L}\;\Varid{s}_{1}');{}\<[E]%
\\
\>[9]{}\Varid{set}\;(\Conid{True},\Varid{s}_{1}'',\Varid{s}_{2});{}\<[E]%
\\
\>[9]{}\Varid{return}\;(\Conid{Left}\;\Varid{a}_{1}')\mskip1.5mu\}{}\<[E]%
\\
\>[B]{}\mathrel{=}{}\<[BE]%
\>[5]{}\mbox{\commentbegin  \ensuremath{\Varid{lift}} is a monad morphism  \commentend}{}\<[E]%
\\
\>[B]{}\hsindent{3}{}\<[3]%
\>[3]{}\mathbf{do}\;\{\mskip1.5mu {}\<[9]%
\>[9]{}(\Varid{b},\Varid{s}_{1},\Varid{s}_{2})\leftarrow \Varid{get};{}\<[E]%
\\
\>[9]{}(\Varid{a}_{1}',\Varid{s}_{1}'')\leftarrow \Varid{lift}\;(\mathbf{do}\;\{\mskip1.5mu bx_{1}\mathord{.}\set{L}\;\Varid{a}_{1};bx_{1}\mathord{.}\get{L}\mskip1.5mu\}\;\Varid{s}_{1});{}\<[E]%
\\
\>[9]{}\Varid{set}\;(\Conid{True},\Varid{s}_{1}'',\Varid{s}_{2});{}\<[E]%
\\
\>[9]{}\Varid{return}\;(\Conid{Left}\;\Varid{a}_{1}')\mskip1.5mu\}{}\<[E]%
\\
\>[B]{}\mathrel{=}{}\<[BE]%
\>[5]{}\mbox{\commentbegin  \ensuremath{\mathrm{(S_LG_L)}} for \ensuremath{bx_{1}}  \commentend}{}\<[E]%
\\
\>[B]{}\hsindent{3}{}\<[3]%
\>[3]{}\mathbf{do}\;\{\mskip1.5mu {}\<[9]%
\>[9]{}(\Varid{b},\Varid{s}_{1},\Varid{s}_{2})\leftarrow \Varid{get};{}\<[E]%
\\
\>[9]{}(\Varid{a}_{1}',\Varid{s}_{1}'')\leftarrow \Varid{lift}\;(\mathbf{do}\;\{\mskip1.5mu bx_{1}\mathord{.}\set{L}\;\Varid{a}_{1};\Varid{return}\;\Varid{a}_{1}\mskip1.5mu\}\;\Varid{s}_{1});{}\<[E]%
\\
\>[9]{}\Varid{set}\;(\Conid{True},\Varid{s}_{1}'',\Varid{s}_{2});{}\<[E]%
\\
\>[9]{}\Varid{return}\;(\Conid{Left}\;\Varid{a}_{1}')\mskip1.5mu\}{}\<[E]%
\\
\>[B]{}\mathrel{=}{}\<[BE]%
\>[5]{}\mbox{\commentbegin  \ensuremath{\Varid{lift}} is a monad morphism  \commentend}{}\<[E]%
\\
\>[B]{}\hsindent{3}{}\<[3]%
\>[3]{}\mathbf{do}\;\{\mskip1.5mu {}\<[9]%
\>[9]{}(\Varid{b},\Varid{s}_{1},\Varid{s}_{2})\leftarrow \Varid{get};{}\<[E]%
\\
\>[9]{}((),\Varid{s}_{1}')\leftarrow \Varid{lift}\;((bx_{1}\mathord{.}\set{L}\;\Varid{a}_{1})\;\Varid{s}_{1});{}\<[E]%
\\
\>[9]{}(\Varid{a}_{1}',\Varid{s}_{1}'')\leftarrow \Varid{lift}\;(\Varid{return}\;\Varid{a}_{1}\;\Varid{s}_{1}');{}\<[E]%
\\
\>[9]{}\Varid{set}\;(\Conid{True},\Varid{s}_{1}'',\Varid{s}_{2});{}\<[E]%
\\
\>[9]{}\Varid{return}\;(\Conid{Left}\;\Varid{a}_{1}')\mskip1.5mu\}{}\<[E]%
\\
\>[B]{}\mathrel{=}{}\<[BE]%
\>[5]{}\mbox{\commentbegin  \ensuremath{\Varid{return}} for \ensuremath{\Conid{StateT}}: \ensuremath{\Varid{s}_{1}''\mathrel{=}\Varid{s}_{1}'}  \commentend}{}\<[E]%
\\
\>[B]{}\hsindent{3}{}\<[3]%
\>[3]{}\mathbf{do}\;\{\mskip1.5mu {}\<[9]%
\>[9]{}(\Varid{b},\Varid{s}_{1},\Varid{s}_{2})\leftarrow \Varid{get};{}\<[E]%
\\
\>[9]{}((),\Varid{s}_{1}')\leftarrow \Varid{lift}\;((bx_{1}\mathord{.}\set{L}\;\Varid{a}_{1})\;\Varid{s}_{1});{}\<[E]%
\\
\>[9]{}(\Varid{a}_{1}',\anonymous )\leftarrow \Varid{lift}\;(\Varid{return}\;\Varid{a}_{1}\;\Varid{s}_{1}');{}\<[E]%
\\
\>[9]{}\Varid{set}\;(\Conid{True},\Varid{s}_{1}',\Varid{s}_{2});{}\<[E]%
\\
\>[9]{}\Varid{return}\;(\Conid{Left}\;\Varid{a}_{1}')\mskip1.5mu\}{}\<[E]%
\\
\>[B]{}\mathrel{=}{}\<[BE]%
\>[5]{}\mbox{\commentbegin  \ensuremath{\Varid{set}} commutes with lifting  \commentend}{}\<[E]%
\\
\>[B]{}\hsindent{3}{}\<[3]%
\>[3]{}\mathbf{do}\;\{\mskip1.5mu {}\<[9]%
\>[9]{}(\Varid{b},\Varid{s}_{1},\Varid{s}_{2})\leftarrow \Varid{get};{}\<[E]%
\\
\>[9]{}((),\Varid{s}_{1}')\leftarrow \Varid{lift}\;((bx_{1}\mathord{.}\set{L}\;\Varid{a}_{1})\;\Varid{s}_{1});{}\<[E]%
\\
\>[9]{}\Varid{set}\;(\Conid{True},\Varid{s}_{1}',\Varid{s}_{2});{}\<[E]%
\\
\>[9]{}(\Varid{a}_{1}',\Varid{s}_{1}'')\leftarrow \Varid{lift}\;(\Varid{return}\;\Varid{a}_{1}\;\Varid{s}_{1}');{}\<[E]%
\\
\>[9]{}\Varid{return}\;(\Conid{Left}\;\Varid{a}_{1}')\mskip1.5mu\}{}\<[E]%
\\
\>[B]{}\mathrel{=}{}\<[BE]%
\>[5]{}\mbox{\commentbegin  definitions of \ensuremath{\Varid{return}} and \ensuremath{\Varid{lift}}  \commentend}{}\<[E]%
\\
\>[B]{}\hsindent{3}{}\<[3]%
\>[3]{}\mathbf{do}\;\{\mskip1.5mu {}\<[9]%
\>[9]{}(\Varid{b},\Varid{s}_{1},\Varid{s}_{2})\leftarrow \Varid{get};{}\<[E]%
\\
\>[9]{}((),\Varid{s}_{1}')\leftarrow \Varid{lift}\;((bx_{1}\mathord{.}\set{L}\;\Varid{a}_{1})\;\Varid{s}_{1});{}\<[E]%
\\
\>[9]{}\Varid{set}\;(\Conid{True},\Varid{s}_{1}',\Varid{s}_{2});{}\<[E]%
\\
\>[9]{}\mathbf{let}\;(\Varid{a}_{1}',\Varid{s}_{1}'')\mathrel{=}(\Varid{a}_{1},\Varid{s}_{1}');{}\<[E]%
\\
\>[9]{}\Varid{return}\;(\Conid{Left}\;\Varid{a}_{1}')\mskip1.5mu\}{}\<[E]%
\\
\>[B]{}\mathrel{=}{}\<[BE]%
\>[5]{}\mbox{\commentbegin  definition of \ensuremath{bx\mathord{.}\set{L}}; substituting \ensuremath{\Varid{a}_{1}'\mathrel{=}\Varid{a}_{1}}  \commentend}{}\<[E]%
\\
\>[B]{}\hsindent{3}{}\<[3]%
\>[3]{}\mathbf{do}\;\{\mskip1.5mu {}\<[9]%
\>[9]{}bx_{1}\mathord{.}\set{L}\;(\Conid{Left}\;\Varid{a}_{1});\Varid{return}\;(\Conid{Left}\;\Varid{a}_{1})\mskip1.5mu\}{}\<[E]%
\ColumnHook
\end{hscode}\resethooks
Of course, setting a \ensuremath{\Conid{Right}} value, and \ensuremath{\mathrm{(G_RS_R)}} and \ensuremath{\mathrm{(S_RG_R)}}, are symmetric.
\end{proof}

\restatableProposition{prop:list-wb}
\begin{prop:list-wb}
If \ensuremath{bx} is transparent, then so is \ensuremath{\Varid{listIBX}\;bx}.
\end{prop:list-wb}
\begin{proof}
Suppose \ensuremath{bx} is transparent. We first show that \ensuremath{\Varid{listIBX}\;bx} has \ensuremath{\Conid{T}}-pure queries; we consider only \ensuremath{(\Varid{listIBX}\;bx)\mathord{.}\get{L}}, as \ensuremath{\get{R}} is symmetric. 
\begin{hscode}\SaveRestoreHook
\column{B}{@{}>{\hspre}c<{\hspost}@{}}%
\column{BE}{@{}l@{}}%
\column{3}{@{}>{\hspre}l<{\hspost}@{}}%
\column{5}{@{}>{\hspre}l<{\hspost}@{}}%
\column{9}{@{}>{\hspre}l<{\hspost}@{}}%
\column{E}{@{}>{\hspre}l<{\hspost}@{}}%
\>[3]{}(\Varid{listIBX}\;bx)\mathord{.}\get{L}{}\<[E]%
\\
\>[B]{}\mathrel{=}{}\<[BE]%
\>[5]{}\mbox{\commentbegin  definition of \ensuremath{\Varid{listIBX}}  \commentend}{}\<[E]%
\\
\>[B]{}\hsindent{3}{}\<[3]%
\>[3]{}\mathbf{do}\;\{\mskip1.5mu {}\<[9]%
\>[9]{}(\Varid{n},\Varid{cs})\leftarrow \Varid{get};\Varid{mapM}\;(\Varid{lift}\hsdot{\cdot }{.}\Varid{eval}\;bx\mathord{.}\get{L})\;(\Varid{take}\;\Varid{n}\;\Varid{cs})\mskip1.5mu\}{}\<[E]%
\\
\>[B]{}\mathrel{=}{}\<[BE]%
\>[5]{}\mbox{\commentbegin  \ensuremath{bx} is transparent  \commentend}{}\<[E]%
\\
\>[B]{}\hsindent{3}{}\<[3]%
\>[3]{}\mathbf{do}\;\{\mskip1.5mu {}\<[9]%
\>[9]{}(\Varid{n},\Varid{cs})\leftarrow \Varid{get};\Varid{mapM}\;(\Varid{lift}\hsdot{\cdot }{.}\Varid{eval}\;(\Varid{gets}\;(bx\mathord{.}\Varid{read}_{L})))\;(\Varid{take}\;\Varid{n}\;\Varid{cs})\mskip1.5mu\}{}\<[E]%
\\
\>[B]{}\mathrel{=}{}\<[BE]%
\>[5]{}\mbox{\commentbegin  definition of \ensuremath{\Varid{eval}}  \commentend}{}\<[E]%
\\
\>[B]{}\hsindent{3}{}\<[3]%
\>[3]{}\mathbf{do}\;\{\mskip1.5mu {}\<[9]%
\>[9]{}(\Varid{n},\Varid{cs})\leftarrow \Varid{get};\Varid{mapM}\;(\Varid{lift}\hsdot{\cdot }{.}\Varid{return}\hsdot{\cdot }{.}bx\mathord{.}\Varid{read}_{L})\;(\Varid{take}\;\Varid{n}\;\Varid{cs})\mskip1.5mu\}{}\<[E]%
\\
\>[B]{}\mathrel{=}{}\<[BE]%
\>[5]{}\mbox{\commentbegin  \ensuremath{\Varid{lift}} is a monad morphism  \commentend}{}\<[E]%
\\
\>[B]{}\hsindent{3}{}\<[3]%
\>[3]{}\mathbf{do}\;\{\mskip1.5mu {}\<[9]%
\>[9]{}(\Varid{n},\Varid{cs})\leftarrow \Varid{get};\Varid{mapM}\;(\Varid{return}\hsdot{\cdot }{.}bx\mathord{.}\Varid{read}_{L})\;(\Varid{take}\;\Varid{n}\;\Varid{cs})\mskip1.5mu\}{}\<[E]%
\\
\>[B]{}\mathrel{=}{}\<[BE]%
\>[5]{}\mbox{\commentbegin  \ensuremath{\Varid{mapM}\;(\Varid{return}\hsdot{\cdot }{.}\Varid{f})\mathrel{=}\Varid{return}\hsdot{\cdot }{.}\Varid{map}\;\Varid{f}}  \commentend}{}\<[E]%
\\
\>[B]{}\hsindent{3}{}\<[3]%
\>[3]{}\mathbf{do}\;\{\mskip1.5mu {}\<[9]%
\>[9]{}(\Varid{n},\Varid{cs})\leftarrow \Varid{get};\Varid{return}\;(\Varid{map}\;(bx\mathord{.}\Varid{read}_{L}\;(\Varid{take}\;\Varid{n}\;\Varid{cs})))\mskip1.5mu\}{}\<[E]%
\\
\>[B]{}\mathrel{=}{}\<[BE]%
\>[5]{}\mbox{\commentbegin  definition of \ensuremath{\Varid{gets}}  \commentend}{}\<[E]%
\\
\>[B]{}\hsindent{3}{}\<[3]%
\>[3]{}\Varid{gets}\;(\lambda \hslambda (\Varid{n},\Varid{cs})\hsarrow{\rightarrow }{\mathpunct{.}}\Varid{map}\;(bx\mathord{.}\Varid{read}_{L})\;(\Varid{take}\;\Varid{n}\;\Varid{cs})){}\<[E]%
\ColumnHook
\end{hscode}\resethooks
Now to show that \ensuremath{\Varid{listIBX}} preserves well-behavedness. 
Note that \ensuremath{\Varid{sets}} simplifies when its two list arguments have the same length, to:
\begin{hscode}\SaveRestoreHook
\column{B}{@{}>{\hspre}l<{\hspost}@{}}%
\column{E}{@{}>{\hspre}l<{\hspost}@{}}%
\>[B]{}\Varid{sets}\;\Varid{s}\;\Varid{i}\;\Varid{as}\;\Varid{cs}\mathrel{=}\Varid{mapM}\;(\Varid{uncurry}\;\Varid{s})\;(\Varid{zip}\;\Varid{as}\;\Varid{cs}){}\<[E]%
\ColumnHook
\end{hscode}\resethooks
Then for \ensuremath{\mathrm{(G_LS_L)}}, we have:
\begin{hscode}\SaveRestoreHook
\column{B}{@{}>{\hspre}c<{\hspost}@{}}%
\column{BE}{@{}l@{}}%
\column{3}{@{}>{\hspre}l<{\hspost}@{}}%
\column{5}{@{}>{\hspre}l<{\hspost}@{}}%
\column{9}{@{}>{\hspre}l<{\hspost}@{}}%
\column{28}{@{}>{\hspre}l<{\hspost}@{}}%
\column{E}{@{}>{\hspre}l<{\hspost}@{}}%
\>[3]{}\mathbf{do}\;\{\mskip1.5mu \Varid{as}\leftarrow (\Varid{listIBX}\;bx)\mathord{.}\get{L};(\Varid{listIBX}\;bx)\hsdot{\cdot }{.}\set{L}\;\Varid{as}\mskip1.5mu\}{}\<[E]%
\\
\>[B]{}\mathrel{=}{}\<[BE]%
\>[5]{}\mbox{\commentbegin  definition of \ensuremath{\Varid{listIBX}}  \commentend}{}\<[E]%
\\
\>[B]{}\hsindent{3}{}\<[3]%
\>[3]{}\mathbf{do}\;\{\mskip1.5mu {}\<[9]%
\>[9]{}(\Varid{n},\Varid{cs})\leftarrow \Varid{get};{}\<[E]%
\\
\>[9]{}\Varid{as}\leftarrow \Varid{mapM}\;(\Varid{lift}\hsdot{\cdot }{.}\Varid{eval}\;bx\mathord{.}\get{L})\;(\Varid{take}\;\Varid{n}\;\Varid{cs});{}\<[E]%
\\
\>[9]{}(\anonymous ,\Varid{cs})\leftarrow \Varid{get};{}\<[E]%
\\
\>[9]{}\Varid{cs'}\leftarrow \Varid{lift}\;(\Varid{sets}\;(\Varid{exec}\hsdot{\cdot }{.}bx\mathord{.}\set{L})\;bx\mathord{.}\Varid{init}_{L}\;\Varid{as}\;\Varid{cs});{}\<[E]%
\\
\>[9]{}\Varid{set}\;(\Varid{length}\;\Varid{as},\Varid{cs'})\mskip1.5mu\}{}\<[E]%
\\
\>[B]{}\mathrel{=}{}\<[BE]%
\>[5]{}\mbox{\commentbegin  \ensuremath{\Varid{get}} commutes with liftings; \ensuremath{\mathrm{(GG)}}  \commentend}{}\<[E]%
\\
\>[B]{}\hsindent{3}{}\<[3]%
\>[3]{}\mathbf{do}\;\{\mskip1.5mu {}\<[9]%
\>[9]{}(\Varid{n},\Varid{cs})\leftarrow \Varid{get};{}\<[E]%
\\
\>[9]{}\Varid{as}\leftarrow \Varid{mapM}\;(\Varid{lift}\hsdot{\cdot }{.}\Varid{eval}\;bx\mathord{.}\get{L})\;(\Varid{take}\;\Varid{n}\;\Varid{cs});{}\<[E]%
\\
\>[9]{}\Varid{cs'}\leftarrow \Varid{lift}\;(\Varid{sets}\;(\Varid{exec}\hsdot{\cdot }{.}bx\mathord{.}\set{L})\;bx\mathord{.}\Varid{init}_{L}\;\Varid{as}\;\Varid{cs});{}\<[E]%
\\
\>[9]{}\Varid{set}\;(\Varid{length}\;\Varid{as},\Varid{cs'})\mskip1.5mu\}{}\<[E]%
\\
\>[B]{}\mathrel{=}{}\<[BE]%
\>[5]{}\mbox{\commentbegin  \ensuremath{bx} is transparent, as above  \commentend}{}\<[E]%
\\
\>[B]{}\hsindent{3}{}\<[3]%
\>[3]{}\mathbf{do}\;\{\mskip1.5mu {}\<[9]%
\>[9]{}(\Varid{n},\Varid{cs})\leftarrow \Varid{get};{}\<[E]%
\\
\>[9]{}\mathbf{let}\;\Varid{as}\mathrel{=}\Varid{map}\;(bx\mathord{.}\Varid{read}_{R})\;(\Varid{take}\;\Varid{n}\;\Varid{cs});{}\<[E]%
\\
\>[9]{}\Varid{cs'}\leftarrow \Varid{lift}\;(\Varid{sets}\;(\Varid{exec}\hsdot{\cdot }{.}bx\mathord{.}\set{L})\;bx\mathord{.}\Varid{init}_{L}\;\Varid{as}\;\Varid{cs});{}\<[E]%
\\
\>[9]{}\Varid{set}\;(\Varid{length}\;\Varid{as},\Varid{cs'})\mskip1.5mu\}{}\<[E]%
\\
\>[B]{}\mathrel{=}{}\<[BE]%
\>[5]{}\mbox{\commentbegin  \ensuremath{\Varid{length}\;\Varid{as}\mathrel{=}\Varid{length}\;(\Varid{take}\;\Varid{n}\;\Varid{cs})\mathrel{=}\Varid{n}}, so \ensuremath{\Varid{sets}} simplifies  \commentend}{}\<[E]%
\\
\>[B]{}\hsindent{3}{}\<[3]%
\>[3]{}\mathbf{do}\;\{\mskip1.5mu {}\<[9]%
\>[9]{}(\Varid{n},\Varid{cs})\leftarrow \Varid{get};{}\<[E]%
\\
\>[9]{}\mathbf{let}\;\Varid{as}\mathrel{=}\Varid{map}\;(bx\mathord{.}\Varid{read}_{R})\;(\Varid{take}\;\Varid{n}\;\Varid{cs});{}\<[E]%
\\
\>[9]{}\Varid{cs'}\leftarrow \Varid{lift}\;(\Varid{mapM}\;(\Varid{uncurry}\;(\Varid{exec}\hsdot{\cdot }{.}bx\mathord{.}\set{L}))\;(\Varid{zip}\;\Varid{as}\;\Varid{cs}));{}\<[E]%
\\
\>[9]{}\Varid{set}\;(\Varid{n},\Varid{cs'})\mskip1.5mu\}{}\<[E]%
\\
\>[B]{}\mathrel{=}{}\<[BE]%
\>[5]{}\mbox{\commentbegin  \ensuremath{\Varid{zip}\;(\Varid{map}\;\Varid{f}\;\Varid{xs})\;\Varid{xs}\mathrel{=}\Varid{map}\;(\lambda \hslambda \Varid{x}\hsarrow{\rightarrow }{\mathpunct{.}}(\Varid{f}\;\Varid{x},\Varid{x}))\;\Varid{xs}}  \commentend}{}\<[E]%
\\
\>[B]{}\hsindent{3}{}\<[3]%
\>[3]{}\mathbf{do}\;\{\mskip1.5mu {}\<[9]%
\>[9]{}(\Varid{n},\Varid{cs})\leftarrow \Varid{get};{}\<[E]%
\\
\>[9]{}\Varid{cs'}\leftarrow \Varid{lift}\;(\Varid{mapM}\;{}\<[28]%
\>[28]{}(\Varid{uncurry}\;(\Varid{exec}\hsdot{\cdot }{.}bx\mathord{.}\set{L}))\;{}\<[E]%
\\
\>[28]{}(\Varid{map}\;(\lambda \hslambda \Varid{c}\hsarrow{\rightarrow }{\mathpunct{.}}(bx\mathord{.}\Varid{read}_{R}\;\Varid{c},\Varid{c}))\;\Varid{cs}));{}\<[E]%
\\
\>[9]{}\Varid{set}\;(\Varid{n},\Varid{cs'})\mskip1.5mu\}{}\<[E]%
\\
\>[B]{}\mathrel{=}{}\<[BE]%
\>[5]{}\mbox{\commentbegin   \ensuremath{\Varid{exec}\;(bx\mathord{.}\set{L}\;(bx\mathord{.}\Varid{read}_{R}\;\Varid{c}))\;\Varid{c}\mathrel{=}\Varid{return}\;\Varid{c}}, 
        by \ensuremath{\mathrm{(G_LS_L)}}   \commentend}{}\<[E]%
\\[\blanklineskip]%
\>[B]{}\hsindent{3}{}\<[3]%
\>[3]{}\mathbf{do}\;\{\mskip1.5mu {}\<[9]%
\>[9]{}(\Varid{n},\Varid{cs})\leftarrow \Varid{get};{}\<[E]%
\\
\>[9]{}\mathbf{let}\;\Varid{cs'}\mathrel{=}\Varid{cs};{}\<[E]%
\\
\>[9]{}\Varid{set}\;(\Varid{n},\Varid{cs'})\mskip1.5mu\}{}\<[E]%
\\
\>[B]{}\mathrel{=}{}\<[BE]%
\>[5]{}\mbox{\commentbegin  substituting \ensuremath{\Varid{cs'}\mathrel{=}\Varid{cs}}; \ensuremath{\mathrm{(GS)}}  \commentend}{}\<[E]%
\\
\>[B]{}\hsindent{3}{}\<[3]%
\>[3]{}\Varid{return}\;(){}\<[E]%
\ColumnHook
\end{hscode}\resethooks
And for \ensuremath{\mathrm{(S_LG_L)}}, we note first that
\begin{hscode}\SaveRestoreHook
\column{B}{@{}>{\hspre}c<{\hspost}@{}}%
\column{BE}{@{}l@{}}%
\column{3}{@{}>{\hspre}l<{\hspost}@{}}%
\column{5}{@{}>{\hspre}l<{\hspost}@{}}%
\column{9}{@{}>{\hspre}l<{\hspost}@{}}%
\column{E}{@{}>{\hspre}l<{\hspost}@{}}%
\>[3]{}\mathbf{do}\;\{\mskip1.5mu {}\<[9]%
\>[9]{}\Varid{c''}\leftarrow \Varid{lift}\;(\Varid{uncurry}\;(\Varid{exec}\hsdot{\cdot }{.}bx\mathord{.}\set{L})\;(\Varid{a},\Varid{c}));{}\<[E]%
\\
\>[9]{}\mathbf{let}\;\Varid{a'}\mathrel{=}bx\mathord{.}\Varid{read}_{L}\;\Varid{c''};\Varid{return}\;(\Varid{a'},\Varid{c''})\mskip1.5mu\}{}\<[E]%
\\
\>[B]{}\mathrel{=}{}\<[BE]%
\>[5]{}\mbox{\commentbegin  \ensuremath{\Varid{uncurry}}, \ensuremath{\Varid{exec}}; \ensuremath{bx} is transparent  \commentend}{}\<[E]%
\\
\>[B]{}\hsindent{3}{}\<[3]%
\>[3]{}\mathbf{do}\;\{\mskip1.5mu {}\<[9]%
\>[9]{}((),\Varid{c'})\leftarrow \Varid{lift}\;((bx\mathord{.}\set{L}\;\Varid{a})\;\Varid{c});{}\<[E]%
\\
\>[9]{}(\Varid{a'},\Varid{c''})\leftarrow \Varid{lift}\;(bx\mathord{.}\get{L}\;\Varid{c'});\Varid{return}\;(\Varid{a'},\Varid{c''})\mskip1.5mu\}{}\<[E]%
\\
\>[B]{}\mathrel{=}{}\<[BE]%
\>[5]{}\mbox{\commentbegin  \ensuremath{\Varid{lift}} is a monad morphism  \commentend}{}\<[E]%
\\
\>[B]{}\hsindent{3}{}\<[3]%
\>[3]{}\mathbf{do}\;\{\mskip1.5mu {}\<[9]%
\>[9]{}(\Varid{a'},\Varid{c''})\leftarrow \Varid{lift}\;(\mathbf{do}\;\{\mskip1.5mu bx\mathord{.}\set{L}\;\Varid{a};bx\mathord{.}\get{L}\mskip1.5mu\}\;\Varid{c});{}\<[E]%
\\
\>[9]{}\Varid{return}\;(\Varid{a'},\Varid{c''})\mskip1.5mu\}{}\<[E]%
\\
\>[B]{}\mathrel{=}{}\<[BE]%
\>[5]{}\mbox{\commentbegin  \ensuremath{\mathrm{(S_LG_L)}} for \ensuremath{bx}  \commentend}{}\<[E]%
\\
\>[B]{}\hsindent{3}{}\<[3]%
\>[3]{}\mathbf{do}\;\{\mskip1.5mu {}\<[9]%
\>[9]{}(\Varid{a'},\Varid{c''})\leftarrow \Varid{lift}\;(\mathbf{do}\;\{\mskip1.5mu bx\mathord{.}\set{L}\;\Varid{a};\Varid{return}\;\Varid{a}\mskip1.5mu\}\;\Varid{c});{}\<[E]%
\\
\>[9]{}\Varid{return}\;(\Varid{a'},\Varid{c''})\mskip1.5mu\}{}\<[E]%
\\
\>[B]{}\mathrel{=}{}\<[BE]%
\>[5]{}\mbox{\commentbegin  monads: \ensuremath{\Varid{a'}} will be bound to \ensuremath{\Varid{a}}  \commentend}{}\<[E]%
\\
\>[B]{}\hsindent{3}{}\<[3]%
\>[3]{}\mathbf{do}\;\{\mskip1.5mu {}\<[9]%
\>[9]{}(\anonymous ,\Varid{c''})\leftarrow \Varid{lift}\;(\mathbf{do}\;\{\mskip1.5mu bx\mathord{.}\set{L}\;\Varid{a};\Varid{return}\;\Varid{a}\mskip1.5mu\}\;\Varid{c});{}\<[E]%
\\
\>[9]{}\mathbf{let}\;\Varid{a'}\mathrel{=}\Varid{a};\Varid{return}\;(\Varid{a'},\Varid{c''})\mskip1.5mu\}{}\<[E]%
\\
\>[B]{}\mathrel{=}{}\<[BE]%
\>[5]{}\mbox{\commentbegin  reversing the above steps  \commentend}{}\<[E]%
\\
\>[B]{}\hsindent{3}{}\<[3]%
\>[3]{}\mathbf{do}\;\{\mskip1.5mu {}\<[9]%
\>[9]{}\Varid{c''}\leftarrow \Varid{lift}\;(\Varid{uncurry}\;(\Varid{exec}\hsdot{\cdot }{.}bx\mathord{.}\set{L})\;(\Varid{a},\Varid{c}));{}\<[E]%
\\
\>[9]{}\mathbf{let}\;\Varid{a'}\mathrel{=}\Varid{a};\Varid{return}\;(\Varid{a'},\Varid{c''})\mskip1.5mu\}{}\<[E]%
\ColumnHook
\end{hscode}\resethooks
and similarly
\begin{hscode}\SaveRestoreHook
\column{B}{@{}>{\hspre}c<{\hspost}@{}}%
\column{BE}{@{}l@{}}%
\column{3}{@{}>{\hspre}l<{\hspost}@{}}%
\column{5}{@{}>{\hspre}l<{\hspost}@{}}%
\column{9}{@{}>{\hspre}l<{\hspost}@{}}%
\column{E}{@{}>{\hspre}l<{\hspost}@{}}%
\>[3]{}\mathbf{do}\;\{\mskip1.5mu {}\<[9]%
\>[9]{}\Varid{c''}\leftarrow \Varid{lift}\;(bx\mathord{.}\Varid{init}_{L}\;\Varid{a});{}\<[E]%
\\
\>[9]{}\mathbf{let}\;\Varid{a'}\mathrel{=}bx\mathord{.}\Varid{read}_{L}\;\Varid{c''};\Varid{return}\;(\Varid{a'},\Varid{c''})\mskip1.5mu\}{}\<[E]%
\\
\>[B]{}\mathrel{=}{}\<[BE]%
\>[5]{}\mbox{\commentbegin  \ensuremath{bx} is transparent  \commentend}{}\<[E]%
\\
\>[B]{}\hsindent{3}{}\<[3]%
\>[3]{}\mathbf{do}\;\{\mskip1.5mu {}\<[9]%
\>[9]{}\Varid{c'}\leftarrow \Varid{lift}\;(bx\mathord{.}\Varid{init}_{L}\;\Varid{a});{}\<[E]%
\\
\>[9]{}(\Varid{a'},\Varid{c''})\leftarrow \Varid{lift}\;(bx\mathord{.}\get{L}\;\Varid{c'});\Varid{return}\;(\Varid{a'},\Varid{c''})\mskip1.5mu\}{}\<[E]%
\\
\>[B]{}\mathrel{=}{}\<[BE]%
\>[5]{}\mbox{\commentbegin  \ensuremath{\Varid{lift}} is a monad morphism  \commentend}{}\<[E]%
\\
\>[B]{}\hsindent{3}{}\<[3]%
\>[3]{}\mathbf{do}\;\{\mskip1.5mu {}\<[9]%
\>[9]{}(\Varid{a'},\Varid{c''})\leftarrow \Varid{lift}\;(\mathbf{do}\;\{\mskip1.5mu \Varid{c'}\leftarrow bx\mathord{.}\Varid{init}_{L}\;\Varid{a};bx\mathord{.}\get{L}\;\Varid{c'}\mskip1.5mu\});{}\<[E]%
\\
\>[9]{}\Varid{return}\;(\Varid{a'},\Varid{c''})\mskip1.5mu\}{}\<[E]%
\\
\>[B]{}\mathrel{=}{}\<[BE]%
\>[5]{}\mbox{\commentbegin  \ensuremath{\mathrm{(I_LG_L)}} for \ensuremath{bx}  \commentend}{}\<[E]%
\\
\>[B]{}\hsindent{3}{}\<[3]%
\>[3]{}\mathbf{do}\;\{\mskip1.5mu {}\<[9]%
\>[9]{}(\Varid{a'},\Varid{c''})\leftarrow \Varid{lift}\;(\mathbf{do}\;\{\mskip1.5mu \Varid{c'}\leftarrow bx\mathord{.}\Varid{init}_{L}\;\Varid{a};\Varid{return}\;(\Varid{a},\Varid{c'})\mskip1.5mu\});{}\<[E]%
\\
\>[9]{}\Varid{return}\;(\Varid{a'},\Varid{c''})\mskip1.5mu\}{}\<[E]%
\\
\>[B]{}\mathrel{=}{}\<[BE]%
\>[5]{}\mbox{\commentbegin  monads: \ensuremath{\Varid{a'}} will be bound to \ensuremath{\Varid{a}}  \commentend}{}\<[E]%
\\
\>[B]{}\hsindent{3}{}\<[3]%
\>[3]{}\mathbf{do}\;\{\mskip1.5mu {}\<[9]%
\>[9]{}(\anonymous ,\Varid{c''})\leftarrow \Varid{lift}\;(\mathbf{do}\;\{\mskip1.5mu \Varid{c'}\leftarrow bx\mathord{.}\Varid{init}_{L}\;\Varid{a};\Varid{return}\;(\Varid{a},\Varid{c'})\mskip1.5mu\});{}\<[E]%
\\
\>[9]{}\mathbf{let}\;\Varid{a'}\mathrel{=}\Varid{a};\Varid{return}\;(\Varid{a'},\Varid{c''})\mskip1.5mu\}{}\<[E]%
\\
\>[B]{}\mathrel{=}{}\<[BE]%
\>[5]{}\mbox{\commentbegin  reversing the above steps  \commentend}{}\<[E]%
\\
\>[B]{}\hsindent{3}{}\<[3]%
\>[3]{}\mathbf{do}\;\{\mskip1.5mu {}\<[9]%
\>[9]{}\Varid{c''}\leftarrow \Varid{lift}\;(bx\mathord{.}\Varid{init}_{L}\;\Varid{a});{}\<[E]%
\\
\>[9]{}\mathbf{let}\;\Varid{a'}\mathrel{=}\Varid{a};\Varid{return}\;(\Varid{a'},\Varid{c''})\mskip1.5mu\}{}\<[E]%
\ColumnHook
\end{hscode}\resethooks
and therefore (by induction on \ensuremath{\Varid{as}}):
\jgnote{a bit of a big step\ldots}
\begin{hscode}\SaveRestoreHook
\column{B}{@{}>{\hspre}c<{\hspost}@{}}%
\column{BE}{@{}l@{}}%
\column{3}{@{}>{\hspre}l<{\hspost}@{}}%
\column{5}{@{}>{\hspre}l<{\hspost}@{}}%
\column{9}{@{}>{\hspre}l<{\hspost}@{}}%
\column{E}{@{}>{\hspre}l<{\hspost}@{}}%
\>[3]{}\mathbf{do}\;\{\mskip1.5mu {}\<[9]%
\>[9]{}\Varid{cs'}\leftarrow \Varid{lift}\;(\Varid{sets}\;(\Varid{exec}\hsdot{\cdot }{.}bx\mathord{.}\set{L})\;bx\mathord{.}\Varid{init}_{L}\;\Varid{as}\;\Varid{cs});{}\<[E]%
\\
\>[9]{}\mathbf{let}\;\Varid{as'}\mathrel{=}\Varid{map}\;(bx\mathord{.}\Varid{read}_{L}\;(\Varid{take}\;(\Varid{length}\;\Varid{as})\;\Varid{cs'}));\Varid{k}\;\Varid{as'}\;\Varid{cs'}\mskip1.5mu\}{}\<[E]%
\\
\>[B]{}\mathrel{=}{}\<[BE]%
\>[5]{}\mbox{\commentbegin  either way, \ensuremath{\Varid{as'}} gets bound to \ensuremath{\Varid{as}}  \commentend}{}\<[E]%
\\
\>[B]{}\hsindent{3}{}\<[3]%
\>[3]{}\mathbf{do}\;\{\mskip1.5mu {}\<[9]%
\>[9]{}\Varid{cs'}\leftarrow \Varid{lift}\;(\Varid{sets}\;(\Varid{exec}\hsdot{\cdot }{.}bx\mathord{.}\set{L})\;bx\mathord{.}\Varid{init}_{L}\;\Varid{as}\;\Varid{cs});{}\<[E]%
\\
\>[9]{}\Varid{k}\;\Varid{as}\;\Varid{cs'}\mskip1.5mu\}{}\<[E]%
\ColumnHook
\end{hscode}\resethooks
Then we have:
\begin{hscode}\SaveRestoreHook
\column{B}{@{}>{\hspre}c<{\hspost}@{}}%
\column{BE}{@{}l@{}}%
\column{3}{@{}>{\hspre}l<{\hspost}@{}}%
\column{5}{@{}>{\hspre}l<{\hspost}@{}}%
\column{9}{@{}>{\hspre}l<{\hspost}@{}}%
\column{E}{@{}>{\hspre}l<{\hspost}@{}}%
\>[3]{}\mathbf{do}\;\{\mskip1.5mu (\Varid{listIBX}\;bx)\hsdot{\cdot }{.}\set{L}\;\Varid{as};(\Varid{listIBX}\;bx)\mathord{.}\get{L}\mskip1.5mu\}{}\<[E]%
\\
\>[B]{}\mathrel{=}{}\<[BE]%
\>[5]{}\mbox{\commentbegin  definition of \ensuremath{\Varid{listIBX}}  \commentend}{}\<[E]%
\\
\>[B]{}\hsindent{3}{}\<[3]%
\>[3]{}\mathbf{do}\;\{\mskip1.5mu {}\<[9]%
\>[9]{}(\anonymous ,\Varid{cs})\leftarrow \Varid{get};{}\<[E]%
\\
\>[9]{}\Varid{cs'}\leftarrow \Varid{lift}\;(\Varid{sets}\;(\Varid{exec}\hsdot{\cdot }{.}bx\mathord{.}\set{L})\;bx\mathord{.}\Varid{init}_{L}\;\Varid{as}\;\Varid{cs});{}\<[E]%
\\
\>[9]{}\Varid{set}\;(\Varid{length}\;\Varid{as},\Varid{cs'});{}\<[E]%
\\
\>[9]{}(\Varid{n},\Varid{cs})\leftarrow \Varid{get};{}\<[E]%
\\
\>[9]{}\Varid{mapM}\;(\Varid{lift}\hsdot{\cdot }{.}\Varid{eval}\;bx\mathord{.}\get{L})\;(\Varid{take}\;\Varid{n}\;\Varid{cs})\mskip1.5mu\}{}\<[E]%
\\
\>[B]{}\mathrel{=}{}\<[BE]%
\>[5]{}\mbox{\commentbegin  \ensuremath{\mathrm{(SG)}}  \commentend}{}\<[E]%
\\
\>[B]{}\hsindent{3}{}\<[3]%
\>[3]{}\mathbf{do}\;\{\mskip1.5mu {}\<[9]%
\>[9]{}(\anonymous ,\Varid{cs})\leftarrow \Varid{get};{}\<[E]%
\\
\>[9]{}\Varid{cs'}\leftarrow \Varid{lift}\;(\Varid{sets}\;(\Varid{exec}\hsdot{\cdot }{.}bx\mathord{.}\set{L})\;bx\mathord{.}\Varid{init}_{L}\;\Varid{as}\;\Varid{cs});{}\<[E]%
\\
\>[9]{}\Varid{set}\;(\Varid{length}\;\Varid{as},\Varid{cs'});{}\<[E]%
\\
\>[9]{}\Varid{mapM}\;(\Varid{lift}\hsdot{\cdot }{.}\Varid{eval}\;bx\mathord{.}\get{L})\;(\Varid{take}\;(\Varid{length}\;\Varid{as})\;\Varid{cs'})\mskip1.5mu\}{}\<[E]%
\\
\>[B]{}\mathrel{=}{}\<[BE]%
\>[5]{}\mbox{\commentbegin  \ensuremath{bx} is transparent, as above  \commentend}{}\<[E]%
\\
\>[B]{}\hsindent{3}{}\<[3]%
\>[3]{}\mathbf{do}\;\{\mskip1.5mu {}\<[9]%
\>[9]{}\Varid{cs}\leftarrow \Varid{get};{}\<[E]%
\\
\>[9]{}\Varid{cs'}\leftarrow \Varid{lift}\;(\Varid{sets}\;(\Varid{exec}\hsdot{\cdot }{.}bx\mathord{.}\set{L})\;bx\mathord{.}\Varid{init}_{L}\;\Varid{as}\;\Varid{cs});{}\<[E]%
\\
\>[9]{}\Varid{set}\;(\Varid{length}\;\Varid{as},\Varid{cs'});{}\<[E]%
\\
\>[9]{}\Varid{return}\;(\Varid{map}\;(bx\mathord{.}\Varid{read}_{L}\;(\Varid{take}\;(\Varid{length}\;\Varid{as})\;\Varid{cs'})))\mskip1.5mu\}{}\<[E]%
\\
\>[B]{}\mathrel{=}{}\<[BE]%
\>[5]{}\mbox{\commentbegin  observation above  \commentend}{}\<[E]%
\\
\>[B]{}\hsindent{3}{}\<[3]%
\>[3]{}\mathbf{do}\;\{\mskip1.5mu {}\<[9]%
\>[9]{}\Varid{cs}\leftarrow \Varid{get};{}\<[E]%
\\
\>[9]{}\Varid{cs'}\leftarrow \Varid{lift}\;(\Varid{sets}\;(\Varid{exec}\hsdot{\cdot }{.}bx\mathord{.}\set{L})\;bx\mathord{.}\Varid{init}_{L}\;\Varid{as}\;\Varid{cs});{}\<[E]%
\\
\>[9]{}\Varid{set}\;(\Varid{length}\;\Varid{as},\Varid{cs'});{}\<[E]%
\\
\>[9]{}\Varid{return}\;\Varid{as}\mskip1.5mu\}{}\<[E]%
\ColumnHook
\end{hscode}\resethooks
Finally, for \ensuremath{\mathrm{(I_LG_L)}} we have:
\begin{hscode}\SaveRestoreHook
\column{B}{@{}>{\hspre}c<{\hspost}@{}}%
\column{BE}{@{}l@{}}%
\column{3}{@{}>{\hspre}l<{\hspost}@{}}%
\column{5}{@{}>{\hspre}l<{\hspost}@{}}%
\column{9}{@{}>{\hspre}l<{\hspost}@{}}%
\column{E}{@{}>{\hspre}l<{\hspost}@{}}%
\>[3]{}\mathbf{do}\;\{\mskip1.5mu {}\<[9]%
\>[9]{}\Varid{cs}\leftarrow (\Varid{listIBX}\;bx)\mathord{.}\Varid{init}_{L}\;\Varid{as};{}\<[E]%
\\
\>[9]{}(\Varid{listIBX}\;bx)\mathord{.}\get{L}\;(\Varid{length}\;\Varid{as},\Varid{cs})\mskip1.5mu\}{}\<[E]%
\\
\>[B]{}\mathrel{=}{}\<[BE]%
\>[5]{}\mbox{\commentbegin  definition of \ensuremath{\Varid{listIBX}}; \ensuremath{bx} is transparent  \commentend}{}\<[E]%
\\
\>[B]{}\hsindent{3}{}\<[3]%
\>[3]{}\mathbf{do}\;\{\mskip1.5mu {}\<[9]%
\>[9]{}\Varid{cs}\leftarrow \Varid{mapM}\;(bx\mathord{.}\Varid{init}_{L})\;\Varid{as};{}\<[E]%
\\
\>[9]{}\lambda \hslambda (\Varid{n},\Varid{cs})\hsarrow{\rightarrow }{\mathpunct{.}}\Varid{gets}\;(\Varid{map}\;(bx\mathord{.}\Varid{read}_{L}))\;(\Varid{length}\;\Varid{as},\Varid{cs})\mskip1.5mu\}{}\<[E]%
\\
\>[B]{}\mathrel{=}{}\<[BE]%
\>[5]{}\mbox{\commentbegin  definition of \ensuremath{\Varid{gets}}  \commentend}{}\<[E]%
\\
\>[B]{}\hsindent{3}{}\<[3]%
\>[3]{}\mathbf{do}\;\{\mskip1.5mu {}\<[9]%
\>[9]{}\Varid{cs}\leftarrow \Varid{mapM}\;(bx\mathord{.}\Varid{init}_{L})\;\Varid{as};{}\<[E]%
\\
\>[9]{}\Varid{return}\;(\Varid{map}\;(bx\mathord{.}\Varid{read}_{L})\;(\Varid{take}\;(\Varid{length}\;\Varid{as})\;\Varid{cs}),\Varid{cs})\mskip1.5mu\}{}\<[E]%
\\
\>[B]{}\mathrel{=}{}\<[BE]%
\>[5]{}\mbox{\commentbegin  \ensuremath{\Varid{length}\;\Varid{cs}\mathrel{=}\Varid{length}\;\Varid{as}} by definition of \ensuremath{\Varid{mapM}}  \commentend}{}\<[E]%
\\
\>[B]{}\hsindent{3}{}\<[3]%
\>[3]{}\mathbf{do}\;\{\mskip1.5mu {}\<[9]%
\>[9]{}\Varid{cs}\leftarrow \Varid{mapM}\;(bx\mathord{.}\Varid{init}_{L})\;\Varid{as};{}\<[E]%
\\
\>[9]{}\Varid{return}\;(\Varid{map}\;(bx\mathord{.}\Varid{read}_{L})\;(\Varid{take}\;(\Varid{length}\;\Varid{cs})\;\Varid{cs}),\Varid{cs})\mskip1.5mu\}{}\<[E]%
\\
\>[B]{}\mathrel{=}{}\<[BE]%
\>[5]{}\mbox{\commentbegin  \ensuremath{\Varid{take}\;(\Varid{length}\;\Varid{cs})\mathrel{=}\Varid{cs}}  \commentend}{}\<[E]%
\\
\>[B]{}\hsindent{3}{}\<[3]%
\>[3]{}\mathbf{do}\;\{\mskip1.5mu {}\<[9]%
\>[9]{}\Varid{cs}\leftarrow \Varid{mapM}\;(bx\mathord{.}\Varid{init}_{L})\;\Varid{as};{}\<[E]%
\\
\>[9]{}\Varid{return}\;(\Varid{map}\;(bx\mathord{.}\Varid{read}_{L})\;\Varid{cs},\Varid{cs})\mskip1.5mu\}{}\<[E]%
\\
\>[B]{}\mathrel{=}{}\<[BE]%
\>[5]{}\mbox{\commentbegin  \ensuremath{\mathrm{(I_LG_L)}} for \ensuremath{bx}  \commentend}{}\<[E]%
\\
\>[B]{}\hsindent{3}{}\<[3]%
\>[3]{}\mathbf{do}\;\{\mskip1.5mu {}\<[9]%
\>[9]{}\Varid{cs}\leftarrow \Varid{mapM}\;(bx\mathord{.}\Varid{init}_{L})\;\Varid{as};\Varid{return}\;(\Varid{as},\Varid{cs})\mskip1.5mu\}{}\<[E]%
\\
\>[B]{}\mathrel{=}{}\<[BE]%
\>[5]{}\mbox{\commentbegin  definition of \ensuremath{\Varid{listIBX}} again  \commentend}{}\<[E]%
\\
\>[B]{}\hsindent{3}{}\<[3]%
\>[3]{}\mathbf{do}\;\{\mskip1.5mu \Varid{cs}\leftarrow ((\Varid{listIBX}\;bx)\mathord{.}\Varid{init}_{L}\;\Varid{as});\Varid{return}\;(\Varid{as},\Varid{cs})\mskip1.5mu\}{}\<[E]%
\ColumnHook
\end{hscode}\resethooks
The proofs on the right are of course symmetric, so omitted.
\end{proof}

\restatableProposition{prop:reader-wb}
\begin{prop:reader-wb}
\psnote{Added two arguments S for sense, as in paper, OK?!}
If \ensuremath{\Varid{f}\;\Varid{c}\mathbin{::}\Conid{StateTBX}\;(\Conid{Reader}\;\Conid{C})\;\Conid{S}\;\Conid{A}\;\Conid{B}} is transparent for any \ensuremath{\Varid{c}\mathbin{::}\Conid{C}}, then \ensuremath{\Varid{switch}\;\Varid{f}\mathbin{::}\Conid{StateTBX}\;(\Conid{Reader}\;\Conid{C})\;\Conid{S}\;\Conid{A}\;\Conid{B}} is well-behaved, but not necessarily transparent.
\end{prop:reader-wb}
\begin{proof}
The failure of transparency is illustrated by taking \ensuremath{\Varid{f}} to be any
non-constant function.  For example, take 
\begin{hscode}\SaveRestoreHook
\column{B}{@{}>{\hspre}l<{\hspost}@{}}%
\column{6}{@{}>{\hspre}l<{\hspost}@{}}%
\column{18}{@{}>{\hspre}l<{\hspost}@{}}%
\column{E}{@{}>{\hspre}l<{\hspost}@{}}%
\>[B]{}\tau{}\<[6]%
\>[6]{}\mathrel{=}\Conid{StateTBX}\;{}\<[18]%
\>[18]{}\Conid{Id}\;(\Conid{A},\Conid{A}){}\<[E]%
\\
\>[B]{}\alpha{}\<[6]%
\>[6]{}\mathrel{=}(\Conid{A},\Conid{A}){}\<[E]%
\\
\>[B]{}\beta{}\<[6]%
\>[6]{}\mathrel{=}\Conid{A}{}\<[E]%
\\
\>[B]{}\gamma{}\<[6]%
\>[6]{}\mathrel{=}\Conid{Bool}{}\<[E]%
\\
\>[B]{}\Varid{f}\;\Varid{b}{}\<[6]%
\>[6]{}\mathrel{=}\mathbf{if}\;\Varid{b}\;\mathbf{then}\;\Varid{fstBX}\;\mathbf{else}\;\Varid{sndBX}{}\<[E]%
\ColumnHook
\end{hscode}\resethooks
Then the \ensuremath{\get{L}} operation of \ensuremath{\Varid{switch}\;\Varid{f}} is of the form 
\begin{hscode}\SaveRestoreHook
\column{B}{@{}>{\hspre}l<{\hspost}@{}}%
\column{E}{@{}>{\hspre}l<{\hspost}@{}}%
\>[B]{}\mathbf{do}\;\{\mskip1.5mu \Varid{b}\leftarrow \Varid{lift}\;\Varid{ask};(\Varid{f}\;\Varid{b})\mathord{.}\get{L}\mskip1.5mu\}{}\<[E]%
\ColumnHook
\end{hscode}\resethooks
which is equivalent to 
\begin{hscode}\SaveRestoreHook
\column{B}{@{}>{\hspre}l<{\hspost}@{}}%
\column{E}{@{}>{\hspre}l<{\hspost}@{}}%
\>[B]{}\mathbf{do}\;\{\mskip1.5mu \Varid{b}\leftarrow \Varid{lift}\;\Varid{ask};\mathbf{if}\;\Varid{b}\;\mathbf{then}\;\Varid{fstBX}\mathord{.}\get{L}\;\mathbf{else}\;\Varid{sndBX}\mathord{.}\get{L}\mskip1.5mu\}{}\<[E]%
\ColumnHook
\end{hscode}\resethooks
which is clearly not \ensuremath{(\Conid{Reader}\;\Conid{Bool})}-pure.

We now consider the preservation of well-behavedness.  Clearly, the
\ensuremath{\Varid{get}} operations commute so \ensuremath{\mathrm{(G_LG_L)}}, \ensuremath{\mathrm{(G_RG_R)}} and \ensuremath{\mathrm{(G_LG_R)}} hold.  As usual, we prove the laws for the left side only; the rest are symmetric.

To show \ensuremath{\mathrm{(G_LS_L)}}:
\begin{hscode}\SaveRestoreHook
\column{B}{@{}>{\hspre}c<{\hspost}@{}}%
\column{BE}{@{}l@{}}%
\column{4}{@{}>{\hspre}l<{\hspost}@{}}%
\column{6}{@{}>{\hspre}l<{\hspost}@{}}%
\column{10}{@{}>{\hspre}l<{\hspost}@{}}%
\column{26}{@{}>{\hspre}l<{\hspost}@{}}%
\column{46}{@{}>{\hspre}l<{\hspost}@{}}%
\column{E}{@{}>{\hspre}l<{\hspost}@{}}%
\>[4]{}\mathbf{do}\;\{\mskip1.5mu {}\<[10]%
\>[10]{}\Varid{a}\leftarrow (\Varid{switch}\;\Varid{f})\mathord{.}\get{L};({}\<[46]%
\>[46]{}\Varid{switch}\;\Varid{f})\mathord{.}\set{L}\;\Varid{a}\mskip1.5mu\}{}\<[E]%
\\
\>[B]{}\mathrel{=}{}\<[BE]%
\>[6]{}\mbox{\commentbegin  Definition  \commentend}{}\<[E]%
\\
\>[B]{}\hsindent{4}{}\<[4]%
\>[4]{}\mathbf{do}\;\{\mskip1.5mu {}\<[10]%
\>[10]{}\Varid{c}\leftarrow \Varid{lift}\;\Varid{ask};\Varid{a}\leftarrow (\Varid{f}\;\Varid{c})\mathord{.}\get{L};\Varid{c'}\leftarrow \Varid{lift}\;\Varid{ask};(\Varid{f}\;\Varid{c'})\mathord{.}\set{L}\;\Varid{a}\mskip1.5mu\}{}\<[E]%
\\
\>[B]{}\mathrel{=}{}\<[BE]%
\>[6]{}\mbox{\commentbegin  \ensuremath{\Varid{lift}\;\Varid{ask}} commutes with any \ensuremath{(\Conid{Reader}\;\gamma)}-pure operation  \commentend}{}\<[E]%
\\
\>[B]{}\hsindent{4}{}\<[4]%
\>[4]{}\mathbf{do}\;\{\mskip1.5mu {}\<[10]%
\>[10]{}\Varid{c}\leftarrow \Varid{lift}\;\Varid{ask};\Varid{c'}\leftarrow \Varid{lift}\;\Varid{ask};\Varid{a}\leftarrow (\Varid{f}\;\Varid{c})\mathord{.}\get{L};(\Varid{f}\;\Varid{c'})\mathord{.}\set{L}\;\Varid{a}\mskip1.5mu\}{}\<[E]%
\\
\>[B]{}\mathrel{=}{}\<[BE]%
\>[6]{}\mbox{\commentbegin  \ensuremath{\Varid{lift}\;\Varid{ask}} is copyable  \commentend}{}\<[E]%
\\
\>[B]{}\hsindent{4}{}\<[4]%
\>[4]{}\mathbf{do}\;\{\mskip1.5mu {}\<[10]%
\>[10]{}\Varid{c}\leftarrow \Varid{lift}\;\Varid{ask};{}\<[26]%
\>[26]{}\Varid{a}\leftarrow (\Varid{f}\;\Varid{c})\mathord{.}\get{L};(\Varid{f}\;\Varid{c'})\mathord{.}\set{L}\;\Varid{a}\mskip1.5mu\}{}\<[E]%
\\
\>[B]{}\mathrel{=}{}\<[BE]%
\>[6]{}\mbox{\commentbegin  \ensuremath{\mathrm{(G_LS_L)}}  \commentend}{}\<[E]%
\\
\>[B]{}\hsindent{4}{}\<[4]%
\>[4]{}\mathbf{do}\;\{\mskip1.5mu {}\<[10]%
\>[10]{}\Varid{c}\leftarrow \Varid{lift}\;\Varid{ask};\Varid{return}\;()\mskip1.5mu\}{}\<[E]%
\\
\>[B]{}\mathrel{=}{}\<[BE]%
\>[6]{}\mbox{\commentbegin  \ensuremath{\Varid{lift}\;\Varid{ask}} is discardable  \commentend}{}\<[E]%
\\
\>[B]{}\hsindent{4}{}\<[4]%
\>[4]{}\Varid{return}\;(){}\<[E]%
\ColumnHook
\end{hscode}\resethooks

To show \ensuremath{\mathrm{(S_LG_L)}}:
\begin{hscode}\SaveRestoreHook
\column{B}{@{}>{\hspre}c<{\hspost}@{}}%
\column{BE}{@{}l@{}}%
\column{4}{@{}>{\hspre}l<{\hspost}@{}}%
\column{6}{@{}>{\hspre}l<{\hspost}@{}}%
\column{10}{@{}>{\hspre}l<{\hspost}@{}}%
\column{28}{@{}>{\hspre}l<{\hspost}@{}}%
\column{E}{@{}>{\hspre}l<{\hspost}@{}}%
\>[4]{}\mathbf{do}\;\{\mskip1.5mu {}\<[10]%
\>[10]{}(\Varid{switch}\;\Varid{f})\mathord{.}{}\<[28]%
\>[28]{}\set{L}\;\Varid{a};(\Varid{switch}\;\Varid{f})\mathord{.}\get{L}\mskip1.5mu\}{}\<[E]%
\\
\>[B]{}\mathrel{=}{}\<[BE]%
\>[6]{}\mbox{\commentbegin  Definition  \commentend}{}\<[E]%
\\
\>[B]{}\hsindent{4}{}\<[4]%
\>[4]{}\mathbf{do}\;\{\mskip1.5mu {}\<[10]%
\>[10]{}\Varid{c}\leftarrow \Varid{lift}\;\Varid{ask};(\Varid{f}\;\Varid{c})\mathord{.}\set{L}\;\Varid{a};\Varid{c'}\leftarrow \Varid{lift}\;\Varid{ask};(\Varid{f}\;\Varid{c'})\mathord{.}\get{L}\mskip1.5mu\}{}\<[E]%
\\
\>[B]{}\mathrel{=}{}\<[BE]%
\>[6]{}\mbox{\commentbegin  \ensuremath{\Varid{lift}\;\Varid{ask}} commutes with any \ensuremath{(\Conid{Reader}\;\gamma)}-pure operation  \commentend}{}\<[E]%
\\
\>[B]{}\hsindent{4}{}\<[4]%
\>[4]{}\mathbf{do}\;\{\mskip1.5mu {}\<[10]%
\>[10]{}\Varid{c}\leftarrow \Varid{lift}\;\Varid{ask};\Varid{c'}\leftarrow \Varid{lift}\;\Varid{ask};(\Varid{f}\;\Varid{c})\mathord{.}\set{L}\;\Varid{a};(\Varid{f}\;\Varid{c'})\mathord{.}\get{L}\mskip1.5mu\}{}\<[E]%
\\
\>[B]{}\mathrel{=}{}\<[BE]%
\>[6]{}\mbox{\commentbegin  \ensuremath{\Varid{lift}\;\Varid{ask}} is copyable  \commentend}{}\<[E]%
\\
\>[B]{}\hsindent{4}{}\<[4]%
\>[4]{}\mathbf{do}\;\{\mskip1.5mu {}\<[10]%
\>[10]{}\Varid{c}\leftarrow \Varid{lift}\;\Varid{ask};(\Varid{f}\;\Varid{c})\mathord{.}\set{L}\;\Varid{a};(\Varid{f}\;\Varid{c})\mathord{.}\get{L}\mskip1.5mu\}{}\<[E]%
\\
\>[B]{}\mathrel{=}{}\<[BE]%
\>[6]{}\mbox{\commentbegin  \ensuremath{\mathrm{(S_LG_L)}}  \commentend}{}\<[E]%
\\
\>[B]{}\hsindent{4}{}\<[4]%
\>[4]{}\mathbf{do}\;\{\mskip1.5mu {}\<[10]%
\>[10]{}\Varid{c}\leftarrow \Varid{lift}\;\Varid{ask};(\Varid{f}\;\Varid{c})\mathord{.}\set{L}\;\Varid{a};\Varid{return}\;\Varid{a}\mskip1.5mu\}{}\<[E]%
\\
\>[B]{}\mathrel{=}{}\<[BE]%
\>[6]{}\mbox{\commentbegin  Definition  \commentend}{}\<[E]%
\\
\>[B]{}\hsindent{4}{}\<[4]%
\>[4]{}\mathbf{do}\;\{\mskip1.5mu {}\<[10]%
\>[10]{}(\Varid{switch}\;\Varid{f})\mathord{.}\set{L}\;\Varid{a};\Varid{return}\;\Varid{a}\mskip1.5mu\}{}\<[E]%
\ColumnHook
\end{hscode}\resethooks
\endswithdisplay
\end{proof}

\restatableProposition{prop:partial-wb}
\begin{prop:partial-wb}
Suppose \ensuremath{\Varid{f}\mathbin{::}\Conid{A}\hsarrow{\rightarrow }{\mathpunct{.}}\Conid{Maybe}\;\Conid{B}} and \ensuremath{\Varid{g}\mathbin{::}\Conid{B}\hsarrow{\rightarrow }{\mathpunct{.}}\Conid{Maybe}\;\Conid{A}} are partial inverses; that is, for any \ensuremath{\Varid{a},\Varid{b}} we have \ensuremath{\Varid{f}\;\Varid{a}\mathrel{=}\Conid{Just}\;\Varid{b}} if and only if \ensuremath{\Varid{g}\;\Varid{b}\mathrel{=}\Conid{Just}\;\Varid{a}}, and that \ensuremath{\Varid{err}} is a zero element for monad \ensuremath{\Conid{T}}.   Then \ensuremath{\Varid{partialBX}\;\Varid{err}\;\Varid{f}\;\Varid{g}\mathbin{::}\Conid{StateTBX}\;\Conid{T}\;\Conid{S}\;\Conid{A}\;\Conid{B}} is well-behaved, where \ensuremath{\Conid{S}\mathrel{=}\{\mskip1.5mu (\Varid{a},\Varid{b})\mid \Varid{f}\;\Varid{a}\mathrel{=}\Conid{Just}\;\Varid{b}\mathrel{\wedge}\Varid{g}\;\Varid{b}\mathrel{=}\Conid{Just}\;\Varid{a}\mskip1.5mu\}}.
\end{prop:partial-wb}
\begin{proof}
  Suppose \ensuremath{\Varid{f},\Varid{g}} are partial inverses and \ensuremath{\Varid{err}} a zero element of \ensuremath{\Conid{T}},
  and let 
\begin{hscode}\SaveRestoreHook
\column{B}{@{}>{\hspre}l<{\hspost}@{}}%
\column{E}{@{}>{\hspre}l<{\hspost}@{}}%
\>[B]{}bx\mathrel{=}\Varid{partialBX}\;\Varid{err}\;\Varid{f}\;\Varid{g}\mathbin{::}\Conid{StateTBX}\;\Conid{T}\;\Conid{S}\;\Conid{A}\;\Conid{B}{}\<[E]%
\ColumnHook
\end{hscode}\resethooks
The laws
  \ensuremath{\mathrm{(G_LG_L)}}, \ensuremath{\mathrm{(G_RG_R)}} and \ensuremath{\mathrm{(G_LG_R)}} are immediate because the
  \ensuremath{\get{L}} and \ensuremath{\get{R}} operations are clearly \ensuremath{\Conid{T}}-pure.  It is also straightforward to verify that the operations maintain the invariant that the states \ensuremath{(\Varid{a},\Varid{b})} satisfy \ensuremath{\Varid{f}\;\Varid{a}\mathrel{=}\Conid{Just}\;\Varid{b}\mathrel{\wedge}\Varid{g}\;\Varid{b}\mathrel{=}\Conid{Just}\;\Varid{a}}, because the get operations do not change the state and the set operations either yield an error, or set the state to \ensuremath{(\Varid{a},\Varid{b})} where \ensuremath{\Varid{f}\;\Varid{a}\mathrel{=}\Conid{Just}\;\Varid{b}} (and therefore \ensuremath{\Varid{g}\;\Varid{b}\mathrel{=}\Conid{Just}\;\Varid{a}}, since \ensuremath{\Varid{f},\Varid{g}} are partial inverses).

For \ensuremath{\mathrm{(G_LS_L)}}, we proceed as follows:
\begin{hscode}\SaveRestoreHook
\column{B}{@{}>{\hspre}c<{\hspost}@{}}%
\column{BE}{@{}l@{}}%
\column{4}{@{}>{\hspre}l<{\hspost}@{}}%
\column{6}{@{}>{\hspre}l<{\hspost}@{}}%
\column{10}{@{}>{\hspre}l<{\hspost}@{}}%
\column{12}{@{}>{\hspre}l<{\hspost}@{}}%
\column{E}{@{}>{\hspre}l<{\hspost}@{}}%
\>[4]{}\mathbf{do}\;\{\mskip1.5mu {}\<[10]%
\>[10]{}\Varid{a}\leftarrow bx\mathord{.}\get{L};bx\mathord{.}\set{L}\;\Varid{a}\mskip1.5mu\}{}\<[E]%
\\
\>[B]{}\mathrel{=}{}\<[BE]%
\>[6]{}\mbox{\commentbegin  definition of \ensuremath{bx}, \ensuremath{\Varid{gets}}, \ensuremath{\Varid{fst}}, monad unit  \commentend}{}\<[E]%
\\
\>[B]{}\hsindent{4}{}\<[4]%
\>[4]{}\mathbf{do}\;\{\mskip1.5mu {}\<[10]%
\>[10]{}(\Varid{a},\Varid{b})\leftarrow \Varid{get};{}\<[E]%
\\
\>[10]{}\mathbf{case}\;\Varid{f}\;\Varid{a}\;\mathbf{of}{}\<[E]%
\\
\>[10]{}\hsindent{2}{}\<[12]%
\>[12]{}\Conid{Just}\;\Varid{b'}\hsarrow{\rightarrow }{\mathpunct{.}}\Varid{set}\;(\Varid{a},\Varid{b'}){}\<[E]%
\\
\>[10]{}\hsindent{2}{}\<[12]%
\>[12]{}\Conid{Nothing}\hsarrow{\rightarrow }{\mathpunct{.}}\Varid{lift}\;\Varid{err}\mskip1.5mu\}{}\<[E]%
\\
\>[B]{}\mathrel{=}{}\<[BE]%
\>[6]{}\mbox{\commentbegin  The state \ensuremath{(\Varid{a},\Varid{b})} is consistent, so \ensuremath{\Varid{f}\;\Varid{a}\mathrel{=}\Conid{Just}\;\Varid{b}}  \commentend}{}\<[E]%
\\
\>[B]{}\hsindent{4}{}\<[4]%
\>[4]{}\mathbf{do}\;\{\mskip1.5mu {}\<[10]%
\>[10]{}\Varid{a}\leftarrow \Varid{gets}\;\Varid{fst};{}\<[E]%
\\
\>[10]{}\mathbf{case}\;\Conid{Just}\;\Varid{b}\;\mathbf{of}{}\<[E]%
\\
\>[10]{}\hsindent{2}{}\<[12]%
\>[12]{}\Conid{Just}\;\Varid{b'}\hsarrow{\rightarrow }{\mathpunct{.}}\Varid{set}\;(\Varid{a},\Varid{b'}){}\<[E]%
\\
\>[10]{}\hsindent{2}{}\<[12]%
\>[12]{}\Conid{Nothing}\hsarrow{\rightarrow }{\mathpunct{.}}\Varid{lift}\;\Varid{err}\mskip1.5mu\}{}\<[E]%
\\
\>[B]{}\mathrel{=}{}\<[BE]%
\>[6]{}\mbox{\commentbegin  \ensuremath{\mathbf{case}} simplification  \commentend}{}\<[E]%
\\
\>[B]{}\hsindent{4}{}\<[4]%
\>[4]{}\mathbf{do}\;\{\mskip1.5mu {}\<[10]%
\>[10]{}(\Varid{a},\Varid{b})\leftarrow \Varid{get};{}\<[E]%
\\
\>[10]{}\Varid{set}\;(\Varid{a},\Varid{b})\mskip1.5mu\}{}\<[E]%
\\
\>[B]{}\mathrel{=}{}\<[BE]%
\>[6]{}\mbox{\commentbegin  \ensuremath{\mathrm{(GS)}}  \commentend}{}\<[E]%
\\
\>[B]{}\hsindent{4}{}\<[4]%
\>[4]{}\Varid{return}\;(){}\<[E]%
\ColumnHook
\end{hscode}\resethooks

For \ensuremath{\mathrm{(S_LG_L)}}, there are two cases.  If \ensuremath{\Varid{f}\;\Varid{a}\mathrel{=}\Conid{Nothing}}, we reason as follows:
\begin{hscode}\SaveRestoreHook
\column{B}{@{}>{\hspre}c<{\hspost}@{}}%
\column{BE}{@{}l@{}}%
\column{4}{@{}>{\hspre}l<{\hspost}@{}}%
\column{6}{@{}>{\hspre}l<{\hspost}@{}}%
\column{10}{@{}>{\hspre}l<{\hspost}@{}}%
\column{12}{@{}>{\hspre}l<{\hspost}@{}}%
\column{E}{@{}>{\hspre}l<{\hspost}@{}}%
\>[4]{}\mathbf{do}\;\{\mskip1.5mu bx\mathord{.}\set{L}\;\Varid{a};bx\mathord{.}\get{L}\mskip1.5mu\}{}\<[E]%
\\
\>[B]{}\mathrel{=}{}\<[BE]%
\>[6]{}\mbox{\commentbegin  Definition of \ensuremath{\get{L}}, \ensuremath{\set{L}}, \ensuremath{\Varid{gets}\;\Varid{fst}}  \commentend}{}\<[E]%
\\
\>[B]{}\hsindent{4}{}\<[4]%
\>[4]{}\mathbf{do}\;\{\mskip1.5mu {}\<[10]%
\>[10]{}\mathbf{case}\;\Varid{f}\;\Varid{a}\;\mathbf{of}{}\<[E]%
\\
\>[10]{}\hsindent{2}{}\<[12]%
\>[12]{}\Conid{Just}\;\Varid{b'}\hsarrow{\rightarrow }{\mathpunct{.}}\Varid{set}\;(\Varid{a},\Varid{b'}){}\<[E]%
\\
\>[10]{}\hsindent{2}{}\<[12]%
\>[12]{}\Conid{Nothing}\hsarrow{\rightarrow }{\mathpunct{.}}\Varid{lift}\;\Varid{err};{}\<[E]%
\\
\>[10]{}(\Varid{a'},\Varid{b'})\leftarrow \Varid{get};\Varid{return}\;\Varid{a'}\mskip1.5mu\}{}\<[E]%
\\
\>[B]{}\mathrel{=}{}\<[BE]%
\>[6]{}\mbox{\commentbegin  \ensuremath{\Varid{f}\;\Varid{a}\mathrel{=}\Conid{Nothing}}, simplify \ensuremath{\mathbf{case}}  \commentend}{}\<[E]%
\\
\>[B]{}\hsindent{4}{}\<[4]%
\>[4]{}\mathbf{do}\;\{\mskip1.5mu {}\<[10]%
\>[10]{}\Varid{lift}\;\Varid{err};{}\<[E]%
\\
\>[10]{}(\Varid{a'},\Varid{b'})\leftarrow \Varid{get};\Varid{return}\;\Varid{a'}\mskip1.5mu\}{}\<[E]%
\\
\>[B]{}\mathrel{=}{}\<[BE]%
\>[6]{}\mbox{\commentbegin  \ensuremath{\Varid{err}} is a zero element  \commentend}{}\<[E]%
\\
\>[B]{}\hsindent{4}{}\<[4]%
\>[4]{}\mathbf{do}\;\{\mskip1.5mu {}\<[10]%
\>[10]{}\Varid{lift}\;\Varid{err}\mskip1.5mu\}{}\<[E]%
\\
\>[B]{}\mathrel{=}{}\<[BE]%
\>[6]{}\mbox{\commentbegin  \ensuremath{\Varid{err}} is a zero element  \commentend}{}\<[E]%
\\
\>[B]{}\hsindent{4}{}\<[4]%
\>[4]{}\mathbf{do}\;\{\mskip1.5mu {}\<[10]%
\>[10]{}\Varid{lift}\;\Varid{err};\Varid{return}\;\Varid{a}\mskip1.5mu\}{}\<[E]%
\\
\>[B]{}\mathrel{=}{}\<[BE]%
\>[6]{}\mbox{\commentbegin  reverse previous steps  \commentend}{}\<[E]%
\\
\>[B]{}\hsindent{4}{}\<[4]%
\>[4]{}\mathbf{do}\;\{\mskip1.5mu {}\<[10]%
\>[10]{}\mathbf{case}\;\Varid{f}\;\Varid{a}\;\mathbf{of}{}\<[E]%
\\
\>[10]{}\hsindent{2}{}\<[12]%
\>[12]{}\Conid{Just}\;\Varid{b'}\hsarrow{\rightarrow }{\mathpunct{.}}\Varid{set}\;(\Varid{a},\Varid{b'}){}\<[E]%
\\
\>[10]{}\hsindent{2}{}\<[12]%
\>[12]{}\Conid{Nothing}\hsarrow{\rightarrow }{\mathpunct{.}}\Varid{lift}\;\Varid{err};{}\<[E]%
\\
\>[10]{}\Varid{return}\;\Varid{a'}\mskip1.5mu\}{}\<[E]%
\\
\>[B]{}\mathrel{=}{}\<[BE]%
\>[6]{}\mbox{\commentbegin  definition  \commentend}{}\<[E]%
\\
\>[B]{}\hsindent{4}{}\<[4]%
\>[4]{}\mathbf{do}\;\{\mskip1.5mu bx\mathord{.}\set{L}\;\Varid{a};\Varid{return}\;\Varid{a}\mskip1.5mu\}{}\<[E]%
\ColumnHook
\end{hscode}\resethooks

On the other hand, if \ensuremath{\Varid{f}\;\Varid{a}\mathrel{=}\Conid{Just}\;\Varid{b}} for some \ensuremath{\Varid{b}} then we reason as follows:
\begin{hscode}\SaveRestoreHook
\column{B}{@{}>{\hspre}c<{\hspost}@{}}%
\column{BE}{@{}l@{}}%
\column{4}{@{}>{\hspre}l<{\hspost}@{}}%
\column{6}{@{}>{\hspre}l<{\hspost}@{}}%
\column{10}{@{}>{\hspre}l<{\hspost}@{}}%
\column{12}{@{}>{\hspre}l<{\hspost}@{}}%
\column{E}{@{}>{\hspre}l<{\hspost}@{}}%
\>[4]{}\mathbf{do}\;\{\mskip1.5mu bx\mathord{.}\set{L}\;\Varid{a};bx\mathord{.}\get{L}\mskip1.5mu\}{}\<[E]%
\\
\>[B]{}\mathrel{=}{}\<[BE]%
\>[6]{}\mbox{\commentbegin  Definition of \ensuremath{\get{L}}, \ensuremath{\set{L}}, \ensuremath{\Varid{gets}\;\Varid{fst}}  \commentend}{}\<[E]%
\\
\>[B]{}\hsindent{4}{}\<[4]%
\>[4]{}\mathbf{do}\;\{\mskip1.5mu {}\<[10]%
\>[10]{}\mathbf{case}\;\Varid{f}\;\Varid{a}\;\mathbf{of}{}\<[E]%
\\
\>[10]{}\hsindent{2}{}\<[12]%
\>[12]{}\Conid{Just}\;\Varid{b'}\hsarrow{\rightarrow }{\mathpunct{.}}\Varid{set}\;(\Varid{a},\Varid{b'}){}\<[E]%
\\
\>[10]{}\hsindent{2}{}\<[12]%
\>[12]{}\Conid{Nothing}\hsarrow{\rightarrow }{\mathpunct{.}}\Varid{lift}\;\Varid{err};{}\<[E]%
\\
\>[10]{}(\Varid{a'},\Varid{b'})\leftarrow \Varid{get};\Varid{return}\;\Varid{a'}\mskip1.5mu\}{}\<[E]%
\\
\>[B]{}\mathrel{=}{}\<[BE]%
\>[6]{}\mbox{\commentbegin  \ensuremath{\Varid{f}\;\Varid{a}\mathrel{=}\Conid{Just}\;\Varid{b}}, simplify \ensuremath{\mathbf{case}}  \commentend}{}\<[E]%
\\
\>[B]{}\hsindent{4}{}\<[4]%
\>[4]{}\mathbf{do}\;\{\mskip1.5mu {}\<[10]%
\>[10]{}\Varid{set}\;(\Varid{a},\Varid{b});{}\<[E]%
\\
\>[10]{}(\Varid{a'},\Varid{b'})\leftarrow \Varid{get};\Varid{return}\;\Varid{a'}\mskip1.5mu\}{}\<[E]%
\\
\>[B]{}\mathrel{=}{}\<[BE]%
\>[6]{}\mbox{\commentbegin  \ensuremath{\mathrm{(SG)}}  \commentend}{}\<[E]%
\\
\>[B]{}\hsindent{4}{}\<[4]%
\>[4]{}\mathbf{do}\;\{\mskip1.5mu {}\<[10]%
\>[10]{}\Varid{set}\;(\Varid{a},\Varid{b});{}\<[E]%
\\
\>[10]{}\Varid{return}\;\Varid{a'}\mskip1.5mu\}{}\<[E]%
\\
\>[B]{}\mathrel{=}{}\<[BE]%
\>[6]{}\mbox{\commentbegin  reverse previous steps  \commentend}{}\<[E]%
\\
\>[B]{}\hsindent{4}{}\<[4]%
\>[4]{}\mathbf{do}\;\{\mskip1.5mu {}\<[10]%
\>[10]{}\mathbf{case}\;\Varid{f}\;\Varid{a}\;\mathbf{of}{}\<[E]%
\\
\>[10]{}\hsindent{2}{}\<[12]%
\>[12]{}\Conid{Just}\;\Varid{b'}\hsarrow{\rightarrow }{\mathpunct{.}}\Varid{set}\;(\Varid{a},\Varid{b'}){}\<[E]%
\\
\>[10]{}\hsindent{2}{}\<[12]%
\>[12]{}\Conid{Nothing}\hsarrow{\rightarrow }{\mathpunct{.}}\Varid{lift}\;\Varid{err};{}\<[E]%
\\
\>[10]{}\Varid{return}\;\Varid{a'}\mskip1.5mu\}{}\<[E]%
\\
\>[B]{}\mathrel{=}{}\<[BE]%
\>[6]{}\mbox{\commentbegin  definition  \commentend}{}\<[E]%
\\
\>[B]{}\hsindent{4}{}\<[4]%
\>[4]{}\mathbf{do}\;\{\mskip1.5mu bx\mathord{.}\set{L}\;\Varid{a};\Varid{return}\;\Varid{a}\mskip1.5mu\}{}\<[E]%
\ColumnHook
\end{hscode}\resethooks
\endswithdisplay
\end{proof}

\restatableProposition{prop:nondet-wb}
\begin{prop:nondet-wb}
Assume that \ensuremath{\Varid{ok}}, \ensuremath{\Varid{as}} and \ensuremath{\Varid{bs}} satisfy the following equations:
\begin{hscode}\SaveRestoreHook
\column{B}{@{}>{\hspre}l<{\hspost}@{}}%
\column{E}{@{}>{\hspre}l<{\hspost}@{}}%
\>[B]{}\Varid{a}\in\Varid{as}\;\Varid{b}\Rightarrow \Varid{ok}\;\Varid{a}\;\Varid{b}{}\<[E]%
\\
\>[B]{}\Varid{b}\in\Varid{bs}\;\Varid{a}\Rightarrow \Varid{ok}\;\Varid{a}\;\Varid{b}{}\<[E]%
\ColumnHook
\end{hscode}\resethooks
Then \ensuremath{\Varid{nondetBX}\;\Varid{ok}\;\Varid{bs}\;\Varid{as}} is well-behaved.
\end{prop:nondet-wb}
\begin{proof}
  For well-definedness on the state space \ensuremath{\{\mskip1.5mu (\Varid{a},\Varid{b})\mid \Varid{ok}\;\Varid{a}\;\Varid{b}\mskip1.5mu\}}, we reason as follows.  Suppose \ensuremath{\Varid{ok}\;\Varid{a}\;\Varid{b}} holds.  Then clearly, after doing a \ensuremath{\Varid{get}} the state is unchanged and this continues to hold.  After a \ensuremath{\Varid{set}}, if the new value of \ensuremath{\Varid{a'}} satisfies \ensuremath{\Varid{ok}\;\Varid{a'}\;\Varid{b}} then the updated state will be \ensuremath{(\Varid{a'},\Varid{b})}, so the invariant is maintained.  Otherwise, the updated state will be \ensuremath{(\Varid{a'},\Varid{b'})} where \ensuremath{\Varid{b'}\in\Varid{bs}\;\Varid{a'}}, so by assumption \ensuremath{\Varid{ok}\;\Varid{a'}\;\Varid{b'}}  holds.

For \ensuremath{\mathrm{(G_LS_L)}} we reason as follows:
\begin{hscode}\SaveRestoreHook
\column{B}{@{}>{\hspre}c<{\hspost}@{}}%
\column{BE}{@{}l@{}}%
\column{5}{@{}>{\hspre}l<{\hspost}@{}}%
\column{7}{@{}>{\hspre}l<{\hspost}@{}}%
\column{11}{@{}>{\hspre}l<{\hspost}@{}}%
\column{E}{@{}>{\hspre}l<{\hspost}@{}}%
\>[5]{}\mathbf{do}\;\{\mskip1.5mu {}\<[11]%
\>[11]{}\Varid{a}\leftarrow (\Varid{nondetBX}\;\Varid{as}\;\Varid{bs})\mathord{.}\get{L};(\Varid{nondetBX}\;\Varid{as}\;\Varid{bs})\mathord{.}\set{L}\;\Varid{a}\mskip1.5mu\}{}\<[E]%
\\
\>[B]{}\mathrel{=}{}\<[BE]%
\>[7]{}\mbox{\commentbegin  definition  \commentend}{}\<[E]%
\\
\>[B]{}\hsindent{5}{}\<[5]%
\>[5]{}\mathbf{do}\;\{\mskip1.5mu {}\<[11]%
\>[11]{}(\Varid{a},\Varid{b})\leftarrow \Varid{get};\mathbf{let}\;\Varid{a'}\mathrel{=}\Varid{a};(\Varid{a''},\Varid{b''})\leftarrow \Varid{get};{}\<[E]%
\\
\>[11]{}\mathbf{if}\;\Varid{ok}\;\Varid{a'}\;\Varid{b''}\;\mathbf{then}\;\Varid{set}\;(\Varid{a'},\Varid{b''}){}\<[E]%
\\
\>[11]{}\mathbf{else}\;\mathbf{do}\;\{\mskip1.5mu \Varid{b'}\leftarrow \Varid{lift}\;(\Varid{bs}\;\Varid{a'});\Varid{set}\;(\Varid{a'},\Varid{b'})\mskip1.5mu\}\mskip1.5mu\}{}\<[E]%
\\
\>[B]{}\mathrel{=}{}\<[BE]%
\>[7]{}\mbox{\commentbegin  inline \ensuremath{\mathbf{let}}  \commentend}{}\<[E]%
\\
\>[B]{}\hsindent{5}{}\<[5]%
\>[5]{}\mathbf{do}\;\{\mskip1.5mu {}\<[11]%
\>[11]{}(\Varid{a},\Varid{b})\leftarrow \Varid{get};(\Varid{a''},\Varid{b''})\leftarrow \Varid{get};{}\<[E]%
\\
\>[11]{}\mathbf{if}\;\Varid{ok}\;\Varid{a}\;\Varid{b''}\;\mathbf{then}\;\Varid{set}\;(\Varid{a'},\Varid{b''}){}\<[E]%
\\
\>[11]{}\mathbf{else}\;\mathbf{do}\;\{\mskip1.5mu \Varid{b'}\leftarrow \Varid{lift}\;(\Varid{bs}\;\Varid{a});\Varid{set}\;(\Varid{a},\Varid{b'})\mskip1.5mu\}\mskip1.5mu\}{}\<[E]%
\\
\>[B]{}\mathrel{=}{}\<[BE]%
\>[7]{}\mbox{\commentbegin  \ensuremath{\mathrm{(GG)}}  \commentend}{}\<[E]%
\\
\>[B]{}\hsindent{5}{}\<[5]%
\>[5]{}\mathbf{do}\;\{\mskip1.5mu {}\<[11]%
\>[11]{}(\Varid{a},\Varid{b})\leftarrow \Varid{get};{}\<[E]%
\\
\>[11]{}\mathbf{if}\;\Varid{ok}\;\Varid{a}\;\Varid{b}\;\mathbf{then}\;\Varid{set}\;(\Varid{a},\Varid{b}){}\<[E]%
\\
\>[11]{}\mathbf{else}\;\mathbf{do}\;\{\mskip1.5mu \Varid{b'}\leftarrow \Varid{lift}\;(\Varid{bs}\;\Varid{a});\Varid{set}\;(\Varid{a},\Varid{b'})\mskip1.5mu\}\mskip1.5mu\}{}\<[E]%
\\
\>[B]{}\mathrel{=}{}\<[BE]%
\>[7]{}\mbox{\commentbegin  \ensuremath{\Varid{ok}\;\Varid{a}\;\Varid{b}\equals\Conid{True}}  \commentend}{}\<[E]%
\\
\>[B]{}\hsindent{5}{}\<[5]%
\>[5]{}\mathbf{do}\;\{\mskip1.5mu {}\<[11]%
\>[11]{}(\Varid{a},\Varid{b})\leftarrow \Varid{get};\Varid{set}\;(\Varid{a},\Varid{b})\mskip1.5mu\}{}\<[E]%
\\
\>[B]{}\mathrel{=}{}\<[BE]%
\>[7]{}\mbox{\commentbegin  \ensuremath{\mathrm{(GS)}}  \commentend}{}\<[E]%
\\
\>[B]{}\hsindent{5}{}\<[5]%
\>[5]{}\Varid{return}\;(){}\<[E]%
\ColumnHook
\end{hscode}\resethooks
Note that we rely on the invariant (not explicit in the type) that the state values \ensuremath{(\Varid{a},\Varid{b})} satisfy \ensuremath{\Varid{ok}\;\Varid{a}\;\Varid{b}\mathrel{=}\Conid{True}}.

For \ensuremath{\mathrm{(S_LG_L)}} the reasoning is as follows:
\begin{hscode}\SaveRestoreHook
\column{B}{@{}>{\hspre}c<{\hspost}@{}}%
\column{BE}{@{}l@{}}%
\column{4}{@{}>{\hspre}l<{\hspost}@{}}%
\column{6}{@{}>{\hspre}l<{\hspost}@{}}%
\column{9}{@{}>{\hspre}l<{\hspost}@{}}%
\column{E}{@{}>{\hspre}l<{\hspost}@{}}%
\>[4]{}\mathbf{do}\;\{\mskip1.5mu {}\<[9]%
\>[9]{}bx\mathord{.}\set{L}\;\Varid{a};bx\mathord{.}\get{L}\mskip1.5mu\}{}\<[E]%
\\
\>[B]{}\mathrel{=}{}\<[BE]%
\>[6]{}\mbox{\commentbegin  Definition  \commentend}{}\<[E]%
\\
\>[B]{}\hsindent{4}{}\<[4]%
\>[4]{}\mathbf{do}\;\{\mskip1.5mu {}\<[9]%
\>[9]{}(\Varid{a},\Varid{b})\leftarrow \Varid{get}{}\<[E]%
\\
\>[9]{}\mathbf{if}\;\Varid{ok}\;\Varid{a'}\;\Varid{b}\;\mathbf{then}\;\Varid{set}\;(\Varid{a'},\Varid{b}){}\<[E]%
\\
\>[9]{}\mathbf{else}\;\mathbf{do}\;\{\mskip1.5mu \Varid{b'}\leftarrow \Varid{lift}\;(\Varid{bs}\;\Varid{a'});\Varid{set}\;(\Varid{a'},\Varid{b'})\mskip1.5mu\};{}\<[E]%
\\
\>[9]{}(\Varid{a''},\Varid{b''})\leftarrow \Varid{get};\Varid{return}\;\Varid{a''}\mskip1.5mu\}{}\<[E]%
\\
\>[B]{}\mathrel{=}{}\<[BE]%
\>[6]{}\mbox{\commentbegin  \ensuremath{\mathrm{(SG)}}  \commentend}{}\<[E]%
\\
\>[B]{}\hsindent{4}{}\<[4]%
\>[4]{}\mathbf{do}\;\{\mskip1.5mu {}\<[9]%
\>[9]{}(\Varid{a},\Varid{b})\leftarrow \Varid{get}{}\<[E]%
\\
\>[9]{}\mathbf{if}\;\Varid{ok}\;\Varid{a'}\;\Varid{b}\;\mathbf{then}\;\Varid{set}\;(\Varid{a'},\Varid{b}){}\<[E]%
\\
\>[9]{}\mathbf{else}\;\mathbf{do}\;\{\mskip1.5mu \Varid{b'}\leftarrow \Varid{lift}\;(\Varid{bs}\;\Varid{a'});\Varid{set}\;(\Varid{a'},\Varid{b'})\mskip1.5mu\};{}\<[E]%
\\
\>[9]{}\Varid{return}\;\Varid{a'}\mskip1.5mu\}{}\<[E]%
\ColumnHook
\end{hscode}\resethooks
 \endswithdisplay
\end{proof}

\restatableProposition{prop:signal-wb}
\begin{prop:signal-wb}
   If \ensuremath{\Conid{A}} and \ensuremath{\Conid{B}} are types equipped with a correct notion of equality
  (so \ensuremath{\Varid{a}\mathrel{=}\Varid{b}} if and only if \ensuremath{(\Varid{a}\equals\Varid{b})\mathrel{=}\Conid{True}}), and \ensuremath{bx\mathbin{::}\Conid{StateTBX}\;\Conid{T}\;\Conid{S}\;\Conid{A}\;\Conid{B}} then \ensuremath{\Varid{signalBX}\;\Varid{sigA}\;\Varid{sigB}\;bx\mathbin{::}\Conid{StateTBX}\;\Conid{T}\;\Conid{S}\;\Conid{A}\;\Conid{B}} is well-behaved.
\end{prop:signal-wb}
\begin{proof}
First, observe that the \ensuremath{\Varid{get}} operations are defined so as to
obviously be \ensuremath{\Conid{T}}-pure, and therefore \ensuremath{\mathrm{(G_LG_R)}} holds.  Let \ensuremath{\Varid{bx'}\mathrel{=}\Varid{signalBX}\;\Varid{sigA}\;\Varid{sigB}\;bx}.

For \ensuremath{\mathrm{(G_LS_L)}}, we proceed as follows::
\begin{hscode}\SaveRestoreHook
\column{B}{@{}>{\hspre}c<{\hspost}@{}}%
\column{BE}{@{}l@{}}%
\column{4}{@{}>{\hspre}l<{\hspost}@{}}%
\column{6}{@{}>{\hspre}l<{\hspost}@{}}%
\column{8}{@{}>{\hspre}l<{\hspost}@{}}%
\column{10}{@{}>{\hspre}l<{\hspost}@{}}%
\column{E}{@{}>{\hspre}l<{\hspost}@{}}%
\>[4]{}\mathbf{do}\;\{\mskip1.5mu {}\<[10]%
\>[10]{}\Varid{a}\leftarrow \Varid{bx'}\mathord{.}\get{L};\Varid{bx'}\mathord{.}\set{L}\;\Varid{a}\mskip1.5mu\}{}\<[E]%
\\
\>[B]{}\mathrel{=}{}\<[BE]%
\>[6]{}\mbox{\commentbegin  Definitions  \commentend}{}\<[E]%
\\
\>[B]{}\hsindent{4}{}\<[4]%
\>[4]{}\mathbf{do}\;\{\mskip1.5mu {}\<[10]%
\>[10]{}\Varid{a}\leftarrow bx\mathord{.}\get{L};\Varid{a'}\leftarrow bx\mathord{.}\get{L};bx\mathord{.}\set{L}\;\Varid{a};{}\<[E]%
\\
\>[4]{}\hsindent{4}{}\<[8]%
\>[8]{}\Varid{lift}\;(\mathbf{if}\;\Varid{a}\notequals\Varid{a'}\;\mathbf{then}\;\Varid{sigA}\;\Varid{a'}\;\mathbf{else}\;\Varid{return}\;())\mskip1.5mu\}{}\<[E]%
\\
\>[B]{}\mathrel{=}{}\<[BE]%
\>[6]{}\mbox{\commentbegin  \ensuremath{bx\mathord{.}\get{L}} copyable  \commentend}{}\<[E]%
\\
\>[B]{}\hsindent{4}{}\<[4]%
\>[4]{}\mathbf{do}\;\{\mskip1.5mu {}\<[10]%
\>[10]{}\Varid{a}\leftarrow bx\mathord{.}\get{L};bx\mathord{.}\set{L}\;\Varid{a};{}\<[E]%
\\
\>[4]{}\hsindent{4}{}\<[8]%
\>[8]{}\Varid{lift}\;(\mathbf{if}\;\Varid{a}\notequals\Varid{a'}\;\mathbf{then}\;\Varid{sigA}\;\Varid{a'}\;\mathbf{else}\;\Varid{return}\;())\mskip1.5mu\}{}\<[E]%
\\
\>[B]{}\mathrel{=}{}\<[BE]%
\>[6]{}\mbox{\commentbegin  \ensuremath{\Varid{a}\notequals\Varid{a}\equals\Conid{False}}  \commentend}{}\<[E]%
\\
\>[B]{}\hsindent{4}{}\<[4]%
\>[4]{}\mathbf{do}\;\{\mskip1.5mu {}\<[10]%
\>[10]{}\Varid{a}\leftarrow bx\mathord{.}\get{L};bx\mathord{.}\set{L}\;\Varid{a};\Varid{lift}\;(\Varid{return}\;())\mskip1.5mu\}{}\<[E]%
\\
\>[B]{}\mathrel{=}{}\<[BE]%
\>[6]{}\mbox{\commentbegin  \ensuremath{\mathrm{(G_LS_L)}}  \commentend}{}\<[E]%
\\
\>[B]{}\hsindent{4}{}\<[4]%
\>[4]{}\mathbf{do}\;\{\mskip1.5mu {}\<[10]%
\>[10]{}\Varid{return}\;();\Varid{lift}\;(\Varid{return}\;())\mskip1.5mu\}{}\<[E]%
\\
\>[B]{}\mathrel{=}{}\<[BE]%
\>[6]{}\mbox{\commentbegin  monad unit, \ensuremath{\Varid{lift}} monad morphism  \commentend}{}\<[E]%
\\
\>[B]{}\hsindent{4}{}\<[4]%
\>[4]{}\Varid{return}\;(){}\<[E]%
\ColumnHook
\end{hscode}\resethooks

\ensuremath{\mathrm{(S_LG_L)}}:
\begin{hscode}\SaveRestoreHook
\column{B}{@{}>{\hspre}c<{\hspost}@{}}%
\column{BE}{@{}l@{}}%
\column{4}{@{}>{\hspre}l<{\hspost}@{}}%
\column{6}{@{}>{\hspre}l<{\hspost}@{}}%
\column{10}{@{}>{\hspre}l<{\hspost}@{}}%
\column{35}{@{}>{\hspre}l<{\hspost}@{}}%
\column{53}{@{}>{\hspre}l<{\hspost}@{}}%
\column{E}{@{}>{\hspre}l<{\hspost}@{}}%
\>[4]{}\mathbf{do}\;\{\mskip1.5mu {}\<[10]%
\>[10]{}\Varid{bx'}\mathord{.}\set{L}\;\Varid{a};\Varid{bx'}\mathord{.}\get{L}\mskip1.5mu\}{}\<[E]%
\\
\>[B]{}\mathrel{=}{}\<[BE]%
\>[6]{}\mbox{\commentbegin  Definition  \commentend}{}\<[E]%
\\
\>[B]{}\hsindent{4}{}\<[4]%
\>[4]{}\mathbf{do}\;\{\mskip1.5mu {}\<[10]%
\>[10]{}\Varid{a'}\leftarrow bx\mathord{.}\get{L};bx\mathord{.}\set{L}\;\Varid{a};{}\<[E]%
\\
\>[10]{}\Varid{lift}\;(\mathbf{if}\;\Varid{a}\notequals\Varid{a'}\;\mathbf{then}\;\Varid{sigA}\;\Varid{a'}\;\mathbf{else}\;\Varid{return}\;());{}\<[E]%
\\
\>[10]{}bx\mathord{.}\get{L}\mskip1.5mu\}{}\<[E]%
\\
\>[B]{}\mathrel{=}{}\<[BE]%
\>[6]{}\mbox{\commentbegin  Monad unit  \commentend}{}\<[E]%
\\
\>[B]{}\hsindent{4}{}\<[4]%
\>[4]{}\mathbf{do}\;\{\mskip1.5mu {}\<[10]%
\>[10]{}\Varid{a'}\leftarrow bx\mathord{.}\get{L};bx\mathord{.}\set{L}\;\Varid{a};{}\<[E]%
\\
\>[10]{}\Varid{lift}\;(\mathbf{if}\;\Varid{a}\notequals\Varid{a'}\;\mathbf{then}\;\Varid{sigA}\;\Varid{a'}\;\mathbf{else}\;\Varid{return}\;());{}\<[E]%
\\
\>[10]{}\Varid{a''}\leftarrow bx\mathord{.}\get{L};\Varid{return}\;\Varid{a''}\mskip1.5mu\}{}\<[E]%
\\
\>[B]{}\mathrel{=}{}\<[BE]%
\>[6]{}\mbox{\commentbegin  Lemma~\ref{lem:liftings-commute}  \commentend}{}\<[E]%
\\
\>[B]{}\hsindent{4}{}\<[4]%
\>[4]{}\mathbf{do}\;\{\mskip1.5mu {}\<[10]%
\>[10]{}\Varid{a'}\leftarrow bx\mathord{.}\get{L};bx\mathord{.}\set{L}\;\Varid{a};{}\<[53]%
\>[53]{}\Varid{a''}\leftarrow bx\mathord{.}\get{L};{}\<[E]%
\\
\>[10]{}\Varid{lift}\;(\mathbf{if}\;\Varid{a}\notequals\Varid{a'}\;\mathbf{then}\;\Varid{sigA}\;\Varid{a'}\;\mathbf{else}\;\Varid{return}\;());\Varid{return}\;\Varid{a''}\mskip1.5mu\}{}\<[E]%
\\
\>[B]{}\mathrel{=}{}\<[BE]%
\>[6]{}\mbox{\commentbegin  \ensuremath{\mathrm{(S_LG_L)}}  \commentend}{}\<[E]%
\\
\>[B]{}\hsindent{4}{}\<[4]%
\>[4]{}\mathbf{do}\;\{\mskip1.5mu {}\<[10]%
\>[10]{}\Varid{a'}\leftarrow bx\mathord{.}\get{L};bx\mathord{.}\set{L}\;\Varid{a};\Varid{a''}\leftarrow \Varid{return}\;\Varid{a};{}\<[E]%
\\
\>[10]{}\Varid{lift}\;(\mathbf{if}\;\Varid{a}\notequals\Varid{a'}\;\mathbf{then}\;\Varid{sigA}\;\Varid{a'}\;\mathbf{else}\;\Varid{return}\;());\Varid{return}\;\Varid{a''}\mskip1.5mu\}{}\<[E]%
\\
\>[B]{}\mathrel{=}{}\<[BE]%
\>[6]{}\mbox{\commentbegin  Monad unit  \commentend}{}\<[E]%
\\
\>[B]{}\hsindent{4}{}\<[4]%
\>[4]{}\mathbf{do}\;\{\mskip1.5mu {}\<[10]%
\>[10]{}\Varid{a'}\leftarrow bx\mathord{.}\get{L};bx\mathord{.}\set{L}\;\Varid{a};{}\<[E]%
\\
\>[10]{}\Varid{lift}\;(\mathbf{if}\;\Varid{a}\notequals\Varid{a'}\;\mathbf{then}\;\Varid{sigA}\;\Varid{a'}\;\mathbf{else}\;\Varid{return}\;());\Varid{return}\;\Varid{a}\mskip1.5mu\}{}\<[E]%
\\
\>[B]{}\mathrel{=}{}\<[BE]%
\>[6]{}\mbox{\commentbegin  Definition  \commentend}{}\<[E]%
\\
\>[B]{}\hsindent{4}{}\<[4]%
\>[4]{}\mathbf{do}\;\{\mskip1.5mu (\Varid{signalBX}\;\Varid{sigA}\;\Varid{sigB}\;{}\<[35]%
\>[35]{}bx)\mathord{.}\set{L}\;\Varid{a};\Varid{return}\;\Varid{a}\mskip1.5mu\}{}\<[E]%
\ColumnHook
\end{hscode}\resethooks
\endswithdisplay
\end{proof}

\restatableProposition{prop:dynamic-wb}
\begin{prop:dynamic-wb}
For any \ensuremath{\Varid{f}}, \ensuremath{\Varid{g}}, the dynamic \bx{} \ensuremath{\Varid{dynamicBX}\;\Varid{f}\;\Varid{g}} is well-behaved.  
\end{prop:dynamic-wb}
\begin{proof}
  Let \ensuremath{bx\mathrel{=}\Varid{dynamicBX}\;\Varid{f}\;\Varid{g}} for some \ensuremath{\Varid{f},\Varid{g}}.
  For \ensuremath{\mathrm{(S_LG_L)}}, by construction, an invocation of \ensuremath{bx\mathord{.}\set{L}\;\Varid{a'}} ends
by setting the state to \ensuremath{((\Varid{a'},\Varid{b'}),\Varid{fs},\Varid{bs})} for some \ensuremath{\Varid{b'},\Varid{fs},\Varid{bs}}, and a
subsequent \ensuremath{bx\mathord{.}\get{L}} will return \ensuremath{\Varid{a'}}.
More formally, we proceed as follows:
\begin{hscode}\SaveRestoreHook
\column{B}{@{}>{\hspre}c<{\hspost}@{}}%
\column{BE}{@{}l@{}}%
\column{4}{@{}>{\hspre}l<{\hspost}@{}}%
\column{6}{@{}>{\hspre}l<{\hspost}@{}}%
\column{10}{@{}>{\hspre}l<{\hspost}@{}}%
\column{12}{@{}>{\hspre}l<{\hspost}@{}}%
\column{16}{@{}>{\hspre}l<{\hspost}@{}}%
\column{24}{@{}>{\hspre}l<{\hspost}@{}}%
\column{33}{@{}>{\hspre}l<{\hspost}@{}}%
\column{E}{@{}>{\hspre}l<{\hspost}@{}}%
\>[4]{}\mathbf{do}\;\{\mskip1.5mu bx\mathord{.}\set{L}\;\Varid{a'};bx\mathord{.}\get{L}\mskip1.5mu\}{}\<[E]%
\\
\>[B]{}\mathrel{=}{}\<[BE]%
\>[6]{}\mbox{\commentbegin  definition  \commentend}{}\<[E]%
\\
\>[B]{}\hsindent{4}{}\<[4]%
\>[4]{}\mathbf{do}\;\{\mskip1.5mu {}\<[10]%
\>[10]{}((\Varid{a},\Varid{b}),\Varid{fs},\Varid{bs})\leftarrow \Varid{get};{}\<[E]%
\\
\>[10]{}\mathbf{if}\;\Varid{a}\equals\Varid{a'}\;\mathbf{then}\;\Varid{return}\;()\;\mathbf{else}{}\<[E]%
\\
\>[10]{}\hsindent{2}{}\<[12]%
\>[12]{}\mathbf{do}\;{}\<[16]%
\>[16]{}\Varid{b'}\leftarrow \mathbf{case}\;\Varid{lookup}\;(\Varid{a'},\Varid{b})\;\Varid{fs}\;\mathbf{of}{}\<[E]%
\\
\>[16]{}\hsindent{8}{}\<[24]%
\>[24]{}\Conid{Just}\;\Varid{b'}{}\<[33]%
\>[33]{}\hsarrow{\rightarrow }{\mathpunct{.}}\Varid{return}\;\Varid{b'}{}\<[E]%
\\
\>[16]{}\hsindent{8}{}\<[24]%
\>[24]{}\Conid{Nothing}{}\<[33]%
\>[33]{}\hsarrow{\rightarrow }{\mathpunct{.}}\Varid{lift}\;(\Varid{f}\;\Varid{a'}\;\Varid{b}){}\<[E]%
\\
\>[16]{}\Varid{set}\;((\Varid{a'},\Varid{b'}),((\Varid{a'},\Varid{b}),\Varid{b'})\mathbin{:}\Varid{fs},\Varid{bs});{}\<[E]%
\\
\>[10]{}((\Varid{a''},\Varid{b''}),\Varid{fs''},\Varid{gs''})\leftarrow \Varid{get};\Varid{return}\;\Varid{a''}\mskip1.5mu\}{}\<[E]%
\ColumnHook
\end{hscode}\resethooks
We now consider three sub-cases.  

First, if \ensuremath{\Varid{a}\equals\Varid{a'}} then 
\begin{hscode}\SaveRestoreHook
\column{B}{@{}>{\hspre}c<{\hspost}@{}}%
\column{BE}{@{}l@{}}%
\column{4}{@{}>{\hspre}l<{\hspost}@{}}%
\column{6}{@{}>{\hspre}l<{\hspost}@{}}%
\column{10}{@{}>{\hspre}l<{\hspost}@{}}%
\column{12}{@{}>{\hspre}l<{\hspost}@{}}%
\column{16}{@{}>{\hspre}l<{\hspost}@{}}%
\column{24}{@{}>{\hspre}l<{\hspost}@{}}%
\column{33}{@{}>{\hspre}l<{\hspost}@{}}%
\column{E}{@{}>{\hspre}l<{\hspost}@{}}%
\>[4]{}\mathbf{do}\;\{\mskip1.5mu {}\<[10]%
\>[10]{}((\Varid{a},\Varid{b}),\Varid{fs},\Varid{bs})\leftarrow \Varid{get};{}\<[E]%
\\
\>[10]{}\mathbf{if}\;\Varid{a}\equals\Varid{a'}\;\mathbf{then}\;\Varid{return}\;()\;\mathbf{else}{}\<[E]%
\\
\>[10]{}\hsindent{2}{}\<[12]%
\>[12]{}\mathbf{do}\;{}\<[16]%
\>[16]{}\Varid{b'}\leftarrow \mathbf{case}\;\Varid{lookup}\;(\Varid{a'},\Varid{b})\;\Varid{fs}\;\mathbf{of}{}\<[E]%
\\
\>[16]{}\hsindent{8}{}\<[24]%
\>[24]{}\Conid{Just}\;\Varid{b'}{}\<[33]%
\>[33]{}\hsarrow{\rightarrow }{\mathpunct{.}}\Varid{return}\;\Varid{b'}{}\<[E]%
\\
\>[16]{}\hsindent{8}{}\<[24]%
\>[24]{}\Conid{Nothing}{}\<[33]%
\>[33]{}\hsarrow{\rightarrow }{\mathpunct{.}}\Varid{lift}\;(\Varid{f}\;\Varid{a'}\;\Varid{b}){}\<[E]%
\\
\>[16]{}\Varid{set}\;((\Varid{a'},\Varid{b'}),((\Varid{a'},\Varid{b}),\Varid{b'})\mathbin{:}\Varid{fs},\Varid{bs});{}\<[E]%
\\
\>[10]{}((\Varid{a''},\Varid{b''}),\Varid{fs''},\Varid{gs''})\leftarrow \Varid{get};\Varid{return}\;\Varid{a''}\mskip1.5mu\}{}\<[E]%
\\
\>[B]{}\mathrel{=}{}\<[BE]%
\>[6]{}\mbox{\commentbegin  \ensuremath{\Varid{a}\equals\Varid{a'}}  \commentend}{}\<[E]%
\\
\>[B]{}\hsindent{4}{}\<[4]%
\>[4]{}\mathbf{do}\;\{\mskip1.5mu {}\<[10]%
\>[10]{}((\Varid{a},\Varid{b}),\Varid{fs},\Varid{bs})\leftarrow \Varid{get};{}\<[E]%
\\
\>[10]{}\Varid{return}\;();{}\<[E]%
\\
\>[10]{}((\Varid{a''},\Varid{b''}),\Varid{fs''},\Varid{gs''})\leftarrow \Varid{get};\Varid{return}\;\Varid{a''}\mskip1.5mu\}{}\<[E]%
\\
\>[B]{}\mathrel{=}{}\<[BE]%
\>[6]{}\mbox{\commentbegin  \ensuremath{\mathrm{(GG)}}  \commentend}{}\<[E]%
\\
\>[B]{}\hsindent{4}{}\<[4]%
\>[4]{}\mathbf{do}\;\{\mskip1.5mu {}\<[10]%
\>[10]{}((\Varid{a},\Varid{b}),\Varid{fs},\Varid{bs})\leftarrow \Varid{get};{}\<[E]%
\\
\>[10]{}\Varid{return}\;\Varid{a}\mskip1.5mu\}{}\<[E]%
\\
\>[B]{}\mathrel{=}{}\<[BE]%
\>[6]{}\mbox{\commentbegin  reversing previous steps  \commentend}{}\<[E]%
\\
\>[B]{}\hsindent{4}{}\<[4]%
\>[4]{}\mathbf{do}\;\{\mskip1.5mu {}\<[10]%
\>[10]{}((\Varid{a},\Varid{b}),\Varid{fs},\Varid{bs})\leftarrow \Varid{get};{}\<[E]%
\\
\>[10]{}\mathbf{if}\;\Varid{a}\equals\Varid{a'}\;\mathbf{then}\;\Varid{return}\;()\;\mathbf{else}{}\<[E]%
\\
\>[10]{}\hsindent{2}{}\<[12]%
\>[12]{}\mathbf{do}\;{}\<[16]%
\>[16]{}\Varid{b'}\leftarrow \mathbf{case}\;\Varid{lookup}\;(\Varid{a'},\Varid{b})\;\Varid{fs}\;\mathbf{of}{}\<[E]%
\\
\>[16]{}\hsindent{8}{}\<[24]%
\>[24]{}\Conid{Just}\;\Varid{b'}{}\<[33]%
\>[33]{}\hsarrow{\rightarrow }{\mathpunct{.}}\Varid{return}\;\Varid{b'}{}\<[E]%
\\
\>[16]{}\hsindent{8}{}\<[24]%
\>[24]{}\Conid{Nothing}{}\<[33]%
\>[33]{}\hsarrow{\rightarrow }{\mathpunct{.}}\Varid{lift}\;(\Varid{f}\;\Varid{a'}\;\Varid{b}){}\<[E]%
\\
\>[16]{}\Varid{set}\;((\Varid{a'},\Varid{b'}),((\Varid{a'},\Varid{b}),\Varid{b'})\mathbin{:}\Varid{fs},\Varid{bs});{}\<[E]%
\\
\>[10]{}\Varid{return}\;\Varid{a}\mskip1.5mu\}{}\<[E]%
\\
\>[B]{}\mathrel{=}{}\<[BE]%
\>[6]{}\mbox{\commentbegin  definition, \ensuremath{\Varid{a}\mathrel{=}\Varid{a'}}  \commentend}{}\<[E]%
\\
\>[B]{}\hsindent{4}{}\<[4]%
\>[4]{}\mathbf{do}\;\{\mskip1.5mu bx\mathord{.}\set{L}\;\Varid{a'};\Varid{return}\;\Varid{a'}\mskip1.5mu\}{}\<[E]%
\ColumnHook
\end{hscode}\resethooks

Second, if \ensuremath{\Varid{a}\notequals\Varid{a'}} and \ensuremath{((\Varid{a'},\Varid{b}),\Varid{b'})\in\Varid{fs}} for some \ensuremath{\Varid{b'}}, then \ensuremath{\Varid{lookup}\;(\Varid{a'},\Varid{b})\;\Varid{fs}\mathrel{=}\Conid{Just}\;\Varid{b'}} holds, so:
\begin{hscode}\SaveRestoreHook
\column{B}{@{}>{\hspre}c<{\hspost}@{}}%
\column{BE}{@{}l@{}}%
\column{4}{@{}>{\hspre}l<{\hspost}@{}}%
\column{6}{@{}>{\hspre}l<{\hspost}@{}}%
\column{10}{@{}>{\hspre}l<{\hspost}@{}}%
\column{12}{@{}>{\hspre}l<{\hspost}@{}}%
\column{16}{@{}>{\hspre}l<{\hspost}@{}}%
\column{24}{@{}>{\hspre}l<{\hspost}@{}}%
\column{33}{@{}>{\hspre}l<{\hspost}@{}}%
\column{E}{@{}>{\hspre}l<{\hspost}@{}}%
\>[4]{}\mathbf{do}\;\{\mskip1.5mu {}\<[10]%
\>[10]{}((\Varid{a},\Varid{b}),\Varid{fs},\Varid{bs})\leftarrow \Varid{get};{}\<[E]%
\\
\>[10]{}\mathbf{if}\;\Varid{a}\equals\Varid{a'}\;\mathbf{then}\;\Varid{return}\;()\;\mathbf{else}{}\<[E]%
\\
\>[10]{}\hsindent{2}{}\<[12]%
\>[12]{}\mathbf{do}\;{}\<[16]%
\>[16]{}\Varid{b'}\leftarrow \mathbf{case}\;\Varid{lookup}\;(\Varid{a'},\Varid{b})\;\Varid{fs}\;\mathbf{of}{}\<[E]%
\\
\>[16]{}\hsindent{8}{}\<[24]%
\>[24]{}\Conid{Just}\;\Varid{b'}{}\<[33]%
\>[33]{}\hsarrow{\rightarrow }{\mathpunct{.}}\Varid{return}\;\Varid{b'}{}\<[E]%
\\
\>[16]{}\hsindent{8}{}\<[24]%
\>[24]{}\Conid{Nothing}{}\<[33]%
\>[33]{}\hsarrow{\rightarrow }{\mathpunct{.}}\Varid{lift}\;(\Varid{f}\;\Varid{a'}\;\Varid{b}){}\<[E]%
\\
\>[16]{}\Varid{set}\;((\Varid{a'},\Varid{b'}),((\Varid{a'},\Varid{b}),\Varid{b'})\mathbin{:}\Varid{fs},\Varid{bs});{}\<[E]%
\\
\>[10]{}((\Varid{a''},\Varid{b''}),\Varid{fs''},\Varid{gs''})\leftarrow \Varid{get};\Varid{return}\;\Varid{a''}\mskip1.5mu\}{}\<[E]%
\\
\>[B]{}\mathrel{=}{}\<[BE]%
\>[6]{}\mbox{\commentbegin  \ensuremath{\Varid{a}\notequals\Varid{a'}}, \ensuremath{\mathbf{case}} simplification  \commentend}{}\<[E]%
\\
\>[B]{}\hsindent{4}{}\<[4]%
\>[4]{}\mathbf{do}\;\{\mskip1.5mu {}\<[10]%
\>[10]{}((\Varid{a},\Varid{b}),\Varid{fs},\Varid{bs})\leftarrow \Varid{get};{}\<[E]%
\\
\>[10]{}\Varid{b'}\leftarrow \Varid{return}\;\Varid{b'};{}\<[E]%
\\
\>[10]{}\Varid{set}\;((\Varid{a'},\Varid{b'}),((\Varid{a'},\Varid{b}),\Varid{b'})\mathbin{:}\Varid{fs},\Varid{bs});{}\<[E]%
\\
\>[10]{}((\Varid{a''},\Varid{b''}),\Varid{fs''},\Varid{gs''})\leftarrow \Varid{get};\Varid{return}\;\Varid{a''}\mskip1.5mu\}{}\<[E]%
\\
\>[B]{}\mathrel{=}{}\<[BE]%
\>[6]{}\mbox{\commentbegin  monad unit  \commentend}{}\<[E]%
\\
\>[B]{}\hsindent{4}{}\<[4]%
\>[4]{}\mathbf{do}\;\{\mskip1.5mu {}\<[10]%
\>[10]{}((\Varid{a},\Varid{b}),\Varid{fs},\Varid{bs})\leftarrow \Varid{get};{}\<[E]%
\\
\>[10]{}\Varid{set}\;((\Varid{a'},\Varid{b'}),((\Varid{a'},\Varid{b}),\Varid{b'})\mathbin{:}\Varid{fs},\Varid{bs});{}\<[E]%
\\
\>[10]{}((\Varid{a''},\Varid{b''}),\Varid{fs''},\Varid{gs''})\leftarrow \Varid{get};\Varid{return}\;\Varid{a''}\mskip1.5mu\}{}\<[E]%
\\
\>[B]{}\mathrel{=}{}\<[BE]%
\>[6]{}\mbox{\commentbegin  \ensuremath{\mathrm{(GG)}}  \commentend}{}\<[E]%
\\
\>[B]{}\hsindent{4}{}\<[4]%
\>[4]{}\mathbf{do}\;\{\mskip1.5mu {}\<[10]%
\>[10]{}((\Varid{a},\Varid{b}),\Varid{fs},\Varid{bs})\leftarrow \Varid{get};{}\<[E]%
\\
\>[10]{}\Varid{set}\;((\Varid{a'},\Varid{b'}),((\Varid{a'},\Varid{b}),\Varid{b'})\mathbin{:}\Varid{fs},\Varid{bs});{}\<[E]%
\\
\>[10]{}\Varid{return}\;\Varid{a'}\mskip1.5mu\}{}\<[E]%
\\
\>[B]{}\mathrel{=}{}\<[BE]%
\>[6]{}\mbox{\commentbegin  reversing previous steps  \commentend}{}\<[E]%
\\
\>[B]{}\hsindent{4}{}\<[4]%
\>[4]{}\mathbf{do}\;\{\mskip1.5mu {}\<[10]%
\>[10]{}((\Varid{a},\Varid{b}),\Varid{fs},\Varid{bs})\leftarrow \Varid{get}{}\<[E]%
\\
\>[10]{}\mathbf{if}\;\Varid{a}\equals\Varid{a'}\;\mathbf{then}\;\Varid{return}\;()\;\mathbf{else}{}\<[E]%
\\
\>[10]{}\hsindent{2}{}\<[12]%
\>[12]{}\mathbf{do}\;{}\<[16]%
\>[16]{}\Varid{b'}\leftarrow \mathbf{case}\;\Varid{lookup}\;(\Varid{a'},\Varid{b})\;\Varid{fs}\;\mathbf{of}{}\<[E]%
\\
\>[16]{}\hsindent{8}{}\<[24]%
\>[24]{}\Conid{Just}\;\Varid{b'}{}\<[33]%
\>[33]{}\hsarrow{\rightarrow }{\mathpunct{.}}\Varid{return}\;\Varid{b'}{}\<[E]%
\\
\>[16]{}\hsindent{8}{}\<[24]%
\>[24]{}\Conid{Nothing}{}\<[33]%
\>[33]{}\hsarrow{\rightarrow }{\mathpunct{.}}\Varid{lift}\;(\Varid{f}\;\Varid{a'}\;\Varid{b}){}\<[E]%
\\
\>[16]{}\Varid{set}\;((\Varid{a'},\Varid{b'}),((\Varid{a'},\Varid{b}),\Varid{b'})\mathbin{:}\Varid{fs},\Varid{bs});{}\<[E]%
\\
\>[10]{}\Varid{return}\;\Varid{a'}\mskip1.5mu\}{}\<[E]%
\\
\>[B]{}\mathrel{=}{}\<[BE]%
\>[6]{}\mbox{\commentbegin  definition  \commentend}{}\<[E]%
\\
\>[B]{}\hsindent{4}{}\<[4]%
\>[4]{}\mathbf{do}\;\{\mskip1.5mu bx\mathord{.}\set{L}\;\Varid{a'};\Varid{return}\;\Varid{a'}\mskip1.5mu\}{}\<[E]%
\ColumnHook
\end{hscode}\resethooks

Finally, if \ensuremath{\Varid{a}\notequals\Varid{a'}} and there is no \ensuremath{\Varid{b'}} such that \ensuremath{((\Varid{a'},\Varid{b}),\Varid{b'})\in\Varid{fs}}, then \ensuremath{\Varid{lookup}\;(\Varid{a'},\Varid{b})\;\Varid{fs}\mathrel{=}\Conid{Nothing}}, and:
 \begin{hscode}\SaveRestoreHook
\column{B}{@{}>{\hspre}c<{\hspost}@{}}%
\column{BE}{@{}l@{}}%
\column{4}{@{}>{\hspre}l<{\hspost}@{}}%
\column{6}{@{}>{\hspre}l<{\hspost}@{}}%
\column{10}{@{}>{\hspre}l<{\hspost}@{}}%
\column{12}{@{}>{\hspre}l<{\hspost}@{}}%
\column{16}{@{}>{\hspre}l<{\hspost}@{}}%
\column{24}{@{}>{\hspre}l<{\hspost}@{}}%
\column{33}{@{}>{\hspre}l<{\hspost}@{}}%
\column{E}{@{}>{\hspre}l<{\hspost}@{}}%
\>[4]{}\mathbf{do}\;\{\mskip1.5mu {}\<[10]%
\>[10]{}((\Varid{a},\Varid{b}),\Varid{fs},\Varid{bs})\leftarrow \Varid{get};{}\<[E]%
\\
\>[10]{}\mathbf{if}\;\Varid{a}\equals\Varid{a'}\;\mathbf{then}\;\Varid{return}\;()\;\mathbf{else}{}\<[E]%
\\
\>[10]{}\hsindent{2}{}\<[12]%
\>[12]{}\mathbf{do}\;{}\<[16]%
\>[16]{}\Varid{b'}\leftarrow \mathbf{case}\;\Varid{lookup}\;(\Varid{a'},\Varid{b})\;\Varid{fs}\;\mathbf{of}{}\<[E]%
\\
\>[16]{}\hsindent{8}{}\<[24]%
\>[24]{}\Conid{Just}\;\Varid{b'}{}\<[33]%
\>[33]{}\hsarrow{\rightarrow }{\mathpunct{.}}\Varid{return}\;\Varid{b'}{}\<[E]%
\\
\>[16]{}\hsindent{8}{}\<[24]%
\>[24]{}\Conid{Nothing}{}\<[33]%
\>[33]{}\hsarrow{\rightarrow }{\mathpunct{.}}\Varid{lift}\;(\Varid{f}\;\Varid{a'}\;\Varid{b}){}\<[E]%
\\
\>[16]{}\Varid{set}\;((\Varid{a'},\Varid{b'}),((\Varid{a'},\Varid{b}),\Varid{b'})\mathbin{:}\Varid{fs},\Varid{bs});{}\<[E]%
\\
\>[10]{}((\Varid{a''},\Varid{b''}),\Varid{fs''},\Varid{gs''})\leftarrow \Varid{get};\Varid{return}\;\Varid{a''}\mskip1.5mu\}{}\<[E]%
\\
\>[B]{}\mathrel{=}{}\<[BE]%
\>[6]{}\mbox{\commentbegin  \ensuremath{\Varid{a}\notequals\Varid{a'}}, \ensuremath{\Varid{lookup}\;(\Varid{a'},\Varid{b})\mathrel{=}\Conid{Nothing}}  \commentend}{}\<[E]%
\\
\>[B]{}\hsindent{4}{}\<[4]%
\>[4]{}\mathbf{do}\;\{\mskip1.5mu {}\<[10]%
\>[10]{}((\Varid{a},\Varid{b}),\Varid{fs},\Varid{bs})\leftarrow \Varid{get};{}\<[E]%
\\
\>[10]{}\Varid{b'}\leftarrow \Varid{lift}\;(\Varid{f}\;\Varid{a'}\;\Varid{b});{}\<[E]%
\\
\>[10]{}\Varid{set}\;((\Varid{a'},\Varid{b'}),((\Varid{a'},\Varid{b}),\Varid{b'})\mathbin{:}\Varid{fs},\Varid{bs});{}\<[E]%
\\
\>[10]{}((\Varid{a''},\Varid{b''}),\Varid{fs''},\Varid{gs''})\leftarrow \Varid{get};\Varid{return}\;\Varid{a''}\mskip1.5mu\}{}\<[E]%
\\
\>[B]{}\mathrel{=}{}\<[BE]%
\>[6]{}\mbox{\commentbegin  \ensuremath{\mathrm{(SG)}}  \commentend}{}\<[E]%
\\
\>[B]{}\hsindent{4}{}\<[4]%
\>[4]{}\mathbf{do}\;\{\mskip1.5mu {}\<[10]%
\>[10]{}((\Varid{a},\Varid{b}),\Varid{fs},\Varid{bs})\leftarrow \Varid{get};{}\<[E]%
\\
\>[10]{}\Varid{b'}\leftarrow \Varid{lift}\;(\Varid{f}\;\Varid{a'}\;\Varid{b});{}\<[E]%
\\
\>[10]{}\Varid{set}\;((\Varid{a'},\Varid{b'}),((\Varid{a'},\Varid{b}),\Varid{b'})\mathbin{:}\Varid{fs},\Varid{bs});{}\<[E]%
\\
\>[10]{}\Varid{return}\;\Varid{a'}\mskip1.5mu\}{}\<[E]%
\\
\>[B]{}\mathrel{=}{}\<[BE]%
\>[6]{}\mbox{\commentbegin  reversing previous steps  \commentend}{}\<[E]%
\\
\>[B]{}\hsindent{4}{}\<[4]%
\>[4]{}\mathbf{do}\;\{\mskip1.5mu {}\<[10]%
\>[10]{}((\Varid{a},\Varid{b}),\Varid{fs},\Varid{bs})\leftarrow \Varid{get}{}\<[E]%
\\
\>[10]{}\mathbf{if}\;\Varid{a}\equals\Varid{a'}\;\mathbf{then}\;\Varid{return}\;()\;\mathbf{else}{}\<[E]%
\\
\>[10]{}\hsindent{2}{}\<[12]%
\>[12]{}\mathbf{do}\;{}\<[16]%
\>[16]{}\Varid{b'}\leftarrow \mathbf{case}\;\Varid{lookup}\;(\Varid{a'},\Varid{b})\;\Varid{fs}\;\mathbf{of}{}\<[E]%
\\
\>[16]{}\hsindent{8}{}\<[24]%
\>[24]{}\Conid{Just}\;\Varid{b'}{}\<[33]%
\>[33]{}\hsarrow{\rightarrow }{\mathpunct{.}}\Varid{return}\;\Varid{b'}{}\<[E]%
\\
\>[16]{}\hsindent{8}{}\<[24]%
\>[24]{}\Conid{Nothing}{}\<[33]%
\>[33]{}\hsarrow{\rightarrow }{\mathpunct{.}}\Varid{lift}\;(\Varid{f}\;\Varid{a'}\;\Varid{b}){}\<[E]%
\\
\>[16]{}\Varid{set}\;((\Varid{a'},\Varid{b'}),((\Varid{a'},\Varid{b}),\Varid{b'})\mathbin{:}\Varid{fs},\Varid{bs});{}\<[E]%
\\
\>[10]{}\Varid{return}\;\Varid{a'}\mskip1.5mu\}{}\<[E]%
\\
\>[B]{}\mathrel{=}{}\<[BE]%
\>[6]{}\mbox{\commentbegin  definition  \commentend}{}\<[E]%
\\
\>[B]{}\hsindent{4}{}\<[4]%
\>[4]{}\mathbf{do}\;\{\mskip1.5mu bx\mathord{.}\set{L}\;\Varid{a'};\Varid{return}\;\Varid{a'}\mskip1.5mu\}{}\<[E]%
\ColumnHook
\end{hscode}\resethooks
Therefore, \ensuremath{\mathrm{(S_LG_L)}} holds in all three cases.

For \ensuremath{\mathrm{(G_LS_L)}}, an invocation of \ensuremath{bx\mathord{.}\get{L}} in a state
\ensuremath{((\Varid{a},\Varid{b}),\Varid{fs},\Varid{bs})} returns \ensuremath{\Varid{a}}, and by construction a subsequent
\ensuremath{bx\mathord{.}\set{L}\;\Varid{a}} has no effect.  

More formally, we proceed as follows.  
  \begin{hscode}\SaveRestoreHook
\column{B}{@{}>{\hspre}c<{\hspost}@{}}%
\column{BE}{@{}l@{}}%
\column{4}{@{}>{\hspre}l<{\hspost}@{}}%
\column{6}{@{}>{\hspre}l<{\hspost}@{}}%
\column{10}{@{}>{\hspre}l<{\hspost}@{}}%
\column{13}{@{}>{\hspre}l<{\hspost}@{}}%
\column{17}{@{}>{\hspre}l<{\hspost}@{}}%
\column{25}{@{}>{\hspre}l<{\hspost}@{}}%
\column{34}{@{}>{\hspre}l<{\hspost}@{}}%
\column{E}{@{}>{\hspre}l<{\hspost}@{}}%
\>[4]{}\mathbf{do}\;\{\mskip1.5mu \Varid{a}\leftarrow bx\mathord{.}\get{L};bx\mathord{.}\set{L}\;\Varid{a}\mskip1.5mu\}{}\<[E]%
\\
\>[B]{}\mathrel{=}{}\<[BE]%
\>[6]{}\mbox{\commentbegin  Definition  \commentend}{}\<[E]%
\\
\>[B]{}\hsindent{4}{}\<[4]%
\>[4]{}\mathbf{do}\;\{\mskip1.5mu {}\<[10]%
\>[10]{}((\Varid{a},\anonymous ),\anonymous ,\anonymous )\leftarrow \Varid{get};{}\<[E]%
\\
\>[10]{}((\Varid{a}_{0},\Varid{b}_{0}),\Varid{fs},\Varid{bs})\leftarrow \Varid{get};{}\<[E]%
\\
\>[10]{}\mathbf{if}\;\Varid{a}_{0}\equals\Varid{a}\;\mathbf{then}\;\Varid{return}\;()\;\mathbf{else}{}\<[E]%
\\
\>[10]{}\hsindent{3}{}\<[13]%
\>[13]{}\mathbf{do}\;{}\<[17]%
\>[17]{}\Varid{b'}\leftarrow \mathbf{case}\;\Varid{lookup}\;(\Varid{a},\Varid{b}_{0})\;\Varid{fs}\;\mathbf{of}{}\<[E]%
\\
\>[17]{}\hsindent{8}{}\<[25]%
\>[25]{}\Conid{Just}\;\Varid{b'}{}\<[34]%
\>[34]{}\hsarrow{\rightarrow }{\mathpunct{.}}\Varid{return}\;\Varid{b'}{}\<[E]%
\\
\>[17]{}\hsindent{8}{}\<[25]%
\>[25]{}\Conid{Nothing}{}\<[34]%
\>[34]{}\hsarrow{\rightarrow }{\mathpunct{.}}\Varid{lift}\;(\Varid{f}\;\Varid{a}\;\Varid{b}){}\<[E]%
\\
\>[17]{}\Varid{set}\;((\Varid{a},\Varid{b'}),((\Varid{a},\Varid{b}),\Varid{b'})\mathbin{:}\Varid{fs},\Varid{bs})\mskip1.5mu\}{}\<[E]%
\\
\>[B]{}\mathrel{=}{}\<[BE]%
\>[6]{}\mbox{\commentbegin  \ensuremath{\mathrm{(GG)}}  \commentend}{}\<[E]%
\\
\>[B]{}\hsindent{4}{}\<[4]%
\>[4]{}\mathbf{do}\;\{\mskip1.5mu {}\<[10]%
\>[10]{}((\Varid{a}_{0},\Varid{b}_{0}),\Varid{fs},\Varid{bs})\leftarrow \Varid{get};{}\<[E]%
\\
\>[10]{}\mathbf{if}\;\Varid{a}_{0}\equals\Varid{a}_{0}\;\mathbf{then}\;\Varid{return}\;()\;\mathbf{else}{}\<[E]%
\\
\>[10]{}\hsindent{3}{}\<[13]%
\>[13]{}\mathbf{do}\;{}\<[17]%
\>[17]{}\Varid{b'}\leftarrow \mathbf{case}\;\Varid{lookup}\;(\Varid{a},\Varid{b}_{0})\;\Varid{fs}\;\mathbf{of}{}\<[E]%
\\
\>[17]{}\hsindent{8}{}\<[25]%
\>[25]{}\Conid{Just}\;\Varid{b'}{}\<[34]%
\>[34]{}\hsarrow{\rightarrow }{\mathpunct{.}}\Varid{return}\;\Varid{b'}{}\<[E]%
\\
\>[17]{}\hsindent{8}{}\<[25]%
\>[25]{}\Conid{Nothing}{}\<[34]%
\>[34]{}\hsarrow{\rightarrow }{\mathpunct{.}}\Varid{lift}\;(\Varid{f}\;\Varid{a}_{0}\;\Varid{b}){}\<[E]%
\\
\>[17]{}\Varid{set}\;((\Varid{a}_{0},\Varid{b'}),((\Varid{a}_{0},\Varid{b}),\Varid{b'})\mathbin{:}\Varid{fs},\Varid{bs})\mskip1.5mu\}{}\<[E]%
\\
\>[B]{}\mathrel{=}{}\<[BE]%
\>[6]{}\mbox{\commentbegin  \ensuremath{\Varid{a}_{0}\mathrel{=}\Varid{a}_{0}}  \commentend}{}\<[E]%
\\
\>[B]{}\hsindent{4}{}\<[4]%
\>[4]{}\mathbf{do}\;\{\mskip1.5mu {}\<[10]%
\>[10]{}((\Varid{a}_{0},\Varid{b}_{0}),\Varid{fs},\Varid{bs})\leftarrow \Varid{get};{}\<[E]%
\\
\>[10]{}\Varid{return}\;()\mskip1.5mu\}{}\<[E]%
\\
\>[B]{}\mathrel{=}{}\<[BE]%
\>[6]{}\mbox{\commentbegin  \ensuremath{\mathrm{(GG)}}  \commentend}{}\<[E]%
\\
\>[B]{}\hsindent{4}{}\<[4]%
\>[4]{}\Varid{return}\;(){}\<[E]%
\ColumnHook
\end{hscode}\resethooks
\endswithdisplay
\end{proof}

\section{Code}\label{code}

This appendix
includes all the code discussed in the paper, along with other
convenience definitions that were not discussed in the paper and
alternative definitions of e.g.\ composition that we explored while
writing the paper.
In this appendix, we have reinstated the standard Haskell use of \ensuremath{\mathbf{newtype}}s etc.

\subsection{SetBX}
\begin{hscode}\SaveRestoreHook
\column{B}{@{}>{\hspre}l<{\hspost}@{}}%
\column{3}{@{}>{\hspre}l<{\hspost}@{}}%
\column{8}{@{}>{\hspre}l<{\hspost}@{}}%
\column{E}{@{}>{\hspre}l<{\hspost}@{}}%
\>[3]{}\mbox{\enskip\{-\# LANGUAGE RankNTypes, ImpredicativeTypes  \#-\}\enskip}{}\<[E]%
\\
\>[3]{}\mathbf{module}\;\Conid{BX}\;\mathbf{where}{}\<[E]%
\\
\>[3]{}\mathbf{import}\;\Conid{\Conid{Control}.\Conid{Monad}.State}\;\Varid{as}\;\Conid{State}{}\<[E]%
\\
\>[3]{}\mathbf{import}\;\Conid{\Conid{Control}.\Conid{Monad}.Reader}\;\Varid{as}\;\Conid{Reader}{}\<[E]%
\\[\blanklineskip]%
\>[3]{}\mathbf{data}\;\Conid{BX}\;\Varid{m}\;\Varid{a}\;\Varid{b}\mathrel{=}\Conid{BX}\;\{\mskip1.5mu {}\<[E]%
\\
\>[3]{}\hsindent{5}{}\<[8]%
\>[8]{}\Varid{mgetl}\mathbin{::}\Varid{m}\;\Varid{a},{}\<[E]%
\\
\>[3]{}\hsindent{5}{}\<[8]%
\>[8]{}\Varid{msetl}\mathbin{::}\Varid{a}\hsarrow{\rightarrow }{\mathpunct{.}}\Varid{m}\;(),{}\<[E]%
\\
\>[3]{}\hsindent{5}{}\<[8]%
\>[8]{}\Varid{mgetr}\mathbin{::}\Varid{m}\;\Varid{b},{}\<[E]%
\\
\>[3]{}\hsindent{5}{}\<[8]%
\>[8]{}\Varid{msetr}\mathbin{::}\Varid{b}\hsarrow{\rightarrow }{\mathpunct{.}}\Varid{m}\;(){}\<[E]%
\\
\>[3]{}\mskip1.5mu\}{}\<[E]%
\\[\blanklineskip]%
\>[3]{}\Varid{mputlr}\mathbin{::}\Conid{Monad}\;\Varid{m}\Rightarrow \Conid{BX}\;\Varid{m}\;\Varid{a}\;\Varid{b}\hsarrow{\rightarrow }{\mathpunct{.}}\Varid{a}\hsarrow{\rightarrow }{\mathpunct{.}}\Varid{m}\;\Varid{b}{}\<[E]%
\\
\>[3]{}\Varid{mputlr}\;bx\;\Varid{a}\mathrel{=}\Varid{msetl}\;bx\;\Varid{a}\sequ \Varid{mgetr}\;bx{}\<[E]%
\\[\blanklineskip]%
\>[3]{}\Varid{mputrl}\mathbin{::}\Conid{Monad}\;\Varid{m}\Rightarrow \Conid{BX}\;\Varid{m}\;\Varid{a}\;\Varid{b}\hsarrow{\rightarrow }{\mathpunct{.}}\Varid{b}\hsarrow{\rightarrow }{\mathpunct{.}}\Varid{m}\;\Varid{a}{}\<[E]%
\\
\>[3]{}\Varid{mputrl}\;bx\;\Varid{b}\mathrel{=}\Varid{msetr}\;bx\;\Varid{b}\sequ \Varid{mgetl}\;bx{}\<[E]%
\ColumnHook
\end{hscode}\resethooks
Identity.
 \begin{hscode}\SaveRestoreHook
\column{B}{@{}>{\hspre}l<{\hspost}@{}}%
\column{3}{@{}>{\hspre}l<{\hspost}@{}}%
\column{E}{@{}>{\hspre}l<{\hspost}@{}}%
\>[3]{}\Varid{idMBX}\mathbin{::}\Conid{MonadState}\;\Varid{a}\;\Varid{m}\Rightarrow \Conid{BX}\;\Varid{m}\;\Varid{a}\;\Varid{a}{}\<[E]%
\\
\>[3]{}\Varid{idMBX}\mathrel{=}\Conid{BX}\;\Varid{get}\;\Varid{put}\;\Varid{get}\;\Varid{put}{}\<[E]%
\ColumnHook
\end{hscode}\resethooks
Duality.
\begin{hscode}\SaveRestoreHook
\column{B}{@{}>{\hspre}l<{\hspost}@{}}%
\column{3}{@{}>{\hspre}l<{\hspost}@{}}%
\column{E}{@{}>{\hspre}l<{\hspost}@{}}%
\>[3]{}\Varid{coMBX}\mathbin{::}\Conid{BX}\;\Varid{m}\;\Varid{a}\;\Varid{b}\hsarrow{\rightarrow }{\mathpunct{.}}\Conid{BX}\;\Varid{m}\;\Varid{b}\;\Varid{a}{}\<[E]%
\\
\>[3]{}\Varid{coMBX}\;bx\mathrel{=}\Conid{BX}\;(\Varid{mgetr}\;bx)\;(\Varid{msetr}\;bx)\;(\Varid{mgetl}\;bx)\;(\Varid{msetl}\;bx){}\<[E]%
\ColumnHook
\end{hscode}\resethooks
Natural transformations
 \begin{hscode}\SaveRestoreHook
\column{B}{@{}>{\hspre}l<{\hspost}@{}}%
\column{3}{@{}>{\hspre}l<{\hspost}@{}}%
\column{E}{@{}>{\hspre}l<{\hspost}@{}}%
\>[3]{}\mathbf{type}\;\Varid{f}\ntto\Varid{g}\mathrel{=}\forall \Varid{a}\hsforall \hsdot{\cdot }{.}\Varid{f}\;\Varid{a}\hsarrow{\rightarrow }{\mathpunct{.}}\Varid{g}\;\Varid{a}{}\<[E]%
\\[\blanklineskip]%
\>[3]{}\mathbf{type}\;\Varid{g}_{1} \ntgets \Varid{f} \ntto \Varid{g}_{2}\mathrel{=}(\Varid{f}\ntto\Varid{g}_{1},\Varid{f}\ntto\Varid{g}_{2}){}\<[E]%
\\[\blanklineskip]%
\>[3]{}\mathbf{type}\;\Varid{f}_{1} \ntto \Varid{g} \ntgets \Varid{f}_{2}\mathrel{=}(\Varid{f}_{1}\ntto\Varid{g},\Varid{f}_{2}\ntto\Varid{g}){}\<[E]%
\\[\blanklineskip]%
\>[3]{}\mathbf{type}\;\Conid{NTSquare}\;\Varid{f}\;\Varid{g}_{1}\;\Varid{g}_{2}\;\Varid{h}\mathrel{=}(\Varid{g}_{1} \ntgets \Varid{f} \ntto \Varid{g}_{2},\Varid{g}_{1} \ntto \Varid{h} \ntgets \Varid{g}_{2}){}\<[E]%
\ColumnHook
\end{hscode}\resethooks
Composition
\begin{hscode}\SaveRestoreHook
\column{B}{@{}>{\hspre}l<{\hspost}@{}}%
\column{3}{@{}>{\hspre}l<{\hspost}@{}}%
\column{15}{@{}>{\hspre}l<{\hspost}@{}}%
\column{16}{@{}>{\hspre}l<{\hspost}@{}}%
\column{38}{@{}>{\hspre}l<{\hspost}@{}}%
\column{E}{@{}>{\hspre}l<{\hspost}@{}}%
\>[3]{}\Varid{compMBX}\mathbin{::}{}\<[15]%
\>[15]{}(\Varid{m}_{1} \ntto \Varid{n} \ntgets \Varid{m}_{2})\hsarrow{\rightarrow }{\mathpunct{.}}\Conid{BX}\;\Varid{m}_{1}\;\Varid{a}\;\Varid{b}\hsarrow{\rightarrow }{\mathpunct{.}}\Conid{BX}\;\Varid{m}_{2}\;\Varid{b}\;\Varid{c}\hsarrow{\rightarrow }{\mathpunct{.}}{}\<[E]%
\\
\>[15]{}\hsindent{1}{}\<[16]%
\>[16]{}\Conid{BX}\;\Varid{n}\;\Varid{a}\;\Varid{c}{}\<[E]%
\\
\>[3]{}\Varid{compMBX}\;(\varphi ,\psi )\;bx_{1}\;bx_{2}\mathrel{=}\Conid{BX}\;{}\<[38]%
\>[38]{}(\varphi \;(\Varid{mgetl}\;bx_{1}))\;{}\<[E]%
\\
\>[38]{}(\lambda \hslambda \Varid{a}\hsarrow{\rightarrow }{\mathpunct{.}}\varphi \;(\Varid{msetl}\;bx_{1}\;\Varid{a}))\;{}\<[E]%
\\
\>[38]{}(\psi \;(\Varid{mgetr}\;bx_{2}))\;{}\<[E]%
\\
\>[38]{}(\lambda \hslambda \Varid{a}\hsarrow{\rightarrow }{\mathpunct{.}}\psi \;(\Varid{msetr}\;bx_{2}\;\Varid{a})){}\<[E]%
\ColumnHook
\end{hscode}\resethooks
Variant, assuming monad morphisms \ensuremath{\Varid{l}} and \ensuremath{\Varid{r}} 
 \begin{hscode}\SaveRestoreHook
\column{B}{@{}>{\hspre}l<{\hspost}@{}}%
\column{3}{@{}>{\hspre}l<{\hspost}@{}}%
\column{10}{@{}>{\hspre}l<{\hspost}@{}}%
\column{20}{@{}>{\hspre}l<{\hspost}@{}}%
\column{26}{@{}>{\hspre}l<{\hspost}@{}}%
\column{33}{@{}>{\hspre}l<{\hspost}@{}}%
\column{E}{@{}>{\hspre}l<{\hspost}@{}}%
\>[3]{}\Varid{compMBX'}\mathbin{::}(\Conid{Monad}\;\Varid{m}_{1},\Conid{Monad}\;\Varid{m}_{2},\Conid{Monad}\;\Varid{n})\Rightarrow {}\<[E]%
\\
\>[3]{}\hsindent{23}{}\<[26]%
\>[26]{}(\Varid{m}_{1} \ntto \Varid{n} \ntgets \Varid{m}_{2})\hsarrow{\rightarrow }{\mathpunct{.}}{}\<[E]%
\\
\>[3]{}\hsindent{23}{}\<[26]%
\>[26]{}\Conid{BX}\;\Varid{m}_{1}\;\Varid{a}\;\Varid{b}\hsarrow{\rightarrow }{\mathpunct{.}}\Conid{BX}\;\Varid{m}_{2}\;\Varid{b}\;\Varid{c}\hsarrow{\rightarrow }{\mathpunct{.}}\Conid{BX}\;\Varid{n}\;\Varid{a}\;\Varid{c}{}\<[E]%
\\
\>[3]{}\Varid{compMBX'}\;(\Varid{l},\Varid{r})\;bx_{1}\;bx_{2}\mathrel{=}{}\<[E]%
\\
\>[3]{}\hsindent{7}{}\<[10]%
\>[10]{}\Conid{BX}\;{}\<[20]%
\>[20]{}(\Varid{l}\;(\Varid{mgetl}\;bx_{1}))\;{}\<[E]%
\\
\>[20]{}(\lambda \hslambda \Varid{a}\hsarrow{\rightarrow }{\mathpunct{.}}\mathbf{do}\;\{\mskip1.5mu {}\<[33]%
\>[33]{}\Varid{b}\leftarrow \Varid{l}\;(\mathbf{do}\;\{\mskip1.5mu \Varid{msetl}\;bx_{1}\;\Varid{a};\Varid{mgetr}\;bx_{1}\mskip1.5mu\});{}\<[E]%
\\
\>[33]{}\Varid{r}\;(\Varid{msetl}\;bx_{2}\;\Varid{b})\mskip1.5mu\})\;{}\<[E]%
\\
\>[20]{}(\Varid{r}\;(\Varid{mgetr}\;bx_{2}))\;{}\<[E]%
\\
\>[20]{}(\lambda \hslambda \Varid{c}\hsarrow{\rightarrow }{\mathpunct{.}}\mathbf{do}\;\{\mskip1.5mu {}\<[33]%
\>[33]{}\Varid{b}\leftarrow \Varid{r}\;(\mathbf{do}\;\{\mskip1.5mu \Varid{msetr}\;bx_{2}\;\Varid{c};\Varid{mgetl}\;bx_{2}\mskip1.5mu\});{}\<[E]%
\\
\>[33]{}\Varid{l}\;(\Varid{msetr}\;bx_{1}\;\Varid{b})\mskip1.5mu\}){}\<[E]%
\ColumnHook
\end{hscode}\resethooks

\subsection{Isomorphisms}
\begin{hscode}\SaveRestoreHook
\column{B}{@{}>{\hspre}l<{\hspost}@{}}%
\column{3}{@{}>{\hspre}l<{\hspost}@{}}%
\column{E}{@{}>{\hspre}l<{\hspost}@{}}%
\>[3]{}\mathbf{module}\;\Conid{Iso}\;\mathbf{where}{}\<[E]%
\ColumnHook
\end{hscode}\resethooks
Some isomorphisms.
\begin{hscode}\SaveRestoreHook
\column{B}{@{}>{\hspre}l<{\hspost}@{}}%
\column{3}{@{}>{\hspre}l<{\hspost}@{}}%
\column{E}{@{}>{\hspre}l<{\hspost}@{}}%
\>[3]{}\mathbf{data}\;\Conid{Iso}\;\Varid{a}\;\Varid{b}\mathrel{=}\Conid{Iso}\;\{\mskip1.5mu \Varid{to}\mathbin{::}\Varid{a}\hsarrow{\rightarrow }{\mathpunct{.}}\Varid{b},\Varid{from}\mathbin{::}\Varid{b}\hsarrow{\rightarrow }{\mathpunct{.}}\Varid{a}\mskip1.5mu\}{}\<[E]%
\\[\blanklineskip]%
\>[3]{}\Varid{assocIso}\mathbin{::}\Conid{Iso}\;((\Varid{a},\Varid{b}),\Varid{c})\;(\Varid{a},(\Varid{b},\Varid{c})){}\<[E]%
\\
\>[3]{}\Varid{assocIso}\mathrel{=}\Conid{Iso}\;(\lambda \hslambda ((\Varid{a},\Varid{b}),\Varid{c})\hsarrow{\rightarrow }{\mathpunct{.}}(\Varid{a},(\Varid{b},\Varid{c})))\;(\lambda \hslambda (\Varid{a},(\Varid{b},\Varid{c}))\hsarrow{\rightarrow }{\mathpunct{.}}((\Varid{a},\Varid{b}),\Varid{c})){}\<[E]%
\\[\blanklineskip]%
\>[3]{}\Varid{swapIso}\mathbin{::}\Conid{Iso}\;(\Varid{a},\Varid{b})\;(\Varid{b},\Varid{a}){}\<[E]%
\\
\>[3]{}\Varid{swapIso}\mathrel{=}\Conid{Iso}\;(\lambda \hslambda (\Varid{a},\Varid{b})\hsarrow{\rightarrow }{\mathpunct{.}}(\Varid{b},\Varid{a}))\;(\lambda \hslambda (\Varid{a},\Varid{b})\hsarrow{\rightarrow }{\mathpunct{.}}(\Varid{b},\Varid{a})){}\<[E]%
\\[\blanklineskip]%
\>[3]{}\Varid{unitlIso}\mathbin{::}\Conid{Iso}\;\Varid{a}\;((),\Varid{a}){}\<[E]%
\\
\>[3]{}\Varid{unitlIso}\mathrel{=}\Conid{Iso}\;(\lambda \hslambda \Varid{a}\hsarrow{\rightarrow }{\mathpunct{.}}((),\Varid{a}))\;(\lambda \hslambda ((),\Varid{a})\hsarrow{\rightarrow }{\mathpunct{.}}\Varid{a}){}\<[E]%
\\[\blanklineskip]%
\>[3]{}\Varid{unitrIso}\mathbin{::}\Conid{Iso}\;\Varid{a}\;(\Varid{a},()){}\<[E]%
\\
\>[3]{}\Varid{unitrIso}\mathrel{=}\Conid{Iso}\;(\lambda \hslambda \Varid{a}\hsarrow{\rightarrow }{\mathpunct{.}}(\Varid{a},()))\;(\lambda \hslambda (\Varid{a},())\hsarrow{\rightarrow }{\mathpunct{.}}\Varid{a}){}\<[E]%
\ColumnHook
\end{hscode}\resethooks

\subsection{Lenses}
\begin{hscode}\SaveRestoreHook
\column{B}{@{}>{\hspre}l<{\hspost}@{}}%
\column{3}{@{}>{\hspre}l<{\hspost}@{}}%
\column{26}{@{}>{\hspre}l<{\hspost}@{}}%
\column{27}{@{}>{\hspre}l<{\hspost}@{}}%
\column{E}{@{}>{\hspre}l<{\hspost}@{}}%
\>[3]{}\mathbf{module}\;\Conid{Lens}\;\mathbf{where}{}\<[E]%
\\[\blanklineskip]%
\>[3]{}\mathbf{data}\;\Conid{Lens}\;\Varid{a}\;\Varid{b}\mathrel{=}\Conid{Lens}\;\{\mskip1.5mu {}\<[27]%
\>[27]{}\Varid{view}\mathbin{::}\Varid{a}\hsarrow{\rightarrow }{\mathpunct{.}}\Varid{b},{}\<[E]%
\\
\>[27]{}\Varid{update}\mathbin{::}\Varid{a}\hsarrow{\rightarrow }{\mathpunct{.}}\Varid{b}\hsarrow{\rightarrow }{\mathpunct{.}}\Varid{a},{}\<[E]%
\\
\>[27]{}\Varid{create}\mathbin{::}\Varid{b}\hsarrow{\rightarrow }{\mathpunct{.}}\Varid{a}\mskip1.5mu\}{}\<[E]%
\\[\blanklineskip]%
\>[3]{}\Varid{idLens}\mathbin{::}\Conid{Lens}\;\Varid{a}\;\Varid{a}{}\<[E]%
\\
\>[3]{}\Varid{idLens}\mathrel{=}\Conid{Lens}\;(\lambda \hslambda \Varid{a}\hsarrow{\rightarrow }{\mathpunct{.}}\Varid{a})\;(\mathbin{\char92 \char95 }\Varid{a}\hsarrow{\rightarrow }{\mathpunct{.}}\Varid{a})\;(\lambda \hslambda \Varid{a}\hsarrow{\rightarrow }{\mathpunct{.}}\Varid{a}){}\<[E]%
\\[\blanklineskip]%
\>[3]{}\Varid{compLens}\mathbin{::}\Conid{Lens}\;\Varid{b}\;\Varid{c}\hsarrow{\rightarrow }{\mathpunct{.}}\Conid{Lens}\;\Varid{a}\;\Varid{b}\hsarrow{\rightarrow }{\mathpunct{.}}\Conid{Lens}\;\Varid{a}\;\Varid{c}{}\<[E]%
\\
\>[3]{}\Varid{compLens}\;\Varid{l}_{2}\;\Varid{l}_{1}\mathrel{=}\Conid{Lens}\;{}\<[26]%
\>[26]{}(\Varid{view}\;\Varid{l}_{2}\hsdot{\cdot }{.}\Varid{view}\;\Varid{l}_{1})\;{}\<[E]%
\\
\>[26]{}(\lambda \hslambda \Varid{a}\;\Varid{c}\hsarrow{\rightarrow }{\mathpunct{.}}\Varid{update}\;\Varid{l}_{1}\;\Varid{a}\;(\Varid{update}\;\Varid{l}_{2}\;(\Varid{view}\;\Varid{l}_{1}\;\Varid{a})\;\Varid{c}))\;{}\<[E]%
\\
\>[26]{}(\Varid{create}\;\Varid{l}_{1}\hsdot{\cdot }{.}\Varid{create}\;\Varid{l}_{2}){}\<[E]%
\\[\blanklineskip]%
\>[3]{}\Varid{fstLens}\mathbin{::}\Varid{b}\hsarrow{\rightarrow }{\mathpunct{.}}\Conid{Lens}\;(\Varid{a},\Varid{b})\;\Varid{a}{}\<[E]%
\\
\>[3]{}\Varid{fstLens}\;\Varid{b}\mathrel{=}\Conid{Lens}\;\Varid{fst}\;(\lambda \hslambda (\Varid{a},\Varid{b})\;\Varid{a'}\hsarrow{\rightarrow }{\mathpunct{.}}(\Varid{a'},\Varid{b}))\;(\lambda \hslambda \Varid{a}\hsarrow{\rightarrow }{\mathpunct{.}}(\Varid{a},\Varid{b})){}\<[E]%
\\[\blanklineskip]%
\>[3]{}\Varid{sndLens}\mathbin{::}\Varid{a}\hsarrow{\rightarrow }{\mathpunct{.}}\Conid{Lens}\;(\Varid{a},\Varid{b})\;\Varid{b}{}\<[E]%
\\
\>[3]{}\Varid{sndLens}\;\Varid{a}\mathrel{=}\Conid{Lens}\;\Varid{snd}\;(\lambda \hslambda (\Varid{a},\Varid{b})\;\Varid{b'}\hsarrow{\rightarrow }{\mathpunct{.}}(\Varid{a},\Varid{b'}))\;(\lambda \hslambda \Varid{b}\hsarrow{\rightarrow }{\mathpunct{.}}(\Varid{a},\Varid{b})){}\<[E]%
\ColumnHook
\end{hscode}\resethooks

\subsection{Monadic Lenses}
\begin{hscode}\SaveRestoreHook
\column{B}{@{}>{\hspre}l<{\hspost}@{}}%
\column{3}{@{}>{\hspre}l<{\hspost}@{}}%
\column{25}{@{}>{\hspre}l<{\hspost}@{}}%
\column{28}{@{}>{\hspre}l<{\hspost}@{}}%
\column{31}{@{}>{\hspre}l<{\hspost}@{}}%
\column{43}{@{}>{\hspre}l<{\hspost}@{}}%
\column{E}{@{}>{\hspre}l<{\hspost}@{}}%
\>[3]{}\mathbf{module}\;\Conid{MLens}\;\mathbf{where}{}\<[E]%
\\
\>[3]{}\mathbf{import}\;\Conid{Lens}{}\<[E]%
\\[\blanklineskip]%
\>[3]{}\mathbf{data}\;\Conid{MLens}\;\Varid{m}\;\Varid{a}\;\Varid{b}\mathrel{=}\Conid{MLens}\;\{\mskip1.5mu {}\<[31]%
\>[31]{}\Varid{mview}\mathbin{::}\Varid{a}\hsarrow{\rightarrow }{\mathpunct{.}}\Varid{b},{}\<[E]%
\\
\>[31]{}\Varid{mupdate}\mathbin{::}\Varid{a}\hsarrow{\rightarrow }{\mathpunct{.}}\Varid{b}\hsarrow{\rightarrow }{\mathpunct{.}}\Varid{m}\;\Varid{a},{}\<[E]%
\\
\>[31]{}\Varid{mcreate}\mathbin{::}\Varid{b}\hsarrow{\rightarrow }{\mathpunct{.}}\Varid{m}\;\Varid{a}\mskip1.5mu\}{}\<[E]%
\\[\blanklineskip]%
\>[3]{}\Varid{idMLens}\mathbin{::}\Conid{Monad}\;\Varid{m}\Rightarrow \Conid{MLens}\;\Varid{m}\;\Varid{a}\;\Varid{a}{}\<[E]%
\\
\>[3]{}\Varid{idMLens}\mathrel{=}\Conid{MLens}\;(\lambda \hslambda \Varid{a}\hsarrow{\rightarrow }{\mathpunct{.}}\Varid{a})\;(\mathbin{\char92 \char95 }\Varid{a}\hsarrow{\rightarrow }{\mathpunct{.}}\Varid{return}\;\Varid{a})\;(\lambda \hslambda \Varid{a}\hsarrow{\rightarrow }{\mathpunct{.}}\Varid{return}\;\Varid{a}){}\<[E]%
\\[\blanklineskip]%
\>[3]{}(\mathbin{;})\mathbin{::}\Conid{Monad}\;\Varid{m}\Rightarrow \Conid{MLens}\;\Varid{m}\;\Varid{b}\;\Varid{c}\hsarrow{\rightarrow }{\mathpunct{.}}\Conid{MLens}\;\Varid{m}\;\Varid{a}\;\Varid{b}\hsarrow{\rightarrow }{\mathpunct{.}}\Conid{MLens}\;\Varid{m}\;\Varid{a}\;\Varid{c}{}\<[E]%
\\
\>[3]{}(\mathbin{;})\;\Varid{l}_{2}\;\Varid{l}_{1}\mathrel{=}\Conid{MLens}\;{}\<[28]%
\>[28]{}(\Varid{mview}\;\Varid{l}_{2}\hsdot{\cdot }{.}\Varid{mview}\;\Varid{l}_{1})\;{}\<[E]%
\\
\>[28]{}(\lambda \hslambda \Varid{a}\;\Varid{c}\hsarrow{\rightarrow }{\mathpunct{.}}\mathbf{do}\;\{\mskip1.5mu {}\<[43]%
\>[43]{}\Varid{b}\leftarrow \Varid{mupdate}\;\Varid{l}_{2}\;(\Varid{mview}\;\Varid{l}_{1}\;\Varid{a})\;\Varid{c};\Varid{mupdate}\;\Varid{l}_{1}\;\Varid{a}\;\Varid{b}\mskip1.5mu\})\;{}\<[E]%
\\
\>[28]{}(\lambda \hslambda \Varid{c}\hsarrow{\rightarrow }{\mathpunct{.}}\mathbf{do}\;\{\mskip1.5mu \Varid{b}\leftarrow \Varid{mcreate}\;\Varid{l}_{2}\;\Varid{c};\Varid{mcreate}\;\Varid{l}_{1}\;\Varid{b}\mskip1.5mu\}){}\<[E]%
\\[\blanklineskip]%
\>[3]{}\Varid{lens2MLens}\mathbin{::}\Conid{Monad}\;\Varid{m}\Rightarrow \Conid{Lens}\;\Varid{a}\;\Varid{b}\hsarrow{\rightarrow }{\mathpunct{.}}\Conid{MLens}\;\Varid{m}\;\Varid{a}\;\Varid{b}{}\<[E]%
\\
\>[3]{}\Varid{lens2MLens}\;\Varid{l}\mathrel{=}\Conid{MLens}\;{}\<[25]%
\>[25]{}(\Varid{view}\;\Varid{l})\;{}\<[E]%
\\
\>[25]{}(\lambda \hslambda \Varid{a}\;\Varid{b}\hsarrow{\rightarrow }{\mathpunct{.}}\Varid{return}\;(\Varid{update}\;\Varid{l}\;\Varid{a}\;\Varid{b}))\;{}\<[E]%
\\
\>[25]{}(\Varid{return}\hsdot{\cdot }{.}\Varid{create}\;\Varid{l}){}\<[E]%
\ColumnHook
\end{hscode}\resethooks

\subsection{Relational BX}
\begin{hscode}\SaveRestoreHook
\column{B}{@{}>{\hspre}l<{\hspost}@{}}%
\column{3}{@{}>{\hspre}l<{\hspost}@{}}%
\column{E}{@{}>{\hspre}l<{\hspost}@{}}%
\>[3]{}\mathbf{module}\;\Conid{RelBX}\;\mathbf{where}{}\<[E]%
\\
\>[3]{}\mathbf{import}\;\Conid{Lens}{}\<[E]%
\ColumnHook
\end{hscode}\resethooks
pointed types that have a distinguished element
\begin{hscode}\SaveRestoreHook
\column{B}{@{}>{\hspre}l<{\hspost}@{}}%
\column{3}{@{}>{\hspre}l<{\hspost}@{}}%
\column{9}{@{}>{\hspre}l<{\hspost}@{}}%
\column{E}{@{}>{\hspre}l<{\hspost}@{}}%
\>[3]{}\mathbf{class}\;\Conid{Pointed}\;\Varid{a}\;\mathbf{where}{}\<[E]%
\\
\>[3]{}\hsindent{6}{}\<[9]%
\>[9]{}\Varid{point}\mathbin{::}\Varid{a}{}\<[E]%
\ColumnHook
\end{hscode}\resethooks
Relational bx
\begin{hscode}\SaveRestoreHook
\column{B}{@{}>{\hspre}l<{\hspost}@{}}%
\column{3}{@{}>{\hspre}l<{\hspost}@{}}%
\column{29}{@{}>{\hspre}l<{\hspost}@{}}%
\column{E}{@{}>{\hspre}l<{\hspost}@{}}%
\>[3]{}\mathbf{data}\;\Conid{RelBX}\;\Varid{a}\;\Varid{b}\mathrel{=}\Conid{RelBX}\;\{\mskip1.5mu {}\<[29]%
\>[29]{}\Varid{consistent}\mathbin{::}\Varid{a}\hsarrow{\rightarrow }{\mathpunct{.}}\Varid{b}\hsarrow{\rightarrow }{\mathpunct{.}}\Conid{Bool},{}\<[E]%
\\
\>[29]{}\Varid{fwd}\mathbin{::}\Varid{a}\hsarrow{\rightarrow }{\mathpunct{.}}\Varid{b}\hsarrow{\rightarrow }{\mathpunct{.}}\Varid{b},{}\<[E]%
\\
\>[29]{}\Varid{bwd}\mathbin{::}\Varid{a}\hsarrow{\rightarrow }{\mathpunct{.}}\Varid{b}\hsarrow{\rightarrow }{\mathpunct{.}}\Varid{a}\mskip1.5mu\}{}\<[E]%
\ColumnHook
\end{hscode}\resethooks
Lenses from relational bx 
\begin{hscode}\SaveRestoreHook
\column{B}{@{}>{\hspre}l<{\hspost}@{}}%
\column{3}{@{}>{\hspre}l<{\hspost}@{}}%
\column{23}{@{}>{\hspre}l<{\hspost}@{}}%
\column{E}{@{}>{\hspre}l<{\hspost}@{}}%
\>[3]{}\Varid{lens2rel}\mathbin{::}\Conid{Eq}\;\Varid{b}\Rightarrow \Conid{Lens}\;\Varid{a}\;\Varid{b}\hsarrow{\rightarrow }{\mathpunct{.}}\Conid{RelBX}\;\Varid{a}\;\Varid{b}{}\<[E]%
\\
\>[3]{}\Varid{lens2rel}\;\Varid{l}\mathrel{=}\Conid{RelBX}\;{}\<[23]%
\>[23]{}(\lambda \hslambda \Varid{a}\;\Varid{b}\hsarrow{\rightarrow }{\mathpunct{.}}\Varid{view}\;\Varid{l}\;\Varid{a}\equals\Varid{b})\;{}\<[E]%
\\
\>[23]{}(\lambda \hslambda \Varid{a}\;\Varid{b}\hsarrow{\rightarrow }{\mathpunct{.}}\Varid{view}\;\Varid{l}\;\Varid{a})\;{}\<[E]%
\\
\>[23]{}(\lambda \hslambda \Varid{a}\;\Varid{b}\hsarrow{\rightarrow }{\mathpunct{.}}\Varid{update}\;\Varid{l}\;\Varid{a}\;\Varid{b}){}\<[E]%
\ColumnHook
\end{hscode}\resethooks
Relational BX form spans of lenses provided types pointed
\begin{hscode}\SaveRestoreHook
\column{B}{@{}>{\hspre}l<{\hspost}@{}}%
\column{3}{@{}>{\hspre}l<{\hspost}@{}}%
\column{24}{@{}>{\hspre}l<{\hspost}@{}}%
\column{30}{@{}>{\hspre}l<{\hspost}@{}}%
\column{E}{@{}>{\hspre}l<{\hspost}@{}}%
\>[3]{}\Varid{rel2lensSpan}\mathbin{::}(\Conid{Pointed}\;\Varid{a},\Conid{Pointed}\;\Varid{b})\Rightarrow \Conid{RelBX}\;\Varid{a}\;\Varid{b}\hsarrow{\rightarrow }{\mathpunct{.}}(\Conid{Lens}\;(\Varid{a},\Varid{b})\;\Varid{a},\Conid{Lens}\;(\Varid{a},\Varid{b})\;\Varid{b}){}\<[E]%
\\
\>[3]{}\Varid{rel2lensSpan}\;bx\mathrel{=}({}\<[24]%
\>[24]{}\Conid{Lens}\;{}\<[30]%
\>[30]{}\Varid{fst}\;{}\<[E]%
\\
\>[30]{}(\lambda \hslambda (\anonymous ,\Varid{b})\;\Varid{a}\hsarrow{\rightarrow }{\mathpunct{.}}(\Varid{a},\Varid{fwd}\;bx\;\Varid{a}\;\Varid{b}))\;{}\<[E]%
\\
\>[30]{}(\lambda \hslambda \Varid{a}\hsarrow{\rightarrow }{\mathpunct{.}}(\Varid{point},\Varid{point})),{}\<[E]%
\\
\>[24]{}\Conid{Lens}\;{}\<[30]%
\>[30]{}\Varid{snd}\;{}\<[E]%
\\
\>[30]{}(\lambda \hslambda (\Varid{a},\anonymous )\;\Varid{b}\hsarrow{\rightarrow }{\mathpunct{.}}(\Varid{bwd}\;bx\;\Varid{a}\;\Varid{b},\Varid{b}))\;{}\<[E]%
\\
\>[30]{}(\lambda \hslambda \Varid{b}\hsarrow{\rightarrow }{\mathpunct{.}}(\Varid{point},\Varid{point}))){}\<[E]%
\ColumnHook
\end{hscode}\resethooks

\subsection{Symmetric Lenses}
\begin{hscode}\SaveRestoreHook
\column{B}{@{}>{\hspre}l<{\hspost}@{}}%
\column{3}{@{}>{\hspre}l<{\hspost}@{}}%
\column{E}{@{}>{\hspre}l<{\hspost}@{}}%
\>[3]{}\mathbf{module}\;\Conid{SLens}\;\mathbf{where}{}\<[E]%
\\
\>[3]{}\mathbf{import}\;\Conid{Lens}{}\<[E]%
\\
\>[3]{}\mathbf{import}\;\Conid{RelBX}{}\<[E]%
\ColumnHook
\end{hscode}\resethooks
Symmetric lenses (with explicit points) 
 \begin{hscode}\SaveRestoreHook
\column{B}{@{}>{\hspre}l<{\hspost}@{}}%
\column{3}{@{}>{\hspre}l<{\hspost}@{}}%
\column{31}{@{}>{\hspre}l<{\hspost}@{}}%
\column{E}{@{}>{\hspre}l<{\hspost}@{}}%
\>[3]{}\mathbf{data}\;\Conid{SLens}\;\Varid{c}\;\Varid{a}\;\Varid{b}\mathrel{=}\Conid{SLens}\;\{\mskip1.5mu {}\<[31]%
\>[31]{}\Varid{putr}\mathbin{::}(\Varid{a},\Varid{c})\hsarrow{\rightarrow }{\mathpunct{.}}(\Varid{b},\Varid{c}),{}\<[E]%
\\
\>[31]{}\Varid{putl}\mathbin{::}(\Varid{b},\Varid{c})\hsarrow{\rightarrow }{\mathpunct{.}}(\Varid{a},\Varid{c}),{}\<[E]%
\\
\>[31]{}\Varid{missing}\mathbin{::}\Varid{c}\mskip1.5mu\}{}\<[E]%
\ColumnHook
\end{hscode}\resethooks
Dual
\begin{hscode}\SaveRestoreHook
\column{B}{@{}>{\hspre}l<{\hspost}@{}}%
\column{3}{@{}>{\hspre}l<{\hspost}@{}}%
\column{E}{@{}>{\hspre}l<{\hspost}@{}}%
\>[3]{}\Varid{dualSL}\;\Varid{sl}\mathrel{=}\Conid{SLens}\;(\Varid{putl}\;\Varid{sl})\;(\Varid{putr}\;\Varid{sl})\;\Varid{missing}{}\<[E]%
\ColumnHook
\end{hscode}\resethooks
From asymmetric lenses
\begin{hscode}\SaveRestoreHook
\column{B}{@{}>{\hspre}l<{\hspost}@{}}%
\column{3}{@{}>{\hspre}l<{\hspost}@{}}%
\column{5}{@{}>{\hspre}l<{\hspost}@{}}%
\column{12}{@{}>{\hspre}l<{\hspost}@{}}%
\column{26}{@{}>{\hspre}c<{\hspost}@{}}%
\column{26E}{@{}l@{}}%
\column{29}{@{}>{\hspre}l<{\hspost}@{}}%
\column{39}{@{}>{\hspre}l<{\hspost}@{}}%
\column{42}{@{}>{\hspre}l<{\hspost}@{}}%
\column{E}{@{}>{\hspre}l<{\hspost}@{}}%
\>[3]{}\Varid{lens2symlens}\mathbin{::}\Conid{Lens}\;\Varid{a}\;\Varid{b}\hsarrow{\rightarrow }{\mathpunct{.}}\Conid{SLens}\;(\Conid{Maybe}\;\Varid{a})\;\Varid{a}\;\Varid{b}{}\<[E]%
\\
\>[3]{}\Varid{lens2symlens}\;\Varid{l}\mathrel{=}\Conid{SLens}\;\Varid{putr}\;\Varid{putl}\;\Conid{Nothing}{}\<[E]%
\\
\>[3]{}\hsindent{2}{}\<[5]%
\>[5]{}\mathbf{where}\;{}\<[12]%
\>[12]{}\Varid{putr}\;(\Varid{a},\anonymous ){}\<[26]%
\>[26]{}\mathrel{=}{}\<[26E]%
\>[29]{}(\Varid{view}\;\Varid{l}\;\Varid{a},\Conid{Just}\;\Varid{a}){}\<[E]%
\\
\>[12]{}\Varid{putl}\;(\Varid{b'},\Varid{ma}){}\<[26]%
\>[26]{}\mathrel{=}{}\<[26E]%
\>[29]{}\mathbf{let}\;\Varid{a'}\mathrel{=}{}\<[39]%
\>[39]{}\mathbf{case}\;\Varid{ma}\;\mathbf{of}{}\<[E]%
\\
\>[39]{}\hsindent{3}{}\<[42]%
\>[42]{}\Conid{Nothing}\hsarrow{\rightarrow }{\mathpunct{.}}\Varid{create}\;\Varid{l}\;\Varid{b'}{}\<[E]%
\\
\>[39]{}\hsindent{3}{}\<[42]%
\>[42]{}\Varid{a}\hsarrow{\rightarrow }{\mathpunct{.}}\Varid{update}\;\Varid{l}\;\Varid{a'}\;\Varid{b'}{}\<[E]%
\\
\>[29]{}\mathbf{in}\;(\Varid{create}\;\Varid{l}\;\Varid{b'},\Conid{Just}\;\Varid{a'}){}\<[E]%
\ColumnHook
\end{hscode}\resethooks
From relational bx
\begin{hscode}\SaveRestoreHook
\column{B}{@{}>{\hspre}l<{\hspost}@{}}%
\column{3}{@{}>{\hspre}l<{\hspost}@{}}%
\column{19}{@{}>{\hspre}l<{\hspost}@{}}%
\column{27}{@{}>{\hspre}l<{\hspost}@{}}%
\column{44}{@{}>{\hspre}l<{\hspost}@{}}%
\column{E}{@{}>{\hspre}l<{\hspost}@{}}%
\>[3]{}\Varid{rel2symlens}\mathbin{::}{}\<[19]%
\>[19]{}(\Conid{Pointed}\;\Varid{a},\Conid{Pointed}\;\Varid{b})\Rightarrow \Conid{RelBX}\;\Varid{a}\;\Varid{b}\hsarrow{\rightarrow }{\mathpunct{.}}\Conid{SLens}\;(\Varid{a},\Varid{b})\;\Varid{a}\;\Varid{b}{}\<[E]%
\\
\>[3]{}\Varid{rel2symlens}\;bx\mathrel{=}\Conid{SLens}\;{}\<[27]%
\>[27]{}(\lambda \hslambda (\Varid{a'},(\Varid{a},\Varid{b}))\hsarrow{\rightarrow }{\mathpunct{.}}{}\<[44]%
\>[44]{}\mathbf{let}\;\Varid{b'}\mathrel{=}\Varid{fwd}\;bx\;\Varid{a'}\;\Varid{b}{}\<[E]%
\\
\>[44]{}\mathbf{in}\;(\Varid{b'},(\Varid{a'},\Varid{b'})))\;{}\<[E]%
\\
\>[27]{}(\lambda \hslambda (\Varid{b'},(\Varid{a},\Varid{b}))\hsarrow{\rightarrow }{\mathpunct{.}}{}\<[44]%
\>[44]{}\mathbf{let}\;\Varid{a'}\mathrel{=}\Varid{bwd}\;bx\;\Varid{a}\;\Varid{b'}{}\<[E]%
\\
\>[44]{}\mathbf{in}\;(\Varid{a'},(\Varid{a'},\Varid{b'})))\;{}\<[E]%
\\
\>[27]{}(\Varid{point},\Varid{point}){}\<[E]%
\ColumnHook
\end{hscode}\resethooks
To asymmetric lens, on the left...
 \begin{hscode}\SaveRestoreHook
\column{B}{@{}>{\hspre}l<{\hspost}@{}}%
\column{3}{@{}>{\hspre}l<{\hspost}@{}}%
\column{5}{@{}>{\hspre}l<{\hspost}@{}}%
\column{28}{@{}>{\hspre}l<{\hspost}@{}}%
\column{E}{@{}>{\hspre}l<{\hspost}@{}}%
\>[3]{}\Varid{symlens2lensL}\mathbin{::}\Conid{SLens}\;\Varid{c}\;\Varid{a}\;\Varid{b}\hsarrow{\rightarrow }{\mathpunct{.}}\Conid{Lens}\;(\Varid{a},\Varid{b},\Varid{c})\;\Varid{a}{}\<[E]%
\\
\>[3]{}\Varid{symlens2lensL}\;\Varid{sl}\mathrel{=}\Conid{Lens}\;{}\<[28]%
\>[28]{}(\lambda \hslambda (\Varid{a},\Varid{b},\Varid{c})\hsarrow{\rightarrow }{\mathpunct{.}}\Varid{a})\;{}\<[E]%
\\
\>[28]{}(\lambda \hslambda (\anonymous ,\anonymous ,\Varid{c})\hsarrow{\rightarrow }{\mathpunct{.}}\Varid{fixup}\;\Varid{c})\;{}\<[E]%
\\
\>[28]{}(\Varid{fixup}\;(\Varid{missing}\;\Varid{sl})){}\<[E]%
\\
\>[3]{}\hsindent{2}{}\<[5]%
\>[5]{}\mathbf{where}\;\Varid{fixup}\;\Varid{c}\;\Varid{a'}\mathrel{=}\mathbf{let}\;(\Varid{b'},\Varid{c'})\mathrel{=}\Varid{putr}\;\Varid{sl}\;(\Varid{a'},\Varid{c})\;\mathbf{in}\;(\Varid{a'},\Varid{b'},\Varid{c'}){}\<[E]%
\ColumnHook
\end{hscode}\resethooks
...and on the right
 \begin{hscode}\SaveRestoreHook
\column{B}{@{}>{\hspre}l<{\hspost}@{}}%
\column{3}{@{}>{\hspre}l<{\hspost}@{}}%
\column{5}{@{}>{\hspre}l<{\hspost}@{}}%
\column{28}{@{}>{\hspre}l<{\hspost}@{}}%
\column{E}{@{}>{\hspre}l<{\hspost}@{}}%
\>[3]{}\Varid{symlens2lensR}\mathbin{::}\Conid{SLens}\;\Varid{c}\;\Varid{a}\;\Varid{b}\hsarrow{\rightarrow }{\mathpunct{.}}\Conid{Lens}\;(\Varid{a},\Varid{b},\Varid{c})\;\Varid{b}{}\<[E]%
\\
\>[3]{}\Varid{symlens2lensR}\;\Varid{sl}\mathrel{=}\Conid{Lens}\;{}\<[28]%
\>[28]{}(\lambda \hslambda (\Varid{a},\Varid{b},\Varid{c})\hsarrow{\rightarrow }{\mathpunct{.}}\Varid{b})\;{}\<[E]%
\\
\>[28]{}(\lambda \hslambda (\anonymous ,\anonymous ,\Varid{c})\hsarrow{\rightarrow }{\mathpunct{.}}\Varid{fixup}\;\Varid{c})\;{}\<[E]%
\\
\>[28]{}(\Varid{fixup}\;(\Varid{missing}\;\Varid{sl})){}\<[E]%
\\
\>[3]{}\hsindent{2}{}\<[5]%
\>[5]{}\mathbf{where}\;\Varid{fixup}\;\Varid{c}\;\Varid{b'}\mathrel{=}\mathbf{let}\;(\Varid{a'},\Varid{c'})\mathrel{=}\Varid{putl}\;\Varid{sl}\;(\Varid{b'},\Varid{c})\;\mathbf{in}\;(\Varid{a'},\Varid{b'},\Varid{c'}){}\<[E]%
\ColumnHook
\end{hscode}\resethooks
Spans and cospans: used to simplify some definitions.
\begin{hscode}\SaveRestoreHook
\column{B}{@{}>{\hspre}l<{\hspost}@{}}%
\column{3}{@{}>{\hspre}l<{\hspost}@{}}%
\column{E}{@{}>{\hspre}l<{\hspost}@{}}%
\>[3]{}\mathbf{type}\;\Conid{Span}\;\Varid{c}\;\Varid{y1}\;\Varid{x}\;\Varid{y2}\mathrel{=}(\Varid{c}\;\Varid{x}\;\Varid{y1},\Varid{c}\;\Varid{x}\;\Varid{y2}){}\<[E]%
\\
\>[3]{}\mathbf{type}\;\Conid{Cospan}\;\Varid{c}\;\Varid{y1}\;\Varid{z}\;\Varid{y2}\mathrel{=}(\Varid{c}\;\Varid{y1}\;\Varid{z},\Varid{c}\;\Varid{y2}\;\Varid{z}){}\<[E]%
\ColumnHook
\end{hscode}\resethooks
To a span
\begin{hscode}\SaveRestoreHook
\column{B}{@{}>{\hspre}l<{\hspost}@{}}%
\column{3}{@{}>{\hspre}l<{\hspost}@{}}%
\column{E}{@{}>{\hspre}l<{\hspost}@{}}%
\>[3]{}\Varid{symlens2lensSpan}\mathbin{::}\Conid{SLens}\;\Varid{c}\;\Varid{a}\;\Varid{b}\hsarrow{\rightarrow }{\mathpunct{.}}\Conid{Span}\;\Conid{Lens}\;\Varid{a}\;(\Varid{a},\Varid{b},\Varid{c})\;\Varid{b}{}\<[E]%
\\
\>[3]{}\Varid{symlens2lensSpan}\;\Varid{sl}\mathrel{=}(\Varid{symlens2lensL}\;\Varid{sl},\Varid{symlens2lensR}\;\Varid{sl}){}\<[E]%
\ColumnHook
\end{hscode}\resethooks
From a span
\begin{hscode}\SaveRestoreHook
\column{B}{@{}>{\hspre}l<{\hspost}@{}}%
\column{3}{@{}>{\hspre}l<{\hspost}@{}}%
\column{37}{@{}>{\hspre}l<{\hspost}@{}}%
\column{39}{@{}>{\hspre}l<{\hspost}@{}}%
\column{60}{@{}>{\hspre}l<{\hspost}@{}}%
\column{E}{@{}>{\hspre}l<{\hspost}@{}}%
\>[3]{}\Varid{lensSpan2symlens}\mathbin{::}\Conid{Span}\;\Conid{Lens}\;\Varid{a}\;\Varid{c}\;\Varid{b}\hsarrow{\rightarrow }{\mathpunct{.}}\Conid{SLens}\;(\Conid{Maybe}\;\Varid{c})\;\Varid{a}\;\Varid{b}{}\<[E]%
\\
\>[3]{}\Varid{lensSpan2symlens}\;(\Varid{l}_{1},\Varid{l}_{2})\mathrel{=}\Conid{SLens}\;{}\<[37]%
\>[37]{}(\lambda \hslambda (\Varid{a},\Varid{mc})\hsarrow{\rightarrow }{\mathpunct{.}}{}\<[E]%
\\
\>[37]{}\hsindent{2}{}\<[39]%
\>[39]{}\mathbf{let}\;\Varid{c'}\mathrel{=}\mathbf{case}\;\Varid{mc}\;\mathbf{of}\;{}\<[60]%
\>[60]{}\Conid{Nothing}\hsarrow{\rightarrow }{\mathpunct{.}}\Varid{create}\;\Varid{l}_{1}\;\Varid{a}{}\<[E]%
\\
\>[60]{}\Conid{Just}\;\Varid{c}\hsarrow{\rightarrow }{\mathpunct{.}}\Varid{update}\;\Varid{l}_{1}\;\Varid{c}\;\Varid{a}{}\<[E]%
\\
\>[37]{}\hsindent{2}{}\<[39]%
\>[39]{}\mathbf{in}\;(\Varid{view}\;\Varid{l}_{2}\;\Varid{c'},\Conid{Just}\;\Varid{c'}))\;{}\<[E]%
\\
\>[37]{}(\lambda \hslambda (\Varid{b},\Varid{mc})\hsarrow{\rightarrow }{\mathpunct{.}}{}\<[E]%
\\
\>[37]{}\hsindent{2}{}\<[39]%
\>[39]{}\mathbf{let}\;\Varid{c'}\mathrel{=}\mathbf{case}\;\Varid{mc}\;\mathbf{of}\;{}\<[60]%
\>[60]{}\Conid{Nothing}\hsarrow{\rightarrow }{\mathpunct{.}}\Varid{create}\;\Varid{l}_{2}\;\Varid{b}{}\<[E]%
\\
\>[60]{}\Conid{Just}\;\Varid{c}\hsarrow{\rightarrow }{\mathpunct{.}}\Varid{update}\;\Varid{l}_{2}\;\Varid{c}\;\Varid{b}{}\<[E]%
\\
\>[37]{}\hsindent{2}{}\<[39]%
\>[39]{}\mathbf{in}\;(\Varid{view}\;\Varid{l}_{1}\;\Varid{c'},\Conid{Just}\;\Varid{c'}))\;{}\<[E]%
\\
\>[37]{}\Conid{Nothing}{}\<[E]%
\ColumnHook
\end{hscode}\resethooks

\subsection{Monadic Symmetric Lenses}
\begin{hscode}\SaveRestoreHook
\column{B}{@{}>{\hspre}l<{\hspost}@{}}%
\column{3}{@{}>{\hspre}l<{\hspost}@{}}%
\column{E}{@{}>{\hspre}l<{\hspost}@{}}%
\>[3]{}\mathbf{module}\;\Conid{SMLens}\;\mathbf{where}{}\<[E]%
\\
\>[3]{}\mathbf{import}\;\Conid{MLens}{}\<[E]%
\ColumnHook
\end{hscode}\resethooks
Symmetric lenses (with explicit points) 
 \begin{hscode}\SaveRestoreHook
\column{B}{@{}>{\hspre}l<{\hspost}@{}}%
\column{3}{@{}>{\hspre}l<{\hspost}@{}}%
\column{35}{@{}>{\hspre}l<{\hspost}@{}}%
\column{E}{@{}>{\hspre}l<{\hspost}@{}}%
\>[3]{}\mathbf{data}\;\Conid{SMLens}\;\Varid{m}\;\Varid{c}\;\Varid{a}\;\Varid{b}\mathrel{=}\Conid{SMLens}\;\{\mskip1.5mu {}\<[35]%
\>[35]{}\Varid{mputr}\mathbin{::}(\Varid{a},\Varid{c})\hsarrow{\rightarrow }{\mathpunct{.}}\Varid{m}\;(\Varid{b},\Varid{c}),{}\<[E]%
\\
\>[35]{}\Varid{mputl}\mathbin{::}(\Varid{b},\Varid{c})\hsarrow{\rightarrow }{\mathpunct{.}}\Varid{m}\;(\Varid{a},\Varid{c}),{}\<[E]%
\\
\>[35]{}\Varid{mmissing}\mathbin{::}\Varid{c}\mskip1.5mu\}{}\<[E]%
\ColumnHook
\end{hscode}\resethooks
Dual
\begin{hscode}\SaveRestoreHook
\column{B}{@{}>{\hspre}l<{\hspost}@{}}%
\column{3}{@{}>{\hspre}l<{\hspost}@{}}%
\column{E}{@{}>{\hspre}l<{\hspost}@{}}%
\>[3]{}\Varid{dualSL}\;\Varid{sl}\mathrel{=}\Conid{SMLens}\;(\Varid{mputl}\;\Varid{sl})\;(\Varid{mputr}\;\Varid{sl})\;\Varid{mmissing}{}\<[E]%
\ColumnHook
\end{hscode}\resethooks
To asymmetric MLens, on the left...
 \begin{hscode}\SaveRestoreHook
\column{B}{@{}>{\hspre}l<{\hspost}@{}}%
\column{3}{@{}>{\hspre}l<{\hspost}@{}}%
\column{5}{@{}>{\hspre}l<{\hspost}@{}}%
\column{23}{@{}>{\hspre}l<{\hspost}@{}}%
\column{30}{@{}>{\hspre}l<{\hspost}@{}}%
\column{31}{@{}>{\hspre}l<{\hspost}@{}}%
\column{E}{@{}>{\hspre}l<{\hspost}@{}}%
\>[3]{}\Varid{symMLens2MLensL}\mathbin{::}{}\<[23]%
\>[23]{}\Conid{Monad}\;\Varid{m}\Rightarrow \Conid{SMLens}\;\Varid{m}\;\Varid{c}\;\Varid{a}\;\Varid{b}\hsarrow{\rightarrow }{\mathpunct{.}}\Conid{MLens}\;\Varid{m}\;(\Varid{a},\Varid{b},\Varid{c})\;\Varid{a}{}\<[E]%
\\
\>[3]{}\Varid{symMLens2MLensL}\;\Varid{sl}\mathrel{=}\Conid{MLens}\;{}\<[31]%
\>[31]{}(\lambda \hslambda (\Varid{a},\Varid{b},\Varid{c})\hsarrow{\rightarrow }{\mathpunct{.}}\Varid{a})\;{}\<[E]%
\\
\>[31]{}(\lambda \hslambda (\anonymous ,\anonymous ,\Varid{c})\hsarrow{\rightarrow }{\mathpunct{.}}\Varid{fixup}\;\Varid{c})\;{}\<[E]%
\\
\>[31]{}(\Varid{fixup}\;(\Varid{mmissing}\;\Varid{sl})){}\<[E]%
\\
\>[3]{}\hsindent{2}{}\<[5]%
\>[5]{}\mathbf{where}\;\Varid{fixup}\;\Varid{c}\;\Varid{a'}\mathrel{=}\mathbf{do}\;\{\mskip1.5mu {}\<[30]%
\>[30]{}(\Varid{b'},\Varid{c'})\leftarrow \Varid{mputr}\;\Varid{sl}\;(\Varid{a'},\Varid{c});\Varid{return}\;(\Varid{a'},\Varid{b'},\Varid{c'})\mskip1.5mu\}{}\<[E]%
\ColumnHook
\end{hscode}\resethooks
...and on the right
 \begin{hscode}\SaveRestoreHook
\column{B}{@{}>{\hspre}l<{\hspost}@{}}%
\column{3}{@{}>{\hspre}l<{\hspost}@{}}%
\column{5}{@{}>{\hspre}l<{\hspost}@{}}%
\column{23}{@{}>{\hspre}l<{\hspost}@{}}%
\column{30}{@{}>{\hspre}l<{\hspost}@{}}%
\column{31}{@{}>{\hspre}l<{\hspost}@{}}%
\column{E}{@{}>{\hspre}l<{\hspost}@{}}%
\>[3]{}\Varid{symMLens2MLensR}\mathbin{::}{}\<[23]%
\>[23]{}\Conid{Monad}\;\Varid{m}\Rightarrow \Conid{SMLens}\;\Varid{m}\;\Varid{c}\;\Varid{a}\;\Varid{b}\hsarrow{\rightarrow }{\mathpunct{.}}\Conid{MLens}\;\Varid{m}\;(\Varid{a},\Varid{b},\Varid{c})\;\Varid{b}{}\<[E]%
\\
\>[3]{}\Varid{symMLens2MLensR}\;\Varid{sl}\mathrel{=}\Conid{MLens}\;{}\<[31]%
\>[31]{}(\lambda \hslambda (\Varid{a},\Varid{b},\Varid{c})\hsarrow{\rightarrow }{\mathpunct{.}}\Varid{b})\;{}\<[E]%
\\
\>[31]{}(\lambda \hslambda (\anonymous ,\anonymous ,\Varid{c})\hsarrow{\rightarrow }{\mathpunct{.}}\Varid{fixup}\;\Varid{c})\;{}\<[E]%
\\
\>[31]{}(\Varid{fixup}\;(\Varid{mmissing}\;\Varid{sl})){}\<[E]%
\\
\>[3]{}\hsindent{2}{}\<[5]%
\>[5]{}\mathbf{where}\;\Varid{fixup}\;\Varid{c}\;\Varid{b'}\mathrel{=}\mathbf{do}\;\{\mskip1.5mu {}\<[30]%
\>[30]{}(\Varid{a'},\Varid{c'})\leftarrow \Varid{mputl}\;\Varid{sl}\;(\Varid{b'},\Varid{c});\Varid{return}\;(\Varid{a'},\Varid{b'},\Varid{c'})\mskip1.5mu\}{}\<[E]%
\ColumnHook
\end{hscode}\resethooks
Spans and cospans: used to simplify some definitions.
\begin{hscode}\SaveRestoreHook
\column{B}{@{}>{\hspre}l<{\hspost}@{}}%
\column{3}{@{}>{\hspre}l<{\hspost}@{}}%
\column{E}{@{}>{\hspre}l<{\hspost}@{}}%
\>[3]{}\mathbf{type}\;\Conid{Span}\;\Varid{c}\;\Varid{y1}\;\Varid{x}\;\Varid{y2}\mathrel{=}(\Varid{c}\;\Varid{x}\;\Varid{y1},\Varid{c}\;\Varid{x}\;\Varid{y2}){}\<[E]%
\\
\>[3]{}\mathbf{type}\;\Conid{Cospan}\;\Varid{c}\;\Varid{y1}\;\Varid{z}\;\Varid{y2}\mathrel{=}(\Varid{c}\;\Varid{y1}\;\Varid{z},\Varid{c}\;\Varid{y2}\;\Varid{z}){}\<[E]%
\ColumnHook
\end{hscode}\resethooks
To a span
\begin{hscode}\SaveRestoreHook
\column{B}{@{}>{\hspre}l<{\hspost}@{}}%
\column{3}{@{}>{\hspre}l<{\hspost}@{}}%
\column{24}{@{}>{\hspre}l<{\hspost}@{}}%
\column{28}{@{}>{\hspre}l<{\hspost}@{}}%
\column{E}{@{}>{\hspre}l<{\hspost}@{}}%
\>[3]{}\Varid{symlens2lensSpan}\mathbin{::}{}\<[24]%
\>[24]{}\Conid{Monad}\;\Varid{m}\Rightarrow \Conid{SMLens}\;\Varid{m}\;\Varid{c}\;\Varid{a}\;\Varid{b}\hsarrow{\rightarrow }{\mathpunct{.}}\Conid{Span}\;(\Conid{MLens}\;\Varid{m})\;\Varid{a}\;(\Varid{a},\Varid{b},\Varid{c})\;\Varid{b}{}\<[E]%
\\
\>[3]{}\Varid{symlens2lensSpan}\;\Varid{sl}\mathrel{=}({}\<[28]%
\>[28]{}\Varid{symMLens2MLensL}\;\Varid{sl},\Varid{symMLens2MLensR}\;\Varid{sl}){}\<[E]%
\ColumnHook
\end{hscode}\resethooks
and from a span
\begin{hscode}\SaveRestoreHook
\column{B}{@{}>{\hspre}l<{\hspost}@{}}%
\column{3}{@{}>{\hspre}l<{\hspost}@{}}%
\column{5}{@{}>{\hspre}l<{\hspost}@{}}%
\column{15}{@{}>{\hspre}l<{\hspost}@{}}%
\column{24}{@{}>{\hspre}l<{\hspost}@{}}%
\column{28}{@{}>{\hspre}l<{\hspost}@{}}%
\column{32}{@{}>{\hspre}l<{\hspost}@{}}%
\column{50}{@{}>{\hspre}l<{\hspost}@{}}%
\column{E}{@{}>{\hspre}l<{\hspost}@{}}%
\>[3]{}\Varid{lensSpan2symlens}\mathbin{::}{}\<[24]%
\>[24]{}\Conid{Monad}\;\Varid{m}\Rightarrow \Conid{Span}\;(\Conid{MLens}\;\Varid{m})\;\Varid{a}\;\Varid{c}\;\Varid{b}\hsarrow{\rightarrow }{\mathpunct{.}}\Conid{SMLens}\;\Varid{m}\;(\Conid{Maybe}\;\Varid{c})\;\Varid{a}\;\Varid{b}{}\<[E]%
\\
\>[3]{}\Varid{lensSpan2symlens}\;(\Varid{l}_{1},\Varid{l}_{2}){}\<[E]%
\\
\>[3]{}\hsindent{2}{}\<[5]%
\>[5]{}\mathrel{=}\Conid{SMLens}\;{}\<[15]%
\>[15]{}(\lambda \hslambda (\Varid{a},\Varid{mc})\hsarrow{\rightarrow }{\mathpunct{.}}{}\<[28]%
\>[28]{}\mathbf{do}\;{}\<[32]%
\>[32]{}\Varid{c'}\leftarrow \mathbf{case}\;\Varid{mc}\;\mathbf{of}\;{}\<[50]%
\>[50]{}\Conid{Nothing}\hsarrow{\rightarrow }{\mathpunct{.}}\Varid{mcreate}\;\Varid{l}_{1}\;\Varid{a}{}\<[E]%
\\
\>[50]{}\Conid{Just}\;\Varid{c}\hsarrow{\rightarrow }{\mathpunct{.}}\Varid{mupdate}\;\Varid{l}_{1}\;\Varid{c}\;\Varid{a}{}\<[E]%
\\
\>[32]{}\Varid{return}\;(\Varid{mview}\;\Varid{l}_{2}\;\Varid{c'},\Conid{Just}\;\Varid{c'}))\;{}\<[E]%
\\
\>[15]{}(\lambda \hslambda (\Varid{b},\Varid{mc})\hsarrow{\rightarrow }{\mathpunct{.}}{}\<[28]%
\>[28]{}\mathbf{do}\;{}\<[32]%
\>[32]{}\Varid{c'}\leftarrow \mathbf{case}\;\Varid{mc}\;\mathbf{of}\;{}\<[50]%
\>[50]{}\Conid{Nothing}\hsarrow{\rightarrow }{\mathpunct{.}}\Varid{mcreate}\;\Varid{l}_{2}\;\Varid{b}{}\<[E]%
\\
\>[50]{}\Conid{Just}\;\Varid{c}\hsarrow{\rightarrow }{\mathpunct{.}}\Varid{mupdate}\;\Varid{l}_{2}\;\Varid{c}\;\Varid{b}{}\<[E]%
\\
\>[32]{}\Varid{return}\;(\Varid{mview}\;\Varid{l}_{1}\;\Varid{c'},\Conid{Just}\;\Varid{c'}))\;{}\<[E]%
\\
\>[15]{}\Conid{Nothing}{}\<[E]%
\ColumnHook
\end{hscode}\resethooks
Composition (naive)
\begin{hscode}\SaveRestoreHook
\column{B}{@{}>{\hspre}l<{\hspost}@{}}%
\column{3}{@{}>{\hspre}l<{\hspost}@{}}%
\column{5}{@{}>{\hspre}l<{\hspost}@{}}%
\column{24}{@{}>{\hspre}l<{\hspost}@{}}%
\column{29}{@{}>{\hspre}l<{\hspost}@{}}%
\column{30}{@{}>{\hspre}l<{\hspost}@{}}%
\column{E}{@{}>{\hspre}l<{\hspost}@{}}%
\>[3]{}(\mathbin{;})\mathbin{::}\Conid{Monad}\;\Varid{m}\Rightarrow {}\<[29]%
\>[29]{}\Conid{SMLens}\;\Varid{m}\;\Varid{c}_{1}\;\Varid{a}\;\Varid{b}\hsarrow{\rightarrow }{\mathpunct{.}}\Conid{SMLens}\;\Varid{m}\;\Varid{c}_{2}\;\Varid{b}\;\Varid{c}\hsarrow{\rightarrow }{\mathpunct{.}}\Conid{SMLens}\;\Varid{m}\;(\Varid{c}_{1},\Varid{c}_{2})\;\Varid{a}\;\Varid{c}{}\<[E]%
\\
\>[3]{}(\mathbin{;})\;\Varid{sl}_{1}\;\Varid{sl}_{2}\mathrel{=}\Conid{SMLens}\;\Varid{mput}_{R}\;\Varid{mput}_{L}\;\Varid{mMissing}\;\mathbf{where}{}\<[E]%
\\
\>[3]{}\hsindent{2}{}\<[5]%
\>[5]{}\Varid{mput}_{R}\;(\Varid{a},(\Varid{c}_{1},\Varid{c}_{2})){}\<[24]%
\>[24]{}\mathrel{=}\mathbf{do}\;{}\<[30]%
\>[30]{}(\Varid{b},\Varid{c}_{1}')\leftarrow \Varid{mputr}\;\Varid{sl}_{1}\;(\Varid{a},\Varid{c}_{1}){}\<[E]%
\\
\>[30]{}(\Varid{c},\Varid{c}_{2}')\leftarrow \Varid{mputr}\;\Varid{sl}_{2}\;(\Varid{b},\Varid{c}_{2}){}\<[E]%
\\
\>[30]{}\Varid{return}\;(\Varid{c},(\Varid{c}_{1}',\Varid{c}_{2}')){}\<[E]%
\\
\>[3]{}\hsindent{2}{}\<[5]%
\>[5]{}\Varid{mput}_{L}\;(\Varid{c},(\Varid{c}_{1},\Varid{c}_{2})){}\<[24]%
\>[24]{}\mathrel{=}\mathbf{do}\;{}\<[30]%
\>[30]{}(\Varid{b},\Varid{c}_{2}')\leftarrow \Varid{mputl}\;\Varid{sl}_{2}\;(\Varid{c},\Varid{c}_{2}){}\<[E]%
\\
\>[30]{}(\Varid{a},\Varid{c}_{1}')\leftarrow \Varid{mputl}\;\Varid{sl}_{1}\;(\Varid{b},\Varid{c}_{1}){}\<[E]%
\\
\>[30]{}\Varid{return}\;(\Varid{a},(\Varid{c}_{1}',\Varid{c}_{2}')){}\<[E]%
\\
\>[3]{}\hsindent{2}{}\<[5]%
\>[5]{}\Varid{mMissing}{}\<[24]%
\>[24]{}\mathrel{=}(\Varid{mmissing}\;\Varid{sl}_{1},\Varid{mmissing}\;\Varid{sl}_{2}){}\<[E]%
\ColumnHook
\end{hscode}\resethooks

\subsection{StateTBX}
\begin{hscode}\SaveRestoreHook
\column{B}{@{}>{\hspre}l<{\hspost}@{}}%
\column{3}{@{}>{\hspre}l<{\hspost}@{}}%
\column{E}{@{}>{\hspre}l<{\hspost}@{}}%
\>[3]{}\mbox{\enskip\{-\# LANGUAGE RankNTypes, FlexibleContexts  \#-\}\enskip}{}\<[E]%
\\[\blanklineskip]%
\>[3]{}\mathbf{module}\;\Conid{StateTBX}\;\mathbf{where}{}\<[E]%
\\
\>[3]{}\mathbf{import}\;\Conid{\Conid{Control}.\Conid{Monad}.State}\;\Varid{as}\;\Conid{State}{}\<[E]%
\\
\>[3]{}\mathbf{import}\;\Conid{Control}.\Conid{Monad}.\Conid{Id}\;\Varid{as}\;\Conid{Id}{}\<[E]%
\\
\>[3]{}\mathbf{import}\;\Conid{BX}{}\<[E]%
\\
\>[3]{}\mathbf{import}\;\Conid{Iso}{}\<[E]%
\\
\>[3]{}\mathbf{import}\;\Conid{Lens}{}\<[E]%
\\
\>[3]{}\mathbf{import}\;\Conid{RelBX}{}\<[E]%
\\
\>[3]{}\mathbf{import}\;\Conid{SLens}\;\Varid{as}\;\Conid{SLens}{}\<[E]%
\\
\>[3]{}\mathbf{import}\;\Conid{MLens}\;\Varid{as}\;\Conid{MLens}{}\<[E]%
\\
\>[3]{}\mathbf{import}\;\Conid{SMLens}\;\Varid{as}\;\Conid{SMLens}{}\<[E]%
\ColumnHook
\end{hscode}\resethooks
The interface
\begin{hscode}\SaveRestoreHook
\column{B}{@{}>{\hspre}l<{\hspost}@{}}%
\column{3}{@{}>{\hspre}l<{\hspost}@{}}%
\column{8}{@{}>{\hspre}l<{\hspost}@{}}%
\column{E}{@{}>{\hspre}l<{\hspost}@{}}%
\>[3]{}\mathbf{data}\;\Conid{StateTBX}\;\Varid{m}\;\Varid{s}\;\Varid{a}\;\Varid{b}\mathrel{=}\Conid{StateTBX}\;\{\mskip1.5mu {}\<[E]%
\\
\>[3]{}\hsindent{5}{}\<[8]%
\>[8]{}\Varid{getl}\mathbin{::}\Conid{StateT}\;\Varid{s}\;\Varid{m}\;\Varid{a},{}\<[E]%
\\
\>[3]{}\hsindent{5}{}\<[8]%
\>[8]{}\Varid{setl}\mathbin{::}\Varid{a}\hsarrow{\rightarrow }{\mathpunct{.}}\Conid{StateT}\;\Varid{s}\;\Varid{m}\;(),{}\<[E]%
\\
\>[3]{}\hsindent{5}{}\<[8]%
\>[8]{}\Varid{initl}\mathbin{::}\Varid{a}\hsarrow{\rightarrow }{\mathpunct{.}}\Varid{m}\;\Varid{s},{}\<[E]%
\\
\>[3]{}\hsindent{5}{}\<[8]%
\>[8]{}\Varid{getr}\mathbin{::}\Conid{StateT}\;\Varid{s}\;\Varid{m}\;\Varid{b},{}\<[E]%
\\
\>[3]{}\hsindent{5}{}\<[8]%
\>[8]{}\Varid{setr}\mathbin{::}\Varid{b}\hsarrow{\rightarrow }{\mathpunct{.}}\Conid{StateT}\;\Varid{s}\;\Varid{m}\;(),{}\<[E]%
\\
\>[3]{}\hsindent{5}{}\<[8]%
\>[8]{}\Varid{initr}\mathbin{::}\Varid{b}\hsarrow{\rightarrow }{\mathpunct{.}}\Varid{m}\;\Varid{s}{}\<[E]%
\\
\>[3]{}\mskip1.5mu\}{}\<[E]%
\ColumnHook
\end{hscode}\resethooks
Variations on initialisation
\begin{hscode}\SaveRestoreHook
\column{B}{@{}>{\hspre}l<{\hspost}@{}}%
\column{3}{@{}>{\hspre}l<{\hspost}@{}}%
\column{12}{@{}>{\hspre}l<{\hspost}@{}}%
\column{16}{@{}>{\hspre}l<{\hspost}@{}}%
\column{E}{@{}>{\hspre}l<{\hspost}@{}}%
\>[3]{}\Varid{init2run}\mathbin{::}{}\<[16]%
\>[16]{}\Conid{Monad}\;\Varid{m}\Rightarrow (\Varid{a}\hsarrow{\rightarrow }{\mathpunct{.}}\Varid{m}\;\Varid{s})\hsarrow{\rightarrow }{\mathpunct{.}}\Conid{StateT}\;\Varid{s}\;\Varid{m}\;\Varid{x}\hsarrow{\rightarrow }{\mathpunct{.}}\Varid{a}\hsarrow{\rightarrow }{\mathpunct{.}}\Varid{m}\;(\Varid{x},\Varid{s}){}\<[E]%
\\
\>[3]{}\Varid{init2run}\;\Varid{init}\;\Varid{m}\;\Varid{a}\mathrel{=}\mathbf{do}\;\{\mskip1.5mu \Varid{s}\leftarrow \Varid{init}\;\Varid{a};\Varid{runStateT}\;\Varid{m}\;\Varid{s}\mskip1.5mu\}{}\<[E]%
\\[\blanklineskip]%
\>[3]{}\Varid{runl}\mathbin{::}{}\<[12]%
\>[12]{}\Conid{Monad}\;\Varid{m}\Rightarrow \Conid{StateTBX}\;\Varid{m}\;\Varid{s}\;\Varid{a}\;\Varid{b}\hsarrow{\rightarrow }{\mathpunct{.}}\Conid{StateT}\;\Varid{s}\;\Varid{m}\;\Varid{x}\hsarrow{\rightarrow }{\mathpunct{.}}\Varid{a}\hsarrow{\rightarrow }{\mathpunct{.}}\Varid{m}\;(\Varid{x},\Varid{s}){}\<[E]%
\\
\>[3]{}\Varid{runl}\;bx\mathrel{=}\Varid{init2run}\;(\Varid{initl}\;bx){}\<[E]%
\\[\blanklineskip]%
\>[3]{}\Varid{runr}\mathbin{::}{}\<[12]%
\>[12]{}\Conid{Monad}\;\Varid{m}\Rightarrow \Conid{StateTBX}\;\Varid{m}\;\Varid{s}\;\Varid{a}\;\Varid{b}\hsarrow{\rightarrow }{\mathpunct{.}}\Conid{StateT}\;\Varid{s}\;\Varid{m}\;\Varid{x}\hsarrow{\rightarrow }{\mathpunct{.}}\Varid{b}\hsarrow{\rightarrow }{\mathpunct{.}}\Varid{m}\;(\Varid{x},\Varid{s}){}\<[E]%
\\
\>[3]{}\Varid{runr}\;bx\mathrel{=}\Varid{init2run}\;(\Varid{initr}\;bx){}\<[E]%
\\[\blanklineskip]%
\>[3]{}\Varid{run2init}\mathbin{::}{}\<[16]%
\>[16]{}\Conid{Monad}\;\Varid{m}\Rightarrow (\forall \Varid{x}\hsforall \hsdot{\cdot }{.}\Conid{StateT}\;\Varid{s}\;\Varid{m}\;\Varid{x}\hsarrow{\rightarrow }{\mathpunct{.}}\Varid{a}\hsarrow{\rightarrow }{\mathpunct{.}}\Varid{m}\;(\Varid{x},\Varid{s}))\hsarrow{\rightarrow }{\mathpunct{.}}\Varid{a}\hsarrow{\rightarrow }{\mathpunct{.}}\Varid{m}\;\Varid{s}{}\<[E]%
\\
\>[3]{}\Varid{run2init}\;\Varid{run}\;\Varid{a}\mathrel{=}\mathbf{do}\;\{\mskip1.5mu ((),\Varid{s})\leftarrow \Varid{run}\;(\Varid{return}\;())\;\Varid{a};\Varid{return}\;\Varid{s}\mskip1.5mu\}{}\<[E]%
\ColumnHook
\end{hscode}\resethooks
An alternative `PutBX' or push--pull style
\begin{hscode}\SaveRestoreHook
\column{B}{@{}>{\hspre}l<{\hspost}@{}}%
\column{3}{@{}>{\hspre}l<{\hspost}@{}}%
\column{E}{@{}>{\hspre}l<{\hspost}@{}}%
\>[3]{}\Varid{put}_{L}^{R}\mathbin{::}\Conid{Monad}\;\Varid{m}\Rightarrow \Conid{StateTBX}\;\Varid{m}\;\Varid{s}\;\Varid{a}\;\Varid{b}\hsarrow{\rightarrow }{\mathpunct{.}}\Varid{a}\hsarrow{\rightarrow }{\mathpunct{.}}\Conid{StateT}\;\Varid{s}\;\Varid{m}\;\Varid{b}{}\<[E]%
\\
\>[3]{}\Varid{put}_{L}^{R}\;bx\;\Varid{a}\mathrel{=}\mathbf{do}\;\{\mskip1.5mu \Varid{setl}\;bx\;\Varid{a};\Varid{getr}\;bx\mskip1.5mu\}{}\<[E]%
\\[\blanklineskip]%
\>[3]{}\Varid{put}_{R}^{L}\mathbin{::}\Conid{Monad}\;\Varid{m}\Rightarrow \Conid{StateTBX}\;\Varid{m}\;\Varid{s}\;\Varid{a}\;\Varid{b}\hsarrow{\rightarrow }{\mathpunct{.}}\Varid{b}\hsarrow{\rightarrow }{\mathpunct{.}}\Conid{StateT}\;\Varid{s}\;\Varid{m}\;\Varid{a}{}\<[E]%
\\
\>[3]{}\Varid{put}_{R}^{L}\;bx\;\Varid{b}\mathrel{=}\mathbf{do}\;\{\mskip1.5mu \Varid{setr}\;bx\;\Varid{b};\Varid{getl}\;bx\mskip1.5mu\}{}\<[E]%
\ColumnHook
\end{hscode}\resethooks
Identity is easy
\begin{hscode}\SaveRestoreHook
\column{B}{@{}>{\hspre}l<{\hspost}@{}}%
\column{3}{@{}>{\hspre}l<{\hspost}@{}}%
\column{43}{@{}>{\hspre}l<{\hspost}@{}}%
\column{E}{@{}>{\hspre}l<{\hspost}@{}}%
\>[3]{}\Varid{idBX}\mathbin{::}\Conid{Monad}\;\Varid{m}\Rightarrow \Conid{StateTBX}\;\Varid{m}\;\Varid{a}\;\Varid{a}\;\Varid{a}{}\<[E]%
\\
\>[3]{}\Varid{idBX}\mathrel{=}\Conid{StateTBX}\;\Varid{get}\;\Varid{put}\;\Varid{return}\;\Varid{get}\;\Varid{put}\;{}\<[43]%
\>[43]{}\Varid{return}{}\<[E]%
\ColumnHook
\end{hscode}\resethooks
Duality
\begin{hscode}\SaveRestoreHook
\column{B}{@{}>{\hspre}l<{\hspost}@{}}%
\column{3}{@{}>{\hspre}l<{\hspost}@{}}%
\column{23}{@{}>{\hspre}l<{\hspost}@{}}%
\column{E}{@{}>{\hspre}l<{\hspost}@{}}%
\>[3]{}\Varid{coBX}\mathbin{::}\Conid{StateTBX}\;\Varid{m}\;\Varid{s}\;\Varid{a}\;\Varid{b}\hsarrow{\rightarrow }{\mathpunct{.}}\Conid{StateTBX}\;\Varid{m}\;\Varid{s}\;\Varid{b}\;\Varid{a}{}\<[E]%
\\
\>[3]{}\Varid{coBX}\;bx\mathrel{=}\Conid{StateTBX}\;{}\<[23]%
\>[23]{}(\Varid{getr}\;bx)\;(\Varid{setr}\;bx)\;(\Varid{initr}\;bx)\;{}\<[E]%
\\
\>[23]{}(\Varid{getl}\;bx)\;(\Varid{setl}\;bx)\;(\Varid{initl}\;bx){}\<[E]%
\ColumnHook
\end{hscode}\resethooks
Monad morphisms injecting \ensuremath{\Conid{StateT}\;\Varid{s}\;\Varid{m}} (respectively \ensuremath{\Conid{StateT}\;\Varid{t}\;\Varid{m}}) into
\ensuremath{\Conid{StateT}\;(\Varid{s},\Varid{t})\;\Varid{m}}. 
\begin{hscode}\SaveRestoreHook
\column{B}{@{}>{\hspre}l<{\hspost}@{}}%
\column{3}{@{}>{\hspre}l<{\hspost}@{}}%
\column{19}{@{}>{\hspre}l<{\hspost}@{}}%
\column{20}{@{}>{\hspre}l<{\hspost}@{}}%
\column{E}{@{}>{\hspre}l<{\hspost}@{}}%
\>[3]{}\Varid{left}\mathbin{::}\Conid{Monad}\;\Varid{m}\Rightarrow \Conid{StateT}\;\Varid{s}\;\Varid{m}\;\Varid{a}\hsarrow{\rightarrow }{\mathpunct{.}}\Conid{StateT}\;(\Varid{s},\Varid{t})\;\Varid{m}\;\Varid{a}{}\<[E]%
\\
\>[3]{}\Varid{left}\;\Varid{ma}\mathrel{=}\mathbf{do}\;\{\mskip1.5mu {}\<[19]%
\>[19]{}(\Varid{s},\Varid{t})\leftarrow \Varid{get};{}\<[E]%
\\
\>[19]{}(\Varid{a},\Varid{s'})\leftarrow \Varid{lift}\;(\Varid{runStateT}\;\Varid{ma}\;\Varid{s});{}\<[E]%
\\
\>[19]{}\Varid{put}\;(\Varid{s'},\Varid{t});{}\<[E]%
\\
\>[19]{}\Varid{return}\;\Varid{a}\mskip1.5mu\}{}\<[E]%
\\[\blanklineskip]%
\>[3]{}\Varid{right}\mathbin{::}\Conid{Monad}\;\Varid{m}\Rightarrow \Conid{StateT}\;\Varid{t}\;\Varid{m}\;\Varid{a}\hsarrow{\rightarrow }{\mathpunct{.}}\Conid{StateT}\;(\Varid{s},\Varid{t})\;\Varid{m}\;\Varid{a}{}\<[E]%
\\
\>[3]{}\Varid{right}\;\Varid{ma}\mathrel{=}\mathbf{do}\;\{\mskip1.5mu {}\<[20]%
\>[20]{}(\Varid{s},\Varid{t})\leftarrow \Varid{get};{}\<[E]%
\\
\>[20]{}(\Varid{a},\Varid{t'})\leftarrow \Varid{lift}\;(\Varid{runStateT}\;\Varid{ma}\;\Varid{t});{}\<[E]%
\\
\>[20]{}\Varid{put}\;(\Varid{s},\Varid{t'});{}\<[E]%
\\
\>[20]{}\Varid{return}\;\Varid{a}\mskip1.5mu\}{}\<[E]%
\ColumnHook
\end{hscode}\resethooks
Composition:
given \ensuremath{\Varid{l}\mathbin{::}\Conid{StateTBX}\;\Varid{m}\;\Varid{s}\;\Varid{a}\;\Varid{b}} and \ensuremath{\Varid{l'}\mathbin{::}\Conid{StateTBX}\;\Varid{m}\;\Varid{s}\;\Varid{b}\;\Varid{c}},
we want
\begin{hscode}\SaveRestoreHook
\column{B}{@{}>{\hspre}l<{\hspost}@{}}%
\column{E}{@{}>{\hspre}l<{\hspost}@{}}%
\>[B]{}\Varid{compBX}\;\Varid{l}\;\Varid{l'}\mathbin{::}\Conid{StateTBX}\;\Varid{m}\;(\Varid{s},\Varid{t})\;\Varid{a}\;\Varid{c}{}\<[E]%
\ColumnHook
\end{hscode}\resethooks
satisfying the monad and bx laws
\begin{hscode}\SaveRestoreHook
\column{B}{@{}>{\hspre}l<{\hspost}@{}}%
\column{3}{@{}>{\hspre}l<{\hspost}@{}}%
\column{22}{@{}>{\hspre}l<{\hspost}@{}}%
\column{E}{@{}>{\hspre}l<{\hspost}@{}}%
\>[3]{}\vartheta\mathbin{::}\Conid{Monad}\;\Varid{m}\Rightarrow \Conid{MLens}\;\Varid{m}\;\Varid{s}\;\Varid{v}\hsarrow{\rightarrow }{\mathpunct{.}}\Conid{StateT}\;\Varid{v}\;\Varid{m}\ntto\Conid{StateT}\;\Varid{s}\;\Varid{m}{}\<[E]%
\\
\>[3]{}\vartheta\;\Varid{l}\;\Varid{m}\mathrel{=}\mathbf{do}\;{}\<[22]%
\>[22]{}\Varid{s}\leftarrow \Varid{get}{}\<[E]%
\\
\>[22]{}\mathbf{let}\;\Varid{v}\mathrel{=}\Varid{mview}\;\Varid{l}\;\Varid{s}{}\<[E]%
\\
\>[22]{}(\Varid{a},\Varid{v'})\leftarrow \Varid{lift}\;(\Varid{runStateT}\;\Varid{m}\;\Varid{v}){}\<[E]%
\\
\>[22]{}\Varid{s'}\leftarrow \Varid{lift}\;(\Varid{mupdate}\;\Varid{l}\;\Varid{s}\;\Varid{v'}){}\<[E]%
\\
\>[22]{}\Varid{put}\;\Varid{s'}{}\<[E]%
\\
\>[22]{}\Varid{return}\;\Varid{a}{}\<[E]%
\ColumnHook
\end{hscode}\resethooks
The m-lenses induced by two composable bxs.
\begin{hscode}\SaveRestoreHook
\column{B}{@{}>{\hspre}l<{\hspost}@{}}%
\column{3}{@{}>{\hspre}l<{\hspost}@{}}%
\column{11}{@{}>{\hspre}l<{\hspost}@{}}%
\column{25}{@{}>{\hspre}l<{\hspost}@{}}%
\column{31}{@{}>{\hspre}l<{\hspost}@{}}%
\column{32}{@{}>{\hspre}l<{\hspost}@{}}%
\column{37}{@{}>{\hspre}l<{\hspost}@{}}%
\column{38}{@{}>{\hspre}l<{\hspost}@{}}%
\column{E}{@{}>{\hspre}l<{\hspost}@{}}%
\>[3]{}\Varid{mlensL}\mathbin{::}\Conid{Monad}\;\Varid{m}\Rightarrow {}\<[25]%
\>[25]{}\Conid{StateTBX}\;\Varid{m}\;\Varid{s}_{1}\;\Varid{a}\;\Varid{b}\hsarrow{\rightarrow }{\mathpunct{.}}{}\<[E]%
\\
\>[25]{}\Conid{StateTBX}\;\Varid{m}\;\Varid{s}_{2}\;\Varid{b}\;\Varid{c}\hsarrow{\rightarrow }{\mathpunct{.}}{}\<[E]%
\\
\>[25]{}\Conid{MLens}\;\Varid{m}\;(\Varid{s}_{1},\Varid{s}_{2})\;\Varid{s}_{1}{}\<[E]%
\\
\>[3]{}\Varid{mlensL}\;bx_{1}\;bx_{2}\mathrel{=}\Conid{MLens}\;\Varid{view}\;\Varid{update}\;\Varid{create}\;\mathbf{where}{}\<[E]%
\\
\>[3]{}\hsindent{8}{}\<[11]%
\>[11]{}\Varid{view}\;(\Varid{s}_{1},\Varid{s}_{2}){}\<[31]%
\>[31]{}\mathrel{=}\Varid{s}_{1}{}\<[E]%
\\
\>[3]{}\hsindent{8}{}\<[11]%
\>[11]{}\Varid{update}\;(\Varid{s}_{1},\Varid{s}_{2})\;\Varid{s}_{1}'{}\<[31]%
\>[31]{}\mathrel{=}\mathbf{do}\;{}\<[37]%
\>[37]{}\Varid{b}\leftarrow \Varid{evalStateT}\;(\Varid{getr}\;bx_{1})\;\Varid{s}_{1}'{}\<[E]%
\\
\>[37]{}\Varid{s}_{2}'\leftarrow \Varid{execStateT}\;(\Varid{setl}\;bx_{2}\;\Varid{b})\;\Varid{s}_{2}{}\<[E]%
\\
\>[37]{}\Varid{return}\;(\Varid{s}_{1}',\Varid{s}_{2}'){}\<[E]%
\\
\>[3]{}\hsindent{8}{}\<[11]%
\>[11]{}\Varid{create}\;\Varid{s}_{1}{}\<[31]%
\>[31]{}\mathrel{=}\mathbf{do}\;{}\<[37]%
\>[37]{}\Varid{b}\leftarrow \Varid{evalStateT}\;(\Varid{getr}\;bx_{1})\;\Varid{s}_{1}{}\<[E]%
\\
\>[37]{}\Varid{s}_{2}\leftarrow \Varid{initl}\;bx_{2}\;\Varid{b}{}\<[E]%
\\
\>[37]{}\Varid{return}\;(\Varid{s}_{1},\Varid{s}_{2}){}\<[E]%
\\[\blanklineskip]%
\>[3]{}\Varid{mlensR}\mathbin{::}\Conid{Monad}\;\Varid{m}\Rightarrow {}\<[25]%
\>[25]{}\Conid{StateTBX}\;\Varid{m}\;\Varid{s}_{1}\;\Varid{a}\;\Varid{b}\hsarrow{\rightarrow }{\mathpunct{.}}{}\<[E]%
\\
\>[25]{}\Conid{StateTBX}\;\Varid{m}\;\Varid{s}_{2}\;\Varid{b}\;\Varid{c}\hsarrow{\rightarrow }{\mathpunct{.}}{}\<[E]%
\\
\>[25]{}\Conid{MLens}\;\Varid{m}\;(\Varid{s}_{1},\Varid{s}_{2})\;\Varid{s}_{2}{}\<[E]%
\\
\>[3]{}\Varid{mlensR}\;bx_{1}\;bx_{2}\mathrel{=}\Conid{MLens}\;\Varid{view}\;\Varid{update}\;\Varid{create}\;\mathbf{where}{}\<[E]%
\\
\>[3]{}\hsindent{8}{}\<[11]%
\>[11]{}\Varid{view}\;(\Varid{s}_{1},\Varid{s}_{2}){}\<[32]%
\>[32]{}\mathrel{=}\Varid{s}_{2}{}\<[E]%
\\
\>[3]{}\hsindent{8}{}\<[11]%
\>[11]{}\Varid{update}\;(\Varid{s}_{1},\Varid{s}_{2})\;\Varid{s}_{2}'{}\<[32]%
\>[32]{}\mathrel{=}\mathbf{do}\;{}\<[38]%
\>[38]{}(\Varid{b},\anonymous )\leftarrow \Varid{runStateT}\;(\Varid{getl}\;bx_{2})\;\Varid{s}_{2}'{}\<[E]%
\\
\>[38]{}((),\Varid{s}_{1}')\leftarrow \Varid{runStateT}\;(\Varid{setr}\;bx_{1}\;\Varid{b})\;\Varid{s}_{1}{}\<[E]%
\\
\>[38]{}\Varid{return}\;(\Varid{s}_{1}',\Varid{s}_{2}'){}\<[E]%
\\
\>[3]{}\hsindent{8}{}\<[11]%
\>[11]{}\Varid{create}\;\Varid{s}_{2}{}\<[32]%
\>[32]{}\mathrel{=}\mathbf{do}\;{}\<[38]%
\>[38]{}\Varid{b}\leftarrow \Varid{evalStateT}\;(\Varid{getl}\;bx_{2})\;\Varid{s}_{2}{}\<[E]%
\\
\>[38]{}\Varid{s}_{1}\leftarrow \Varid{initr}\;bx_{1}\;\Varid{b}{}\<[E]%
\\
\>[38]{}\Varid{return}\;(\Varid{s}_{1},\Varid{s}_{2}){}\<[E]%
\ColumnHook
\end{hscode}\resethooks
Composition in terms of m-lenses
\begin{hscode}\SaveRestoreHook
\column{B}{@{}>{\hspre}l<{\hspost}@{}}%
\column{3}{@{}>{\hspre}l<{\hspost}@{}}%
\column{10}{@{}>{\hspre}l<{\hspost}@{}}%
\column{17}{@{}>{\hspre}l<{\hspost}@{}}%
\column{20}{@{}>{\hspre}l<{\hspost}@{}}%
\column{27}{@{}>{\hspre}l<{\hspost}@{}}%
\column{30}{@{}>{\hspre}l<{\hspost}@{}}%
\column{41}{@{}>{\hspre}l<{\hspost}@{}}%
\column{E}{@{}>{\hspre}l<{\hspost}@{}}%
\>[3]{}\Varid{compBX}\mathbin{::}(\Conid{Monad}\;\Varid{m})\Rightarrow {}\<[27]%
\>[27]{}\Conid{StateTBX}\;\Varid{m}\;\Varid{s}_{1}\;\Varid{a}\;\Varid{b}\hsarrow{\rightarrow }{\mathpunct{.}}{}\<[E]%
\\
\>[27]{}\Conid{StateTBX}\;\Varid{m}\;\Varid{s}_{2}\;\Varid{b}\;\Varid{c}\hsarrow{\rightarrow }{\mathpunct{.}}{}\<[E]%
\\
\>[27]{}\Conid{StateTBX}\;\Varid{m}\;(\Varid{s}_{1},\Varid{s}_{2})\;\Varid{a}\;\Varid{c}{}\<[E]%
\\
\>[3]{}\Varid{compBX}\;bx_{1}\;bx_{2}\mathrel{=}{}\<[E]%
\\
\>[3]{}\hsindent{17}{}\<[20]%
\>[20]{}\Conid{StateTBX}\;{}\<[30]%
\>[30]{}(\varphi \;(\Varid{getl}\;bx_{1}))\;(\varphi \hsdot{\cdot }{.}(\Varid{setl}\;bx_{1}))\;{}\<[E]%
\\
\>[30]{}(\lambda \hslambda \Varid{a}\hsarrow{\rightarrow }{\mathpunct{.}}\mathbf{do}\;{}\<[41]%
\>[41]{}(\Varid{b},\Varid{s})\leftarrow \Varid{runl}\;bx_{1}\;(\Varid{getr}\;bx_{1})\;\Varid{a}{}\<[E]%
\\
\>[41]{}\Varid{t}\leftarrow \Varid{initl}\;bx_{2}\;\Varid{b}{}\<[E]%
\\
\>[41]{}\Varid{return}\;(\Varid{s},\Varid{t}))\;{}\<[E]%
\\
\>[30]{}(\psi \;(\Varid{getr}\;bx_{2}))\;(\psi \hsdot{\cdot }{.}(\Varid{setr}\;bx_{2}))\;{}\<[E]%
\\
\>[30]{}(\lambda \hslambda \Varid{c}\hsarrow{\rightarrow }{\mathpunct{.}}\mathbf{do}\;{}\<[41]%
\>[41]{}(\Varid{b},\Varid{t})\leftarrow \Varid{runr}\;bx_{2}\;(\Varid{getl}\;bx_{2})\;\Varid{c}{}\<[E]%
\\
\>[41]{}\Varid{s}\leftarrow \Varid{initr}\;bx_{1}\;\Varid{b}{}\<[E]%
\\
\>[41]{}\Varid{return}\;(\Varid{s},\Varid{t})){}\<[E]%
\\
\>[3]{}\hsindent{7}{}\<[10]%
\>[10]{}\mathbf{where}\;{}\<[17]%
\>[17]{}\varphi \mathrel{=}\vartheta\;(\Varid{mlensL}\;bx_{1}\;bx_{2}){}\<[E]%
\\
\>[17]{}\psi \mathrel{=}\vartheta\;(\Varid{mlensR}\;bx_{1}\;bx_{2}){}\<[E]%
\ColumnHook
\end{hscode}\resethooks
Alternative definition using \ensuremath{\Varid{left}} and \ensuremath{\Varid{right}}
\begin{hscode}\SaveRestoreHook
\column{B}{@{}>{\hspre}l<{\hspost}@{}}%
\column{3}{@{}>{\hspre}l<{\hspost}@{}}%
\column{10}{@{}>{\hspre}l<{\hspost}@{}}%
\column{20}{@{}>{\hspre}l<{\hspost}@{}}%
\column{28}{@{}>{\hspre}l<{\hspost}@{}}%
\column{31}{@{}>{\hspre}l<{\hspost}@{}}%
\column{E}{@{}>{\hspre}l<{\hspost}@{}}%
\>[3]{}\Varid{compBX'}\mathbin{::}(\Conid{Monad}\;\Varid{m})\Rightarrow {}\<[28]%
\>[28]{}\Conid{StateTBX}\;\Varid{m}\;\Varid{s}\;\Varid{a}\;\Varid{b}\hsarrow{\rightarrow }{\mathpunct{.}}{}\<[E]%
\\
\>[28]{}\Conid{StateTBX}\;\Varid{m}\;\Varid{t}\;\Varid{b}\;\Varid{c}\hsarrow{\rightarrow }{\mathpunct{.}}{}\<[E]%
\\
\>[28]{}\Conid{StateTBX}\;\Varid{m}\;(\Varid{s},\Varid{t})\;\Varid{a}\;\Varid{c}{}\<[E]%
\\
\>[3]{}\Varid{compBX'}\;bx_{1}\;bx_{2}\mathrel{=}{}\<[E]%
\\
\>[3]{}\hsindent{7}{}\<[10]%
\>[10]{}\Conid{StateTBX}\;{}\<[20]%
\>[20]{}(\Varid{left}\;(\Varid{getl}\;bx_{1}))\;{}\<[E]%
\\
\>[20]{}(\lambda \hslambda \Varid{a}\hsarrow{\rightarrow }{\mathpunct{.}}\mathbf{do}\;{}\<[31]%
\>[31]{}\Varid{b}\leftarrow \Varid{left}\;(\Varid{setl}\;bx_{1}\;\Varid{a}\sequ \Varid{getr}\;bx_{1}){}\<[E]%
\\
\>[31]{}\Varid{right}\;(\Varid{setl}\;bx_{2}\;\Varid{b}))\;{}\<[E]%
\\
\>[20]{}(\lambda \hslambda \Varid{a}\hsarrow{\rightarrow }{\mathpunct{.}}\mathbf{do}\;{}\<[31]%
\>[31]{}(\Varid{b},\Varid{s})\leftarrow \Varid{runl}\;bx_{1}\;(\Varid{getr}\;bx_{1})\;\Varid{a}{}\<[E]%
\\
\>[31]{}\Varid{t}\leftarrow \Varid{initl}\;bx_{2}\;\Varid{b}{}\<[E]%
\\
\>[31]{}\Varid{return}\;(\Varid{s},\Varid{t}))\;{}\<[E]%
\\
\>[20]{}(\Varid{right}\;(\Varid{getr}\;bx_{2}))\;{}\<[E]%
\\
\>[20]{}(\lambda \hslambda \Varid{c}\hsarrow{\rightarrow }{\mathpunct{.}}\mathbf{do}\;{}\<[31]%
\>[31]{}\Varid{b}\leftarrow \Varid{right}\;(\Varid{setr}\;bx_{2}\;\Varid{c}\sequ \Varid{getl}\;bx_{2}){}\<[E]%
\\
\>[31]{}\Varid{left}\;(\Varid{setr}\;bx_{1}\;\Varid{b}))\;{}\<[E]%
\\
\>[20]{}(\lambda \hslambda \Varid{c}\hsarrow{\rightarrow }{\mathpunct{.}}\mathbf{do}\;{}\<[31]%
\>[31]{}(\Varid{b},\Varid{t})\leftarrow \Varid{runr}\;bx_{2}\;(\Varid{getl}\;bx_{2})\;\Varid{c}{}\<[E]%
\\
\>[31]{}\Varid{s}\leftarrow \Varid{initr}\;bx_{1}\;\Varid{b}{}\<[E]%
\\
\>[31]{}\Varid{return}\;(\Varid{s},\Varid{t})){}\<[E]%
\ColumnHook
\end{hscode}\resethooks
Direct definition
\begin{hscode}\SaveRestoreHook
\column{B}{@{}>{\hspre}l<{\hspost}@{}}%
\column{3}{@{}>{\hspre}l<{\hspost}@{}}%
\column{11}{@{}>{\hspre}l<{\hspost}@{}}%
\column{20}{@{}>{\hspre}l<{\hspost}@{}}%
\column{28}{@{}>{\hspre}l<{\hspost}@{}}%
\column{31}{@{}>{\hspre}l<{\hspost}@{}}%
\column{E}{@{}>{\hspre}l<{\hspost}@{}}%
\>[3]{}\Varid{compBX0}\mathbin{::}(\Conid{Monad}\;\Varid{m})\Rightarrow {}\<[28]%
\>[28]{}\Conid{StateTBX}\;\Varid{m}\;\Varid{s}\;\Varid{a}\;\Varid{b}\hsarrow{\rightarrow }{\mathpunct{.}}{}\<[E]%
\\
\>[28]{}\Conid{StateTBX}\;\Varid{m}\;\Varid{t}\;\Varid{b}\;\Varid{c}\hsarrow{\rightarrow }{\mathpunct{.}}{}\<[E]%
\\
\>[28]{}\Conid{StateTBX}\;\Varid{m}\;(\Varid{s},\Varid{t})\;\Varid{a}\;\Varid{c}{}\<[E]%
\\
\>[3]{}\Varid{compBX0}\;bx_{1}\;bx_{2}\mathrel{=}{}\<[E]%
\\
\>[3]{}\hsindent{8}{}\<[11]%
\>[11]{}\Conid{StateTBX}\;(\mathbf{do}\;\{\mskip1.5mu (\Varid{s},\Varid{t})\leftarrow \Varid{get};\Varid{lift}\;(\Varid{evalStateT}\;(\Varid{getl}\;bx_{1})\;\Varid{s})\mskip1.5mu\})\;{}\<[E]%
\\
\>[11]{}\hsindent{9}{}\<[20]%
\>[20]{}(\lambda \hslambda \Varid{a}\hsarrow{\rightarrow }{\mathpunct{.}}\mathbf{do}\;{}\<[31]%
\>[31]{}(\Varid{s},\Varid{t})\leftarrow \Varid{get}{}\<[E]%
\\
\>[31]{}\Varid{s'}\leftarrow \Varid{lift}\;(\Varid{execStateT}\;(\Varid{setl}\;bx_{1}\;\Varid{a})\;\Varid{s}){}\<[E]%
\\
\>[31]{}\Varid{b'}\leftarrow \Varid{lift}\;(\Varid{evalStateT}\;(\Varid{getr}\;bx_{1})\;\Varid{s'}){}\<[E]%
\\
\>[31]{}\Varid{t'}\leftarrow \Varid{lift}\;(\Varid{execStateT}\;(\Varid{setl}\;bx_{2}\;\Varid{b'})\;\Varid{t}){}\<[E]%
\\
\>[31]{}\Varid{put}\;(\Varid{s'},\Varid{t'}))\;{}\<[E]%
\\
\>[11]{}\hsindent{9}{}\<[20]%
\>[20]{}(\lambda \hslambda \Varid{a}\hsarrow{\rightarrow }{\mathpunct{.}}\mathbf{do}\;{}\<[31]%
\>[31]{}(\Varid{b},\Varid{s})\leftarrow \Varid{runl}\;bx_{1}\;(\Varid{getr}\;bx_{1})\;\Varid{a}{}\<[E]%
\\
\>[31]{}\Varid{t}\leftarrow \Varid{initl}\;bx_{2}\;\Varid{b}{}\<[E]%
\\
\>[31]{}\Varid{return}\;(\Varid{s},\Varid{t}))\;{}\<[E]%
\\
\>[11]{}\hsindent{9}{}\<[20]%
\>[20]{}(\mathbf{do}\;\{\mskip1.5mu (\Varid{s},\Varid{t})\leftarrow \Varid{get};\Varid{lift}\;(\Varid{evalStateT}\;(\Varid{getr}\;bx_{2})\;\Varid{t})\mskip1.5mu\})\;{}\<[E]%
\\
\>[11]{}\hsindent{9}{}\<[20]%
\>[20]{}(\lambda \hslambda \Varid{c}\hsarrow{\rightarrow }{\mathpunct{.}}\mathbf{do}\;{}\<[31]%
\>[31]{}(\Varid{s},\Varid{t})\leftarrow \Varid{get}{}\<[E]%
\\
\>[31]{}\Varid{t'}\leftarrow \Varid{lift}\;(\Varid{execStateT}\;(\Varid{setr}\;bx_{2}\;\Varid{c})\;\Varid{t}){}\<[E]%
\\
\>[31]{}\Varid{b'}\leftarrow \Varid{lift}\;(\Varid{evalStateT}\;(\Varid{getl}\;bx_{2})\;\Varid{t'}){}\<[E]%
\\
\>[31]{}\Varid{s'}\leftarrow \Varid{lift}\;(\Varid{execStateT}\;(\Varid{setr}\;bx_{1}\;\Varid{b'})\;\Varid{s}){}\<[E]%
\\
\>[31]{}\Varid{put}\;(\Varid{s'},\Varid{t'}))\;{}\<[E]%
\\
\>[11]{}\hsindent{9}{}\<[20]%
\>[20]{}(\lambda \hslambda \Varid{c}\hsarrow{\rightarrow }{\mathpunct{.}}\mathbf{do}\;{}\<[31]%
\>[31]{}(\Varid{b},\Varid{t})\leftarrow \Varid{runr}\;bx_{2}\;(\Varid{getl}\;bx_{2})\;\Varid{c}{}\<[E]%
\\
\>[31]{}\Varid{s}\leftarrow \Varid{initr}\;bx_{1}\;\Varid{b}{}\<[E]%
\\
\>[31]{}\Varid{return}\;(\Varid{s},\Varid{t})){}\<[E]%
\ColumnHook
\end{hscode}\resethooks
Isomorphisms
 \begin{hscode}\SaveRestoreHook
\column{B}{@{}>{\hspre}l<{\hspost}@{}}%
\column{3}{@{}>{\hspre}l<{\hspost}@{}}%
\column{26}{@{}>{\hspre}l<{\hspost}@{}}%
\column{E}{@{}>{\hspre}l<{\hspost}@{}}%
\>[3]{}\Varid{iso2BX}\mathbin{::}\Conid{Monad}\;\Varid{m}\Rightarrow \Conid{Iso}\;\Varid{a}\;\Varid{b}\hsarrow{\rightarrow }{\mathpunct{.}}\Conid{StateTBX}\;\Varid{m}\;\Varid{a}\;\Varid{a}\;\Varid{b}{}\<[E]%
\\
\>[3]{}\Varid{iso2BX}\;\Varid{iso}\mathrel{=}\Conid{StateTBX}\;{}\<[26]%
\>[26]{}\Varid{get}\;\Varid{put}\;\Varid{return}\;{}\<[E]%
\\
\>[26]{}(\mathbf{do}\;\{\mskip1.5mu \Varid{a}\leftarrow \Varid{get};\Varid{return}\;(\Varid{to}\;\Varid{iso}\;\Varid{a})\mskip1.5mu\})\;{}\<[E]%
\\
\>[26]{}(\lambda \hslambda \Varid{b}\hsarrow{\rightarrow }{\mathpunct{.}}\mathbf{do}\;\{\mskip1.5mu \Varid{put}\;(\Varid{from}\;\Varid{iso}\;\Varid{b})\mskip1.5mu\})\;{}\<[E]%
\\
\>[26]{}(\lambda \hslambda \Varid{b}\hsarrow{\rightarrow }{\mathpunct{.}}\Varid{return}\;(\Varid{from}\;\Varid{iso}\;\Varid{b})){}\<[E]%
\\[\blanklineskip]%
\>[3]{}\Varid{assocBX}\mathbin{::}\Conid{Monad}\;\Varid{m}\Rightarrow {}\<[26]%
\>[26]{}\Conid{StateTBX}\;\Varid{m}\;((\Varid{a},\Varid{b}),\Varid{c})\;((\Varid{a},\Varid{b}),\Varid{c})\;(\Varid{a},(\Varid{b},\Varid{c})){}\<[E]%
\\
\>[3]{}\Varid{assocBX}\mathrel{=}\Varid{iso2BX}\;\Varid{assocIso}{}\<[E]%
\\[\blanklineskip]%
\>[3]{}\Varid{swapBX}\mathbin{::}\Conid{Monad}\;\Varid{m}\Rightarrow \Conid{StateTBX}\;\Varid{m}\;(\Varid{a},\Varid{b})\;(\Varid{a},\Varid{b})\;(\Varid{b},\Varid{a}){}\<[E]%
\\
\>[3]{}\Varid{swapBX}\mathrel{=}\Varid{iso2BX}\;\Varid{swapIso}{}\<[E]%
\\[\blanklineskip]%
\>[3]{}\Varid{unitlBX}\mathbin{::}\Conid{Monad}\;\Varid{m}\Rightarrow \Conid{StateTBX}\;\Varid{m}\;\Varid{a}\;\Varid{a}\;((),\Varid{a}){}\<[E]%
\\
\>[3]{}\Varid{unitlBX}\mathrel{=}\Varid{iso2BX}\;\Varid{unitlIso}{}\<[E]%
\\[\blanklineskip]%
\>[3]{}\Varid{unitrBX}\mathbin{::}\Conid{Monad}\;\Varid{m}\Rightarrow \Conid{StateTBX}\;\Varid{m}\;\Varid{a}\;\Varid{a}\;(\Varid{a},()){}\<[E]%
\\
\>[3]{}\Varid{unitrBX}\mathrel{=}\Varid{iso2BX}\;\Varid{unitrIso}{}\<[E]%
\ColumnHook
\end{hscode}\resethooks
Lenses
\begin{hscode}\SaveRestoreHook
\column{B}{@{}>{\hspre}l<{\hspost}@{}}%
\column{3}{@{}>{\hspre}l<{\hspost}@{}}%
\column{25}{@{}>{\hspre}l<{\hspost}@{}}%
\column{30}{@{}>{\hspre}l<{\hspost}@{}}%
\column{33}{@{}>{\hspre}l<{\hspost}@{}}%
\column{38}{@{}>{\hspre}l<{\hspost}@{}}%
\column{44}{@{}>{\hspre}l<{\hspost}@{}}%
\column{E}{@{}>{\hspre}l<{\hspost}@{}}%
\>[3]{}\Varid{lens2BX}\mathbin{::}\Conid{Monad}\;\Varid{m}\Rightarrow \Conid{Lens}\;\Varid{a}\;\Varid{b}\hsarrow{\rightarrow }{\mathpunct{.}}\Conid{StateTBX}\;\Varid{m}\;\Varid{a}\;\Varid{a}\;\Varid{b}{}\<[E]%
\\
\>[3]{}\Varid{lens2BX}\;\Varid{l}\mathrel{=}\Conid{StateTBX}\;{}\<[25]%
\>[25]{}\Varid{get}\;\Varid{put}\;\Varid{return}\;{}\<[E]%
\\
\>[25]{}(\mathbf{do}\;\{\mskip1.5mu \Varid{a}\leftarrow \Varid{get};\Varid{return}\;(\Varid{view}\;\Varid{l}\;\Varid{a})\mskip1.5mu\})\;{}\<[E]%
\\
\>[25]{}(\lambda \hslambda \Varid{b}\hsarrow{\rightarrow }{\mathpunct{.}}\mathbf{do}\;\{\mskip1.5mu \Varid{a}\leftarrow \Varid{get};\Varid{put}\;(\Varid{update}\;\Varid{l}\;\Varid{a}\;\Varid{b})\mskip1.5mu\})\;{}\<[E]%
\\
\>[25]{}(\lambda \hslambda \Varid{b}\hsarrow{\rightarrow }{\mathpunct{.}}\Varid{return}\;(\Varid{create}\;\Varid{l}\;\Varid{b})){}\<[E]%
\\[\blanklineskip]%
\>[3]{}\Varid{lensSpan2BX}\mathbin{::}\Conid{Monad}\;\Varid{m}\Rightarrow {}\<[30]%
\>[30]{}\Conid{Lens}\;\Varid{c}\;\Varid{a}\hsarrow{\rightarrow }{\mathpunct{.}}\Conid{Lens}\;\Varid{c}\;\Varid{b}\hsarrow{\rightarrow }{\mathpunct{.}}\Conid{StateTBX}\;\Varid{m}\;\Varid{c}\;\Varid{a}\;\Varid{b}{}\<[E]%
\\
\>[3]{}\Varid{lensSpan2BX}\;\Varid{l}_{1}\;\Varid{l}_{2}\mathrel{=}\Conid{StateTBX}\;{}\<[33]%
\>[33]{}(\mathbf{do}\;{}\<[38]%
\>[38]{}\Varid{c}\leftarrow \Varid{get}{}\<[E]%
\\
\>[38]{}\Varid{return}\;(\Varid{view}\;\Varid{l}_{1}\;\Varid{c}))\;{}\<[E]%
\\
\>[33]{}(\lambda \hslambda \Varid{a}\hsarrow{\rightarrow }{\mathpunct{.}}\mathbf{do}\;{}\<[44]%
\>[44]{}\Varid{c}\leftarrow \Varid{get}{}\<[E]%
\\
\>[44]{}\Varid{put}\;(\Varid{update}\;\Varid{l}_{1}\;\Varid{c}\;\Varid{a}))\;{}\<[E]%
\\
\>[33]{}(\lambda \hslambda \Varid{a}\hsarrow{\rightarrow }{\mathpunct{.}}\Varid{return}\;(\Varid{create}\;\Varid{l}_{1}\;\Varid{a}))\;{}\<[E]%
\\
\>[33]{}(\mathbf{do}\;{}\<[38]%
\>[38]{}\Varid{c}\leftarrow \Varid{get}{}\<[E]%
\\
\>[38]{}\Varid{return}\;(\Varid{view}\;\Varid{l}_{2}\;\Varid{c}))\;{}\<[E]%
\\
\>[33]{}(\lambda \hslambda \Varid{b}\hsarrow{\rightarrow }{\mathpunct{.}}\mathbf{do}\;{}\<[44]%
\>[44]{}\Varid{c}\leftarrow \Varid{get}{}\<[E]%
\\
\>[44]{}\Varid{put}\;(\Varid{update}\;\Varid{l}_{2}\;\Varid{c}\;\Varid{b}))\;{}\<[E]%
\\
\>[33]{}(\lambda \hslambda \Varid{b}\hsarrow{\rightarrow }{\mathpunct{.}}\Varid{return}\;(\Varid{create}\;\Varid{l}_{2}\;\Varid{b})){}\<[E]%
\ColumnHook
\end{hscode}\resethooks
Monadic lenses
\begin{hscode}\SaveRestoreHook
\column{B}{@{}>{\hspre}l<{\hspost}@{}}%
\column{3}{@{}>{\hspre}l<{\hspost}@{}}%
\column{5}{@{}>{\hspre}l<{\hspost}@{}}%
\column{19}{@{}>{\hspre}l<{\hspost}@{}}%
\column{23}{@{}>{\hspre}l<{\hspost}@{}}%
\column{25}{@{}>{\hspre}l<{\hspost}@{}}%
\column{26}{@{}>{\hspre}l<{\hspost}@{}}%
\column{31}{@{}>{\hspre}l<{\hspost}@{}}%
\column{34}{@{}>{\hspre}l<{\hspost}@{}}%
\column{E}{@{}>{\hspre}l<{\hspost}@{}}%
\>[3]{}\Varid{mlens2BX}\mathbin{::}\Conid{Monad}\;\Varid{m}\Rightarrow \Conid{MLens}\;\Varid{m}\;\Varid{a}\;\Varid{b}\hsarrow{\rightarrow }{\mathpunct{.}}\Conid{StateTBX}\;\Varid{m}\;\Varid{a}\;\Varid{a}\;\Varid{b}{}\<[E]%
\\
\>[3]{}\Varid{mlens2BX}\;\Varid{l}\mathrel{=}\Conid{StateTBX}\;{}\<[26]%
\>[26]{}\Varid{get}\;\Varid{put}\;\Varid{return}\;\Varid{view}\;\Varid{upd}\;\Varid{create}\;\mathbf{where}{}\<[E]%
\\
\>[3]{}\hsindent{2}{}\<[5]%
\>[5]{}\Varid{view}{}\<[23]%
\>[23]{}\mathrel{=}\Varid{gets}\;(\Varid{mview}\;\Varid{l}){}\<[E]%
\\
\>[3]{}\hsindent{2}{}\<[5]%
\>[5]{}\Varid{upd}\;\Varid{b}{}\<[23]%
\>[23]{}\mathrel{=}\mathbf{do}\;\{\mskip1.5mu {}\<[31]%
\>[31]{}\Varid{a}\leftarrow \Varid{get};{}\<[E]%
\\
\>[31]{}\Varid{a'}\leftarrow \Varid{lift}\;(\Varid{mupdate}\;\Varid{l}\;\Varid{a}\;\Varid{b});{}\<[E]%
\\
\>[31]{}\Varid{put}\;\Varid{a'}\mskip1.5mu\}{}\<[E]%
\\
\>[3]{}\hsindent{2}{}\<[5]%
\>[5]{}\Varid{create}\;\Varid{b}{}\<[23]%
\>[23]{}\mathrel{=}\Varid{mcreate}\;\Varid{l}\;\Varid{b}{}\<[E]%
\\[\blanklineskip]%
\>[3]{}\Varid{mlensSpan2BX}\mathbin{::}\Conid{Monad}\;\Varid{m}\Rightarrow {}\<[31]%
\>[31]{}\Conid{MLens}\;\Varid{m}\;\Varid{c}\;\Varid{a}\hsarrow{\rightarrow }{\mathpunct{.}}\Conid{MLens}\;\Varid{m}\;\Varid{c}\;\Varid{b}\hsarrow{\rightarrow }{\mathpunct{.}}{}\<[E]%
\\
\>[31]{}\Conid{StateTBX}\;\Varid{m}\;\Varid{c}\;\Varid{a}\;\Varid{b}{}\<[E]%
\\
\>[3]{}\Varid{mlensSpan2BX}\;\Varid{l}_{1}\;\Varid{l}_{2}\mathrel{=}\Conid{StateTBX}\;{}\<[34]%
\>[34]{}\Varid{viewL}\;\Varid{updL}\;\Varid{createL}\;{}\<[E]%
\\
\>[34]{}\Varid{viewR}\;\Varid{updR}\;\Varid{createR}\;\mathbf{where}{}\<[E]%
\\
\>[3]{}\hsindent{2}{}\<[5]%
\>[5]{}\Varid{viewL}{}\<[19]%
\>[19]{}\mathrel{=}\Varid{gets}\;(\Varid{mview}\;\Varid{l}_{1}){}\<[E]%
\\
\>[3]{}\hsindent{2}{}\<[5]%
\>[5]{}\Varid{updL}\;\Varid{a}{}\<[19]%
\>[19]{}\mathrel{=}\mathbf{do}\;{}\<[25]%
\>[25]{}\Varid{c}\leftarrow \Varid{get}{}\<[E]%
\\
\>[25]{}\Varid{c'}\leftarrow \Varid{lift}\;(\Varid{mupdate}\;\Varid{l}_{1}\;\Varid{c}\;\Varid{a}){}\<[E]%
\\
\>[25]{}\Varid{put}\;\Varid{c'}{}\<[E]%
\\
\>[3]{}\hsindent{2}{}\<[5]%
\>[5]{}\Varid{createL}\;\Varid{a}{}\<[19]%
\>[19]{}\mathrel{=}\Varid{mcreate}\;\Varid{l}_{1}\;\Varid{a}{}\<[E]%
\\
\>[3]{}\hsindent{2}{}\<[5]%
\>[5]{}\Varid{viewR}{}\<[19]%
\>[19]{}\mathrel{=}\Varid{gets}\;(\Varid{mview}\;\Varid{l}_{2}){}\<[E]%
\\
\>[3]{}\hsindent{2}{}\<[5]%
\>[5]{}\Varid{updR}\;\Varid{b}{}\<[19]%
\>[19]{}\mathrel{=}\mathbf{do}\;{}\<[25]%
\>[25]{}\Varid{c}\leftarrow \Varid{get}{}\<[E]%
\\
\>[25]{}\Varid{c'}\leftarrow \Varid{lift}\;(\Varid{mupdate}\;\Varid{l}_{2}\;\Varid{c}\;\Varid{b}){}\<[E]%
\\
\>[25]{}\Varid{put}\;\Varid{c'}{}\<[E]%
\\
\>[3]{}\hsindent{2}{}\<[5]%
\>[5]{}\Varid{createR}\;\Varid{b}{}\<[19]%
\>[19]{}\mathrel{=}\Varid{mcreate}\;\Varid{l}_{2}\;\Varid{b}{}\<[E]%
\ColumnHook
\end{hscode}\resethooks
Relational bxs. 
\begin{hscode}\SaveRestoreHook
\column{B}{@{}>{\hspre}l<{\hspost}@{}}%
\column{3}{@{}>{\hspre}l<{\hspost}@{}}%
\column{14}{@{}>{\hspre}l<{\hspost}@{}}%
\column{24}{@{}>{\hspre}l<{\hspost}@{}}%
\column{E}{@{}>{\hspre}l<{\hspost}@{}}%
\>[3]{}\Varid{rel2BX}\mathbin{::}{}\<[14]%
\>[14]{}(\Conid{Monad}\;\Varid{m},\Conid{Pointed}\;\Varid{a},\Conid{Pointed}\;\Varid{b})\Rightarrow {}\<[E]%
\\
\>[14]{}\Conid{RelBX}\;\Varid{a}\;\Varid{b}\hsarrow{\rightarrow }{\mathpunct{.}}\Conid{StateTBX}\;\Varid{m}\;(\Varid{a},\Varid{b})\;\Varid{a}\;\Varid{b}{}\<[E]%
\\
\>[3]{}\Varid{rel2BX}\;bx\mathrel{=}\Conid{StateTBX}\;(\mathbf{do}\;\{\mskip1.5mu (\Varid{a},\Varid{b})\leftarrow \Varid{get};\Varid{return}\;\Varid{a}\mskip1.5mu\})\;{}\<[E]%
\\
\>[3]{}\hsindent{21}{}\<[24]%
\>[24]{}(\lambda \hslambda \Varid{a'}\hsarrow{\rightarrow }{\mathpunct{.}}\mathbf{do}\;\{\mskip1.5mu (\Varid{a},\Varid{b})\leftarrow \Varid{get};\Varid{put}\;(\Varid{a'},\Varid{fwd}\;bx\;\Varid{a'}\;\Varid{b})\mskip1.5mu\})\;{}\<[E]%
\\
\>[3]{}\hsindent{21}{}\<[24]%
\>[24]{}(\lambda \hslambda \Varid{a}\hsarrow{\rightarrow }{\mathpunct{.}}\Varid{return}\;(\Varid{a},\Varid{point}))\;{}\<[E]%
\\
\>[3]{}\hsindent{21}{}\<[24]%
\>[24]{}(\mathbf{do}\;\{\mskip1.5mu (\Varid{a},\Varid{b})\leftarrow \Varid{get};\Varid{return}\;\Varid{b}\mskip1.5mu\})\;{}\<[E]%
\\
\>[3]{}\hsindent{21}{}\<[24]%
\>[24]{}(\lambda \hslambda \Varid{b'}\hsarrow{\rightarrow }{\mathpunct{.}}\mathbf{do}\;\{\mskip1.5mu (\Varid{a},\Varid{b})\leftarrow \Varid{get};\Varid{put}\;(\Varid{bwd}\;bx\;\Varid{a}\;\Varid{b'},\Varid{b'})\mskip1.5mu\})\;{}\<[E]%
\\
\>[3]{}\hsindent{21}{}\<[24]%
\>[24]{}(\lambda \hslambda \Varid{b}\hsarrow{\rightarrow }{\mathpunct{.}}\Varid{return}\;(\Varid{point},\Varid{b})){}\<[E]%
\ColumnHook
\end{hscode}\resethooks
Symmetric lenses 
\begin{hscode}\SaveRestoreHook
\column{B}{@{}>{\hspre}l<{\hspost}@{}}%
\column{3}{@{}>{\hspre}l<{\hspost}@{}}%
\column{26}{@{}>{\hspre}l<{\hspost}@{}}%
\column{28}{@{}>{\hspre}l<{\hspost}@{}}%
\column{29}{@{}>{\hspre}l<{\hspost}@{}}%
\column{33}{@{}>{\hspre}l<{\hspost}@{}}%
\column{35}{@{}>{\hspre}l<{\hspost}@{}}%
\column{37}{@{}>{\hspre}l<{\hspost}@{}}%
\column{40}{@{}>{\hspre}l<{\hspost}@{}}%
\column{E}{@{}>{\hspre}l<{\hspost}@{}}%
\>[3]{}\Varid{symlens2bx}\mathbin{::}\Conid{Monad}\;\Varid{m}\Rightarrow {}\<[29]%
\>[29]{}\Conid{SLens}\;\Varid{c}\;\Varid{a}\;\Varid{b}\hsarrow{\rightarrow }{\mathpunct{.}}\Conid{StateTBX}\;\Varid{m}\;(\Varid{a},\Varid{b},\Varid{c})\;\Varid{a}\;\Varid{b}{}\<[E]%
\\
\>[3]{}\Varid{symlens2bx}\;\Varid{l}\mathrel{=}\Conid{StateTBX}\;{}\<[28]%
\>[28]{}(\mathbf{do}\;{}\<[33]%
\>[33]{}(\Varid{a},\Varid{b},\Varid{c})\leftarrow \Varid{get}{}\<[E]%
\\
\>[33]{}\Varid{return}\;\Varid{a})\;{}\<[E]%
\\
\>[28]{}(\lambda \hslambda \Varid{a'}\hsarrow{\rightarrow }{\mathpunct{.}}\mathbf{do}\;{}\<[40]%
\>[40]{}(\Varid{a},\Varid{b},\Varid{c})\leftarrow \Varid{get}{}\<[E]%
\\
\>[40]{}\mathbf{let}\;(\Varid{b'},\Varid{c'})\mathrel{=}\Varid{putr}\;\Varid{l}\;(\Varid{a'},\Varid{c}){}\<[E]%
\\
\>[40]{}\Varid{put}\;(\Varid{a'},\Varid{b'},\Varid{c'}))\;{}\<[E]%
\\
\>[28]{}(\lambda \hslambda \Varid{a}\hsarrow{\rightarrow }{\mathpunct{.}}\mathbf{do}\;{}\<[40]%
\>[40]{}\mathbf{let}\;(\Varid{b},\Varid{c})\mathrel{=}\Varid{putr}\;\Varid{l}\;(\Varid{a},\Varid{missing}\;\Varid{l}){}\<[E]%
\\
\>[40]{}\Varid{return}\;(\Varid{a},\Varid{b},\Varid{c}))\;{}\<[E]%
\\
\>[28]{}(\mathbf{do}\;{}\<[33]%
\>[33]{}(\Varid{a},\Varid{b},\Varid{c})\leftarrow \Varid{get}{}\<[E]%
\\
\>[33]{}\Varid{return}\;(\Varid{b}))\;{}\<[E]%
\\
\>[28]{}(\lambda \hslambda \Varid{b'}\hsarrow{\rightarrow }{\mathpunct{.}}\mathbf{do}\;{}\<[40]%
\>[40]{}(\Varid{a},\Varid{b},\Varid{c})\leftarrow \Varid{get}{}\<[E]%
\\
\>[40]{}\mathbf{let}\;(\Varid{a'},\Varid{c'})\mathrel{=}\Varid{putl}\;\Varid{l}\;(\Varid{b'},\Varid{c}){}\<[E]%
\\
\>[40]{}\Varid{put}\;(\Varid{a'},\Varid{b'},\Varid{c'}))\;{}\<[E]%
\\
\>[28]{}(\lambda \hslambda \Varid{b}\hsarrow{\rightarrow }{\mathpunct{.}}\mathbf{do}\;{}\<[40]%
\>[40]{}\mathbf{let}\;(\Varid{a},\Varid{c})\mathrel{=}\Varid{putl}\;\Varid{l}\;(\Varid{b},\Varid{missing}\;\Varid{l}){}\<[E]%
\\
\>[40]{}\Varid{return}\;(\Varid{a},\Varid{b},\Varid{c})){}\<[E]%
\\[\blanklineskip]%
\>[3]{}\Varid{bx2symlens}\mathbin{::}\Conid{StateTBX}\;\Conid{Id}\;\Varid{c}\;\Varid{a}\;\Varid{b}\hsarrow{\rightarrow }{\mathpunct{.}}\Conid{SLens}\;(\Conid{Maybe}\;\Varid{c})\;\Varid{a}\;\Varid{b}{}\<[E]%
\\
\>[3]{}\Varid{bx2symlens}\;bx\mathrel{=}\Conid{SLens}\;{}\<[26]%
\>[26]{}(\lambda \hslambda (\Varid{a},\Varid{mc})\hsarrow{\rightarrow }{\mathpunct{.}}{}\<[E]%
\\
\>[26]{}\hsindent{3}{}\<[29]%
\>[29]{}\mathbf{let}\;\Varid{m}\mathrel{=}(\Varid{setl}\;bx\;\Varid{a}\sequ \Varid{getr}\;bx)\;\mathbf{in}{}\<[E]%
\\
\>[26]{}\hsindent{3}{}\<[29]%
\>[29]{}\mathbf{let}\;(\Varid{b'},\Varid{c'})\mathrel{=}{}\<[E]%
\\
\>[29]{}\hsindent{6}{}\<[35]%
\>[35]{}\mathbf{case}\;\Varid{mc}\;\mathbf{of}{}\<[E]%
\\
\>[35]{}\hsindent{2}{}\<[37]%
\>[37]{}\Conid{Nothing}\hsarrow{\rightarrow }{\mathpunct{.}}\Varid{runIdentity}\;(\Varid{runl}\;bx\;\Varid{m}\;\Varid{a});{}\<[E]%
\\
\>[35]{}\hsindent{2}{}\<[37]%
\>[37]{}\Conid{Just}\;\Varid{c}\hsarrow{\rightarrow }{\mathpunct{.}}\Varid{runIdentity}\;(\Varid{runStateT}\;\Varid{m}\;\Varid{c}){}\<[E]%
\\
\>[26]{}\hsindent{3}{}\<[29]%
\>[29]{}\mathbf{in}\;(\Varid{b'},\Conid{Just}\;\Varid{c'}))\;{}\<[E]%
\\
\>[26]{}(\lambda \hslambda (\Varid{b},\Varid{mc})\hsarrow{\rightarrow }{\mathpunct{.}}{}\<[E]%
\\
\>[26]{}\hsindent{3}{}\<[29]%
\>[29]{}\mathbf{let}\;\Varid{m}\mathrel{=}(\Varid{setr}\;bx\;\Varid{b}\sequ \Varid{getl}\;bx)\;\mathbf{in}{}\<[E]%
\\
\>[26]{}\hsindent{3}{}\<[29]%
\>[29]{}\mathbf{let}\;(\Varid{a'},\Varid{c'})\mathrel{=}{}\<[E]%
\\
\>[29]{}\hsindent{6}{}\<[35]%
\>[35]{}\mathbf{case}\;\Varid{mc}\;\mathbf{of}{}\<[E]%
\\
\>[35]{}\hsindent{2}{}\<[37]%
\>[37]{}\Conid{Nothing}\hsarrow{\rightarrow }{\mathpunct{.}}\Varid{runIdentity}\;(\Varid{runr}\;bx\;\Varid{m}\;\Varid{b});{}\<[E]%
\\
\>[35]{}\hsindent{2}{}\<[37]%
\>[37]{}\Conid{Just}\;\Varid{c}\hsarrow{\rightarrow }{\mathpunct{.}}\Varid{runIdentity}\;(\Varid{runStateT}\;\Varid{m}\;\Varid{c}){}\<[E]%
\\
\>[26]{}\hsindent{3}{}\<[29]%
\>[29]{}\mathbf{in}\;(\Varid{a'},\Conid{Just}\;\Varid{c'}))\;{}\<[E]%
\\
\>[26]{}\Conid{Nothing}{}\<[E]%
\ColumnHook
\end{hscode}\resethooks
Monadic symmetric lenses
\begin{hscode}\SaveRestoreHook
\column{B}{@{}>{\hspre}l<{\hspost}@{}}%
\column{3}{@{}>{\hspre}l<{\hspost}@{}}%
\column{4}{@{}>{\hspre}l<{\hspost}@{}}%
\column{13}{@{}>{\hspre}l<{\hspost}@{}}%
\column{19}{@{}>{\hspre}l<{\hspost}@{}}%
\column{21}{@{}>{\hspre}l<{\hspost}@{}}%
\column{25}{@{}>{\hspre}l<{\hspost}@{}}%
\column{29}{@{}>{\hspre}l<{\hspost}@{}}%
\column{30}{@{}>{\hspre}l<{\hspost}@{}}%
\column{32}{@{}>{\hspre}l<{\hspost}@{}}%
\column{34}{@{}>{\hspre}l<{\hspost}@{}}%
\column{36}{@{}>{\hspre}l<{\hspost}@{}}%
\column{41}{@{}>{\hspre}l<{\hspost}@{}}%
\column{52}{@{}>{\hspre}l<{\hspost}@{}}%
\column{54}{@{}>{\hspre}l<{\hspost}@{}}%
\column{61}{@{}>{\hspre}l<{\hspost}@{}}%
\column{63}{@{}>{\hspre}l<{\hspost}@{}}%
\column{E}{@{}>{\hspre}l<{\hspost}@{}}%
\>[3]{}\Varid{symMLens2bx}\mathbin{::}\Conid{Monad}\;\Varid{m}\Rightarrow {}\<[30]%
\>[30]{}\Conid{SMLens}\;\Varid{m}\;\Varid{c}\;\Varid{a}\;\Varid{b}\hsarrow{\rightarrow }{\mathpunct{.}}\Conid{StateTBX}\;\Varid{m}\;(\Varid{a},\Varid{b},\Varid{c})\;\Varid{a}\;\Varid{b}{}\<[E]%
\\
\>[3]{}\Varid{symMLens2bx}\;\Varid{l}\mathrel{=}\Conid{StateTBX}\;{}\<[29]%
\>[29]{}(\mathbf{do}\;{}\<[34]%
\>[34]{}(\Varid{a},\Varid{b},\Varid{c})\leftarrow \Varid{get}{}\<[E]%
\\
\>[34]{}\Varid{return}\;\Varid{a})\;{}\<[E]%
\\
\>[29]{}(\lambda \hslambda \Varid{a'}\hsarrow{\rightarrow }{\mathpunct{.}}\mathbf{do}\;{}\<[41]%
\>[41]{}(\Varid{a},\Varid{b},\Varid{c})\leftarrow \Varid{get}{}\<[E]%
\\
\>[41]{}(\Varid{b'},\Varid{c'})\leftarrow \Varid{lift}\;(\Varid{mputr}\;\Varid{l}\;(\Varid{a'},\Varid{c})){}\<[E]%
\\
\>[41]{}\Varid{put}\;(\Varid{a'},\Varid{b'},\Varid{c'}))\;{}\<[E]%
\\
\>[29]{}(\lambda \hslambda \Varid{a}\hsarrow{\rightarrow }{\mathpunct{.}}\mathbf{do}\;{}\<[41]%
\>[41]{}(\Varid{b},\Varid{c})\leftarrow \Varid{mputr}\;\Varid{l}\;(\Varid{a},\Varid{mmissing}\;\Varid{l}){}\<[E]%
\\
\>[41]{}\Varid{return}\;(\Varid{a},\Varid{b},\Varid{c}))\;{}\<[E]%
\\
\>[29]{}(\mathbf{do}\;{}\<[34]%
\>[34]{}(\Varid{a},\Varid{b},\Varid{c})\leftarrow \Varid{get}{}\<[E]%
\\
\>[34]{}\Varid{return}\;\Varid{b})\;{}\<[E]%
\\
\>[29]{}(\lambda \hslambda \Varid{b'}\hsarrow{\rightarrow }{\mathpunct{.}}\mathbf{do}\;{}\<[41]%
\>[41]{}(\Varid{a},\Varid{b},\Varid{c})\leftarrow \Varid{get}{}\<[E]%
\\
\>[41]{}(\Varid{a'},\Varid{c'})\leftarrow \Varid{lift}\;(\Varid{mputl}\;\Varid{l}\;(\Varid{b'},\Varid{c})){}\<[E]%
\\
\>[41]{}\Varid{put}\;(\Varid{a'},\Varid{b'},\Varid{c'}))\;{}\<[E]%
\\
\>[29]{}(\lambda \hslambda \Varid{b}\hsarrow{\rightarrow }{\mathpunct{.}}\mathbf{do}\;{}\<[41]%
\>[41]{}(\Varid{a},\Varid{c})\leftarrow \Varid{mputl}\;\Varid{l}\;(\Varid{b},\Varid{mmissing}\;\Varid{l}){}\<[E]%
\\
\>[41]{}\Varid{return}\;(\Varid{a},\Varid{b},\Varid{c})){}\<[E]%
\\[\blanklineskip]%
\>[3]{}\Varid{bx2symMLens}\mathbin{::}{}\<[19]%
\>[19]{}\Conid{Monad}\;\Varid{m}\Rightarrow \Conid{StateTBX}\;\Varid{m}\;\Varid{c}\;\Varid{a}\;\Varid{b}\hsarrow{\rightarrow }{\mathpunct{.}}{}\<[E]%
\\
\>[19]{}\Conid{SMLens}\;\Varid{m}\;(\Conid{Maybe}\;\Varid{c})\;\Varid{a}\;\Varid{b}{}\<[E]%
\\
\>[3]{}\Varid{bx2symMLens}\;bx\mathrel{=}\Conid{SMLens}\;\Varid{mputlr}\;\Varid{mputrl}\;\Varid{missing}\;\mathbf{where}{}\<[E]%
\\
\>[3]{}\hsindent{1}{}\<[4]%
\>[4]{}\Varid{mputlr}\;(\Varid{a},\Varid{ms})\mathrel{=}{}\<[21]%
\>[21]{}\mathbf{do}\;{}\<[25]%
\>[25]{}\Varid{s}\leftarrow \mathbf{case}\;\Varid{ms}\;\mathbf{of}{}\<[E]%
\\
\>[25]{}\hsindent{7}{}\<[32]%
\>[32]{}\Conid{Just}\;\Varid{s'}{}\<[41]%
\>[41]{}\hsarrow{\rightarrow }{\mathpunct{.}}\Varid{return}\;\Varid{s'}{}\<[E]%
\\
\>[25]{}\hsindent{7}{}\<[32]%
\>[32]{}\Conid{Nothing}{}\<[41]%
\>[41]{}\hsarrow{\rightarrow }{\mathpunct{.}}\Varid{initl}\;bx\;\Varid{a}{}\<[E]%
\\
\>[25]{}(\Varid{b},\Varid{s'})\leftarrow {}\<[36]%
\>[36]{}(\Varid{runStateT}\;(\mathbf{do}\;\{\mskip1.5mu {}\<[54]%
\>[54]{}\Varid{setl}\;bx\;{}\<[63]%
\>[63]{}\Varid{a};\Varid{getr}\;bx\mskip1.5mu\})\;\Varid{s}){}\<[E]%
\\
\>[25]{}\Varid{return}\;(\Varid{b},\Conid{Just}\;\Varid{s'}){}\<[E]%
\\
\>[3]{}\hsindent{1}{}\<[4]%
\>[4]{}\Varid{mputrl}\;(\Varid{b},\Varid{ms})\mathrel{=}{}\<[21]%
\>[21]{}\mathbf{do}\;{}\<[25]%
\>[25]{}\Varid{s}\leftarrow \mathbf{case}\;\Varid{ms}\;\mathbf{of}{}\<[E]%
\\
\>[25]{}\hsindent{7}{}\<[32]%
\>[32]{}\Conid{Just}\;\Varid{s'}{}\<[41]%
\>[41]{}\hsarrow{\rightarrow }{\mathpunct{.}}\Varid{return}\;\Varid{s'}{}\<[E]%
\\
\>[25]{}\hsindent{7}{}\<[32]%
\>[32]{}\Conid{Nothing}{}\<[41]%
\>[41]{}\hsarrow{\rightarrow }{\mathpunct{.}}\Varid{initr}\;bx\;\Varid{b}{}\<[E]%
\\
\>[25]{}(\Varid{a},\Varid{s'})\leftarrow \Varid{runStateT}\;(\mathbf{do}\;\{\mskip1.5mu {}\<[52]%
\>[52]{}\Varid{setr}\;bx\;{}\<[61]%
\>[61]{}\Varid{b};\Varid{getl}\;bx\mskip1.5mu\})\;\Varid{s}{}\<[E]%
\\
\>[25]{}\Varid{return}\;(\Varid{a},\Conid{Just}\;\Varid{s'}){}\<[E]%
\\
\>[3]{}\hsindent{1}{}\<[4]%
\>[4]{}\Varid{missing}{}\<[13]%
\>[13]{}\mathrel{=}\Conid{Nothing}{}\<[E]%
\ColumnHook
\end{hscode}\resethooks
Constants
\begin{hscode}\SaveRestoreHook
\column{B}{@{}>{\hspre}l<{\hspost}@{}}%
\column{3}{@{}>{\hspre}l<{\hspost}@{}}%
\column{12}{@{}>{\hspre}c<{\hspost}@{}}%
\column{12E}{@{}l@{}}%
\column{18}{@{}>{\hspre}l<{\hspost}@{}}%
\column{25}{@{}>{\hspre}l<{\hspost}@{}}%
\column{E}{@{}>{\hspre}l<{\hspost}@{}}%
\>[3]{}\Varid{constBX}{}\<[12]%
\>[12]{}\mathbin{::}{}\<[12E]%
\>[18]{}\Conid{Monad}\;\tau\Rightarrow \alpha\hsarrow{\rightarrow }{\mathpunct{.}}\Conid{StateTBX}\;\tau\;\alpha\;()\;\alpha{}\<[E]%
\\
\>[3]{}\Varid{constBX}\;\Varid{a}\mathrel{=}\Conid{StateTBX}\;{}\<[25]%
\>[25]{}(\Varid{return}\;())\;{}\<[E]%
\\
\>[25]{}(\Varid{const}\;(\Varid{return}\;()))\;{}\<[E]%
\\
\>[25]{}(\Varid{const}\;(\Varid{return}\;\Varid{a}))\;{}\<[E]%
\\
\>[25]{}\Varid{get}\;\Varid{put}\;\Varid{return}{}\<[E]%
\ColumnHook
\end{hscode}\resethooks
Pairs
\begin{hscode}\SaveRestoreHook
\column{B}{@{}>{\hspre}l<{\hspost}@{}}%
\column{3}{@{}>{\hspre}l<{\hspost}@{}}%
\column{24}{@{}>{\hspre}l<{\hspost}@{}}%
\column{35}{@{}>{\hspre}l<{\hspost}@{}}%
\column{E}{@{}>{\hspre}l<{\hspost}@{}}%
\>[3]{}\Varid{fstBX}\mathbin{::}(\Conid{Monad}\;\Varid{m})\Rightarrow \Varid{b}\hsarrow{\rightarrow }{\mathpunct{.}}\Conid{StateTBX}\;\Varid{m}\;(\Varid{a},\Varid{b})\;(\Varid{a},\Varid{b})\;\Varid{a}{}\<[E]%
\\
\>[3]{}\Varid{fstBX}\;\Varid{b}_{0}\mathrel{=}\Conid{StateTBX}\;{}\<[24]%
\>[24]{}(\Varid{get})\;{}\<[E]%
\\
\>[24]{}(\Varid{put})\;{}\<[E]%
\\
\>[24]{}\Varid{return}\;{}\<[E]%
\\
\>[24]{}(\Varid{gets}\;\Varid{fst})\;{}\<[E]%
\\
\>[24]{}(\lambda \hslambda \Varid{a}\hsarrow{\rightarrow }{\mathpunct{.}}\mathbf{do}\;{}\<[35]%
\>[35]{}(\anonymous ,\Varid{b})\leftarrow \Varid{get}{}\<[E]%
\\
\>[35]{}\Varid{put}\;(\Varid{a},\Varid{b}))\;{}\<[E]%
\\
\>[24]{}(\lambda \hslambda \Varid{a}\hsarrow{\rightarrow }{\mathpunct{.}}\Varid{return}\;(\Varid{a},\Varid{b}_{0})){}\<[E]%
\\[\blanklineskip]%
\>[3]{}\Varid{sndBX}\mathbin{::}\Conid{Monad}\;\Varid{m}\Rightarrow \Varid{a}\hsarrow{\rightarrow }{\mathpunct{.}}\Conid{StateTBX}\;\Varid{m}\;(\Varid{a},\Varid{b})\;(\Varid{a},\Varid{b})\;\Varid{b}{}\<[E]%
\\
\>[3]{}\Varid{sndBX}\;\Varid{a}_{0}\mathrel{=}\Conid{StateTBX}\;{}\<[24]%
\>[24]{}(\Varid{get})\;{}\<[E]%
\\
\>[24]{}(\Varid{put})\;{}\<[E]%
\\
\>[24]{}\Varid{return}\;{}\<[E]%
\\
\>[24]{}(\Varid{gets}\;\Varid{snd})\;{}\<[E]%
\\
\>[24]{}(\lambda \hslambda \Varid{b}\hsarrow{\rightarrow }{\mathpunct{.}}\mathbf{do}\;{}\<[35]%
\>[35]{}(\Varid{a},\anonymous )\leftarrow \Varid{get}{}\<[E]%
\\
\>[35]{}\Varid{put}\;(\Varid{a},\Varid{b}))\;{}\<[E]%
\\
\>[24]{}(\lambda \hslambda \Varid{b}\hsarrow{\rightarrow }{\mathpunct{.}}\Varid{return}\;(\Varid{a}_{0},\Varid{b})){}\<[E]%
\ColumnHook
\end{hscode}\resethooks
Products
 \begin{hscode}\SaveRestoreHook
\column{B}{@{}>{\hspre}l<{\hspost}@{}}%
\column{3}{@{}>{\hspre}l<{\hspost}@{}}%
\column{25}{@{}>{\hspre}l<{\hspost}@{}}%
\column{30}{@{}>{\hspre}l<{\hspost}@{}}%
\column{35}{@{}>{\hspre}l<{\hspost}@{}}%
\column{47}{@{}>{\hspre}l<{\hspost}@{}}%
\column{E}{@{}>{\hspre}l<{\hspost}@{}}%
\>[3]{}\Varid{pairBX}\mathbin{::}\Conid{Monad}\;\Varid{m}\Rightarrow {}\<[25]%
\>[25]{}\Conid{StateTBX}\;\Varid{m}\;\Varid{s}_{1}\;\Varid{a}_{1}\;\Varid{b}_{1}\hsarrow{\rightarrow }{\mathpunct{.}}{}\<[E]%
\\
\>[25]{}\Conid{StateTBX}\;\Varid{m}\;\Varid{s}_{2}\;\Varid{a}_{2}\;\Varid{b}_{2}\hsarrow{\rightarrow }{\mathpunct{.}}{}\<[E]%
\\
\>[25]{}\Conid{StateTBX}\;\Varid{m}\;(\Varid{s}_{1},\Varid{s}_{2})\;(\Varid{a}_{1},\Varid{a}_{2})\;(\Varid{b}_{1},\Varid{b}_{2}){}\<[E]%
\\
\>[3]{}\Varid{pairBX}\;bx_{1}\;bx_{2}\mathrel{=}\Conid{StateTBX}\;{}\<[30]%
\>[30]{}(\mathbf{do}\;{}\<[35]%
\>[35]{}\Varid{a}_{1}\leftarrow \Varid{left}\;(\Varid{getl}\;bx_{1}){}\<[E]%
\\
\>[35]{}\Varid{a}_{2}\leftarrow \Varid{right}\;(\Varid{getl}\;bx_{2}){}\<[E]%
\\
\>[35]{}\Varid{return}\;(\Varid{a}_{1},\Varid{a}_{2}))\;{}\<[E]%
\\
\>[30]{}(\lambda \hslambda (\Varid{a}_{1},\Varid{a}_{2})\hsarrow{\rightarrow }{\mathpunct{.}}\mathbf{do}\;{}\<[47]%
\>[47]{}\Varid{left}\;(\Varid{setl}\;bx_{1}\;\Varid{a}_{1}){}\<[E]%
\\
\>[47]{}\Varid{right}\;(\Varid{setl}\;bx_{2}\;\Varid{a}_{2}))\;{}\<[E]%
\\
\>[30]{}(\lambda \hslambda (\Varid{a}_{1},\Varid{a}_{2})\hsarrow{\rightarrow }{\mathpunct{.}}\mathbf{do}\;{}\<[47]%
\>[47]{}\Varid{s}_{1}\leftarrow \Varid{initl}\;bx_{1}\;\Varid{a}_{1}{}\<[E]%
\\
\>[47]{}\Varid{s}_{2}\leftarrow \Varid{initl}\;bx_{2}\;\Varid{a}_{2}{}\<[E]%
\\
\>[47]{}\Varid{return}\;(\Varid{s}_{1},\Varid{s}_{2}))\;{}\<[E]%
\\
\>[30]{}(\mathbf{do}\;{}\<[35]%
\>[35]{}\Varid{b}_{1}\leftarrow \Varid{left}\;(\Varid{getr}\;bx_{1}){}\<[E]%
\\
\>[35]{}\Varid{b}_{2}\leftarrow \Varid{right}\;(\Varid{getr}\;bx_{2}){}\<[E]%
\\
\>[35]{}\Varid{return}\;(\Varid{b}_{1},\Varid{b}_{2}))\;{}\<[E]%
\\
\>[30]{}(\lambda \hslambda (\Varid{b}_{1},\Varid{b}_{2})\hsarrow{\rightarrow }{\mathpunct{.}}\mathbf{do}\;{}\<[47]%
\>[47]{}\Varid{left}\;(\Varid{setr}\;bx_{1}\;\Varid{b}_{1}){}\<[E]%
\\
\>[47]{}\Varid{right}\;(\Varid{setr}\;bx_{2}\;\Varid{b}_{2}))\;{}\<[E]%
\\
\>[30]{}(\lambda \hslambda (\Varid{b}_{1},\Varid{b}_{2})\hsarrow{\rightarrow }{\mathpunct{.}}\mathbf{do}\;{}\<[47]%
\>[47]{}\Varid{s}_{1}\leftarrow \Varid{initr}\;bx_{1}\;\Varid{b}_{1}{}\<[E]%
\\
\>[47]{}\Varid{s}_{2}\leftarrow \Varid{initr}\;bx_{2}\;\Varid{b}_{2}{}\<[E]%
\\
\>[47]{}\Varid{return}\;(\Varid{s}_{1},\Varid{s}_{2})){}\<[E]%
\ColumnHook
\end{hscode}\resethooks
Sums
\begin{hscode}\SaveRestoreHook
\column{B}{@{}>{\hspre}l<{\hspost}@{}}%
\column{3}{@{}>{\hspre}l<{\hspost}@{}}%
\column{5}{@{}>{\hspre}l<{\hspost}@{}}%
\column{11}{@{}>{\hspre}l<{\hspost}@{}}%
\column{12}{@{}>{\hspre}l<{\hspost}@{}}%
\column{13}{@{}>{\hspre}l<{\hspost}@{}}%
\column{25}{@{}>{\hspre}l<{\hspost}@{}}%
\column{29}{@{}>{\hspre}c<{\hspost}@{}}%
\column{29E}{@{}l@{}}%
\column{30}{@{}>{\hspre}l<{\hspost}@{}}%
\column{32}{@{}>{\hspre}l<{\hspost}@{}}%
\column{36}{@{}>{\hspre}l<{\hspost}@{}}%
\column{38}{@{}>{\hspre}l<{\hspost}@{}}%
\column{42}{@{}>{\hspre}l<{\hspost}@{}}%
\column{45}{@{}>{\hspre}l<{\hspost}@{}}%
\column{E}{@{}>{\hspre}l<{\hspost}@{}}%
\>[3]{}\Varid{inlBX}\mathbin{::}\Conid{Monad}\;\Varid{m}\Rightarrow \Varid{x}\hsarrow{\rightarrow }{\mathpunct{.}}\Conid{StateTBX}\;\Varid{m}\;(\Varid{x},\Conid{Maybe}\;\Varid{y})\;\Varid{x}\;(\Conid{Either}\;\Varid{x}\;\Varid{y}){}\<[E]%
\\
\>[3]{}\Varid{inlBX}\;\Varid{initX}\mathrel{=}\Conid{StateTBX}\;\get{A}\;\set{A}\;\Varid{initA}\;\get{B}\;\set{B}\;\Varid{initB}{}\<[E]%
\\
\>[3]{}\hsindent{2}{}\<[5]%
\>[5]{}\mathbf{where}\;{}\<[12]%
\>[12]{}\get{A}{}\<[29]%
\>[29]{}\mathrel{=}{}\<[29E]%
\>[32]{}\mathbf{do}\;{}\<[36]%
\>[36]{}\{\mskip1.5mu (\Varid{x},\anonymous )\leftarrow \Varid{get};\Varid{return}\;\Varid{x}\mskip1.5mu\}{}\<[E]%
\\
\>[12]{}\get{B}{}\<[29]%
\>[29]{}\mathrel{=}{}\<[29E]%
\>[32]{}\mathbf{do}\;{}\<[36]%
\>[36]{}(\Varid{x},\Varid{my})\leftarrow \Varid{get}{}\<[E]%
\\
\>[36]{}\mathbf{case}\;\Varid{my}\;\mathbf{of}{}\<[E]%
\\
\>[36]{}\hsindent{2}{}\<[38]%
\>[38]{}\Conid{Just}\;\Varid{y}\hsarrow{\rightarrow }{\mathpunct{.}}\Varid{return}\;(\Conid{Right}\;\Varid{y}){}\<[E]%
\\
\>[36]{}\hsindent{2}{}\<[38]%
\>[38]{}\Conid{Nothing}\hsarrow{\rightarrow }{\mathpunct{.}}\Varid{return}\;(\Conid{Left}\;\Varid{x}){}\<[E]%
\\
\>[12]{}\set{A}\;\Varid{x'}{}\<[29]%
\>[29]{}\mathrel{=}{}\<[29E]%
\>[32]{}\mathbf{do}\;\{\mskip1.5mu (\Varid{x},\Varid{my})\leftarrow \Varid{get};\Varid{put}\;(\Varid{x'},\Varid{my})\mskip1.5mu\}{}\<[E]%
\\
\>[12]{}\set{B}\;(\Conid{Left}\;\Varid{x}){}\<[29]%
\>[29]{}\mathrel{=}{}\<[29E]%
\>[32]{}\mathbf{do}\;\{\mskip1.5mu \Varid{put}\;(\Varid{x},\Conid{Nothing})\mskip1.5mu\}{}\<[E]%
\\
\>[12]{}\set{B}\;(\Conid{Right}\;\Varid{y}){}\<[29]%
\>[29]{}\mathrel{=}{}\<[29E]%
\>[32]{}\mathbf{do}\;\{\mskip1.5mu (\Varid{x},\anonymous )\leftarrow \Varid{get};\Varid{put}\;(\Varid{x},\Conid{Just}\;\Varid{y})\mskip1.5mu\}{}\<[E]%
\\
\>[12]{}\Varid{initA}\;\Varid{x}{}\<[29]%
\>[29]{}\mathrel{=}{}\<[29E]%
\>[32]{}\Varid{return}\;(\Varid{x},\Conid{Nothing}){}\<[E]%
\\
\>[12]{}\Varid{initB}\;(\Conid{Left}\;\Varid{x}){}\<[29]%
\>[29]{}\mathrel{=}{}\<[29E]%
\>[32]{}\Varid{return}\;(\Varid{x},\Conid{Nothing}){}\<[E]%
\\
\>[12]{}\Varid{initB}\;(\Conid{Right}\;\Varid{y}){}\<[29]%
\>[29]{}\mathrel{=}{}\<[29E]%
\>[32]{}\Varid{return}\;(\Varid{initX},\Conid{Just}\;\Varid{y}){}\<[E]%
\\[\blanklineskip]%
\>[3]{}\Varid{inrBX}\mathbin{::}\Conid{Monad}\;\Varid{m}\Rightarrow \Varid{y}\hsarrow{\rightarrow }{\mathpunct{.}}\Conid{StateTBX}\;\Varid{m}\;(\Varid{y},\Conid{Maybe}\;\Varid{x})\;\Varid{y}\;(\Conid{Either}\;\Varid{x}\;\Varid{y}){}\<[E]%
\\
\>[3]{}\Varid{inrBX}\;\Varid{initY}\mathrel{=}\Conid{StateTBX}\;\get{A}\;\set{A}\;\Varid{initA}\;\get{B}\;\set{B}\;\Varid{initB}{}\<[E]%
\\
\>[3]{}\hsindent{2}{}\<[5]%
\>[5]{}\mathbf{where}\;{}\<[12]%
\>[12]{}\get{A}{}\<[29]%
\>[29]{}\mathrel{=}{}\<[29E]%
\>[32]{}\mathbf{do}\;\{\mskip1.5mu (\Varid{y},\anonymous )\leftarrow \Varid{get};\Varid{return}\;\Varid{y}\mskip1.5mu\}{}\<[E]%
\\
\>[12]{}\get{B}{}\<[29]%
\>[29]{}\mathrel{=}{}\<[29E]%
\>[32]{}\mathbf{do}\;{}\<[36]%
\>[36]{}(\Varid{y},\Varid{mx})\leftarrow \Varid{get}{}\<[E]%
\\
\>[36]{}\mathbf{case}\;\Varid{mx}\;\mathbf{of}{}\<[E]%
\\
\>[36]{}\hsindent{2}{}\<[38]%
\>[38]{}\Conid{Just}\;\Varid{x}\hsarrow{\rightarrow }{\mathpunct{.}}\Varid{return}\;(\Conid{Left}\;\Varid{x}){}\<[E]%
\\
\>[36]{}\hsindent{2}{}\<[38]%
\>[38]{}\Conid{Nothing}\hsarrow{\rightarrow }{\mathpunct{.}}\Varid{return}\;(\Conid{Right}\;\Varid{y}){}\<[E]%
\\
\>[12]{}\set{A}\;\Varid{y'}{}\<[29]%
\>[29]{}\mathrel{=}{}\<[29E]%
\>[32]{}\mathbf{do}\;\{\mskip1.5mu (\Varid{y},\Varid{mx})\leftarrow \Varid{get};\Varid{put}\;(\Varid{y'},\Varid{mx})\mskip1.5mu\}{}\<[E]%
\\
\>[12]{}\set{B}\;(\Conid{Left}\;\Varid{x}){}\<[29]%
\>[29]{}\mathrel{=}{}\<[29E]%
\>[32]{}\mathbf{do}\;\{\mskip1.5mu (\Varid{y},\anonymous )\leftarrow \Varid{get};\Varid{put}\;(\Varid{y},\Conid{Just}\;\Varid{x})\mskip1.5mu\}{}\<[E]%
\\
\>[12]{}\set{B}\;(\Conid{Right}\;\Varid{y}){}\<[29]%
\>[29]{}\mathrel{=}{}\<[29E]%
\>[32]{}\mathbf{do}\;\{\mskip1.5mu \Varid{put}\;(\Varid{y},\Conid{Nothing})\mskip1.5mu\}{}\<[E]%
\\
\>[12]{}\Varid{initA}\;\Varid{y}{}\<[29]%
\>[29]{}\mathrel{=}{}\<[29E]%
\>[32]{}\Varid{return}\;(\Varid{y},\Conid{Nothing}){}\<[E]%
\\
\>[12]{}\Varid{initB}\;(\Conid{Left}\;\Varid{x}){}\<[29]%
\>[29]{}\mathrel{=}{}\<[29E]%
\>[32]{}\Varid{return}\;(\Varid{initY},\Conid{Just}\;\Varid{x}){}\<[E]%
\\
\>[12]{}\Varid{initB}\;(\Conid{Right}\;\Varid{y}){}\<[29]%
\>[29]{}\mathrel{=}{}\<[29E]%
\>[32]{}\Varid{return}\;(\Varid{y},\Conid{Nothing}){}\<[E]%
\\[\blanklineskip]%
\>[3]{}\Varid{sumBX}\mathbin{::}{}\<[13]%
\>[13]{}\Conid{Monad}\;\Varid{m}\Rightarrow {}\<[25]%
\>[25]{}\Conid{StateTBX}\;\Varid{m}\;\Varid{s}_{1}\;\Varid{a}_{1}\;\Varid{b}_{1}\hsarrow{\rightarrow }{\mathpunct{.}}{}\<[E]%
\\
\>[13]{}\Conid{StateTBX}\;\Varid{m}\;\Varid{s}_{2}\;\Varid{a}_{2}\;\Varid{b}_{2}\hsarrow{\rightarrow }{\mathpunct{.}}{}\<[E]%
\\
\>[13]{}\Conid{StateTBX}\;\Varid{m}\;(\Conid{Bool},\Varid{s}_{1},\Varid{s}_{2})\;(\Conid{Either}\;\Varid{a}_{1}\;\Varid{a}_{2})\;(\Conid{Either}\;\Varid{b}_{1}\;\Varid{b}_{2}){}\<[E]%
\\
\>[3]{}\Varid{sumBX}\;bx_{1}\;bx_{2}\mathrel{=}\Conid{StateTBX}\;\get{A}\;\set{A}\;\Varid{initA}\;\get{B}\;\set{B}\;\Varid{initB}{}\<[E]%
\\
\>[3]{}\hsindent{2}{}\<[5]%
\>[5]{}\mathbf{where}\;\get{A}{}\<[30]%
\>[30]{}\mathrel{=}\mathbf{do}\;{}\<[36]%
\>[36]{}(\Varid{b},\Varid{s}_{1},\Varid{s}_{2})\leftarrow \Varid{get};{}\<[E]%
\\
\>[36]{}\mathbf{if}\;\Varid{b}\;\mathbf{then}{}\<[E]%
\\
\>[36]{}\hsindent{2}{}\<[38]%
\>[38]{}\mathbf{do}\;{}\<[42]%
\>[42]{}(\Varid{a}_{1},\anonymous )\leftarrow \Varid{lift}\;(\Varid{runStateT}\;(\Varid{getl}\;bx_{1})\;\Varid{s}_{1}){}\<[E]%
\\
\>[42]{}\Varid{return}\;(\Conid{Left}\;\Varid{a}_{1}){}\<[E]%
\\
\>[36]{}\mathbf{else}\;\mathbf{do}\;{}\<[45]%
\>[45]{}(\Varid{a}_{2},\anonymous )\leftarrow \Varid{lift}\;(\Varid{runStateT}\;(\Varid{getl}\;bx_{2})\;\Varid{s}_{2}){}\<[E]%
\\
\>[45]{}\Varid{return}\;(\Conid{Right}\;\Varid{a}_{2}){}\<[E]%
\\
\>[5]{}\hsindent{6}{}\<[11]%
\>[11]{}\get{B}{}\<[30]%
\>[30]{}\mathrel{=}\mathbf{do}\;{}\<[36]%
\>[36]{}(\Varid{b},\Varid{s}_{1},\Varid{s}_{2})\leftarrow \Varid{get};{}\<[E]%
\\
\>[36]{}\mathbf{if}\;\Varid{b}\;\mathbf{then}{}\<[E]%
\\
\>[36]{}\hsindent{2}{}\<[38]%
\>[38]{}\mathbf{do}\;{}\<[42]%
\>[42]{}(\Varid{b}_{1},\anonymous )\leftarrow \Varid{lift}\;(\Varid{runStateT}\;(\Varid{getr}\;bx_{1})\;\Varid{s}_{1}){}\<[E]%
\\
\>[42]{}\Varid{return}\;(\Conid{Left}\;\Varid{b}_{1}){}\<[E]%
\\
\>[36]{}\mathbf{else}\;\mathbf{do}\;{}\<[45]%
\>[45]{}(\Varid{b}_{2},\anonymous )\leftarrow \Varid{lift}\;(\Varid{runStateT}\;(\Varid{getr}\;bx_{2})\;\Varid{s}_{2}){}\<[E]%
\\
\>[45]{}\Varid{return}\;(\Conid{Right}\;\Varid{b}_{2}){}\<[E]%
\\
\>[5]{}\hsindent{6}{}\<[11]%
\>[11]{}\set{A}\;(\Conid{Left}\;\Varid{a}_{1}){}\<[30]%
\>[30]{}\mathrel{=}\mathbf{do}\;{}\<[36]%
\>[36]{}(\Varid{b},\Varid{s}_{1},\Varid{s}_{2})\leftarrow \Varid{get}{}\<[E]%
\\
\>[36]{}((),\Varid{s}_{1}')\leftarrow \Varid{lift}\;(\Varid{runStateT}\;(\Varid{setl}\;bx_{1}\;\Varid{a}_{1})\;\Varid{s}_{1}){}\<[E]%
\\
\>[36]{}\Varid{put}\;(\Conid{True},\Varid{s}_{1}',\Varid{s}_{2}){}\<[E]%
\\
\>[5]{}\hsindent{6}{}\<[11]%
\>[11]{}\set{A}\;(\Conid{Right}\;\Varid{a}_{2}){}\<[30]%
\>[30]{}\mathrel{=}\mathbf{do}\;{}\<[36]%
\>[36]{}(\Varid{b},\Varid{s}_{1},\Varid{s}_{2})\leftarrow \Varid{get}{}\<[E]%
\\
\>[36]{}((),\Varid{s}_{2}')\leftarrow \Varid{lift}\;(\Varid{runStateT}\;(\Varid{setl}\;bx_{2}\;\Varid{a}_{2})\;\Varid{s}_{2}){}\<[E]%
\\
\>[36]{}\Varid{put}\;(\Conid{False},\Varid{s}_{1},\Varid{s}_{2}'){}\<[E]%
\\
\>[5]{}\hsindent{6}{}\<[11]%
\>[11]{}\set{B}\;(\Conid{Left}\;\Varid{b}_{1}){}\<[30]%
\>[30]{}\mathrel{=}\mathbf{do}\;{}\<[36]%
\>[36]{}(\Varid{b},\Varid{s}_{1},\Varid{s}_{2})\leftarrow \Varid{get}{}\<[E]%
\\
\>[36]{}((),\Varid{s}_{1}')\leftarrow \Varid{lift}\;(\Varid{runStateT}\;(\Varid{setr}\;bx_{1}\;\Varid{b}_{1})\;\Varid{s}_{1}){}\<[E]%
\\
\>[36]{}\Varid{put}\;(\Conid{True},\Varid{s}_{1}',\Varid{s}_{2}){}\<[E]%
\\
\>[5]{}\hsindent{6}{}\<[11]%
\>[11]{}\set{B}\;(\Conid{Right}\;\Varid{b}_{2}){}\<[30]%
\>[30]{}\mathrel{=}\mathbf{do}\;{}\<[36]%
\>[36]{}(\Varid{b},\Varid{s}_{1},\Varid{s}_{2})\leftarrow \Varid{get}{}\<[E]%
\\
\>[36]{}((),\Varid{s}_{2}')\leftarrow \Varid{lift}\;(\Varid{runStateT}\;(\Varid{setr}\;bx_{2}\;\Varid{b}_{2})\;\Varid{s}_{2}){}\<[E]%
\\
\>[36]{}\Varid{put}\;(\Conid{False},\Varid{s}_{1},\Varid{s}_{2}'){}\<[E]%
\\
\>[5]{}\hsindent{6}{}\<[11]%
\>[11]{}\Varid{initA}\;(\Conid{Left}\;\Varid{a}_{1}){}\<[30]%
\>[30]{}\mathrel{=}\mathbf{do}\;{}\<[36]%
\>[36]{}\Varid{s}_{1}\leftarrow \Varid{initl}\;bx_{1}\;\Varid{a}_{1}{}\<[E]%
\\
\>[36]{}\Varid{return}\;(\Conid{True},\Varid{s}_{1},\bot ){}\<[E]%
\\
\>[5]{}\hsindent{6}{}\<[11]%
\>[11]{}\Varid{initA}\;(\Conid{Right}\;\Varid{a}_{2}){}\<[30]%
\>[30]{}\mathrel{=}\mathbf{do}\;{}\<[36]%
\>[36]{}\Varid{s}_{2}\leftarrow \Varid{initl}\;bx_{2}\;\Varid{a}_{2}{}\<[E]%
\\
\>[36]{}\Varid{return}\;(\Conid{False},\bot ,\Varid{s}_{2}){}\<[E]%
\\
\>[5]{}\hsindent{6}{}\<[11]%
\>[11]{}\Varid{initB}\;(\Conid{Left}\;\Varid{b}_{1}){}\<[30]%
\>[30]{}\mathrel{=}\mathbf{do}\;{}\<[36]%
\>[36]{}\Varid{s}_{1}\leftarrow \Varid{initr}\;bx_{1}\;\Varid{b}_{1}{}\<[E]%
\\
\>[36]{}\Varid{return}\;(\Conid{True},\Varid{s}_{1},\bot ){}\<[E]%
\\
\>[5]{}\hsindent{6}{}\<[11]%
\>[11]{}\Varid{initB}\;(\Conid{Right}\;\Varid{b}_{2}){}\<[30]%
\>[30]{}\mathrel{=}\mathbf{do}\;{}\<[36]%
\>[36]{}\Varid{s}_{2}\leftarrow \Varid{initr}\;bx_{2}\;\Varid{b}_{2}{}\<[E]%
\\
\>[36]{}\Varid{return}\;(\Conid{False},\bot ,\Varid{s}_{2}){}\<[E]%
\ColumnHook
\end{hscode}\resethooks
List 
\begin{hscode}\SaveRestoreHook
\column{B}{@{}>{\hspre}l<{\hspost}@{}}%
\column{3}{@{}>{\hspre}l<{\hspost}@{}}%
\column{5}{@{}>{\hspre}l<{\hspost}@{}}%
\column{11}{@{}>{\hspre}l<{\hspost}@{}}%
\column{14}{@{}>{\hspre}l<{\hspost}@{}}%
\column{23}{@{}>{\hspre}c<{\hspost}@{}}%
\column{23E}{@{}l@{}}%
\column{25}{@{}>{\hspre}l<{\hspost}@{}}%
\column{26}{@{}>{\hspre}l<{\hspost}@{}}%
\column{32}{@{}>{\hspre}l<{\hspost}@{}}%
\column{40}{@{}>{\hspre}c<{\hspost}@{}}%
\column{40E}{@{}l@{}}%
\column{43}{@{}>{\hspre}l<{\hspost}@{}}%
\column{47}{@{}>{\hspre}l<{\hspost}@{}}%
\column{E}{@{}>{\hspre}l<{\hspost}@{}}%
\>[3]{}\Varid{listBX}\mathbin{::}{}\<[14]%
\>[14]{}\Conid{Monad}\;\Varid{m}\Rightarrow {}\<[26]%
\>[26]{}\Conid{StateTBX}\;\Varid{m}\;\Varid{s}\;\Varid{a}\;\Varid{b}\hsarrow{\rightarrow }{\mathpunct{.}}{}\<[E]%
\\
\>[26]{}\Conid{StateTBX}\;\Varid{m}\;(\Conid{Int},[\mskip1.5mu \Varid{s}\mskip1.5mu])\;[\mskip1.5mu \Varid{a}\mskip1.5mu]\;[\mskip1.5mu \Varid{b}\mskip1.5mu]{}\<[E]%
\\
\>[3]{}\Varid{listBX}\;bx\mathrel{=}\Conid{StateTBX}\;{}\<[25]%
\>[25]{}\get{L}\;\set{L}\;\Varid{init}_{L}\;\get{R}\;\set{R}\;\Varid{init}_{R}{}\<[E]%
\\
\>[3]{}\hsindent{2}{}\<[5]%
\>[5]{}\mathbf{where}\;\get{L}{}\<[23]%
\>[23]{}\mathrel{=}{}\<[23E]%
\>[26]{}\mathbf{do}\;\{\mskip1.5mu {}\<[32]%
\>[32]{}(\Varid{n},\Varid{cs})\leftarrow \Varid{get};{}\<[E]%
\\
\>[32]{}\Varid{mapM}\;(\Varid{lift}\hsdot{\cdot }{.}\Varid{evalStateT}\;(\Varid{getl}\;bx))\;(\Varid{take}\;\Varid{n}\;\Varid{cs})\mskip1.5mu\}{}\<[E]%
\\
\>[5]{}\hsindent{6}{}\<[11]%
\>[11]{}\get{R}{}\<[23]%
\>[23]{}\mathrel{=}{}\<[23E]%
\>[26]{}\mathbf{do}\;\{\mskip1.5mu {}\<[32]%
\>[32]{}(\Varid{n},\Varid{cs})\leftarrow \Varid{get};{}\<[E]%
\\
\>[32]{}\Varid{mapM}\;(\Varid{lift}\hsdot{\cdot }{.}\Varid{evalStateT}\;(\Varid{getr}\;bx))\;(\Varid{take}\;\Varid{n}\;\Varid{cs})\mskip1.5mu\}{}\<[E]%
\\
\>[5]{}\hsindent{6}{}\<[11]%
\>[11]{}\set{L}\;\Varid{as}{}\<[23]%
\>[23]{}\mathrel{=}{}\<[23E]%
\>[26]{}\mathbf{do}\;\{\mskip1.5mu {}\<[32]%
\>[32]{}(\anonymous ,\Varid{cs})\leftarrow \Varid{get};{}\<[E]%
\\
\>[32]{}\Varid{cs'}\leftarrow \Varid{sets}\;(\Varid{setl}\;bx)\;(\Varid{initl}\;bx)\;\Varid{cs}\;\Varid{as};{}\<[E]%
\\
\>[32]{}\Varid{put}\;(\Varid{length}\;\Varid{as},\Varid{cs'})\mskip1.5mu\}{}\<[E]%
\\
\>[5]{}\hsindent{6}{}\<[11]%
\>[11]{}\set{R}\;\Varid{bs}{}\<[23]%
\>[23]{}\mathrel{=}{}\<[23E]%
\>[26]{}\mathbf{do}\;\{\mskip1.5mu {}\<[32]%
\>[32]{}(\anonymous ,\Varid{cs})\leftarrow \Varid{get};{}\<[E]%
\\
\>[32]{}\Varid{cs'}\leftarrow \Varid{sets}\;(\Varid{setr}\;bx)\;(\Varid{initr}\;bx)\;\Varid{cs}\;\Varid{bs};{}\<[E]%
\\
\>[32]{}\Varid{put}\;(\Varid{length}\;\Varid{bs},\Varid{cs'})\mskip1.5mu\}{}\<[E]%
\\
\>[5]{}\hsindent{6}{}\<[11]%
\>[11]{}\Varid{init}_{L}\;\Varid{as}{}\<[23]%
\>[23]{}\mathrel{=}{}\<[23E]%
\>[26]{}\mathbf{do}\;\{\mskip1.5mu {}\<[32]%
\>[32]{}\Varid{cs}\leftarrow \Varid{mapM}\;(\Varid{initl}\;bx)\;\Varid{as};{}\<[E]%
\\
\>[32]{}\Varid{return}\;(\Varid{length}\;\Varid{as},\Varid{cs})\mskip1.5mu\}{}\<[E]%
\\
\>[5]{}\hsindent{6}{}\<[11]%
\>[11]{}\Varid{init}_{R}\;\Varid{bs}{}\<[23]%
\>[23]{}\mathrel{=}{}\<[23E]%
\>[26]{}\mathbf{do}\;\{\mskip1.5mu {}\<[32]%
\>[32]{}\Varid{cs}\leftarrow \Varid{mapM}\;(\Varid{initr}\;bx)\;\Varid{bs};{}\<[E]%
\\
\>[32]{}\Varid{return}\;(\Varid{length}\;\Varid{bs},\Varid{cs})\mskip1.5mu\}{}\<[E]%
\\[\blanklineskip]%
\>[5]{}\hsindent{6}{}\<[11]%
\>[11]{}\Varid{sets}\;\Varid{set}\;\Varid{init}\;[\mskip1.5mu \mskip1.5mu]\;[\mskip1.5mu \mskip1.5mu]{}\<[40]%
\>[40]{}\mathrel{=}{}\<[40E]%
\>[43]{}\Varid{return}\;[\mskip1.5mu \mskip1.5mu]{}\<[E]%
\\
\>[5]{}\hsindent{6}{}\<[11]%
\>[11]{}\Varid{sets}\;\Varid{set}\;\Varid{init}\;(\Varid{c}\mathbin{:}\Varid{cs})\;(\Varid{x}\mathbin{:}\Varid{xs}){}\<[40]%
\>[40]{}\mathrel{=}{}\<[40E]%
\>[43]{}\mathbf{do}\;{}\<[47]%
\>[47]{}\Varid{c'}\leftarrow \Varid{lift}\;(\Varid{execStateT}\;(\Varid{set}\;\Varid{x})\;\Varid{c}){}\<[E]%
\\
\>[47]{}\Varid{cs'}\leftarrow \Varid{sets}\;\Varid{set}\;\Varid{init}\;\Varid{cs}\;\Varid{xs}{}\<[E]%
\\
\>[47]{}\Varid{return}\;(\Varid{c'}\mathbin{:}\Varid{cs'}){}\<[E]%
\\
\>[5]{}\hsindent{6}{}\<[11]%
\>[11]{}\Varid{sets}\;\Varid{set}\;\Varid{init}\;\Varid{cs}\;[\mskip1.5mu \mskip1.5mu]{}\<[40]%
\>[40]{}\mathrel{=}{}\<[40E]%
\>[43]{}\Varid{return}\;\Varid{cs}{}\<[E]%
\\
\>[5]{}\hsindent{6}{}\<[11]%
\>[11]{}\Varid{sets}\;\Varid{set}\;\Varid{init}\;[\mskip1.5mu \mskip1.5mu]\;\Varid{xs}{}\<[40]%
\>[40]{}\mathrel{=}{}\<[40E]%
\>[43]{}\Varid{lift}\;(\Varid{mapM}\;\Varid{init}\;\Varid{xs}){}\<[E]%
\ColumnHook
\end{hscode}\resethooks

\subsection{Composers example}
\begin{hscode}\SaveRestoreHook
\column{B}{@{}>{\hspre}l<{\hspost}@{}}%
\column{3}{@{}>{\hspre}l<{\hspost}@{}}%
\column{E}{@{}>{\hspre}l<{\hspost}@{}}%
\>[3]{}\mbox{\enskip\{-\# LANGUAGE MultiParamTypeClasses, ScopedTypeVariables  \#-\}\enskip}{}\<[E]%
\ColumnHook
\end{hscode}\resethooks
\begin{hscode}\SaveRestoreHook
\column{B}{@{}>{\hspre}l<{\hspost}@{}}%
\column{3}{@{}>{\hspre}l<{\hspost}@{}}%
\column{E}{@{}>{\hspre}l<{\hspost}@{}}%
\>[3]{}\mathbf{module}\;\Conid{Composers}\;\mathbf{where}{}\<[E]%
\\
\>[3]{}\mathbf{import}\;\Conid{\Conid{Data}.List}\;\Varid{as}\;\Conid{List}{}\<[E]%
\\
\>[3]{}\mathbf{import}\;\Conid{Data}.\ensuremath{\mathsf{Set}}\;\Varid{as}\;\ensuremath{\mathsf{Set}}{}\<[E]%
\\
\>[3]{}\mathbf{import}\;\Conid{\Conid{Control}.\Conid{Monad}.State}\;\Varid{as}\;\Conid{State}{}\<[E]%
\\
\>[3]{}\mathbf{import}\;\Conid{Control}.\Conid{Monad}.\Conid{Id}{}\<[E]%
\\
\>[3]{}\mathbf{import}\;\Conid{SLens}{}\<[E]%
\\
\>[3]{}\mathbf{import}\;\Conid{StateTBX}{}\<[E]%
\ColumnHook
\end{hscode}\resethooks
Here is a version of the familiar Composers example
\cite{composers},
see the Bx wiki;
versions of this have been used in many papers including e.g. the
Symmetric Lens paper \cite{symlens}.

Assumption: Name is a key in all our datastructures: the user is
required not to give as argument any view that contains more than one
element for a given name.

NB We are not saying this version is better than any other version:
it's just an illustration.
\begin{hscode}\SaveRestoreHook
\column{B}{@{}>{\hspre}l<{\hspost}@{}}%
\column{3}{@{}>{\hspre}l<{\hspost}@{}}%
\column{5}{@{}>{\hspre}l<{\hspost}@{}}%
\column{7}{@{}>{\hspre}l<{\hspost}@{}}%
\column{9}{@{}>{\hspre}l<{\hspost}@{}}%
\column{11}{@{}>{\hspre}l<{\hspost}@{}}%
\column{14}{@{}>{\hspre}l<{\hspost}@{}}%
\column{17}{@{}>{\hspre}l<{\hspost}@{}}%
\column{23}{@{}>{\hspre}l<{\hspost}@{}}%
\column{24}{@{}>{\hspre}l<{\hspost}@{}}%
\column{E}{@{}>{\hspre}l<{\hspost}@{}}%
\>[3]{}\Varid{composers}\mathbin{::}\Conid{SLens}\;{}\<[23]%
\>[23]{}[\mskip1.5mu (\Conid{Name},\Conid{Dates})\mskip1.5mu]\;{}\<[E]%
\\
\>[23]{}(\ensuremath{\mathsf{Set}}\;(\Conid{Name},\Conid{Nation},\Conid{Dates}))\;{}\<[E]%
\\
\>[23]{}[\mskip1.5mu (\Conid{Name},\Conid{Nation})\mskip1.5mu]{}\<[E]%
\\
\>[3]{}\Varid{composers}\mathrel{=}\Conid{SLens}\;\{\mskip1.5mu {}\<[24]%
\>[24]{}\Varid{putr}\mathrel{=}\Varid{putMN},\Varid{putl}\mathrel{=}\Varid{putNM},{}\<[E]%
\\
\>[24]{}\Varid{missing}\mathrel{=}[\mskip1.5mu \mskip1.5mu]\mskip1.5mu\}{}\<[E]%
\\
\>[3]{}\hsindent{2}{}\<[5]%
\>[5]{}\mathbf{where}{}\<[E]%
\\
\>[5]{}\hsindent{2}{}\<[7]%
\>[7]{}\Varid{putMN}\;(\Varid{m},\Varid{c})\mathrel{=}(\Varid{n},\Varid{c'}){}\<[E]%
\\
\>[7]{}\hsindent{2}{}\<[9]%
\>[9]{}\mathbf{where}{}\<[E]%
\\
\>[9]{}\hsindent{2}{}\<[11]%
\>[11]{}\Varid{n}\mathrel{=}\Varid{selectNNfromNND}\;\Varid{tripleList}{}\<[E]%
\\
\>[9]{}\hsindent{2}{}\<[11]%
\>[11]{}\Varid{c'}\mathrel{=}\Varid{selectNDfromNND}\;\Varid{tripleList}{}\<[E]%
\\
\>[9]{}\hsindent{2}{}\<[11]%
\>[11]{}\Varid{tripleList}\mathrel{=}\Varid{h}\;\Varid{c}\;[\mskip1.5mu \mskip1.5mu]\;(\Varid{\ensuremath{\mathsf{Set}}.toList}\;\Varid{m}){}\<[E]%
\\
\>[9]{}\hsindent{2}{}\<[11]%
\>[11]{}\Varid{h}\;[\mskip1.5mu \mskip1.5mu]\;\Varid{sel}\;\Varid{leftover}\mathrel{=}\Varid{reverse}\;\Varid{sel}\plus \Varid{sort}\;\Varid{leftover}{}\<[E]%
\\
\>[9]{}\hsindent{2}{}\<[11]%
\>[11]{}\Varid{h}\;((\Varid{nn},\anonymous )\mathbin{:}\Varid{cs})\;\Varid{ss}\;\Varid{ls}\mathrel{=}\Varid{h}\;\Varid{cs}\;(\Varid{ps}\plus \Varid{ss})\;\Varid{ns}{}\<[E]%
\\
\>[11]{}\hsindent{3}{}\<[14]%
\>[14]{}\mathbf{where}\;(\Varid{ps},\Varid{ns})\mathrel{=}\Varid{selectNNDonKey}\;\Varid{nn}\;\Varid{ls}{}\<[E]%
\\
\>[5]{}\hsindent{2}{}\<[7]%
\>[7]{}\Varid{putNM}\;(\Varid{n},\Varid{c})\mathrel{=}(\Varid{m},\Varid{c'}){}\<[E]%
\\
\>[7]{}\hsindent{2}{}\<[9]%
\>[9]{}\mathbf{where}{}\<[E]%
\\
\>[7]{}\hsindent{2}{}\<[9]%
\>[9]{}\Varid{m}\mathrel{=}\Varid{\ensuremath{\mathsf{Set}}.fromList}\;\Varid{tripleList}{}\<[E]%
\\
\>[7]{}\hsindent{2}{}\<[9]%
\>[9]{}\Varid{c'}\mathrel{=}\Varid{selectNDfromNND}\;\Varid{tripleList}{}\<[E]%
\\
\>[7]{}\hsindent{2}{}\<[9]%
\>[9]{}\Varid{tripleList}\mathrel{=}\Varid{k}\;\Varid{n}\;[\mskip1.5mu \mskip1.5mu]\;\Varid{c}{}\<[E]%
\\
\>[7]{}\hsindent{2}{}\<[9]%
\>[9]{}\Varid{k}\;[\mskip1.5mu \mskip1.5mu]\;\Varid{selected}\;\anonymous \mathrel{=}\Varid{\Conid{List}.reverse}\;\Varid{selected}{}\<[E]%
\\
\>[7]{}\hsindent{2}{}\<[9]%
\>[9]{}\Varid{k}\;(\Varid{h}\mathord{@}(\Varid{nn},\anonymous )\mathbin{:}\Varid{nts})\;\Varid{ss}\;\Varid{ls}\mathrel{=}\Varid{k}\;\Varid{nts}\;(\Varid{newTriple}\mathbin{:}\Varid{ss})\;\Varid{ns}{}\<[E]%
\\
\>[9]{}\hsindent{2}{}\<[11]%
\>[11]{}\mathbf{where}\;(\Varid{ps},\Varid{ns})\mathrel{=}\Varid{selectNDonKey}\;\Varid{nn}\;\Varid{ls}{}\<[E]%
\\
\>[11]{}\hsindent{6}{}\<[17]%
\>[17]{}\Varid{newTriple}\mathrel{=}\Varid{newTripleFromList}\;\Varid{h}\;(\lambda \hslambda (\anonymous ,\Varid{d})\hsarrow{\rightarrow }{\mathpunct{.}}\Varid{d})\;\Varid{ps}{}\<[E]%
\ColumnHook
\end{hscode}\resethooks
where the useful `select' statements are packaged as follows
\begin{hscode}\SaveRestoreHook
\column{B}{@{}>{\hspre}l<{\hspost}@{}}%
\column{3}{@{}>{\hspre}l<{\hspost}@{}}%
\column{20}{@{}>{\hspre}l<{\hspost}@{}}%
\column{E}{@{}>{\hspre}l<{\hspost}@{}}%
\>[3]{}\mathbf{type}\;\Conid{NND}\mathrel{=}(\Conid{Name},\Conid{Nation},\Conid{Dates}){}\<[E]%
\\
\>[3]{}\Varid{selectNNDonKey}\mathbin{::}\Conid{Name}\hsarrow{\rightarrow }{\mathpunct{.}}[\mskip1.5mu \Conid{NND}\mskip1.5mu]\hsarrow{\rightarrow }{\mathpunct{.}}([\mskip1.5mu \Conid{NND}\mskip1.5mu],[\mskip1.5mu \Conid{NND}\mskip1.5mu]){}\<[E]%
\\
\>[3]{}\Varid{selectNNDonKey}\;\Varid{n}\mathrel{=}\Varid{\Conid{List}.partition}\;(\lambda \hslambda (\Varid{nn},\anonymous ,\anonymous )\hsarrow{\rightarrow }{\mathpunct{.}}\Varid{nn}\equals\Varid{n}){}\<[E]%
\\[\blanklineskip]%
\>[3]{}\mathbf{type}\;\Conid{ND}\mathrel{=}(\Conid{Name},\Conid{Dates}){}\<[E]%
\\
\>[3]{}\Varid{selectNDonKey}\mathbin{::}\Conid{Name}\hsarrow{\rightarrow }{\mathpunct{.}}[\mskip1.5mu \Conid{ND}\mskip1.5mu]\hsarrow{\rightarrow }{\mathpunct{.}}([\mskip1.5mu \Conid{ND}\mskip1.5mu],[\mskip1.5mu \Conid{ND}\mskip1.5mu]){}\<[E]%
\\
\>[3]{}\Varid{selectNDonKey}\;\Varid{n}\mathrel{=}\Varid{\Conid{List}.partition}\;(\lambda \hslambda (\Varid{nn},\anonymous )\hsarrow{\rightarrow }{\mathpunct{.}}\Varid{nn}\equals\Varid{n}){}\<[E]%
\\[\blanklineskip]%
\>[3]{}\Varid{selectNDfromNND}\mathbin{::}[\mskip1.5mu \Conid{NND}\mskip1.5mu]\hsarrow{\rightarrow }{\mathpunct{.}}[\mskip1.5mu \Conid{ND}\mskip1.5mu]{}\<[E]%
\\
\>[3]{}\Varid{selectNDfromNND}{}\<[20]%
\>[20]{}\mathrel{=}\Varid{\Conid{List}.map}\;(\lambda \hslambda (\Varid{nn},\Varid{nt},\Varid{dd})\hsarrow{\rightarrow }{\mathpunct{.}}(\Varid{nn},\Varid{dd})){}\<[E]%
\\[\blanklineskip]%
\>[3]{}\mathbf{type}\;\Conid{NN}\mathrel{=}(\Conid{Name},\Conid{Nation}){}\<[E]%
\\
\>[3]{}\Varid{selectNNfromNND}\mathbin{::}[\mskip1.5mu \Conid{NND}\mskip1.5mu]\hsarrow{\rightarrow }{\mathpunct{.}}[\mskip1.5mu \Conid{NN}\mskip1.5mu]{}\<[E]%
\\
\>[3]{}\Varid{selectNNfromNND}{}\<[20]%
\>[20]{}\mathrel{=}\Varid{\Conid{List}.map}\;(\lambda \hslambda (\Varid{nn},\Varid{nt},\Varid{dd})\hsarrow{\rightarrow }{\mathpunct{.}}(\Varid{nn},\Varid{nt})){}\<[E]%
\\
\>[3]{}\Varid{mkNNDfromNN}\mathbin{::}[\mskip1.5mu \Conid{NN}\mskip1.5mu]\hsarrow{\rightarrow }{\mathpunct{.}}[\mskip1.5mu \Conid{NND}\mskip1.5mu]{}\<[E]%
\\
\>[3]{}\Varid{mkNNDfromNN}\mathrel{=}\Varid{\Conid{List}.map}\;(\lambda \hslambda (\Varid{nn},\Varid{nt})\hsarrow{\rightarrow }{\mathpunct{.}}(\Varid{nn},\Varid{nt},\Varid{dates0})){}\<[E]%
\ColumnHook
\end{hscode}\resethooks
This last helper function abstracts how to make a new triple
\begin{hscode}\SaveRestoreHook
\column{B}{@{}>{\hspre}l<{\hspost}@{}}%
\column{3}{@{}>{\hspre}l<{\hspost}@{}}%
\column{E}{@{}>{\hspre}l<{\hspost}@{}}%
\>[3]{}\Varid{newTripleFromList}\mathbin{::}\Conid{NN}\hsarrow{\rightarrow }{\mathpunct{.}}(\Varid{a}\hsarrow{\rightarrow }{\mathpunct{.}}\Conid{Dates})\hsarrow{\rightarrow }{\mathpunct{.}}[\mskip1.5mu \Varid{a}\mskip1.5mu]\hsarrow{\rightarrow }{\mathpunct{.}}\Conid{NND}{}\<[E]%
\\
\>[3]{}\Varid{newTripleFromList}\;(\Varid{nn},\Varid{nt})\;\anonymous \;[\mskip1.5mu \mskip1.5mu]\mathrel{=}(\Varid{nn},\Varid{nt},\Varid{dates0}){}\<[E]%
\\
\>[3]{}\Varid{newTripleFromList}\;(\Varid{nn},\Varid{nt})\;\Varid{f}\;(\Varid{a}\mathbin{:\char95 })\mathrel{=}(\Varid{nn},\Varid{nt},\Varid{f}\;\Varid{a}){}\<[E]%
\ColumnHook
\end{hscode}\resethooks
Now here is the same functionality as a bx. There are no effects other
than the state ones induced directly by the BX, so the underlying
monad is the Identity monad.
\begin{hscode}\SaveRestoreHook
\column{B}{@{}>{\hspre}l<{\hspost}@{}}%
\column{3}{@{}>{\hspre}l<{\hspost}@{}}%
\column{5}{@{}>{\hspre}l<{\hspost}@{}}%
\column{7}{@{}>{\hspre}l<{\hspost}@{}}%
\column{9}{@{}>{\hspre}l<{\hspost}@{}}%
\column{11}{@{}>{\hspre}l<{\hspost}@{}}%
\column{15}{@{}>{\hspre}l<{\hspost}@{}}%
\column{28}{@{}>{\hspre}l<{\hspost}@{}}%
\column{42}{@{}>{\hspre}l<{\hspost}@{}}%
\column{E}{@{}>{\hspre}l<{\hspost}@{}}%
\>[3]{}\Varid{composersBx}\mathbin{::}\Conid{StateTBX}\;{}\<[28]%
\>[28]{}\Conid{Id}\;{}\<[E]%
\\
\>[28]{}[\mskip1.5mu (\Conid{Name},\Conid{Nation},\Conid{Dates})\mskip1.5mu]\;{}\<[E]%
\\
\>[28]{}(\ensuremath{\mathsf{Set}}\;(\Conid{Name},\Conid{Nation},\Conid{Dates}))\;{}\<[E]%
\\
\>[28]{}[\mskip1.5mu (\Conid{Name},\Conid{Nation})\mskip1.5mu]{}\<[E]%
\\
\>[3]{}\Varid{composersBx}\mathrel{=}\Conid{StateTBX}\;\Varid{getl}\;\Varid{setl}\;\Varid{initl}\;\Varid{getr}\;\Varid{setr}\;\Varid{initr}{}\<[E]%
\\
\>[3]{}\hsindent{2}{}\<[5]%
\>[5]{}\mathbf{where}{}\<[E]%
\\
\>[5]{}\hsindent{2}{}\<[7]%
\>[7]{}\Varid{getl}\mathrel{=}\Varid{state}\;(\lambda \hslambda \Varid{l}\hsarrow{\rightarrow }{\mathpunct{.}}(\Varid{\ensuremath{\mathsf{Set}}.fromList}\;\Varid{l},\Varid{l})){}\<[E]%
\\
\>[5]{}\hsindent{2}{}\<[7]%
\>[7]{}\Varid{setl}\mathrel{=}(\lambda \hslambda \Varid{m}\hsarrow{\rightarrow }{\mathpunct{.}}\Varid{state}\;(\lambda \hslambda \Varid{l}\hsarrow{\rightarrow }{\mathpunct{.}}((),\Varid{f}\;(\Varid{\ensuremath{\mathsf{Set}}.toList}\;\Varid{m})\;[\mskip1.5mu \mskip1.5mu]\;\Varid{l}))){}\<[E]%
\\
\>[5]{}\hsindent{2}{}\<[7]%
\>[7]{}\Varid{initl}\mathrel{=}(\lambda \hslambda \Varid{m}\hsarrow{\rightarrow }{\mathpunct{.}}\Varid{return}\;(\Varid{\ensuremath{\mathsf{Set}}.toList}\;\Varid{m})){}\<[E]%
\\[\blanklineskip]%
\>[5]{}\hsindent{2}{}\<[7]%
\>[7]{}\Varid{f}\;\Varid{leftovers}\;\Varid{upd}\;[\mskip1.5mu \mskip1.5mu]\mathrel{=}(\Varid{reverse}\;\Varid{upd})\plus \Varid{sort}\;\Varid{leftovers}{}\<[E]%
\\
\>[5]{}\hsindent{2}{}\<[7]%
\>[7]{}\Varid{f}\;\Varid{leftovers}\;\Varid{upd}\;((\Varid{nn},\Varid{na},\Varid{nd})\mathbin{:}\Varid{rs}){}\<[42]%
\>[42]{}\mathrel{=}\Varid{f}\;\Varid{ns}\;(\Varid{ps}\plus \Varid{upd})\;\Varid{rs}{}\<[E]%
\\
\>[7]{}\hsindent{2}{}\<[9]%
\>[9]{}\mathbf{where}\;(\Varid{ps},\Varid{ns})\mathrel{=}\Varid{selectNNDonKey}\;\Varid{nn}\;\Varid{leftovers}{}\<[E]%
\\[\blanklineskip]%
\>[5]{}\hsindent{2}{}\<[7]%
\>[7]{}\Varid{getr}\mathrel{=}\Varid{state}\;(\lambda \hslambda \Varid{l}\hsarrow{\rightarrow }{\mathpunct{.}}(\Varid{selectNNfromNND}\;\Varid{l},\Varid{l})){}\<[E]%
\\
\>[5]{}\hsindent{2}{}\<[7]%
\>[7]{}\Varid{setr}\mathrel{=}(\lambda \hslambda \Varid{n}\hsarrow{\rightarrow }{\mathpunct{.}}\Varid{state}\;(\lambda \hslambda \Varid{l}\hsarrow{\rightarrow }{\mathpunct{.}}((),\Varid{g}\;\Varid{n}\;[\mskip1.5mu \mskip1.5mu]\;\Varid{l}))){}\<[E]%
\\
\>[5]{}\hsindent{2}{}\<[7]%
\>[7]{}\Varid{initr}\mathrel{=}(\lambda \hslambda \Varid{n}\hsarrow{\rightarrow }{\mathpunct{.}}\Varid{return}\;(\Varid{mkNNDfromNN}\;\Varid{n})){}\<[E]%
\\[\blanklineskip]%
\>[5]{}\hsindent{2}{}\<[7]%
\>[7]{}\Varid{g}\;[\mskip1.5mu \mskip1.5mu]\;\Varid{updated}\;\Varid{stateNotSeenYet}\mathrel{=}\Varid{reverse}\;\Varid{updated}{}\<[E]%
\\
\>[5]{}\hsindent{2}{}\<[7]%
\>[7]{}\Varid{g}\;(\Varid{h}\mathord{@}(\Varid{nn},\Varid{na})\mathbin{:}\Varid{todo})\;\Varid{updated}\;\Varid{stateNotSeenYet}\mathrel{=}{}\<[E]%
\\
\>[7]{}\hsindent{4}{}\<[11]%
\>[11]{}\Varid{g}\;\Varid{todo}\;(\Varid{newTriple}\mathbin{:}\Varid{updated})\;\Varid{ns}{}\<[E]%
\\
\>[7]{}\hsindent{2}{}\<[9]%
\>[9]{}\mathbf{where}\;(\Varid{ps},\Varid{ns})\mathrel{=}\Varid{selectNNDonKey}\;\Varid{nn}\;\Varid{stateNotSeenYet}{}\<[E]%
\\
\>[9]{}\hsindent{6}{}\<[15]%
\>[15]{}\Varid{newTriple}\mathrel{=}\Varid{newTripleFromList}\;\Varid{h}\;(\lambda \hslambda (\anonymous ,\anonymous ,\Varid{d})\hsarrow{\rightarrow }{\mathpunct{.}}\Varid{d})\;\Varid{ps}{}\<[E]%
\ColumnHook
\end{hscode}\resethooks
Now let's see how to use both the symmetric lens and the bx versions,
and demonstrate them behaving the same.

\begin{enumerate}

\item
Initialise both with no composers at all.
\begin{hscode}\SaveRestoreHook
\column{B}{@{}>{\hspre}l<{\hspost}@{}}%
\column{3}{@{}>{\hspre}l<{\hspost}@{}}%
\column{E}{@{}>{\hspre}l<{\hspost}@{}}%
\>[3]{}(\Varid{m}_{1},\Varid{c}_{1})\mathrel{=}\Varid{putl}\;\Varid{composers}\;([\mskip1.5mu \mskip1.5mu],\Varid{missing}\;\Varid{composers}){}\<[E]%
\\
\>[3]{}\Varid{s}_{1}\mathrel{=}\Varid{runIdentity}\;(\Varid{initr}\;\Varid{composersBx}\;[\mskip1.5mu \mskip1.5mu]){}\<[E]%
\ColumnHook
\end{hscode}\resethooks
\item
Now suppose the owner of the left-hand view likes JS Bach.
\begin{hscode}\SaveRestoreHook
\column{B}{@{}>{\hspre}l<{\hspost}@{}}%
\column{3}{@{}>{\hspre}l<{\hspost}@{}}%
\column{13}{@{}>{\hspre}c<{\hspost}@{}}%
\column{13E}{@{}l@{}}%
\column{16}{@{}>{\hspre}l<{\hspost}@{}}%
\column{E}{@{}>{\hspre}l<{\hspost}@{}}%
\>[3]{}\Varid{jsBach}\mathrel{=}{}\<[13]%
\>[13]{}({}\<[13E]%
\>[16]{}\Conid{Name}\;\text{\tt \char34 J.~S.~Bach\char34},\Conid{Nation}\;\text{\tt \char34 German\char34},{}\<[E]%
\\
\>[16]{}\Conid{Dates}\;(\Conid{Just}\;(\Conid{Date}\;\text{\tt \char34 1685\char34},\Conid{Date}\;\text{\tt \char34 1750\char34}))){}\<[E]%
\\
\>[3]{}\Varid{onlyBach}\mathrel{=}\Varid{\ensuremath{\mathsf{Set}}.fromList}\;([\mskip1.5mu \Varid{jsBach}\mskip1.5mu]){}\<[E]%
\ColumnHook
\end{hscode}\resethooks
Putting this into the symmetric lens version:
\begin{hscode}\SaveRestoreHook
\column{B}{@{}>{\hspre}l<{\hspost}@{}}%
\column{3}{@{}>{\hspre}l<{\hspost}@{}}%
\column{E}{@{}>{\hspre}l<{\hspost}@{}}%
\>[3]{}(\Varid{n1sl},\Varid{c}_{2})\mathrel{=}\Varid{putr}\;\Varid{composers}\;(\Varid{onlyBach},\Varid{c}_{1}){}\<[E]%
\ColumnHook
\end{hscode}\resethooks
and into the bx version (the underscore represents the result of the monadic
computation; we could use \ensuremath{\Varid{evalState}} if we didn't like it, but this is
just standard Haskell-monad-cruft, nothing to do with our formalism
specifically:
\begin{hscode}\SaveRestoreHook
\column{B}{@{}>{\hspre}l<{\hspost}@{}}%
\column{3}{@{}>{\hspre}l<{\hspost}@{}}%
\column{E}{@{}>{\hspre}l<{\hspost}@{}}%
\>[3]{}(\anonymous ,\Varid{s}_{2})\mathrel{=}\Varid{runState}\Varid{n}\;(\mathbf{do}\;\{\mskip1.5mu \Varid{setl}\;\Varid{composersBx}\;\Varid{onlyBach}\mskip1.5mu\})\;\Varid{s}_{1}{}\<[E]%
\ColumnHook
\end{hscode}\resethooks
\item
Let's check that what the owner of the right-hand view sees is the
same in both cases. n1 is that, for the symmetric lens (we got told,
whether we liked it or not). For the bx:
\begin{hscode}\SaveRestoreHook
\column{B}{@{}>{\hspre}l<{\hspost}@{}}%
\column{3}{@{}>{\hspre}l<{\hspost}@{}}%
\column{E}{@{}>{\hspre}l<{\hspost}@{}}%
\>[3]{}(\Varid{n1bx},\Varid{s}_{3})\mathrel{=}\Varid{runState}\Varid{n}\;(\mathbf{do}\;\{\mskip1.5mu \Varid{n}\leftarrow \Varid{getr}\;\Varid{composersBx};\Varid{return}\;\Varid{n}\mskip1.5mu\})\;\Varid{s}_{2}{}\<[E]%
\\
\>[3]{}\Varid{ok1}\mathrel{=}(\Varid{n1sl}\equals\Varid{n1bx}){}\<[E]%
\ColumnHook
\end{hscode}\resethooks
\item
The RH view owner also likes John Tavener:
\begin{hscode}\SaveRestoreHook
\column{B}{@{}>{\hspre}l<{\hspost}@{}}%
\column{3}{@{}>{\hspre}l<{\hspost}@{}}%
\column{20}{@{}>{\hspre}l<{\hspost}@{}}%
\column{E}{@{}>{\hspre}l<{\hspost}@{}}%
\>[3]{}\Varid{johnTavener}\mathrel{=}({}\<[20]%
\>[20]{}\Conid{Name}\;\text{\tt \char34 John~Tavener\char34},\Conid{Nation}\;\text{\tt \char34 British\char34}){}\<[E]%
\ColumnHook
\end{hscode}\resethooks
and decides to append:
\begin{hscode}\SaveRestoreHook
\column{B}{@{}>{\hspre}l<{\hspost}@{}}%
\column{3}{@{}>{\hspre}l<{\hspost}@{}}%
\column{E}{@{}>{\hspre}l<{\hspost}@{}}%
\>[3]{}\Varid{bachTavener}\mathrel{=}\Varid{n1sl}\plus [\mskip1.5mu \Varid{johnTavener}\mskip1.5mu]{}\<[E]%
\ColumnHook
\end{hscode}\resethooks
Putting this into the symmetric lens version:
\begin{hscode}\SaveRestoreHook
\column{B}{@{}>{\hspre}l<{\hspost}@{}}%
\column{3}{@{}>{\hspre}l<{\hspost}@{}}%
\column{E}{@{}>{\hspre}l<{\hspost}@{}}%
\>[3]{}(\Varid{m1sl},\Varid{c3})\mathrel{=}\Varid{putl}\;\Varid{composers}\;(\Varid{bachTavener},\Varid{c}_{2}){}\<[E]%
\ColumnHook
\end{hscode}\resethooks
and into the bx version:
\begin{hscode}\SaveRestoreHook
\column{B}{@{}>{\hspre}l<{\hspost}@{}}%
\column{3}{@{}>{\hspre}l<{\hspost}@{}}%
\column{E}{@{}>{\hspre}l<{\hspost}@{}}%
\>[3]{}(\anonymous ,\Varid{s4})\mathrel{=}\Varid{runState}\Varid{n}\;(\mathbf{do}\;\{\mskip1.5mu \Varid{setr}\;\Varid{composersBx}\;\Varid{bachTavener}\mskip1.5mu\})\;\Varid{s}_{3}{}\<[E]%
\ColumnHook
\end{hscode}\resethooks
yields the same result for the LH view owner:
\begin{hscode}\SaveRestoreHook
\column{B}{@{}>{\hspre}l<{\hspost}@{}}%
\column{3}{@{}>{\hspre}l<{\hspost}@{}}%
\column{E}{@{}>{\hspre}l<{\hspost}@{}}%
\>[3]{}(\Varid{m1bx},\Varid{s5})\mathrel{=}\Varid{runState}\Varid{n}\;(\mathbf{do}\;\{\mskip1.5mu \Varid{m}\leftarrow \Varid{getl}\;\Varid{composersBx};\Varid{return}\;\Varid{m}\mskip1.5mu\})\;\Varid{s4}{}\<[E]%
\\
\>[3]{}\Varid{ok2}\mathrel{=}(\Varid{m1sl}\equals\Varid{m1bx}){}\<[E]%
\ColumnHook
\end{hscode}\resethooks
(Note that Haskell's \ensuremath{\ensuremath{\mathsf{Set}}} equality compares the contents of \ensuremath{\ensuremath{\mathsf{Set}}}s ignoring multiplicity and order.)

\item
The LH owner looks up Tavener's dates:
\begin{hscode}\SaveRestoreHook
\column{B}{@{}>{\hspre}l<{\hspost}@{}}%
\column{3}{@{}>{\hspre}l<{\hspost}@{}}%
\column{E}{@{}>{\hspre}l<{\hspost}@{}}%
\>[3]{}\Varid{datesJT}\mathrel{=}\Conid{Dates}\;(\Conid{Just}\;(\Conid{Date}\;\text{\tt \char34 1944\char34},\Conid{Date}\;\text{\tt \char34 2013\char34})){}\<[E]%
\ColumnHook
\end{hscode}\resethooks
and fixes their view:
\begin{hscode}\SaveRestoreHook
\column{B}{@{}>{\hspre}l<{\hspost}@{}}%
\column{3}{@{}>{\hspre}l<{\hspost}@{}}%
\column{E}{@{}>{\hspre}l<{\hspost}@{}}%
\>[3]{}(\Varid{yesJT},\Varid{noJT})\mathrel{=}\Varid{\ensuremath{\mathsf{Set}}.partition}\;(\lambda \hslambda (\Varid{nn},\Varid{na},\Varid{dd})\hsarrow{\rightarrow }{\mathpunct{.}}\Varid{nn}\equals\Conid{Name}\;\text{\tt \char34 John~Tavener\char34})\;\Varid{m1sl}{}\<[E]%
\\
\>[3]{}\Varid{fixedYesJT}\mathrel{=}\Varid{\ensuremath{\mathsf{Set}}.map}\;(\lambda \hslambda (\Varid{nn},\Varid{na},\Varid{dd})\hsarrow{\rightarrow }{\mathpunct{.}}(\Varid{nn},\Varid{na},\Varid{datesJT}))\;\Varid{yesJT}{}\<[E]%
\\
\>[3]{}\Varid{m}_{2}\mathrel{=}\Varid{\ensuremath{\mathsf{Set}}.union}\;\Varid{noJT}\;\Varid{fixedYesJT}{}\<[E]%
\ColumnHook
\end{hscode}\resethooks
and puts it back in the symmetric lens version:
\begin{hscode}\SaveRestoreHook
\column{B}{@{}>{\hspre}l<{\hspost}@{}}%
\column{3}{@{}>{\hspre}l<{\hspost}@{}}%
\column{E}{@{}>{\hspre}l<{\hspost}@{}}%
\>[3]{}(\Varid{n2sl},\Varid{c4})\mathrel{=}\Varid{putr}\;\Varid{composers}\;(\Varid{m}_{2},\Varid{c3}){}\<[E]%
\ColumnHook
\end{hscode}\resethooks
and in the bx version:
\begin{hscode}\SaveRestoreHook
\column{B}{@{}>{\hspre}l<{\hspost}@{}}%
\column{3}{@{}>{\hspre}l<{\hspost}@{}}%
\column{E}{@{}>{\hspre}l<{\hspost}@{}}%
\>[3]{}(\anonymous ,\Varid{s6})\mathrel{=}\Varid{runState}\Varid{n}\;(\mathbf{do}\;\{\mskip1.5mu \Varid{setl}\;\Varid{composersBx}\;\Varid{m}_{2}\mskip1.5mu\})\;\Varid{s5}{}\<[E]%
\ColumnHook
\end{hscode}\resethooks
Checking result from the other side:
\begin{hscode}\SaveRestoreHook
\column{B}{@{}>{\hspre}l<{\hspost}@{}}%
\column{3}{@{}>{\hspre}l<{\hspost}@{}}%
\column{E}{@{}>{\hspre}l<{\hspost}@{}}%
\>[3]{}(\Varid{n2bx},\Varid{s7})\mathrel{=}\Varid{runState}\Varid{n}\;(\mathbf{do}\;\{\mskip1.5mu \Varid{n}\leftarrow \Varid{getr}\;\Varid{composersBx};\Varid{return}\;\Varid{n}\mskip1.5mu\})\;\Varid{s6}{}\<[E]%
\\
\>[3]{}\Varid{ok3}\mathrel{=}(\Varid{n2sl}\equals\Varid{n2bx}){}\<[E]%
\ColumnHook
\end{hscode}\resethooks
\item
Back to the RH view owner
\begin{hscode}\SaveRestoreHook
\column{B}{@{}>{\hspre}l<{\hspost}@{}}%
\column{3}{@{}>{\hspre}l<{\hspost}@{}}%
\column{9}{@{}>{\hspre}l<{\hspost}@{}}%
\column{13}{@{}>{\hspre}l<{\hspost}@{}}%
\column{E}{@{}>{\hspre}l<{\hspost}@{}}%
\>[3]{}\Varid{n3}\mathrel{=}{}\<[9]%
\>[9]{}(({}\<[13]%
\>[13]{}\Conid{Name}\;\text{\tt \char34 Hendrik~Andriessen\char34},\Conid{Nation}\;\text{\tt \char34 Dutch\char34})\mathbin{:}\Varid{n2sl}){}\<[E]%
\\
\>[9]{}\plus [\mskip1.5mu (\Conid{Name}\;\text{\tt \char34 J-B~Lully\char34},\Conid{Nation}\;\text{\tt \char34 French\char34})\mskip1.5mu]{}\<[E]%
\\
\>[3]{}(\Varid{m3sl},\Varid{c5})\mathrel{=}\Varid{putl}\;\Varid{composers}\;(\Varid{n3},\Varid{c4}){}\<[E]%
\\
\>[3]{}(\anonymous ,\Varid{s8})\mathrel{=}\Varid{runState}\Varid{n}\;(\mathbf{do}\;\{\mskip1.5mu \Varid{setr}\;\Varid{composersBx}\;\Varid{n3}\mskip1.5mu\})\;\Varid{s6}{}\<[E]%
\ColumnHook
\end{hscode}\resethooks
\end{enumerate}

To note:

\begin{itemize}

\item
we have shown alternating sets on the two sides, as that is natural
for symmetric lenses; for bx, any order of sets works equally well
(and there is no need to wonder about what complement to use).

\item
We've shown the fine-grained version to facilitate comparison, but
we can also combine steps:
\begin{hscode}\SaveRestoreHook
\column{B}{@{}>{\hspre}l<{\hspost}@{}}%
\column{3}{@{}>{\hspre}l<{\hspost}@{}}%
\column{16}{@{}>{\hspre}l<{\hspost}@{}}%
\column{E}{@{}>{\hspre}l<{\hspost}@{}}%
\>[3]{}(\Varid{n'},\Varid{s'})\mathrel{=}\Varid{runIdentity}\;(\Varid{runr}\;\Varid{composersBx}\;(\mathbf{do}\;\{\mskip1.5mu {}\<[E]%
\\
\>[3]{}\hsindent{13}{}\<[16]%
\>[16]{}\Varid{setl}\;\Varid{composersBx}\;\Varid{onlyBach};{}\<[E]%
\\
\>[3]{}\hsindent{13}{}\<[16]%
\>[16]{}\Varid{n}\leftarrow \Varid{getr}\;\Varid{composersBx};{}\<[E]%
\\
\>[3]{}\hsindent{13}{}\<[16]%
\>[16]{}\Varid{return}\;\Varid{n}\mskip1.5mu\})[\mskip1.5mu \mskip1.5mu]){}\<[E]%
\ColumnHook
\end{hscode}\resethooks
etc. 

\end{itemize}

Auxiliary definitions; only the \ensuremath{\Conid{Show}} instance for Dates is noteworthy
\begin{hscode}\SaveRestoreHook
\column{B}{@{}>{\hspre}l<{\hspost}@{}}%
\column{3}{@{}>{\hspre}l<{\hspost}@{}}%
\column{5}{@{}>{\hspre}l<{\hspost}@{}}%
\column{7}{@{}>{\hspre}l<{\hspost}@{}}%
\column{11}{@{}>{\hspre}l<{\hspost}@{}}%
\column{13}{@{}>{\hspre}l<{\hspost}@{}}%
\column{E}{@{}>{\hspre}l<{\hspost}@{}}%
\>[3]{}\mathbf{newtype}\;\Conid{Name}\mathrel{=}\Conid{Name}\;\{\mskip1.5mu \Varid{unName}\mathbin{::}\Conid{String}\mskip1.5mu\}{}\<[E]%
\\
\>[3]{}\hsindent{2}{}\<[5]%
\>[5]{}\mathbf{deriving}\;(\Conid{Eq},\Conid{Ord}){}\<[E]%
\\
\>[3]{}\mathbf{instance}\;\Conid{Show}\;\Conid{Name}\;\mathbf{where}{}\<[E]%
\\
\>[3]{}\hsindent{2}{}\<[5]%
\>[5]{}\Varid{show}\;(\Conid{Name}\;\Varid{n})\mathrel{=}\Varid{n}{}\<[E]%
\\[\blanklineskip]%
\>[3]{}\mathbf{newtype}\;\Conid{Nation}\mathrel{=}\Conid{Nation}\;\{\mskip1.5mu \Varid{unNation}\mathbin{::}\Conid{String}\mskip1.5mu\}{}\<[E]%
\\
\>[3]{}\hsindent{2}{}\<[5]%
\>[5]{}\mathbf{deriving}\;(\Conid{Eq},\Conid{Ord}){}\<[E]%
\\
\>[3]{}\mathbf{instance}\;\Conid{Show}\;\Conid{Nation}\;\mathbf{where}{}\<[E]%
\\
\>[3]{}\hsindent{2}{}\<[5]%
\>[5]{}\Varid{show}\;(\Conid{Nation}\;\Varid{n})\mathrel{=}\Varid{n}{}\<[E]%
\\[\blanklineskip]%
\>[3]{}\mathbf{newtype}\;\Conid{Date}\mathrel{=}\Conid{Date}\;\{\mskip1.5mu \Varid{unDate}\mathbin{::}\Conid{String}\mskip1.5mu\}{}\<[E]%
\\
\>[3]{}\hsindent{2}{}\<[5]%
\>[5]{}\mathbf{deriving}\;(\Conid{Eq},\Conid{Ord}){}\<[E]%
\\
\>[3]{}\mathbf{instance}\;\Conid{Show}\;\Conid{Date}\;\mathbf{where}{}\<[E]%
\\
\>[3]{}\hsindent{2}{}\<[5]%
\>[5]{}\Varid{show}\;(\Conid{Date}\;\Varid{d})\mathrel{=}\Varid{d}{}\<[E]%
\\[\blanklineskip]%
\>[3]{}\mathbf{newtype}\;\Conid{Dates}\mathrel{=}\Conid{Dates}\;\{\mskip1.5mu \Varid{unDates}\mathbin{::}\Conid{Maybe}\;(\Conid{Date},\Conid{Date})\mskip1.5mu\}{}\<[E]%
\\
\>[3]{}\hsindent{2}{}\<[5]%
\>[5]{}\mathbf{deriving}\;(\Conid{Eq},\Conid{Ord}){}\<[E]%
\\
\>[3]{}\mathbf{instance}\;\Conid{Show}\;\Conid{Dates}\;\mathbf{where}{}\<[E]%
\\
\>[3]{}\hsindent{2}{}\<[5]%
\>[5]{}\Varid{show}\;(\Conid{Dates}\;\Varid{d})\mathrel{=}\Varid{h}\;\Varid{d}{}\<[E]%
\\
\>[5]{}\hsindent{2}{}\<[7]%
\>[7]{}\mathbf{where}\;\Varid{h}\;\Conid{Nothing}\mathrel{=}\text{\tt \char34 ????\char34}{}\<[E]%
\\
\>[7]{}\hsindent{6}{}\<[13]%
\>[13]{}\Varid{h}\;(\Conid{Just}\;(\Varid{dob},\Varid{dod}))\mathrel{=}\Varid{show}\;\Varid{dob}\plus \text{\tt \char34 --\char34}\plus \Varid{show}\;\Varid{dod}{}\<[E]%
\\[\blanklineskip]%
\>[3]{}\Varid{dates0}\mathbin{::}\Conid{Dates}{}\<[E]%
\\
\>[3]{}\Varid{dates0}{}\<[11]%
\>[11]{}\mathrel{=}\Conid{Dates}\;\Conid{Nothing}{}\<[E]%
\ColumnHook
\end{hscode}\resethooks

\subsection{Examples}
\begin{hscode}\SaveRestoreHook
\column{B}{@{}>{\hspre}l<{\hspost}@{}}%
\column{3}{@{}>{\hspre}l<{\hspost}@{}}%
\column{E}{@{}>{\hspre}l<{\hspost}@{}}%
\>[3]{}\mbox{\enskip\{-\# LANGUAGE RankNTypes, FlexibleContexts  \#-\}\enskip}{}\<[E]%
\\[\blanklineskip]%
\>[3]{}\mathbf{module}\;\Conid{Examples}\;\mathbf{where}{}\<[E]%
\\
\>[3]{}\mathbf{import}\;\Conid{\Conid{Control}.\Conid{Monad}.State}\;\Varid{as}\;\Conid{State}{}\<[E]%
\\
\>[3]{}\mathbf{import}\;\Conid{\Conid{Control}.\Conid{Monad}.Reader}\;\Varid{as}\;\Conid{Reader}{}\<[E]%
\\
\>[3]{}\mathbf{import}\;\Conid{\Conid{Control}.\Conid{Monad}.Writer}\;\Varid{as}\;\Conid{Writer}{}\<[E]%
\\
\>[3]{}\mathbf{import}\;\Conid{Control}.\Conid{Monad}.\Conid{Id}\;\Varid{as}\;\Conid{Id}{}\<[E]%
\\
\>[3]{}\mathbf{import}\;\Conid{\Conid{Data}.Map}\;\Varid{as}\;\Conid{Map}\;(\Conid{Map}){}\<[E]%
\\
\>[3]{}\mathbf{import}\;\Conid{BX}{}\<[E]%
\\
\>[3]{}\mathbf{import}\;\Conid{StateTBX}{}\<[E]%
\ColumnHook
\end{hscode}\resethooks
Failure.
\begin{hscode}\SaveRestoreHook
\column{B}{@{}>{\hspre}l<{\hspost}@{}}%
\column{3}{@{}>{\hspre}l<{\hspost}@{}}%
\column{7}{@{}>{\hspre}l<{\hspost}@{}}%
\column{14}{@{}>{\hspre}l<{\hspost}@{}}%
\column{23}{@{}>{\hspre}l<{\hspost}@{}}%
\column{E}{@{}>{\hspre}l<{\hspost}@{}}%
\>[3]{}\Varid{inv}\mathbin{::}\Conid{StateTBX}\;\Conid{Maybe}\;\Conid{Float}\;\Conid{Float}\;\Conid{Float}{}\<[E]%
\\
\>[3]{}\Varid{inv}\mathrel{=}\Conid{StateTBX}\;\Varid{get}\;\set{L}\;\Varid{init}_{L}\;(\Varid{gets}\;(\lambda \hslambda \Varid{a}\hsarrow{\rightarrow }{\mathpunct{.}}\mathrm{1}\mathbin{/}\Varid{a}))\;\set{R}\;\Varid{init}_{R}{}\<[E]%
\\
\>[3]{}\hsindent{4}{}\<[7]%
\>[7]{}\mathbf{where}\;{}\<[14]%
\>[14]{}\set{L}\;\Varid{a}{}\<[23]%
\>[23]{}\mathrel{=}\Varid{try}\;\Varid{put}\;\Varid{a}{}\<[E]%
\\
\>[14]{}\set{R}\;\Varid{b}{}\<[23]%
\>[23]{}\mathrel{=}\Varid{try}\;\Varid{put}\;(\mathrm{1}\mathbin{/}\Varid{b}){}\<[E]%
\\
\>[14]{}\Varid{try}\;\Varid{m}\;\Varid{a}{}\<[23]%
\>[23]{}\mathrel{=}\mathbf{if}\;\Varid{a}\notequals\mathrm{0.0}\;\mathbf{then}\;\Varid{m}\;\Varid{a}\;\mathbf{else}\;\Varid{lift}\;\Conid{Nothing}{}\<[E]%
\\
\>[14]{}\Varid{init}_{L}\;\Varid{a}{}\<[23]%
\>[23]{}\mathrel{=}\mathbf{if}\;\Varid{a}\notequals\mathrm{0.0}\;\mathbf{then}\;\Conid{Just}\;\Varid{a}\;\mathbf{else}\;\Conid{Nothing}{}\<[E]%
\\
\>[14]{}\Varid{init}_{R}\;\Varid{a}{}\<[23]%
\>[23]{}\mathrel{=}\mathbf{if}\;\Varid{a}\notequals\mathrm{0.0}\;\mathbf{then}\;\Conid{Just}\;(\mathrm{1}\mathbin{/}\Varid{a})\;\mathbf{else}\;\Conid{Nothing}{}\<[E]%
\ColumnHook
\end{hscode}\resethooks
A generalization
\begin{hscode}\SaveRestoreHook
\column{B}{@{}>{\hspre}l<{\hspost}@{}}%
\column{3}{@{}>{\hspre}l<{\hspost}@{}}%
\column{5}{@{}>{\hspre}l<{\hspost}@{}}%
\column{7}{@{}>{\hspre}l<{\hspost}@{}}%
\column{17}{@{}>{\hspre}l<{\hspost}@{}}%
\column{21}{@{}>{\hspre}c<{\hspost}@{}}%
\column{21E}{@{}l@{}}%
\column{25}{@{}>{\hspre}l<{\hspost}@{}}%
\column{33}{@{}>{\hspre}l<{\hspost}@{}}%
\column{E}{@{}>{\hspre}l<{\hspost}@{}}%
\>[3]{}\Varid{divZeroBX}\mathbin{::}{}\<[17]%
\>[17]{}(\Conid{Fractional}\;\Varid{a},\Conid{Eq}\;\Varid{a},\Conid{Monad}\;\Varid{m})\Rightarrow (\forall \Varid{x}\hsforall \hsdot{\cdot }{.}\Varid{m}\;\Varid{x})\hsarrow{\rightarrow }{\mathpunct{.}}\Conid{StateTBX}\;\Varid{m}\;\Varid{a}\;\Varid{a}\;\Varid{a}{}\<[E]%
\\
\>[3]{}\Varid{divZeroBX}\;\Varid{divZero}\mathrel{=}\Conid{StateTBX}\;{}\<[33]%
\>[33]{}\Varid{get}\;\set{L}\;\Varid{init}_{L}\;(\Varid{gets}\;(\lambda \hslambda \Varid{a}\hsarrow{\rightarrow }{\mathpunct{.}}(\mathrm{1}\mathbin{/}\Varid{a})))\;\set{R}\;\Varid{init}_{R}{}\<[E]%
\\
\>[3]{}\hsindent{2}{}\<[5]%
\>[5]{}\mathbf{where}{}\<[E]%
\\
\>[5]{}\hsindent{2}{}\<[7]%
\>[7]{}\set{L}\;\Varid{a}{}\<[21]%
\>[21]{}\mathrel{=}{}\<[21E]%
\>[25]{}\mathbf{do}\;\{\mskip1.5mu \Varid{lift}\;(\Varid{guard}\;(\Varid{a}\notequals\mathrm{0}));\Varid{put}\;\Varid{a}\mskip1.5mu\}{}\<[E]%
\\
\>[5]{}\hsindent{2}{}\<[7]%
\>[7]{}\set{R}\;\Varid{b}{}\<[21]%
\>[21]{}\mathrel{=}{}\<[21E]%
\>[25]{}\mathbf{do}\;\{\mskip1.5mu \Varid{lift}\;(\Varid{guard}\;(\Varid{b}\notequals\mathrm{0}));\Varid{put}\;(\mathrm{1}\mathbin{/}\Varid{b})\mskip1.5mu\}{}\<[E]%
\\
\>[5]{}\hsindent{2}{}\<[7]%
\>[7]{}\Varid{init}_{L}\;\Varid{a}{}\<[21]%
\>[21]{}\mathrel{=}{}\<[21E]%
\>[25]{}\mathbf{do}\;\{\mskip1.5mu \Varid{guard}\;(\Varid{a}\notequals\mathrm{0});\Varid{return}\;\Varid{a}\mskip1.5mu\}{}\<[E]%
\\
\>[5]{}\hsindent{2}{}\<[7]%
\>[7]{}\Varid{init}_{R}\;\Varid{b}{}\<[21]%
\>[21]{}\mathrel{=}{}\<[21E]%
\>[25]{}\mathbf{do}\;\{\mskip1.5mu \Varid{guard}\;(\Varid{b}\notequals\mathrm{0});\Varid{return}\;(\mathrm{1}\mathbin{/}\Varid{b})\mskip1.5mu\}{}\<[E]%
\\
\>[5]{}\hsindent{2}{}\<[7]%
\>[7]{}\Varid{guard}\;\Varid{b}{}\<[21]%
\>[21]{}\mathrel{=}{}\<[21E]%
\>[25]{}\mathbf{if}\;\Varid{b}\;\mathbf{then}\;\Varid{return}\;()\;\mathbf{else}\;\Varid{divZero}{}\<[E]%
\ColumnHook
\end{hscode}\resethooks
Uses \ensuremath{\Varid{readS}} to trap errors
\begin{hscode}\SaveRestoreHook
\column{B}{@{}>{\hspre}l<{\hspost}@{}}%
\column{3}{@{}>{\hspre}l<{\hspost}@{}}%
\column{5}{@{}>{\hspre}l<{\hspost}@{}}%
\column{12}{@{}>{\hspre}l<{\hspost}@{}}%
\column{16}{@{}>{\hspre}l<{\hspost}@{}}%
\column{26}{@{}>{\hspre}l<{\hspost}@{}}%
\column{33}{@{}>{\hspre}l<{\hspost}@{}}%
\column{E}{@{}>{\hspre}l<{\hspost}@{}}%
\>[3]{}\Varid{readSome}\mathbin{::}{}\<[16]%
\>[16]{}(\Conid{Read}\;\Varid{a},\Conid{Show}\;\Varid{a})\Rightarrow {}\<[E]%
\\
\>[16]{}\Conid{StateTBX}\;\Conid{Maybe}\;(\Varid{a},\Conid{String})\;\Varid{a}\;\Conid{String}{}\<[E]%
\\
\>[3]{}\Varid{readSome}\mathrel{=}\Conid{StateTBX}\;(\Varid{gets}\;\Varid{fst})\;\set{L}\;\Varid{init}_{L}\;(\Varid{gets}\;\Varid{snd})\;\set{R}\;\Varid{init}_{R}{}\<[E]%
\\
\>[3]{}\hsindent{2}{}\<[5]%
\>[5]{}\mathbf{where}\;{}\<[12]%
\>[12]{}\set{L}\;\Varid{a'}\mathrel{=}\Varid{put}\;(\Varid{a'},\Varid{show}\;\Varid{a'}){}\<[E]%
\\
\>[12]{}\set{R}\;\Varid{b'}\mathrel{=}\mathbf{do}\;{}\<[26]%
\>[26]{}(\anonymous ,\Varid{b})\leftarrow \Varid{get}{}\<[E]%
\\
\>[26]{}\mathbf{if}\;\Varid{b}\equals\Varid{b'}\;\mathbf{then}\;\Varid{return}\;(){}\<[E]%
\\
\>[26]{}\mathbf{else}\;\mathbf{case}\;\Varid{reads}\;\Varid{b'}\;\mathbf{of}{}\<[E]%
\\
\>[26]{}\hsindent{7}{}\<[33]%
\>[33]{}((\Varid{a'},\text{\tt \char34 \char34})\mathbin{:\char95 })\hsarrow{\rightarrow }{\mathpunct{.}}\Varid{put}\;(\Varid{a'},\Varid{b'}){}\<[E]%
\\
\>[26]{}\hsindent{7}{}\<[33]%
\>[33]{}\anonymous \hsarrow{\rightarrow }{\mathpunct{.}}\Varid{lift}\;\Conid{Nothing}{}\<[E]%
\\
\>[12]{}\Varid{init}_{L}\;\Varid{a}\mathrel{=}\Varid{return}\;(\Varid{a},\Varid{show}\;\Varid{a}){}\<[E]%
\\
\>[12]{}\Varid{init}_{R}\;\Varid{b}\mathrel{=}\mathbf{do}\;{}\<[26]%
\>[26]{}\Varid{a}\leftarrow \Varid{choices}\;[\mskip1.5mu \Varid{a}\mid (\Varid{a},\text{\tt \char34 \char34})\leftarrow \Varid{reads}\;\Varid{b}\mskip1.5mu]{}\<[E]%
\\
\>[26]{}\Varid{return}\;(\Varid{a},\Varid{b}){}\<[E]%
\\
\>[12]{}\Varid{choices}\;[\mskip1.5mu \mskip1.5mu]\mathrel{=}\Varid{mzero}{}\<[E]%
\\
\>[12]{}\Varid{choices}\;(\Varid{a}\mathbin{:}\Varid{as})\mathrel{=}\Varid{return}\;\Varid{a}\mathbin{`\Varid{mplus}`}\Varid{choices}\;\Varid{as}{}\<[E]%
\ColumnHook
\end{hscode}\resethooks
A generalization
\begin{hscode}\SaveRestoreHook
\column{B}{@{}>{\hspre}l<{\hspost}@{}}%
\column{3}{@{}>{\hspre}l<{\hspost}@{}}%
\column{5}{@{}>{\hspre}l<{\hspost}@{}}%
\column{12}{@{}>{\hspre}l<{\hspost}@{}}%
\column{18}{@{}>{\hspre}l<{\hspost}@{}}%
\column{21}{@{}>{\hspre}l<{\hspost}@{}}%
\column{24}{@{}>{\hspre}l<{\hspost}@{}}%
\column{27}{@{}>{\hspre}l<{\hspost}@{}}%
\column{29}{@{}>{\hspre}l<{\hspost}@{}}%
\column{E}{@{}>{\hspre}l<{\hspost}@{}}%
\>[3]{}\Varid{readableBX}\mathbin{::}{}\<[18]%
\>[18]{}(\Conid{Read}\;\Varid{a},\Conid{Show}\;\Varid{a},\Conid{MonadPlus}\;\Varid{m})\Rightarrow {}\<[E]%
\\
\>[18]{}\Conid{StateTBX}\;\Varid{m}\;(\Varid{a},\Conid{String})\;\Varid{a}\;\Conid{String}{}\<[E]%
\\
\>[3]{}\Varid{readableBX}\mathrel{=}\Conid{StateTBX}\;\get{A}\;\set{A}\;\Varid{initA}\;\get{B}\;\set{B}\;\Varid{initB}{}\<[E]%
\\
\>[3]{}\hsindent{2}{}\<[5]%
\>[5]{}\mathbf{where}\;{}\<[12]%
\>[12]{}\get{A}{}\<[21]%
\>[21]{}\mathrel{=}\mathbf{do}\;\{\mskip1.5mu (\Varid{a},\Varid{s})\leftarrow \Varid{get};\Varid{return}\;\Varid{a}\mskip1.5mu\}{}\<[E]%
\\
\>[12]{}\get{B}{}\<[21]%
\>[21]{}\mathrel{=}\mathbf{do}\;\{\mskip1.5mu (\Varid{a},\Varid{s})\leftarrow \Varid{get};\Varid{return}\;\Varid{s}\mskip1.5mu\}{}\<[E]%
\\
\>[12]{}\set{A}\;\Varid{a'}{}\<[21]%
\>[21]{}\mathrel{=}\Varid{put}\;(\Varid{a'},\Varid{show}\;\Varid{a'}){}\<[E]%
\\
\>[12]{}\set{B}\;\Varid{b'}{}\<[21]%
\>[21]{}\mathrel{=}\mathbf{do}\;{}\<[27]%
\>[27]{}(\anonymous ,\Varid{b})\leftarrow \Varid{get}{}\<[E]%
\\
\>[27]{}\mathbf{case}\;((\Varid{b}\equals\Varid{b'}),\Varid{reads}\;\Varid{b'})\;\mathbf{of}{}\<[E]%
\\
\>[27]{}\hsindent{2}{}\<[29]%
\>[29]{}(\Conid{True},\anonymous )\hsarrow{\rightarrow }{\mathpunct{.}}\Varid{return}\;(){}\<[E]%
\\
\>[27]{}\hsindent{2}{}\<[29]%
\>[29]{}(\anonymous ,(\Varid{a'},\text{\tt \char34 \char34})\mathbin{:\char95 })\hsarrow{\rightarrow }{\mathpunct{.}}\Varid{put}\;(\Varid{a'},\Varid{b'}){}\<[E]%
\\
\>[27]{}\hsindent{2}{}\<[29]%
\>[29]{}\anonymous \hsarrow{\rightarrow }{\mathpunct{.}}\Varid{lift}\;\Varid{mzero}{}\<[E]%
\\
\>[12]{}\Varid{initA}\;\Varid{a}\mathrel{=}\Varid{return}\;(\Varid{a},\Varid{show}\;\Varid{a}){}\<[E]%
\\
\>[12]{}\Varid{initB}\;\Varid{b}\mathrel{=}\mathbf{case}\;\Varid{reads}\;\Varid{b}\;\mathbf{of}{}\<[E]%
\\
\>[12]{}\hsindent{12}{}\<[24]%
\>[24]{}(\Varid{a},\text{\tt \char34 \char34})\mathbin{:\char95 }\hsarrow{\rightarrow }{\mathpunct{.}}\Varid{return}\;(\Varid{a},\Varid{b}){}\<[E]%
\\
\>[12]{}\hsindent{12}{}\<[24]%
\>[24]{}\anonymous \hsarrow{\rightarrow }{\mathpunct{.}}\Varid{mzero}{}\<[E]%
\ColumnHook
\end{hscode}\resethooks
The JTL example from the paper (\Example~\ref{ex:nondeterminism})
\begin{hscode}\SaveRestoreHook
\column{B}{@{}>{\hspre}l<{\hspost}@{}}%
\column{3}{@{}>{\hspre}l<{\hspost}@{}}%
\column{5}{@{}>{\hspre}l<{\hspost}@{}}%
\column{14}{@{}>{\hspre}l<{\hspost}@{}}%
\column{16}{@{}>{\hspre}l<{\hspost}@{}}%
\column{20}{@{}>{\hspre}l<{\hspost}@{}}%
\column{22}{@{}>{\hspre}l<{\hspost}@{}}%
\column{33}{@{}>{\hspre}l<{\hspost}@{}}%
\column{E}{@{}>{\hspre}l<{\hspost}@{}}%
\>[3]{}\Varid{nondetBX}\mathbin{::}{}\<[16]%
\>[16]{}(\Varid{a}\hsarrow{\rightarrow }{\mathpunct{.}}\Varid{b}\hsarrow{\rightarrow }{\mathpunct{.}}\Conid{Bool})\hsarrow{\rightarrow }{\mathpunct{.}}(\Varid{a}\hsarrow{\rightarrow }{\mathpunct{.}}[\mskip1.5mu \Varid{b}\mskip1.5mu])\hsarrow{\rightarrow }{\mathpunct{.}}(\Varid{b}\hsarrow{\rightarrow }{\mathpunct{.}}[\mskip1.5mu \Varid{a}\mskip1.5mu])\hsarrow{\rightarrow }{\mathpunct{.}}\Conid{StateTBX}\;[\mskip1.5mu \mskip1.5mu]\;(\Varid{a},\Varid{b})\;\Varid{a}\;\Varid{b}{}\<[E]%
\\
\>[3]{}\Varid{nondetBX}\;\Varid{ok}\;\Varid{bs}\;\Varid{as}\mathrel{=}\Conid{StateTBX}\;{}\<[33]%
\>[33]{}\get{L}\;\set{L}\;\Varid{init}_{L}\;\get{R}\;\set{R}\;\Varid{init}_{R}\;\mathbf{where}{}\<[E]%
\\
\>[3]{}\hsindent{2}{}\<[5]%
\>[5]{}\get{L}{}\<[14]%
\>[14]{}\mathrel{=}\mathbf{do}\;\{\mskip1.5mu (\Varid{a},\Varid{b})\leftarrow \Varid{get};\Varid{return}\;\Varid{a}\mskip1.5mu\}{}\<[E]%
\\
\>[3]{}\hsindent{2}{}\<[5]%
\>[5]{}\get{R}{}\<[14]%
\>[14]{}\mathrel{=}\mathbf{do}\;\{\mskip1.5mu (\Varid{a},\Varid{b})\leftarrow \Varid{get};\Varid{return}\;\Varid{b}\mskip1.5mu\}{}\<[E]%
\\
\>[3]{}\hsindent{2}{}\<[5]%
\>[5]{}\set{L}\;\Varid{a'}{}\<[14]%
\>[14]{}\mathrel{=}\mathbf{do}\;{}\<[20]%
\>[20]{}(\Varid{a},\Varid{b})\leftarrow \Varid{get}{}\<[E]%
\\
\>[20]{}\mathbf{if}\;\Varid{ok}\;\Varid{a'}\;\Varid{b}\;\mathbf{then}\;\Varid{put}\;(\Varid{a'},\Varid{b})\;\mathbf{else}{}\<[E]%
\\
\>[20]{}\hsindent{2}{}\<[22]%
\>[22]{}\mathbf{do}\;\{\mskip1.5mu \Varid{b'}\leftarrow \Varid{lift}\;(\Varid{bs}\;\Varid{a'});\Varid{put}\;(\Varid{a'},\Varid{b'})\mskip1.5mu\}{}\<[E]%
\\
\>[3]{}\hsindent{2}{}\<[5]%
\>[5]{}\set{R}\;\Varid{b'}{}\<[14]%
\>[14]{}\mathrel{=}\mathbf{do}\;{}\<[20]%
\>[20]{}(\Varid{a},\Varid{b})\leftarrow \Varid{get}{}\<[E]%
\\
\>[20]{}\mathbf{if}\;\Varid{ok}\;\Varid{a}\;\Varid{b'}\;\mathbf{then}\;\Varid{put}\;(\Varid{a},\Varid{b'})\;\mathbf{else}{}\<[E]%
\\
\>[20]{}\hsindent{2}{}\<[22]%
\>[22]{}\mathbf{do}\;\{\mskip1.5mu \Varid{a'}\leftarrow \Varid{lift}\;(\Varid{as}\;\Varid{b'});\Varid{put}\;(\Varid{a'},\Varid{b'})\mskip1.5mu\}{}\<[E]%
\\
\>[3]{}\hsindent{2}{}\<[5]%
\>[5]{}\Varid{init}_{L}\;\Varid{a}{}\<[14]%
\>[14]{}\mathrel{=}\mathbf{do}\;\{\mskip1.5mu \Varid{b}\leftarrow \Varid{bs}\;\Varid{a};\Varid{return}\;(\Varid{a},\Varid{b})\mskip1.5mu\}{}\<[E]%
\\
\>[3]{}\hsindent{2}{}\<[5]%
\>[5]{}\Varid{init}_{R}\;\Varid{b}{}\<[14]%
\>[14]{}\mathrel{=}\mathbf{do}\;\{\mskip1.5mu \Varid{a}\leftarrow \Varid{as}\;\Varid{b};\Varid{return}\;(\Varid{a},\Varid{b})\mskip1.5mu\}{}\<[E]%
\ColumnHook
\end{hscode}\resethooks
Switching between two lenses on the same state space, based on a boolean flag
\begin{hscode}\SaveRestoreHook
\column{B}{@{}>{\hspre}l<{\hspost}@{}}%
\column{3}{@{}>{\hspre}l<{\hspost}@{}}%
\column{5}{@{}>{\hspre}l<{\hspost}@{}}%
\column{11}{@{}>{\hspre}l<{\hspost}@{}}%
\column{25}{@{}>{\hspre}c<{\hspost}@{}}%
\column{25E}{@{}l@{}}%
\column{28}{@{}>{\hspre}l<{\hspost}@{}}%
\column{38}{@{}>{\hspre}l<{\hspost}@{}}%
\column{E}{@{}>{\hspre}l<{\hspost}@{}}%
\>[3]{}\Varid{switchBX}\mathbin{::}\Conid{MonadReader}\;\Conid{Bool}\;\Varid{m}\Rightarrow {}\<[38]%
\>[38]{}\Conid{StateTBX}\;\Varid{m}\;\Varid{s}\;\Varid{a}\;\Varid{b}\hsarrow{\rightarrow }{\mathpunct{.}}{}\<[E]%
\\
\>[38]{}\Conid{StateTBX}\;\Varid{m}\;\Varid{s}\;\Varid{a}\;\Varid{b}\hsarrow{\rightarrow }{\mathpunct{.}}{}\<[E]%
\\
\>[38]{}\Conid{StateTBX}\;\Varid{m}\;\Varid{s}\;\Varid{a}\;\Varid{b}{}\<[E]%
\\
\>[3]{}\Varid{switchBX}\;bx_{1}\;bx_{2}\mathrel{=}\Conid{StateTBX}\;\get{A}\;\set{A}\;\Varid{initA}\;\get{B}\;\set{B}\;\Varid{initB}{}\<[E]%
\\
\>[3]{}\hsindent{2}{}\<[5]%
\>[5]{}\mathbf{where}\;\get{A}{}\<[25]%
\>[25]{}\mathrel{=}{}\<[25E]%
\>[28]{}\Varid{switch}\;(\Varid{getl}\;bx_{1})\;(\Varid{getl}\;bx_{2}){}\<[E]%
\\
\>[5]{}\hsindent{6}{}\<[11]%
\>[11]{}\get{B}{}\<[25]%
\>[25]{}\mathrel{=}{}\<[25E]%
\>[28]{}\Varid{switch}\;(\Varid{getr}\;bx_{2})\;(\Varid{getr}\;bx_{2}){}\<[E]%
\\
\>[5]{}\hsindent{6}{}\<[11]%
\>[11]{}\set{A}\;\Varid{a}{}\<[25]%
\>[25]{}\mathrel{=}{}\<[25E]%
\>[28]{}\Varid{switch}\;(\Varid{setl}\;bx_{1}\;\Varid{a})\;(\Varid{setl}\;bx_{2}\;\Varid{a}){}\<[E]%
\\
\>[5]{}\hsindent{6}{}\<[11]%
\>[11]{}\set{B}\;\Varid{b}{}\<[25]%
\>[25]{}\mathrel{=}{}\<[25E]%
\>[28]{}\Varid{switch}\;(\Varid{setr}\;bx_{1}\;\Varid{b})\;(\Varid{setr}\;bx_{2}\;\Varid{b}){}\<[E]%
\\
\>[5]{}\hsindent{6}{}\<[11]%
\>[11]{}\Varid{initA}\;\Varid{a}{}\<[25]%
\>[25]{}\mathrel{=}{}\<[25E]%
\>[28]{}\Varid{switch}\;(\Varid{initl}\;bx_{1}\;\Varid{a})\;(\Varid{initl}\;bx_{2}\;\Varid{a}){}\<[E]%
\\
\>[5]{}\hsindent{6}{}\<[11]%
\>[11]{}\Varid{initB}\;\Varid{b}{}\<[25]%
\>[25]{}\mathrel{=}{}\<[25E]%
\>[28]{}\Varid{switch}\;(\Varid{initr}\;bx_{1}\;\Varid{b})\;(\Varid{initr}\;bx_{2}\;\Varid{b}){}\<[E]%
\\
\>[5]{}\hsindent{6}{}\<[11]%
\>[11]{}\Varid{switch}\;\Varid{m}_{1}\;\Varid{m}_{2}{}\<[25]%
\>[25]{}\mathrel{=}{}\<[25E]%
\>[28]{}\mathbf{do}\;\{\mskip1.5mu \Varid{b}\leftarrow \Varid{ask};\mathbf{if}\;\Varid{b}\;\mathbf{then}\;\Varid{m}_{1}\;\mathbf{else}\;\Varid{m}_{2}\mskip1.5mu\}{}\<[E]%
\ColumnHook
\end{hscode}\resethooks
Generalized version
\begin{hscode}\SaveRestoreHook
\column{B}{@{}>{\hspre}l<{\hspost}@{}}%
\column{3}{@{}>{\hspre}l<{\hspost}@{}}%
\column{5}{@{}>{\hspre}l<{\hspost}@{}}%
\column{11}{@{}>{\hspre}l<{\hspost}@{}}%
\column{25}{@{}>{\hspre}c<{\hspost}@{}}%
\column{25E}{@{}l@{}}%
\column{28}{@{}>{\hspre}l<{\hspost}@{}}%
\column{36}{@{}>{\hspre}l<{\hspost}@{}}%
\column{E}{@{}>{\hspre}l<{\hspost}@{}}%
\>[3]{}\Varid{switchBX'}\mathbin{::}\Conid{MonadReader}\;\Varid{c}\;\Varid{m}\Rightarrow {}\<[36]%
\>[36]{}(\Varid{c}\hsarrow{\rightarrow }{\mathpunct{.}}\Conid{StateTBX}\;\Varid{m}\;\Varid{s}\;\Varid{a}\;\Varid{b})\hsarrow{\rightarrow }{\mathpunct{.}}{}\<[E]%
\\
\>[36]{}\Conid{StateTBX}\;\Varid{m}\;\Varid{s}\;\Varid{a}\;\Varid{b}{}\<[E]%
\\
\>[3]{}\Varid{switchBX'}\;\Varid{f}\mathrel{=}\Conid{StateTBX}\;\get{A}\;\set{A}\;\Varid{initA}\;\get{B}\;\set{B}\;\Varid{initB}{}\<[E]%
\\
\>[3]{}\hsindent{2}{}\<[5]%
\>[5]{}\mathbf{where}\;\get{A}{}\<[25]%
\>[25]{}\mathrel{=}{}\<[25E]%
\>[28]{}\Varid{switch}\;\Varid{getl}{}\<[E]%
\\
\>[5]{}\hsindent{6}{}\<[11]%
\>[11]{}\get{B}{}\<[25]%
\>[25]{}\mathrel{=}{}\<[25E]%
\>[28]{}\Varid{switch}\;\Varid{getr}{}\<[E]%
\\
\>[5]{}\hsindent{6}{}\<[11]%
\>[11]{}\set{A}\;\Varid{a}{}\<[25]%
\>[25]{}\mathrel{=}{}\<[25E]%
\>[28]{}\Varid{switch}\;(\lambda \hslambda bx\hsarrow{\rightarrow }{\mathpunct{.}}\Varid{setl}\;bx\;\Varid{a}){}\<[E]%
\\
\>[5]{}\hsindent{6}{}\<[11]%
\>[11]{}\set{B}\;\Varid{b}{}\<[25]%
\>[25]{}\mathrel{=}{}\<[25E]%
\>[28]{}\Varid{switch}\;(\lambda \hslambda bx\hsarrow{\rightarrow }{\mathpunct{.}}\Varid{setr}\;bx\;\Varid{b}){}\<[E]%
\\
\>[5]{}\hsindent{6}{}\<[11]%
\>[11]{}\Varid{initA}\;\Varid{a}{}\<[25]%
\>[25]{}\mathrel{=}{}\<[25E]%
\>[28]{}\Varid{switch}\;(\lambda \hslambda bx\hsarrow{\rightarrow }{\mathpunct{.}}\Varid{initl}\;bx\;\Varid{a}){}\<[E]%
\\
\>[5]{}\hsindent{6}{}\<[11]%
\>[11]{}\Varid{initB}\;\Varid{b}{}\<[25]%
\>[25]{}\mathrel{=}{}\<[25E]%
\>[28]{}\Varid{switch}\;(\lambda \hslambda bx\hsarrow{\rightarrow }{\mathpunct{.}}\Varid{initr}\;bx\;\Varid{b}){}\<[E]%
\\
\>[5]{}\hsindent{6}{}\<[11]%
\>[11]{}\Varid{switch}\;\Varid{op}{}\<[25]%
\>[25]{}\mathrel{=}{}\<[25E]%
\>[28]{}\mathbf{do}\;\{\mskip1.5mu \Varid{c}\leftarrow \Varid{ask};\Varid{op}\;(\Varid{f}\;\Varid{c})\mskip1.5mu\}{}\<[E]%
\ColumnHook
\end{hscode}\resethooks
Logging BX: writes the sequence of sets
\begin{hscode}\SaveRestoreHook
\column{B}{@{}>{\hspre}l<{\hspost}@{}}%
\column{3}{@{}>{\hspre}l<{\hspost}@{}}%
\column{5}{@{}>{\hspre}l<{\hspost}@{}}%
\column{11}{@{}>{\hspre}l<{\hspost}@{}}%
\column{17}{@{}>{\hspre}l<{\hspost}@{}}%
\column{25}{@{}>{\hspre}l<{\hspost}@{}}%
\column{69}{@{}>{\hspre}l<{\hspost}@{}}%
\column{E}{@{}>{\hspre}l<{\hspost}@{}}%
\>[3]{}\Varid{loggingBX}\mathbin{::}{}\<[17]%
\>[17]{}(\Conid{Eq}\;\Varid{a},\Conid{Eq}\;\Varid{b},\Conid{MonadWriter}\;(\Conid{Either}\;\Varid{a}\;\Varid{b})\;\Varid{m})\Rightarrow {}\<[E]%
\\
\>[17]{}\Conid{StateTBX}\;\Varid{m}\;\Varid{s}\;\Varid{a}\;\Varid{b}\hsarrow{\rightarrow }{\mathpunct{.}}\Conid{StateTBX}\;\Varid{m}\;\Varid{s}\;\Varid{a}\;\Varid{b}{}\<[E]%
\\
\>[3]{}\Varid{loggingBX}\;bx\mathrel{=}\Conid{StateTBX}\;(\Varid{getl}\;bx)\;\set{A}\;(\Varid{initl}\;bx)\;(\Varid{getr}\;bx)\;\set{B}\;{}\<[69]%
\>[69]{}(\Varid{initr}\;bx){}\<[E]%
\\
\>[3]{}\hsindent{2}{}\<[5]%
\>[5]{}\mathbf{where}\;\set{A}\;\Varid{a'}\mathrel{=}\mathbf{do}\;{}\<[25]%
\>[25]{}\Varid{a}\leftarrow \Varid{getl}\;bx{}\<[E]%
\\
\>[25]{}\mathbf{if}\;\Varid{a}\notequals\Varid{a'}\;\mathbf{then}\;\Varid{tell}\;(\Conid{Left}\;\Varid{a'})\;\mathbf{else}\;\Varid{return}\;(){}\<[E]%
\\
\>[25]{}\Varid{setl}\;bx\;\Varid{a'}{}\<[E]%
\\
\>[5]{}\hsindent{6}{}\<[11]%
\>[11]{}\set{B}\;\Varid{b'}\mathrel{=}\mathbf{do}\;{}\<[25]%
\>[25]{}\Varid{b}\leftarrow \Varid{getr}\;bx{}\<[E]%
\\
\>[25]{}\mathbf{if}\;\Varid{b}\notequals\Varid{b'}\;\mathbf{then}\;\Varid{tell}\;(\Conid{Right}\;\Varid{b'})\;\mathbf{else}\;\Varid{return}\;(){}\<[E]%
\\
\>[25]{}\Varid{setr}\;bx\;\Varid{b'}{}\<[E]%
\ColumnHook
\end{hscode}\resethooks
I/O: user interaction
\begin{hscode}\SaveRestoreHook
\column{B}{@{}>{\hspre}l<{\hspost}@{}}%
\column{3}{@{}>{\hspre}l<{\hspost}@{}}%
\column{5}{@{}>{\hspre}l<{\hspost}@{}}%
\column{11}{@{}>{\hspre}l<{\hspost}@{}}%
\column{21}{@{}>{\hspre}l<{\hspost}@{}}%
\column{24}{@{}>{\hspre}l<{\hspost}@{}}%
\column{28}{@{}>{\hspre}l<{\hspost}@{}}%
\column{30}{@{}>{\hspre}l<{\hspost}@{}}%
\column{E}{@{}>{\hspre}l<{\hspost}@{}}%
\>[3]{}\Varid{interactiveBX}\mathbin{::}{}\<[21]%
\>[21]{}(\Conid{Read}\;\Varid{a},\Conid{Read}\;\Varid{b})\Rightarrow {}\<[E]%
\\
\>[21]{}(\Varid{a}\hsarrow{\rightarrow }{\mathpunct{.}}\Varid{b}\hsarrow{\rightarrow }{\mathpunct{.}}\Conid{Bool})\hsarrow{\rightarrow }{\mathpunct{.}}\Conid{StateTBX}\;\Conid{IO}\;(\Varid{a},\Varid{b})\;\Varid{a}\;\Varid{b}{}\<[E]%
\\
\>[3]{}\Varid{interactiveBX}\;\Varid{r}\mathrel{=}\Conid{StateTBX}\;\get{A}\;\set{A}\;\Varid{initA}\;\get{B}\;\set{B}\;\Varid{initB}{}\<[E]%
\\
\>[3]{}\hsindent{2}{}\<[5]%
\>[5]{}\mathbf{where}\;\get{A}{}\<[21]%
\>[21]{}\mathrel{=}{}\<[24]%
\>[24]{}\mathbf{do}\;\{\mskip1.5mu (\Varid{a},\Varid{b})\leftarrow \Varid{get};\Varid{return}\;\Varid{a}\mskip1.5mu\}{}\<[E]%
\\
\>[5]{}\hsindent{6}{}\<[11]%
\>[11]{}\set{A}\;\Varid{a'}{}\<[21]%
\>[21]{}\mathrel{=}{}\<[24]%
\>[24]{}\mathbf{do}\;\{\mskip1.5mu (\Varid{a},\Varid{b})\leftarrow \Varid{get};\Varid{fixB}\;\Varid{a'}\;\Varid{b}\mskip1.5mu\}{}\<[E]%
\\
\>[5]{}\hsindent{6}{}\<[11]%
\>[11]{}\Varid{fixA}\;\Varid{a}\;\Varid{b}{}\<[21]%
\>[21]{}\mathrel{=}{}\<[24]%
\>[24]{}\mathbf{if}\;{}\<[30]%
\>[30]{}\Varid{r}\;\Varid{a}\;\Varid{b}{}\<[E]%
\\
\>[24]{}\mathbf{then}\;{}\<[30]%
\>[30]{}\Varid{put}\;(\Varid{a},\Varid{b}){}\<[E]%
\\
\>[24]{}\mathbf{else}\;{}\<[30]%
\>[30]{}\mathbf{do}\;\{\mskip1.5mu \Varid{a'}\leftarrow \Varid{lift}\;(\Varid{initA}\;\Varid{b});\Varid{fixA}\;\Varid{a'}\;\Varid{b}\mskip1.5mu\}{}\<[E]%
\\
\>[5]{}\hsindent{6}{}\<[11]%
\>[11]{}\Varid{initA}\;\Varid{b}{}\<[21]%
\>[21]{}\mathrel{=}{}\<[24]%
\>[24]{}\mathbf{do}\;{}\<[28]%
\>[28]{}\Varid{print}\;\text{\tt \char34 Please~restore~consistency:\char34}{}\<[E]%
\\
\>[28]{}\Varid{str}\leftarrow \Varid{getLine}{}\<[E]%
\\
\>[28]{}\Varid{return}\;(\Varid{read}\;\Varid{str}){}\<[E]%
\\
\>[5]{}\hsindent{6}{}\<[11]%
\>[11]{}\get{B}{}\<[21]%
\>[21]{}\mathrel{=}{}\<[24]%
\>[24]{}\mathbf{do}\;\{\mskip1.5mu (\Varid{a},\Varid{b})\leftarrow \Varid{get};\Varid{return}\;\Varid{b}\mskip1.5mu\}{}\<[E]%
\\
\>[5]{}\hsindent{6}{}\<[11]%
\>[11]{}\set{B}\;\Varid{b'}{}\<[21]%
\>[21]{}\mathrel{=}{}\<[24]%
\>[24]{}\mathbf{do}\;\{\mskip1.5mu (\Varid{a},\Varid{b})\leftarrow \Varid{get};\Varid{fixA}\;\Varid{a}\;\Varid{b'}\mskip1.5mu\}{}\<[E]%
\\
\>[5]{}\hsindent{6}{}\<[11]%
\>[11]{}\Varid{fixB}\;\Varid{a}\;\Varid{b}{}\<[21]%
\>[21]{}\mathrel{=}{}\<[24]%
\>[24]{}\mathbf{if}\;{}\<[30]%
\>[30]{}\Varid{r}\;\Varid{a}\;\Varid{b}{}\<[E]%
\\
\>[24]{}\mathbf{then}\;{}\<[30]%
\>[30]{}\Varid{put}\;(\Varid{a},\Varid{b}){}\<[E]%
\\
\>[24]{}\mathbf{else}\;{}\<[30]%
\>[30]{}\mathbf{do}\;\{\mskip1.5mu \Varid{b'}\leftarrow \Varid{lift}\;(\Varid{initB}\;\Varid{b});\Varid{fixA}\;\Varid{a}\;\Varid{b'}\mskip1.5mu\}{}\<[E]%
\\
\>[5]{}\hsindent{6}{}\<[11]%
\>[11]{}\Varid{initB}\;\Varid{a}{}\<[21]%
\>[21]{}\mathrel{=}{}\<[24]%
\>[24]{}\mathbf{do}\;{}\<[28]%
\>[28]{}\Varid{print}\;\text{\tt \char34 Please~restore~consistency:\char34}{}\<[E]%
\\
\>[28]{}\Varid{str}\leftarrow \Varid{getLine}{}\<[E]%
\\
\>[28]{}\Varid{return}\;(\Varid{read}\;\Varid{str}){}\<[E]%
\ColumnHook
\end{hscode}\resethooks
and signalling changes
\begin{hscode}\SaveRestoreHook
\column{B}{@{}>{\hspre}l<{\hspost}@{}}%
\column{3}{@{}>{\hspre}l<{\hspost}@{}}%
\column{5}{@{}>{\hspre}l<{\hspost}@{}}%
\column{15}{@{}>{\hspre}l<{\hspost}@{}}%
\column{16}{@{}>{\hspre}l<{\hspost}@{}}%
\column{23}{@{}>{\hspre}l<{\hspost}@{}}%
\column{36}{@{}>{\hspre}l<{\hspost}@{}}%
\column{E}{@{}>{\hspre}l<{\hspost}@{}}%
\>[3]{}\Varid{signalBX}\mathbin{::}{}\<[16]%
\>[16]{}(\Conid{Eq}\;\Varid{a},\Conid{Eq}\;\Varid{b},\Conid{Monad}\;\Varid{m})\Rightarrow {}\<[E]%
\\
\>[16]{}(\Varid{a}\hsarrow{\rightarrow }{\mathpunct{.}}\Varid{m}\;())\hsarrow{\rightarrow }{\mathpunct{.}}(\Varid{b}\hsarrow{\rightarrow }{\mathpunct{.}}\Varid{m}\;())\hsarrow{\rightarrow }{\mathpunct{.}}{}\<[E]%
\\
\>[16]{}\Conid{StateTBX}\;\Varid{m}\;\Varid{s}\;\Varid{a}\;\Varid{b}\hsarrow{\rightarrow }{\mathpunct{.}}\Conid{StateTBX}\;\Varid{m}\;\Varid{s}\;\Varid{a}\;\Varid{b}{}\<[E]%
\\
\>[3]{}\Varid{signalBX}\;\Varid{sigA}\;\Varid{sigB}\;\Varid{t}\mathrel{=}\Conid{StateTBX}\;{}\<[36]%
\>[36]{}(\Varid{getl}\;\Varid{t})\;\set{L}\;(\Varid{initl}\;\Varid{t})\;{}\<[E]%
\\
\>[36]{}(\Varid{getr}\;\Varid{t})\;\set{R}\;(\Varid{initr}\;\Varid{t})\;\mathbf{where}{}\<[E]%
\\
\>[3]{}\hsindent{2}{}\<[5]%
\>[5]{}\set{L}\;\Varid{a'}{}\<[15]%
\>[15]{}\mathrel{=}\mathbf{do}\;\{\mskip1.5mu {}\<[23]%
\>[23]{}\Varid{a}\leftarrow \Varid{getl}\;\Varid{t};\Varid{setl}\;\Varid{t}\;\Varid{a'};{}\<[E]%
\\
\>[23]{}\Varid{lift}\;(\mathbf{if}\;\Varid{a}\notequals\Varid{a'}\;\mathbf{then}\;\Varid{sigA}\;\Varid{a'}\;\mathbf{else}\;\Varid{return}\;())\mskip1.5mu\}{}\<[E]%
\\
\>[3]{}\hsindent{2}{}\<[5]%
\>[5]{}\set{R}\;\Varid{b'}{}\<[15]%
\>[15]{}\mathrel{=}\mathbf{do}\;\{\mskip1.5mu {}\<[23]%
\>[23]{}\Varid{b}\leftarrow \Varid{getr}\;\Varid{t};\Varid{setr}\;\Varid{t}\;\Varid{b'};{}\<[E]%
\\
\>[23]{}\Varid{lift}\;(\mathbf{if}\;\Varid{b}\notequals\Varid{b'}\;\mathbf{then}\;\Varid{sigB}\;\Varid{b'}\;\mathbf{else}\;\Varid{return}\;())\mskip1.5mu\}{}\<[E]%
\\[\blanklineskip]%
\>[3]{}\Varid{alertBX}\mathbin{::}(\Conid{Eq}\;\Varid{a},\Conid{Eq}\;\Varid{b})\Rightarrow \Conid{StateTBX}\;\Conid{IO}\;\Varid{s}\;\Varid{a}\;\Varid{b}\hsarrow{\rightarrow }{\mathpunct{.}}\Conid{StateTBX}\;\Conid{IO}\;\Varid{s}\;\Varid{a}\;\Varid{b}{}\<[E]%
\\
\>[3]{}\Varid{alertBX}\mathrel{=}\Varid{signalBX}\;{}\<[23]%
\>[23]{}(\mathbin{\char92 \char95 }\hsarrow{\rightarrow }{\mathpunct{.}}\Varid{putStrLn}\;\text{\tt \char34 Left\char34})\;{}\<[E]%
\\
\>[23]{}(\mathbin{\char92 \char95 }\hsarrow{\rightarrow }{\mathpunct{.}}\Varid{putStrLn}\;\text{\tt \char34 Right\char34}){}\<[E]%
\ColumnHook
\end{hscode}\resethooks
where
\begin{hscode}\SaveRestoreHook
\column{B}{@{}>{\hspre}l<{\hspost}@{}}%
\column{3}{@{}>{\hspre}l<{\hspost}@{}}%
\column{E}{@{}>{\hspre}l<{\hspost}@{}}%
\>[3]{}\Varid{fst3}\;(\Varid{a},\anonymous ,\anonymous )\mathrel{=}\Varid{a}{}\<[E]%
\\
\>[3]{}\Varid{snd3}\;(\anonymous ,\Varid{a},\anonymous )\mathrel{=}\Varid{a}{}\<[E]%
\\
\>[3]{}\Varid{thd3}\;(\anonymous ,\anonymous ,\Varid{a})\mathrel{=}\Varid{a}{}\<[E]%
\ColumnHook
\end{hscode}\resethooks
Model-transformation-by-example (\Example~\ref{ex:dynamic} from the paper)
\begin{hscode}\SaveRestoreHook
\column{B}{@{}>{\hspre}l<{\hspost}@{}}%
\column{3}{@{}>{\hspre}l<{\hspost}@{}}%
\column{5}{@{}>{\hspre}l<{\hspost}@{}}%
\column{12}{@{}>{\hspre}l<{\hspost}@{}}%
\column{18}{@{}>{\hspre}l<{\hspost}@{}}%
\column{26}{@{}>{\hspre}l<{\hspost}@{}}%
\column{29}{@{}>{\hspre}l<{\hspost}@{}}%
\column{30}{@{}>{\hspre}l<{\hspost}@{}}%
\column{38}{@{}>{\hspre}l<{\hspost}@{}}%
\column{48}{@{}>{\hspre}l<{\hspost}@{}}%
\column{75}{@{}>{\hspre}l<{\hspost}@{}}%
\column{E}{@{}>{\hspre}l<{\hspost}@{}}%
\>[3]{}\Varid{dynamicBX'}\mathbin{::}{}\<[18]%
\>[18]{}(\Conid{Eq}\;\alpha,\Conid{Eq}\;\beta,\Conid{Monad}\;\tau)\Rightarrow {}\<[E]%
\\
\>[18]{}(\alpha\hsarrow{\rightarrow }{\mathpunct{.}}\beta\hsarrow{\rightarrow }{\mathpunct{.}}\tau\;\beta)\hsarrow{\rightarrow }{\mathpunct{.}}(\alpha\hsarrow{\rightarrow }{\mathpunct{.}}\beta\hsarrow{\rightarrow }{\mathpunct{.}}\tau\;\alpha)\hsarrow{\rightarrow }{\mathpunct{.}}{}\<[E]%
\\
\>[18]{}\Conid{StateTBX}\;\tau\;((\alpha,\beta),[\mskip1.5mu ((\alpha,\beta),\beta)\mskip1.5mu],[\mskip1.5mu ((\alpha,\beta),\alpha)\mskip1.5mu])\;\alpha\;\beta{}\<[E]%
\\
\>[3]{}\Varid{dynamicBX'}\;\Varid{f}\;\Varid{g}\mathrel{=}\Conid{StateTBX}\;{}\<[30]%
\>[30]{}(\Varid{gets}\;(\Varid{fst}\hsdot{\cdot }{.}\Varid{fst3}))\;\set{L}\;\bot \;{}\<[E]%
\\
\>[30]{}(\Varid{gets}\;(\Varid{snd}\hsdot{\cdot }{.}\Varid{fst3}))\;\set{R}\;\bot {}\<[E]%
\\
\>[3]{}\hsindent{2}{}\<[5]%
\>[5]{}\mathbf{where}\;{}\<[12]%
\>[12]{}\set{L}\;\Varid{a'}\mathrel{=}\mathbf{do}\;{}\<[26]%
\>[26]{}((\Varid{a},\Varid{b}),\Varid{fs},\Varid{bs})\leftarrow \Varid{get}{}\<[E]%
\\
\>[26]{}\mathbf{case}\;\Varid{lookup}\;(\Varid{a'},\Varid{b})\;\Varid{fs}\;\mathbf{of}{}\<[E]%
\\
\>[26]{}\hsindent{3}{}\<[29]%
\>[29]{}\Conid{Just}\;\Varid{b'}{}\<[38]%
\>[38]{}\hsarrow{\rightarrow }{\mathpunct{.}}\Varid{put}\;({}\<[48]%
\>[48]{}(\Varid{a'},\Varid{b'}),\Varid{fs},\Varid{bs}){}\<[E]%
\\
\>[26]{}\hsindent{3}{}\<[29]%
\>[29]{}\Conid{Nothing}{}\<[38]%
\>[38]{}\hsarrow{\rightarrow }{\mathpunct{.}}\mathbf{do}\;\{\mskip1.5mu \Varid{b'}\leftarrow \Varid{lift}\;(\Varid{f}\;\Varid{a'}\;\Varid{b});\Varid{put}\;({}\<[75]%
\>[75]{}(\Varid{a'},\Varid{b'}),((\Varid{a'},\Varid{b}),\Varid{b'})\mathbin{:}\Varid{fs},\Varid{bs})\mskip1.5mu\}{}\<[E]%
\\[\blanklineskip]%
\>[12]{}\set{R}\;\Varid{b'}\mathrel{=}\mathbf{do}\;{}\<[26]%
\>[26]{}((\Varid{a},\Varid{b}),\Varid{fs},\Varid{bs})\leftarrow \Varid{get}{}\<[E]%
\\
\>[26]{}\mathbf{case}\;\Varid{lookup}\;(\Varid{a},\Varid{b'})\;\Varid{bs}\;\mathbf{of}{}\<[E]%
\\
\>[26]{}\hsindent{3}{}\<[29]%
\>[29]{}\Conid{Just}\;\Varid{a'}{}\<[38]%
\>[38]{}\hsarrow{\rightarrow }{\mathpunct{.}}\Varid{put}\;({}\<[48]%
\>[48]{}(\Varid{a'},\Varid{b'}),\Varid{fs},\Varid{bs}){}\<[E]%
\\
\>[26]{}\hsindent{3}{}\<[29]%
\>[29]{}\Conid{Nothing}{}\<[38]%
\>[38]{}\hsarrow{\rightarrow }{\mathpunct{.}}\mathbf{do}\;\{\mskip1.5mu \Varid{a'}\leftarrow \Varid{lift}\;(\Varid{g}\;\Varid{a}\;\Varid{b'});\Varid{put}\;({}\<[75]%
\>[75]{}(\Varid{a'},\Varid{b'}),\Varid{fs},((\Varid{a},\Varid{b'}),\Varid{a'})\mathbin{:}\Varid{bs})\mskip1.5mu\}{}\<[E]%
\ColumnHook
\end{hscode}\resethooks

Some test cases
 \begin{hscode}\SaveRestoreHook
\column{B}{@{}>{\hspre}l<{\hspost}@{}}%
\column{3}{@{}>{\hspre}l<{\hspost}@{}}%
\column{10}{@{}>{\hspre}l<{\hspost}@{}}%
\column{18}{@{}>{\hspre}l<{\hspost}@{}}%
\column{23}{@{}>{\hspre}l<{\hspost}@{}}%
\column{E}{@{}>{\hspre}l<{\hspost}@{}}%
\>[3]{}\Varid{l0}\mathbin{::}{}\<[10]%
\>[10]{}\Conid{Monad}\;\Varid{m}\Rightarrow \Varid{b}\hsarrow{\rightarrow }{\mathpunct{.}}\Varid{c}\hsarrow{\rightarrow }{\mathpunct{.}}{}\<[E]%
\\
\>[10]{}\Conid{StateTBX}\;\Varid{m}\;((\Varid{a},\Varid{b}),(\Varid{c},\Varid{a}))\;(\Varid{a},\Varid{b})\;(\Varid{c},\Varid{a}){}\<[E]%
\\
\>[3]{}\Varid{l0}\;\Varid{b}\;\Varid{c}\mathrel{=}\Varid{fstBX}\;\Varid{b}\mathbin{`\Varid{compBX}`}\Varid{coBX}\;(\Varid{sndBX}\;\Varid{c}){}\<[E]%
\\[\blanklineskip]%
\>[3]{}\Varid{l}\mathrel{=}\Varid{l0}\;\text{\tt \char34 b\char34}\;\text{\tt \char34 c\char34}{}\<[E]%
\\[\blanklineskip]%
\>[3]{}\Varid{foo}\mathrel{=}\Varid{runl}\;\Varid{l}\;{}\<[18]%
\>[18]{}(\mathbf{do}\;{}\<[23]%
\>[23]{}\Varid{setr}\;\Varid{l}\;(\text{\tt \char34 x\char34},\text{\tt \char34 y\char34}){}\<[E]%
\\
\>[23]{}(\Varid{a},\Varid{b})\leftarrow \Varid{getl}\;\Varid{l}{}\<[E]%
\\
\>[23]{}\Varid{lift}\;(\Varid{print}\;\Varid{a}){}\<[E]%
\\
\>[23]{}\Varid{lift}\;(\Varid{print}\;\Varid{b}){}\<[E]%
\\
\>[23]{}\Varid{setl}\;\Varid{l}\;(\text{\tt \char34 z\char34},\text{\tt \char34 w\char34}){}\<[E]%
\\
\>[23]{}(\Varid{c},\Varid{d})\leftarrow \Varid{getr}\;\Varid{l}{}\<[E]%
\\
\>[23]{}\Varid{lift}\;(\Varid{print}\;\Varid{c}){}\<[E]%
\\
\>[23]{}\Varid{lift}\;(\Varid{print}\;\Varid{d}))\;(\text{\tt \char34 a\char34},\text{\tt \char34 b\char34}){}\<[E]%
\ColumnHook
\end{hscode}\resethooks
\begin{hscode}\SaveRestoreHook
\column{B}{@{}>{\hspre}l<{\hspost}@{}}%
\column{3}{@{}>{\hspre}l<{\hspost}@{}}%
\column{20}{@{}>{\hspre}l<{\hspost}@{}}%
\column{25}{@{}>{\hspre}l<{\hspost}@{}}%
\column{E}{@{}>{\hspre}l<{\hspost}@{}}%
\>[3]{}\Varid{l'}\mathbin{::}(\Conid{Read}\;\Varid{a},\Conid{Ord}\;\Varid{a})\Rightarrow \Conid{StateTBX}\;\Conid{IO}\;(\Varid{a},\Varid{a})\;\Varid{a}\;\Varid{a}{}\<[E]%
\\
\>[3]{}\Varid{l'}\mathrel{=}\Varid{interactiveBX}\;(\langle ){}\<[E]%
\\[\blanklineskip]%
\>[3]{}\Varid{bar}\mathrel{=}\Varid{runStateT}\;{}\<[20]%
\>[20]{}(\mathbf{do}\;{}\<[25]%
\>[25]{}\Varid{setr}\;\Varid{l'}\;\text{\tt \char34 abc\char34}{}\<[E]%
\\
\>[25]{}\Varid{a}\leftarrow \Varid{getl}\;\Varid{l'}{}\<[E]%
\\
\>[25]{}\Varid{lift}\;(\Varid{print}\;\Varid{a}){}\<[E]%
\\
\>[25]{}\Varid{setl}\;\Varid{l'}\;\text{\tt \char34 def\char34}{}\<[E]%
\\
\>[25]{}\Varid{b}\leftarrow \Varid{getr}\;\Varid{l'}{}\<[E]%
\\
\>[25]{}\Varid{lift}\;(\Varid{print}\;\Varid{b}))\;{}\<[E]%
\\
\>[20]{}(\text{\tt \char34 abc\char34},\text{\tt \char34 xyz\char34}){}\<[E]%
\ColumnHook
\end{hscode}\resethooks
\begin{hscode}\SaveRestoreHook
\column{B}{@{}>{\hspre}l<{\hspost}@{}}%
\column{3}{@{}>{\hspre}l<{\hspost}@{}}%
\column{9}{@{}>{\hspre}l<{\hspost}@{}}%
\column{13}{@{}>{\hspre}l<{\hspost}@{}}%
\column{23}{@{}>{\hspre}l<{\hspost}@{}}%
\column{28}{@{}>{\hspre}l<{\hspost}@{}}%
\column{E}{@{}>{\hspre}l<{\hspost}@{}}%
\>[3]{}\Varid{baz}\mathrel{=}\mathbf{let}\;bx\mathrel{=}(\Varid{divZeroBX}\;(\Varid{fail}\;\text{\tt \char34 divZero\char34})){}\<[E]%
\\
\>[3]{}\hsindent{6}{}\<[9]%
\>[9]{}\mathbf{in}\;{}\<[13]%
\>[13]{}\Varid{runl}\;bx\;{}\<[23]%
\>[23]{}(\mathbf{do}\;{}\<[28]%
\>[28]{}\Varid{setr}\;bx\;\mathrm{17.0}{}\<[E]%
\\
\>[28]{}\Varid{a}\leftarrow \Varid{getl}\;bx{}\<[E]%
\\
\>[28]{}\Varid{lift}\;(\Varid{print}\;\Varid{a}){}\<[E]%
\\
\>[28]{}\Varid{setl}\;bx\;\mathrm{42.0}{}\<[E]%
\\
\>[28]{}\Varid{a}\leftarrow \Varid{getr}\;bx{}\<[E]%
\\
\>[28]{}\Varid{lift}\;(\Varid{print}\;\Varid{a}){}\<[E]%
\\
\>[28]{}\Varid{setl}\;bx\;\mathrm{0.0}{}\<[E]%
\\
\>[28]{}\Varid{lift}\;(\Varid{print}\;\text{\tt \char34 foo\char34}))\;\mathrm{1.0}{}\<[E]%
\ColumnHook
\end{hscode}\resethooks
\begin{hscode}\SaveRestoreHook
\column{B}{@{}>{\hspre}l<{\hspost}@{}}%
\column{3}{@{}>{\hspre}l<{\hspost}@{}}%
\column{11}{@{}>{\hspre}l<{\hspost}@{}}%
\column{20}{@{}>{\hspre}l<{\hspost}@{}}%
\column{28}{@{}>{\hspre}l<{\hspost}@{}}%
\column{E}{@{}>{\hspre}l<{\hspost}@{}}%
\>[3]{}\Varid{xyzzy}\mathrel{=}\mathbf{let}\;bx\mathrel{=}\Varid{listBX}\;(\Varid{divZeroBX}\;(\Varid{fail}\;\text{\tt \char34 divZero\char34})){}\<[E]%
\\
\>[3]{}\hsindent{8}{}\<[11]%
\>[11]{}\mathbf{in}\;\Varid{runl}\;{}\<[20]%
\>[20]{}bx\;(\mathbf{do}\;{}\<[28]%
\>[28]{}\Varid{b}\leftarrow \Varid{getr}\;bx{}\<[E]%
\\
\>[28]{}\Varid{lift}\;(\Varid{print}\;\Varid{b}){}\<[E]%
\\
\>[28]{}\Varid{setr}\;bx\;[\mskip1.5mu \mathrm{5.0},\mathrm{6.0},\mathrm{7.0},\mathrm{8.0}\mskip1.5mu]{}\<[E]%
\\
\>[28]{}\Varid{a}\leftarrow \Varid{getl}\;bx{}\<[E]%
\\
\>[28]{}\Varid{lift}\;(\Varid{print}\;\Varid{a}))\;{}\<[E]%
\\
\>[20]{}[\mskip1.5mu \mathrm{1.0},\mathrm{2.0},\mathrm{3.0}\mskip1.5mu]{}\<[E]%
\ColumnHook
\end{hscode}\resethooks

\end{document}
